\newtheorem{theorem}{Theorem}[section]
\newtheorem{lemma}[theorem]{Lemma}
\newtheorem{proposition}[theorem]{Proposition}
\newtheorem{assumpt}[theorem]{Assumption}
\newtheorem{corollary}[theorem]{Corollary}
\theoremstyle{definition}
\newtheorem{definition}[theorem]{Definition}
\theoremstyle{remark}
\newtheorem{remark}[theorem]{Remark}
\numberwithin{equation}{section}
\newcommand{\R}{\mathbb{R}}
\newcommand{\N}{\mathbb{N}}
\newcommand{\de}{\mbox{ d}}
\newcommand{\der}{\mbox{d}}
\newcommand{\norm}[1]{\left\|#1\right\| }
\newcommand{\desude}[2]{\frac{\partial #1}{\partial #2}}
\newcommand{\lun}{\textbf{F}}
\newcommand{\ldu}{\textbf{G}}
\newcommand{\fara}{\mathcal{F}}
\newcommand{\farad}{\,{^\star\!}\mathcal{F}}
\newcommand{\gara}{\mathcal{G}}
\newcommand{\garad}{{^\star\!}\mathcal{G}}
\newcommand{\bara}{\mathcal{B}}
\newcommand{\barad}{{^\star\!}\mathcal{B}}
\newcommand{\mfara}{\mathcal{M}}
\newcommand{\mfarad}{{^\star\!}\mathcal{M}}
\newcommand{\snabla}{\slashed{\nabla}}
\newcommand{\sdelta}{\slashed{\Delta}}
\newcommand{\lbar}{{\underline{L}}}
\newcommand{\curl}{\slashed{\mbox{curl }}}
\newcommand{\dive}{\slashed{\mbox{div }}}
\newcommand{\alphabar}{{\underline{\alpha}}}
\newcommand{\gbar}{\slashed{g}}
\newcommand{\svol}{\slashed{\varepsilon}}
\newcommand{\duu}{\mathfrak{D}_{u_1}^{u_2}}
\newcommand{\ellmbi}{\ell_{(\text{MBI})}}
\newcommand{\redsh}{V_{\text{red}}}
\newcommand{\lie}{\mathcal{L}}
\newcommand{\rie}{\text{Rm}\,}
\newcommand{\ipsi}{\mathcal{Y}}
\newcommand{\sipsi}{{^\star\!}\mathcal{Y}}
\newcommand{\regio}[2]{\mathcal{R}_{#1}^{#2}}
\newcommand{\oregio}[2]{\overline{\mathcal{R}}_{#1}^{#2}}
\newcommand{\tregio}[2]{\widetilde{\mathcal R}_{#1}^{#2}}
\newcommand{\nonl}[1]{\textbf{NL}_{#1}}
\newcommand{\sphint}[1]{\int_{\mathbb{S}^2} #1 \desphere}
\newcommand{\hdelta}{{H_{\Delta}}}
\newcommand{\oscr}{{\leo}}
\newcommand{\ono}{{\leo}_{\text{no}}}
\newcommand{\fgiusto}{F_{\checked}}
\newcommand{\fgiustor}{F_{\checked}^{\mathcal{R}}}
\newcommand{\fgiustordeg}{F_{\checked, \text{deg}}^{\mathcal{R}}}
\newcommand{\qtot}{Q_\text{tot}}
\newcommand{\iindicess}{\mathscr{I}_{\mathcal{K}}}
\newcommand{\iindiceo}{\mathscr{I}_{\leo}}
\newcommand{\iindicesg}{\mathscr{I}_{\mathscr{S}}}
\newcommand{\faram}{\dot{\fara}}
\newcommand{\faramd}{{}^\star \! \dot{\fara}}
\newcommand{\otm}{{\text{OT}}}
\newcommand{\mqtensor}{\dot{Q}}
\newcommand{\desphere}{\de {\mathbb{S}^2}}
\newcommand{\slie}{\slashed{\lie}}
\newcommand{\conplus}{\overline{C}}
\newcommand{\conminus}{\underline{C}}
\newcommand{\p}{\partial}
\newcommand{\eps}{\varepsilon}
\newcommand{\enn}{\mathbb{N}_{\geq 0}}
\newsavebox\MBox
\newcommand{\chih}{\chi_{\mathcal{H}^+}}
\newcommand{\redeps}{\varepsilon_{\text{red}}}
\newcommand{\qnl}{\dot Q^{(NL)}}
\newcommand{\marad}{{}^\star \! \mathcal{M}}
\newcommand{\fackip}{Fackerell--Ipser }
\begin{document}

\title[Nonlinear stability for the MBI system on a Schwarzschild background]
{Nonlinear stability for the Maxwell--Born--Infeld system on a Schwarzschild background}

\author{Federico Pasqualotto}
\address{\small Princeton University, Department of Mathematics, Fine~Hall,~Washington~Road,~Princeton, NJ 08544,~United~States\vskip.2pc \small University of Cambridge, Department of Pure Mathematics and Mathematical
	Statistics, Wilberforce~Road,~Cambridge~CB3~0WA,~United~Kingdom\vskip.2pc }
\email{fp2@princeton.edu}

\begin{abstract}
	In this paper we prove small data global existence for solutions to the Maxwell--Born--Infeld (MBI) system on a fixed Schwarzschild background. This system has appeared in the context of string theory and can be seen as a nonlinear model problem for the stability of the background metric itself, due to its tensorial and quasilinear nature. The MBI system models nonlinear electromagnetism and does not display birefringence. The key element in our proof lies in the observation that there exists a first-order differential transformation which brings solutions of the spin $\pm 1$ Teukolsky Equations, satisfied by the extreme components of the field, into solutions of a ``good'' equation (the \fackip Equation).
	This strategy was established in~\cite{masthes} for the linear Maxwell field on Schwarzschild. We show that analogous \fackip equations hold for the MBI system on a fixed Schwarzschild background, which are however nonlinearly coupled. To essentially decouple these right hand sides, we set up a bootstrap argument.
	We use the $r^p$ method of Dafermos and Rodnianski in \cite{newmihalis} in order to deduce decay of some null components, and we infer decay for the remaining quantities by integrating the MBI system as transport equations. 
\end{abstract}

\maketitle

\setcounter{tocdepth}{1}

\maketitle

%\tableofcontents

\section{Introduction and motivation}
In this paper we consider the \emph{Maxwell--Born--Infeld} (MBI) system, which is a system of partial differential equations for nonlinear electromagnetism, on a fixed Schwarzschild background. This system\footnote{To be precise, its formulation on $3+1$-dimensional Minkowski spacetime.} was first considered by Born and Infeld in~\cite{born33}, and interest in it has been revived in relatively recent times by a connection with string theory, where a higher-dimensional version of the MBI lagrangian appears. Our interest in this theory arises from two considerations: first, the MBI system can be viewed as a \emph{nonlinear model problem for the stability of the Schwarzschild metric} as a solution of the vacuum Einstein equations. This model problem has the additional feature of being tensorial and quasilinear. Furthermore, MBI
 is a \emph{natural} theory of nonlinear electromagnetism, in a sense which will be explained in Section~\ref{sec:specstru}, and as such it has been proposed as a candidate for a well-defined theory of point charge motion (see Section~\ref{sec:motivation}). We prove that, when the initial data are sufficiently small in a weighted Sobolev space, there exists a unique global-in-time solution, decaying with inverse polynomial rates at infinity. This can be interpreted as the nonlinear stability of the trivial solution to the MBI system on a Schwarzschild background.

\subsection{Overview of the result} The \emph{Maxwell--Born--Infeld} (MBI) system is a hyperbolic system of partial differential equations in $3+1$ dimensions, a higher-dimensional version of which has been widely studied in the string theory literature (see e.g.~\cite{stringy}). In general, the MBI system can be formulated as follows. Let $(\mathcal{M},g)$ be a smooth, Lorentzian, $(3+1)$-dimensional spacetime. Let $\fara$ be a smooth two-form on $\mathcal M$. We say that $\fara$ satisfies the MBI system on $\mathcal M$ if the following tensorial equations hold true in $\mathcal M$:
\begin{equation}\label{eq:mbiintro}
	\nabla^\mu \farad_{\mu\nu} = 0, \qquad \nabla^\mu \left[\ellmbi^{-1}\left(\fara_{\mu\nu} - \ldu \farad_{\mu\nu}\right) \right] = 0,
\end{equation}
here, $\ellmbi^2:= 1+\lun -\ldu^2$, with the invariants defined as
\begin{align*}
\lun = \frac 1 2 \fara_{\mu\nu} \fara^{\mu\nu}, \qquad
\ldu = \frac 1 4 \fara_{\mu\nu} \farad^{\mu\nu}.
\end{align*}
Furthermore, $\nabla$ denotes the Levi-Civita connection on $(\mathcal M, g)$ and $\farad_{\mu\nu}$ indicates the Hodge dual of $\fara_{\mu\nu}$, which is $\farad_{\mu\nu} = \frac 1 2 \varepsilon_{\mu\nu\alpha\beta}\fara^{\alpha\beta}$, where $\varepsilon$ is the standard volume form on $(\mathcal M, g)$.

The MBI system~\eqref{eq:mbiintro} shows features which are similar to the vacuum Einstein equations, and therefore can be used as a model problem to understand the stability problem of black holes to the vacuum Einstein equations. In this context, the interest in the MBI system stems from the fact that it is both quasilinear and tensorial, and moreover its linearization around the trivial solution is exactly the linear Maxwell theory.

In this work, we solve the MBI system on a \textit{fixed Schwarzschild background}. Recall the Schwarzschild spacetime $(\mathcal{S}, g)$ with mass $M>0$ (for the precise definition, see Definition~\ref{def:schwreg}). Recall also that that the exterior region (which we denote $(\mathcal{S}_e, g)$) of the Schwarzschild spacetime can be parametrized by the usual coordinates $(t, r, \theta, \varphi) \in \R \times (2M, \infty) \times (0, \pi) \times (0, 2 \pi)$. Upon setting $\mu = \frac{2M} r$, we have the following expression for the metric tensor in this coordinate system:
\begin{equation}
	g = -(1-\mu) \de t \otimes \mbox{d} t + (1-\mu)^{-1} \de r \otimes \mbox{d} r + r^2 (\mbox{d}\theta \otimes \mbox{d} \theta + \sin^2 \theta \de \varphi \otimes \mbox{d} \varphi).
\end{equation}

Here is an informal version of our main theorem. The reader may find the precise statement of the main theorem in Section~\ref{sec:statement} (Theorem~\ref{thm:gwp}). We refer the reader to the relevant sections for the precise definitions of the objects in this informal discussion.

\begin{theorem}[Informal version of Theorem~\ref{thm:gwp}]\label{thm:informal} Let $(\mathcal{S},g)$ be the Schwarzschild spacetime of mass $M>0$. Let furthermore $\widetilde{\Sigma}_{t_0^*}$ be a spacelike hypersurface to be defined in Section~\ref{sec:not:regfol} (see also Figure~\ref{fig:statem}). Consider the MBI system \eqref{eq:mbiintro} on $(\mathcal{S},g)$, with initial data $\fara_0$ on $\widetilde{\Sigma}_{t_0^*}$ such that
\begin{enumerate}
	\item $\fara_0$ is smooth,
	\item $\fara_0$ is small in a weighted, higher order Sobolev space,
	\item $\fara_0$ has asymptotically vanishing charge at spacelike infinity on $\widetilde{\Sigma}_{t_0^*}$,
	\item $\fara_0$ satisfies the constraint equations~\eqref{eq:constreq} on $\widetilde{\Sigma}_{t_0^*}$.
\end{enumerate}
Then, the Maxwell--Born--Infeld system \eqref{eq:mbiintro} on the fixed Schwarzschild background admits a global-in-time solution $\fara$ defined in the exterior region of Schwarzschild, which we called $\mathcal{S}_e$, intersected with the causal future of $\widetilde{\Sigma}_{t_0^*}$ (the shaded region in Figure~\ref{fig:statem}), and having $\fara_0$ as initial data.

Furthermore, $\fara$ decays with quantitative rates and the charge of $\fara$ vanishes at null infinity (denoted by $\mathcal{I}^+$ in Figure~\ref{fig:statem}).
\end{theorem}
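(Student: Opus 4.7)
My plan is to prove the theorem by a continuity/bootstrap argument that lifts the linear strategy of~\cite{masthes} to the quasilinear setting. I would first establish local well-posedness near the trivial solution (so that $\ellmbi^2 = 1 + \lun - \ldu^2$ remains strictly positive and the system stays symmetric hyperbolic), using a foliation extending $\widetilde{\Sigma}_{t_0^*}$ into the future. I would then decompose $\fara$ into its null components $(\alpha, \alphabar, \rho, \sigma)$ relative to a double-null frame adapted to Schwarzschild and postulate polynomial decay bounds for weighted higher-order Sobolev norms of these components, with a small bootstrap constant $\epsilon$. Writing~\eqref{eq:mbiintro} schematically as the linear Maxwell system plus quadratic-and-higher nonlinearities in $\fara$, I would derive the spin $\pm 1$ Teukolsky equations for $\alpha$ and $\alphabar$ and then apply the first-order Chandrasekhar-type differential transformation of~\cite{masthes} to each, yielding two scalar \fackip equations whose right-hand sides are quadratic in $\fara$ and couple $\alpha$ to $\alphabar, \rho, \sigma$. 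Under the bootstrap these nonlinearities are $O(\epsilon^2)$ and play the role of forcing.

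\textbf{Decay for the extreme components via the $r^p$ method.} For each of the two forced \fackip equations, I would execute the Dafermos--Rodnianski program~\cite{newmihalis}: a red-shift estimate at the horizon $\mathcal{H}^+$, a Morawetz integrated local energy decay handling the photon sphere, and the $r^p$ hierarchy near $\mathcal{I}^+$, combined with commutation by the Killing field $\partial_t$ and by angular operators. These would yield decay of weighted norms of $\alpha$ and $\alphabar$ at rates strictly better than those assumed in the bootstrap, provided the forcing carries sufficient $r$-weight and time decay; the latter follows from hierarchical reuse of the bootstrap assumptions. The asymptotic vanishing of the initial charge is crucial here: a nonzero asymptotic charge would obstruct the $r^p$ hierarchy at the level of $\rho$, so its propagation to $\mathcal{I}^+$ must be tracked throughout.

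\textbf{Middle components, closing the bootstrap, and the principal obstacle.} With decay for $\alpha$ and $\alphabar$ in hand, the middle components $\rho$ and $\sigma$ are reconstructed by integrating the remaining components of~\eqref{eq:mbiintro} as transport equations along null generators, with sources controlled by the already-estimated extreme components and the quadratic nonlinearities. This yields improved estimates for all null components, strictly beating the bootstrap assumptions, and a standard continuity argument then gives global existence together with the stated quantitative decay rates. The main obstacle I anticipate lies in the algebraic and null-structure analysis of the nonlinear right-hand sides of the \fackip equations: expanding $\ellmbi^{-1}(\fara_{\mu\nu} - \ldu\,\farad_{\mu\nu})$ around the trivial background produces many quadratic terms, and it is essential that, when projected back to the scalar \fackip level, each decomposes into contributions with enough $r$-weight for the $r^p$ method and enough time decay for the bootstrap to close. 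Ensuring that no ``bad'' top-order nonlinear term with insufficient null structure appears---an analogue of the weak null condition for the Einstein equations---is the decisive technical point.
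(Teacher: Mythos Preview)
Your overall architecture---local existence, bootstrap, Fackerell--Ipser equations, $r^p$-hierarchy, transport reconstruction---is correct, but you have inverted the roles of the extreme and middle components, and this propagates through the rest of your outline. In \cite{masthes} and in this paper, the Chandrasekhar-type transformation applied to the spin $\pm 1$ Teukolsky equations produces quantities that are essentially the \emph{middle} components $\rho,\sigma$ (or their angular derivatives). Accordingly, the paper works directly with the Fackerell--Ipser equations for $Z=r^2\rho$ and $W=r^2\sigma$ (Section~\ref{sec:spinredu}), proves decay for $\rho,\sigma$ first via Morawetz and $r^p$ estimates, and only \emph{then} recovers $\alpha,\alphabar$ by integrating the transport equations of Section~\ref{sec:transpder}. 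Your order (decay of $\alpha,\alphabar$ first, then $\rho,\sigma$ via transport) is backwards. This also explains where the charge condition actually enters: the Fackerell--Ipser equations for $\rho,\sigma$ admit stationary solutions, so the Morawetz and $r^p$ estimates for $\dot Z,\dot W$ require a Poincar\'e inequality on spheres, which in turn needs control of the spherical averages of $\rho,\sigma$---exactly what the vanishing-charge assumption delivers (Section~\ref{sec:charge}).

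Two further points. First, the nonlinear right-hand sides $\mathbf{NL}_i$ of the Fackerell--Ipser equations are \emph{cubic} in $\fara$ (schematically $H_\Delta\cdot\nabla\fara$ with $H_\Delta$ quadratic), not quadratic; this is what makes the structure tractable without a delicate null-condition analysis. Second, a genuine technical subtlety you do not mention is that the bootstrap assumptions plus the uniform $L^2$ bounds are \emph{not} strong enough to run the $r^p$-hierarchy directly up to $p=2$. The paper first runs $p=1$ (Section~\ref{sec:pweighted1}) to improve the $r$-weights on the fluxes of the good components $\alpha,\rho,\sigma$ (Section~\ref{sec:l2improved}), uses these to sharpen the estimates on $R_\rho,R_\sigma$ (Section~\ref{sec:improvedrs}), and only then closes $p=2$ (Sections~\ref{sec:previsited}--\ref{sec:pdecay}). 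Without this two-pass structure the nonlinear error terms do not close.
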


\begin{figure}[H]
	\centering
	\begin{tikzpicture}	

\node (I)    at ( 0,0) {};

\path
  (I) +(90:4)  coordinate[label=90:$i^+$]  (top)
       +(-90:4) coordinate[label=-90:$i^-$] (bot)
       +(0:4)   coordinate                  (right)
       +(180:4) coordinate[label=180:$i^0$] (left)
       +(180:4.9) coordinate[label=180:$ $] (left1)
       ;
 \path (top) +(180:0.01) coordinate (tar);

\path 
	(top) + (180:4) coordinate[label=90:$i^+$]  (acca)
		+ (-45: 2) coordinate (nulluno)
		+ (-45: 3) coordinate (nulldue)
	;

\path 
	(left) + (-45: 2) coordinate (correspuno)
		+ (-45: 3) coordinate (correspdue)
	;

\path (top) + (-150: 1.7) coordinate (horiuno)
			+ (-150: 6) coordinate (horidue);

\draw[name path = stre, opacity = 0] (left1) to (tar);
\draw [name path = ciao, thick, opacity = 0] (horiuno) to [bend left = 15] node[midway, above]{} (right);
\draw [name path = bye, thick, opacity = 0] (horidue) to [bend left = 10] node[midway, above]{} (right);
\path [name intersections={of=ciao and stre,by=E}];
\path [name intersections={of=bye and stre,by=D}];
\draw[fill=gray!20] (D) to [bend left = 10] (right) to (top) to [bend left = 6] cycle;
\path (top) + (-135:1.42) coordinate (cancuno)
			+ (-135:4.1) coordinate (cancdue);
\draw [thick] (D) to [bend left = 10] node[midway, above]{$\widetilde{\Sigma}_{t^*_0}$} (right);
\draw [thick] (top) to [bend left = 6] node[midway, above, sloped]{\small $r = r_{\text{in}}$} (D);
\draw[fill=white] (D) to (cancdue) to (top) to [bend left =6] cycle;

\draw (left) to 
          node[midway, above left, sloped] {}
      (top) to 
          node[midway, above, sloped]{$\mathcal{I}^+$}
      (right) to 
          node[midway, above, sloped]{$\mathcal{I}^-$}
      (bot) to
          node[midway, above, sloped]{$\mathcal{H}^-$}    
      (left) -- cycle;
\draw[thick] (top) to (right);
\draw[decorate,decoration=zigzag] (top) -- (acca)
      node[midway, above, inner sep=2mm] {$r=0$};

\end{tikzpicture}
	\caption{Penrose diagram depicting $\widetilde{\Sigma}_{t_0^*}$ and the shaded region of existence of our solution.  We need $\widetilde{\Sigma}_{t_0^*}$ to extend slightly in the interior region only as a technical requirement.}\label{fig:statem}
\end{figure}
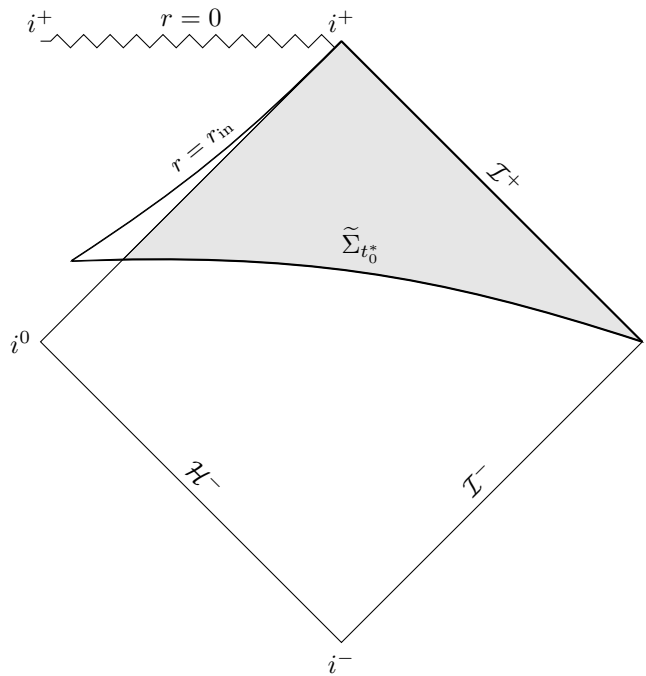

\begin{remark}
	It would be of interest, in view of the analogies with the Kerr stability conjecture, to remove the assumption $(3)$ regarding absence of charge. In fact, it is widely believed that a general gravitational perturbation of the Kerr metric with parameters $(a, M)$ (where $a$ is the angular momentum, and $M$ is the mass of the black hole) evolves into a different member of the Kerr family (i.e., the ``final'' parameters $(a,M)$ will be different). Therefore, there exists a map which connects the initial perturbation with the ``final'' values of $a$ and $M$: understanding such map is essential in a proof of the full Kerr stability conjecture. This is the so-called \emph{final state problem}. Similarly, in the context of the MBI system on Schwarzschild, we have a two-parameter family of static solutions to the MBI system (described in detail in Proposition~\ref{prop:statsols}). The two parameters, in this case, are the  ``magnetic charge'' $\sigma_0$ and the ``electric charge'' $\rho_0$. Consider now a perturbation of the MBI field around one of those static solutions. The natural problem, in this case, would be to determine the ``asymptotic value'' of those parameters as the system evolves in time, as a function of the initial perturbation: this has a clear parallel with the \emph{final state} question outlined in the case of gravitational perturbations. We elaborate more on this in Appendix~\ref{sec:heuristic}, in which we provide a heuristic argument giving an explicit expression of the asymptotics. In particular, the argument presented in Appendix~\ref{sec:heuristic} indicates a way to solve the \emph{final state problem} in the MBI case which circumvents the use of modulation techniques.
	\end{remark}

The key ingredients to establish our result are quantitative decay estimates in a bootstrap setting. First, let us introduce some notation: we consider an unknown two-form $\fara$, and we define the \emph{middle components} ($\rho$, $\sigma$) and the \emph{extreme components} ($\alpha, \alphabar$) of the field as follows:
\begin{equation*}
	\rho:= \frac 1 {2(1-\mu)} \fara(\lbar, L), \ \  \sigma:= \frac 1 {2(1-\mu)} \farad(\lbar, L), \ \  \alpha_A := \fara (\partial_{\theta^A}, L), \ \  \alphabar_A:= \fara(\partial_{\theta^A}, \lbar),
\end{equation*}
where $L := \partial_t + \partial_{r^*}$, $\lbar := \partial_t - \partial_{r^*}$, $\mu := \frac{2M}{r}$. Here, the vector fields $\p_t$ and $\p_{r^*}$ are the coordinate vector fields induced by the usual Regge--Wheeler coordinates (see Definition~\ref{def:rwcoords} for a precise definition). Furthermore, $\farad$ denotes the Hodge dual of $\fara$: $\farad_{\mu\nu} := \frac 12 \varepsilon_{\mu\nu\alpha\beta}\fara^{\mu\nu}$, where $\varepsilon$ is the usual volume form on Schwarzschild. Finally, $\theta^A$ is a local coordinate system on the conformal sphere $\mathbb{S}^2$.
Having introduced this notation, we can proceed to a brief description of our strategy:
\begin{enumerate}
	\item We first establish a local existence statement. We then proceed to set up a bootstrap argument: the bootstrap assumptions are a set of decay estimates for the various null components of the field $\fara$.
	\item We then show that the middle components ($\rho$, $\sigma$) of the MBI field, once differentiated in the angular direction, satisfy nonlinear scalar \fackip Equations.
	The analogous strategy in the linear case appeared in our earlier paper~\cite{masthes}, in the context of estimates on the spin $\pm 1$ Teukolsky Equations. See also Blue~\cite{blue}.
	The two so-called \fackip Equations, which are completely decoupled in the linear case, are now (for MBI) coupled through a cubic, nonlinear right hand side. We remark here that all calculations are performed at the scalar level, hence the equations are slightly different from those appearing in \cite{masthes}.
	\item In order to effectively decouple the two \fackip Equations, we need to control the cubic right hand sides. To accomplish that, we use the bootstrap assumptions to prove conservation of higher-order weighted energy at the level of the full system.
	\item Having established (3), we proceed to prove decay of solutions to these \fackip Equations employing the strategy of Dafermos and Rodnianski (\cite{newmihalis}).
	\item Finally, we deduce decay of the various components of the field only at lower derivative order, closely following~\cite{masthes}, using the MBI equations as transport equations. This enables us to close the bootstrap argument.
\end{enumerate}

For a more detailed overview of the proof, see Section~\ref{sec:pfoverview}.

\subsection{The black hole stability problem: recent advances}\label{sec:literature}
Motivation to study our problem arises from the so-called \textit{black hole stability problem}. The crucial question of the \textit{full nonlinear stability of Schwarzschild} as a solution to the vacuum Einstein equations, and hence the model's physical relevance, remains open to date.

In this work, we regard the MBI system as a toy model to understand some aspects of the nonlinear stability properties of the Schwarzschild geometry. We remark in particular that there have been advances in understanding other nonlinear models on black hole backgrounds (see point $(3)$ of the list below). In addition, more recently, Klainerman and Szeftel proved the first statement establishing the nonlinear stability of the Schwarzschild family as a solution to the Einstein vacuum equations, under a restricted class of perturbations (see point~$(6)$ of the list below).

We now give an outline of some recent research efforts towards the black hole stability problem. This will help us put our analysis into context. Stability problems in General Relativity received great attention following the monumental work of Christodoulou and Klainerman on the stability of Minkowski spacetime~\cite{globalnon}. Since then, considerable research efforts have been focussed on understanding the stability problem of nontrivial solutions. There has been important progress, of which we make an incomplete list here.

\begin{enumerate}
	\item A good amount of effort has been devoted to the analysis of the decay properties of the linear wave equation on black hole spacetimes ($\square_g \psi = 0$), as this is the most basic problem to study. 
	On a fixed Schwarzschild background, we cite the fundamental results of Blue--Sterbenz and Dafermos--Rodnianski (see resp.~\cite{Blue2006,redshift}).
	
	Then, research effort was focussed on understanding the picture for Kerr black holes with parameters $|a|~\ll~M$. We refer to the work of Andersson--Blue, Dafermos--Rodnianski and Tataru--Tohaneanu (see resp.~\cite{abannals,unodue,ttkerr}).
	
	We finally remark that, for the full subextremal range of parameters of the Kerr black hole $|a| < M$, the problem has additional difficulties. Only recently has there been a complete proof of decay of solutions to the wave equation by Dafermos--Rodnianski--Shlapentokh-Rothman in~\cite{paperiii}. A good introduction to the research field can be found in~\cite{lecturenotes}.
	
	Furthermore, various authors have provided methods to obtain $L^\infty$ decay estimates from $L^2$ decay estimates, with varying degrees of generality of the black hole spacetime considered. Let us cite here the work by Dafermos--Rodnianski, Metcalfe--Tataru--Tohaneanu, Moschidis, and Tataru, resp.~in~\cite{newmihalis, metatoprice, Moschidis2016, tatlocal}.
	
	\item As a second important thread, researchers have been focusing on understanding the decay properties of the linear Maxwell field on black hole spacetimes. In this context, the relevance to the black hole stability problem lies in the tensorial nature of the Maxwell equations.
	
	Let us cite the important papers of Blue and Andersson--Blue~\cite{blue, blueand}, in which the decay properties of the Maxwell field respectively on Schwarzschild and on slowly rotating Kerr are proved. In \cite{masthes}, we approach the problem from a different point of view. We start from the spin $\pm 1$ Teukolsky Equations, apply a differential transformation to the extreme components, and hence obtain a \fackip Equation for the resulting transformed quantity. This equation is then employed to prove decay estimates. See also~\cite{mettattoh, stetat, sari}.
	
	\item A third thread has also been studied, namely the analysis of nonlinear model problems. Several nonlinear equations have been analyzed on curved spacetimes in order to understand how the geometry of the manifold influences the behaviour of these models. For example, in the context of semilinear equations, global existence for nonlinear wave equations satisfying the null condition has been established by Luk in~\cite{Luk2013} on slowly rotating Kerr spacetimes. In the context of quasilinear problems, there has been an advance by Lindblad and Tohaneanu in~\cite{Lindblad2016}. In that paper, the authors prove global existence for to a class of quasilinear wave equations $\Box_{g(u, t, x)} u =0$, so that the metric $g(u,t,x)$ asymptotically approaches the Schwarzschild metric.
	
	\item A fourth thread concerns the full linearized picture of black hole stability. In this area, there has been important work by Dafermos, Holzegel and Rodnianski in~\cite{linearized}. In that paper, the authors prove the linear stability of the Schwarzschild black hole as a first step in the program to solve the Kerr stability conjecture. See also the subsequent paper by Hung, Keller and Wang \cite{Hung2017}, and the work by Johnson~\cite{Johnson2018}. More recently, Dafermos--Holzegel--Rodnianski in~\cite{DHR2017} proved decay for the spin $\pm 2$ Teukolsky equations on slowly rotating Kerr ($|a| \ll M$), a crucial step to address the nonlinear stability of Kerr in the slowly rotating case. See also~\cite{Ma2017}, in which the author proves an integrated local energy decay statement for the spin $\pm 2$ Teukolsky equations on slowly rotating Kerr. Finally, in the context of electrovacuum, the linearized stability of Reissner--Nordstr\"om under the Einstein--Maxwell system has been established in the PhD thesis of Giorgi~\cite{Giorgi2019}.
	
	In the present work, we employ a strategy similar to that used in \cite{linearized}. In fact, our paper \cite{masthes} is based on finding a differential transformation on the extreme components, which is directly analogous to the differential transformation in \cite{linearized}, used in that context to deduce the Regge--Wheeler Equation. We can then view the present paper as a \emph{nonlinear analogue} of \cite{linearized} and \cite{masthes}. This also points to the fact that the physical-space techniques employed seem to be robust for application to nonlinear problems. We will elaborate more on this point in Section~\ref{sec:motivation}.
	\item A fifth thread is focused on the solution of problems with nonzero cosmological constant. In the case $\Lambda < 0$, one expects slow rate of decay for solutions to the scalar wave equation. See the work of Holzegel and Smulevici~\cite{hsads} for a proof that, on Kerr--AdS spacetimes, solutions to the Klein--Gordon equation decay only logarithmically. Due to these slow decay rates, this family of spacetimes is conjectured not to be stable under gravitational perturbations.
	
	On the other hand, in the case $\Lambda > 0$, one can prove much stronger decay rates for solutions to the linear wave equation. This suggests nonlinear stability, and accordingly Hintz and Vasy have indeed proved the global nonlinear stability of the Kerr--de Sitter solution in the remarkable recent~\cite{hv2016}. They make essential use of the exponential decay properties of linear fields on such backgrounds, and the structure of the nonlinearities does not play a crucial role.
	
	\item Finally, the sixth thread is focused on solving the full nonlinear stability problem for the Einstein equations, when perturbing the metric in the neighborhood of a known nontrivial solution. A seminal result in the nonlinear case was obtained by Holzegel in his PhD thesis~\cite{holzegelthesis}. More recently, a solution to a restricted case of the problem of stability of Schwarzschild under gravitational perturbations has appeared. In the work~\cite{KS2017}, Klainerman and Szeftel established the nonlinear stability of the Schwarzschild black hole under axially symmetric polarized perturbations, using the linear theory of~\cite{linearized}. In particular, these perturbations ensure that the resulting evolution will converge to the Schwarzschild black hole. The full, finite codimension nonlinear stability of Schwarzschild, without symmetry assumptions, has been announced in~\cite{tayloroberwolfach}.
\end{enumerate}

\subsection{Motivation to study the MBI system on Schwarzschild}\label{sec:motivation}
 
 The MBI theory appeared for the first time in a 1933 paper by Born and Infeld~\cite{born33}. The original version of the theory was described by a $3+1$ dimensional Lagrangian, and it was formulated on flat spacetime. Only decades later, in the works of Boillat~\cite{boillat1970} and Bialynicki-Birula~\cite{BialynickiBirula:1984tx}, a crucial property of this system was discovered. These works effectively proved that MBI is \emph{distinguished} among all nonlinear theories of electromagnetism as, on Minkowski spacetime, it is the only theory which satisfies the following properties: it is gauge-invariant, it is Lorentz-invariant, it gives rise to static point charge solutions which have finite self-energy, its linearization is the linear Maxwell theory, and the MBI light cone is comprised of a single conical sheet. We will comment more on these properties in Section~\ref{sec:specstru}. Such requirements are, in some sense, minimal for a nonlinear theory of electromagnetism, thereby making the study of its stability properties a natural question.

In addition to the interest from these general considerations, the MBI theory received considerable attention in the last 20 years, since connections with string theory were unveiled in the physics community. We will not focus on these aspects, but let us just refer to the introduction of~\cite{Gibbons:1997xz}, and to the paper~\cite{Callan1997} for an example of the connections between MBI and string theory.

A third motivation to study the MBI system comes from Born's program to devise an electromagnetic theory free of ``divergence problems''. In this direction, Kiessling, in the papers~\cite{kie1} and~\cite{kie2}, formulated a well-defined initial value problem describing point charges in flat spacetime, where the electric and magnetic field obey the MBI equations. He avoids the issue of infinite Lorentz self-force employing Hamilton--Jacobi theory, and furthermore shows global existence to the joint initial value problem only in the static case.\footnote{The well-posedness of solutions to the equations of motion of MBI point charges in the general case still remains an interesting open problem.} Moreover, he does not show uniqueness of such static solutions.

Following these developments, some stability results were established in the context of the MBI theory. In the paper~\cite{speck1}, Speck proved the global nonlinear stability for the MBI system on Minkowski spacetime, in the small data regime. Subsequently, again Speck, in a follow-up paper~\cite{speck2} showed global existence and nonlinear stability of solutions to the MBI system (and other nonlinear models of electromagnetism) coupled with the Einstein equations, for initial data close to trivial data.

With the present work, we would like to contribute to the analysis of black hole stability by considering a quasilinear problem on the Schwarzschild background. In addition, we contribute to the stability analysis of the MBI system \emph{per se}, in continuity with the research efforts described in this section. The present work can therefore be put in the context of points $(2)$ and $(3)$ of the list in Section~\ref{sec:literature}, and can be seen as a proxy to study the nonlinear stability problem of Schwarzschild under gravitational perturbations. We expect an analogous result to hold for MBI on slowly rotating Kerr, even though the techniques developed here do not immediately apply to that case. Furthermore, we are chiefly interested in the issues arising from the presence of a nonlinearity.

Let us also remark that the present work is fundamentally based on estimates on the \fackip Equation, similar to those established in the linear case in \cite{masthes}. We may therefore understand the present paper as a nonlinear application of ideas originated in the work of Dafermos--Holzegel--Rodnianski on linearized gravity on Schwarzschild. In such work \cite{linearized}, the authors find a first-order differential transformation which brings the extreme components of the field ($\alpha$ and $\alphabar$) into quantities which satisfy decoupled Regge--Wheeler Equations. Thinking ahead towards the nonlinear stability of Schwarzschild (seen as part of the larger Kerr family), in principle one will need a strategy to handle the fact that the corresponding Regge--Wheeler equations will not decouple. 

We believe that the main contribution of this paper is how to deal with such coupling, in the context of the MBI system, which we view as a model problem. The core of our argument are indeed the estimates of the right hand sides of the nonlinear \fackip Equations, in Section~\ref{sec:nonlstruct} and Section~\ref{sec:improvedrs}.

We now review, for the benefit of the reader who might not be familiar with these concepts, the geometry of the Schwarzschild spacetime, as well as some of the difficulties associated with proving decay of linear waves on black hole backgrounds. This will help us motivate our strategy, and it will shed light on some of the key issues.

\subsection{Aside: the geometry of Schwarzschild and decay of linear waves}\label{sub:aside}

In this section, we will first give a broad overview of the Schwarzschild geometry. We will then touch upon the issue of decay of solutions to the linear wave equation on a Schwarzschild background, and we will give a brief description of the various difficulties associated with showing decay. The reader familiar with these concepts may skip this section. The interested reader should be warned, however, that we included a minimal amount of material, and we refer the reader to the lecture notes of Dafermos and Rodnianski~\cite{lecturenotes} for a comprehensive introduction to the subject. 

Fix $M > 0$ (the ``mass'' of the spacetime). The Schwarzschild spacetime with mass $M$ can be described by different sets of coordinates. The first such set of coordinates, which we call \emph{regular coordinates}, describes the Schwarzschild manifold $(\mathcal{S}, g)$ as the set  
$$
\mathcal{S} := (t^*,r_1,\theta, \varphi) \in (-\infty, + \infty) \times (0, + \infty) \times (0,\pi) \times (0, 2\pi),
$$
endowed with the metric given by the expression
	\begin{equation*}
	\begin{aligned}
	&g := -\left(1-\frac{2M}{r_1}\right) \de t^* \otimes \de t^* +  \frac{2M}{r_1} \de t^* \otimes \de r_1+  \frac{2M}{r_1}  \de r_1 \otimes \de t^*+\left(1+\frac{2M}{r_1}\right) \de r_1 \otimes \de r_1\\
	& \qquad  + r_1^2 \der \theta \, \otimes \, \der \theta + r_1^2 \sin^2 \theta \der \varphi \, \otimes \, \der \varphi.
	\end{aligned}
	\end{equation*}
These coordinates are suitable to describe both the \emph{exterior Schwarzschild} region (the region ``outside the black hole'') and the \emph{interior Schwarzschild} region (or the ``black hole region''), which are resp.~defined as the subsets
	\begin{equation*}
		\mathcal{S}_e := \{(t^*,r_1,\theta, \varphi) \in \mathcal{S}: r_1 > 2M\}, \qquad \mathcal{S}_i := \{(t^*,r_1,\theta, \varphi) \in \mathcal{S}: 0 < r_1 < 2M\}.
	\end{equation*}
The region $\mathcal{S}_i$ has the following property: every null geodesic $\gamma(s)$ ($s \in \R$, $s \geq 0$) originating from a point inside $\mathcal{S}_i$ is such that the $r_1$-coordinate of the point $\gamma(s)$ will remain less than $2M$ for all $s \geq 0$.

The two regions $\mathcal{S}_e$ and $\mathcal{S}_i$ are separated by the \emph{future event horizon} $\mathcal{H}^+$, which is defined as
	\begin{equation*}
	\mathcal{H}^+ := \{(t^*, r_1, \theta, \varphi) \in \mathcal{S}: r_1 = 2M \}.
	\end{equation*}
Finally, as can be read off easily from the expression of the metric in these coordinates, the coordinate vector field $\p_{r_1}$ remains everywhere spacelike, whereas the coordinate vector field $\p_{t^*}$, goes from timelike to spacelike as $\mathcal{H}^+$ is crossed. In particular, this means that, in the interior region, the hypersurfaces given by the level sets of the function $t^*$ are uniformly spacelike. As such, in view of the positive energy condition, they are suitable as boundary surfaces of regions where to perform energy estimates.

Furthermore, the symmetries of the Schwarzschild spacetime (i.~e.~its Killing fields) are exactly the vector field $\p_t$, as well as any vector field induced by rotations about the center of symmetry ($\p_\varphi$ is one such example).

\begin{remark}
The strategy to prove decay for the wave equation on Minkowski uses the whole algebra of Killing fields, which is much larger in that case (in particular, the presence of Lorentz boosts is very helpful). In our work, we need to adopt a different method, which is robust for applications on a black hole background: the $r^p$ method of Dafermos--Rodnianski~\cite{newmihalis}. In that respect, our work is different from the nonlinear stability of the MBI system on Minkowski by Speck in~\cite{speck1}. 
\end{remark}

We now proceed to describe a different set of coordinates, which are useful to describe the interior region and the exterior region~\emph{separately}. We let
\begin{align*}
&t:= t^*- 2M \log(r_1-2M), \qquad \text{and} \qquad 
r := r_1 \qquad \text{if} \qquad  r_1 > 2M,\\
&t:= t^* + 2M \log(2M-r_1), \qquad \text{and} \qquad 
r := r_1 \qquad \text{if} \qquad 0 <r_1 < 2M.
\end{align*}
In these coordinates, the metric on $\mathcal{S} \setminus \mathcal{H}^+$ becomes:
\begin{equation}\label{eq:irregpre}
g = -\left(1-\frac{2M}{r}\right) dt \otimes dt + \left(1-\frac{2M}{r}\right)^{-1} dr \otimes dr +r^2( \der \theta \, \otimes \, \der \theta +  \sin^2 \theta \der \varphi \, \otimes \, \der \varphi).
\end{equation}
First of all, we notice a coordinate singularity at $\mathcal{H}^+$. Secondly, let us note that, as the event horizon is crossed, the vector field $\p_t$ and $\p_r$ exchange roles: $\p_t$ is timelike in the exterior and spacelike in the interior, whereas $\p_r$ is spacelike in the exterior and timelike in the interior. Furthermore, $\p_t = \p_{t^*}$ is a regular vector field everywhere on $\mathcal{S}$, whereas $\p_r$ is singular at $\mathcal{H}^+$. Indeed, we have the expression:
$$
\p_{r} = \frac{2M}{r}(1-\mu)^{-1}\p_{t} + \p_{r_1},
$$
which clearly shows that $\p_{r}$ is singular at $r = 2M$. Furthermore, the coordinate hypersurfaces $t = \text{const.}$ are spacelike in the exterior, and timelike in the interior, whereas the coordinate hypersurfaces $r = \text{const.}$ are timelike in the exterior, and spacelike in the interior.

We now describe yet another coordinate system, namely the \emph{Regge--Wheeler coordinates} $(t_2, r^*, \theta, \varphi)$, which arise from the following change of coordinates:
\begin{align*}
&t_2 := t, \qquad r^* := r + 2M \log (r-2M), \qquad \text{if} \ r > 2M,\\
&t_2 := t, \qquad r^* := r - 2M \log (2M-r), \qquad \text{if} \ 0 < r < 2M.
\end{align*}
Note that these coordinates are well-defined everywhere, except at the future event horizon $\mathcal{H}^+$.
In these coordinates, the metric on $\mathcal{S}\setminus \mathcal{H}^+$ takes the form:
\begin{equation*}
g = -(1-\mu) \de t_2 \otimes \de t_2 +  (1-\mu) \de r^* \otimes \de r^* + r^2( \der \theta \, \otimes \, \der \theta +  \sin^2 \theta \der \varphi \, \otimes \, \der \varphi).
\end{equation*}
From this definition, the following relations are evident: $\p_{t_2} = \p_t, \p_{r^*} = (1-\mu)\p_r$, and furthermore it holds that the coordinate vector field $\p_{r^*}$ admits a smooth extension across the event horizon $\mathcal{H}^+$.

We now come to the last set of coordinates we will use, namely the \emph{null coordinates}, which have the property that the corresponding coordinate vector fields are always null:
\begin{equation*}
\begin{aligned}
u := t - r^*, \qquad
v := t + r^*.
\end{aligned}
\end{equation*}
The metric, in this case, takes the following form:
\begin{equation*}
g =-2 (1-\mu) \de u \otimes \de v -2 (1-\mu) \de v \otimes \de u + r^2( \der \theta \, \otimes \, \der \theta +  \sin^2 \theta \der \varphi \, \otimes \, \der \varphi).
\end{equation*}
\begin{remark}\label{rmk:drstar}
Due to the simple form of the metric in these coordinate systems, it is often convenient to use vector fields as $\p_{r^*}$ and $\p_u$ as commutators. Nevertheless, one must bear in mind that these vector fields are, in a certain sense, degenerate at the event horizon. Indeed, writing $\p_{r^*}$ and $\p_u$ in terms of the regular coordinates, we have the formulae:
$$
\p_{r^*} = \frac{2M}{r}\p_t + (1-\mu)\p_{r_1}, \qquad \p_u = (1-\mu)(\p_t - \p_{r_1}).
$$
Thus both vector fields have degenerate control over ``derivatives transversal to the event horizon'' in the direction of $\p_{r_1}$, as the coefficients in front of such vector field are going to zero as $r \to 2M$ in these expressions. Hence, the natural vector field used to control non-degenerate transversal derivatives at $\mathcal{H}^+$ will be $(1-\mu)^{-1} \p_u$. It is furthermore evident from the expression above that such vector field admits a smooth extension across the event horizon $\mathcal{H}^+$.
\end{remark}

Having given a description of the geometry of the Schwarzschild spacetime, we now turn to a discussion of decay of linear waves on Schwarzschild itself. Although the (recent) history of the subject is quite rich, for the purposes of this introduction we will focus only on a particular type of decay statement, with no claim of completeness. We refer the reader to the discussion in the lecture notes~\cite{lecturenotes} for a more detailed overview. 

The issue of decay of linear waves on a non-flat manifold (such as Schwarzschild) is a delicate one, as the non-trivial geometry may determine the presence of regions in which light rays get ``trapped'', leading to a possible obstruction to decay. Nevertheless, at least in the specific example of the Schwarzschild manifold, it is possible to prove that solutions to the linear wave equation decay with quantitative rates. To make our discussion more concrete, we give an example of such a statement:

\begin{theorem}[\cite{newmihalis}]\label{thm:decaylinear}
Let $(\mathcal{S}_e, g)$ be the exterior Schwarzschild manifold, endowed with the metric $g$. Let furthermore $t^*_0 \in \R$, and let $\phi: \mathcal{S}_e \cap \{t^* \geq t^*_0\} \to \R$ be a smooth solution to the following initial value problem, with initial data posed on the hypersurface $\Sigma_{t^*_0}$:
\begin{equation}\label{eq:waveqschw}
\begin{aligned}
&\Box_g \phi =0,\\
&\phi|_{t^*=t^*_0} = \phi_0,\\
&\p_t \phi|_{t^*=t^*_0} = \phi_1.
\end{aligned}
\end{equation}
Suppose furthermore that $\phi_0$ and $\phi_1$ are smooth and compactly supported functions on $\Sigma_{t_0^*}$. Then, $\phi$ satisfies the following decay properties on $\mathcal{S}_e  \cap \{t^* \geq t^*_0\} $:
\begin{equation}
\begin{aligned}
&|\p \phi| \leq C \frac 1 {(1+r) (1+|u|)^{\frac 12}}, \qquad \text{and} \qquad  |\p \phi| \leq C \frac 1 {(1+r)^{\frac 12} (1+|u|)} \quad \text{if} \ \ r \geq 3M,\\
&|\p \phi| \leq C \frac 1 {(1+v)} \quad \text{if} \ \ 2M < r < 3M.\\
\end{aligned}
\end{equation}
Here, $C$ is a positive constant which depends on a high-order, weighted Sobolev norm of the initial data $(\phi_0, \phi_1)$.
\end{theorem}

\begin{remark}
Note that these decay rates are sharp ``along the light cone'' (this means that the power of $r$ which appears in the estimates cannot be improved). Nevertheless, the decay rate in terms of the $u$-variable is not sharp.
\end{remark}

As we already pointed out, the proof of a decay statement like Theorem~\ref{thm:decaylinear} requires careful analysis, due to the various complications introduced by the background. With this in mind, we now turn to a description of the main difficulties.

The first object one needs to consider when studying the decay of linear waves on a non-flat manifold is the geodesic flow associated to the manifold itself. In the case at hand, the main feature which distinguishes Schwarzschild from flat space is the presence of \emph{trapped null geodesics}. These are particular geodesics which live in the exterior region $\mathcal{S}_e$, and have the property that their $r$-coordinate always remains bounded. On Schwarzschild, it is possible to give an explicit characterization of these geodesics, they are null geodesics which start at $r = 3M$ and ``spin around'' the black hole in the angular direction.

Since light travels on null geodesics, this feature will have repercussions on the decay of linear waves on Schwarzschild: waves traveling in the angular direction around $r = 3M$ stay, in some sense, localized around $r = 3M$, which could be an obstruction to decay. This is best seen looking at the Morawetz (or integrated local energy decay) estimate for the linear wave equation on Schwarzschild, which was first derived in~\cite{redshift}. Here, we refer to the paper~\cite{DR2007}. Suppose that $\phi$ is a solution to the linear wave equation on Schwarzschild:
$$
\square_g \phi = 0.
$$
There exists a constant $C > 0$ such that the following holds. Let $t^*_0 > 0$, and $\mathcal{R}^{\infty}_{t^*_0} := \{(t^*, r_1, \theta, \varphi) \in \mathcal{S}_e: t^* \geq t^*_0\}$, and let $\Sigma_{t^*_0} :=  \{(t^*, r_1, \theta, \varphi) \in \mathcal{S}_e: t^* = t^*_0\}$. Then, we have
\begin{equation}\label{eq:morawaveeq}
\begin{aligned}
&\int_{\mathcal{R}^\infty_{t^*_0}} \Big[ r^{-3}  (\p_{r^*}\phi)^2 +  \Big(1-\frac{3M}{r} \Big)^2 (r^{-1}|\snabla \phi|^2 + r^{-4}(\p_t \phi)^2) \Big]  r^2 \sin \theta \de t^* \de r_1 \de \theta \de \varphi \\
&\qquad \leq C \int_{\Sigma_{t^*_0}} \Big[ (\p_{r^*}\phi)^2 +  |\snabla \phi|^2 +(\p_t \phi)^2\Big] r^2 \sin \theta \de r_1 \de \theta \de \varphi 
\end{aligned}
\end{equation}
Here, we denoted $|\snabla \phi|^2 := |\p_\theta \phi|^2 + \frac 1 {\sin^2 \theta} |\p_\varphi \phi|^2$.
This estimate is already telling us that the field $\phi$ is decaying, albeit in an integrated sense. If we want to obtain quantitative decay estimate, then our goal is to translate this integrated decay into a pointwise one.

From inequality~\eqref{eq:morawaveeq} we can already infer some of the key obstructions. First, we note that, at $r = 3M$, some of the terms in the LHS of display~\eqref{eq:morawaveeq} are degenerate. This is due to the presence of trapping at the ``photon sphere'' ($r = 3M$). Fortunately, this issue is resolved by commuting the equation with $\p_t$. Indeed, it is possible to show that the following estimate holds true:

\begin{equation}\label{eq:morawaveeqnond}
\begin{aligned}
&\int_{\mathcal{R}^\infty_{t^*_0}} \Big[ r^{-3}  (\p_{r^*}\phi)^2 + r^{-1}|\snabla \phi|^2 + r^{-4}(\p_t \phi)^2 \Big] (1-\mu) r^2 \sin \theta \de t^* \de r_1 \de \theta \de \varphi \\
&\qquad \leq C \sum_{k = 0,1}\int_{\Sigma_{t^*_0}} \Big[ (\p_{r^*}(\p_t)^k\phi)^2 +  |\snabla (\p_t)^k \phi|^2 +(\p_t (\p_t)^k \phi)^2\Big] r^2 \sin \theta \de r_1 \de \theta \de \varphi.
\end{aligned}
\end{equation}

This estimate ``loses derivatives'', in a sense that the RHS requires one additional derivative when compared to the LHS, but now the degeneracy at $r = 3M$ has been removed. For a more detailed description of this issue in Schwarzschild, see Section~4.1 of the lecture notes~\cite{lecturenotes}.

There is yet another degeneracy which is manifest from~\eqref{eq:morawaveeq}, namely the one in the term $(\p_{r^*} \phi)^2$ at the event horizon, as previously discussed in Remark~\ref{rmk:drstar}. In fact, it is clear from said remark that no derivative transversal to~$\mathcal{H}^+$ is controlled by estimate~\eqref{eq:morawaveeq}. The fundamental insight, which was introduced in the pioneering work~\cite{redshift}, to overcome such degeneracy is the use of a specific multiplier vector field, called the \emph{redshift multiplier}, which is future directed, transverse to the future event horizon, and which gives rise to a positive bulk term close to the event horizon itself. In our argument, we will make essential use of a version of the redshift multiplier. We define the redshift multiplier in Section~\ref{sub:redshift}, and we prove the associated crucial bulk positivity property in Lemma~\ref{lem:coercivity}.

For the purposes of this introduction, let us just mention that, using the redshift multiplier, one can improve estimate~\eqref{eq:morawaveeq} as follows:
\begin{equation}\label{eq:redshiftedmora}
\begin{aligned}
&\int_{\mathcal{R}^\infty_{t^*_0}} \Big[ r^{-3}  (\p_{r^*}\phi)^2 +  \Big(1-\frac{3M}{r} \Big)^2 (r^{-1}|\snabla \phi|^2 + r^{-4}(\p_t \phi)^2) \\
& \qquad \qquad \qquad + \chi_{\mathcal{H}^+}(r) ((1-\mu)^{-1}\p_{u}\phi)^2\Big]  r^2 \sin \theta \de t^* \de r_1 \de \theta \de \varphi \\
&\qquad \leq C \int_{\Sigma_{t^*_0}} \Big[ (\p_{r^*}\phi)^2 +  |\snabla \phi|^2 +(\p_t \phi)^2+ \chi_{\mathcal{H}^+}(r) ((1-\mu)^{-1}\p_{u}\phi)^2\Big](1-\mu) r^2 \sin \theta \de r_1 \de \theta \de \varphi 
\end{aligned}
\end{equation}
Here, $\chi_{\mathcal{H}^+}$ denotes a non-negative radial  cut-off function which is equal to $1$ in a neighborhood of $r=2M$ and vanishes outside a larger neighborhood of $r = 2M$. We note that now, since $(1-\mu)^{-1}\p_u = \p_t - \p_{r_1}$ in regular coordinates, the estimate controls a full set of non-degenerate derivatives at $r = 2M$. 

Combining the previous techniques used to obtain inequalities~\eqref{eq:morawaveeqnond} and~\eqref{eq:redshiftedmora}, we can finally obtain a Morawetz estimate with no degeneracies. A much more detailed introduction to the redshift effect can be found in~\cite{lecturenotes}, Section~3.3.

Since in our argument we will need to estimate higher-order derivatives of the field \emph{transversal to the event horizon}, we will also need to commute the equation with a non-degenerate vector field transversal to the event horizon: we will choose the vector field
$$
Y := (1-\mu)^{-1}\p_u.
$$ 
In Lemma~\ref{lem:pointpos} of Section~\ref{sub:redshcomm}, we will prove that this vector field enjoys good commutation properties.

We conclude this overview noting that, in the Schwarzschild case, one can then use the integrated local energy decay statement, along with the ``physical space method'' of Dafermos--Rodnianski~\cite{newmihalis} in order to obtain pointwise decay rates. 

Finally, a few remarks are in order concerning some subtle points in our argument where the geometry of the Schwarzschild spacetime plays a crucial role.

\begin{remark}\label{rmk:rinclose}
In the course of our argument, we will need to perform energy estimates ``slightly inside the black hole region''. In other words, we will need to choose, in certain estimates, the hypersurface $r = r_{\text{in}}$, with $0<r_{\text{in}}<2M$, and $r_\text{in}$ close to $2M$ as a boundary hypersurface (this can be visualized, for instance, in Figure~\ref{fig:statem}). An example in which we will need to use this technique is in the proof of  inequality~\eqref{eq:mixfinal} in Lemma~\ref{lem:qtot}. The first reason for this choice is that the hypersurface $r = r_\text{in}$ is \emph{uniformly spacelike}, and hence, in view of the positive energy condition, it will schematically give a positive future boundary term when performing energy estimates using the multiplier $\p_t$. 

Schematically, the reason why $r_{\text{in}}$ should be chosen close to $2M$ is connected with the redshift estimate~\eqref{eq:redshiftedmora}, since in general such an estimate (i.e., the positivity of the corresponding bulk term) is only valid in a certain region close to the event horizon. The crucial part of the argument in which this mechanism plays a role is Lemma~\ref{lem:qtot}. See also Remark~\ref{rmk:afterqtot}.
\end{remark}

\begin{remark}
One might argue that, in view of the positive energy condition, null hypersurfaces give positive boundary terms when performing energy estimates: hence, why not choose $\mathcal{H}^+$ as a future boundary hypersurface? As we already mentioned, the causal structure of the MBI theory differs from the underlying Schwarzschild causal structure (MBI is a \emph{quasilinear} system). Furthermore, when performing high-order energy estimates, we have to consider the ``commuted'' equations, and in that case we use the \emph{canonical stress} to perform our estimates. Note that this tensor (see Equation~\eqref{eq:dotqdef}) differs from the usual stress--energy--momentum tensor of the MBI theory (see Equation~\eqref{eq:semtdef}). Due to the discrepancy between the MBI geometry and the background Schwarzschild geometry, it is unclear how to show positivity of the resulting boundary fluxes on null hypersurfaces, when considering estimates arising from the~\emph{canonical stress}. In view of this, it is not clear a priori that the boundary term which would arise from the energy identity for higher--order derivatives of the MBI using $\mathcal{H}^+$ as a future bounding hypersurface would be positive. Hence the need to consider a ``nearby'' uniformly spacelike hypersurface to bound our region in the future.
\end{remark}

\begin{remark}
We focus on the study of the MBI system only in the \emph{exterior region} of Schwarzschild as our motivation comes from the black hole stability problem, and the exterior region of Schwarzschild is conjectured to be stable under gravitational perturbations. Furthermore, even the linear wave equation is very badly behaved in the interior of the Schwarzschild black hole, see for instance the recent work~\cite{FS2018}. Note that, in such paper, blow-up results are shown where the blow-up occurs at $r = 0$. Here, on the other hand, we always focus on the region $r \geq r_{\text{in}}$.
\end{remark}

We now give an outline of the strategy adopted in the present work.

\subsection{Structure of the proof}\label{sec:pfoverview}

Let us assume, in general, that we are considering a $4$-dimensional Lorentzian spacetime $(\mathcal M,g)$ whose canonical volume form is $\varepsilon$. Assume $\fara$ is an antisymmetric covariant two-form (the field tensor). Define the Lagrangian density
\begin{equation} \label{eq:densuno}
{}^*\mathscr{L}_{\text{MBI}} := \left(1 - \sqrt{1 +  \frac 1 2 \fara_{\mu\nu} \fara^{\mu\nu} - \left(\frac 1 4 \fara_{\mu\nu} \farad^{\mu\nu}\right)^2}\right)\varepsilon.
\end{equation}
Here, $\farad_{\mu\nu} := \frac 1 2 \varepsilon_{\mu\nu\kappa\lambda} \fara^{\kappa\lambda}$ denotes the Hodge dual.

The {\bf Euler--Lagrange equations for the Maxwell--Born--Infeld theory} are obtained by imposing that $\fara$ be closed as a two-form and be a critical point for closed and compactly supported variations of the action functional arising from the density ${}^*\mathscr{L}_{\text{MBI}}$:
\begin{equation}\label{eq:formmbiprel}
	d \fara = 0, \qquad H^{\mu\nu\kappa\lambda}[\fara] \  \nabla_\mu \fara_{\kappa\lambda} = 0.
\end{equation}
Here, $H^{\mu\nu\kappa\lambda}[\fara]$ is a $4$-contravariant tensor field which depends at least quadratically on $\fara$. See Section~\ref{MBIdef} for the precise definition.

Let us fix $M>0$, and consider the Schwarzschild spacetime $(\mathcal{S}, g)$. For a precise definition, see Section~\ref{schwasec}. Recall the coordinates $(t^*, r_1, \theta, \varphi)$ as in Definition~\ref{def:schwreg}.  We furthermore consider $r_{\text{in}} < 2M$, a number sufficiently close to $2M$.\footnote{The reason for this choice is technical, and is explained in Remark~\ref{rmk:rinclose} in Section~\ref{sub:aside}.}
Given $s \in \R$, we define $\widetilde{\Sigma}_{s} := \{(t^*, r_1, \theta, \varphi), t^* = s, r_1 \geq r_{\text{in}} \}$ as seen in the $(t^*, r_1, \theta, \varphi)$ coordinates. We also let $\tregio{s_1}{s_2}$ the spacetime region defined by $\{(t^*, r_1, \theta, \varphi), s_1 \leq t^* \leq  s_2, r_1 \geq r_{\text{in}} \}$. Let $t^*_0 >0$, and recall that we are imposing initial data $\fara_0$ on $\widetilde{\Sigma}_{t^*_0}$.

Define $\rho$ and $\sigma$ to be the \emph{middle components} of the MBI field, and $\alpha$, $\alphabar$ to be the \emph{extreme components} of the MBI field:
\begin{equation}\label{eq:rhodefprel}
	\rho:= \frac 1 {2(1-\mu)} \fara(\lbar, L), \ \  \sigma:= \frac 1 {2(1-\mu)} \farad(\lbar, L), \ \  \alpha_A := \fara( \partial_{\theta^A},  L), \ \  \alphabar_A:= \fara(\partial_{\theta^A}, \lbar),
\end{equation}
where $L := \partial_t + \partial_{r^*}$, $\lbar := \partial_t - \partial_{r^*}$, $\mu := \frac{2M}{r}$ (see Definition~\ref{def:rwcoords}), and finally $\theta^A$ is a local coordinate system for the conformal sphere $\mathbb{S}^2$.

Moreover, we define the sets of vector fields:
\begin{equation*}
\mathcal{V} := \{L, (1-\mu)^{-1}\lbar\}, \qquad
{\leo}:= \{\Omega_{i}\}_{i = 1, 2,3}, \qquad {\leo}_{\text{no}}:=  \{\Omega_{i}/r\}_{i = 1, 2,3}.
\end{equation*}
Here, $\Omega_{i}$ are three rotation (Killing) vector fields on the conformal sphere $\mathbb{S}^2$ whose linear span is 2-dimensional everywhere.

Furthermore, given $k \in \enn$\footnote{$\enn$ denotes the set of all non-negative integers: $\enn = \{0,1, \ldots\}$.}, and given a $m_1$-covariant tensor field $T$, we define the following norms:
\begin{equation*}
|T| := \sum_{V_1, \ldots, V_{m_1} \in \mathcal{V} \cup \ono}|T(V_1, \ldots, V_{m_1})|, \qquad 
|\partial^k T| := \sum_{\substack{V_1,\ldots, V_{m_1} \in \mathcal{V} \cup \ono\\
W_1, \ldots, W_k \in \mathcal{V} \cup \oscr}}|(\nabla_{W_1} \cdots \nabla_{W_k}T)(V_1, \ldots, V_{m_1})|.
\end{equation*}
Here, $\nabla$ denotes the Levi-Civita connection on $\mathcal{S}$. Furthermore, given $j \in \N_{\geq 0}$, we let
$$
|\partial^{\leq j} T| := \sum_{k =0}^j |\partial^k T|, \qquad  \qquad
	\norm{\fara}^2_{H^j(\widetilde{\Sigma}_{t^*})} := \int_{\widetilde{\Sigma}_{t^*}}|\partial^{\leq j}\fara|^2 \de \widetilde{\Sigma}_{t^*}.
$$
Here, $\de \widetilde{\Sigma}_{t^*} = r^2 \sin \theta \de r_1 \de \theta \de \varphi$ is the induced volume form on the surface $\widetilde{\Sigma}_{t^*}$.

Finally, let $\tau := (1-\chi_1(r))(u^+ +1)+ \chi_1(r)t^*,$  $\chi_1$ being a smooth radial cutoff function such that
\begin{equation*}
\chi_1(r) = 1 \text{ for } r \in [3/2 M, 3 M], \qquad \chi_1(r) = 0 \text{ for } r \in (0, M] \cup [4M, \infty).
\end{equation*}
Here, $u^+ := \max\{u, 0\}$ is the positive part.

We will setup a \textbf{bootstrap argument} in $t^*$ (see Figure~\ref{fig:init}): we assume some decay estimates in $L^\infty$ for the null components of the field in the region $\tregio{t_0^*}{t_1^*}$, and we seek to then improve those estimates (by proving that in fact they hold with a better constant). There are three key ingredients to carry out this program, which we describe here.

	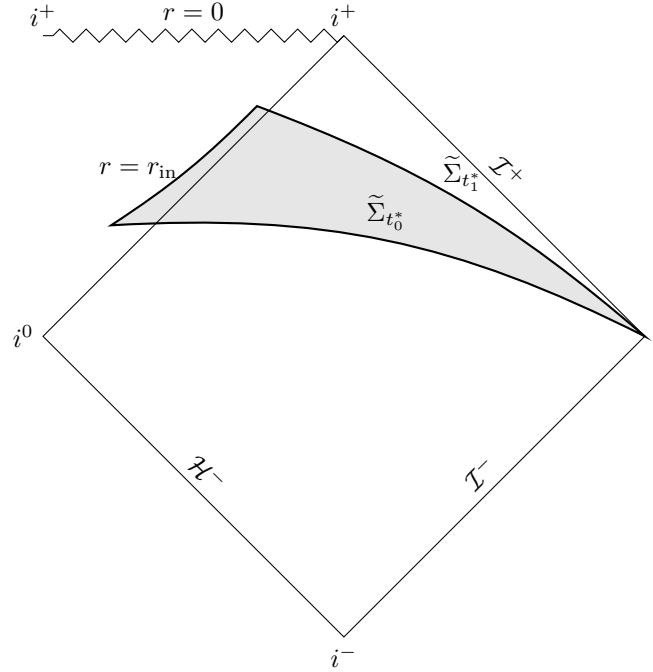
\begin{figure}[H]
		\centering
		\begin{tikzpicture}	

\node (I)    at ( 0,0) {};

\path
  (I) +(90:4)  coordinate[label=90:$i^+$]  (top)
       +(-90:4) coordinate[label=-90:$i^-$] (bot)
       +(0:4)   coordinate                  (right)
       +(180:4) coordinate[label=180:$i^0$] (left)
       +(180:4.9) coordinate[label=180:$ $] (left1)
       ;
 \path (top) +(180:0.01) coordinate (tar);

\path 
	(top) + (180:4) coordinate[label=90:$i^+$]  (acca)
		+ (-45: 2) coordinate (nulluno)
		+ (-45: 3) coordinate (nulldue)
	;

\path 
	(left) + (-45: 2) coordinate (correspuno)
		+ (-45: 3) coordinate (correspdue)
	;

\path (top) + (-150: 1.7) coordinate (horiuno)
			+ (-150: 5) coordinate (horidue);
\draw[name path = stre, opacity = 0] (left1) to (tar);

\draw [name path = ciao, thick, opacity = 0] (horiuno) to [bend left = 15] node[midway, above]{} (right);
\draw [name path = bye, thick, opacity = 0] (horidue) to [bend left = 10] node[midway, above]{} (right);
\path [name intersections={of=ciao and stre,by=E}];
\path [name intersections={of=bye and stre,by=D}];
\draw [thick, fill=gray!20] (D) to [bend left = 15] node[midway, above]{$\widetilde{\Sigma}_{t^*_0}$} (right) to [bend right = 10] node[midway, above]{$\widetilde{\Sigma}_{t^*_1}$} (E) to [bend left = 6] node[midway, left]{ $r = r_{\text{in}}$} (D) to cycle;
\draw (left) -- 
          node[midway, above left, sloped]    {}
      (top) --
          node[midway, above, sloped] {$\mathcal{I}^+$}
      (right) -- 
          node[midway, above, sloped] {$\mathcal{I}^-$}
      (bot) --
          node[midway, above, sloped]    {$\mathcal{H}^-$}    
      (left) -- cycle;
      
\draw[decorate,decoration=zigzag] (top) -- (acca)
      node[midway, above, inner sep=2mm] {$r=0$};
      
\end{tikzpicture}
		\caption{Penrose diagram depicting $\widetilde{\Sigma}_{t_0^*}$ and the bootstrap region $\tregio{t^*_0}{t_1^*}$ highlighted in gray color.}\label{fig:init}
	\end{figure}

	{\bf (1)} We first prove a \textbf{local existence statement}:
	\begin{theorem}[Informal version of Theorem~\ref{thm:local}]
		Let $\fara_0$ be initial data for the MBI system on $\widetilde{\Sigma}_{t_0^*}$. We suppose that
		\begin{itemize}
			\item $\fara_0$ is smooth,
			\item $\fara_0$ is sufficiently small in a higher-order Sobolev norm,
			\item $\fara_0$ satisfies the constraint Equations~\eqref{eq:constreq} on $\widetilde{\Sigma}_{t_0^*}$.
		\end{itemize}
		Then, there exist a time $t_1^*$ and a smooth solution $\fara$ to the MBI system~\eqref{eq:formmbiprel} on $\tregio{t_0^*}{t_1^*} = \cup_{t^* \in [t_0^*,t_1^* ]} \widetilde{\Sigma}_{t^*}$ having $\fara_0$ as initial data.
	\end{theorem}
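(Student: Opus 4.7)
The plan is to reduce the MBI system~\eqref{eq:formmbiprel} to a quasilinear hyperbolic system and invoke standard local well-posedness theory, following the approach of Speck on Minkowski in \cite{speck1} adapted to the Schwarzschild background. Combining the Euler--Lagrange equation with $d\fara = 0$ yields a first-order system (or equivalently a quasilinear system of wave equations for the components of $\fara$) whose principal symbol, built from $H^{\mu\nu\kappa\lambda}[\fara]$, reduces when $\fara = 0$ to that of the linear Maxwell system on $(\mathcal{S}, g)$. Since the latter is symmetric hyperbolic, continuity ensures the MBI principal symbol remains so for $\fara$ small in $L^\infty$, and $\ellmbi = \sqrt{1 + \lun - \ldu^2}$ stays uniformly bounded away from zero and infinity.

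To handle the geometry, I would work in the regular coordinates $(t^*, r_1, \theta, \varphi)$ of Definition~\ref{def:schwreg}, which extend smoothly across $\mathcal{H}^+$ down to $r = r_{\text{in}}$. In these coordinates $\widetilde{\Sigma}_{t_0^*}$ is spacelike, and the portion $\{r_1 = r_{\text{in}}\}$ lies strictly inside the black hole, where the level sets of $r$ are spacelike, so no boundary data need be prescribed there; the Cauchy problem is then to evolve $\fara_0$ into the slab $\tregio{t_0^*}{t_1^*} = \bigcup_{t^* \in [t_0^*, t_1^*]} \widetilde{\Sigma}_{t^*}$. Existence would then follow from a Picard iteration: define $\fara^{(n+1)}$ as the solution of the linear hyperbolic problem obtained by freezing the coefficients $H^{\mu\nu\kappa\lambda}[\fara^{(n)}]$ and $\ellmbi^{-1}[\fara^{(n)}]$, with the prescribed Cauchy data $\fara_0$. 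Energy estimates in $H^N$ show that, for $t_1^* - t_0^*$ sufficiently small relative to $\norm{\fara_0}_{H^N}$, the iterates are uniformly bounded in $H^N$ and the iteration contracts in a lower Sobolev norm; Sobolev embedding preserves smallness of $\norm{\fara^{(n)}}_{L^\infty}$ along the scheme, so the hyperbolicity hypothesis persists. The limit is a classical solution, with smoothness recovered by differentiating the equation.

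A separate point to verify is constraint propagation: since $d\fara_0 = 0$ is implied by the constraint equations~\eqref{eq:constreq} and $d^2 = 0$, the two-form $d\fara$ satisfies a linear homogeneous hyperbolic equation with vanishing Cauchy data and hence vanishes identically on $\tregio{t_0^*}{t_1^*}$; in particular the auxiliary equation $\nabla^\mu \farad_{\mu\nu} = 0$ of \eqref{eq:mbiintro} is recovered. The main technical obstacle is verifying that the MBI equations really can be put in symmetric hyperbolic (or quasilinear wave) form down to $r_1 = r_{\text{in}} < 2M$ with principal part controlled uniformly; near the horizon $\partial_{t^*}$ is no longer timelike, so the structure of the equations must be examined explicitly in the $(t^*, r_1)$ chart. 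The smallness assumption on $\fara_0$ is used precisely to keep the symbol coercive throughout the evolution, which is what allows the standard machinery to deliver the desired $t_1^*$.
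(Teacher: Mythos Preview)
Your overall strategy is sound and is in fact acknowledged by the paper as an alternative route (see the Remark following Proposition~\ref{prop:linex}): cast the evolution part as a symmetric-hyperbolic system and run a Picard iteration with frozen coefficients. The paper itself takes a different constructive path for the linearized problem: it approximates the data and the coefficients by polynomials via Stone--Weierstrass, invokes the Cauchy--Kowalevskaya Theorem on each approximant, and then passes to the limit using the a priori energy estimates of Proposition~\ref{energyprop}; the nonlinear step is the same iteration you describe. Your approach is more direct but obliges you to verify the symmetric-hyperbolic structure explicitly in the $(t^*,r_1)$ chart across the horizon, which you correctly flag as the main technical point; the paper's approach sidesteps this by appealing to a classical existence theorem plus compactness.

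There is, however, a genuine gap in your constraint-propagation paragraph. The evolution system~\eqref{linmbi} consists of the \emph{tangential} projections of both $\nabla^\mu\farad_{\mu\nu}=0$ and $H^{\mu\nu\kappa\lambda}\nabla_\mu\fara_{\kappa\lambda}=0$, so \emph{both} normal components in~\eqref{eq:constreq} must be shown to propagate; you address only the first. Even there the argument is garbled: $d\fara$ is a $3$-form, not a $2$-form; the identity $d^2=0$ is a single scalar relation on its four components rather than a hyperbolic system; and ``$d\fara_0=0$'' is not what the constraint equations~\eqref{eq:constreq} assert. The paper's argument in Section~\ref{sub:proofloc} instead shows that for any antisymmetric $\ipsi$, once the tangential components of $\nabla^\mu\sipsi_{\mu\cdot}$ vanish identically, the scalar $\nabla^\mu\sipsi_{\mu N}$ satisfies $\nabla_N(\nabla^\mu\sipsi_{\mu N})=0$; the computation uses Ricci-flatness and the explicit structure of the Schwarzschild Riemann tensor. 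Applying this with $\ipsi=\fara$ handles the first constraint, and with $\ipsi=\mfara$ (via the reformulation in Section~\ref{sec:altmbi}) the second---but the latter only goes through at the nonlinear level, since the equivalence between $\nabla^\mu\mfarad_{\mu\nu}=0$ and $H^{\mu\nu\kappa\lambda}[\fara]\nabla_\mu\fara_{\kappa\lambda}=0$ requires the coefficients to be built from $\fara$ itself, a point the paper explicitly emphasises.
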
 
	
	\begin{remark}
	The issue of local well-posedness here is a nonstandard one. This is primarily due to the fact that, in view of the presence of nontrivial geometry, it is unclear how to write the MBI system as a symmetric hyperbolic system. Indeed, in the work~\cite{Brenier2004}, Brenier shows local existence for the MBI system on Minkowski, defining additional quantities which, coupled with the MBI system itself, complete the system to a $10 \times 10$ symmetric hyperbolic system. It is unclear how to generalize this to our setting. Here, on the other hand, we use an alternative manifestation of the hyperbolic character of the system, namely the fact that it admits a \emph{canonical stress}. This is the nonlinear analogue of the stress--energy--momentum tensor, and it satisfies the crucial property that its divergence is of lower order in terms of numbers of derivatives (see Lemma~\ref{lem:qdivecan}, in the linearized case). This enables energy estimates on which the Gr\"onwall inequality can be used.
	\end{remark}
	
	\begin{remark}
	We also remark that, when performing such energy estimates, it is important that the future boundary terms be positive. This essentially amounts to the fact that the hypersurfaces bounding the region of integration to the future should be spacelike. We already noted, nevertheless, that the causal structure of the MBI theory differs from the underlying one in Schwarzschild, as MBI is a \emph{quasilinear} theory, hence posing the problem of how to choose those hypersurfaces. There are at least two ways of resolving this issue. The first is to exactly solve the eikonal equation associated to MBI and thus determine its causal geometry. The second is to assume that the initial data is small, so that the causal structure of MBI be sufficiently close to the one of Schwarzschild: it will then suffice to take uniformly spacelike hypersurfaces relative to the Schwarzchild metric. We follow this second route, since our end goal is anyway a global stability result for small data, and since it is way more straightforward.
	\end{remark}

	The local existence theorem is the constructive element of our proof. We obtain it by linearizing the MBI system first (in Section~\ref{lintheory}). The choice of foliation $\widetilde \Sigma_{t^*}$ leads naturally to a splitting of the MBI system into a set of $6$ equations of hyperbolic character, which we call the \emph{evolution part}, and a remaining set of $2$ equations which involve derivatives only intrinsic to the foliation $\widetilde \Sigma_{t^*}$, which we call the \emph{MBI constraint equations}. We then proceed to solve the evolution part of the system by means of linearization: our method of proof relies on a-priori estimates (Section~\ref{sub:apriorilocal}), and the Cauchy--Kowalevskaya Theorem to solve the linearized problem. We use approximation by polynomials and then take limits, noting that the energy estimates will provide the necessary compactness (see Section~\ref{sub:locallin}). This shows existence for the evolution part of the system. Finally, we prove in Section~\ref{sub:proofloc} that \emph{only at the nonlinear level}, the constraints are propagated (this is finally done in the proof of Theorem~\ref{thm:local} in Section~\ref{sub:proofloc}). This shows local existence of solutions to the MBI system.
	
	\vspace{10pt}
	
	{\bf (2)} We then proceed to set up the \textbf{bootstrap assumptions} (BA) (this is done in Section~\ref{sec:bootstrap}). For simplicity, in this outline we focus only on the exterior region $\mathcal{S}_e$. We say that the MBI field $\fara$ satisfies the bootstrap assumptions $BA\left(\mathcal{S}_e,A ,j,\varepsilon\right)$ if on $\mathcal{S}_e$ there holds:
	\begin{equation}\label{eq:bootstrapprev}
	\boxed{
		\begin{aligned}
		|\partial^{\leq j} \rho|, |\partial^{\leq j} \sigma|,|\partial^{\leq j} \alpha| \leq A \varepsilon^{\frac 3 4} \min \{\tau^{-1} r^{-3/2}, \tau^{-1/2} r^{-2}\} ,\qquad
		|\partial^{\leq j} \alphabar| \leq A \varepsilon^{\frac 3 4} \tau^{-1} r^{-1}.
		\end{aligned}
	}
	\end{equation}
	\begin{remark}
		We remark that $\alphabar$ enjoys the worst decay rate of all components. Furthermore, we obtain the sharp decay rate in $r$ for $\alphabar, \rho$ and $\sigma$. This is the optimal decay rate which can be obtained from the $r^p$ method of Dafermos--Rodnianski (compare with the linear case in~\cite{blue, masthes}, in which exactly the same decay rates for said components are proved). Even though these decay rates are optimal in $r$, the $u$ decay, in reality, is expected to be much stronger for all components. Moreover, for $\alpha$, we do not get the sharp $r$-decay as, in the linear case, it should be possible to show $\alpha \sim r^{-3}$ for fixed $u$-coordinate, under strong conditions on initial data (compare again with~\cite{blue, masthes}). Here, on the other hand, we only show $\alpha \sim r^{-2}$ for fixed $u$-coordinate. The reason why we are not able to obtain sharp decay rates in $u$ lies intrinsically in the limitations of the method we use. Furthermore, we do not obtain sharp decay rates in $r$ for $\alpha$ as we do not need them to close the argument (even though we expect to be able to obtain sharp $r$-decay for $\alpha$, upon application of the methods in~\cite{masthes}).
		
		Moreover, as we already noted, the $u$-decay is conjectured to be much stronger, at least in the linear case. In fact, in the paper~\cite{Price2004}, the authors give a heuristic argument by which the decay rate on a region of bounded $r$ in Schwarzschild should be $(t^*)^{-3}$. This has yet to be proved in the linear case on Schwarzschild. On Minkowski spacetime, on the other hand, Christodoulou and Klainerman~\cite{CK1990} were able to prove $t^{-\frac 52}$ decay in the region where $r \leq \frac t 2$.
		
		Finally, we note that, in the case of Minkowski spacetime, again in the paper~\cite{CK1990} (Theorem~3.2), Christodoulou and Klainerman were able to prove the following decay rates for the Maxwell field, in the region $r \geq \frac  t 2$:
		$$
		|\alphabar| \leq C r^{-1} u^{- \frac 32}, \qquad |\rho|, |\sigma| \leq C r^{-2} u^{-\frac 12}, \qquad |\alpha| \leq C v^{-\frac 52}.
		$$
		These rates carry over to the MBI system, in the small data regime. Indeed, Speck was able to show the same decay rates, in~\cite{speck1} (see Proposition~10.1), in the context of a nonlinear bootstrap argument.
	\end{remark}
	Under the bootstrap assumptions outlined above, we deduce uniform $L^2$ bounds for higher--order Sobolev norms in Proposition~\ref{prop:energy}.
	\begin{proposition}[Informal version of Proposition~\ref{prop:energy}]
	Let $N \in \N_{\geq 0}$. There exists $C >0$ such that the following holds. Assume that $\fara$ is a solution to the MBI system on $\mathcal{S}_e$.\footnote{This is schematic. In practice, we will assume the solution to exist and satisfy the bootstrap assumptions only in a subset of $\mathcal{S}_e$.} We also assume
	\begin{itemize}
		\item the bootstrap assumptions $BA\left(\mathcal{S}_e, 1 ,N,\varepsilon \right)$,
		\item smallness in $L^2$ of $2N$ derivatives of $\fara_0$: $\norm{\fara_0}_{H^{2N}(\widetilde{\Sigma}_{t_0^*})} \leq \varepsilon^2$.
	\end{itemize}
	Then, for all $t^* \geq t_0^*$ we have, under these conditions,
	\begin{equation}\label{eq:energyprel}
	\norm{\fara}^2_{H^{2N}(\widetilde{\Sigma}_{t^*})} \leq C \varepsilon^2.
	\end{equation}	
	\end{proposition}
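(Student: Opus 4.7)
The plan is to derive a high-order energy identity by commuting the MBI system with up to $2N$ vectorfields drawn from $\mathcal{V}\cup\leo$ and then applying the divergence theorem in $\regio{t_0^*}{t^*}$. For the energy current I would use the quadratic part of the MBI stress--energy tensor $T^{\text{MBI}}_{\mu\nu}[\fara]$, which agrees with the Maxwell stress--energy at leading order and, under the bootstrap assumptions, remains coercive for $\varepsilon$ sufficiently small. This is contracted with a timelike multiplier of the form $\partial_t + \delta N$, where $N$ is a Dafermos--Rodnianski red-shift vectorfield supported near $r=2M$. The resulting flux on each $\widetilde{\Sigma}_{t^*}$ is then comparable to $|\partial^{\leq 2N}\fara|^2$ without horizon degeneration, is outgoing at $\mathcal{I}^+$, and has a favorable sign on $\{r = r_{\text{in}}\}$ thanks to the red-shift multiplier.

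The central computation is the commutation itself. Writing the MBI system in the schematic form $d\fara = 0$, $H^{\mu\nu\kappa\lambda}[\fara]\nabla_\mu \fara_{\kappa\lambda}=0$ with $H[\fara]$ an algebraic function of $\fara$, commutation with $Z^\alpha$ for $|\alpha|\leq 2N$ and substitution into the stress--energy identity produces a bulk error of schematic form
\begin{equation*}
\left|\int_{\regio{t_0^*}{t^*}} \bigl(|\partial^{\leq N}\fara| + |\partial^{\leq N}\fara|^2\bigr)\, |\partial^{\leq 2N}\fara|^2\, dV_g\right|,
\end{equation*}
where I would exploit the structural feature that in every top-order term at least one factor can be placed in $L^\infty$ via the bootstrap, with a single factor at order $\leq 2N$ kept in $L^2$. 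Combining this bound with the initial smallness hypothesis gives an inequality of the form
\begin{equation*}
\norm{\fara}^2_{H^{2N}(\widetilde{\Sigma}_{t^*})} \leq C\varepsilon^4 + CA\varepsilon^{3/4}\int_{t_0^*}^{t^*} g(\tau)\, \norm{\fara}^2_{H^{2N}(\widetilde{\Sigma}_\tau)}\,d\tau,
\end{equation*}
which would be closed by Grönwall together with the choice of $\varepsilon$ small in terms of $A$ and $N$.

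The main obstacle is verifying integrability of the time weight $g(\tau)$ and, relatedly, the structural claim that the worst-decaying bootstrap component $\alphabar$ never appears as the $L^\infty$ factor in the top-order bulk error without being paired with a compensating spatial weight. The naive bootstrap bound $|\alphabar|_{L^\infty}\lesssim A\varepsilon^{3/4}\tau^{-1}r^{-1}$ would give $g(\tau)\sim \tau^{-1}$, which is not integrable. The estimate therefore hinges on a careful null decomposition of the MBI nonlinearity showing that whenever $\alphabar$ enters the high-order error it is contracted with $L$ or an angular direction so as to generate either a $(1-\mu)$ factor or an extra power of $r^{-1}$, effectively reducing the weight to the integrable rate $\tau^{-1/2}r^{-2}$ available for the good components $\rho,\sigma,\alpha$. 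Identifying this null-structure term by term, in direct analogy with Speck's analysis on Minkowski, is what makes the Grönwall closure work and is the technical heart of the argument.
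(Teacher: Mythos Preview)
Your proposal has the right ingredients (canonical stress, Killing multiplier, redshift) but misidentifies where the real difficulty lies and how it is resolved.

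First, your worry about integrability of $g(\tau)$ and the appeal to null structure is misplaced for this particular estimate. The MBI correction $H_\Delta$ is \emph{quadratic} in $\fara$, so after commutation the nonlinear bulk error is at least cubic: at every top-order occurrence there are \emph{two} low-derivative factors available for $L^\infty$, not one. Two applications of the bootstrap give $(t^*)^{-2}$, which is integrable, and no null-structure cancellation is needed here. The paper simply uses this.

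Second, the genuine obstruction is the one you gloss over: commuting with vectorfields from $\mathcal{V}$ (in particular $(1-\mu)^{-1}\lbar$) on Schwarzschild generates \emph{linear} geometric error terms carrying no $\varepsilon$-smallness, so they cannot be absorbed by a Gr\"onwall argument with small prefactor. The paper avoids this by commuting globally only with the Killing set $\mathcal{K}=\{T,\Omega_i\}$ (which commute exactly with the linear part), and commuting with $Y=(1-\mu)^{-1}\lbar$ only in a neighbourhood of the horizon. There the linear commutator terms are not thrown away: they are the redshift and contribute a \emph{positive} bulk $\gtrsim |\dot\fara_{,I,a}|^2$ (Lemma~6.1 and Lemma~7.3), after which an inductive summation over the number of $Y$-derivatives absorbs the cross terms.

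Third, the closure is not a single Gr\"onwall. The $T$-estimate alone is degenerate at $\mathcal{H}^+$, while the redshift estimate produces a bulk term supported away from the horizon that has no smallness. The paper couples them: the degenerate $T$-estimate bounds the away-from-horizon bulk in the redshift inequality by the \emph{initial} degenerate energy plus integrable error, yielding a differential inequality of the form $F'(s)+C_1 F(s)\leq C F(s_0)+C\varepsilon^{3/2}\!\int s^{-2}F$, which integrates directly. Your scheme $\partial_t+\delta N$ with a straight Gr\"onwall does not capture this coupling.
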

	In order to prove this Proposition, we use the canonical stress tensor $\dot Q$, contracted once with the Killing vector field $\partial_t$. The tensor $\dot Q$ is the analogue of the stress--energy--momentum tensor for higher derivatives, and has the property that its covariant divergence does not contain top-order derivative terms.
	For the purposes of this outline, one can think of $\dot Q$ as being exactly the same as its Maxwell counterpart, applied to the field $\dot \fara$. $\dot \fara$ is, schematically, the tensor field $\fara$ commuted $N$ times with unit length partial derivatives. We therefore have, schematically,
	$$
	\dot Q_{\mu\nu} \approx Q^{\text{(MW)}}_{\mu\nu}[\dot \fara] =\dot \fara\indices{_\mu^\alpha}\dot \fara_{\nu\alpha} - \frac 14 g_{\mu\nu} \dot \fara^{\alpha \beta}\dot \fara_{\alpha \beta}.
	$$
	Given a vector field $X$, we then define the current as the contraction: $\dot J^X_\nu := Q^{\text{(MW)}}_{\mu\nu}[\dot \fara] X^\mu$. Recall also that we denote $T := \p_t$. Schematically, one obtains the following estimate (this is essentially the content of Lemma~\ref{lem:degen} in Section~\ref{sub:degen}):
	\begin{equation}\label{eq:schematic1}
	\begin{aligned}
	&\int_{\Sigma_{t^*_1}}g(\dot J^T, n_{\Sigma_{t^*}}) \de \Sigma_{t^*} - \int_{\Sigma_{t^*_0}} g(\dot J^T, n_{\Sigma_{t^*}}) \de \Sigma_{t^*}\leq \int_{t^*_0}^{t^*_1}\Big(\sup_{\Sigma_s}|\p^N \fara|^2 \Big)\Vert \fara \Vert^2_{H^{2N}(\Sigma_s)}  \de s \\
	&\qquad \leq  C\eps^2 \int_{t^*_0}^{t^*_1}s^{-2}\Vert \fara \Vert^2_{H^{2N}(\Sigma_s)}  \de s
	\end{aligned}
	\end{equation}
	Here, we used the bootstrap assumptions on the lower order terms. Note also that $n_{\Sigma_{t^*}}$ is defined as the future-directed Lorentz unit normal to the foliation $\Sigma_{t^*}$. If this estimate had been carried out in Minkowski space, we would have had, schematically (due to the fact that $\p_t$ is uniformly spacelike), that $g(\dot J^T, n_{\Sigma_{t^*}}) \geq C|\dot \fara|^2$. This would have allowed for an immediate application of the Gr\"onwall inequality in the case of Minkowski space.
	
	In the Schwarzschild case, on the other hand, we have degeneracy at the event horizon, since the quantity $g(\dot J^T, n_{\Sigma_{t^*}})$ only controls $(1-\mu)^2 \dot \fara(\p_{r_1}, \p_{\theta^A})$ (we are using $(t^*, r_1)$ regular coordinates). We therefore need to prove an estimate which will have non-degenerate control on the quantity $\fara(\p_{r_1}, \p_{\theta^A})$ near $\mathcal{H}^+$. This is precisely the estimate arising from using the redshift vector field $\redsh$ as a multiplier. $\redsh$ is timelike everywhere and furthermore satisfies good bulk positivity properties near $\mathcal{H}^+$, and coincides with $\p_t$ for $r$ large. We prove such estimate in Section~\ref{sub:redshcomm}. We obtain, schematically:
	\begin{equation}\label{eq:schematic2}
	\begin{aligned}
	&\int_{\Sigma_{t^*_1} \cap \{r \leq r_{\text{out}}\}} g(\dot J^{\redsh}, n_{\Sigma_{t^*}}) \de \Sigma_{t^*} - \int_{\Sigma_{t^*_0}\cap \{r \leq r_{\text{out}}\}} g(\dot J^{\redsh}, n_{\Sigma_{t^*}}) \de \Sigma_{t^*}\\
	&  +  \int_{t^*_0}^{t^*_1} \int_{\Sigma_s \cap \{r \leq r_{\text{out}}\}}  g(\dot J^{\redsh}, n_{\Sigma_{t^*}}) \de \Sigma_{t^*} \de s \\
	&\qquad \leq C \int_{t^*_0}^{t^*_1}\Big(\sup_{\Sigma_s}|\p^{N} \fara|^2 \Big)\Vert \fara \Vert^2_{H^{2N}(\Sigma_s \cap \{r \geq r_{\text{out}}\})}  \de s + C \int_{t^*_0}^{t^*_1} \int_{\Sigma_s \cap \{r \geq r_{\text{out}}\}}  g(\dot J^{T}, n_{\Sigma_{t^*}}) \de \Sigma_{t^*} \de s,	
	\end{aligned}
	\end{equation}
	for some $r_{\text{out}} > 2M$. Note that now $g(\dot J^{\redsh}, n_{\Sigma_{t^*}})$ controls all components of the field $\dot \fara$, with no degeneracy.  Summing the term 
	$$\int_{t^*_0}^{t^*_1} \int_{\Sigma_s \cap \{r \geq r_{\text{out}}\}}  g(\dot J^{\redsh}, n_{\Sigma_{t^*}}) \de \Sigma_{t^*} \de s$$
	 on both sides of inequality~\eqref{eq:schematic2}, we obtain a differential inequality (see \eqref{togron}) which schematically looks like:
	\begin{equation}\label{eq:schematic3}
	\begin{aligned}
	&\int_{\Sigma_{t^*_1}}g(\dot J^{\redsh}, n_{\Sigma_{t^*}}) \de \Sigma_{t^*} - \int_{\Sigma_{t^*_0}}g(\dot J^{\redsh}, n_{\Sigma_{t^*}}) \de \Sigma_{t^*}+  \int_{t^*_0}^{t^*_1} \int_{\Sigma_s}  g(\dot J^{\redsh}, n_{\Sigma_{t^*}}) \de \Sigma_{t^*} \de s\\
	&\qquad \leq C \eps^2\int_{t^*_0}^{t^*_1}\frac 1 {s^2}\Vert \fara \Vert^2_{H^{2N}(\Sigma_s \cap \{r \geq r_{\text{out}}\})}  \de s + \int_{t^*_0}^{t^*_1} \int_{\Sigma_s \cap \{ r \geq r_{\text{out}}\}}  g(\dot J^{T}, n_{\Sigma_{t^*}}) \de \Sigma_{t^*} \de s \\
	&\qquad \leq C\eps^2 \int_{t^*_0}^{t^*_1}\frac 1 {s^2}\Vert \fara \Vert^2_{H^{2N}(\Sigma_s \cap \{r \geq r_{\text{out}}\})}  \de s + (t^*_1-t^*_0)\underbrace{\sup_{s \in [t_0^*, t_1^*]} \int_{\Sigma_s \cap \{ r \geq r_{\text{out}}\}}  g(\dot J^{T}, n_{\Sigma_{t^*}}) \de \Sigma_{t^*} \de s}_{(i)} ,
	\end{aligned}
	\end{equation}
	We note that we can control term $(i)$ by the estimate~\eqref{eq:schematic1}. We then obtain a differential inequality which, in virtue of the presence of a positive bulk term on the LHS (the third term in the first line of display~\eqref{eq:schematic3}), can be integrated to obtain the required boundedness. This trick is familiar from work on the wave equation on Schwarzschild, and is inspired by the treatment in the lecture notes~\cite{lecturenotes},~Section 3.3.3. The trick is carried out in full detail in Section~\ref{sub:concludingl2}, specifically in the proof of Proposition~\ref{prop:energy}.
	\begin{remark}
	As a technical remark, recall that we are not allowed to choose the event horizon as a bounding hypersurface for our estimates, due to the causal structure of MBI being different from that of Schwarzschild. Therefore, here we need to integrate ``slightly inside'' the black hole region, i.~e.~up to the surface $r = r_{\text{in}}$, with $r_{\text{in}} < 2M$. This property is used crucially in the proof of Lemma~\ref{lem:qtot}.
	\end{remark}
	Having established uniform $L^2$ boundedness, our next task will be to recover the $L^\infty$ estimates at the level of low derivatives from the $L^2$ bounds.
	
	\vspace{10pt}
	
	{\bf (3)} The \textbf{key observation} is then that, if $\fara$ satisfies the MBI system, $\rho$ and $\sigma$ satisfy the \textbf{nonlinear \fackip equations}:
	\begin{equation}\label{eq:spinref}
	\boxed{
		- r^{-2} L \lbar (r^2 \rho) +(1-\mu) \slashed{\Delta}  \rho = \text{\bf NL}_1 + \text{\bf NL}_2, \qquad	- r^{-2} L \lbar (r^2 \sigma) + (1-\mu)\slashed{\Delta}  \sigma = \text{\bf NL}_3.}
	\end{equation}
	The right hand sides of these equations are nonlinear in $\fara$ and its derivatives up to order $2$. For the exact form of such nonlinear terms see Section~\ref{sec:spinredu}, in particular Equations~\eqref{eq1.3} and~\eqref{eq:wavesigma}. 
	
	\begin{remark}
		Let us notice that the corresponding scalar \fackip Equations for the Maxwell linear field on Schwarzschild hold equating the nonlinear terms to zero  $\text{\bf NL}_1 + \text{\bf NL}_2= 0$ and $\text{\bf NL}_3 = 0$ in the previous display~\eqref{eq:spinref}.
	\end{remark}
	
	\begin{remark} These \fackip equations exhibit stationary solutions, as shown in Proposition~\ref{prop:statsols} in the Appendix. This is clearly a potential obstruction to decay. To deal with this issue, we need to control the spherical averages of $\sigma$ and $\rho$. We prove the following statement. Under the assumption that, on initial data, the charge (spherical average of $\sigma$ and $\rho$ on spheres of constant $r$-coordinate) vanishes at spacelike infinity, and under the bootstrap assumptions, we show that the spherical average of $\sigma$ is always zero, and the spherical average of $\rho$ decays sufficiently fast along the evolution. The proof follows directly from the conservation laws provided by the MBI system integrated on spheres of constant $r$-coordinate (see Equation~\eqref{sphavg}). Finally, it is not difficult to show (at least at the linear level) that solutions of Equations~\eqref{eq:spinref}, once projected away from the $l = 0$ spherical harmonic mode, decay with quantitative decay rates to the trivial solution. This reasoning (more precisely, its suitable adaptation to the MBI system), together with the decay of spherical averages of $\sigma$ and $\rho$, will prove that $\sigma$ and $\rho$ are in fact decaying in time to the trivial solution. In particular, this will exclude the presence of nontrivial stationary solutions.
	\end{remark}
	
	We would now like to use the uniform estimates~\eqref{eq:energyprel} in order to deduce that there exists a constant $C$ and an integer $j \in \N_{\geq 0}$ independent of the data, such that the bootstrap assumptions $BA\left(\mathcal{S}_e,1 ,j,\varepsilon\right)$ imply the statement $BA\left(\mathcal{S}_e,C ,j,\varepsilon^2 \right)$, with $C$ independent of $\eps$.
	This essentially amounts to showing improved decay for the null components of the field $\fara$. 
	
	Let us first focus on how we prove decay for the middle components $\sigma$ and $\rho$. For simplicity of exposition, as the equations~\eqref{eq:spinref} have a similar structure in this outline we focus on the component $\rho$, and we rename $Z := r^2 \rho$. We have the equation:
	\begin{equation}\label{eq:prez}
	L \lbar Z -(1-\mu) \slashed{\Delta} Z = R_\rho,
	\end{equation}
	where $R_\rho :=- r^2( \text{\bf NL}_1 + \text{\bf NL}_2)$.
	
	We wish to follow the strategy by Dafermos and Rodnianski in~\cite{newmihalis} to prove decay for $Z$. The $r^p$ method requires three ingredients:
	\begin{enumerate}
	\item[{\bf(a)}] a Morawetz (integrated local energy decay) estimate,
	\item[{\bf(b)}] the $r^p$ multiplier estimates, with $p = 0, 1, 2$, and
	\item[{\bf(c)}] an energy inequality.
	\end{enumerate}
	We now describe points {\bf (a), (b), (c)} in some detail.
	\vspace{10pt}
	
	{\bf (a)} Let us first focus on the {\bf Morawetz estimate}, which is obtained in Section~\ref{sec:premora}, Section~\ref{sec:nonlstruct}, and Section~\ref{sec:morawetz}. This inequality is obtained essentially multiplying Equation~\eqref{eq:prez} by $f(r)\p_{r^*} Z$, with a suitable choice of a smooth, radial function $f(r)$, and adding a lower-order correction. The future (and horizon) boundary terms are then estimated using a $\p_t$ multiplier estimate on the same Equation~\eqref{eq:prez}.
	
	The main difficulty here consists in estimating the nonlinear errors which appear on the right hand side of our estimates. The analysis of such error terms is carried out in Section~\ref{sec:nonlstruct}. Schematically, the worst term (in terms of decay) in the expression for $R_\rho$ behaves like:
	$$
	|R_\rho| \sim r |\dot \alpha| |\alphabar|^2 \sim r^{-1} \tau^{-2} |\dot \alpha|.
	$$
	A dotted quantity here signifies a quantity which is differentiated a ``top order'' number of times, and thereby cannot be estimated using the $L^\infty$ estimates arising from the bootstrap assumptions. Recall furthermore that $\alphabar$ is the worst decaying of all terms.
	\begin{remark}
	The fact that a cubic term in $\alphabar$ never appears follows from the fact that the MBI system satisfies a version of the ``null condition'', i.~e.~the nonlinearities never displays three worst-decaying terms ($\alphabar$) multiplied together.
	\end{remark}
	Thus, when estimating the error terms arising in the Morawetz estimate, we have that the worst term is
	\begin{equation}\label{eq:schematicmor}
	\int_{\regio{t_0^*}{t_1^*}} r^2 |R_\rho|^2 \de u \de v \desphere \sim \int_{\regio{t_0^*}{t_1^*}} \tau^{-2} |\dot \alpha|^2 \de u \de v \desphere \sim \int_{t_0^*}^{t_1^*} s^{-2} \int_{\Sigma_{s}} |\dot \alpha|^2 \de \Sigma_{t^*} \de s \leq \varepsilon^2.
	\end{equation}
	Here, we used the uniform energy bound on ``high'' derivatives from point $(2)$ of this outline, as well as the bootstrap assumptions, and the fact that $\tau r \geq C t^*$, for some constant $C > 0$. We therefore see how it is possible to bound the nonlinear error terms arising from the Morawetz estimate.
	\begin{remark}
	As a technical remark, we also need to prove a redshift estimate in order to control derivatives transversal to the future event horizon $\mathcal{H}^+$ (see Proposition~\ref{prop:redsh}). Furthermore, this has to be carried out in a commutation setting, since we need higher order estimates (see Section~\ref{sec:commutrw}).
	\end{remark}
	\vspace{15pt}
	
	{\bf (b)} We now outline how we deal with the $\boldsymbol{r^p}$ {\bf multiplier estimates}. The $r^p$ estimates for $Z$ (and the similar ones for $\sigma$), with $p=1$,  are carried out in Section~\ref{sec:pweighted1}. Note that the estimates in said section give control only on fluxes through the surfaces $\Sigma_{t^*}$.\footnote{The reader familiar with the $r^p$-multiplier method may be confused at this stage, as the usual application of said method requires the foliation to be outgoing null ``close to future null infinity''. Indeed, the $r^p$ estimates of Section~\ref{sec:pweighted1} are preliminary and their purpose is uniquely to give additional control on the nonlinear error terms, which in turn allows us to set $p=2$ in the $r^p$ estimates of Section~\ref{sec:previsited}. It is in this section that we will use a foliation which is outgoing null ``close to future null infinity''.} We start by multiplying Equation~\eqref{eq:prez} by $r^p L Z$ and integrating by parts. The LHS of the resulting estimate then, schematically, will give us control over the following spacetime integral:
	$$
	\int_{\regio{t_0^*}{t_1^*}} r^{p-1} |L Z|^2 \de u \de v \desphere
	$$
	(recall that $L = \p_v$ in the null $(u,v)$ coordinates). On the other hand, the RHS will contain the error terms
	$$
	\int_{\regio{t_0^*}{t_1^*}} r^p |R_\rho| |LZ| \de u \de v \desphere.
	$$
	Hence, using the Cauchy--Schwarz inequality, we see that, in order to close the $r^p$ estimates, it suffices to control
	$$
	\int_{\regio{t_0^*}{t_1^*}} r^{p+1} |R_\rho|^2 \de u \de v \desphere.
	$$ 
	If $p =1$ (see Equation~\eqref{eq:schematicmor}), we obtain the same error term as in the Morawetz estimate, which we are able to bound. On the other hand, if $p =2$, the resulting expression barely fails to be integrable, as the resulting power of $t^*$ is effectively $-1$.\footnote{We have, schematically, that $$ \int_{\regio{t_0^*}{t_1^*}} r^{3} |\alphabar|^4 |\dot \alpha|^2 \de u \de v \desphere \leq C  \int_{\regio{t_0^*}{t_1^*}} r \tau^{-4}|\dot \alpha|^2 \de u \de v \desphere \leq C \int_{\regio{t_0^*}{t_1^*}} (t^*)^{-1} |\dot \alpha|^2 \de \Sigma_{t^*} \de t^*,$$ which is not integrable using the uniform $L^2$ estimates arising from {\bf Step (2)}.} Hence, we have to improve our way of estimating the term $|R_\rho| \sim r |\dot \alpha| |\alphabar|^2$. As we previously noted, even though we cannot put the $\dot \alpha$ term in $L^\infty$, we can improve its $L^2$ estimates in terms of $r$-weights, as we expect this term to have very good $r$-decay. This is carried out in Section~\ref{sec:l2improved}, in the following way.
	
	We note that $\alpha$ satisfies the following transport equation:
	\begin{equation}\label{eq:alphatrprel}
	\boxed{
		\snabla_\lbar(r \alpha_A) = r(1-\mu)(\snabla_A \rho + \svol_{AB}\snabla^B \sigma ) + r(1-\mu)(\hdelta)_{\mu A \kappa \lambda} \nabla^{\mu} \fara^{\kappa\lambda}.
	}
	\end{equation}
	For the precise definition of each term in this equation, see Section~\ref{sec:altmbi}, and Section~\ref{sec:not:geo}. Here, $\snabla_\lbar$ is the covariant derivative in direction $\lbar$ projected on spheres of constant $r$-coordinate, whereas $\snabla_A$ is the covariant derivative in direction $\p_{\theta^A}$, again projected on spheres of constant $r$-coordinate. The term 
	$$
	r(1-\mu)(\hdelta)_{\mu A \kappa \lambda} \nabla^{\mu} \fara^{\kappa\lambda},
	$$ finally, is a nonlinear term which contains all components of $\fara$.
	
	By multiplying an appropriately commuted version of equation~\eqref{eq:alphatrprel} by $\dot \alpha$, and integrating by parts, we schematically obtain the following estimate, valid for all $s \geq t_0^*$ (recall that $\dot \alpha$ is differentiated by the top number of derivatives):
	\begin{equation}
	\label{eq:impralpha}
	\int_{\Sigma_{s}}  r|\dot \alpha|^2 \de \Sigma_{t^*} \leq C \eps^2.
	\end{equation}
\begin{remark}
To prove~\eqref{eq:impralpha}, we expect to be able to control the nonlinear error terms, since such estimate is schematically like the $r^p$ estimate with $p=1$, which can be controlled by the above. This $p=1$ estimate for the $\alpha$ component is the content of Proposition~\ref{prop:impalpha}.
\end{remark}
	We will also need the corresponding $L^2$ improvement for the middle components $\sigma$ and $\rho$:
	\begin{equation}\label{eq:imprrs}
	\int_{\Sigma_{s}}  r(|\dot \sigma |^2 +|\dot \rho |^2 )\de \Sigma_{t^*} \leq C \eps^2.
	\end{equation}
	This is essentially obtained by the $r^p$ estimate, with $p=1$, on Equation~\eqref{eq:prez} (and the corresponding equation for $\sigma$). Said $r^p$ estimate (with the correct spacelike boundary terms) was the content of Section~\ref{sec:pweighted1}, and the resulting bounds for $\sigma$ and $\rho$ are recorded in Proposition~\ref{prop:imprhosigma}. To conclude, armed with the improved $L^2$ estimates~\eqref{eq:impralpha},~\eqref{eq:imprrs} we can now bound the integral (corresponding to the $r^p$ estimate with $p =2$):
	$$
	\int_{\regio{t_0^*}{t_1^*}} r^{3} |R_\rho|^2 \de u \de v \desphere \leq C \eps^2.
	$$
	This bound is achieved in Section~\ref{sec:improvedrs}. The reasoning we just outlined shows that we are allowed to choose $p=1,2$ in the $r^p$ estimates for $Z$: we collect those estimates in Proposition~\ref{prop:previsited}, which is the main proposition in Section~\ref{sec:previsited}, where the $r^p$ estimates are recorded, with $p = 2, 1$. These estimates are suitable to be used in the $r^p$ method as they give control of energies on a foliation which is outgoing null ``close to future null infinity'', as opposed to the estimates of Section~\ref{sec:pweighted1}, which are on the foliation $\Sigma_{t^*}$.
\vspace{15pt}

	{\bf (c)} The last ingredient of the $r^p$ multiplier method is the {\bf energy inequality}. This is obtained by multiplying both sides of Equation~\eqref{eq:prez} by $\p_t Z$, and integrating by parts. We require the future and past boundary terms be strictly spacelike for $r \leq R$, and outgoing null for $r > R$, where $R$ is some big radius which is fixed at the beginning of our proof. We denote such fluxes by $\fgiustor$. The main challenge here is to control the error terms arising in the bulk of the resulting estimates. This is the content of Proposition~\ref{prop:encfgiusto}, whose statement not only contains a form of energy conservation for fluxes of the form $\fgiusto$, but also contains the associated Morawetz estimate having only terms of the form $\fgiusto$ on the RHS.
	\begin{remark}
	As a technical remark, let us note that in order to close said energy inequality, we also need to prove a Morawetz estimate whose boundary terms are of the form $\fgiustor$. This is carried out in Proposition~\ref{prop:morafgiusto}.
\end{remark}\vspace{15pt}

After having concluded the $r^p$ estimates, we run through standard procedure to get decay of the fluxes $\fgiustor$ in Section~\ref{sec:pdecay}.

By Sobolev embedding, we can immediately deduce, using the decay of the $\fgiustor$ fluxes, pointwise decay for $\rho$ (and the corresponding decay for $\sigma$): this is the content of Section~\ref{sec:sobrs}. We are therefore only left with proving $L^\infty$ decay for $\alpha$ and $\alphabar$. 

In order to achieve decay for $\alpha$, in the spirit of~\cite{masthes}, we will use the transport equation satisfied by $\alpha$ itself (Equation~\eqref{eq:alphatrprel}). Note, however, that the reasoning in this work is rough, as we \emph{only} use the $L^\infty$ estimates on $\rho$ and $\sigma$ we already obtained, in conjunction with the uniform boundedness of $\alpha$ and $\alphabar$ resulting from the uniform energy estimates from point {\bf (1)} and Sobolev embedding. This reasoning is wasteful in terms of loss of derivatives (we could instead perform $L^2$-based estimates), nevertheless it spares us some technical difficulties. We integrate Equation~\eqref{eq:alphatrprel} in the $\lbar$ direction, and this gives decay for $\alpha$ (see Proposition~\ref{prop:linfalpha}).

Finally, we have to obtain pointwise estimates for $\alphabar$. These are achieved similarly (in Proposition~\ref{prop:decayalphabar}) by considering the transport equation satisfied by $\alphabar$ itself:
\begin{equation}\label{eq:alphabartrapre}
	\boxed{
		\snabla_L(r \alphabar_A) = r(1-\mu)(- \snabla_A \rho + \svol_{AB}\snabla^B \sigma ) +r(1-\mu)(\hdelta)_{\mu A \kappa \lambda} \nabla^{\mu} \fara^{\kappa\lambda}.
	}
\end{equation}
Upon integration of such equation in the $L$ direction, and using the $L^\infty$ estimates obtained so far, we are able to show the $L^\infty$ estimates for $\alphabar$. 

We finally close the bootstrap argument in Section~\ref{sec:close}.

\begin{remark}
We also note that, in addition to these structural identities involving $\alpha$ and $\alphabar$ (Equations~\eqref{eq:alphatrprel} and~\eqref{eq:alphabartrapre}), we will need, in the course of our argument, the other transport equations satisfied by $\rho$ and $\sigma$, which follow from writing the MBI system in null form. Due to their importance, we include them here:
\begin{align}\label{eq:transprho1}
&\boxed{
-L(r^2 \rho) + r^2 \dive \alpha = - r^2{H_{\Delta}} \indices{^\mu _L ^\kappa ^\lambda} \nabla_\mu \fara_{\kappa \lambda}.
}\\ \label{eq:transprho2}
&\boxed{
\hat \lbar(r^2 \rho) + r^2 (1-\mu)^{-1}\dive \alphabar = - r^2{H_{_\Delta}} \indices{^\mu _{\hat \lbar} ^\kappa ^\lambda} \nabla_\mu \fara_{\kappa \lambda}.}
\end{align}

Also, we have
\begin{align}\label{eq:transpsigma1}
&\boxed{
L(r^2 \sigma)+r^2 \curl \alpha = 0.
}\\ \label{eq:transpsigma2}
&\boxed{
- \hat \lbar(r^2 \sigma) + (1-\mu)^{-1} r^2 \curl \alphabar= 0.}
\end{align}
Equations~\eqref{eq:alphatrprel} and~\eqref{eq:alphabartrapre}, together with~\eqref{eq:transprho1}--\eqref{eq:transpsigma2} correspond to the full MBI system written in null form. Furthermore, those are 8 scalar equation in 6 scalar unknowns, and, for instance, the choice~\eqref{eq:alphatrprel} and~\eqref{eq:alphabartrapre} together with~~\eqref{eq:transprho1} and~\eqref{eq:transpsigma1} leads to a closed system (the other two equations just need to be satisfied initially).
\end{remark}

\subsection{Outline of the paper}

We begin by introducing all the necessary notations, definitions and preliminary lemmas in Section~\ref{sec:defs}. We proceed to state the main result of this work, Theorem~\ref{thm:gwp} of Section~\ref{sec:statement}. In Section~\ref{sec:local}, we establish the local existence statement for the MBI system. In Section~\ref{sec:bootstrap}, we will formulate the $L^\infty$ bootstrap assumptions. Section~\ref{sec:commut} is dedicated to deriving commuted versions of the system at the tensorial level, for all the components simultaneously. In Section~\ref{sec:canstress} we then derive the crucial properties enjoyed by the canonical stress. We then proceed to deduce $L^2$ estimates from the $L^\infty$ assumptions in Section~\ref{sec:l2fromlinf}. We deal with the issue of spherical averages of $\rho$ and $\sigma$ in Section~\ref{sec:charge}. We consider the nonlinear \emph{\fackip Equations} satisfied by the middle components $\sigma$ and $\rho$ in Section~\ref{sec:spinredu}, and commute derivatives with them in Section~\ref{sec:commutrw} in order to obtain higher-order estimates. We then prove that the middle components, once commuted with angular operators, ($\Omega \sigma$ and $\Omega \rho$) satisfy a Morawetz estimate in Section~\ref{sec:premora}, with no bounds on the nonlinear right hand sides yet. We proceed to analyze the structure of the nonlinearities arising in the \fackip Equations in Section~\ref{sec:nonlstruct}. We use the estimates in Section~\ref{sec:nonlstruct} to close the Morawetz estimates in Section~\ref{sec:morawetz}. We then apply the $r^p$-method of Dafermos and Rodnianski (\cite{newmihalis}) in Section~\ref{sec:pweighted1}, where we can only choose $p =1$. This allows us to obtain (in Section~\ref{sec:l2improved}) a first improvement of the $r$-weights on the spacelike fluxes of all the components. In turn, these improved estimates let us infer better control on the nonlinearities of the \fackip Equations (Section~\ref{sec:improvedrs}). With these improved estimates, a second application of the $r^p$-method, now with $p = 2$ (Sections~\ref{sec:previsited}, \ref{sub:moragiustoests}, and~\ref{sec:pdecay}) lets us deduce the correct decay of the fluxes in order to close the $L^\infty$ decay. Finally, in Sections~\ref{sec:sobrs},~\ref{sec:soba}, and~\ref{sec:sobb} we recover the bootstrap assumptions with better constants. We close the argument in Section~\ref{sec:close}.

\subsection{Acknowledgements}
I would like to thank Prof.~Jonathan Luk for his patience and guidance, for suggesting the problem to me, and for inviting me to Stanford University to finish the project. I would also like to thank my advisor, Prof.~Mihalis Dafermos, for his patience, his encouragement and his comments on preliminary versions of the manuscript. Moreover, I thank Jan Sbierski for the suggestion to look at Fritz John's approach to local existence. Moreover, I thank John Anderson and Yakov Shlapentokh-Rothman for very valuable discussions.

\section{Definitions and preliminary facts}\label{sec:defs}
In this Section, we set up the framework of our study. We formulate the MBI Equations first.
\subsection{The MBI system: Lagrangian formulation and equations} \label{MBIdef}
In order to proceed, let us introduce the system we are analyzing in a more formal fashion. Let $\fara$ be an antisymmetric two-form, the field tensor. We formulate the MBI system from the Lagrangian point of view. Let us define the Lagrangian density for the MBI model as
\begin{equation*}
{}^*\mathscr{L}_{\text{MBI}} := 1- \ellmbi,
\end{equation*}
where 
\begin{equation*}
\ellmbi^2 = 1 + \lun -\ldu^2,
\end{equation*}
and the invariants $\lun$ and $\ldu$ are defined as follows:
\begin{align*}
\lun = \frac 1 2 \fara_{\mu\nu} \fara^{\mu\nu}, \qquad
\ldu = \frac 1 4 \fara_{\mu\nu} \farad^{\mu\nu}.
\end{align*}
\begin{remark}
Unless otherwise specified, $\ellmbi$, $\lun$ and $\ldu$ will all depend on $\fara$. If they depend on another tensor, we will indicate this with square brackets. For instance, if $\mathcal{B}$ is a two-form: 
$$
\lun[\mathcal{B}] := \frac 1 2 \mathcal{B}_{\mu\nu}\mathcal{B}^{\mu\nu}.
$$
\end{remark}
We then postulate that $\fara$ is closed:
\begin{equation}\label{mbiuno}
d \fara = 0 \iff \nabla_{[\mu}\fara_{\nu\kappa]} = 0.
\end{equation}
Moreover, we define a tensor $\mathcal{M}$ in the following way:
\begin{equation*}
{}^*\mathcal{M}_{\mu\nu} := \frac{ \partial {}^*\mathscr{L}_{\text{MBI}}} {\partial \fara_{\mu\nu}}.
\end{equation*}
The Euler--Lagrange equations for the MBI theory can then be formulated requiring that $\mathcal{M}$ be closed:
\begin{equation}\label{mbidue}
d \mathcal{M} = 0.
\end{equation}
Together, (\ref{mbiuno}) and (\ref{mbidue}) are the equations of motion for the theory arising from the Lagrangian density ${}^*\mathscr{L}_{\text{MBI}}$.
\begin{remark}
	By Taylor-expanding the square root, we obtain, omitting higher order terms,
	\begin{equation*}
	1- \ellmbi \approx 1 - 1 - \lun / 2  = -\fara^{\mu\nu} \fara_{\mu\nu}.
	\end{equation*}
	The latter is (up to a sign) the Lagrangian corresponding to the linear Maxwell theory.
\end{remark}

\subsection{The special structure of the MBI theory}\label{sec:specstru}

As already briefly noted, the Lagrangian density ${}^*\mathscr{L}_{\text{MBI}}$ can be uniquely determined by imposing some natural requirements on the theory. We specify more precisely what makes the MBI theory special.

In~\cite{orangebook} it is shown that the only Lorentz-invariant and gauge-invariant Lagrangians for a nonlinear theory of electromagnetism without sources on a general Lorentzian spacetime are of the following form:
\begin{equation}\label{eq:formlag}
\mathscr{L}_{\text{NLE}} = L_{\text{NLE}}(\lun, \ldu) \varepsilon,
\end{equation}
where $L_{\text{NLE}}(x,y)$ is a smooth function of two real variables, $\lun$ and $\ldu$ are the invariants as previously defined, and $\varepsilon$ is the standard volume form of the considered spacetime.

In the works by Boillat~\cite{boillat1970} and Plebanski~\cite{plebanski1970}, the main observation is that the MBI theory is the only nonlinear electromagnetic theory whose Lagragian density is of the form (\ref{eq:formlag}), and which furthermore satisfies the following properties: \begin{itemize}
\item its linearization around the trivial solution is the linear Maxwell theory,
\item solutions corresponding to point charges have finite self-energy (the spatial $L^2$ norm of the field is bounded), and
\item {\bf it does not give rise to birefringence}.
\end{itemize}

We elaborate a little on the last property, starting from the paper of Boillat~\cite{boillat1970}. Roughly speaking, birefringence, from a physical point of view, means that there are multiple directions of light propagation basing on the polarization of the light wave itself. From a mathematical point of view, as proved by Boillat in~\cite{boillat1970} (see also Section 2 of~\cite{stringy}), birefringence means that cone of directions of light propagation at each point comprises only one conical sheet (3-dimensional). In~\cite{boillat1970}, this requirement is imposed by looking at the local propagation of weak discontinuities. Otherwise, it can be formulated by saying that the characteristic surface of the considered theory, at each point in spacetime, is made only of one conical sheet.

Here is the formal mathematical description of the absence of birefringence. Consider a nonlinear electromagnetic theory with field tensor $\fara$ on a 4-dimensional Lorentzian spacetime $(\mathcal{M},g)$. Let the equations of our field theory be
\begin{equation}\label{eq:hform1}
d \fara = 0, \qquad H^{\mu\nu\kappa\lambda}[\fara] \ \nabla_\mu \fara_{\kappa\lambda} = 0.
\end{equation}
Here, $\nabla$ is the Levi-Civita connection of $(\mathcal{M}, g)$, and $H^{\mu\nu\kappa\lambda}[\fara]$ is a tensor depending on the components of $\fara$ as in \eqref{eq:hform1}. Define the \emph{characteristic set at point $p$} of our theory to be
\begin{equation}
C^*_p := \{\xi \neq 0 \in T^*_pM, \text{ such that } N(\chi(\xi))/\text{span}(\xi) \neq 0\},
\end{equation}
with $\chi(\xi):= H^{\mu \kappa \nu \lambda} \xi_\kappa \xi_\lambda$, and $N(\chi)$ being the null space of $\chi$ at $p$. This set describes the directions of light propagation. Then, \emph{birefringence} can be reformulated saying that the conical set $C^*_p$ is composed of only one (3-dimensional) conic sheet.

\begin{remark}
	Notice that the directions of light propagation are connected to positivity properties of the energy--momentum--stress tensor. The absence of birefringence lets us define a metric, the so-called \emph{Boillat metric}, whose null directions are precisely the directions of light propagation in the Born--Infeld theory, and which furthermore gives good positivity properties in order to obtain a-priori estimates.
\end{remark}
It now holds that the MBI theory is, in fact, not birefringent. For a more detailed analysis, see for example Section 5 in~\cite{BialynickiBirula:1984tx}, or the introduction of~\cite{kie1}.

We also remark that the MBI theory lies in a class of nonlinear electromagnetic field theories which have been considered by Tahvildar-Zadeh in the paper~\cite{TahvildarZadeh2011}. This class is comprised of those theories which, when coupled to the Einstein equations, give rise to spherically symmetric electrovacuum solutions (analogous to the Reissner--Nordstr\"om solutions to linear Einstein--Maxwell) which display the mildest singularity possible, i.~e.~a conical singularity on the time axis. This feature makes these solutions particularly suited for a possible well-defined description of point charges coupled to gravity, as one hopes that the mild nature of the singularity allows for a possible well-defined formulation of the law of motion of point charges in curved spacetime. Compare with the works by Kiessling~\cite{kie1, kie2}, in which such a law of motion is formulated in the context of flat spacetime.

We proceed to introduce the versions of the MBI system which will be useful for our calculations.

\subsection{Formulation of the MBI system (I)}\label{sec:altmbi}
We notice that the  MBI Equations \eqref{mbiuno} and \eqref{mbidue} can be written in the form
\begin{equation*}
\left\{
\begin{array}{ll}
\nabla^\mu {}^\star \! \mathcal{M}_{\mu\nu} = 0, \\
\nabla^\mu \farad_{\mu\nu} = 0.
\end{array}
\right.
\end{equation*}
Here, we have defined
\begin{equation}\label{eq:mfaradef}
\mfarad_{\mu\nu} := -\ellmbi^{-1}(\fara_{\mu\nu}- \ldu \farad_{\mu\nu}),\qquad \ellmbi^2:= 1+\lun -\ldu^2.
\end{equation}

\subsection{Formulation of the MBI system (II)}\label{sec:mbidef}
In this section, we formulate a second version of the MBI system, which we will use in some calculations. The invariants for the MBI system, $\lun$ and $\ldu$, are as before. The form of the MBI system is:
\begin{equation}\label{MBI}\boxed{
	\left\{
	\begin{array}{c}
	\nabla_{[\mu}\fara_{\nu \lambda]} = 0,\\
	H^{\mu\nu\kappa\lambda}[\fara] \ \nabla_\mu \fara_{\kappa\lambda} = 0.
	\end{array}
	\right.}
\end{equation}
Here,
\begin{align}\label{hform1}
H^{\mu\nu\kappa\lambda}[\fara] := \frac1 2 \left[g^{\mu\kappa} g^{\nu\lambda}-g^{\mu\lambda}g^{\nu\kappa}\right] + H_{\Delta}^{\mu\nu\kappa\lambda}[\fara]
\end{align}
with
\begin{align}
\hdelta^{\mu\nu\kappa\lambda}[\fara] := \frac 1 2 \left\{-\ellmbi^{-2}\fara^{\mu\nu}\fara^{\kappa\lambda} + \ldu \ellmbi^{-2}\left(\fara^{\mu\nu}\farad^{\kappa\lambda}+\farad^{\mu\nu}\fara^{\kappa\lambda}\right)-\left(1+\ldu^2\ellmbi^{-2}\right)\farad^{\mu\nu}\farad^{\kappa\lambda}\right\}\label{hform2}
\end{align}
We can rewrite the MBI system also highlighting the ``linear part'', which corresponds to the linear Maxwell system:
\begin{equation}\label{eq:MBI2}
\left\{
\begin{array}{c}
\nabla_{[\mu}\fara_{\nu \lambda]} = 0,\\
\nabla_\kappa \fara\indices{^\kappa^\nu} +\hdelta^{\mu\nu\kappa\lambda}[\fara] \ \nabla_\mu \fara_{\kappa\lambda} = 0.
\end{array}
\right.
\end{equation}
\begin{remark}
 Unless otherwise specified, we will write the tensors
 $$
 H\indices{^\mu^\nu^\kappa^\lambda} = H\indices{^\mu^\nu^\kappa^\lambda}[\fara], \qquad \hdelta\indices{^\mu^\nu^\kappa^\lambda} = \hdelta\indices{^\mu^\nu^\kappa^\lambda}[\fara].
 $$ If this is not so, we will specify this by the use of square brackets, cf.~Equation~\eqref{hformg}.
\end{remark}

\subsection{The Schwarzschild spacetime} \label{schwasec}

\begin{definition}\label{def:schwreg}
	Fix a number $M > 0$. The \emph{Schwarzschild spacetime} $(\mathcal{S}, g)$ of mass $M$ is the $4$-dimensional Lorentzian manifold which, seen as a set, is 
	\begin{equation*}
	\mathcal{S} := (t^*,r_1,\theta, \varphi) \in (-\infty, + \infty) \times (0, + \infty) \times (0,\pi) \times (0, 2\pi),
	\end{equation*}
	(note that this coordinate chart does not include the poles).
	In these coordinates, the metric is given by
	\begin{equation*}
	\begin{aligned}
	&g := -\left(1-\frac{2M}{r_1}\right) \de t^* \otimes \de t^* +  \frac{2M}{r_1} \de t^* \otimes \de r_1+  \frac{2M}{r_1}  \de r_1 \otimes \de t^*+\left(1+\frac{2M}{r_1}\right) \de r_1 \otimes \de r_1\\
	& \qquad  + r_1^2 \der \theta \, \otimes \, \der \theta + r_1^2 \sin^2 \theta \der \phi \, \otimes \, \der \phi.
	\end{aligned}
	\end{equation*}
	Furthermore, we define the \emph{exterior Schwarzschild} region, as a set, by 
	\begin{equation*}
		\mathcal{S}_e := (t^*,r_1,\theta, \varphi) \in (-\infty, + \infty) \times (2M, + \infty) \times (0,\pi) \times (0, 2\pi),
	\end{equation*}
	with the same metric $g$ as before.
	
	We further define
	\begin{equation*}
	\mathcal{H}^+ := \{(t^*, r_1, \theta, \varphi) \in \mathcal{S}: r_1 = 2M \}
	\end{equation*}
	to be the \emph{future horizon} of the black hole region.
	\begin{remark}
	Note that, in this system of coordinates, $r_1 = 2M$ does not include what is usually referred to as the past event horizon, nor it includes the bifurcation sphere.
	\end{remark}
	
	Let $\nabla$ be the Levi--Civita connection of the metric $g$. Finally, let
	$$
	\mu := \frac{2M}{r_1}.
	$$
\end{definition}
It is useful to describe the exterior region with different systems of coordinates. 
\begin{definition}[Irregular coordinates]\label{def:irregcoord}
We set
\begin{align*}
&t:= t^*- 2M \log(r_1-2M), \qquad \text{and} \qquad 
r := r_1 \qquad \text{if} \qquad  r_1 > 2M,\\
&t:= t^* + 2M \log(2M-r_1), \qquad \text{and} \qquad 
r := r_1 \qquad \text{if} \qquad 0 <r_1 < 2M.
\end{align*}
In these coordinates, the metric on $\mathcal{S} \setminus \mathcal{H}^+$ becomes:
\begin{equation*}
g = -\left(1-\frac{2M}{r}\right) dt \otimes dt + \left(1-\frac{2M}{r}\right)^{-1} dr \otimes dr + r^2 g_{\mathbb{S}^2},
\end{equation*}
where $g_{\mathbb{S}^2}$ is the standard metric on the conformal sphere $\mathbb{S}^2$, which, in the usual coordinates $(\theta, \varphi)$, is defined as
\begin{equation*}
g_{\mathbb{S}^2} := \der \theta \, \otimes \, \der \theta + \sin^2 \theta \, \der \varphi \, \otimes \, \der \varphi.
\end{equation*}
\end{definition}

\begin{remark}
This coordinate system (restricted to $\mathcal{S}_e$) is the one originally introduced by Schwarzschild. These coordinates are evidently singular at the event horizon $\mathcal{H}^+$
\end{remark}

\begin{definition}[Regge--Wheeler coordinates]\label{def:rwcoords}
We introduce the irregular $(t_2, r^*, \theta, \varphi)$ coordinates, which are given by
\begin{align*}
&t_2 := t, \qquad r^* := r + 2M \log(r-2M), \qquad \text{if} \ r > 2M,\\
&t_2 := t, \qquad r^* := r - 2M \log(2M-r), \qquad \text{if} \ 0 < r < 2M.
\end{align*}
These coordinates are irregular at the horizon $\mathcal{H}^+$. The metric is diagonalized, and it is represented as follows:
\begin{equation*}
g = -(1-\mu) \de t_2 \otimes \de t_2 +  (1-\mu) \de r^* \otimes \de r^* + r^2 g_{\mathbb{S}^2}.
\end{equation*}
\end{definition}
Finally, from the $(t_2, r^*, \theta, \varphi)$ coordinates, we can introduce the \emph{null coordinates}.
\begin{definition}[Null coordinates]
\begin{equation}\label{uvcoord}
\begin{aligned}
u := t_2 - r^*, \qquad
v := t_2 + r^*.
\end{aligned}
\end{equation}
The metric is represented as follows:
\begin{equation*}
g =-2 (1-\mu) \de u \otimes \de v -2 (1-\mu) \de v \otimes \de u + r^2 g_{\mathbb{S}^2}.
\end{equation*}
\end{definition}

\begin{remark}
	The coordinate field $\partial_t = \partial_{t^*}$, as well as any rotation vector field $\Omega$ (see Definition in~\ref{sec:not:geo}) are Killing fields for the metric $g$. This means that
	\begin{equation*}
	(\mathcal{L}_{\partial_t} g)_{\mu\nu} = (\nabla_\mu \partial_t)_\nu + (\nabla_\nu \partial_t)_\mu = 0, \qquad (\mathcal{L}_{\Omega} g)_{\mu\nu} = (\nabla_\mu \Omega)_\nu + (\nabla_\nu \Omega)_\mu = 0.
	\end{equation*}
	In particular, the Lie derivative in their direction commutes with the covariant derivatives (see~\cite{globalnon}),
	\begin{equation*}
	[\nabla, \lie_{\partial_t}] = [\nabla, \lie_{\Omega}] = 0.
	\end{equation*}
\end{remark}
\begin{remark}
We notice that the vector field $\left(1-\frac{2M}{r}\right)^{-1} \partial_u$ is a regular nonvanishing vector field up to and including $\mathcal{H}^+$. It is a geodesic vector field. We also compute the covariant derivatives:
\begin{align*}
&\nabla_{\partial_{u}} \partial_v = \nabla_{\partial_{v}} \partial_u = 0, \\
&\nabla_{\partial_{u}} \partial_u = -\nabla_{\partial_{v}} \partial_v = - \frac{2M}{r^2}, \\
& \nabla_{\partial_u}( \partial_\phi/r) = \nabla_{\partial_u}( \partial_\theta/r) = 0.
\end{align*}
\end{remark}

Throughout this paper, we will denote $L := \partial_v$, and $\lbar := \partial_u$.

\subsection{A dictionary to relate different coordinate systems}

Since we consider several coordinate systems, it may be beneficial for the reader to include a calculation of the derivatives of coordinate functions belonging to one such coordinate system with respect to coordinate vector fields induced by another of such coordinate systems.

We consider the coordinate systems: $(t^*, r_1, \theta, \varphi)$, $(t_2, r^*, \theta, \varphi)$, $(t,r,\theta,\varphi)$, and all the calculations which follow are performed restricting to the exterior region $\mathcal{S}_e$.

First of all, we just restrict to the functions $t^*, r_1,t_2, r^*, t,r$, as the angular functions $\theta$ and $\varphi$ are always the same throughout all the coordinate systems considered.

Furthermore, we note that the following relations hold in $\mathcal{S}_e$:
$$
r_1 = r, \qquad t_2 = t,
$$
hence we will restrict only to the coordinate functions $t,r,t^*, r^*$.

Moreover, we have that
\begin{align*}
&\p_{t^*} = \p_t,  \qquad \p_{r_1} = \p_r + \frac{2M}{r} (1-\mu)^{-1}\p_t,\\
&\p_{t_2} = \p_t, \qquad \p_{r^*} = (1-\mu)\p_r.
\end{align*}
Hence, we will restrict only to the vector fields $\p_t, \p_r, \p_{r^*}, \p_{r_1}$. We have the following table: \vspace{30pt}
\begin{center}
\begin{tabular}{ |c|cccc| } 
 \hline 
  & $t$ & $r$ & $t^*$ & $r^*$  \\ 
\hline 
 $\p_t$ & 1 & 0 & 1 & 0 \\ \hdashline[0.7pt/3pt]
 $\p_r$ & 0 & 1 & $-\mu(1-\mu)^{-1}$ & $(1-\mu)^{-1}$ \\ \hdashline[0.7pt/3pt]
 $\p_{r^*}$ & 0 & $(1-\mu)$ & $-\mu$ & 1 \\  \hdashline[0.7pt/3pt]
 $\p_{r_1}$ & $\mu (1-\mu)^{-1}$ & 1 & 0 & $(1-\mu)^{-1}$ \\
 \hline
\end{tabular}
\end{center}
\subsection{Notation for the geometry and connections}\label{sec:not:geo}

\begin{itemize}[wide, labelwidth=!, labelindent=0pt]
	\item The Schwarzschild spacetime is defined, as usual, using the $(t^*, r_1, \theta, \varphi)$ coordinates. See Definition~\ref{def:schwreg}.
	\item Let $\mathcal{S}_e$ be the exterior region of the Schwarzschild spacetime. In the $(t^*, r_1, \theta, \varphi)$ coordinates, this is the region $S_e := \{(t^*, r_1, \theta, \varphi): r_1 \geq 2M\}$.
	\item Let $\mathcal{H}^+ \subset \mathcal{S}$ be the future event horizon, i.~e.~$\mathcal{H}^+ := \{(t^*, r_1, \theta, \varphi): r_1 = 2M\}$.
	\item Let $(t,r, \theta, \varphi)$ be the irregular coordinates for Schwarzschild, as introduced in Definition~\ref{def:irregcoord}. 
	\item Let $(t^*, r_1, \theta, \varphi)$ such that $t^* = t +2M \log (r-2M)$, $r_1 = r$.
	\item The coordinates $(t,r,\theta, \varphi)$ can be used to parametrize the region $\{(t^*, r_1, \theta, \varphi): r_1 \in (0, 2M)\}$. The coordinate transformation then becomes $t^* = t + 2M \log(2M -r)$, $r_1 = r$.
	\item Let $(t, r^*, \theta, \varphi)$ such that $r^* = r+ 2M \log(r-2M)$ if $r > 2M$, and $r^* = r- 2M \log(2M-r)$ if $0 < r < 2M$.
	\item Let $(u,v,\theta, \varphi)$ such that $u = t-r^*$, $v = t+r^*$.
	\item Let $\chi_1$ be a smooth cutoff function such that $\chi_1(r) = 1$ for $r \in \left[\frac 3 2 M, 3M\right]$, and finally $\chi_1(r) = 0$ for $r \geq 4M$ and $0 < r \leq M$. We define
	\begin{equation*}
	\tau := (1-\chi_1(r)) (u^++1) + \chi_1(r)t^*.
	\end{equation*}
	Here, $u^+$ is the positive part of $u$: $u^+ := \max\{u, 0\}$.
	\item Let $L:= \partial_v$, $\lbar := \partial_u$, $\hat{\lbar} := (1-\mu)^{-1}\lbar$, $T := \p_t = \frac 1 2 (L + \lbar)$.
	\item Let $\mu := 2M/r$.
	\item Let $\mathcal{V}$ be the set of vector fields:
	\begin{equation*}
	\mathcal{V} := \{L, (1-\mu)^{-1}\lbar\}.
	\end{equation*}
	\item Let  ${\leo}$ be the set of vector fields
	$$
	{\leo}:= \{\Omega_{i}\}_{i = 1, 2,3}, \qquad {\leo}_{\text{no}}:=  \{\Omega_{i}/r\}_{i = 1, 2,3},
	$$
	where $\Omega_{i}$ are three rotation (Killing) vector fields whose linear span is bidimensional everywhere.
	\item Let $(\theta^A, \theta^B)$ be local coordinates for the sphere $\mathbb{S}^2$.
	\item Let $\partial_{\theta^A}$ and $\partial_{\theta^B}$ the associated vector fields to $(\theta^A, \theta^B)$.
	\item  Let
	\begin{equation}\label{def:killing}
	\mathcal{K} := \{T, \Omega_1, \Omega_2, \Omega_3\}
	\end{equation}
	be a set of Killing fields on Schwarzschild.
	\item Let $\nabla$ be the Levi-Civita connection with respect to the metric $g$, $\lie$ the associated Lie derivative.
	\item Let $\snabla$ be the Levi-Civita connection projected on spheres of constant $r$. $\snabla$ accepts only capital indices, which indicate tensors tangent to the spheres of constant $r$. $\snabla$ can be extended to the derived bundles in the canonical way, i.e. asking it to satisfy the Leibniz rule. 
	\item Let $\snabla_L$ and $\snabla_\lbar$ be defined on vector fields as $$\snabla_L X:= (\nabla_L X)^p, \qquad \snabla_\lbar X:= (\nabla_\lbar X)^p.$$
	Here, $(\cdot )^p$ indicates projection on the spheres of constant $r$. Extend these connections to the derived bundles in the usual way.
	\item Let $\slashed g$ be the restriction of the ambient metric $g$ to the spheres of constant $r$-coordinate. Furthermore, let $g_{\mathbb{S}^2}$ be the metric on the conformal sphere $\mathbb{S}^2$, which has the expression, in the usual coordinate system:
	\begin{equation*}
	g_{\mathbb{S}^2} = \de \theta^2 + (\sin \theta)^2 \de \varphi^2.
	\end{equation*}
	Furthermore, the following relation holds true:
	$$
	\slashed g = r^2 g_{\mathbb{S}^2}.
	$$
	\item Let $\snabla_{\mathbb{S}^2}$ be the Levi-Civita connection associated to the metric $g_{\mathbb{S}^2}$ on the conformal sphere $\mathbb{S }^2$. We have that also $\snabla_{\mathbb{S}^2}$ accepts only capital indices, and furthermore:
	$$
	\snabla_{\mathbb{S}^2} = r \snabla.
	$$
\end{itemize}

\subsection{Notation for derivatives, multi-indices, and variations of functions and tensors}\label{sec:not:der}

\begin{itemize}[wide, labelwidth=!, labelindent=0pt] 
\item By the symbol $\enn$, we denote the set of all natural numbers including $0$:
$$
\enn := \{0, 1, \ldots \}.
$$
\item Let $b \in \N_{\geq 0}$, we define $\iindicess^b$ to be the set composed of multi-indices consisting only of Killing fields:
\begin{equation*}
\iindicess^b :=\{\text{Ordered strings of } b \text{ elements of }\mathcal{K}\}.
\end{equation*}

\item Let furthermore
\begin{equation*}
	\iindiceo^b := \{\text{Ordered strings of }b\text{ elements of }\leo\}.
\end{equation*}
\item Let $\mathscr{S}$ be either $\leo$ or $\mathcal{K}$. Then, we let
\begin{equation*}
\iindicesg^{\leq j} := \bigcup_{\substack{m \in \N_{\geq 0}\\ m \leq j}} \iindicesg^m.
\end{equation*}

\item We now define the sum of multi-indices. Let $I \in \iindicesg^b$. Then
$$
J+K =I = (K_1. \ldots, K_b)
$$
if and only if there exist $b_1, b_2 \in \N_{\geq 0}$, $b_1+ b_2 = b$ and increasing sequences $\{i_1 , \ldots, i_{b_1}\} \subset \{1, \ldots, b\}$, $\{j_1 , \ldots, j_{b_2}\} \subset \{1, \ldots, b\}$ such that
\begin{equation}\label{eq:defsumind}
\begin{aligned}
\{i_1 , \ldots, i_{b_1}\} \cup \{j_1 , \ldots, j_{b_2}\} = \{1, \ldots, b\},\\
I = (K_{i_1}, \ldots, K_{i_{b_1}}) \in  \iindicesg^{b_1}, \\
J = (K_{j_1}, \ldots, K_{j_{b_2}}) \in \iindicesg^{b_2}.
\end{aligned}
\end{equation}

\item If $g: \mathcal{S} \to \R$ is a smooth function and $I \in \iindicesg^b$, $I = (K_1, \ldots, K_b)$ we define
\begin{equation*}
	\partial^I_{\mathscr{S}} g := K_1( \cdots (K_b g)),
\end{equation*}
the iterated derivative.

\begin{definition}[Notation for derivatives of functions]\label{def:dots}
Let $b, k, j \in \N_{\geq 0}$, let $I \in \mathscr{I}^b_{\leo}$, and let $g:\mathcal{S} \to \R$ be a smooth function. Let us recall the definition of $\hat L:= (1-\mu)^{-1}\lbar$. Then, we define
\begin{equation}
\hat g_{,I,j,k} := \partial^I_{\leo}\hat\lbar^j T^k g.
\end{equation}
Furthermore, we define
\begin{equation}
\dot g_{,I,j,k} := \partial^I_{\leo}\lbar^j T^k g.
	\end{equation}
\end{definition}

\item If $I \in \iindicesg^b$, and $I = (K_1, K_2, \ldots, K_b)$, we define
\begin{equation*}
\lie^I_{\mathscr{S}} \fara := \lie_{K_1} \cdots \lie_{K_b} \fara.
\end{equation*}

\item Recall: $\chi_1(r)$ is a smooth cutoff function such that $\chi_1(r) = 1$ for $r \in \left[\frac 3 2 M, 3M\right]$, and finally $\chi_1(r) = 0$ for $r \geq 4M$ and $0 < r \leq M$. Let $Y$ be the vector field
\begin{equation*}
Y:= \chi_1(r) (1-\mu)^{-1} \lbar.
\end{equation*}

\item Given an integer $a \in \N_{\geq 0}$, let 
$$
\nabla^a_Y \fara := \underbrace{\nabla_Y \ldots\nabla_Y}_{a \text{-times}}\fara.
$$
\begin{remark}
Note that the last expression is different from $(\nabla^a \fara )(\underbrace{Y, \ldots, Y}_{a\text{-times}})$.
\end{remark}
\item Let 
\begin{equation*}
\faram_{,I,a} := \lie^I_{\mathcal{K}} \nabla_Y^a \fara.
\end{equation*}

\end{itemize}

\begin{definition}[Horizon cutoff function] \label{def:cutoff}
Let $r_{\text{in}}$ be a number such that $r_\text{in} \in (0,2M)$, and let $I$ be an open interval such that $[r_\text{in}, 4M - r_{\text{in}}] \subset I \subset [M, 3M]$. We define a smooth radial function (in $(t^*, r_1, \omega, \varphi)$-coordinates) $\chi_{\mathcal{H}^+}(r)$, such that
\begin{equation}
\chih(r) = \left\{
\begin{array}{ll}
1 & \text{for } r \in [r_\text{in}, 4M - r_{\text{in}}]\\
0 & \text{for } r \in (0, \infty)\setminus I.
\end{array}
\right.
\end{equation}
\end{definition}

\subsection{Redshift vector field}\label{sub:redshift}

\begin{itemize}[wide, labelwidth=!, labelindent=0pt]
\item Let the redshift vector field be defined as follows:
\begin{equation*}
	\redsh :=2 \desude{}{t^*}+ \{(1-\mu)(1+\mu)+ 5\chi_1(r)(1-\mu)+\chi_1(r)\}\desude{}{t^*} + \{(1-\mu)^2 - 5\chi_1(r)(1- \mu)-\chi_1(r)\}\desude{}{r_1}.
\end{equation*}
Here, $\chi_1$ is a smooth cutoff function such that $\chi_1(r) = 1$ for $r \in \left[\frac 3 2 M, 3M\right]$, and finally $\chi_1(r) = 0$ for $r \geq 4M$ and $0 < r \leq M$.

With respect to the null vector fields $L$ and $\lbar$, we have up to $\mathcal{H}^+$,
\begin{equation*}
V_{\text{red}} = L + \lbar + 5 \chi_1(r) \left( (1-\mu)L +\lbar\right)+\chi_1(r) (1-\mu)^{-1}\lbar.
\end{equation*}
\end{itemize}

\begin{remark}
	We remark that, since $\partial_{t^*}$ is a Killing vector field, the deformation tensor relative to $\redsh$ satisfies the following property, when $r \in [3/2 M, 3M]$:
	\begin{equation}
		{}^{(\redsh)}\pi_{\mu\nu} = {({\lie}_{\redsh} g)}_{\mu\nu} = {}^{(V_1)}\pi_{\mu\nu},
	\end{equation}
	with $V_1 =  - 5 \mu L + (1-\mu)^{-1}\lbar$. 
\end{remark}
\begin{remark}
Notice that the considered vector field is regular everywhere, including the horizon. Also, it is future-directed and strictly timelike on the set $r \in [15/8 M, 17/8 M]$, in particular 
\begin{equation*}
g(V_{\text{red}}, V_{\text{red}}) < - 1/2
\end{equation*}
in that region.
\end{remark}

\subsection{Spacetime regions and foliation}\label{sec:not:regfol}
\begin{itemize}[wide, labelwidth=!, labelindent=0pt]
\item Let $t_2^* \geq t_1^* \in \R$. Let
\begin{equation*}
\Sigma_{t_1^*} := \{ t^* = t^*_1\} \subset \mathcal{S}_e.
\end{equation*}
\item Let
\begin{equation*}
\regio{t_1^*}{t_2^*} := \cup_{s \in [t_1^*, t_2^*]} \Sigma_s.
\end{equation*}
\item Let $n_{\Sigma_{t^*}}$ be the future-directed unit normal to the foliation $\Sigma_{t^*}$.
\item Ler $0 < r_{\text{in}} < 2M$. Let $t_2^* \geq t_1^* \in \R$. Let
\begin{equation*}
\widetilde{\Sigma}_{t_1^*} := \{ t^* = t^*_1\} \cap \{r_1 \geq r_{\text{in}}\}.
\end{equation*}
\item Let
\begin{equation*}
\tregio{t_1^*}{t_2^*} := \cup_{s \in [t_1^*, t_2^*]} \widetilde{\Sigma}_s.
\end{equation*}
\item Let
\begin{equation*}
\hat{\Sigma}_{t_1^*} := \{ t^* = t^*_1\} \cap \{r_1 \geq M\}.
\end{equation*}
\item Let ${n}_{ \widetilde{\Sigma}_{t^*}}$ be the future-directed unit normal to the foliation $\widetilde{\Sigma}_{t^*}$.
\item Let $\de\Sigma_{t^*}$ be the natural volume form induced by the foliation $\Sigma_{t^*}$.
\item Let $\desphere$ the volume form on the conformal sphere $\mathbb{S}^2$.
\begin{equation*}
\desphere := \sin \theta \de \theta \wedge \de \varphi.
\end{equation*}
\item Let $\text{dVol}$ the natural volume form on the Lorenztian manifold $\mathcal{S}_e$. In $(u,v,\omega)$ coordinates, this has the form
\begin{equation*}
\text{dVol} = 2(1-\mu)r^2 \de u \wedge \de v \wedge \desphere.
\end{equation*}
\item If $\regio{}{} \subset \mathcal{S}$, we denote by $\mathcal{J}^+(\regio{}{})$ the causal future of $\regio{}{}$. This is, the set of all points $p \in \mathcal{S}_e$ such that there exists a $\mathcal{C}^1$ curve $\gamma: [0,1] \to \mathcal{S}$ such that $\gamma(0) \in \regio{}{}$, $\gamma(1) = p$, and furthermore $\gamma'(s)$ is a future directed causal vector for all $s \in [0,1]$.
\item We adopt the following notation for outgoing and ingoing cones. Let $\tilde u$ and $\tilde v$ be real numbers:
\begin{equation}
\overline{C}_{\tilde u} := \{u = \tilde u\}, \qquad \underline{C}_{\tilde v} := \{v = \tilde v\}.
\end{equation}
\item Let $u_2 \geq u_1$ real numbers, and let $R > 2M$. We define the following spacetime regions:
\begin{align*}
 \mathfrak{D}_{u_1}^{u_2} :&= \left\{r \geq R, u \in [u_1, u_2] \right\},\\
\mathcal{Z}_{u_1} :&= \left(\{t^* = u_1 + R + 4M \log(R-2M)\} \cap \{2M \leq r \leq R \}\right) \cup (\{u = u_1\} \cap \{r \geq R\}), \\
\mathfrak{Z}_{u_1}^{u_2} :&= \cup_{u \in [u_1, u_2]} \mathcal{Z}_u.
\end{align*}
It may be helpful to visualize these regions on a Penrose diagram, and we refer to Figure~\ref{fig4}, in which these regions are depicted. The region $ \mathfrak{D}_{u_1}^{u_2}$ is the region bounded by the timelike curve $r = R$ and the two outgoing null hypersurfaces $\{u = u_1\}$ and $\{u = u_2\}$. The region $\mathcal{Z}_{u_1}$ coincides with a constant-$t^*$ hypersurface for $r \leq R$, and coincides with the outgoing null cone $\{u = u_1\}$ for $r \geq R$. The value 
$$
t^* = u_1 + R + 4M \log(R-2M)
$$
is chosen so that the two pieces match at the sphere in which the null cone $\{u = u_1\}$ meets the timelike surface $r = R$. In fact, we have, for a general point in $\mathcal{S}_e$:
$$
t^* = t + 2M \log(r-2M) = u +r^* + 2M \log (r-2M) = u +r+4M \log (r-2M),
$$
and the condition follows by plugging in $u = u_1, r = R$. Finally, $\mathfrak{Z}_{u_1}^{u_2}$ is the region bounded above by $\mathcal{Z}_{u_2}$ and below by $\mathcal{Z}_{u_1}$. 

\end{itemize}
\subsection{Raising and lowering indices}
\begin{itemize}[wide, labelwidth=!, labelindent=0pt]
\item Let $X^\mu$ be a vector field in $\Gamma(\mathcal{S})$. Then $X_\nu$ is defined as the one-form $X^\mu g_{\mu\nu}$. A similar remark holds for derived bundles.
\end{itemize}

\subsection{Sobolev norms and their equivalences}

\begin{definition}[Pointwise norm for tensors] \label{def:gennormder}
Let $T$ be a covariant tensor with $m_1$ indices. Given $k \in \enn$, we define
\begin{align}
|T| &:= \sum_{V_1, \ldots, V_{m_1} \in \mathcal{V} \cup \ono}|T(V_1, \ldots, V_{m_1})|,\\
|\partial^k T| &:= \sum_{\substack{V_1,\ldots, V_{m_1} \in \mathcal{V} \cup \ono\\
W_1, \ldots, W_k \in \mathcal{V} \cup \oscr}}|(\nabla_{W_1} \cdots \nabla_{W_k}T)(V_1, \ldots, V_{m_1})|,\\
|\lie^k T| &:= \sum_{\substack{V_1,\ldots, V_{m_1} \in \mathcal{V} \cup \ono\\
W_1, \ldots, W_k \in \mathcal{V} \cup \oscr}}|(\lie_{W_1} \cdots \lie_{W_k} T)(V_1, \ldots, V_{m_1})|.
\end{align}

Furthermore, if $T$ belongs to a tensor bundle intrinsic to the spheres of constant $r$, we define
\begin{align}
|\snabla^k T| &:= \sum_{\substack{V_1,\ldots, V_{m_1} \in \mathcal{V} \cup \ono\\
W_1, \ldots, W_k \in \mathcal{V} \cup \oscr}}|(\snabla_{W_1} \cdots \snabla_{W_k}T)(V_1, \ldots, V_{m_1})|.
\end{align}

Finally, letting $h \in \enn$, we define 
\begin{align}
	&|\partial^{\leq h} T| := \sum_{k \leq h} |\partial^k T|,\qquad |\lie^{\leq h} T| := \sum_{k \leq h} |\lie^k T|,\\
	&|\snabla^{\leq h} T| := \sum_{k \leq h} |\snabla^k T|,\qquad |\nabla^{\leq h} T| := \sum_{k \leq h} |\nabla^k T|.
\end{align}
\end{definition}

\begin{remark}
Note that the angular derivatives automatically carry a factor of $r$.
\end{remark}

\begin{remark} Note that, in view of the definition above, viewing $\nabla^k T$ as a tensor,
\begin{align*}
|\nabla^k T| &:= \sum_{\substack{V_1,\ldots, V_{m_1} \in \mathcal{V} \cup \ono\\
W_1, \ldots, W_k \in \mathcal{V} \cup \ono}}|(\nabla^k T)(W_1, \ldots, W_k, V_1, \ldots, V_{m_1})|.
\end{align*}
Note that derivatives in this particular case carry no $r$-weights.
\end{remark}

\begin{proposition}\label{norm.equiv}
Given $k \in \enn$ and $T$ a tensor field on $\mathcal{S}$, we have the following pointwise inequalities between norms:
\begin{equation*}
  |\nabla^{\leq k} T| \lesssim |\partial^{\leq k} T| \simeq |\lie^{\leq k} T|.
\end{equation*}
Here, the symbol $A \simeq B$ indicates that there exist constants $C_2 > C_1 >0$ such that $C_1 A \leq B \leq C_2 A$. The symbol $A \lesssim B$ indicates that there exists a constant $C_1 >0$ such that $A \leq C_1 B$. Furthermore, $C_1$ and $C_2$ depend only on the mass parameter $M>0$ of Schwarzschild.
If, in addition, $T$ is a section of a tensor bundle intrinsic to the spheres of constant $r$, we have
\begin{equation*}
  |\nabla^{\leq k} T| \lesssim |\partial^{\leq k} T| \simeq |\lie^{\leq k} T| \simeq  |\snabla^{\leq k} T|.
\end{equation*}
\end{proposition}

\begin{definition}[Weighted Sobolev norm]\label{def:sobolevnorm} Let $k \in \enn$, and $p \in \R$, $p \geq 0$, let $T$ be a tensor field on $\mathcal{S}$. The $p$-weighted $k$\textsuperscript{th}-order Sobolev norm of $T$ on $\Sigma_{t^*_1}$ is
\begin{equation*}
\norm{\fara}_{H^{k,p}(\Sigma_{t_1})} := \int_{\Sigma_{t^*_1}}|\partial^{\leq k} \fara| r^p  d\Sigma_{t^*}.
\end{equation*}
We define:
$$
\norm{\fara}_{H^{k}(\Sigma_{t^*_1})} := \norm{\fara}_{H^{k,0}(\Sigma_{t^*_1})}.
$$
\end{definition}

\begin{proposition}[Pointwise bounds for tensors]
There holds
\begin{equation*}
|A^{\mu_1 \ldots \mu_n}| = |A_{\mu_1 \ldots \mu_n}|.
\end{equation*}
Furthermore,
\begin{equation*}
|A_{\mu_1 \ldots \mu_n}B^{\mu_1 \ldots \mu_n}| \lesssim |A_{\mu_1 \ldots \mu_n}| |B_{\mu_1 \ldots \mu_n}|,
\end{equation*}
where the implicit constant is uniform on the Schwarzschild spacetime $\mathcal{S}$.
\end{proposition}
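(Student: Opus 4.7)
The plan is to reduce both assertions to a single observation: in the frame $\mathcal{V} \cup \ono = \{L,\ (1-\mu)^{-1}\lbar,\ \Omega_i/r\}$, both the matrix of frame inner products $g(e_a, e_b)$ and a suitable expansion of the inverse metric of the form $g^{-1} = \sum_{a,b} C^{ab}\, e_a \otimes e_b$ have uniformly bounded coefficients across the whole Schwarzschild manifold, including up to and through the horizon and out to spatial infinity. The critical feature is that the rescalings by $(1-\mu)^{-1}$ and by $r^{-1}$ precisely absorb the degeneracy of $g^{-1}$ at $\mathcal{H}^+$ and the decay at infinity, respectively.

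For the first assertion, recall the convention laid out in Section~\ref{sec:defs} that indices are always raised and lowered with $g$, so $A^{\mu_1 \ldots \mu_n}$ and $A_{\mu_1 \ldots \mu_n}$ denote the same underlying abstract tensor written with different index placements. Since the norm of Definition~\ref{def:gennormder} is defined intrinsically on the covariant version, the equality $|A^{\mu_1 \ldots \mu_n}| = |A_{\mu_1 \ldots \mu_n}|$ follows immediately. (If one prefers to define the norm of a contravariant tensor using the coframe dual to $\mathcal{V}\cup\ono$, the equality becomes an equivalence, whose constants reduce exactly to the uniform bounds on $g(e_a,e_b)$ and its pseudoinverse discussed below.)

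For the second assertion, the expansion of $g^{-1}$ in the frame is verified explicitly. On the $(u,v)$ plane one reads off from the null-coordinate form of the metric that
\begin{equation*}
g^{-1}\bigr|_{(u,v)\text{-part}} = -\tfrac{1}{2}\bigl((1-\mu)^{-1}\lbar\otimes L + L\otimes (1-\mu)^{-1}\lbar\bigr),
\end{equation*}
which is manifestly bounded in the frame. On spheres of constant $r$, the three rotations $\Omega_i$ span the $2$-dimensional tangent space everywhere, so a partition-of-unity argument on $\mathbb{S}^2$ together with compactness yields smooth bounded coefficients $c^{ij}$ with the spherical part of $g^{-1}$ equal to $\sum_{i,j} c^{ij}\,(\Omega_i/r)\otimes(\Omega_j/r)$. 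Orthogonality of the null and spherical directions assembles these into a global expansion $g^{-1} = \sum_{a,b} C^{ab}\, e_a\otimes e_b$ with uniformly bounded $C^{ab}$.

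Once this expansion is in hand, the contraction is written
\begin{equation*}
A_{\mu\nu} B^{\mu\nu} = A_{\mu\nu} g^{\mu\alpha} g^{\nu\beta} B_{\alpha\beta} = \sum_{a,b,c,d} C^{ab}\, C^{cd}\, A(e_a, e_c)\, B(e_b, e_d),
\end{equation*}
and the triangle inequality followed by factoring the sum over $(a,c)$ and $(b,d)$ yields $|A_{\mu\nu}B^{\mu\nu}| \lesssim |A_{\mu\nu}|\,|B_{\mu\nu}|$. The only technical point (not a genuine obstacle) is verifying uniform boundedness of the $C^{ab}$ at the two spacetime boundaries, which is exactly what the choices $(1-\mu)^{-1}\lbar\in\mathcal{V}$ and $\Omega_i/r\in\ono$ are designed to accomplish.
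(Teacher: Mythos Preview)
Your proof is correct and follows essentially the same strategy as the paper: expand the contraction $A_{\mu\nu}B^{\mu\nu}$ in a frame adapted to the geometry so that the inverse metric has uniformly bounded coefficients, then bound term by term. The only difference is the choice of spherical frame. You work directly with the rotations $\Omega_i/r$ appearing in Definition~\ref{def:gennormder}, invoking compactness of $\mathbb{S}^2$ to bound the coefficients $c^{ij}$; the paper instead passes to the locally orthonormal frame $\{L,\ (1-\mu)^{-1}\lbar,\ \partial_\theta/r,\ \partial_\varphi/(r\sin\theta)\}$, in which the inverse metric is explicit and the only nonzero pairings are $g(L,(1-\mu)^{-1}\lbar)=-2$ and the unit-norm spherical legs, and then handles the coordinate singularity at the poles by spherical symmetry. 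Your route matches the definition more directly; the paper's route makes the metric coefficients explicit at the cost of a coordinate singularity dispatched by symmetry.
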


\begin{proof}
The first claim follows from the definition. The second claim follows writing the product 
\begin{equation*}
A_{\mu\nu}B^{\mu\nu}
\end{equation*}
in the frame $\mathcal{X} := \{L, (1-\mu)^{-1} \lbar, \partial_\theta/r, \partial_\varphi/(r \sin \theta)\}$, and noting that the only nonzero numbers among $g(X,Y)$, with $X,Y \in \mathcal{X}$ are
\begin{equation*}
g(L, (1-\mu)^{-1}\lbar) = -2, \qquad g(\partial_\theta/r, \partial_\theta/r)= 1,\qquad g(\partial_\varphi/(r \sin \theta), \partial_\varphi/(r \sin \theta)) = 1, 
\end{equation*}
which is a regular frame field except at $\theta = 0, \pi$. By spherical symmetry, we see that there is nothing special with the points for which $\theta = 0, \pi$, and we conclude.
\end{proof}

\begin{proposition}[Control of all derivatives via the variation $\dot \fara$]\label{prop:liecontrol}
 Recall the following definition, valid for $a, b \in \enn$, and $I \in \iindicess^b$ :
\begin{equation*}
\faram_{,I,a} := \lie^I_{\mathcal{K}} \nabla^a_Y \fara.
\end{equation*}
Let $j \in \N_{\geq 0}$. Let $r_{\text{in}}$ such that $\frac 3 2 M \leq r_{\text{in}} < 2M$ (recall that the definition of $\tregio{}{}$ depends on $r_{\text{in}}$). Then, there exist positive constants $C_1$ and $C_2$ and a number $\varepsilon_{\text{var}}>0$ such that the following holds. Let $t_1^* > t_0^*$, and let $\fara$ be a smooth solution to the MBI system~(\ref{MBI}) on $\tregio{t_0^*}{t_1^*}$. Assume the boostrap assumptions $\text{BA}(\tregio{t_0^*}{t_1^*},1,\left \lfloor \frac j 2 \right \rfloor, \varepsilon_{\text{var}})$. We have the pointwise inequality, valid on $\tregio{t_0^*}{t_1^*}$:
\begin{equation}\label{eq:liecontrol}
C_1 |\partial^{\leq j} \fara|^2 \leq \chih(r) \sum_{a = 0}^j \sum_{I \in \iindicess^{\leq j-a}} |\dot \fara_{,I,a}|^2 + \sum_{I \in \iindicess^{\leq j}} |\dot \fara_{,I,0}|^2 \leq C_2 |\partial^{\leq j} \fara|^2,
\end{equation}
where $\chih(r)$ the smooth cutoff in Definition~\ref{def:cutoff}.
\end{proposition}

\begin{proof}[Sketch of proof]
	The right hand inequality in~\eqref{eq:liecontrol} trivially follows from Proposition~\ref{norm.equiv}. Furthermore, for all $r \in [r_{\text{in}}, 4M - r_{\text{in}}]$, the left hand inequality in~\eqref{eq:liecontrol} also follows trivially from Proposition~\ref{norm.equiv}. It remains to show the left hand inequality for $r \geq 4M-r_{\text{in}}$. We wish to show, in that case, the existence of $C_1$ such that
	\begin{equation}
	C_1 |\partial^{\leq j} \fara|^2 \leq  \sum_{I \in \iindicess^{\leq j}} |\dot \fara_{,I,0}|^2.
	\end{equation}
The proof proceeds by induction on the number of derivatives falling on $\fara$, which we call $m$ ($m \leq j$). We start with the induction base case $m=1$. We seek to control 
$$
\sum_{U, V \in \ono \cup \{\p_t , \p_r\}}|(\nabla_{\p_r} \fara)(U,V)|.
$$
If both $U$ and $V$ are different from $\p_r$, then, using the closure equation $\nabla_{[\alpha} \fara_{\beta \gamma]} = 0$, we obtain
$$
|(\nabla_{\p_r} \fara)(U,V)| \leq |(\nabla_{U} \fara)(\p_r,V)|+ |(\nabla_{V} \fara)(\p_r,U)| \leq 2\sum_{I \in \iindicess^{\leq 1}} |\dot \fara_{,I,0}|.
$$
If either $U$ or $V$ is $\p_r$, we use the Euler--Lagrange equation for MBI (see~\eqref{MBI}):
$$
\nabla_\kappa \fara\indices{^\kappa^\nu} +\hdelta^{\mu\nu\kappa\lambda}[\fara] \ \nabla_\mu \fara_{\kappa\lambda}=0,
$$
and we obtain, for some positive number $C(r_{\text{in}})$ depending on $r_{\text{in}}$:
\begin{equation*}
\begin{aligned}
&C(r_\text{in})|\nabla_{\p_r} \fara(\p_r, U)| \\
&\quad \leq (1-\mu) |\nabla_{\p_r} \fara(\p_r, U)|\\
&\quad \leq C  \big( |\nabla_{\p_t} \fara(\p_t, U)| +\sum_{\Omega \in \ono} |\nabla_{\Omega} \fara(\Omega, U)| + |U_\nu \hdelta^{\mu\nu\kappa\lambda}[\fara] \ \nabla_\mu \fara_{\kappa\lambda}|\big) \\
& \quad  \leq C \sum_{I \in \iindicess^{\leq 1}} |\dot \fara_{,I,0}| + C\varepsilon^2_{\text{var}} |\nabla_{\p_r} \fara(\p_r, U)|,
\end{aligned}
\end{equation*}
where in the last inequality we have decomposed the tensors in the frame $\{\p_t, \p_r, \Omega_1, \Omega_2\}$, with $\Omega_1, \Omega_2 \in \ono$, and we have used the bootstrap assumptions $\text{BA}(\tregio{t_0^*}{t_1^*},1,\left \lfloor \frac j 2 \right \rfloor, \varepsilon_{\text{var}})$ to bound the terms arising from $\hdelta$. The induction base case is then concluded taking $\varepsilon_{\text{var}}$ small depending on $r_{\text{in}}$.

We now show the induction step $m-1 \to m$. We wish to bound the following expression:
\begin{equation}
\sum_{\substack{U, V \in \{\p_t , \p_r\} \cup  \ono
	\\ W_i \in \{\p_t, \p_r\} \cup \leo}}|(\nabla_{W_1} \nabla_{W_2}\cdots \nabla_{W_m} \fara)(U,V)|.
\end{equation}
We focus only on one term:
\begin{equation}\label{eq:inductstep}
|(\nabla_{W_1} \nabla_{W_2}\cdots \nabla_{W_m} \fara)(U,V)|.
\end{equation}
If all of the $W_i$ belong to $\{\p_t\} \cup \leo$, we are done by Proposition~\ref{prop:liecontrol}. Let's call $m_1$ the number of indices $i$ for which $W_i = \p_r$ in expression~\eqref{eq:inductstep}. 
If $m_1 =0$, we are done. Let's assume that $m_1>0$. Due to boundedness of the curvature coefficients (see Appendix~\ref{sec:riemann}), we can assume, without loss of generality, possibly after commuting the derivatives in expression~\eqref{eq:inductstep}, that $W_m = \p_r$. We then distinguish two cases, as in the induction base step:
\begin{enumerate}
\item If both $U, V \in \{\p_t\} \cup \ono$, we use the closure equation to get 
$$
|(\nabla_{W_1} \nabla_{W_2}\cdots \nabla_{W_m} \fara)(U,V)| \leq |(\nabla_{W_1} \nabla_{W_2}\cdots \nabla_{U} \fara)(W_m,V)|+|(\nabla_{W_1} \nabla_{W_2}\cdots \nabla_{V} \fara)(W_m,U)|.
$$
This decreases the number $m_1$ by one.

\item If either $U$ of $V$ is $\p_r$, we use the MBI Euler--Lagrange equation to get 
\begin{equation*}
\begin{aligned}
&|(\nabla_{W_1} \nabla_{W_2}\cdots \nabla_{\p_r} \fara)(\p_r,U)|\\
&\quad \leq C  \big( | \nabla_{W_1} \nabla_{W_2}\cdots \nabla_{\p_t} \fara(\p_t, U)| +\sum_{\Omega \in \ono} |\nabla_{W_1} \nabla_{W_2}\cdots \nabla_{\Omega} \fara(\Omega, U)| \\
&\qquad + |\nabla_{W_1} \nabla_{W_2}\cdots (U_\nu \hdelta^{\mu\nu\kappa\lambda}[\fara] \ \nabla_\mu \fara_{\kappa\lambda})|\big) \\
& \quad  \leq C \sum_{I \in \iindicess^{\leq m-1}} |\dot \fara_{,I,0}| + C\varepsilon^2_{\text{var}}|(\nabla_{W_1} \nabla_{W_2}\cdots \nabla_{\p_r} \fara)(\p_r,U)|,
\end{aligned}
\end{equation*}
Here, we have used the bootstrap assumptions, the Leibnitz rule, and the induction hypothesis. This decreases the number $m_1$ by one.
\end{enumerate}
By repeated application of steps $(1)$ and $(2)$ above, we keep decreasing $m_1$ until $m_1 = 0$. This concludes the inductive step.
\end{proof}

\subsection{The null decomposition}\label{sec:not:null}
\begin{itemize}[wide, labelwidth=!, labelindent=0pt]
\item Let $\theta^A$, $\theta^B$ local coordinates for $\mathbb{S}^2$.
\item Let $\svol_{AB}$ be the induced volume form on the spheres of constant $r$.
\item Middle components:
\begin{equation}\label{eq:middledef}
\rho := \frac 1 2 (1-\mu)^{-1} \fara(\lbar , L), \qquad \sigma := \frac 12 \fara_{AB} \svol^{AB}
\end{equation}
\item Extreme components:
\begin{equation}\label{eq:extdef}
\alpha_A := \fara_{\mu\nu} (\partial_{\theta^A})^\mu L^\nu, \qquad \alphabar_A:= \fara_{\mu\nu}(\partial_{\theta^A})^\mu \lbar^\nu.
\end{equation}
\item Expression for the invariants in terms of the null decomposition:
\begin{equation*}
\lun = - \rho^2 + \sigma^2 - (1-\mu)^{-1}\alpha^A \alphabar_A , \qquad \ldu = - \rho \sigma - \frac 12 (1-\mu)^{-1} \svol^{AB}\alpha_A \alphabar_B.
\end{equation*}

\item Nomenclature for the weighted middle components:
\begin{equation*}
Z := r^2 \rho, \qquad W:= r^2 \sigma.
\end{equation*}
\end{itemize}

\subsection{Hodge dual}\label{sec:hodge}
\begin{itemize}[wide, labelwidth=!, labelindent=0pt]
\item Let $\varepsilon_{\alpha \beta \gamma \delta}$ be the standard volume form on the Schwarzschild exterior.
\item The Hodge dual of $\fara$ is
\begin{equation*}
{}^\star \! \fara_{\kappa\lambda} := \frac 1 2 \fara^{\mu\nu} \varepsilon_{\kappa\lambda\mu\nu}.
\end{equation*}
\item 
	We define the null decomposition of the dual of $\fara$ as follows:
	\begin{equation}
	\begin{aligned}
&^\odot\rho := \frac 1 2 (1-\mu)^{-1} \farad(\lbar , L), & ^\odot \sigma := \frac 12 \farad_{AB} \svol^{AB},\\
& ^\odot\alpha_A := \fara_{\mu\nu} (\partial_{\theta^A})^\mu L^\nu, & ^\odot \alphabar_A:= \fara_{\mu\nu}(\partial_{\theta^A})^\mu \lbar^\nu.
	\end{aligned}
	\end{equation}
	There are straightforward identities connecting the null decomposition of $\farad$ to the null decomposition of $\fara$. See Appendix~\ref{sec:dual}.
\end{itemize}

\subsection{Stress--energy--momentum tensor and canonical stress}\label{sec:not:stresses}
Given $\gara$ a smooth 2-form on $\mathcal{S}$, define the Maxwell (linear) stress--energy--momentum tensor as:
\begin{equation*}
\dot Q^{(\text{MW})}_{\mu\nu}[\gara] := {\gara}_{\mu\alpha} \gara \indices{_\nu^\alpha} - \frac 1 4 g_{\mu\nu} \gara^{\alpha \beta} \gara_{\alpha \beta}.
\end{equation*}
Given smooth two-forms $\fara$ and $\gara$ on $\mathcal{S}$, we define the canonical stress:
\begin{equation}\label{eq:dotqdef}
\dot Q \indices{^\mu _\nu}[\gara] := H^{\mu\zeta\kappa\lambda}[\fara] {\gara}_{\kappa\lambda} {\gara}_{\nu\zeta} - \frac 1 4 \delta_\nu^\mu  H^{\eta\zeta\kappa\lambda}[\fara] {\gara}_{\eta\zeta}{\gara_{\kappa\lambda}}.
\end{equation}
\begin{remark}
	In the above, we suppressed the dependence of $\dot Q$ on $\fara$, as  defined in Equation~\eqref{eq:dotqdef}, since \emph{every occurrence of the tensor field $\dot Q$ will have $\fara$ as argument of the tensor field $H$}.
\end{remark}

\section{Statement of the result} \label{sec:statement}
In this work, we prove the following Theorem. Let $t_0^*$ be a positive real number and $r_{\text{in}} \in (M, 2M)$. We let $i$ be the inclusion of $\widetilde{\Sigma}_{t_0^*} = \{t^* = t_0^*\} \cap \{r \geq r_{\text{in}}\}$ into $\mathcal{S}$. Let $\Lambda^2$ be the space of smooth two-forms on a vector bundle. We consider the tensor field $\fara_0 \in \Lambda^2(i^*(T\mathcal{S}))$ as initial data for the MBI system.

Let us also introduce the \emph{initial charge} of the MBI system on Schwarschild. We define the following two functions of $r$ on the initial hypersurface $\widetilde{\Sigma}_{t_0^*}$:
\begin{equation}\label{eq:defchar}
MC(r) := \int_{\mathbb{S}^2} (1-\mu)^{-1}r^2\farad_0(\lbar, L) \desphere, \qquad
EC(r) := \int_{\mathbb{S}^2} (1-\mu)^{-1}r^2{}^\star \!\mfara_0(\lbar, L) \desphere.
\end{equation}
Recall the definition of the tensor field $\mfarad$ in display~\eqref{eq:mfaradef} ($\mfarad_0$ is just what appears in expression~\eqref{eq:mfaradef}, where we replace every occurrence of $\fara$ with $\fara_0$). The function $MC(r)$ is the \emph{magnetic charge}, whereas the function $EC(r)$ is the \emph{electric charge}. 
\begin{remark}
We briefly comment on the physical interpretation of these two quantities. Let us for a moment restrict our attention to flat spacetime. In Minkowski spacetime, consider the linear Maxwell theory (so that the equality $\mfarad = -\fara$ holds), under the electric-magnetic decomposition of $\fara$ induced by usual system of coordinates $(t,x_1,x_2,x_3)$. Let us furthermore restrict our attention to the hypersurface $t = t_0$. In this setting, the analogous expression for $MC(r)$ (resp.~$EC(r)$) is, up to a multiplicative constant, the magnetic flux (resp.~the electric flux) through the sphere of radius $r$ centered at the origin. Furthermore, under the linear Maxwell evolution, these are conserved quantities as functions of $t$ and $r$.
\end{remark}

It is an easy consequence of the MBI system on Schwarzschild that both $MC(r)$ and $EC(r)$ are invariants under the MBI evolution. Furthermore, we say that \emph{the magnetic and electric charge vanish initially} if there exists a constant $C$ and a number $\eta > 0$ such that the following inequality holds:
\begin{equation}\label{eq:vanishcharge}
MC(r) \leq Cr^{-\eta}, \qquad EC(r) \leq C r^{-\eta}.
\end{equation}

For the MBI system on Schwarzschild, there exist static solutions with non-vanishing initial charge (for instance, such that $EC(r)$ and $MC(r)$ are positive constant functions of $r$). These solutions are constructed in Propositon~\ref{prop:statsols} in Appendix~\ref{sec:heuristic}. 

\begin{remark}
Furthermore, these are regular solutions in the exterior region $\mathcal{S}_e$. This behavior should be contrasted with the case of MBI on Minkowski spacetime, where static solutions with positive constant charge display a point singularity. 
\end{remark}

We are now ready to state the main result of this work:

\begin{theorem}[Global well-posedness for MBI on Schwarzschild exterior for small initial data]\label{thm:gwp}
There exist $\varepsilon_0 > 0$ sufficiently small, and $r_{\text{in}} \in (M, 2M)$ sufficiently close to $2M$ such that, letting $\fara_0 \in \Lambda^2(i^*(T\mathcal{S}))$, we have the following. 
Let $W$, $Z$ be defined as Section~\ref{sec:not:null}. Furthermore, define the variations $\dot Z_{I,j,k}$ and $\dot W_{,I,j,k}$ as in Definition~\ref{def:dots}. Denote by a subscript $0$ the quantities derived from initial data, i.e. derived from $\fara_0$.

Assume the following boundedness of the initial energy, in terms of the null decomposition, for $0 <\varepsilon < \varepsilon_0$:
\begin{align}
\label{eq:ensmallf}&\sum_{|I| + j + k \leq l+4} \int_{\widetilde \Sigma_{t_0^*}} \left[r^2 |L \dot{(Z_0)}_{,I,j,k}|^2 + r|\snabla \dot{(Z_0)}_{,I,j,k} |^2\right] r^{-2} \de \widetilde \Sigma_{t^*} \leq \varepsilon^2, \\
\label{eq:ensmalls}&\sum_{|I|+j + k \leq l+4} \int_{\widetilde \Sigma_{t_0^*}} \left[r^2 |L \dot{ (W_0)}_{,I,j,k}|^2 + r|\snabla \dot{ (W_0)}_{,I,j,k} |^2\right]r^{-2}\de \widetilde \Sigma_{t^*} \leq \varepsilon^2,\\
\label{eq:ensmallt}&\int_{\widetilde \Sigma_{t_0^*}} r^2 |\partial^{\leq l+5} \alpha_0|^2 \de \Sigma_{t^*} \leq \varepsilon^2, \qquad \int_{\widetilde \Sigma_{t_0^*}} r^4 |\partial^{\leq l+4} \snabla_{\p_{r_1}} \alpha_0|^2 \de \Sigma_{t^*} \leq \varepsilon^2,\\
&\norm{\fara_0}^2_{H^{l+9}(\widetilde \Sigma_{t_0^*})} \leq \varepsilon^2.\label{eq:ensmall}
\end{align}
Here, we make the choice $l = 18$.
We furthermore suppose that the charge vanishes initially, i.~e.~, that \eqref{eq:vanishcharge} holds true. We assume also that $\fara_0$ satisfies the constraint equations on $\widetilde \Sigma_{t_0^*}$:
\begin{equation}\label{eq:constreq}
\nabla^\mu (\farad_0)_{\mu \lambda}  n^\lambda_{\widetilde \Sigma_{t_0^*}} = 0, \qquad H_{\mu \nu \kappa \lambda}[\fara_0] \ \nabla^\mu (\fara_0)^{\kappa \lambda}  n^\nu_{\widetilde \Sigma_{t_0^*}} = 0.
\end{equation}
Here, $ n^\nu_{\widetilde \Sigma_{t_0^*}}$ is the future-directed Lorentz unit normal to the hypersurface $\widetilde \Sigma_{t_0^*}$.

Under these assumptions, $\fara_0$ launches a unique smooth globally-defined solution $\fara \in \Lambda^2(\{t^* \geq t^*_0\}\cap \{r \geq r_{\text{in}}\} )$ to the MBI system~\eqref{MBI} on Schwarzschild, whose null components furthermore decay according to the decay rates in display~\eqref{bootstrap}. More precisely, $13$ derivatives of the null components decay in $L^\infty$, as the field $\fara$ in particular satisfies the decay $BA\left(\mathcal{S}_e,\frac 1 2,13, \varepsilon \right)$ (see Assumption~\eqref{as:bootstrap} for the definition).

\end{theorem}

\begin{remark}
	Let us remark that, in view of Proposition~\ref{prop:constraints}, the set of nontrivial initial data considered here is nonempty.
\end{remark}

\begin{remark}

	We note that, in the statement of Theorem~\ref{thm:gwp}, we needed to assume $l+9 = 27$ \emph{unweighted} derivatives of the field to be bounded, whereas only $l+5 = 23$ \emph{weighted} derivatives are required to be bounded. This discrepancy can be explained in the following way. In our argument, to be able to choose $p=2$ in the $r^p$ estimates (ultimately aimed at deducing decay), we have to show bounds on spacelike \emph{weighted} energies of the ``good'' components $\rho, \sigma$ and $\alpha$, which are essentially just a version of estimates~\eqref{eq:ensmallf}--\eqref{eq:ensmallt} propagated in time (on the foliation $\Sigma_{t^*}$). Such estimates are obtained in Propositions~\ref{prop:impalpha} and~\ref{prop:imprhosigma} by performing an $r^p$-type estimate between two spacelike hypersurfaces belonging to the foliation $\Sigma_{t^*}$ (note: these $r^p$ estimates are different from the $r^p$ estimate we employ to obtain $L^\infty$ decay). However, nonlinear error terms arise in the RHS of the weighted estimates we just mentioned. Bounds for such nonlinear error terms are obtained in turn by using the \emph{unweighted} $L^2$ bounds. The loss of derivatives connected to this process results in the discrepancy between the number of weighted derivatives versus unweighted derivatives which we see here.
\end{remark}

\subsection{Particular types of solutions to the constraint Equations \eqref{eq:constreq}}

Here, we prove the existence of compactly supported solutions to the constraint equations. Recall that $i$ is the inclusion of $\widetilde{\Sigma}_{t_0^*}$ into $\mathcal{S}$.

\begin{proposition}\label{prop:constraints}
	There exists $\fara_0 \in \Lambda^2(i^*(T\mathcal{S}))$, smooth and compactly supported in $r$, which furthermore satisfies the constraint equations \eqref{eq:constreq}:
	\begin{equation}\label{eq:constref}
	\nabla^\mu (\farad_0)_{\mu \lambda}  n^\lambda_{\widetilde \Sigma_{t_0^*}} = 0, \qquad H_{\mu \nu \kappa \lambda}[\fara_0] \ \nabla^\mu (\fara_0)^{\kappa \lambda}  n^\nu_{\widetilde \Sigma_{t_0^*}} = 0.
	\end{equation}
\end{proposition}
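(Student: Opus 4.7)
The strategy is to construct $\fara_0$ as \emph{purely magnetic} initial data on $\widetilde{\Sigma}_{t_0^*}$. The observation that drives the argument is that imposing vanishing of the electric part forces $\ldu[\fara_0] \equiv 0$, which makes the nonlinear second constraint collapse to a linear Maxwell-type Gauss law, satisfied trivially alongside the first.

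First I would $3+1$-decompose $\fara_0$ with respect to the unit normal $\widetilde n := \widetilde n_{\widetilde\Sigma_{t_0^*}}$, writing
\[
E_\mu := (\fara_0)_{\mu\nu}\widetilde n^\nu, \qquad B := i^{*}\fara_0,
\]
so that $E$ is a $1$-form and $B$ is the spatial magnetic $2$-form on $\widetilde\Sigma_{t_0^*}$. A pointwise computation in an adapted orthonormal frame shows $\ldu[\fara_0] = \frac14 \fara_0^{\mu\nu}\farad_{0,\mu\nu} \propto E \cdot B$ (contraction via the induced metric). Hence the ansatz $E \equiv 0$ on $\widetilde\Sigma_{t_0^*}$ automatically gives $\ldu[\fara_0] \equiv 0$ on the slice, and $\ellmbi^2[\fara_0] = 1+|B|^2 > 0$.

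Next I would reduce both constraints under this ansatz. The first constraint $\nabla^\mu\farad_{0,\mu\lambda}\widetilde n^\lambda = 0$ is the magnetic Gauss law: writing the divergence in adapted coordinates as $(\sqrt{-g})^{-1}\partial_\mu(\sqrt{-g}\farad_0{}^{\mu\nu}\widetilde n_\nu)$ and using antisymmetry, it reduces to the purely spatial condition $\text{div}_{\widetilde\Sigma} B = 0$. For the second constraint, $\ldu = 0$ collapses the expression in brackets to $\ellmbi^{-1}\fara_{0,\mu\lambda}$. Applying the product rule,
\[
\nabla^\mu\bigl[\ellmbi^{-1}\fara_{0,\mu\lambda}\bigr]\widetilde n^\lambda
= (\nabla^\mu \ellmbi^{-1})(\fara_{0,\mu\lambda}\widetilde n^\lambda) + \ellmbi^{-1}\nabla^\mu \fara_{0,\mu\lambda}\widetilde n^\lambda.
\]
The first term vanishes because $\fara_{0,\mu\lambda}\widetilde n^\lambda = -E_\mu \equiv 0$ on the slice, and the second is $\ellmbi^{-1}$ times the linear electric Gauss law, i.e.\ proportional to $\text{div}_{\widetilde\Sigma} E$, which vanishes since $E \equiv 0$ identically on $\widetilde\Sigma_{t_0^*}$. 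Both constraints therefore reduce to the single condition $\text{div}_{\widetilde\Sigma} B = 0$.

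To produce nontrivial compactly-supported $B$ satisfying this, I would take any smooth $1$-form $A$ on $\widetilde\Sigma_{t_0^*}$ compactly supported in an annular region $\{r_1 \leq r \leq r_2\}$ with $r_{\text{in}} < r_1 < r_2 < \infty$, and set $B := d_{\widetilde\Sigma} A$; then $d_{\widetilde\Sigma} B = 0$ automatically, which is the geometric form of $\text{div}_{\widetilde\Sigma} B = 0$ (via Hodge duality on $\widetilde\Sigma_{t_0^*}$). A concrete nontrivial choice is $A = \phi(r)\cos\theta\, d\varphi$ for a smooth cutoff $\phi$ supported in $[r_1,r_2]$. I would then assemble $\fara_0 \in \Lambda^2(i^{*}T\mathcal{S})$ by extending $B$ to an ambient $2$-form by imposing $\fara_0(\widetilde n,\cdot) = 0$; this determines $\fara_0$ uniquely from $(E, B)=(0, B)$ at each point of the slice, and by construction $\fara_0$ is smooth, compactly supported in $r$, nontrivial and satisfies the constraints in~\eqref{eq:constref}.

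The main technical point to be careful about is the reduction in the previous paragraph that $\nabla^\mu \fara_{0,\mu\lambda}\widetilde n^\lambda$, at points of $\widetilde\Sigma_{t_0^*}$, really is a purely \emph{spatial} divergence of $E$ plus terms linear in $E$, so that $E \equiv 0$ on the slice is enough to kill it. This is the classical fact that the Maxwell constraint is free of time derivatives, but on the Schwarzschild background one has to spell it out in the $(t^*,r_1,\theta,\varphi)$ chart: the potentially dangerous extra connection contributions take the schematic form (second fundamental form of $\widetilde\Sigma_{t_0^*}$)$\,\cdot\, \fara_0$, and these contract antisymmetric $\fara_0$ with the symmetric second fundamental form, hence vanish by symmetry once $E = 0$ is imposed.
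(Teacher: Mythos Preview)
Your approach is correct and follows a genuinely different, more elementary route than the paper. The paper works in the null decomposition and makes the ansatz $\sigma = 0$, $\alpha = f(r)\snabla h$, $\alphabar = g(r)(1-\mu)\snabla h$, $\rho = \rho(r)$ with the algebraic relation $f + (1+\mu)g = 0$; this also forces $\ldu = 0$ and kills the first constraint, but the second constraint then becomes a nontrivial radial ODE for $\rho$, which is solved via the Cauchy--Lipschitz theorem on a small annulus near $r = 2M$. Your purely magnetic ansatz $E \equiv 0$ instead collapses the second constraint to a tautology: after $\ldu = 0$, the residual $\farad\farad$ piece of $H_\Delta$ contracted with $\widetilde n$ contributes only tangential derivatives of $\ldu$, which vanish on the slice, and the linear term reduces to the spatial divergence of $E$ plus the contraction of the symmetric second fundamental form with the antisymmetric $B$. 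The only surviving condition is the first constraint $d_{\widetilde\Sigma}B = 0$, handled by $B = dA$. This avoids solving any ODE; what you give up is that your data have identically vanishing electric part, whereas the paper's construction exhibits data with nontrivial $\rho$.

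One small correction: your concrete choice $A = \phi(r)\cos\theta\,d\varphi$ is not smooth at the poles of $\mathbb{S}^2$, since $d\varphi$ is singular there and $\cos\theta$ does not vanish at $\theta = 0,\pi$ (so $dA$ contains the singular term $\phi'(r)\cos\theta\,dr\wedge d\varphi$). Replace it by $A = \phi(r)\,\xi$ for any smooth $1$-form $\xi$ on $\mathbb{S}^2$, or simply invoke a generic compactly supported $A$ without writing one down.
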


\begin{proof}[Proof of Proposition~\ref{prop:constraints}]
	We begin by noticing that we have the following expression, with respect to $L$ and $\lbar$:
	\begin{equation}\label{eq:nform}
		n^\nu_{\widetilde{\Sigma}_{t_0^*}} = f_1(r) \frac 1 2 \left(L + \frac{1+\mu}{1-\mu} \lbar \right),
	\end{equation}
	with $f_1(r)$ a nonvanishing smooth function in $r$ which ensures that the right hand side has the right normalization. Dividing Equations \eqref{eq:constref} by $f_1(r)$, we obtain that the constraints are satisfied if the following equations are. For ease of notation, let us drop the subscript $0$ in $\fara_0$, and furthermore notice we are using the formulation of the MBI system in subsection~\ref{sec:altmbi}.
	\begin{equation}\label{eq:constrchng}
		\nabla^\mu \farad_{\mu L}  + \frac{1+\mu}{1-\mu}	\nabla^\mu \farad_{\mu \lbar} = 0, \qquad \nabla^\mu \marad_{\mu L}  + \frac{1+\mu}{1-\mu}	\nabla^\mu \marad_{\mu \lbar} = 0.
	\end{equation}
	Recall the following notation concerning Hodge duals introduced in Section~\ref{sec:hodge}:
	\begin{equation}
	\begin{aligned}
&^\odot\rho := \frac 1 2 (1-\mu)^{-1} \farad(\lbar , L), & ^\odot \sigma := \frac 12 \farad_{AB} \svol^{AB},\\
& ^\odot\alpha_A := \fara_{\mu\nu} (\partial_{\theta^A})^\mu L^\nu, & ^\odot \alphabar_A:= \fara_{\mu\nu}(\partial_{\theta^A})^\mu \lbar^\nu.
	\end{aligned}
	\end{equation}
	We now make use of the following facts, which follow by computation and from the Hodge dual calculations in Section~\ref{sec:dual} of the Appendix:
	\begin{equation}\label{eq:calculations}
	\begin{aligned}
		\nabla_\mu \fara\indices{^\mu_L} &= - r^{-2}L(r^2 \rho) + \dive \alpha,\\
		\nabla_\mu \fara\indices{^\mu_\lbar} &=  r^{-2}\lbar(r^2 \rho) + \dive \alphabar, \\
		^\odot\alphabar &= - \alphabar^B \svol_{BA}, \\
		^\odot\alpha &= \alpha^B \svol_{BA}, \\
		^\odot\rho &= \sigma, \\
		^\odot \sigma &= - \rho.
	\end{aligned}
	\end{equation}
	By definition and using the null decomposition, we have
	\begin{equation}\label{eq:simpleinva}
	\begin{aligned}
		&\ellmbi^2:= 1+\lun -\ldu^2,\\
		&\lun = - \rho^2 + \sigma^2 - (1-\mu)^{-1}\alpha^A \alphabar_A , \qquad \ldu = - \rho \sigma - \frac 12 (1-\mu)^{-1} \svol^{AB}\alpha_A \alphabar_B.
	\end{aligned}
	\end{equation}
	Then, the first equation in \eqref{eq:constrchng} reduces to
	\begin{equation}
		r^{-2}\left( - L + \frac{1+\mu}{1-\mu}\lbar\right) (r^2 \sigma) - \curl \alpha - \frac{1+\mu}{1-\mu}\curl \alphabar = 0.
	\end{equation}
	Rewriting the last display in the ``regular'' coordinate system $(t^*, r_1, \theta^A, \theta^B)$, recalling that
	$$
	\p_{r_1} = \frac 12 \big(L - \frac{1+\mu}{1-\mu}\lbar\big) ,
	$$
	we get
	\begin{equation}\label{eq:forsigmacons}
		\p_{r_1} (r_1^2 \sigma) = - \frac{r_1^2}2 \curl \alpha - \frac{r_1^2(1+\mu)}{2(1-\mu)}\curl \alphabar.
	\end{equation}
	
	We now make the ansatz
	\begin{equation}\label{eq:ansatz}
		\alpha_A = \kappa f(r_1) \svol_{AB} \snabla^B h, \qquad  \alphabar_A = 0.
	\end{equation}
	with $f$, a smooth radial function, and $h: \mathbb{S}^2 \to \R$ a smooth function on the sphere $\mathbb{S}^2$, and $\kappa >0$ a small number.
	\begin{remark}
	Note that, in this case, $\alpha$ is not an angular gradient. A slight variant of this reasoning will give solutions in which $\alpha$ is an angular gradient. In that case, $\alphabar$ would have to be nonzero, though.
	\end{remark}
	Under this ansatz, the invariants and $\ellmbi$ simplify as follows, due to expression~\eqref{eq:simpleinva}:
	\begin{equation}
	\lun = - \rho^2 + \sigma^2 , \qquad \ldu = - \rho \sigma, \qquad \ellmbi^2= (1+\sigma^2)(1-\rho^2).
	\end{equation}
	
	We then seek solutions to the second equation in display \eqref{eq:constrchng}. We again make use of the facts \eqref{eq:calculations} to deduce, after a somewhat lengthy calculation, that the second equation in display \eqref{eq:constrchng}, under the ansatz~\eqref{eq:ansatz}, reduces to
	\begin{equation}\label{eq:forrhocons}
		\p_{r_1}\Big(r_1^2 \big(\rho - \frac {\sigma \ldu}{\ellmbi} \big) \Big) - \frac{r_1^2}{2} \alpha_A \snabla^A(\ellmbi^{-1}) + \frac{r_1^2}{2} \alpha^B \svol_{AB} \snabla^A(\ldu \ellmbi^{-1}) = \frac {r_1^2} 2\frac{\ldu \,  \curl \alpha }{\ellmbi}.
	\end{equation}
	We then wish to find solutions of Equations~\eqref{eq:forsigmacons} and~\eqref{eq:forrhocons} on the set $[a,b] \times \mathbb{S}^2$ (i.e.~$r_1$ is allowed to vary between $a$ and $b$), where $ [a,b]$ is a small interval such that $a < b$, $2M \in (a,b)$, and all null components vanish for $r_1 \geq b$.
	
	We now choose $f(r_1)$ to be smooth, non-increasing and compactly supported in $[a,b)$, such that $f(a) = 1$. We also choose $h$ to be a smooth function on $\mathbb{S}^2$.
	
	It is straightforward to solve Equation~\eqref{eq:forsigmacons} for $\sigma$, for all $\omega \in \mathbb{S}^2$:
	\begin{equation*}
	\sigma(r_1, \omega) = (r_1)^{-2} \int_{r_1}^b \frac{s_1^2}2 \curl \alpha (s_1, \omega ) \de s_1
	\end{equation*}	
	for all $r_1 \in [a,b]$ (recall that $\alpha$ has already been chosen).
	
	An explicit computation also gives that Equation~\eqref{eq:forrhocons} can be recast in the following form:
	\begin{equation}\label{eq:transport}
	\p_{r_1} \rho(\omega, r_1) + F^A(\rho, r_1, \omega)\snabla_A \rho = G(\rho,r_1, \omega).
	\end{equation}
	Here, $F^A(\rho, \omega, r_1)$ is a smooth vector field on $\mathbb{S}^2$. The components of $F^A$, furthermore, when written locally in a regular coordinate system, and the function $G$, enjoy the following property: there exists a constant $C>0$ such that derivatives up to order $10$ of the functions ($F^A, G$) in all their arguments $(\rho, r_1, \omega)$ are bounded in absolute value by $C \kappa$ on the set $\rho \in [-\kappa, \kappa]$, $r_1 \in [a,b]$, $\omega \in \mathbb{S}^2$. Since~\eqref{eq:transport} is now a nonlinear transport equation on the product manifold $[a,b] \times  \mathbb{S}^2$, and in view of the smallness condition above, we can invoke a local existence statement which ensures the existence of a smooth $\rho(r_1, \omega)$, numbers $\kappa$, $a < 2M < b$ in the above conditions such that Equation~\eqref{eq:transport} is verified. This concludes the proof.

\end{proof}
\begin{remark}
In this remark, we show how to obtain special solutions of the MBI constraint equations for which either $\lun$ or $\ldu$ does not vanish.

Upon choosing $h$ to be an eigenfunction of the spherical Laplacian, we have that the zeros of $\curl \alpha$ are a set of co-dimension $1$ on the sphere. Furthermore, it follows from our construction that, in this case, $\sigma = g(r_1) h$, where $g(r_1)$ is a smooth radial function. A calculation then shows that setting $\rho$ identically zero ensures that equation~\eqref{eq:forrhocons} is satisfied. We have therefore found a solution of the constraint equations with  invariant $\lun$ non identically vanishing.

On the other hand, let us consider the usual spherical coordinates $(\theta, \varphi)$ on $\mathbb{S}^2$. It is possible to choose $h(\theta, \varphi)$ such that it coincides with the function $\varphi^2+\varphi$ in a neighborhood of the point $(\theta =\pi/4, \varphi = 0) \in \mathbb{S}^2$. Let us now suppose, by way of contradiction, that $\rho$ and all its derivatives are zero at the point $p$ of coordinates $(r_1 = 2M, \theta = \pi/4, \varphi = 0)$. Since $\slashed \Delta h = \frac 1 {\sin^2 \theta} \p^2_{\varphi} (\varphi^2 +\varphi) \neq 0$ when calculated at the point $(\theta =\pi/4, \varphi = 0) \in \mathbb{S}^2$, it follows that $\sigma$ is nonzero in a neighborhood of $p$ in $(r_1, \theta, \varphi)$ coordinates. From Equation~\eqref{eq:forrhocons}, we now deduce that the only term that survives at $p$ satisfies
$$
\alpha_A \snabla^A \big(1/\sqrt{1+\sigma^2}\big) = 0.
$$
Due to the form of $h$, this implies that $\svol_{AB} \snabla^B h \snabla^A \slashed{\Delta} h = 0$ at $p$. But we calculate
$$
\svol_{AB} \snabla^B h \snabla^A \slashed{\Delta} h = - \frac 1 {\sin \theta} \p_\varphi(\varphi^2 + \varphi) \p_\theta\Big(\frac 1 {\sin^2 \theta} \Big),
$$
which is nonzero at $p$. Therefore, either $\rho$ is nonzero at $p$, in which case $\ldu$ is nonzero at $p$, or the gradient of $\rho$ is nonzero at $p$, in which case $\rho$ is nonzero at a point nearby, but $\sigma$ is nonvanishing in a neighborhood of $p$, which proves that $\ldu$ is nonzero somewhere.

Finally, it is clear that we can localize the two constructions above and achieve both $\lun$ and $\ldu$ nonzero at different points. Choose two points $\omega_1, \omega_2$ on the sphere and set $h$ to be locally an eigenfunction of the spherical Laplacian near $\omega_1$, such that the resulting $\sigma$ is positive in a neighborhood of $\omega_1$. Regarding $\omega_2$, follow exactly the same construction for non-vanishing $\ldu$ just described. This gives the desired solution.
\end{remark}

\section{Local existence}\label{sec:local}
This is the first step in our reasoning. Here we establish a local-in-time existence result for the MBI system on Schwarzschild.

\begin{theorem}[Local existence for small data]\label{thm:local}
	
	For all $N_0 \in \N_{\geq 0}$, $N_0 \geq 5$, there exist $\varepsilon_0 > 0$,  and $r_{\text{in}} \in (M, 2M)$ such that, letting $\fara_0 \in \Lambda^2(i^*(TS_e))$, we have the following. 
	
	Assume the following boundedness of the initial energy
	\begin{align}
	&\norm{\fara_0}^2_{H^{N_0}(\widetilde \Sigma_{t_0^*})} \leq \varepsilon^2.\label{eq:ensmall1}
	\end{align}
	We furthermore suppose the vanishing of charge (\ref{eq:decchar}) on $\Sigma_{t_0^*}$. We assume also the $\fara_0$ satisfies the constraint equations on $\widetilde \Sigma_{t_0^*}$:
	\begin{equation}\label{eq:constreq1}
	\nabla^\mu (\fara_0)_{\mu \lambda}  n^\lambda_{\widetilde \Sigma_{t_0^*}} = 0, \qquad H_{\mu \nu \kappa \lambda}[\fara_0] \ \nabla^\mu (\fara_0)^{\kappa \lambda}  n^\nu_{\widetilde \Sigma_{t_0^*}} = 0.
	\end{equation}
	Under these assumptions, there exists $t_1^* > t_0^*$ such that $\fara_0$ launches a unique smooth local-in-time solution $\fara \in \Lambda^2(\{t_1^* > t^* \geq t^*_0\}\cap \{r \geq r_{\text{in}}\} )$ to the MBI system (\ref{MBI}) on Schwarzschild.
\end{theorem}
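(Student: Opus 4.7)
I would follow the classical strategy of splitting the MBI system~\eqref{MBI} into constraints plus evolution, solve the evolution system locally by John's approach (analytic approximation plus Cauchy--Kowalevskaya for the approximants and uniform energy estimates to pass to the smooth limit), and then propagate the constraints. Decomposing each of $\nabla^\mu \farad_{\mu\lambda}=0$ and $H^{\mu\nu\kappa\lambda}[\fara]\nabla_\mu \fara_{\kappa\lambda}=0$ into the component along $\widetilde{n}^\lambda$ (respectively $\widetilde{n}^\nu$) and the components tangent to $\widetilde{\Sigma}_{t^*}$ yields exactly the two scalar constraints~\eqref{eq:constreq1} plus six genuine evolution equations for the six independent components of $\fara$. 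For $\fara$ close to zero, $H^{\mu\nu\kappa\lambda}[\fara]$ is a small perturbation of the Maxwell principal symbol, so the evolution subsystem is symmetric hyperbolic and the spacelike slice $\widetilde{\Sigma}_{t_0^*}$ is noncharacteristic.

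For the existence step, I would approximate the smooth constraint-satisfying datum $\fara_0$ in $H^{N_0}(\widetilde{\Sigma}_{t_0^*})$ by real-analytic data $\fara_0^{(k)}$ still satisfying the constraints: the ansatz of Proposition~\ref{prop:constraints} reduces the constraint equations to a radial analytic ODE with analytic free parameters, so analytic constraint-satisfying approximants exist. For each $k$, Cauchy--Kowalevskaya applied to the evolution subsystem (which is analytic in $\fara$ and in the Schwarzschild coordinates) produces a local analytic solution $\fara^{(k)}$. To extract a smooth limit one needs uniform $H^{N_0}$-estimates on a common slab $\tregio{t_0^*}{t_1^*}$ with $t_1^*-t_0^*$ depending only on the $H^{N_0}$-size $\varepsilon$ of $\fara_0$. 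I would obtain these by contracting the linear canonical stress $\dot{Q}$ of Section~\ref{sec:not:stresses}, applied to higher-order variations of $\fara^{(k)}$ obtained by commuting with the Killing set $\mathcal{K}$ and with $\nabla_Y$, against the redshift vectorfield $\redsh$; a standard $L^\infty$-bootstrap (closed via Sobolev embedding) keeps $\fara^{(k)}$ small enough that $H^{\mu\nu\kappa\lambda}[\fara^{(k)}]$ remains a perturbation of the Maxwell symbol on a time interval whose length depends only on $\varepsilon$. Weak compactness plus persistence of regularity then yield a smooth limit $\fara$ solving the evolution subsystem on $\tregio{t_0^*}{t_1^*}$.

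The final, and subtlest, step is to verify that $\fara$ actually solves the full MBI system, i.e.\ that the constraints propagate. Setting $(\mathcal{C}_1)_\lambda := \nabla^\mu \farad_{\mu\lambda}$ and $(\mathcal{C}_2)_\nu := H^{\mu\nu\kappa\lambda}[\fara]\nabla_\mu \fara_{\kappa\lambda}$, the evolution subsystem forces the tangential components of $\mathcal{C}_1, \mathcal{C}_2$ to vanish identically on $\tregio{t_0^*}{t_1^*}$, while~\eqref{eq:constreq1} forces their normal components to vanish on $\widetilde{\Sigma}_{t_0^*}$. Differentiating $\mathcal{C}_1,\mathcal{C}_2$ once more and substituting using the fully nonlinear equations should yield a closed, homogeneous, linear, first-order hyperbolic system for the pair $(\mathcal{C}_1, \mathcal{C}_2)$ with zero Cauchy data; uniqueness then forces them to vanish throughout the slab. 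The author's remark in Section~\ref{sec:pfoverview} that ``only at the nonlinear level, the constraints are propagated'' signals that the closure of this system relies on delicate algebraic cancellations tied to the specific form~\eqref{hform1}--\eqref{hform2} of the MBI Lagrangian; verifying this closure is where I expect the bulk of the calculation to lie and is the main obstacle of the proof.
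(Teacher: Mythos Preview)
Your overall architecture (evolution plus constraints, analytic approximation via Cauchy--Kowalevskaya, uniform energy estimates, constraint propagation at the end) matches the paper, but two steps are handled differently, and in one of them you have introduced an unnecessary obstacle and misjudged where the actual work lies.

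First, the paper does not try to approximate $\fara_0$ by analytic data that still satisfy the constraints. Your appeal to Proposition~\ref{prop:constraints} does not justify this: that proposition constructs one specific family of constraint-satisfying data via a restrictive ansatz, not a dense class of analytic constraint solutions near an arbitrary smooth $\fara_0$. The paper avoids the issue by linearising: it introduces the $\mathcal B$-linearized evolution system~\eqref{linmbi}, proves local existence for that (Proposition~\ref{prop:linex}) by approximating both $\fara_0$ and $\mathcal B$ by analytic functions without any constraint requirement, and then runs an iteration $\mathcal B:=\fara^{(n)}\mapsto\fara^{(n+1)}$. Since the linear system is, well, linear, the Cauchy--Kowalevskaya time depends only on $\mathcal B$, not on the size of $\fara_0$, which is what makes the uniform-time argument clean.

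Second, the constraint-propagation step is much simpler than you anticipate, and the mechanism is not a ``closed first-order hyperbolic system for $(\mathcal C_1,\mathcal C_2)$'' with delicate MBI-specific cancellations. The paper uses the reformulation of Section~\ref{sec:altmbi}: the full MBI system is $\nabla^\mu\farad_{\mu\nu}=0$ and $\nabla^\mu\mfarad_{\mu\nu}=0$, i.e.\ two divergence-free conditions on two-forms. One then proves the general claim that if a two-form $\mathcal Y$ satisfies $\nabla^\mu\sipsi_{\mu X}=0$ for all $X$ tangent to the foliation, then $\nabla_N(\nabla^\mu\sipsi_{\mu N})=0$. The computation uses only Ricci-flatness of Schwarzschild and the explicit structure of its Riemann tensor (Appendix~\ref{sec:riemann}); it is two lines of commutator algebra followed by checking that the remaining curvature contractions vanish in the $(t,r,\theta,\varphi)$ frame. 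No MBI Lagrangian structure beyond the rewriting $\nabla^\mu\mfarad_{\mu\nu}=0$ enters. So the ``main obstacle'' you flagged is in fact the easiest part of the proof once one passes to formulation~(I).
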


We will prove this statement later, in Section~\ref{sub:proofloc}. We now introduce a suitable linearization of the system \eqref{MBI}, which will be used in order to prove the local existence statement.

\subsection{The linearized form of the MBI system} \label{lintheory}
We now investigate the evolution part of the linearized MBI system.

\begin{definition} Let $r_{\text{in}} \in (0,2M)$. Consider smooth two-forms $\fara$ and $\mathcal{B}$ defined on $\tregio{t^*_0}{t^*_1}$. We say that the form $\fara$ satisfied the $\mathcal{B}$-linearized MBI system on $\mathcal{R} \subset \tregio{t^*_0}{t^*_1}$ with initial data $\fara_0$ on $\widetilde \Sigma_{\tau_0}$ if there holds
	
	\begin{equation}\label{linmbi}
	\left\{
	\begin{array}{ll}
	\Pi_{\nu \alpha} \nabla_\mu \farad^{\mu \nu} = 0, & \text{on } \regio{}{}\\
	H^{\mu\nu\kappa\lambda} [\mathcal{B}] \ \Pi_{\nu \alpha}  \nabla_\mu \fara_{\kappa\lambda} = 0, & \text{on } \regio{}{}\\
	\fara =\fara_0, & \text{on } \regio{}{} \cap \widetilde \Sigma_{t_0^*}
	\end{array}\right.
	\end{equation}
	
	Here, $\Pi$ is the orthogonal projection on each of the $\Sigma_{t^*}$'s, and $H[\mathcal{B}]$ is such that
	\begin{align}\label{hformg}
	H^{\mu\nu\kappa\lambda}[\mathcal{B}] :&= \frac 1 2 \left[g^{\mu\kappa} g^{\nu\lambda}-g^{\mu\lambda}g^{\nu\kappa}\right] + H_{\Delta}^{\mu\nu\kappa\lambda}[\mathcal{B}] \\ \nonumber
	\hdelta^{\mu\nu\kappa\lambda}[\mathcal{B}] :&= \frac 1 2 \left\{-\ellmbi^{-2}[\mathcal{B}] \ \mathcal{B}^{\mu\nu}\mathcal{B}^{\kappa\lambda} + \ldu[\mathcal{B}] \  \ellmbi^{-2}[\mathcal{B}] \ \left(\mathcal{B}^{\mu\nu}\barad^{\kappa\lambda}+\barad^{\mu\nu}\bara^{\kappa\lambda}\right)\right. \\ \nonumber
	&\left.-\left(1+\ldu^2[\mathcal{B}] \ \ellmbi^{-2}[\mathcal{B}]\right)\barad^{\mu\nu}\barad^{\kappa\lambda}\right\}
	\end{align}
\end{definition}

\begin{remark}
	We named this system the ``evolution part'' because it is composed of 6 independent equations which evolve the 6 components of $\fara$. The remaining 2 equations are constraint equations. We first proceed to prove that there are solutions to the evolution system, and then we will prove that the constraint equations are transported by the evolution.
\end{remark}

\subsection{Divergence of the canonical stress}

We prove an important lemma on the divergence of the canonical stress for the linearized system \eqref{linmbi}.

\begin{lemma}\label{lem:qdivecan}
	Given a two-form $\fara$ satisfying the $\bara$-linearized variation system
	\begin{equation}\label{eq:var}
	\left\{
	\begin{array}{ll}
	\Pi_{\nu \alpha} \nabla_\mu \farad^{\mu \nu} = \mathcal{J}_\alpha, & \text{on } \regio{}{}\\
	H^{\mu\nu\kappa\lambda}[\bara] \ \Pi_{\nu \alpha}  \nabla_\mu \fara_{\kappa\lambda} = \mathcal{I}_\alpha, & \text{on } \regio{}{}\\
	\fara =\fara_0, & \text{on } \regio{}{} \cap \widetilde \Sigma_{t_0^*}
	\end{array}\right.
	\end{equation}
	The canonical stress is defined as
	\begin{equation}
	\dot{Q} \indices{^\mu _\nu}[\bara] := H^{\mu\zeta\kappa\lambda}[\bara] {\fara}_{\kappa\lambda} {\fara}_{\nu\zeta} - \frac 1 4 \delta_\nu^\mu H^{\eta\zeta\kappa\lambda}[\bara] {\fara}_{\eta\zeta}{\fara_{\kappa\lambda}}.
	\end{equation}
	Here, $H$ is as in Equation~(\ref{hform1}) and Equation~(\ref{hform2}).
	Under these assumptions, $\dot Q$ satisfies the following relation:
	\begin{equation}\label{eq:stressdivl}
	\begin{aligned}
	n_{\widetilde{\Sigma}_{t^*}}^\nu \nabla_\mu (\dot{Q} \indices{^\mu _\nu}[\bara])=
	n_{\widetilde{\Sigma}_{t^*}}^\nu &\left(
	(\nabla_\mu H^{\mu\zeta\kappa\lambda}[\bara]) {\fara}_{\kappa\lambda} {\fara}_{\nu\zeta}- \frac 1 4 (\nabla_\nu H^{\zeta \eta \kappa\lambda}[\bara] \fara_{\zeta \eta} \fara_{\kappa\lambda})+
	\mathcal{I}^\zeta \fara_{\nu \zeta}\right.\\
	&\left.- \frac 1 2 H^{\zeta \eta \kappa\lambda}[\bara] \fara_{\kappa\lambda}( \mathcal{J}^\alpha \varepsilon_{\alpha \nu \zeta \eta})\right).
	\end{aligned}
	\end{equation}
\end{lemma}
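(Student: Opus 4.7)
The plan is a direct computation: I will apply Leibniz to $\nabla_\mu \dot Q^\mu{}_\nu[\bara]$ and substitute from the two equations in~\eqref{eq:var}. After an expansion in which the pair-exchange symmetry $H^{\mu\nu\kappa\lambda}[\bara] = H^{\kappa\lambda\mu\nu}[\bara]$ (inherited from the Lagrangian structure~\eqref{hform1}--\eqref{hform2}) collapses the two equal contributions produced by differentiating the trace term $\tfrac{1}{4}\delta^\mu_\nu H^{\beta\zeta\kappa\lambda}[\bara]\fara_{\kappa\lambda}\fara_{\beta\zeta}$, one is left with five terms. Two of these involve $\nabla H[\bara]$ but no $\nabla\fara$, and they reproduce verbatim the first and second pieces on the right-hand side of~\eqref{eq:stressdivl}. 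After contraction with $\widetilde{n}^\nu$, three further pieces with $\nabla\fara$ factors remain to be identified with the $\mathcal{I}$- and $\mathcal{J}$-terms.

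The piece $\widetilde{n}^\nu H^{\mu\zeta\kappa\lambda}[\bara](\nabla_\mu\fara_{\kappa\lambda})\fara_{\nu\zeta}$ is dispatched by the second equation in~\eqref{eq:var}: since $\Pi$ controls only the $\Sigma_{t^*}$-tangential component, we have $H^{\mu\zeta\kappa\lambda}[\bara]\nabla_\mu\fara_{\kappa\lambda} = \mathcal{I}^\zeta + c\,\widetilde{n}^\zeta$ for some scalar $c$, and the normal remainder drops against $\widetilde{n}^\nu \fara_{\nu\zeta}$ by antisymmetry of $\fara$. This produces the $\mathcal{I}^\zeta\fara_{\nu\zeta}$ term.

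The remaining two pieces, which I denote $X := \widetilde{n}^\nu H^{\mu\zeta\kappa\lambda}[\bara]\fara_{\kappa\lambda}\nabla_\mu\fara_{\nu\zeta}$ and $-\tfrac{1}{2}Y$ with $Y := \widetilde{n}^\nu H^{\mu\zeta\kappa\lambda}[\bara]\fara_{\kappa\lambda}\nabla_\nu\fara_{\mu\zeta}$ (using pair-exchange to align with $X$'s index pattern), will be combined via a Bianchi-type identity. Setting $Z := \widetilde{n}^\nu H^{\mu\zeta\kappa\lambda}[\bara]\fara_{\kappa\lambda}\nabla_\zeta\fara_{\mu\nu}$, the dummy swap $\mu \leftrightarrow \zeta$, combined with antisymmetry of $H$ in $(\mu\zeta)$ and of $\fara$, gives $X = Z$. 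The Bianchi relation $\nabla_\mu\fara_{\nu\zeta}+\nabla_\nu\fara_{\zeta\mu}+\nabla_\zeta\fara_{\mu\nu}=(d\fara)_{\mu\nu\zeta}$ then yields $X - Y + Z = B$, with $B := \widetilde{n}^\nu H^{\mu\zeta\kappa\lambda}[\bara]\fara_{\kappa\lambda}(d\fara)_{\mu\nu\zeta}$, so that $X - \tfrac{1}{2}Y = \tfrac{1}{2}B$. It remains to identify $B$ with the $\mathcal{J}$-term: by the $4$D Lorentzian Hodge-dual identity $(d\fara)_{\mu\nu\zeta} = \varepsilon_{\sigma\mu\nu\zeta}\nabla_\alpha\farad^{\alpha\sigma}$, together with the first equation in~\eqref{eq:var}, we have $(d\fara)_{\mu\nu\zeta} = \varepsilon_{\sigma\mu\nu\zeta}(\mathcal{J}^\sigma + c'\widetilde{n}^\sigma)$. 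The normal component $c'\widetilde{n}^\sigma$ drops in $B$ by antisymmetry of $\varepsilon$ in $\sigma\nu$ paired against the symmetric $\widetilde{n}^\sigma\widetilde{n}^\nu$, and a final even permutation of the $\varepsilon$ indices together with antisymmetry of $H$ in $(\mu\zeta)$ produces precisely the $-\tfrac{1}{2}H^{\zeta\eta\kappa\lambda}[\bara]\fara_{\kappa\lambda}\mathcal{J}^\alpha\varepsilon_{\alpha\nu\zeta\eta}$ term of~\eqref{eq:stressdivl}.

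The main obstacle is the index bookkeeping in the $X$--$Y$--$Z$ step, where several symmetries of $H$ and antisymmetries of $\fara$ and $\varepsilon$ have to be combined carefully. The conceptual point --- repeated in both the $\mathcal{I}$- and $\mathcal{J}$-reductions --- is that the linearized equations control only the $\Sigma_{t^*}$-tangential components of the relevant divergences; each normal remainder must be verified to drop by an appropriate antisymmetry (of $\fara$ in the $\mathcal{I}$-step, and of $\varepsilon$ paired against $\widetilde{n}^{\otimes 2}$ in the $\mathcal{J}$-step). A secondary subtlety is the signature-dependent sign in the Hodge-dual identity $^\star(d\fara) \propto \nabla_\alpha\farad^{\alpha\cdot}$, which must be tracked consistently to match the exact $\varepsilon_{\alpha\nu\zeta\eta}$ appearing in~\eqref{eq:stressdivl}.
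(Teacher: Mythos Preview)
Your proof is correct and follows essentially the same route as the paper: Leibniz expansion of $\nabla_\mu\dot Q^\mu{}_\nu$, use of the pair-exchange symmetry of $H$, substitution of the source $\mathcal I$, and the Bianchi-type identity to produce the $\mathcal J$-term. The only organizational difference is that the paper introduces an orthonormal frame $(W_0,W_1,W_2,W_3)$ with $W_0=\widetilde n_{\Sigma_{t^*}}$ and observes directly that the sourced cyclic identity $\nabla_{[\eta}\fara_{\zeta\beta]}=\mathcal J^\alpha\varepsilon_{\alpha\eta\zeta\beta}$ holds whenever at least one of the frame indices is $0$, whereas you phrase the same observation as ``unprojected quantity $=$ projected source $+$ normal remainder'' and kill the remainder by antisymmetry; your treatment makes this mechanism slightly more explicit than the paper does.
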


\begin{proof}[Proof of Lemma]
	Let us consider the normal $W_0 := n_{\widetilde{\Sigma}_{t^*}}$, and complete it to a smooth, positively oriented global orthonormal frame of $T (\mathcal{S} \cap \{r \geq r_{\text{in}}\})$. Let us call the resulting frame $(W_0, W_1, W_2, W_3)$. By use of this frame, we notice that the first equation of display (\ref{eq:var}) implies the following:
	\begin{equation}\label{eq:cycderproj}
	W^\eta_a W^\zeta_b W^\beta_c (\nabla_\eta \fara_{\zeta \beta} + \nabla_\beta \fara_{\eta \zeta}+\nabla_\zeta \fara_{\beta \eta}) = W^\eta_a W^\zeta_b W^\beta_c ( \mathcal{J}^\alpha \varepsilon_{\alpha\eta\zeta\beta}),
	\end{equation}
	whenever $a, b, c \in \{0,1,2,3\}$, and $\{a,b,c\}$ is not a permutation of $\{1,2,3\}$.
	Then, by virtue of (\ref{eq:cycderproj}) and the system (\ref{eq:var}), we have
	\begin{equation*}
	\begin{aligned}
	&W_0^\nu \nabla_\mu   \dot{Q} \indices{^\mu _\nu}[\bara]\\ =&
	W_0^\nu\left((\nabla_\mu H^{\mu\zeta\kappa\lambda}[\bara]) {\fara}_{\kappa\lambda} {\fara}_{\nu\zeta} + H^{\mu\zeta\kappa\lambda}[\bara]\nabla_\mu {\fara_{\kappa\lambda}}{\fara}_{\nu\zeta} + H^{\mu \zeta \kappa\lambda}[\bara] {\fara}_{\kappa\lambda} \nabla_\mu \fara_{\nu\zeta} \right.\\
	&\left.- \frac 1 4 (\nabla_\nu H^{\zeta \eta \kappa\lambda}[\bara] \fara_{\zeta \eta} \fara_{\kappa\lambda})- \frac 1 2 \delta^\mu_\nu H^{\zeta \eta \kappa \lambda}[\bara] \nabla_\mu \fara_{\zeta \eta} {\fara}_{\kappa\lambda}\right)\\ = &
	W_0^\nu \left(
	(\nabla_\mu H^{\mu\zeta\kappa\lambda}[\bara]) {\fara}_{\kappa\lambda} {\fara}_{\nu\zeta}- \frac 1 4 (\nabla_\nu H^{\zeta \eta \kappa\lambda}[\bara] \fara_{\zeta \eta} \fara_{\kappa\lambda})+
	\mathcal{I}^\zeta \fara_{\nu \zeta}+ H^{\mu\zeta\kappa\lambda}[\bara]\fara_{\kappa\lambda}\nabla_\mu \fara_{\nu \zeta}\right.\\
	&\left.- \frac 1 2 H^{\zeta \eta \kappa\lambda}[\bara] \fara_{\kappa\lambda}(-\nabla_\eta \fara_{\nu \zeta}- \nabla_\zeta \fara_{\eta \nu} + \mathcal{J}^\alpha \varepsilon_{\alpha \nu \zeta \eta}) \right) \\
	= & W_0^\nu \left(
	(\nabla_\mu H^{\mu\zeta\kappa\lambda}[\bara]) {\fara}_{\kappa\lambda} {\fara}_{\nu\zeta}- \frac 1 4 (\nabla_\nu H^{\zeta \eta \kappa\lambda}[\bara] \fara_{\zeta \eta} \fara_{\kappa\lambda})+
	\mathcal{I}^\zeta \fara_{\nu \zeta}\right. \\
	&\left. - \frac 1 2 H^{\zeta \eta \kappa\lambda}[\bara] \fara_{\kappa\lambda}( \mathcal{J}^\alpha \varepsilon_{\alpha \nu \zeta \eta}) \right)
	\end{aligned}
	\end{equation*}
	All in all,
	\begin{equation}
	\begin{aligned}
	&W_0^\nu \nabla_\mu  \dot{Q} \indices{^\mu _\nu}[\bara] \\
	&= W_0^\nu \left(
	(\nabla_\mu H^{\mu\zeta\kappa\lambda}[\bara]) {\fara}_{\kappa\lambda} {\fara}_{\nu\zeta}- \frac 1 4 (\nabla_\nu H^{\zeta \eta \kappa\lambda}[\bara] \fara_{\zeta \eta} \fara_{\kappa\lambda})\right.\\
	&\left.
	+\mathcal{I}^\zeta \fara_{\nu \zeta}
	- \frac 1 2 H^{\zeta \eta \kappa\lambda}[\bara] \fara_{\kappa\lambda}( \mathcal{J}^\alpha \varepsilon_{\alpha \nu \zeta \eta})\right).
	\end{aligned}
	\end{equation}
\end{proof}
\begin{remark}
	Using the definition of $\hdelta$ we have that Equation~\eqref{eq:stressdivl} can be rewritten as
	\begin{equation}\label{eq:sdivhdeltal}
	\begin{aligned}
	n_{\widetilde{\Sigma}_{t^*}}^\nu (\nabla_\mu   \dot{Q} \indices{^\mu _\nu}[\bara]) &= 
	{n}_{\widetilde{\Sigma}_{t^*}}^\nu \left((\nabla_\mu \hdelta^{\mu\zeta\kappa\lambda}[\bara]) {\fara}_{\kappa\lambda} {\fara}_{\nu\zeta}- \frac 1 4 (\nabla_\nu \hdelta^{\zeta \eta \kappa\lambda}[\bara] \fara_{\zeta \eta} \fara_{\kappa\lambda})+
	\mathcal{I}^\zeta \fara_{\nu \zeta}\right.\\
	&\left. - \frac 1 2 \hdelta^{\zeta \eta \kappa\lambda}[\bara] \ \fara_{\kappa\lambda}(\mathcal{J}^\alpha \varepsilon_{\alpha \nu \zeta \eta}) -\mathcal{J}^\alpha \farad_{\alpha \nu}\right).
	\end{aligned}
	\end{equation}
\end{remark}

\subsection{A priori energy estimates for the linear system}\label{sub:apriorilocal}
First, let us define useful spacetime regions.

\begin{definition}[Cones with spacelike boundary]
	Let $r_{\text{in}} \in (0,2M)$. Let $t^*_{0} \geq 0$. Consider a point $p_0$ whose $r$-coordinate is such that $r \geq r_{\text{in}}$, and such that its $t^*$-coordinate (which we call $T^*$) satisfies $T^* \geq t_0^*$. Then, we define the past cone $\mathcal{A}_{t_{0}^*}^{p_0}$ as the following spacetime region:
	\begin{equation*}
	\mathcal{A}_{t_{0}^*}^{p_0} := \{p \in \mathcal{S}, \underline d(p|_{T^*}, p_0) \leq 2 (T^*-t_p^*), \text{where } t_p^* \text{ is the } t^* \text{-coordinate of }p\} \cap \{ r \geq r_{\text{in}}\}.
	\end{equation*}
	Here, we denote by $p|_{T^*}$ the point $q$ having the same coordinates of $p$, except for the $t^*$-coordinate, which we set to be equal to $T^*$, and $\underline d$ is the Riemannian distance induced on the $\widetilde \Sigma_{t^*}$-surfaces. Notice that the boundary of this cone is spacelike (if $r_{\text{in}} < 2M$, the surface $r = r_{\text{in}}$ is spacelike).
\end{definition}

\begin{remark}
	We make this definition in order to have a local version of the existence theorem. By a domain of dependence argument, we can then construct a global solution to our problem, given that the time of existence for each piece is uniformly controlled. Furthermore, this definition serves the purpose that, if we have smallness of $\gara$ in $L^\infty$, we can then say that the boundary terms arising in our energy estimates are positive, given that the ``linear part'' of the canonical stress will dominate in such regime.
\end{remark}

We state the following a priori estimates for the linear system (\ref{linmbi}).
\begin{proposition}\label{energyprop}
	Let $m \in \enn$, $m \geq 5$. Let $t_0^* \geq 0$ and $r_{\text{in}} \in (M, 2M)$. There exist a small number $A(m) > 0$, a large integer $w(m) \in \N_{\geq 0}$ and a constant $C(A, m) >0$ so that the following holds. Let $p_0 \in \mathcal{S} \cap \{ r \geq r_{\text{in}}\}$, such that $0 \leq t^*|_{p_0} -t_0^* \leq A$. Let $T^* := t^*|_{p_0}$. Let $\fara$ solve the linearized MBI system (\ref{linmbi}) on $\mathcal{A}_{t_{0}^*}^{p_0}$ with smooth initial data $\fara_0 \in H^m(\Sigma_{t^*_0} \cap \mathcal{A}_{t_{0}^*}^{p_0})$. Let us suppose that $\bara$ is a smooth tensor field, which satisfies
	$$
	\norm{\bara}_{L^\infty(\mathcal{A}_{t_{0}^*}^{p_0})} \leq A.$$
	Then, for all $t^* \in (t_0^*, T^*)$,
	\begin{equation}\label{eq:apriolin}
	\norm{\fara}^2_{\widetilde H^m(\Sigma_{t^*} \cap \mathcal{A}_{t_{0}^*}^{p_0})} \leq \norm{\fara_0}^2_{\widetilde H^m(\Sigma_{t_0^*} \cap \mathcal{A}_{t_{0}^*}^{p_0} )}\exp\left(C \left(t^*-t^*_0 +\int_{t_0^*}^{t^*} \norm{\bara}^w_{\widetilde{H}^m(\mathcal{A}^{p_0}_{t_{0}^*} \cap \widetilde{\Sigma}_s)}\de s\right)\right).
	\end{equation}
\end{proposition}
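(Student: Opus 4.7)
The plan is to derive a differential energy inequality for $\norm{\fara}^2_{\widetilde H^m}$ by applying the divergence theorem to the one-form $X^\nu \dot Q\indices{^\mu_\nu}[\bara]$ on the past cone $\mathcal{A}_{t_0^*}^{p_0}$, where $X$ is a future-directed timelike multiplier. A natural choice is $X = \redsh$, which is timelike up to and including $\mathcal{H}^+$ and whose deformation tensor is bounded on the cone. The crucial positivity input is that, since $\norm{\bara}_{L^\infty} \leq A$ with $A$ small, the correction $H_\Delta[\bara]$ is pointwise small compared to the Maxwell kernel $\tfrac12(g^{\mu\kappa}g^{\nu\lambda}-g^{\mu\lambda}g^{\nu\kappa})$; consequently the boundary integrand $X^\nu \dot Q\indices{^\mu_\nu}[\bara]\,\widetilde n_{\widetilde\Sigma_{t^*},\mu}$ is pointwise comparable to the linear Maxwell energy density, hence to $|\fara|^2$, provided $A$ is smaller than a universal constant. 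The spacelike lateral boundary of $\mathcal{A}_{t_0^*}^{p_0}$ contributes with a favorable sign by the same coercivity and can be discarded.

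For the bulk term produced by Stokes' Theorem, Lemma~\ref{lem:qdivecan} applied with $\mathcal{I} = \mathcal{J} = 0$ (since $\fara$ solves~\eqref{linmbi}) yields an integrand that splits into a piece proportional to the deformation tensor of $X$ contracted with $\dot Q[\bara]$, pointwise bounded by $C|\fara|^2$, and a piece containing $\nabla H_\Delta[\bara]$, pointwise bounded schematically by $|\bara|\,|\nabla\bara|\,|\fara|^2$. Together with the boundary coercivity this gives the base-level ($m=0$) inequality after integration and use of the smallness of $A$.

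To obtain the higher-order estimate, I would commute the system~\eqref{linmbi} with Killing fields from $\mathcal{K}$ (see~\eqref{def:killing}) and, near the horizon, with $\nabla_Y$, producing variations $\faram_{,I,a}$. Each commutation produces nonvanishing source terms $\mathcal{I}_\alpha$, $\mathcal{J}_\alpha$ entering the divergence identity~\eqref{eq:sdivhdeltal}; schematically these are sums $(\nabla^{\leq m}\bara)\cdot(\nabla^{\leq m}\fara)$ in which at most one factor attains top order, thanks to the fact that $H[\bara]$ depends on $\bara$ and not on $\fara$. Applying the stress identity to every $\faram_{,I,a}$, summing, and invoking Cauchy--Schwarz in $L^2(\widetilde\Sigma_{t^*}\cap\mathcal{A}_{t_0^*}^{p_0})$, the top-order factor is kept in $L^2$ and the remaining factors are placed in $L^\infty$ via the three-dimensional Sobolev embedding $\widetilde H^{m}\hookrightarrow L^\infty$, which is available since $m\geq 5$. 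Each such factor produces a power of $\norm{\bara}_{\widetilde H^m}$; the number of such factors in the worst term from expanding $\ellmbi^{-2}$ as a geometric series in the (small) quantities $\lun[\bara],\ldu[\bara]$, together with the cubic structure of $H_\Delta$, determines the exponent $w=w(m)$. Assembling the estimates over all $I\in\iindicess^{\leq m}$ and $a\leq m$ yields
\begin{equation*}
\frac{d}{dt^*}\norm{\fara}^2_{\widetilde H^m(\widetilde\Sigma_{t^*}\cap\mathcal{A}^{p_0}_{t_0^*})} \leq C\left(1 + \norm{\bara}^w_{\widetilde H^m(\widetilde\Sigma_{t^*}\cap\mathcal{A}^{p_0}_{t_0^*})}\right)\norm{\fara}^2_{\widetilde H^m(\widetilde\Sigma_{t^*}\cap\mathcal{A}^{p_0}_{t_0^*})},
\end{equation*}
from which Gronwall's inequality yields~\eqref{eq:apriolin}.

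The main obstacle I anticipate is the commutator bookkeeping for the quasilinear second equation of~\eqref{linmbi}: one must verify that no commutator term carries two factors at top derivative order $m$ simultaneously, which requires carefully exploiting the decoupling of the $\bara$-dependence of $H[\bara]$ from $\fara$, so that differentiating the principal part always produces top-order derivatives of $\fara$ multiplied by strictly lower-order derivatives of $\bara$ (the latter then being placed in $L^\infty$). A secondary difficulty is the commutator with the non-Killing vectorfield $\nabla_Y$, whose deformation tensor must be controlled near the horizon; this is handled by the redshift structure of $\redsh$, which produces a strictly positive bulk contribution on the support of $\chi_Y$ that absorbs the non-Killing error into the Gronwall factor.
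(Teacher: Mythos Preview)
Your overall strategy is correct and matches what the paper describes (the paper itself omits the proof, calling it a standard application of Lemma~\ref{lem:qdivecan}, commutation, and Gronwall). There is, however, one genuine issue with your choice of multiplier that you should be aware of.

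Lemma~\ref{lem:qdivecan} computes only $\widetilde{n}_{\widetilde\Sigma_{t^*}}^{\nu}\,\nabla_\mu \dot Q\indices{^\mu_\nu}[\bara]$, i.e.\ the divergence contracted with the foliation normal. This restriction is essential: the linearized system~\eqref{linmbi} contains only the \emph{evolution} equations (those projected by $\Pi_{\nu\alpha}$), not the constraints. In the proof of the Lemma, the cyclic identity~\eqref{eq:cycderproj} is available only when at least one slot is $W_0=\widetilde{n}$; in the purely spatial directions it would require the constraint equations, which are not assumed (they are only propagated at the nonlinear level, see Section~\ref{sub:proofloc}). Consequently, for a multiplier $X$ with nontrivial tangential component such as $\redsh$, the bulk term $(\nabla_\mu \dot Q\indices{^\mu_\nu})X^\nu$ is \emph{not} controlled by Lemma~\ref{lem:qdivecan}. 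The fix is immediate: take $X=\widetilde{n}_{\widetilde\Sigma_{t^*}}$ (or a positive scalar multiple), which is uniformly timelike on the cone, so that the boundary integrand $\dot Q_{\mu\nu}\widetilde{n}^\mu\widetilde{n}^\nu$ is coercive for $A$ small and the bulk is exactly what Lemma~\ref{lem:qdivecan} gives.

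Relatedly, the redshift multiplier and the $\nabla_Y$-commutation are unnecessary here. The cone $\mathcal{A}_{t_0^*}^{p_0}$ has height at most $A$ and spatial diameter at most $2A$, hence it is compact with $r$ bounded above and below; there is no horizon degeneration or $r$-weight issue in this local estimate. Commuting with a regular frame (or with $\mathcal{K}$ together with $\partial_{r_1}$) suffices to control all $m$ derivatives, and the error bookkeeping you describe goes through unchanged.
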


\begin{remark}
	We do not provide the proof of this Proposition, the reason being that it is a standard application of the divergence of the stress tensor being of lower order, in the case of the linearized system (Lemma~\ref{lem:qdivecan}) along with commutation of the MBI system and the Gronwall inequality. 
\end{remark}

\subsection{Local-in-time existence for the linear system}\label{sub:locallin}
We now continue to deduce local existence for the linearized system \eqref{linmbi} from the estimates just stated above. This argument will make essential use of a constructive element, which in this case is the Cauchy--Kowalevskaya Theorem. Furthermore, it will use the fact that we can uniformly approximate a function of several variables and a given number of its derivatives with a polynomial on a compact set and will make use of the energy estimates to provide compactness.

\begin{proposition}[Local existence for the linearized system \eqref{linmbi}] \label{prop:linex} Let $t_0^* >0$ the initial time and $r_{\text{in}} \in (M, 2M)$ an interior radius, and let $m \in \enn$, $m \geq 5$.  There exists a small number $\varepsilon_{\text{lin}} >0$ such that the following holds.
	Let $p_0$ a point in $\mathcal S$ such that, defining $T^* := t^*|_{p_0}$, $T^*-t_0^* \leq \varepsilon_{\text{lin}}$ and $r|_{p_0} \geq r_{\text{in}}$.
	
	Let $\bara \in \Lambda^2(\mathcal{A}_{t_0^*}^{p_0})$ be a $2$-form defined on the region $\mathcal{A}_{t_0^*}^{p_0}$, which furthermore enjoys the following smallness property:
	\begin{equation}
	\norm{\bara}_{L^\infty(\mathcal{A}_{t_0^*}^{p_0})} \leq \varepsilon_{\text{lin}} \text{ on } \mathcal{A}_{t_0^*}^{p_0}.
	\end{equation}
	Let $\fara_0 \in \Lambda^2(\widetilde{\Sigma}_{t_0^*} \cap \mathcal{A}_{t_0^*}^{p_0})$ a smooth $2$-form defined on the base of the cone (the initial data for the linearized system).
	
	Under these conditions, there exists a unique smooth solution $\fara \in \Lambda^2(\mathcal{A}_{t_0^*}^{p_0})$ to the initial value problem arising from the linearized system \eqref{linmbi}:
	\begin{equation}\label{eq:linearizeda}
	\left\{
	\begin{array}{ll}
	\Pi_{\nu \alpha} \nabla_\mu \farad^{\mu \nu} = 0 & \text{on } \mathcal{A}_{t_0^*}^{p_0},\\
	H^{\mu\nu\kappa\lambda}[\bara] \ \Pi_{\nu \alpha}  \nabla_\mu \fara_{\kappa\lambda} = 0 & \text{on } \mathcal{A}_{t_0^*}^{p_0},\\
	\fara =\fara_0 & \text{on } \mathcal{A}_{t_0^*}^{p_0} \cap \widetilde \Sigma_{t_0^*}.
	\end{array}\right.
	\end{equation}
	Here, $H^{\mu\nu\kappa\lambda}[\gara]$ is the tensor field defined previously (see~\eqref{hform1}). Furthermore, $\Pi_{\nu\alpha}$ is the orthogonal projection on the $\widetilde{\Sigma}_\tau$.
	
	In addition, we have the estimate:
	\begin{equation}\label{eq:toiter}
	\norm{\fara}^2_{\widetilde H^m(\Sigma_{t^*} \cap \mathcal{A}_{t_{0}^*}^{p_0})} \leq \norm{\fara_0}^2_{\widetilde H^m(\Sigma_{t_0^*} \cap \mathcal{A}_{t_{0}^*}^{p_0} )}\exp\left(C \left(t^*-t^*_0 +\int_{t_0^*}^{t^*} \norm{\bara}^w_{\widetilde{H}^m(\mathcal{A}^{p_0}_{t_{0}^*} \cap \widetilde{\Sigma}_s)}\de s\right)\right).
	\end{equation}
\end{proposition}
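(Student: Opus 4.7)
The plan is to obtain $\fara$ as a limit of solutions to the linearized system corresponding to real-analytic approximations of the data $\fara_0$ and the coefficient $\bara$, each of which can be solved via the Cauchy--Kowalevskaya theorem; the uniform a priori estimate from Proposition~\ref{energyprop} then controls the time of existence of the approximate solutions and supplies the compactness needed to pass to the limit.

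First I would approximate $\bara$ and $\fara_0$ by analytic objects. Since $\bara$ is smooth on the compact cone $\mathcal{A}_{t_0^*}^{p_0}$, a Seeley-type extension followed by convolution with a real-analytic mollifier (equivalently, polynomial approximation on a compact neighborhood) produces a sequence $\bara^{(n)}$ of real-analytic two-forms with $\bara^{(n)} \to \bara$ in every $C^k$-norm and $\norm{\bara^{(n)}}_{L^\infty} \leq 2\varepsilon_{\text{lin}}$; an analogous construction gives real-analytic $\fara_0^{(n)} \to \fara_0$ in $\widetilde{H}^m(\widetilde{\Sigma}_{t_0^*} \cap \mathcal{A}_{t_0^*}^{p_0})$. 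For each $n$, rewrite~\eqref{eq:linearizeda} in the $(t^*, r_1, \theta, \varphi)$ coordinates as a first-order system for which $\partial_{t^*}$ is non-characteristic: this is guaranteed, for $\varepsilon_{\text{lin}}$ sufficiently small, by the fact that the principal part is a small perturbation of the linear Maxwell operator, for which $\widetilde{\Sigma}_{t_0^*}$ is spacelike. The Cauchy--Kowalevskaya theorem then yields an analytic solution $\fara^{(n)}$ in some neighborhood of the initial hypersurface, and the estimate~\eqref{eq:apriolin} applied to this smooth solution on any region of existence provides bounds on $\norm{\fara^{(n)}}_{\widetilde{H}^m}$ uniform in $n$. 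A standard continuation argument (the smooth solution cannot blow up while its $H^m$-norm remains finite) then extends $\fara^{(n)}$ to the full cone $\mathcal{A}_{t_0^*}^{p_0}$.

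Using this uniform bound in $L^\infty_{t^*}(\widetilde{H}^m)$, one reads $\partial_{t^*}\fara^{(n)}$ off the equation to obtain a uniform bound in $L^\infty_{t^*}(\widetilde{H}^{m-1})$; an Aubin--Lions-type argument then extracts a subsequence converging strongly in $C^0_{t^*}(\widetilde{H}^{m-1})$ and weakly-$\ast$ in $L^\infty_{t^*}(\widetilde{H}^m)$ to a limit $\fara$. Since $m \geq 5$, Sobolev embedding promotes this to $C^1$-convergence of $\fara^{(n)}$, which together with uniform convergence of $\bara^{(n)}$ and the smoothness of the map $\bara \mapsto H^{\mu\nu\kappa\lambda}[\bara]$ on the bounded set $\{\norm{\bara}_{L^\infty} \leq 2\varepsilon_{\text{lin}}\}$ is enough to pass to the limit in~\eqref{eq:linearizeda}; weak lower semicontinuity then yields~\eqref{eq:toiter}, and higher regularity (hence smoothness) follows by commuting the system with coordinate derivatives and reapplying Proposition~\ref{energyprop} at each order. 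Uniqueness is immediate: the difference of two solutions with the same data and coefficient satisfies~\eqref{eq:linearizeda} with vanishing right-hand side and zero data, so~\eqref{eq:apriolin} forces it to vanish identically.

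The main obstacle is the passage to the limit through the coefficient $H^{\mu\nu\kappa\lambda}[\bara^{(n)}]$, which is nonlinear in $\bara^{(n)}$ and involves the non-polynomial factor $\ellmbi^{-2}[\bara^{(n)}] = (1 + \lun[\bara^{(n)}] - \ldu^2[\bara^{(n)}])^{-1}$. The uniform smallness $\norm{\bara^{(n)}}_{L^\infty} \leq 2\varepsilon_{\text{lin}}$ keeps this denominator bounded away from zero and makes $\bara \mapsto H[\bara]$ uniformly Lipschitz on the relevant range, so that strong $C^0$-convergence of $\bara^{(n)}$ combined with weak convergence of the top derivatives of $\fara^{(n)}$ suffices to identify the limit of $H^{\mu\nu\kappa\lambda}[\bara^{(n)}] \, \nabla_\mu \fara^{(n)}_{\kappa\lambda}$ in the sense of distributions and thereby close the argument.
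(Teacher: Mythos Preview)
Your overall strategy---Cauchy--Kowalevskaya for analytic approximants, the a priori estimate~\eqref{eq:apriolin} for uniform bounds, and compactness to pass to the limit---matches the paper's. The paper proceeds in two stages rather than one (first approximate $\fara_0$ with $\bara$ fixed and analytic, then approximate $\bara$), but that difference by itself is cosmetic.

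The substantive gap is your continuation step. You write that ``a standard continuation argument (the smooth solution cannot blow up while its $H^m$-norm remains finite) then extends $\fara^{(n)}$ to the full cone''. That blow-up criterion presupposes an $H^m$-based local existence theory for the linearized system---which is exactly what this Proposition is meant to establish. The Cauchy--Kowalevskaya theorem gives an existence time governed by the analyticity class $\mathscr{A}_{M,c_0}$ of the coefficients and data, not by any Sobolev norm; the uniform $\widetilde H^m$ bound from~\eqref{eq:apriolin} says nothing about whether the analytic solution can be restarted via CK. So as written the argument is circular.

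The paper closes this gap by exploiting \emph{linearity} of the system in $\fara$: with $\bara$ analytic and fixed, one may divide the (polynomial) approximants $\fara_0^{(j)}$ by large constants so that they all lie in the \emph{same} class $\mathscr{A}_{M,c_0}$ as the coefficients. Theorem~\ref{thm:ck} then gives a CK existence time $t_1^*$ depending only on $M,c_0$---in particular independent of $j$---and one passes to the limit to obtain a solution for arbitrary smooth $\fara_0$ on a slab of uniform thickness. Since that thickness does not depend on the data, iteration (combined with domain of dependence) covers the whole cone. Only after this is done does one approximate $\bara$ itself. Your single-stage approximation can be made to work too, but you must replace the $H^m$-continuation sentence by this linearity/scaling observation for each fixed $n$.
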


\begin{remark}
	We stress that the solution $\fara$ is defined on the whole set $\mathcal{A}_{t_0^*}^{p_0}$. 
\end{remark}
\begin{remark}	
	We follow the exposition in the book by Fritz John~\cite{fritzj}. An alternative proof could be obtained by writing our system in symmetric hyperbolic form (see~\cite{taylor}, Section 16.1).
\end{remark}

\begin{proof}
	We divide the proof in three \textbf{Steps}. First, we will express our system in the form required by the \emph{Cauchy--Kowalevskaya} Theorem. We will then proceed to apply said theorem to deduce existence of solutions to the linearized system in analysis when both all the components of $\fara_0$ and of $\gara$ are polynomials in regular coordinates for the set $\mathcal{A}_{t_0^*}^{p_0}$. We finally pass to the limit employing Stone--Weierstrass and the a priori estimates.
	
	\subsection*{Step 1}
	Without loss of generality, let us suppose that the cone $\mathcal{A}_{t_0^*}^{p_0} \subset \{r \geq 3M\}$. The case in which $\mathcal{A}_{t_0^*}^{p_0}$ is close to $\{r = 2M \}$ can be dealt with by an appropriate choice of coordinates. Let us furthermore, in this step, assume that $\gara$ is analytic on the cone $\mathcal{A}_{t_0^*}^{p_0}$.
	
	Under this assumption, the linearized system, using the electric--magnetic decomposition, can be written in Cauchy--Kowalevskaya form on $\mathcal{A}_{t_0^*}^{p_0}$.
	
	We start from the usual coordinates (irregular at $r = 2M$ and at the north and south poles): $(t, r, \theta, \varphi)$. We then define coordinates regular on $\mathcal{A}_{t_0^*}^{p_0}$ by the formula:
	\begin{equation}
	\begin{aligned}
	x^0 :&= t,\\
	x^1 :&= r \sin \theta \cos \varphi,\\
	x^2 :&= r \sin \theta \sin \varphi,\\
	x^3 :&= r \cos \theta.
	\end{aligned}
	\end{equation}
	In these coordinates, it is then clear that the metric $g$ will be represented by the matrix
	\begin{equation*}
	g_1 = \left(
	\begin{array}{cccc}
	-(1-\mu) & 0 & 0 & 0 \\
	0 & h_{11} & h_{12} & h_{13} \\
	0 & h_{11} & h_{12} & h_{13} \\
	0 & h_{11} & h_{12} & h_{13}
	\end{array}
	\right)
	\end{equation*}
	with $H := (h_{ij})_{i,j =1, 2, 3}$ being a positive definite matrix. Each $h_{ij}$ is an analytic function of $x^1, x^2, x^3$. Furthermore, there exists a number $b_1 > 0$ such that, for all points $(x^0, x^1, x^2, x^3)$ in the set $r \geq 3M$, we have that the matrix $H - b_1 I$ is positive definite.
	
	The electric--magnetic decomposition of $\fara$ is then defined in the usual way:
	\begin{align*}
	E_i := \fara(\partial_{x^0}, \partial_{x^i}), \qquad B_i := \farad(\partial_{x^0}, \partial_{x^i}),\\
	E^{\bara}_i := \bara(\partial_{x^0}, \partial_{x^i}), \qquad B^{\bara}_i := \barad(\partial_{x^0}, \partial_{x^i}).
	\end{align*}
	Assemble the first row of numbers into a vector: $D := (E_1, E_2, E_3, B_1, B_2, B_3)$.
	
	We now write the first equations
	$$
	\Pi_{\nu \lambda} \nabla_\mu \farad^{\mu \nu} = 0
	$$
	with respect to the $(x^i)_{i = 0, \ldots, 3}$ coordinates. These are the equations relative to the $B$ part of the field.
	
	Contracting successively with $\partial_{x^i}^\lambda$, $i =1, 2, 3$, and expanding the terms arising from the Christoffel symbols, we obtain
	\begin{equation}\label{eq:ckone}
	-(1-\mu)^{-1} \partial_{x^0} B_i = -(h^{-1})^{kl} \partial_{x^k} \farad(\partial_{x^l}, \partial_{x^i})
	+ F_1(x_1, x_2, x_3, D).
	\end{equation}
	Here, $F_1$ is an analytic function of its arguments. Together with the fact that
	$$
	\farad(\partial_{x^l}, \partial_{x^i}) = \sum_{i=1}^6 f_i(x^1, x^2, x^3) E_i,
	$$
	the $f_i$'s being analytic in their arguments, we have that Equation~\eqref{eq:ckone} is in Cauchy--Kowalevskaya form.
	
	Similarly, we analyze the equation
	$$
	H^{\mu\nu\kappa\lambda}[\bara] \ \Pi_{\nu \alpha}  \nabla_\mu \fara_{\kappa\lambda} = 0.
	$$
	This can be rewritten as
	$$
	\Pi_{\nu \alpha} (\nabla_\mu \fara^{\mu\nu} + H_{\Delta}^{\mu\nu\kappa\lambda}[\bara] \ \nabla_\mu \fara_{\kappa\lambda}) = 0.
	$$
	By an analogous reasoning, we contract successively with $\partial_{x^i}^\alpha$, $i =1, 2, 3$, and expand the terms arising from the Christoffel symbols. We obtain
	\begin{equation}
	\label{eq:cktwo}
	\begin{aligned}
	(\mathcal{M}_1)^{ij} \partial_{x^0} E_j +	(\mathcal{M}_2)^{ij} \partial_{x^0} B_j = (h^{-1})^{kl} \partial_{x^k} \fara(\partial_{x^l}, \partial_{x^i}) + (\mathcal{M}_3)^{ijkl} \partial_{x^i} \fara(\partial_{x^j}, \partial_{x^k})\\
	+ F_2(x_1, x_2, x_3, D^{\mathcal{G}}, D).
	\end{aligned}
	\end{equation}
	Here, $(\mathcal{M}_1)^{ij}$ and  $(\mathcal{M}_2)^{ij}$ are matrices depending analytically on $x^1, x^2, x^3$ and $\bara$, such that, when $\bara = 0$, $(\mathcal{M}_1)^{ij} = -(1-\mu)^{-1} \delta^{ij}$ and $(\mathcal{M}_2)^{ij} = 0$. Similarly, 
	$(\mathcal{M}_3)^{ijkl}$ is a $4$-matrix depending analytically on $x_1, x_2, x_3$ and $\bara$, such that, when $\bara =0$, $\mathcal{M}_3 = 0$. Finally, $F_2$ is a smooth function of its arguments.
	
	We then substitute in \eqref{eq:cktwo} for $\partial_{x^0}B_i$ from \eqref{eq:ckone}, and invert the matrix $\mathcal{M}_1$: it is possible to invert it because we assume that $\bara$ is small in $L^\infty$ norm, by possibly restricting the size of $\varepsilon_\text{lin}$. With these observations, and the analyticity of the functions involved in Equations \eqref{eq:ckone} and \eqref{eq:cktwo}, we conclude that those two equations, combined, are a Cauchy--Kowalevskaya form for the system \eqref{linmbi}.
	
	\subsection*{Step 2} \textbf{Claim:} Suppose the conditions in the statement of the Proposition, and furthermore suppose that $\bara$ is analytic. Then, the linearized system \eqref{eq:linearizeda} admits a solution on the whole $\mathcal{A}_{t_0^*}^{p_0}$.
	
	\textbf{Proof of Claim.} Let $m \in \enn$, $m \geq 5$. By Stone--Weierstrass, we can approximate the $E$, $B$ components of $\fara_0$ and their derivatives up to order $m$ uniformly by polynomials in $x^1, x^2, x^3$ on the set $\widetilde \Sigma_{t^*} \cap \mathcal{A}_{t_0^*}^{p_0}$. Let us call such an approximating sequence $\{\fara_0^{(j)}\}_{j \in \enn}$. We denote by $E^{(j)}_0$, $B^{(j)}_0$ respectively the $E$, $B$ components of $\fara_0$.
	
	Let us now suppose without loss of generality that the $E$, $B$ components of $\bara$ all belong to $\mathscr{A}_{M, c_0}(\mathcal{A}_{t_0^*}^{p_0})$ for some $M, c_0 >0$. Due to linearity of Equations \eqref{eq:linearizeda}, by dividing the initial data by a large constant (depending on the index $j$), we can also suppose that $E^{(j)}_0$, $B^{(j)}_0$ all belong to $\mathscr{A}_{M, c_0}(\mathcal{A}_{t_0^*}^{p_0} \cap \widetilde{\Sigma}_{t_0^*})$. Then, the Cauchy--Kowalevskaya Theorem (Theorem~\ref{thm:ck}) provides us with a time $t_1^*$ depending only on $c_0, M$ and a sequence of solutions $\fara^{(j)}$ to the system \eqref{eq:linearizeda} on $\mathcal{A}_{t_0^*}^{p_0} \cap \{t^* \leq t_1^*\}$.
	
	By the estimate \eqref{eq:toiter} and Ascoli-Arzel\`a we now have compactness of the sequence $\fara^{(j)}$ in $\mathcal{C}^1$, and hence, passing to the limit and using again the estimates \eqref{eq:toiter}, we obtain a smooth solution to the problem \eqref{eq:linearizeda} on $\mathcal{A}_{t_0^*}^{p_0} \cap \{t^* \leq t_1^*\}$ with the further assumption that $\bara$ is analytic. Since $t_1^*$ does not depend on $\fara_0$, and since the linearized system \eqref{eq:linearizeda} enjoys the domain of dependence property, we have that the solution $\fara$ can be extended to $\mathcal{A}_{t_0^*}^{p_0}$.

	\subsection*{Step 3}
	Finally, we have to remove the analyticity assumption on the components of $\bara$. We just suppose the components of $\bara$ to be smooth. We find an approximating sequence $(\bara_j)_{j \in \enn}$ of tensors whose components $E^{\bara_j}$ and $B^{\bara_j}$ are analytic. Furthermore, we suppose that the approximation is uniform in $\mathcal{C}^m$. This means that for all multi-indices $\underline{\gamma}$ of order less or equal than $m$, we have the following:
	\begin{equation}
	\begin{aligned}
	\sup_{x \in \mathcal{A}_{t_0^*}^{p_0}} |\partial^{\underline{\gamma}}(E^{\bara_j}- E^{\bara})| \to 0 \text{ as } j \to \infty,\\
	\sup_{x \in \mathcal{A}_{t_0^*}^{p_0}} |\partial^{\underline{\gamma}}(B^{\bara_j}- B^{\bara})| \to 0 \text{ as } j \to \infty,
	\end{aligned}
	\end{equation}
	The previous \textbf{step} yields a sequence of smooth solutions $\fara^{(j)}$ to \eqref{eq:linearizeda}, defined on all of $\mathcal{A}_{t_0^*}^{p_0}$. The estimates \eqref{eq:toiter} and Sobolev embedding yield uniform boundedness of the derivatives of $\mathcal{F}^{(j)}$ up to order $3$. The Theorem of Ascoli-Arzel\`a then lets us conclude the existence of a smooth solution in the conditions of the Proposition.
\end{proof}

\subsection{Proof of the local existence Theorem for the MBI system}\label{sub:proofloc}

We are now in shape to prove the local existence Theorem~\ref{thm:local}.

\begin{proof}[Proof of Theorem~\ref{thm:local}]  Proposition~\ref{prop:linex}, and an iteration argument where we set $\bara := \fara^{(n)}$ and we solve for $\fara^{(n+1)}$, combined with the a priori estimates of Proposition~\ref{energyprop} give the existence of a smooth solution to the evolution part of the MBI system on $\mathcal{A}_{t_0^*}^{p_0}$. We recall that the evolution part of the MBI system is the following:
	
	\begin{equation}
	\left\{
	\begin{array}{ll}
	\Pi_{\nu \alpha} \nabla_\mu \farad^{\mu \nu} = 0,\\
	H^{\mu\nu\kappa\lambda}[\bara] \ \Pi_{\nu \alpha}  \nabla_\mu \fara_{\kappa\lambda} = 0.
	\end{array}\right.
	\end{equation}
	
	Hence, we only need to prove propagation of the constraints. We prove the following \textbf{Claim}. Let $N := n_{\widetilde{\Sigma}_{t^*}}$, the future-directed unit normal vector to the foliation $\widetilde{\Sigma}_{t^*}$.
	
	\textbf{Claim.} If $\ipsi$ is a $2$-covariant antisymmetric tensor on the spacetime slab $\regio{}{}$, which further satisfies $\nabla^\mu \sipsi_{\mu X} = 0$, for all $X$ such that $g(X, N) = 0$, then $\nabla_N (\nabla^\mu \sipsi_{\mu N}) = 0$ on $\regio{}{}$.
	
	To prove the \textbf{claim}, let's calculate
	\begin{equation*}
	\nabla^N \nabla^\mu \sipsi_{\mu N} = \nabla^\mu \nabla^N \sipsi_{\mu N} + \rie\indices{^\eta_{NN}^\mu} \sipsi_{\mu\eta} + \rie\indices{^\eta_{\mu N}^\mu} \sipsi_{\eta N} =  \nabla^\mu \nabla^N \sipsi_{\mu N}.
	\end{equation*}
	The second term in the last display vanishes because of antisymmetry of $\ipsi$, and the third term vanishes because of the Ricci-flatness of Schwarzschild. Then
	\begin{equation}\label{eq:propaga}
	\begin{aligned}
	&\nabla^\mu \nabla^N \sipsi_{\mu N} = \frac 1 2 \nabla^\mu \varepsilon_{\mu N\alpha\beta} \nabla^N \ipsi^{\alpha \beta} = - \nabla^\mu \varepsilon_{\mu N \alpha \beta} \nabla^\beta \ipsi^{N \alpha} \\&=
	-\varepsilon_{\mu N \alpha \beta} \nabla^\mu \nabla^\beta \ipsi^{N \alpha}
	- \varepsilon_{\mu\delta\alpha\beta} (\nabla^\mu N^\delta )\ipsi^{N \alpha}.
	\end{aligned}
	\end{equation}
	Here, we used the fact that $\ipsi$ satisfies $\nabla^\mu \sipsi_{\mu X} = 0$, for all $X$ such that $g(X, N) = 0$.
	We also notice that, since $N^\alpha = g^{\alpha\beta} (dt^*)_\beta$, and from the symmetry of the covariant Hessian, we have
	$$
	\varepsilon_{\mu\delta\alpha\beta} \nabla^\mu N^\delta = 0.
	$$
	
	Therefore, we proceed to calculate
	\begin{equation}\label{16}
	-  \varepsilon_{\mu N \alpha \beta} (\nabla^\mu \nabla^\beta \ipsi^{N \alpha}-\nabla^\beta \nabla^\mu \ipsi^{N \alpha} )= \varepsilon_{\beta N \alpha \mu} (\rie\indices{_{\gamma N}^{\mu\beta}} \ipsi^{\gamma\alpha} + \rie\indices{_\gamma^{\alpha \mu \beta}}\ipsi^{N \gamma}).
	\end{equation}
	By the first Bianchi identity, we see that the second term in the last display (\ref{16}) is necessarily zero.
	
	We notice the following facts, the second implied by the form of the Riemann tensor in Schwarzschild in Appendix~\ref{sec:riemann}:
	\begin{itemize}
		\item Away from the horizon, $N = A \partial_t + B \partial_r$, with $A$ and $B$ smooth and bounded functions;
		\item In the frame $\partial_t, \partial_r, \partial_\phi, \partial_\theta$, the nonvanishing components of the Riemann tensor are: $\rie(\partial_{x^\alpha},\partial_{x^\beta},\partial_{x^\beta},\partial_{x^\alpha})$.
	\end{itemize}
	Define now, abusing notation, the shorthand $T := \partial_t$, $R:= \partial_r$, $\theta := \partial_\theta$, $\varphi := \partial_\varphi$, the coordinate fields in the irregular coordinate system $(t,r,\theta, \varphi)$. We have the following:
	\begin{align*}
	&\varepsilon_{\beta T \alpha \mu} (\rie\indices{_{\gamma T}^{\mu\beta}} \ipsi^{\gamma\alpha}) = 2 \varepsilon_{\beta T \alpha T} (\rie\indices{_{\gamma T}^{T\beta}} \ipsi^{\gamma\alpha}) = 0, \\
	&\varepsilon_{\beta R \alpha \mu} (\rie\indices{_{\gamma R}^{\mu\beta}} \ipsi^{\gamma\alpha}) = 2 \varepsilon_{\beta R \alpha R} (\rie\indices{_{\gamma R}^{R\beta}} \ipsi^{\gamma\alpha}) = 0, \\
	&\varepsilon_{\beta R \alpha \mu} \rie\indices{_{\gamma T}^{\mu \beta}} \ipsi^{\gamma \alpha} + \varepsilon_{\beta T \alpha \mu} \rie\indices{_{\gamma R}^{\mu\beta}} \ipsi^{\gamma \alpha} \\
	& = 2 \varepsilon_{TR \alpha\mu} \rie\indices{_{\gamma T}^{\mu T}} \ipsi^{\gamma \alpha} + 2 \varepsilon_{R T \alpha \mu} \rie\indices{_{\gamma R }^{\mu R}} \ipsi^{\gamma \alpha} \\
	&= 2 \varepsilon_{TR \theta\varphi} \rie\indices{_{\varphi T}^{\varphi T}} \ipsi^{\varphi \theta} + 
	2 \varepsilon_{TR \varphi\theta} \rie\indices{_{\theta T}^{\theta T}} \ipsi^{\theta \varphi} +
	2 \varepsilon_{R T \theta \varphi} \rie\indices{_{\varphi R }^{\varphi R}} \ipsi^{\varphi \theta} +
	2 \varepsilon_{R T \theta \varphi} \rie\indices{_{\theta R }^{\theta R}} \ipsi^{\varphi \theta} \\
	& =2 \varepsilon_{TR \theta \varphi} (- \rie\indices{_{RT}^{RT}}) \ipsi^{\varphi \theta} - 2 \varepsilon_{TR \theta \varphi}(- \rie\indices{_{TR}^{TR}}) \ipsi^{\varphi \theta} = 0.
	\end{align*}
	This proves that the right hand side in Equation~\eqref{eq:propaga} is zero, hence proving propagation of constraints and the local existence theorem.
\end{proof}
\section{The bootstrap assumptions}\label{sec:bootstrap}
Recall: $$\tau := (1-\chi_1(r))(u^+ +1)+ \chi_1(r)t^*,$$  $\chi_1$ being a smooth cutoff function such that
\begin{equation*}
	\chi_1(r) = 1 \text{ for } r \in [3/2 M, 3 M], \qquad \chi_1(r) = 0 \text{ for } r \in (0, M] \cup [4M, \infty).
\end{equation*}
Here, $u^+ := \max\{u, 0\}$.

\begin{assumpt}[Bootstrap assumptions]\label{as:bootstrap}
	Let $A > 0$, $N \in \N_{\geq 0}$, $\varepsilon > 0$, $r_{\text{in}} \in (M, 2M)$, and $\mathcal R \subset \mathcal{S}$. We say that $\mathcal{\fara}$ satisfies the bootstrap assumptions $\text{BA}(\mathcal R,A,N, \varepsilon)$ if the following two sets of $L^\infty$ bounds hold:
	\begin{equation}\label{bootstrap}
	\boxed{
		\begin{aligned}
		(\text{B}1,1) \qquad \sum_{i=0}^N|\partial^i \rho| &\leq A \varepsilon^{3/4} \tau^{-1} r^{-3/2}, 
		\qquad &(\text{B}1,2) \qquad \sum_{i=0}^N|\partial^i \rho| &\leq A \varepsilon^{3/4} \tau^{-1/2} r^{-2},\\
		(\text{B}2,1) \qquad \sum_{i=0}^N|\partial^i \sigma| & \leq  A \varepsilon^{3/4} \tau^{-1} r^{-3/2}, 
		\qquad  &(\text{B}2,2) \qquad \sum_{i=0}^N|\partial^i \sigma| &\leq A \varepsilon^{3/4} \tau^{-1/2} r^{-2},\\
		(\text{B}3,1) \qquad \sum_{i=0}^N|\partial^i \alphabar| &\leq A \varepsilon^{3/4} \tau^{-1} r^{-1}, \\
		(\text{B}4,1) \qquad \sum_{i=0}^N|\partial^i \alpha| &\leq  A \varepsilon^{3/4} \tau^{-1} r^{-\frac 3 2}, \qquad &(\text{B}4,2) \qquad \sum_{i=0}^N|\partial^i \alpha| &\leq  A \varepsilon^{3/4} \tau^{-\frac 1 2} r^{-2}
		\end{aligned}
	}
	\end{equation}
	
	\textbf{ in the region } $\mathcal{R} \cap \{r \geq 2M \}$, and
	\begin{equation}\label{bootstrap2}
	\boxed{
		\begin{aligned}
		(\text{B}'1) \qquad \sum_{i=0}^N|\partial^i \rho| \leq A \varepsilon^{3/4}, \qquad \qquad
		(\text{B}'2) \qquad \sum_{i=0}^N|\partial^i \sigma| \leq  A \varepsilon^{3/4},\\
		(\text{B}'3) \qquad \sum_{i=0}^N|\partial^i \alphabar| \leq A \varepsilon^{3/4}, \qquad \qquad
		(\text{B}'4) \qquad \sum_{i=0}^N|\partial^i \alpha| \leq  A \varepsilon^{3/4}
		\end{aligned}
	}
	\end{equation}
	\textbf{ in the region } $\mathcal{R} \cap \{r \in [r_{\text{in}}, 2M) \}$.
\end{assumpt}
\begin{remark}
	Notice that, in the interior region, we only have boundedness.
\end{remark}
\begin{remark}
	Notice that, in particular, the bootstrap assumptions give additional  $r$-decay for angular derivatives only.
\end{remark}
\begin{remark} \label{decaytrmk}
	Also notice that, since in the region $r^* \geq 1$ $u \geq 1$, either
	\begin{equation*}
	\begin{aligned}
	u &\geq \frac 1 3 v, \qquad \text{ or } \qquad
	r^* &\geq \frac 1 3 v,
	\end{aligned}
	\end{equation*}
	we have that the bootstrap assumptions imply a bound $\leq C' \varepsilon^{3/4} (v)^{-1}$ for all the components of the MBI field.
\end{remark}

\begin{remark}
	Proposition~\ref{norm.equiv} gives us a way to interchange these assumptions with assumptions on repeated Lie differentiation.
\end{remark}

\section{Commutation of the system}\label{sec:commut}
In this section, we find the form of the commuted MBI system.
Recall the definition of $\iindicess^b$ from Section~\ref{sec:defs}: it is the set of multi-indices of length $b$ formed from elements of $\mathcal{K}$. Let $I \in \iindicess^b$.
Given $I \in \iindicess^b$, and $a \in \N_{\geq 0}$, recall the variation
$$
\faram_{,I,a} := \lie^I_{\mathcal{K}} \nabla^a_Y \fara.
$$
\begin{remark}
	In this section and in the following, unless otherwise stated, we abbreviate
	$$
	H^{\mu\nu\kappa\lambda} = H^{\mu\nu\kappa\lambda}[\fara], \ \ {H_{\Delta}}^{\mu\nu\kappa\lambda} = H\indices{_\Delta^\mu^\nu^\kappa^\lambda}[\fara]
	$$
\end{remark}
\begin{proposition}\label{prop:commut}
Fix $r_{\text{in}} \in (M,2M)$, and a number $t_1^* \geq t_0^*$. Let the smooth tensor $\fara$ satisfy the MBI system~\eqref{MBI} on $\tregio{t_0^*}{t_1^*}$.
Then, we have
\begin{equation}\label{eq:commutzero}
\boxed{
\begin{aligned}
\nabla^\gamma (\faramd_{,I,0})_{\gamma \nu} &= 0,\\
H^{\mu\nu\kappa\lambda}\nabla_\mu  (\faram_{,I,0})_{\kappa\lambda} &= H_\Delta^{\mu\nu\kappa\lambda} \nabla_\mu (\faram_{,I,0})_{\kappa\lambda}  -\mathcal{L}^I_\mathcal{K}(H_\Delta^{\mu\nu\kappa\lambda} \nabla_\mu (\fara_{\kappa\lambda})),
\end{aligned}
}
\end{equation}
in the region $\tregio{t_0^*}{t_1^*}$.
Furthermore, for $a \geq 1$,
\begin{equation}\label{eq:commut}
\boxed{
\begin{aligned}
\nabla^\gamma (\faramd_{,I,a})_{\gamma \nu} &= a (\nabla^\mu Y^\alpha) \nabla_\alpha(\faramd_{,I, a-1})_{\mu\nu} +  (\lie^I_\mathcal{K } \text{OT}^{(a)}_1)_\nu,\\
H^{\mu\nu\kappa\lambda}\nabla_\mu ( \faram_{,I,a})_{\kappa\lambda} &= a (\nabla^\kappa Y^\alpha) \nabla_\alpha (\faram_{,I,a-1})\indices{_\kappa^\nu}  \\&- \sum_{K+L =I, |K|\geq 1}(\lie_\mathcal{K}^K H^{\mu\nu\kappa\lambda})\nabla_\mu ( \faram_{,L,a})_{\kappa\lambda}  + (\lie_\mathcal{K}^I \text{OT}^{(a)}_1)^\nu,
\end{aligned}}
\end{equation}
in the region $\tregio{t_0^*}{t_1^*} \cap \{r_1 \in (3/2 M, 5/2 M)\}$. In this formula, we used the definition of sum of multi-indices in $\iindicess^b$, given in Equation~\eqref{eq:defsumind}.

Here, the terms $\text{OT}_{1}^{(a)}$ satisfy the schematic equation, in the sense of Definition~\ref{def:short}, for $a \geq 1$:
\begin{equation*}
(\text{OT}_{1}^{(a)})_\nu= \sum_{\substack{(m_1, m_2) \in \N_{\geq 0}^2 \\ m_1+m_2 \leq a+1 \\ m_1, m_2 \leq a}} (\nabla^{m_1} H_\Delta \nabla^{m_2} \fara + \nabla^{\min\{m_1, a-1\}}\fara)_\nu.
\end{equation*}
The terms $(\text{OT}_{1}^{(a)})_\nu$ satisfy also the following bound: there exist a constant $C = C_{M, N_0}$ such that
\begin{equation*}
|(\text{OT}_{1}^{(a)})_\nu |\leq C_{M, N_0}  \sum_{\substack{(m_1, m_2) \in \N_{\geq 0}^2 \\ m_1+m_2 \leq a+1 \\ m_1, m_2 \leq a}} (|\nabla^{m_1} H_\Delta^{\mu\nu\kappa\lambda}| |\nabla^{m_2} \fara_{\kappa \lambda}|+|\nabla^{\min\{m_1, a-1\}}\fara_{\kappa\lambda}|).
\end{equation*}
\end{proposition}

\begin{proof}
We divide the proof in three subsections.

\subsection{Commutation with Killing fields}

We commute the system with repeated Lie differentiation in direction $\mathcal{K}$. We obtain the equations of variation
\begin{equation}\label{eqvar1}
\begin{aligned}
\nabla^\gamma (\faramd_{,I,0})_{\gamma \nu} &= 0,\\
H^{\mu\nu\kappa\lambda}\nabla_\mu  (\faram_{,I,0})_{\kappa\lambda} &=(\mathcal{I}^{,I,0})^\nu.
\end{aligned}
\end{equation}
Here,
\begin{equation*}
(\mathcal{I}^{,I,0})^\nu = H_\Delta^{\mu\nu\kappa\lambda} \nabla_\mu ((\faram_{,I,0})_{\kappa\lambda})  -\mathcal{L}^I_\mathcal{K}(H_\Delta^{\mu\nu\kappa\lambda} \nabla_\mu (\fara_{\kappa\lambda})).
\end{equation*}
\subsection{Commutation with the $Y$ vector field}

We commute the system with the operator $\nabla_Y$, where we recall that $Y$ is
$$
Y:= \chi_1(r) (1-\mu)^{-1} \lbar.
$$
Here, $\chi_1$ is a smooth cutoff function such that $\chi_1(r) = 1$ for $r \in [3/2 M, 3M]$, and $\chi_1(r) = 0$ for $r \in (0,M] \cup [4M, \infty)$. We introduce the cutoff so that we do not have to deal with the geometry for large values of $r$, and we can bound the Riemann tensor and its derivatives by smoothness.

\begin{remark}
We recall that the contractions here are performed in the following way: 
$$
\nabla^j_Y \fara := \nabla_Y (\nabla_Y (\cdots \nabla_Y \fara)),
$$
and \textbf{not} as
$$
(\nabla^j\fara)(Y, \ldots, Y).
$$
\end{remark}
Let us then recall the system satisfied by $\fara$
\begin{equation*}\left\{
\begin{array}{ll}
\nabla_{[\mu} \fara_{\nu \eta]} = 0,\\
H^{\mu\nu\kappa\lambda} \nabla_{\mu} \fara_{\kappa\lambda} = 0.
\end{array}\right.
 \iff
 \left\{
\begin{array}{ll}
\nabla_\mu \farad\indices{^\mu _\eta} = 0,\\
\nabla_\mu \fara\indices{^\mu ^\nu}+{H_\Delta}^{\mu\nu\kappa\lambda} \nabla_{\mu} \fara_{\kappa\lambda} = 0.
\end{array}\right.
\end{equation*}
We prove the case $I = 0$ by induction.
\\
We first prove the base step, $a =1$. To do that, we commute the above equations with $Y$, and obtain
\begin{equation*}
\begin{aligned}
&H^{\mu\nu\kappa\lambda} \nabla_\mu {\nabla_Y \fara}_{\kappa \lambda} = H^{\mu\nu\kappa\lambda} \nabla_\mu (Y^\alpha \nabla_\alpha \fara_{\kappa\lambda})\\
&=H^{\mu\nu\kappa\lambda} (\nabla_\mu Y^\alpha) \nabla_\alpha \fara_{\kappa\lambda} + H^{\mu\nu\kappa\lambda} Y^\alpha \nabla_\mu \nabla_\alpha \fara_{\kappa\lambda} \\ 
&=H^{\mu\nu\kappa\lambda} (\nabla_\mu Y^\alpha) \nabla_\alpha \fara_{\kappa\lambda} + H^{\mu\nu\kappa\lambda} Y^\alpha (\nabla_\alpha \nabla_\mu \fara_{\kappa\lambda} + \rie\indices{_{\mu \alpha}^\gamma_\kappa}\fara_{\gamma \lambda}+\rie\indices{_{\mu\alpha}^\gamma_\lambda}\fara_{\kappa \gamma} ) \\ 
&=\left(\frac 1 2 (g^{\mu\kappa}g^{\nu \lambda}-g^{\mu\lambda}g^{\nu\kappa})+H_\Delta^{\mu\nu\kappa\lambda} \right) (\nabla_\mu Y^\alpha) \nabla_\alpha \fara_{\kappa\lambda} + H^{\mu\nu\kappa\lambda} Y^\alpha \nabla_\alpha \nabla_\mu \fara_{\kappa\lambda}+ (\text{OT}^{(1)}_1)^\nu\\
&=\nabla^\kappa Y^\alpha \nabla_\alpha \fara\indices{_\kappa^\nu} + Y^\alpha\nabla_\alpha (H^{\mu\nu\kappa\lambda}\nabla_\mu \fara_{\kappa\lambda}) - Y^\alpha (\nabla_\alpha H^{\mu\nu\kappa\lambda}) \nabla_\mu \fara_{\kappa \lambda}+ (\text{OT}^{(1)}_1)^\nu.
\end{aligned}
\end{equation*}
The terms $(\text{OT}^{(1)}_1)^\nu$ (``other terms of level 1'') can be written as (using the shorthand notation of Definition~\ref{def:short}):
\begin{equation*}
(\text{OT}^{(1)}_1)^\nu = (H_\Delta \nabla \fara + \nabla H_\Delta \nabla \fara + \fara + H_\Delta \fara )^\nu.
\end{equation*}
Recall that $H_\Delta$ is $H$ without its linear part.
Furthermore, the smoothness of the Riemann tensor and the fact that $Y$ is compactly supported, imply that $(\text{OT}^{(1)}_1)^\nu$ satisfies the inequality
\begin{equation*}
|(\text{OT}^{(1)}_1)^\nu| \leq C (|H_\Delta^{\mu\nu\kappa\lambda}| (|\nabla_\alpha \fara_{\beta \gamma}| + |\fara_{\kappa \lambda}|)+|\nabla_\delta H_\Delta^{\mu\nu\kappa\lambda}| |\nabla_\alpha \fara_{\beta \gamma}|+ |\fara_{\kappa\lambda}|).
\end{equation*}
We obtain:
\begin{equation*}
H^{\mu\nu\kappa\lambda} \nabla_\mu {\nabla_Y \fara}_{\kappa \lambda} = \nabla^\kappa Y^\alpha \nabla_\alpha \fara\indices{_\kappa^\nu} + (\text{OT}^{(1)}_1)^\nu
\end{equation*}
and
\begin{equation*}
\begin{aligned}
\nabla^\mu (\nabla_Y \farad)_{\mu\nu} = (\nabla^\mu Y^\alpha) \nabla_\alpha \farad_{\mu\nu} + \rie\indices{^\mu_\alpha^\gamma_\nu} \farad\indices{_\mu_\gamma} Y^\alpha = (\nabla^\mu Y^\alpha) \nabla_\alpha \farad_{\mu\nu} + (\text{OT}^{(1)}_1)_\nu.
\end{aligned}
\end{equation*}
In conclusion, the first order commuted system is
\begin{equation}\label{eq:varequno}
\boxed{\begin{aligned}
\nabla^\gamma (\nabla_Y \farad)_{\gamma \nu} = (\mathcal{J}^{,0,1})_\nu,\\
H^{\mu\nu\kappa\lambda}\nabla_\mu (\nabla_Y \fara)_{\kappa\lambda} =({\mathcal{I}^{,0,1}})^\nu,
\end{aligned}}
\end{equation}
where we defined
\begin{equation*}
\begin{aligned}
(\mathcal{J}^{,0,1})_\nu := (\nabla^\mu Y^\alpha) \nabla_\alpha \farad_{\mu\nu} + (\text{OT}^{(1)}_1)_\nu, \\
({\mathcal{I}^{,0,1}})^\nu := \nabla^\kappa Y^\alpha \nabla_\alpha \fara\indices{_\kappa^\nu} + (\text{OT}^{(1)}_1)^\nu.
\end{aligned}
\end{equation*}
We now recall that the $(\text{OT}_{1}^{(a)})_\nu$ terms satisfy the schematic equation (Definition~\ref{def:short}), for $a \geq 1$,
\begin{equation*}
(\text{OT}_{1}^{(a)})_\nu= \sum_{\substack{(m_1, m_2) \in \N_{\geq 0}^2 \\ m_1+m_2 \leq a+1 \\ m_1, m_2 \leq a}} (\nabla^{m_1} H_\Delta \nabla^{m_2} \fara + \nabla^{\min\{m_1, a-1\}}\fara)_\nu,
\end{equation*}
along with the bound
\begin{equation*}
\left|(\text{OT}_{1}^{(a)})_\nu\right| \leq C  \sum_{\substack{(m_1, m_2) \in \N_{\geq 0}^2 \\ m_1+m_2 \leq a+1 \\ m_1, m_2 \leq a}} (|\nabla^{m_1} H_\Delta^{\mu\nu\kappa\lambda}| |\nabla^{m_2} \fara_{\kappa \lambda}|+|\nabla^{\min\{m_1, a-1\}}\fara_{\kappa\lambda}|).
\end{equation*}
Here, $C = C_{M, N_0}$.

Let us now prove the inductive step. By the reasoning which we have just concluded, we know that, when $a = 1$,
\begin{equation*}
\begin{aligned}
H^{\mu\nu\kappa\lambda}\nabla_\mu ( \faram_{,0,1})_{\kappa\lambda} &=(\nabla^\kappa Y^\alpha) \nabla_\alpha \fara\indices{_\kappa^\nu} + (\text{OT}^{(1)}_1)^\nu,\\
\nabla^\gamma (\faramd_{,0,1})_{\gamma \nu} &= (\nabla^\mu Y^\alpha) \nabla_\alpha \farad_{\mu\nu} + (\text{OT}^{(1)}_1)_\nu.
\end{aligned}
\end{equation*}
Let us suppose that this equality holds up to level $a-1$, i.e. that there holds
\begin{equation}\label{eq:inducthyp}
\begin{aligned}
H^{\mu\nu\kappa\lambda}\nabla_\mu ( \faram_{,0,a-1})_{\kappa\lambda} = (a-1)\nabla^\kappa Y^\alpha \nabla_\alpha (\faram_{,0,a-2})\indices{_\kappa^\nu} + (\text{OT}^{(a-1)}_1)^\nu = (I^{,0,a-1})^\nu,\\
\nabla^\gamma (\faramd_{,0,a-1})_{\gamma \nu} = (a-1) \nabla^\mu Y^\alpha \nabla_\alpha( \faramd_{,0, a-2})_{\mu\nu} + (\text{OT}^{(a-1)}_1)_\nu = (\mathcal{J}^{,0,a-1})_\nu.
\end{aligned}
\end{equation}
\begin{remark}
Notice that, by our definition, and by the fact that the volume form is parallel,
\begin{equation*}
\nabla_Y \faram_{,0,a} = \faram_{,0,a+1}, \qquad \nabla_Y \faramd_{,0,a} = \faramd_{,0,a+1}.
\end{equation*}
\end{remark}
Then, as before
\begin{equation*}
\begin{aligned}
&H^{\mu\nu\kappa\lambda} \nabla_\mu {(\nabla_Y \faram_{,0,a-1})}_{\kappa \lambda} = H^{\mu\nu\kappa\lambda} \nabla_\mu (Y^\alpha \nabla_\alpha (\faram_{,0,a-1})_{\kappa\lambda})\\ 
&=
H^{\mu\nu\kappa\lambda} (\nabla_\mu Y^\alpha) \nabla_\alpha (\faram_{,0,a-1})_{\kappa\lambda} + H^{\mu\nu\kappa\lambda} Y^\alpha \nabla_\mu \nabla_\alpha  (\faram_{,0,a-1})_{\kappa\lambda} \\ & =
\left(\frac 1 2 (g^{\mu\kappa}g^{\nu \lambda}-g^{\mu\lambda}g^{\nu\kappa})+H_\Delta^{\mu\nu\kappa\lambda} \right) (\nabla_\mu Y^\alpha)  \nabla_\alpha (\faram_{,0,a-1})_{\kappa\lambda} + H^{\mu\nu\kappa\lambda} Y^\alpha \nabla_\alpha \nabla_\mu  (\faram_{,0,a-1})_{\kappa\lambda}+ (\text{OT}^{(a)}_1)^\nu \\ & =
\nabla^\kappa Y^\alpha  \nabla_\alpha (\faram_{,0,a-1})\indices{_\kappa^\nu} + Y^\alpha\nabla_\alpha (H^{\mu\nu\kappa\lambda}\nabla_\mu (\faram_{,0,a-1})_{\kappa\lambda}) - Y^\alpha (\nabla_\alpha H^{\mu\nu\kappa\lambda}) \nabla_\mu (\faram_{,0,a-1})_{\kappa\lambda}+ (\text{OT}^{(a)}_1 )^\nu\\ & =
\nabla^\kappa Y^\alpha  \nabla_\alpha (\faram_{,0,a-1})\indices{_\kappa^\nu} + (a-1)Y^\alpha\nabla_\alpha (\nabla^\kappa Y^\beta \nabla_\beta (\faram_{,0,a-2})\indices{_\kappa^\nu} + \text{OT}^{(a-1)}_1)+  (\text{OT}^{(a)}_1)^\nu \\ & =
\nabla^\kappa Y^\alpha  \nabla_\alpha (\faram_{,0,a-1})\indices{_\kappa^\nu}
+ (a-1)Y^\alpha \nabla^\kappa Y^\beta \nabla_\alpha \nabla_\beta (\faram_{,0,a-2})\indices{_\kappa^\nu}
+ (\text{OT}^{(a)}_1)^\nu \\ & =
\nabla^\kappa Y^\alpha  \nabla_\alpha (\faram_{,0,a-1})\indices{_\kappa^\nu}
+ (a-1)Y^\alpha \nabla^\kappa Y^\beta \nabla_\beta \nabla_\alpha (\faram_{,0,a-2})\indices{_\kappa^\nu}
+ (\text{OT}^{(a)}_1)^\nu \\ & =
a \nabla^\kappa Y^\alpha  \nabla_\alpha (\faram_{,0,a-1})\indices{_\kappa^\nu}
+ (\text{OT}^{(a)}_1)^\nu.
\end{aligned}
\end{equation*}
This reasoning, along with the reasoning for the equation satisfied by $\faramd$, shows then, when $a \geq 1$,
\begin{equation}\label{eq:jvariation}
\boxed{
\begin{aligned}
\nabla^\gamma (\faramd_{,0,a})_{\gamma \nu} &= a (\nabla^\mu Y^\alpha) \nabla_\alpha( \faramd_{,0, a-1})_{\mu\nu} + (\text{OT}^{(a)}_1)_\nu = (\mathcal{J}^{,0,a})_\nu,\\
H^{\mu\nu\kappa\lambda}\nabla_\mu ( \faram_{,0,a})_{\kappa\lambda} &= a (\nabla^\kappa Y^\alpha) \nabla_\alpha (\faram_{,0,a-1})\indices{_\kappa^\nu} + (\text{OT}^{(a)}_1)^\nu = (\mathcal{I}^{,0,a})^\nu.
\end{aligned}}
\end{equation}
The proposition is then proved for the pure $Y$-derivatives in the proximity of $\mathcal{H}^+$.
\subsection{Commutation with mixed derivatives}
We finally complete the reasoning for mixed derivatives. We commute our previous Equation~\eqref{eq:jvariation} with $\lie^I_\mathcal{K}$.
We recall that, if $I \in \iindicess^b$ is a multi-index consisting only of Killing fields,
\begin{equation}
\faram_{,I,a} := \lie^I_{\mathcal{K}} \nabla_Y^a \fara.
\end{equation}

We now commute previous Equation~\eqref{eq:jvariation} with $\lie^I_\mathcal{K}$. We have the following:
\begin{itemize}
\item in virtue of~\cite{globalnon}, Equation 3.25, if $K \in \mathcal{K}$, $[\mathcal{L}_K, \nabla] = 0$.
\item If $K$ is either the vector field $\partial_t$ or $\partial_\varphi$ (or any other rotation Killing field), we have
$$
\lie_K \varepsilon = 0.
$$
This implies that $\lie_K \farad_{\kappa \lambda} = {}^\star(\lie_K \fara)_{\kappa \lambda}$.
\item We have that, for $K \in \mathcal{K}$,
$$
\lie_K Y = [K, Y] = 0.
$$
\end{itemize}
Using these properties, we obtain:
\begin{equation}\label{eq:mixvariation1}
\begin{aligned}
\nabla^\gamma (\faramd_{,I,a})_{\gamma \nu} = a (\nabla^\mu Y^\alpha) \nabla_\alpha(\faramd_{,I, a-1})_{\mu\nu} +  \lie^I_\mathcal{K } \text{OT}^{(j)}_1 = \lie^I_\mathcal{K }(\mathcal{J}^{,0,a})_\nu,
\end{aligned}
\end{equation}

\begin{equation}\label{eq:mixvariation2}
\begin{aligned}
H^{\mu\nu\kappa\lambda}\nabla_\mu ( \faram_{,I,a})_{\kappa\lambda} \\ & = 
a \nabla^\kappa Y^\alpha \nabla_\alpha (\faram_{,I,a-1})\indices{_\kappa^\nu} - \sum_{K+L =I, |K|\geq 1}(\lie_\mathcal{K}^K H^{\mu\nu\kappa\lambda})\nabla_\mu ( \faram_{,L,a})_{\kappa\lambda}  + \lie_\mathcal{K}^I \text{OT}^{(a)}_1 \\ & = 
\lie_\mathcal{K}^I (\mathcal{I}^{,0,a})^\nu
- \sum_{K+L =I}(\lie_\mathcal{K}^K H^{\mu\nu\kappa\lambda})\nabla_\mu ( \faram_{,L,a})_{\kappa\lambda}  .
\end{aligned}
\end{equation}
Hence, the equations of variation with mixed derivatives are
\begin{equation}
\boxed{
\begin{aligned}
\nabla^\gamma (\faramd_{,I,a})_{\gamma \nu} &= a (\nabla^\mu Y^\alpha) \nabla_\alpha(\faramd_{,I, a-1})_{\mu\nu} +  (\lie^I_\mathcal{K } \text{OT}^{(a)}_1)_\nu,\\
H^{\mu\nu\kappa\lambda}\nabla_\mu ( \faram_{,I,a})_{\kappa\lambda} &= a( \nabla^\kappa Y^\alpha) \nabla_\alpha (\faram_{,I,a-1})\indices{_\kappa^\nu}  \\&- \sum_{K+L =I, |K|\geq 1}(\lie_\mathcal{K}^K H^{\mu\nu\kappa\lambda})\nabla_\mu ( \faram_{,L,a})_{\kappa\lambda}  + (\lie_\mathcal{K}^I \text{OT}^{(a)}_1)^\nu.
\end{aligned}}
\end{equation}
This proves the claim and concludes the proof of Proposition~\ref{prop:commut}.
\end{proof}

\begin{remark}\label{rmk:formij}
The fact that $Y$ is supported away from spatial infinity lets us write, {\bf for $a \geq 1$,}
\begin{equation*}
(\lie^I_\mathcal{K } \text{OT}^{(a)}_1)_\nu = \sum_{\substack{(m_1, m_2) \in \N_{\geq 0}^2 \\ m_1+m_2 \leq a+|I|+1 \\ m_1, m_2 \leq a+|I|}} (\nabla^{m_1} H_\Delta \nabla^{m_2} \fara + \nabla^{\min\{m_1, a-1\}}\fara)_\nu = (\text{OT}^{(a+|I|)}_1)_\nu.
\end{equation*}
Similarly, {\bf for $a \geq 1$,}
\begin{equation*}
\sum_{K+L =I, |K|\geq 1}(\lie_\mathcal{K}^K H^{\mu\nu\kappa\lambda})\nabla_\mu ( \faram_{,L,a})_{\kappa\lambda}= (\text{OT}^{(a+|I|)}_1)_\nu.
\end{equation*}
This implies that, {\bf when $a \geq 1$,}
\begin{equation}\label{eq:formij}
\boxed{
\begin{aligned}
\nabla^\gamma (\faramd_{,I,a})_{\gamma \nu} &= a (\nabla^\mu Y^\alpha) \nabla_\alpha(\faramd_{,I, a-1})_{\mu\nu} +  ( \text{OT}^{(a+|I|)}_1)_\nu,\\
H^{\mu\nu\kappa\lambda}\nabla_\mu ( \faram_{,I,a})_{\kappa\lambda} &= a( \nabla^\kappa Y^\alpha) \nabla_\alpha (\faram_{,I,a-1})\indices{_\kappa^\nu}  + ( \text{OT}^{(a+|I|)}_1)^\nu.
\end{aligned}}
\end{equation}
\end{remark}
\section{Canonical stress: positivity properties and divergence}\label{sec:canstress}

\subsection{Positivity properties of the canonical stress}
\begin{definition}
Let $\fara \in \Lambda^2 (S_e)$. Let $\gara \in \Lambda^2 (S_e)$ (we think of it as a variation).
Define the canonical stress
\begin{equation}\label{eq:qdef}
\dot Q \indices{^\mu _\nu}[\gara] := H^{\mu\zeta\kappa\lambda}[\fara] {\gara}_{\kappa\lambda} {\gara}_{\nu\zeta} - \frac 1 4 \delta_\nu^\mu {\gara}_{\zeta \eta} {\gara}^{\zeta \eta }.
\end{equation}
Here, the tensor field $H$ (depending on $\fara$) is as in Equation~(\ref{hform1}) and Equation~(\ref{hform2}).
\end{definition}
\begin{remark}
The canonical stress can also be written as
\begin{equation}\label{eq:qform}
\begin{aligned}
\dot Q_{\mu\nu}[\gara] &= \underbrace{\gara\indices{_\mu^\zeta}\gara_{\nu\zeta}- \frac 1 4 g_{\mu\nu} \gara_{\zeta \eta} \gara^{\zeta\eta} }_{\text{linear terms}}+ \frac 1 2 \ellmbi^{-2} \left\{- \fara\indices{_\mu^\zeta} \gara_{\nu\zeta} \fara^{\kappa\lambda} \gara_{\kappa\lambda} + \frac 1 4 g_{\mu\nu}(\fara^{\kappa\lambda}\gara_{\kappa\lambda})^2 \right\} \\
&+ \frac 1 2 \ellmbi^{-2} (1+\lun) \left\{- \farad\indices{_\mu^\zeta} \gara_{\nu\zeta} \farad^{\kappa\lambda} \gara_{\kappa\lambda} + \frac 1 4 g_{\mu\nu}(\farad^{\kappa\lambda}\gara_{\kappa\lambda})^2 \right\}  \\
&+ \frac 1 2 \ldu \ellmbi^{-2} \left\{\fara\indices{_\mu^\zeta} \gara_{\nu\zeta} \farad^{\kappa\lambda} \gara_{\kappa\lambda} - \frac 1 4 g_{\mu\nu}\fara^{\kappa\lambda}\gara_{\kappa\lambda}\farad^{\kappa\lambda}\gara_{\kappa\lambda} \right\}  \\
&+ \frac 1 2 \ldu \ellmbi^{-2} \left\{\farad\indices{_\mu^\zeta} \gara_{\nu\zeta} \fara^{\kappa\lambda} \gara_{\kappa\lambda} - \frac 1 4 g_{\mu\nu}\fara^{\kappa\lambda}\gara_{\kappa\lambda}\farad^{\kappa\lambda}\gara_{\kappa\lambda}\right\}.
\end{aligned}
\end{equation}
\end{remark}

\begin{remark}
The canonical stress has the property that its divergence is of lower order in the derivatives of $\gara$, as proved in Lemma~\ref{lem:qdive}.
\end{remark}

We recall the definition of the redshift vector field. Recall that $\chi_1$ is a smooth cut-off function which satisfies
\begin{equation*}
\chi_1(r) = \left\{
\begin{array}{ll}
1 & \text{if } r \in [3/2 M, 3M], \\
0 & \text{if } r \in (0,M] \cup [4M, \infty).
\end{array}
\right.
\end{equation*} 
Then
\begin{equation*}
\redsh :=2 \desude{}{t^*}+ \{(1-\mu)(1+\mu)+ 5\chi_1(r)(1-\mu)+\chi_1(r)\}\desude{}{t^*} + \{(1-\mu)^2 - 5\chi_1(r)(1- \mu)-\chi_1(r)\}\desude{}{r_1}.
\end{equation*}
With respect to the null vector fields $L$ and $\lbar$, we have
\begin{equation*}
V_{\text{red}} = L + \lbar + 5 \chi_1(r) \left( (1-\mu)L +\lbar\right)+\chi_1(r) (1-\mu)^{-1}\lbar.
\end{equation*}

We prove the following lemma.

\begin{lemma}[Coercivity of canonical stress at the horizon]\label{lem:coercivity}
There exists a constant $C >0$ such that there exists a small number $\varepsilon_{\text{red}} > 0$, such that the following holds. If $|\fara| < \varepsilon_{\text{red}}$ on the set $$r \in [15/8  M, 17/8 M],$$
then we have the following,
again on the set $r \in [15/8  M, 17/8 M]$:
\begin{equation}\label{eq:coercivity}
\dot Q\indices{_\mu_\nu}[\gara] \ {n}_{\widetilde \Sigma}^\mu {V_{\text{red}}}^\nu \geq C |\gara|^2, \qquad {}^{(\redsh)} \pi^{\mu\nu} \dot Q_{\mu\nu}[\gara] \geq C |\gara|^2.
\end{equation}
 Here, ${n}_{\widetilde \Sigma}$ is the future-directed unit Lorentz normal to the foliation $\widetilde{\Sigma}_{t^*}$, and $${}^{(\redsh)} \pi^{\mu\nu} = \frac 1 2 (\nabla^\mu \redsh^\nu + \nabla^\nu \redsh^\mu)$$ is the deformation tensor relative to $\redsh$.
\end{lemma}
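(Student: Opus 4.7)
\medskip
\noindent\textbf{Proof plan.} The strategy is to treat $\dot Q[\gara]$ as a perturbation of the linear Maxwell stress $\dot Q^{(\text{MW})}[\gara]$, use the standard Dafermos--Rodnianski redshift positivity for the Maxwell part, and then absorb the nonlinear correction using the smallness of $\fara$.

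\medskip
\noindent\textbf{Step 1: splitting off the linear part.} Starting from the explicit expression~\eqref{eq:qform}, write
\[
\dot Q_{\mu\nu}[\gara] \;=\; \dot Q^{(\text{MW})}_{\mu\nu}[\gara] \;+\; \dot Q^{(\text{NL})}_{\mu\nu}[\gara],
\]
where $\dot Q^{(\text{NL})}$ collects the four lines in~\eqref{eq:qform} involving $\ellmbi^{-2}$. Each term in $\dot Q^{(\text{NL})}$ contains two explicit factors of $\fara$ (through $\fara$ or $\farad$) and two factors of $\gara$, while the factors $\ellmbi^{-2}$, $(1+\lun)\ellmbi^{-2}$, $\ldu\,\ellmbi^{-2}$ are smooth bounded functions of $\fara$ provided $|\fara|$ is sufficiently small, since then $\ellmbi^2 \geq 1/2$. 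Working in the frame $\mathcal{X}=\{L,(1-\mu)^{-1}\lbar,\partial_\theta/r,\partial_\varphi/(r\sin\theta)\}$ and using the pointwise bound for contractions of tensors proved earlier in the paper, one concludes
\[
\bigl|\dot Q^{(\text{NL})}_{\mu\nu}[\gara] \, \widetilde{n}^{\mu}_{\Sigma} V_{\text{red}}^{\nu}\bigr| \;+\; \bigl|{}^{(\redsh)}\pi^{\mu\nu} \dot Q^{(\text{NL})}_{\mu\nu}[\gara]\bigr| \;\leq\; C\, |\fara|^{2}\, |\gara|^{2}
\]
uniformly on $r\in[15/8\,M,17/8\,M]$.

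\medskip
\noindent\textbf{Step 2: redshift positivity for the Maxwell part.} This is the main geometric input. One expands $\dot Q^{(\text{MW})}_{\mu\nu}[\gara]$ in the null decomposition of $\gara$ into $(\rho,\sigma,\alpha,\alphabar)$ using the identities
\[
\dot Q^{(\text{MW})}(L,L) = |\alpha|^{2},\quad \dot Q^{(\text{MW})}(\lbar,\lbar) = |\alphabar|^{2},\quad \dot Q^{(\text{MW})}(L,\lbar) = (1-\mu)(\rho^{2}+\sigma^{2}),
\]
together with the analogous expressions involving angular legs. Substituting the expression for $\redsh$ in terms of $L$, $\lbar$ and $\chi_1(r)(1-\mu)^{-1}\lbar$, the degenerate factor $(1-\mu)$ in front of $\rho^{2}+\sigma^{2}$ is exactly cancelled by the $\chi_{1}(r)(1-\mu)^{-1}\lbar$ contribution, and the $\alphabar$ coefficient is made strictly positive by the same mechanism; this yields, on $r\in[15/8\,M,17/8\,M]$,
\[
\dot Q^{(\text{MW})}_{\mu\nu}[\gara] \, \widetilde{n}^{\mu}_{\Sigma} V_{\text{red}}^{\nu} \;\geq\; 2C_{0}\,|\gara|^{2}.
\]
The analogous bound for ${}^{(\redsh)}\pi^{\mu\nu}\dot Q^{(\text{MW})}_{\mu\nu}$ follows from the remark preceding the lemma: on $r\in[3/2\,M,3M]$ the deformation tensor ${}^{(\redsh)}\pi$ coincides with that of $V_{1}=-5\mu L+(1-\mu)^{-1}\lbar$, which is precisely the standard Dafermos--Rodnianski redshift multiplier, and a direct computation of its components in the frame $\mathcal{X}$ produces strict positivity of all coefficients in front of $|\alpha|^{2}$, $|\alphabar|^{2}$, $\rho^{2}$, $\sigma^{2}$ throughout $r\in[15/8\,M,17/8\,M]$.

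\medskip
\noindent\textbf{Step 3: absorbing the nonlinear correction.} Combining Steps 1 and 2,
\[
\dot Q_{\mu\nu}[\gara] \, \widetilde{n}^{\mu}_{\Sigma} V_{\text{red}}^{\nu} \;\geq\; (2C_{0} - C\varepsilon_{\text{red}}^{2})\,|\gara|^{2},
\]
and the same for the $\pi$-contraction. Choosing $\varepsilon_{\text{red}}$ so small that $C\varepsilon_{\text{red}}^{2} \leq C_{0}$ gives the lemma with constant $C = C_{0}$. The main obstacle is the explicit frame computation in Step 2 (rather than the nonlinear absorption, which is soft); it must be carried out carefully enough to check that no coefficient of any null component of $\gara$ in the quadratic form has a bad sign or a $(1-\mu)$ degeneracy up to the horizon, so that the whole $|\gara|^{2}$ is controlled and not merely a partial subset of the null components.
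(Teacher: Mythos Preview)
Your proposal is correct and follows essentially the same approach as the paper: split $\dot Q$ into the Maxwell part plus a nonlinear remainder, invoke the linear redshift positivity for $\dot Q^{(\text{MW})}$, and absorb the remainder using $|\fara|<\varepsilon_{\text{red}}$. The paper is terser in Step~2 (it simply appeals to ``the linear theory'' for the first inequality and then writes out the components of ${}^{(\redsh)}\pi$ in $(u,v,\theta,\varphi)$ coordinates for the second), whereas you sketch the null-decomposition computation more explicitly, but the argument is the same.
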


\begin{proof}
By our assumptions, if $\varepsilon_{\text{red}}$ is small enough, we have the following:
\begin{align*}
\ellmbi^{-2} &\leq (1 - \redeps^2 - \redeps^4)^{-1} \leq 2.
\end{align*}
By the fact that both $\redsh$ and $n_{\widetilde \Sigma_{t^*}}$ are strictly timelike, we also have
\begin{equation*}
\dot Q^{(\text{MW})}_{\mu\nu}[\gara] \ {n}_{\widetilde \Sigma_{t^*}}^\mu {V_{\text{red}}}^\nu \geq C |\gara|^2,
\end{equation*}
where 
\begin{equation*}
\dot Q^{(\text{MW})}_{\mu\nu}[\gara] := \gara_{\mu\alpha} \gara\indices{_\nu^\alpha}  - \frac 1 4 g_{\mu\nu} \gara^{\alpha \beta}\gara_{\alpha \beta},
\end{equation*}
and $C$ does not depend on $\redeps$.
An example of nonlinear perturbation term on the right hand side is then
\begin{align*}
&\left|\ellmbi^{-2} \left(-\fara\indices{_\mu^\zeta} \gara_{\nu \zeta} \fara^{\kappa \lambda} \gara_{\kappa\lambda}  + \frac 1 4 g_{\mu\nu}(\gara_{\kappa \lambda} \fara^{\kappa\lambda})\right)\right| \\
&\leq 2 |\fara\indices{_\mu^\zeta} \gara_{\nu \zeta}| |\fara^{\kappa \lambda} \gara_{\kappa\lambda}|  + \frac 1 4 |g_{\mu\nu}| |\gara_{\kappa \lambda} \fara^{\kappa\lambda}|^2 \leq C \redeps^2 |\gara|^2.
\end{align*}
We notice that the other terms in $\dot Q$ are similar, and conclude by taking $\redeps$ small and absorbing the terms coming from the nonlinear part into the linear part. This proves the first claim in (\ref{eq:coercivity}).

Concerning the second claim in (\ref{eq:coercivity}), it follows from the calculations below. We restrict to the region $r \in (15/8 M, 17/8 M)$. Recall that, since $\desude{}{t^*}$ is a Killing vector field,
\begin{equation*}
	{}^{(\redsh)}\pi = {}^{(V_1)}\pi,
\end{equation*}
where $V_1 = -5\mu L + (1-\mu)^{-1}\lbar$. Now,
\begin{equation*}
\begin{aligned}
	&{}^{(\redsh)}\pi^{uu} = \frac{2M}{r^2}(1-\mu)^{-2}, \qquad {}^{(\redsh)}\pi^{uv} = {}^{(\redsh)}\pi^{vu} = \frac{5M}{r^2}(1-\mu)^{-1}(2\mu-1), \\ &{}^{(\redsh)}\pi^{vv} = 5\frac{2M}{r^2}, \qquad 
	{}^{(\redsh)}\pi^{Au} = {}^{(\redsh)}\pi^{Av}  = 0,\qquad 
	{}^{(\redsh)}\pi^{AB} = -2r^{-1}(1+5 \mu(1-\mu))\gbar^{AB}.
\end{aligned}
\end{equation*}
Here, the components of the tensors are with respect to the $(u, v, \theta, \varphi)$ coordinates, and capital latin indices indicate contraction with coordinate vector fields arising from a local parametrization $(\theta^1, \theta^2)$ of the conformal sphere $\mathbb{S}^2$. Capital latin indices are raised and lowered with the projected metric $\gbar$, which is the metric $g$ projected on sphere of constant $r$-coordinate. These calculations, combined with the form of $\dot Q^{(\text{MW})}_{\mu\nu}$, yield the second claim in \eqref{eq:coercivity}.
\end{proof}

\subsection{Divergence of the canonical stress}

We now prove a lemma on the divergence of the canonical stress.

\begin{lemma}\label{lem:qdive}
Let $\gara$ be a smooth two-form on $\mathcal{S}$ satisfying the system (equations of variation)
\begin{equation}\label{eqvar}
\begin{aligned}
\nabla^\gamma {\garad}_{\gamma \nu} = \mathcal{J}_\nu,\\
H^{\mu\nu\kappa\lambda}[\fara]  \ \nabla_\mu \gara_{\kappa\lambda} =\mathcal{I}^\nu.
\end{aligned}
\end{equation}
The canonical stress is defined as
\begin{equation}
 \dot{Q} \indices{^\mu _\nu}[\gara] := H^{\mu\zeta\kappa\lambda}[\fara] {\gara}_{\kappa\lambda} {\gara}_{\nu\zeta}  - \frac 1 4 \delta_\nu^\mu  H^{\eta\zeta\kappa\lambda}[\fara] {\gara}_{\eta\zeta}{\gara_{\kappa\lambda}}. 
\end{equation}
Here, $H[\fara]$ is as in Equation~(\ref{hform1}) and Equation~(\ref{hform2}).
Under these assumptions, $\dot Q$ satisfies the following relation:
\begin{equation}\label{eq:stressdiv}
\nabla_\mu   \dot{Q} \indices{^\mu _\nu}[\gara]= 
(\nabla_\mu H^{\mu\zeta\kappa\lambda}[\fara]) {\gara}_{\kappa\lambda} {\gara}_{\nu\zeta}- \frac 1 4 (\nabla_\nu H^{\zeta \eta \kappa\lambda}[\fara] \ \gara_{\zeta \eta} \gara_{\kappa\lambda})+
\mathcal{I}^\zeta \gara_{\nu \zeta}
- \frac 1 2 H^{\zeta \eta \kappa\lambda}[\fara] \ \gara_{\kappa\lambda}( \mathcal{J}^\alpha \varepsilon_{\alpha \nu \zeta \eta}).
\end{equation}
\end{lemma}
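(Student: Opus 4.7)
The plan is to imitate the proof of Lemma~\ref{lem:qdivecan}, but without any projection onto the hypersurface $\widetilde\Sigma_{t^*}$. In fact, since the equations of variation~(\ref{eqvar}) hold in their full tensorial form (rather than contracted with $\widetilde{n}^\alpha$ as in Lemma~\ref{lem:qdivecan}), the cyclic identity we shall need below holds without any frame restrictions, so the proof becomes a direct algebraic computation.

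The first step is to apply $\nabla_\mu$ to the definition of $\dot Q\indices{^\mu_\nu}[\gara]$ and use the Leibniz rule, which yields four kinds of contributions: a derivative of $H$ hitting the quadratic part, two ``$H\cdot \nabla\gara \cdot \gara$'' terms coming from differentiating each factor of $\gara$ in the main piece, and the gradient of the trace term. The derivative-of-$H$ contributions from the main piece and from the trace give precisely the first two terms on the right-hand side of~\eqref{eq:stressdiv}, namely $(\nabla_\mu H^{\mu\zeta\kappa\lambda})\gara_{\kappa\lambda}\gara_{\nu\zeta}$ and $-\tfrac{1}{4}(\nabla_\nu H^{\zeta\eta\kappa\lambda})\gara_{\zeta\eta}\gara_{\kappa\lambda}$. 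For the term in which $\nabla_\mu$ hits $\gara_{\kappa\lambda}$, the second equation of~(\ref{eqvar}) gives directly $H^{\mu\zeta\kappa\lambda}(\nabla_\mu \gara_{\kappa\lambda})\gara_{\nu\zeta} = \mathcal{I}^\zeta\gara_{\nu\zeta}$.

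It remains to handle the ``leftover'' contraction
\[
T_1 \; := \; H^{\mu\zeta\kappa\lambda}\gara_{\kappa\lambda}(\nabla_\mu \gara_{\nu\zeta})
\]
together with the derivative-of-$\gara$ piece from the trace,
\[
T_2 \; := \; -\tfrac{1}{2}H^{\mu\zeta\kappa\lambda}\gara_{\kappa\lambda}(\nabla_\nu \gara_{\mu\zeta}),
\]
where the factor of $\tfrac{1}{2}$ in $T_2$ arises from the pair-exchange symmetry $H^{\mu\zeta\kappa\lambda} = H^{\kappa\lambda\mu\zeta}$ of the tensor $H$ defined in~\eqref{hform1}--\eqref{hform2}. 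Exploiting next the antisymmetry of $H$ in its first pair of indices, i.e.~writing $T_1 = \tfrac{1}{2}H^{\mu\zeta\kappa\lambda}\gara_{\kappa\lambda}(\nabla_\mu\gara_{\nu\zeta} - \nabla_\zeta\gara_{\nu\mu})$ and combining with $T_2$, we obtain
\[
T_1 + T_2 \; = \; \tfrac{1}{2}H^{\mu\zeta\kappa\lambda}\gara_{\kappa\lambda}\bigl(\nabla_\mu\gara_{\nu\zeta} + \nabla_\zeta\gara_{\mu\nu} + \nabla_\nu\gara_{\zeta\mu}\bigr),
\]
where we used the antisymmetry of $\gara$ to rewrite the three terms as a cyclic sum in $(\mu,\nu,\zeta)$.

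The final step is to recognize that the first equation in~(\ref{eqvar}), $\nabla^\gamma\garad_{\gamma\nu} = \mathcal{J}_\nu$, is equivalent (by taking Hodge duals on both sides, using that $\varepsilon$ is parallel) to the cyclic identity
\[
\nabla_\mu\gara_{\nu\zeta} + \nabla_\zeta\gara_{\mu\nu} + \nabla_\nu\gara_{\zeta\mu} \; = \; -\mathcal{J}^\alpha\varepsilon_{\alpha\nu\zeta\mu}\,,
\]
with the sign/index ordering fixed once and for all. Substituting this into $T_1+T_2$ yields exactly the term $-\tfrac{1}{2}H^{\zeta\eta\kappa\lambda}\gara_{\kappa\lambda}(\mathcal{J}^\alpha\varepsilon_{\alpha\nu\zeta\eta})$ on the right-hand side of~\eqref{eq:stressdiv}, and the computation is complete. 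The only genuine difficulty in executing this plan is bookkeeping: one must track the signs produced by raising indices on $\varepsilon$ and by the various antisymmetries of $H$ and $\gara$; the algebra itself is a straightforward application of the symmetries of $H$ and the two equations of variation, as in Lemma~\ref{lem:qdivecan}.
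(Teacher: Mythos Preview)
Your proposal is correct and follows essentially the same approach as the paper's proof: expand $\nabla_\mu \dot Q\indices{^\mu_\nu}$ by the Leibniz rule, use the second equation of variation for the $H\,\nabla\gara\cdot\gara$ term, and reduce the remaining two derivative-of-$\gara$ pieces to a cyclic sum handled by the identity $\nabla_{[\mu}\gara_{\nu\zeta]} = \mathcal{J}^\alpha\varepsilon_{\alpha\mu\nu\zeta}$ coming from the first equation. The only point to correct is the sign in your cyclic identity: with the paper's conventions one has $\nabla_\mu\gara_{\nu\zeta}+\nabla_\zeta\gara_{\mu\nu}+\nabla_\nu\gara_{\zeta\mu}=+\mathcal{J}^\alpha\varepsilon_{\alpha\mu\nu\zeta}$ (and $\varepsilon_{\alpha\mu\nu\zeta}=\varepsilon_{\alpha\nu\zeta\mu}$), which then yields the stated term $-\tfrac{1}{2}H^{\zeta\eta\kappa\lambda}\gara_{\kappa\lambda}\mathcal{J}^\alpha\varepsilon_{\alpha\nu\zeta\eta}$ after relabeling.
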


\begin{proof}[Proof of Lemma]
In the proof of this lemma, we suppress the dependence of $H$ on $\fara$. We notice that the first equation of display~\eqref{eqvar} implies the following equation:
\begin{equation*}
\nabla_\eta \gara_{\zeta \beta} + \nabla_\beta \gara_{\eta \zeta}+\nabla_\zeta \gara_{\beta \eta} =  \mathcal{J}^\alpha \varepsilon_{\alpha\eta\zeta\beta}.
\end{equation*}
Then,
\begin{equation*}
\begin{aligned}
\nabla_\mu   \dot{Q} \indices{^\mu _\nu} [\gara]\\ & =
(\nabla_\mu H^{\mu\zeta\kappa\lambda}) {\gara}_{\kappa\lambda} {\gara}_{\nu\zeta} + H^{\mu\zeta\kappa\lambda}\nabla_\mu {\gara_{\kappa\lambda}}{\gara}_{\nu\zeta} + H^{\mu \zeta \kappa\lambda} {\gara}_{\kappa\lambda} \nabla_\mu \gara_{\nu\zeta} \\ &
- \frac 1 4 (\nabla_\nu H^{\zeta \eta \kappa\lambda} \gara_{\zeta \eta} \gara_{\kappa\lambda})- \frac 1 2 \delta^\mu_\nu H^{\zeta \eta \kappa \lambda} \nabla_\mu \gara_{\zeta \eta} {\gara}_{\kappa\lambda}\\ & =
(\nabla_\mu H^{\mu\zeta\kappa\lambda}) {\gara}_{\kappa\lambda} {\gara}_{\nu\zeta}- \frac 1 4 (\nabla_\nu H^{\zeta \eta \kappa\lambda} \gara_{\zeta \eta} \gara_{\kappa\lambda})+
\mathcal{I}^\zeta \gara_{\nu \zeta}+ H^{\mu\zeta\kappa\lambda}\gara_{\kappa\lambda}\nabla_\mu \gara_{\nu \zeta}\\ &
- \frac 1 2 H^{\zeta \eta \kappa\lambda} \gara_{\kappa\lambda}(-\nabla_\eta \gara_{\nu \zeta}- \nabla_\zeta \gara_{\eta \nu} + \mathcal{J}^\alpha \varepsilon_{\alpha \nu \zeta \eta}) \\ & =
(\nabla_\mu H^{\mu\zeta\kappa\lambda}) {\gara}_{\kappa\lambda} {\gara}_{\nu\zeta}- \frac 1 4 (\nabla_\nu H^{\zeta \eta \kappa\lambda} \gara_{\zeta \eta} \gara_{\kappa\lambda})+
\mathcal{I}^\zeta \gara_{\nu \zeta}\\ &
- \frac 1 2 H^{\zeta \eta \kappa\lambda} \gara_{\kappa\lambda}( \mathcal{J}^\alpha \varepsilon_{\alpha \nu \zeta \eta}) 
\end{aligned}
\end{equation*}
All in all,
\begin{equation}
\begin{aligned}
\nabla_\mu  \dot{Q} \indices{^\mu _\nu}[\gara]= 
(\nabla_\mu H^{\mu\zeta\kappa\lambda}) {\gara}_{\kappa\lambda} {\gara}_{\nu\zeta}- \frac 1 4 (\nabla_\nu H^{\zeta \eta \kappa\lambda} \gara_{\zeta \eta} \gara_{\kappa\lambda})+
\mathcal{I}^\zeta \gara_{\nu \zeta}
- \frac 1 2 H^{\zeta \eta \kappa\lambda} \gara_{\kappa\lambda}( \mathcal{J}^\alpha \varepsilon_{\alpha \nu \zeta \eta}).
\end{aligned}
\end{equation}
\end{proof}
\begin{remark}
Using the definition of $H_\Delta$ we have that Equation~\eqref{eq:stressdiv} can be rewritten as
\begin{equation}\label{eq:sdivhdelta}
\nabla_\mu   \dot{Q}\indices{^\mu _\nu}[\gara]= 
(\nabla_\mu H_\Delta^{\mu\zeta\kappa\lambda}) {\gara}_{\kappa\lambda} {\gara}_{\nu\zeta}- \frac 1 4 (\nabla_\nu H_\Delta^{\zeta \eta \kappa\lambda} \gara_{\zeta \eta} \gara_{\kappa\lambda})+
\mathcal{I}^\zeta \gara_{\nu \zeta}
- \frac 1 2 H_\Delta^{\zeta \eta \kappa\lambda} \gara_{\kappa\lambda}( \mathcal{J}^\alpha \varepsilon_{\alpha \nu \zeta \eta}) -\mathcal{J}^\alpha \garad_{\alpha \nu}.
\end{equation}
\end{remark}

\section{Deducing \texorpdfstring{$L^2$}{L2} bounds from the \texorpdfstring{$L^\infty$}{pointwise} bootstrap assumptions}\label{sec:l2fromlinf}

The goal of this section is to prove the following proposition.
\begin{proposition}\label{prop:energy}
 Let $N \in \N_{\geq 0}$, $N \geq 5$. There exist a small constant $\varepsilon_0 = \varepsilon_0(N)$, a radius $r_\text{in} \in (M, 2M)$ and a constant $C(N) > 0$ such that, for all $\varepsilon \leq \varepsilon_0$ and $\fara$ solutions of the MBI system \eqref{MBI} on $\tregio{t_0^*}{t_1^*}$, $t^*_1 \geq t_0^*$, with initial Sobolev norm $\norm{\fara}_{H^{2N}(\widetilde{\Sigma}_{t^*_0})} \leq \varepsilon$ (see Definition~\ref{def:sobolevnorm}), satisfying the bootstrap assumptions $BA(\tregio{t_0^*}{t_1^*}, 1, N, \varepsilon)$ for the components of $\fara$, there holds
\begin{equation}\label{eq:unifl2}
\norm{\fara}^2_{H^{2N}(\widetilde{\Sigma}_{t^*})} \leq C \varepsilon^2,
\end{equation}
where $t^* \in [t_0^*, t_1^*]$.
\end{proposition}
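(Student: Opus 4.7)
\emph{Plan.} The idea is a commuted energy estimate: for each multi-index $(I,a)$ with $|I|+a \leq 2N$, apply the divergence theorem to the canonical stress $\dot{Q}[\faram_{,I,a}]$ contracted with a combination of the Killing field $T = \partial_{t^*}$ and the redshift vectorfield $\redsh$, sum over multi-indices using Proposition~\ref{prop:liecontrol}, and close the resulting differential inequality for $E(t^*) := \norm{\fara}^2_{H^{2N}(\widetilde{\Sigma}_{t^*})}$ via Gr\"onwall.

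\emph{Step 1 (Identity for each variation).} For every $(I,a)$ with $|I|+a \leq 2N$, Proposition~\ref{prop:commut} together with Remark~\ref{rmk:formij} yields equations of variation whose sources are of $\text{OT}_1^{(|I|+a)}$-type, hence by Lemma~\ref{lem:qdive} the divergence $\nabla_\mu \dot{Q}^\mu{}_\nu[\faram_{,I,a}]$ consists of strictly lower-order nonlinear expressions. I would apply the divergence theorem on $\tregio{t_0^*}{t^*}$ with the vectorfield $X = T + C_0 \chih \redsh$ for a small $C_0 > 0$. Since $T$ is Killing, only the $\redsh$-part of $X$ contributes a deformation-tensor bulk, and it does so with the correct sign by Lemma~\ref{lem:coercivity}. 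One obtains
\begin{equation*}
\int_{\widetilde{\Sigma}_{t^*}}\!\dot{Q}[\faram_{,I,a}](\widetilde{n}, X) + C_0\!\int_{\tregio{t_0^*}{t^*}}\!\chih\,{}^{(\redsh)}\pi^{\mu\nu}\dot{Q}_{\mu\nu}[\faram_{,I,a}] \leq \int_{\widetilde{\Sigma}_{t_0^*}}\!\dot{Q}[\faram_{,I,a}](\widetilde{n}, X) + \mathcal{E}_{I,a}(t^*),
\end{equation*}
where $\mathcal{E}_{I,a}$ bundles the divergence error and the cutoff error from $\chih$. Bootstrap bound~\eqref{bootstrap2}, together with a small choice of $\varepsilon_0$, ensures $|\fara| \leq \redeps$ near the horizon, so Lemma~\ref{lem:coercivity} makes the horizon boundary flux coercive in $|\faram_{,I,a}|^2$; away from the horizon the $T$-flux is coercive by the standard linear Maxwell computation, up to an $O(\varepsilon^{3/2})$ perturbation absorbed for $\varepsilon_0$ small.

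\emph{Step 2 (Summation and nonlinear control).} Summing over $|I|+a \leq 2N$ and using Proposition~\ref{prop:liecontrol} (valid for $\varepsilon_0 \leq \varepsilon_\text{var}$) to pass from $\sum_{I,a}|\faram_{,I,a}|^2$ to $|\partial^{\leq 2N}\fara|^2$ pointwise, the individual identities assemble into
\begin{equation*}
E(t^*) + \int_{\tregio{t_0^*}{t^*}}\chih\,|\partial^{\leq 2N}\fara|^2\,\text{dVol} \lesssim E(t_0^*) + \int_{t_0^*}^{t^*}\mathcal{N}(s)\,ds.
\end{equation*}
The structural observation is that $H_\Delta^{\mu\nu\kappa\lambda}[\fara]$ vanishes at least quadratically at $\fara = 0$, so every term in the aggregate nonlinearity $\mathcal{N}(s)$ takes the schematic form $|\partial^{\leq 2N}\fara|^2\,|\partial^{\leq N}\fara|^2$ after redistributing derivatives so that at most two factors carry more than $N$ derivatives. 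The bootstrap bounds~\eqref{bootstrap}--\eqref{bootstrap2}, using $r \geq r_\text{in} > M$, give a pointwise $L^\infty$ bound $|\partial^{\leq N}\fara|^2 \leq CA^2\varepsilon^{3/2}\tau^{-2}$ (the worst component being $\alphabar$ with decay $\tau^{-1}r^{-1}$). Hence $\mathcal{N}(s) \leq CA^2\varepsilon^{3/2}\tau^{-2} E(s)$.

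\emph{Step 3 (Gr\"onwall).} Substituting into the previous inequality and applying Gr\"onwall on $[t_0^*, t^*]$ gives
\begin{equation*}
E(t^*) \leq E(t_0^*)\exp\!\left(CA^2\varepsilon^{3/2}\int_{t_0^*}^{\infty}\tau^{-2}\,d\tau\right) \leq C\varepsilon^2,
\end{equation*}
as required. The main obstacle is the nonlinear classification in Step~2: one must redistribute derivatives across each term in $\mathcal{N}(s)$ so that every product contains at least one factor of order $\leq N$ which can be put in $L^\infty$ with spacetime-integrable weight via the bootstrap, while top-order factors of order $\leq 2N$ only appear in $L^2$ contributing to $E(s)^{1/2}$. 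The quadratic vanishing of $H_\Delta$ at $\fara = 0$ is essential for making this possible.
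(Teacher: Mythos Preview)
Your proposal has the right overall architecture but contains two genuine gaps that the paper's proof is specifically designed to address.

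\textbf{Gap 1: the $Y$-commutation sources are not lower-order nonlinear.} You claim that the sources in the equations of variation are of $\text{OT}_1^{(|I|+a)}$-type and hence ``strictly lower-order nonlinear''. But for $a\geq 1$, Proposition~\ref{prop:commut} and Remark~\ref{rmk:formij} produce, in addition to $\text{OT}_1$, the \emph{linear} term $a(\nabla^\kappa Y^\alpha)\nabla_\alpha(\faram_{,I,a-1})\indices{_\kappa^\nu}$, which is at the \emph{same} total order $|I|+a$ as $\faram_{,I,a}$ itself. When this feeds into $\nabla_\mu\dot Q^\mu{}_\nu\redsh^\nu$ via Lemma~\ref{lem:qdive}, it contributes a bulk term of size $|\faram_{,I,a}|^2$ with no small factor, so your Gr\"onwall closure in Step~3 would yield exponential growth, not boundedness. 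The paper handles this via Lemma~\ref{lem:pointpos}: a direct computation shows that these specific contractions with $\redsh$ produce \emph{positive} contributions of the form $((\faram_{,I,a})_{YL})^2$ and $(\faram_{,I,a})_{YA}(\faram_{,I,a})^{YA}$, plus cross-terms controlled by $|\faram_{,\widetilde I(\Omega_i),a-1}|^2$. One then chooses weights $B_a$ (Lemma~\ref{lem:qtot}) so that summing over $a$ absorbs these cross-terms telescopically.

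\textbf{Gap 2: the cutoff error is linear, not nonlinear.} The redshift deformation bulk ${}^{(\redsh)}\pi^{\mu\nu}\dot Q_{\mu\nu}$ is coercive only on $\{r_\text{in}\leq r\leq 4M-r_\text{in}\}$; on the transition region $\{4M-r_\text{in}\leq r\leq 3M\}$ it contributes a bulk error $\sim|\faram_{,I,a}|^2$ with no $\varepsilon$ factor, so it cannot be absorbed into $\mathcal N(s)\leq C\varepsilon^{3/2}\tau^{-2}E(s)$ as you claim. The paper's remedy is to first prove a separate \emph{degenerate} $T$-estimate (Lemma~\ref{lem:degen} and inequality~\eqref{eq:intdeg}), which---because $T$ is Killing---bounds the flux on $\{r\geq 4M-r_\text{in}\}$ by the \emph{initial} energy plus nonlinear errors. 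Feeding this into the redshift estimate yields the differential inequality $F'(s)+C_1 F(s)\leq C F(s_0)+C\varepsilon^{3/2}\int_{s_0}^s \bar s^{-2}F(\bar s)\,d\bar s+C\varepsilon^{3/2}s^{-2}F(s)$, which is \emph{not} closed by Gr\"onwall but by the explicit integrating-factor argument in~\eqref{eq:tosup}. The argument is also run inductively in the total order $j$, with the base case $j=0$ using the genuine divergence-free MBI stress tensor $Q$ rather than $\dot Q$.
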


\begin{remark}
	For the remainder of the section, we will consider the integer $N$ as fixed.
\end{remark}

\begin{remark}
Notice moreover that we assume boundedness on initial data for up to $2N$ derivatives in the $\p_t$-direction for initial data (an alternative notation for the vector field $\p_t$ is $T$, see Section~\ref{sec:not:geo}). These derivatives can be ``converted'', using the MBI system, into derivatives intrinsic to the hypersurface $\widetilde{\Sigma}_{t_0^*}$.
\end{remark}

\begin{remark}\label{rmk:fnormnull}
Let us note that the norm on the LHS of estimate~\eqref{eq:unifl2} gives control over the following expression, in terms of the null components $\rho, \sigma, \alpha, \alphabar$. There exists a constant $C >0$ such that the following holds:
\begin{equation}\label{eq:fnormnull}
	\norm{\fara}^2_{H^{2N}(\widetilde{\Sigma}_{t^*})} \geq C \int_{\widetilde{\Sigma}_{t^*}} \big( |\partial^{\leq 2N} \rho|^2 + |\partial^{\leq 2N} \sigma|^2  + |\snabla^{\leq 2N} \alpha|^2 +|\snabla^{\leq 2N} \alphabar|^2\big) \de \widetilde \Sigma_{t^*}
\end{equation}
\end{remark}

Let us notice the following important Corollary of Proposition~\ref{prop:energy}, which follows from the usual Sobolev embedding in three space dimensions:

\begin{corollary}[Unweighted Sobolev embedding]\label{cor:unweightedsob}
Let $N \in \N_{\geq 0}$, $N \geq 5$. There exist a small constant $\varepsilon_0 = \varepsilon_0(N)$, a radius $r_\text{in} \in (M, 2M)$ and $C = C(N) > 0$ such that, for all $\varepsilon \leq \varepsilon_0$ and $\fara$ solutions of the MBI system~\eqref{MBI} on $\tregio{t_0^*}{t_1^*}$, $t^*_1 \geq t_0^*$, with initial Sobolev norm $\norm{\fara}_{H^{2N}(\widetilde{\Sigma}_{t^*_0})} \leq \varepsilon$ (see Definition~\ref{def:sobolevnorm}), satisfying the bootstrap assumptions $BA(\tregio{t_0^*}{t_1^*}, 1, N, \varepsilon)$ for the components of $\fara$, there holds
\begin{equation}\label{eq:unweghtsob}
\sum_{h=0}^{2N-2}|\partial^h \fara (t^*,r_1, \theta, \phi)| \leq C \varepsilon,
\end{equation}
where $t^* \in [t_0^*, t_1^*]$, $r_1 \geq r_{\text{in}}$.
\end{corollary}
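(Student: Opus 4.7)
The plan is to apply Proposition~\ref{prop:energy} and then invoke a three-dimensional Sobolev embedding on each spacelike slice, losing exactly two derivatives.

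First I would apply Proposition~\ref{prop:energy} verbatim: the hypotheses of the Corollary are identical to those of the Proposition, so we obtain
$$\norm{\fara}^2_{H^{2N}(\widetilde{\Sigma}_{t^*})} \leq C\varepsilon^2,$$
uniformly for $t^* \in [t_0^*, t_1^*]$.

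Second I would upgrade this $L^2$ bound to a pointwise bound via a local Sobolev inequality. The slice $\widetilde{\Sigma}_{t^*}$ is topologically $[r_{\text{in}},\infty)\times\mathbb{S}^2$ and, equipped with its Schwarzschild-induced Riemannian metric, has bounded geometry (it is asymptotically flat at spatial infinity, and the radial metric coefficient $(1+2M/r_1)$ remains uniformly bounded above and below for $r_1 \geq r_{\text{in}}$). I would cover $\widetilde{\Sigma}_{t^*}$ by open sets $\{U_i\}$ of uniformly bounded diameter and uniformly bounded overlap, each quasi-isometric to a fixed Euclidean ball $B\subset\R^3$ with constants independent of $i$. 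The classical three-dimensional Sobolev embedding $H^2(B)\hookrightarrow L^\infty(B)$ then implies that, for any tensor component $f$ of $\partial^h\fara$ with $h\leq 2N-2$,
$$\sup_{U_i}|f| \leq C\sum_{k=0}^{2}\norm{\nabla^k f}_{L^2(U_i)} \leq C\norm{\fara}_{H^{2N}(U_i)},$$
where in the last step I would use Proposition~\ref{norm.equiv} to freely pass between the various notions of derivative appearing in the definition of $|\partial^{\leq 2N}\fara|$.

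Third, summing over $h\leq 2N-2$ and taking the supremum over all covering sets $U_i$, and using $\norm{\fara}_{H^{2N}(U_i)}\leq\norm{\fara}_{H^{2N}(\widetilde{\Sigma}_{t^*})}\leq C\varepsilon$, yields the desired bound. The only mild obstacle is uniformity of the Sobolev constant as $r_1\to\infty$; this is ensured by the asymptotic flatness of Schwarzschild combined with the fact that the frame used in the definition of $|\partial^h\fara|$ employs the Killing rotations $\Omega_i$ (or their $1/r$-rescaled versions), which naturally carry the correct $r$-weight so that the Sobolev inequality on a large dyadic annulus $\{r_1\sim 2^j\}$ rescales to the standard Sobolev inequality on a fixed-size Euclidean region. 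Once this scaling has been carried out, the embedding constant is independent of the annulus, the supremum over $i$ closes, and the argument is complete.
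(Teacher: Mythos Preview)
Your proposal is correct and matches the paper's approach exactly: the paper simply states that this Corollary follows from Proposition~\ref{prop:energy} together with ``the usual Sobolev embedding in three space dimensions,'' without further detail. You have supplied the standard details (bounded geometry, local $H^2\hookrightarrow L^\infty$ embedding, uniformity of constants via the $r$-weighted frame), which are precisely what the paper implicitly invokes.
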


Let us now turn to the proof of Proposition~\ref{prop:energy}. The scheme of the argument is as follows: we will show degenerate estimates at the event horizon (i.e.~, estimates which display degenerate control over derivatives in the $\lbar$ direction close to the event horizon). After that, we will improve those estimates to include non-degenerate control on the $\lbar$ derivatives close to $\mathcal{H}^+$. Subsequently, we will extend those estimates to include the case of mixed derivatives. Finally, we will integrate the resulting differential inequality by means of an application of Gr\"onwall's inequality. We divide this reasoning in several lemmas.

\subsection{Degenerate estimates at the horizon}\label{sub:degen}
In this section, we prove energy estimates to control degenerate (at $\mathcal{H}^+$) energy fluxes through the foliation $\Sigma_{t^*}$. It is not possible to apply the Gr\"onwall inequality directly on these estimates, as the right hand side of inequality \eqref{eq:degenergyfinal} contains all derivatives up to the horizon, and the left hand side only contains derivatives in the direction of Killing vector fields.

\emph{Throughout Subsection~\ref{sub:degen}, denote $\dot \fara :=\dot \fara_{,I, 0}$}, with $I \in \iindicess^j$. We let $t_2^* \geq t_1^* \geq t_0^*$.
Recall the definitions
\begin{equation*}
\Sigma_{t_1^*} := \{ t^* = t^*_1\} \subset S_e,  \qquad \regio{t_1^*}{t_2^*} := \cup_{s \in [t_1^*, t_2^*]} \Sigma_s.
\end{equation*}
Recall that $n_{\Sigma_{t^*}}$ is defined as the future unit normal to the foliation $\Sigma_{t^*}$.

\begin{lemma}\label{lem:degen} There exists a constant $C$ and a small constant $\varepsilon_{\text{deg}} > 0$ such that for every solution of the MBI system (\ref{MBI}) on $\tregio{t_0^*}{t_1^*}$ satisfying the bootstrap assumption $BA(\tregio{t_0^*}{t_1^*}, 1, N, \varepsilon)$ with $0 < \varepsilon < \varepsilon_{\text{deg}}$, and every $j \in \N_{\geq 0}$ such that $j \leq 2N$, the following holds. Denote $\dot \fara :=\dot \fara_{,I, 0}$, with $I \in \iindicess^j$ (see Section~\ref{sec:not:der} for the definition). 
Also, recall from Section~\ref{sec:not:stresses}
\begin{equation*}
Q^{(\text{MW})}_{\mu\nu}[\dot \fara] := \dot {\fara}_{\mu\alpha} \dot {\fara}\indices{_\nu^\alpha} - \frac 1 4 g_{\mu\nu} \dot{\fara}^{\alpha \beta} \dot{\fara}_{\alpha \beta}.
\end{equation*}
We then have:
\begin{equation} \label{eq:degenergyfinal}
\begin{aligned}
&\int_{\Sigma_{t^*_2}} \left( Q^{(MW)}_{\mu\nu}[\dot \fara] T^\nu n_{\Sigma_{t^*}}^\mu \right) \de{\Sigma_{t^*}} -
\int_{\Sigma_{t^*_1}} \left(Q^{(MW)}_{\mu\nu}[\dot \fara] T^\nu n_{\Sigma_{t^*}}^\mu \right) \de{\Sigma_{t^*}} \\ & \leq
 C \varepsilon^{3/2} \int_{\regio{t^*_1}{t^*_2}} (t^*)^{-2}  | {\dot \fara} | |\partial^{\leq j} \fara| \de \text{Vol} + 
 C \int_{\Sigma_{t^*_2}}|\fara|^2 |\dot \fara|^2 \de{\Sigma_{t^*}} +
C \int_{\Sigma_{t^*_1}}|\fara|^2 |\dot \fara|^2 \de{\Sigma_{t^*}}.
\end{aligned}
\end{equation}
\end{lemma}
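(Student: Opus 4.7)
My plan is to apply the divergence theorem to the canonical stress $\dot Q[\dot\fara]^{\mu}{}_{\nu}$ contracted with the Killing vectorfield $T=\partial_{t^*}$ over the slab $\regio{t_1^*}{t_2^*}$. The boundary terms will involve the flux $\dot Q[\dot\fara]_{\mu\nu}T^\nu n_\Sigma^\mu$ rather than the Maxwell flux $Q^{(\mathrm{MW})}[\dot\fara]_{\mu\nu}T^\nu n_\Sigma^\mu$ appearing in~\eqref{eq:degenergyfinal}. I will replace one by the other using the observation, visible from the expression~\eqref{eq:qform}, that all non-linear corrections in $\dot Q$ are at least quadratic in $\fara$ multiplied by a quadratic expression in $\dot\fara$. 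Schematically $\dot Q[\dot\fara]-Q^{(\mathrm{MW})}[\dot\fara]\sim\fara\cdot\fara\cdot\dot\fara\cdot\dot\fara$, so this error is bounded pointwise by $C|\fara|^2|\dot\fara|^2$ and accounts exactly for the boundary correction terms on the right-hand side of~\eqref{eq:degenergyfinal}.

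For the bulk, since $T$ is Killing the tensor $\nabla_\mu T_\nu$ is antisymmetric and the symmetric part of $\dot Q^{\mu\nu}$ (which includes the Maxwell stress) contracts to zero against it, leaving the identity
$$\nabla_\mu\big(\dot Q[\dot\fara]^{\mu}{}_{\nu} T^\nu\big) = (\nabla_\mu \dot Q^{\mu}{}_{\nu})T^\nu + \dot Q^{[\mu\nu]}\nabla_{[\mu}T_{\nu]}.$$
The antisymmetric piece $\dot Q^{[\mu\nu]}$ comes entirely from the non-linear corrections, so again $|\dot Q^{[\mu\nu]}\nabla_{[\mu}T_{\nu]}|\lesssim |\fara|^2|\dot\fara|^2$, and this fits inside the bulk term. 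The principal bulk contribution $(\nabla_\mu \dot Q^{\mu}{}_{\nu})T^\nu$ is computed via Lemma~\ref{lem:qdive} with $\gara = \dot\fara$, using that Proposition~\ref{prop:commut} gives the equations of variation for $\dot\fara = \faram_{,I,0}$ with source $\mathcal{I}^{,I,0}$ and vanishing $\mathcal{J}^{,I,0}$ (the dual equation retains the Bianchi form because $\mathcal{K}$-Lie derivatives commute with $\nabla$ and with the Hodge dual). The divergence formula reduces to
$$T^\nu\Big[(\nabla_\mu H_\Delta^{\mu\zeta\kappa\lambda})\dot\fara_{\kappa\lambda}\dot\fara_{\nu\zeta} - \tfrac14(\nabla_\nu H_\Delta^{\zeta\eta\kappa\lambda})\dot\fara_{\zeta\eta}\dot\fara_{\kappa\lambda} + (\mathcal{I}^{,I,0})^\zeta\dot\fara_{\nu\zeta}\Big].$$

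To extract the declared $\varepsilon^{3/2}(t^*)^{-2}$ prefactor, I will use that $H_\Delta[\fara]$ is at least quadratic in $\fara$, so $|\nabla H_\Delta|\lesssim|\fara||\partial\fara|$ modulo higher-order corrections; combining the worst components of the bootstrap assumptions~\eqref{bootstrap} yields $|\fara||\partial\fara|\lesssim\varepsilon^{3/2}(\tau r)^{-2}$, and the elementary inequality $\tau r\gtrsim t^*$, valid throughout the exterior slab, converts this into $\varepsilon^{3/2}(t^*)^{-2}$. Similarly, the schematic expression for $\mathcal{I}^{,I,0}$ from Proposition~\ref{prop:commut} and Remark~\ref{rmk:formij} consists of products $\nabla^{m_1}H_\Delta\cdot\nabla^{m_2}\fara$ with $m_1+m_2\le|I|+1$, each of which factorises so that two factors receive bootstrap decay and the remaining factor is absorbed into $|\partial^{\leq j}\fara|$ via the norm equivalence of Proposition~\ref{norm.equiv}. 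Integrating the resulting pointwise bound over $\regio{t_1^*}{t_2^*}$ produces~\eqref{eq:degenergyfinal}.

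The main obstacle I expect is the combinatorial bookkeeping associated with the commutation Proposition~\ref{prop:commut}: for every schematic term produced by $(\mathcal{I}^{,I,0})^\zeta\dot\fara_{\nu\zeta}$, I must verify that the decay factors can be distributed so that exactly one $|\dot\fara|$ and one $|\partial^{\leq j}\fara|$ survive to match the right-hand side, while the remaining derivatives of $\fara$ carry enough copies of the worst-case bootstrap decay to produce the prefactor $\varepsilon^{3/2}(t^*)^{-2}$. Essentially this reduces to checking, term by term, that at least two factors of $\fara$ (or its derivatives) of top available decay order are always available.
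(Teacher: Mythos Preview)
Your handling of the bulk and of the $\dot Q\to Q^{(\mathrm{MW})}$ replacement on the spacelike slices is essentially the paper's argument: the same use of Lemma~\ref{lem:qdive}, the Killing property of $T$ to reduce $\dot Q\indices{^\mu_\nu}\nabla_\mu T^\nu$ to the antisymmetric (hence purely nonlinear) part $\qnl$, and the bootstrap to extract $\varepsilon^{3/2}(t^*)^{-2}$ from the factors $|\fara|\,|\partial\fara|$. The combinatorics you flag as the ``main obstacle'' is exactly what the paper does term by term in the passage estimating $\int|X^\nu\dot\fara_{\nu\eta}\mathcal I^\eta|$.

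Where your proposal diverges, and where there is a real gap, is the inner boundary of the integration region. You apply the divergence theorem over the slab $\regio{t_1^*}{t_2^*}$ but account only for the two $\Sigma_{t^*}$ boundaries. That slab has a third boundary component along $\mathcal H^+$, whose flux is $\dot Q_{LL}$. For the Maxwell part this is $|\dot\alpha|^2\ge 0$, but the nonlinear corrections in~\eqref{eq:qform} evaluated at $(L,L)$ contain factors $\fara^{\kappa\lambda}\dot\fara_{\kappa\lambda}$ involving \emph{all} components of $\dot\fara$, in particular $\dot\alphabar$; schematically they are of size $|\fara|^2|\dot\alpha|\,|\dot\fara|$, not $|\fara|^2|\dot\alpha|^2$. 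Thus for small $|\fara|$ you only get $\dot Q_{LL}\ge \tfrac12|\dot\alpha|^2 - C|\fara|^2|\dot\fara|^2$, which does not have a sign, and the resulting horizon error $\int_{\mathcal H^+}|\fara|^2|\dot\fara|^2$ does not fit into any of the terms on the right-hand side of~\eqref{eq:degenergyfinal}.

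The paper avoids this precisely by \emph{not} using $T$ directly. It contracts $\dot Q$ with the Boillat-shifted multiplier $X^\nu:=(b^{-1})^{\nu\mu}g_{\mu\alpha}T^\alpha$ (so $X=T+O(|\fara|^2)$, hence the bulk analysis is unchanged up to admissible errors), integrates over the region cut off by the timelike hypersurface $\{t=t_{\max}\}$, invokes the pointwise positivity $\dot Q_{\mu\nu}T^\mu X^\nu\ge 0$ of Lemma~\ref{lem:qpos} to discard that inner boundary flux, and then sends $t_{\max}\to\infty$ by monotone convergence. This choice of multiplier and cutoff is not cosmetic; it is what makes the inner boundary term controllable. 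Your plan is salvageable, but you must either reproduce this device or supply an independent argument bounding the horizon flux of $\dot Q\,T$.
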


\begin{remark}
	The proof will be carried out by using the energy estimates arising from $\dot Q[\dot \fara_{,I,0}]$, i.e. the canonical stress introduced in Section~\ref{sec:not:stresses}, where we set $\gara := \dot \fara_{,I,0}$. The difference between $\dot Q [\dot \fara_{,I,0}]$ and $Q^{(\text{MW})}[ \dot \fara_{,I,0}]$ may be slightly misleading, and we underline that these two objects are different.
\end{remark}

\begin{remark}
It is informative to see what the energy density on the LHS of Equation~\eqref{eq:degenergyfinal} looks like when expressed in null components. A straightforward calculation yields:
\begin{equation}
 Q^{(MW)}_{\mu\nu}[\dot \fara] T^\nu n_{\Sigma_{t^*}}^\mu  = \frac 1 {4\sqrt{1+\mu}} \Big(|\dot \alpha|^2 + \frac{1+\mu}{1-\mu}|\dot \alphabar|^2 + (1+\mu) (\dot \rho^2 + \dot \sigma^2) \Big).
\end{equation}
Dotted null components here denote the null components corresponding to $\dot \fara$. We see the degeneracy on the term in $\alphabar$, as for it to be non-degenerate we would need a factor of $(1-\mu)^{-2}$ in front of such term (recall that $\lbar$ vanishes linearly (like $1-\mu$) at $\mathcal{H}^+$). \end{remark}

\begin{proof}[Proof of Lemma~\ref{lem:degen}]
Throghout the proof, we denote the canonical stress as follows:
\begin{equation*}
	\dot{Q} \indices{^\mu _\nu}[\dot \fara_{,I,0}] := H^{\mu\zeta\kappa\lambda} {(\dot \fara_{,I,0})}_{\kappa\lambda} ({\dot \fara_{,I,0})}_{\nu\zeta} - \frac 1 4 \delta_\nu^\mu  H^{\mu\zeta\kappa\lambda} {{( \dot\fara_{,I,0})}_{\kappa\lambda}}{( \dot\fara_{,I,0})}_{\mu\zeta}.
\end{equation*}
For conciseness, we denote, throughout this proof,
\begin{equation*}
	\dot{Q} := \dot{Q} \indices{^\mu _\nu}[\dot \fara_{,I,0}].
\end{equation*}
Here, $H = H[\fara]$ is the tensor field defined in~\eqref{hform2}.  Furthermore,
recall the definition  $T := \partial_t$ from Section~\ref{sec:not:geo}. Let also
 $(b^{-1})^{\mu\nu}$ be the inverse MBI metric (also called Boillat metric) defined in Appendix~\ref{sec:bmetric}. The form of the metric $b^{-1}$ is the following:
$$
 (b^{-1})^{\mu\nu} := g^{\mu\nu} - (1+\lun)^{-1} \fara^{\mu\kappa}\fara\indices{^\nu_\kappa}.
$$
We carry out energy estimates with the vector field 
$$X^\nu :=(b^{-1})^{\nu\mu} g_{\mu\alpha} T^\alpha.$$
We denote 
$$J_1^\mu := \dot Q\indices{^\mu_\nu} X^\nu,$$
and we integrate its divergence over the region depicted in Figure~\ref{fig1}. The region is bounded above by $\Sigma_{t_2^*}$, below by $\Sigma_{t_1^*}$, and on the left by the surface $\mathcal{T}_{\text{max}}:=\{t = t_{\text{max}}\} \cap \{t_1^* \leq t^* \leq t_2^*\}$, for some $t_{\text{max}} > 0$.

We denote
\begin{equation*}
\begin{aligned}
\overline{\Sigma}_{t^*} &:= \Sigma_{t^*} \cap \{t \leq t_{\text{max}}\},\qquad 
\oregio{t^*_1}{t_2^*} :&= \regio{t^*_1}{t^*_2} \cap \{t \leq t_{\text{max}}\}.
\end{aligned}
\end{equation*}
We let $\text{d} \mathcal{T}_{\text{max}}$ the induced volume form on $\mathcal{T}_{\text{max}}$.
\begin{figure}\label{figure1}
\centering
\begin{tikzpicture}	

\node (I)    at ( 0,0) {};

\path
(I) +(90:4)  coordinate[label=90:$i^+$]  (top)
+(-90:4) coordinate[label=-90:$i^-$] (bot)
+(0:4)   coordinate                  (right)
+(180:4) coordinate[label=180:$i^0$] (left)
;
\path (top) +(-45:0.5) coordinate (tar);

\path 
(top) + (180:4) coordinate[label=90:$i^+$]  (acca)
+ (-45: 2) coordinate (nulluno)
+ (-45: 3) coordinate (nulldue)
;

\path 
(left) + (-45: 2) coordinate (correspuno)
+ (-45: 3) coordinate (correspdue)
;

\path (top) + (-135: 2 ) coordinate (horiuno)
+ (-135: 4) coordinate (horidue);
\draw[name path = stre, opacity = 0] (left) to (tar);
\draw [name path = ciao, thick, opacity = 0] (horiuno) to [bend left = 15] node[midway, above]{$\Sigma_{t^*_1}$} (right);
\draw [name path = bye, thick, opacity = 0] (horidue) to [bend left = 10] node[midway, above]{$\Sigma_{t^*_2}$} (right);
\path [name intersections={of=ciao and stre,by=E}];
\path [name intersections={of=bye and stre,by=D}];
\draw [thick] (D) to [bend left = 15] node[midway, above]{$\Sigma_{t^*_1}$} (right);
\draw [thick] (E) to [bend left = 10] node[midway, above]{$\Sigma_{t^*_2}$} (right);
%\coordinate (sopra) at (intersection of (ciao) and (stre));
\draw [thick] (E) to [bend right = 7] node[midway, right]{$t = t_{\text{max}}$} (D);
\draw (left) -- 
node[midway, above left, sloped]    {}
(top) --
node[midway, above, sloped] {$\mathcal{I}^+$}
(right) -- 
node[midway, above, sloped] {$\mathcal{I}^-$}
(bot) --
node[midway, above, sloped]    {$\mathcal{H}^-$}    
(left) -- cycle;

\draw[decorate,decoration=zigzag] (top) -- (acca)
node[midway, above, inner sep=2mm] {$r=0$};

\end{tikzpicture}
\caption{Penrose diagram of the integration region.}\label{fig1}
\end{figure}
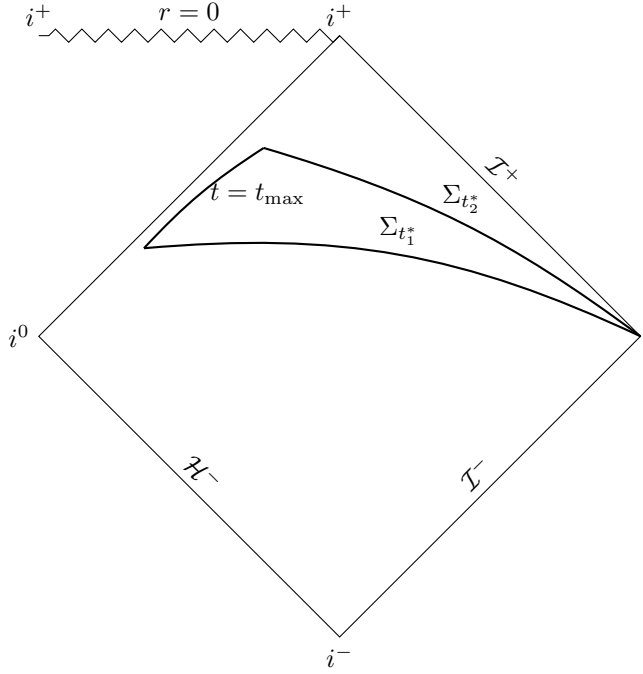
We obtain 
\begin{equation}
\begin{aligned}\label{energyt}
&\int_{\overline{\Sigma}_{{t_2^*}}} \left(\dot Q_{\mu\nu} X^\nu n_{\Sigma_{t^*}}^\mu \right) \de{\Sigma_{t^*}} -
\int_{\overline{\Sigma}_{{t_1^*}}} \left(\dot Q_{\mu\nu} X^\nu n_{\Sigma_{t^*}}^\mu \right) \de{\Sigma_{t^*}} +
\int_{\mathcal{T}_{\text{max}}} \left(\dot Q_{\mu\nu} X^\nu T^\mu \right) \de \mathcal{T}_{\text{max}}
\\
&\leq C  \int_{\oregio{{t_1^*}}{{t_2^*}}}\left(|\nabla_\mu (\dot Q\indices{^\mu_\nu})X^\nu|+ |\dot Q\indices{^\mu_\nu} \nabla_\mu X^\nu|\right) d \text{Vol}.
\end{aligned}
\end{equation}
Now, Lemma~\ref{lem:qpos} gives pointwise positivity of the third term in the LHS of Equation~\eqref{energyt}, so we can just ignore it.
We then proceed to estimate the first term in the RHS of Equation~\eqref{energyt}. We have, from Lemma~\ref{lem:qdive}, that
\begin{equation}\label{qdoteq}
\begin{aligned}
&\nabla_\mu (\dot Q\indices{^\mu_\nu}) =  {\dot \fara}_{\nu \eta} \mathcal{I}^\eta+ ( \nabla_\mu H_{\Delta}^{\mu\zeta\kappa\lambda}) {\dot \fara}_{\kappa\lambda} {\dot \fara}_{\nu\zeta}- \frac 1 4 ( \nabla_\nu H_{\Delta}^{\zeta\eta\kappa\lambda}) {\dot \fara}_{\zeta \eta} {\dot \fara}_{\kappa \lambda} =\\ &
 (\text{one}) + (\text{two}) + (\text{three}),
 \end{aligned}
\end{equation}
where
\begin{equation*}
 \mathcal{I}^\eta := (\mathcal{I}^{,I,0})^\eta= H_\Delta^{\mu\eta\kappa\lambda} \nabla_\mu (\dot \fara_{\kappa\lambda})  -\mathcal{L}^I_\mathcal{K}(H_\Delta^{\mu\eta\kappa\lambda} \nabla_\mu (\fara_{\kappa\lambda})).
\end{equation*}
Let's now write the term corresponding to $(\text{one})$ in the RHS of (\ref{energyt}) as 
\begin{align*}
	&\int_{\oregio{{t_1^*}}{{t_2^*}}} |X^\nu {\dot \fara}_{\nu \eta} \mathcal{I}^\eta| \de \text{Vol}\\ & \leq
	\int_{\oregio{{t_1^*}}{{t_2^*}}} \left|X^\nu {\dot \fara}_{\nu \eta} \left(\sum_{\substack{I \in \iindicess^{\leq j}, J \in \iindicess^{\leq j-1}	\\ |I|+|J| = j}}(\lie^{I}_\mathcal{K} H_\Delta^{\mu \eta \kappa \lambda})(\lie^J_\mathcal{K}\nabla_\mu \fara_{\kappa\lambda})\right)\right|  \de \text{Vol}\\ & \leq
	\int_{\oregio{{t_1^*}}{{t_2^*}}} \left|{\dot \fara}_{\nu \eta}\right| \left(\sum_{\substack{I \in \iindicess^{\leq j}, J \in \iindicess^{\leq j-1}	\\ |I|+|J| = j}}\left|\lie_\mathcal{K}^{I} H_\Delta^{\mu \eta \kappa \lambda}\right|\cdot \left|\lie_\mathcal{K}^{J}\nabla_\mu \fara_{\kappa\lambda}\right|\right) \de \text{Vol}\\ &
	\leq
	\int_{\oregio{{t_1^*}}{{t_2^*}}} \left|{\dot \fara}_{\nu \eta}\right| \left(\sum_{\substack{H, K \in \iindicess^{\leq j}, J \in \iindicess^{\leq j-1}	\\ |H|+|J| +|K|\leq j}}\left|\lie_\mathcal{K}^{H} \fara_{\alpha \beta} \right|\cdot \left|\lie_\mathcal{K}^{K} \fara_{\alpha \beta} \right|\cdot \left|\lie_\mathcal{K}^{J}\nabla_\mu \fara_{\kappa\lambda}\right|\right) \de \text{Vol}\\ &
	\leq C  \int_{\oregio{{t_1^*}}{{t_2^*}}} \varepsilon^{3/2} (t^*)^{-2} | {\dot \fara} | |\partial^{\leq j} \fara|  \de \text{Vol}.
\end{align*}
The second inequality is true because of the bootstrap assumptions and the form of $X^\nu$. 
The third inequality is true because of the form of $H_\Delta$ (quadratic in $\fara$).
The fourth inequality is true because at least two of $|I|, |J|, |K|+1$ are less or equal than $N$, and we can consequently apply the bootstrap assumptions, together with Remark~\ref{decaytrmk}. Finally, we used the equivalence between norms given by Lie derivatives and norms given by covariant derivatives (Proposition~\ref{norm.equiv}). Recall that the notation with $\partial$ contains weights in $r$. The terms $(\text{two})$ and $(\text{three})$ are estimated similarly.

We now examine the second term in the RHS of Equation~\eqref{energyt}. 
First, the form of $b^{-1}$ implies, together with the bootstrap assumptions, if $N \geq 1$,
\begin{equation*}
|\dot Q\indices{^\mu_\nu} \nabla_\mu X^\nu| \leq |\dot Q\indices{^\mu_\nu} \nabla_\mu T^\nu| + C |\fara| |\nabla \fara||\dot Q_{\mu\nu}| \leq |\dot Q\indices{^\mu_\nu} \nabla_\mu T^\nu| + C \varepsilon^{3/2} (t^*_1)^{-2} |\dot \fara|^2.
\end{equation*}

We now use the fact that $T$ is a Killing field. Namely,
\begin{equation*}
\dot Q \indices{^\mu_\nu} \nabla_\mu T^\nu =\qnl_{\mu\nu} \nabla^\mu T^\nu,
\end{equation*}
where, 
\begin{equation*}
\qnl_{\alpha\beta} := \frac 1 2 (\dot Q_{\alpha\beta} - \dot Q_{\beta \alpha}).
\end{equation*}
Clearly,
\begin{equation*}
|\qnl| \leq C  \varepsilon^{3/2} (t^*)^{-2} |\dot \fara|^2.
\end{equation*}
Estimating similarly the other terms, we obtain the inequality

\begin{equation} 
\begin{aligned}
\int_{\overline{\Sigma}_{t^*_2}} \left(\dot Q_{\mu\nu} X^\nu n_{\Sigma_{t^*}}^\mu \right) \de{\Sigma_{t^*}} -
\int_{\overline{\Sigma}_{t^*_1}} \left(\dot Q_{\mu\nu} X^\nu n_{\Sigma_{t^*}}^\mu \right) \de{\Sigma_{t^*}}
\leq 	C \varepsilon^{3/2} \int_{\oregio{t^*_1}{t^*_2}} (t_1^*)^{-2}  | {\dot \fara} | |\partial^{j} \fara| \de \text{Vol}.
\end{aligned}
\end{equation}
From the form of $b^{-1}$ and the form of $\dot Q$, and the bootstrap assumptions, we now obtain
\begin{equation*}
\begin{aligned}
&\int_{\overline{\Sigma}_{t^*_2}} \left( Q^{(MW)}_{\mu\nu}[\dot \fara] T^\nu n_{\Sigma_{t^*}}^\mu \right) \de{\Sigma_{t^*}} -
\int_{\overline{\Sigma}_{t^*_1}} \left( Q^{(MW)}_{\mu\nu}[\dot \fara] T^\nu n_{\Sigma_{t^*}}^\mu \right) \de{\Sigma_{t^*}} \\ 
& \leq
 C \varepsilon^{3/2} \int_{\oregio{t^*_1}{t^*_2}} (t^*)^{-2}  | {\dot \fara} | |\partial^{\leq j} \fara| \de \text{Vol} + 
C\int_{\overline{\Sigma}_{t^*_2}}|\fara|^2 |\dot \fara|^2 \de{\Sigma_{t^*}} +
C\int_{\overline{\Sigma}_{t^*_1}}|\fara|^2 |\dot \fara|^2 \de{\Sigma_{t^*}}.
\end{aligned}
\end{equation*}
Taking now $t_{\text{max}} \to \infty$, by the monotone convergence theorem, we obtain the desired estimate.

\end{proof}
\begin{remark} Note that the right hand side of~\eqref{eq:degenergyfinal}, in particular the term $|\partial^{\leq j} \fara|$, involves all derivatives of $\fara$. Hence, it is not possible to use a Gr\"onwall inequality to conclude boundedness. A further difficulty arises as we have degeneracy on the left hand side for the component $\alphabar$ at the horizon. In the next section, we proceed to commute with $Y$ at the horizon and to carry out redshift estimates in order to remove such degeneracy.
\end{remark}

\begin{remark}
The proof of previous Lemma~\ref{lem:degen} also easily yields the following inequality. Start by choosing $t_{\max}$ such that the $r$-coordinate of the points of intersection of the hypersurface $t = t_{\max}$ and the hypersurface $\Sigma_{t_2^*}$ is equal to $4M - r_{\text{in}}$, with $r_{\text{in}} \in (0,2M)$ sufficiently close to $2M$. By a similar reasoning as in the above proof, we obtain the inequality
\begin{equation}\label{eq:degenergyfinal2}
\begin{aligned}
&\int_{\Sigma_{t^*_2}\cap \{r \geq 4M - r_{\text{in}}\}} \left( Q^{(MW)}_{\mu\nu}[\dot \fara] T^\nu n_{\Sigma_{t^*}}^\mu \right) \de{\Sigma_{t^*}} \\
&\leq C \int_{\Sigma_{t^*_1}} \left(Q^{(MW)}_{\mu\nu}[\dot \fara] n_{\Sigma_{t^*}}^\nu n_{\Sigma_{t^*}}^\mu \right) \de{\Sigma_{t^*}} +
C \varepsilon^{3/2} \int_{\regio{t^*_1}{t^*_2}} (t^*)^{-2}  | {\dot \fara} | |\partial^{\leq j} \fara| \de \text{Vol}.
\end{aligned}
\end{equation}
Note that here the last two terms in display~\eqref{eq:degenergyfinal} have been incorporated resp. in the term on the LHS and in the first term in the RHS of the previous display. This is a consequence of the fact that the integral on the LHS is restricted to the region $r \geq 4M - r_{\text{in}}$, where $\p_t$ is uniformly timelike, and of the fact that $n_{\Sigma_{t^*}}$ is uniformly timelike everywhere.

Setting $t^*_1 := s_0$, and integrating Equation (\ref{eq:degenergyfinal2}) between $s_1$ and $s_2$, possibly taking $\varepsilon$ sufficiently small, and $r_\text{in}$ close to $2M$, we obtain
\begin{equation} \label{eq:intdeg}
\begin{aligned}
&\int_{s_1}^{s_2} \int_{\Sigma_{s}\cap \{r \geq 4M - r_{\text{in}}\}} \left(Q^{(MW)}_{\mu\nu}[\dot \fara] T^\nu n_{\Sigma_{t^*}}^\mu \right) \de{\Sigma_{t^*}} \de s \\
&\leq  C (s_2-s_1)\int_{\Sigma_{s_0}} \left(Q^{(MW)}_{\mu\nu}[\dot \fara] n_{\Sigma_{t^*}}^\nu n_{\Sigma_{t^*}}^\mu \right) \de{\Sigma_{t^*}} +
 C \varepsilon^{3/2} \int_{s_1}^{s_2} \int_{\regio{s_0}{s}} (t^*)^{-2}  | {\dot \fara} | |\partial^{\leq j} \fara| \de \text{Vol} \de s.
\end{aligned}
\end{equation}
\end{remark}

We now turn to proving nondegenerate estimates near the horizon $\mathcal{H}^+$. 

\subsection{Removing the degeneracy at the event horizon}\label{sub:redshcomm}
In this section, we prove estimates which control higher order derivatives in the $Y$-direction at the horizon, where $Y$, sufficiently close to $r =2M$, is defined as the transversal null derivative $(1-\mu)^{-1}\lbar$. We first prove that the bulk of the corresponding energy estimates arising from the redshift vector field $\redsh$ is positive. This is the content of Lemma~\ref{lem:pointpos}.

\begin{lemma}[Positivity of the commuted energy near $\mathcal{H}^+$]\label{lem:pointpos}
Let $I \in \iindicess^b$, $a \in \N_{\geq 0}$, such that $a + b \leq 2N$. There exist $r_{\text{in}} \in (\frac{15}{8} M, 2 M)$ and $\varepsilon_{\text{red}}> 0$  as well as collections of positive constants $C_{I,a}^{(1)},C_{I,a}^{(2)},C_{I,a}^{(3)}$ such that, if $\fara$ is a solution to the MBI system~\eqref{MBI} on $\tregio{t_1^*}{t_2^*}$ satisfying the bootstrap assumptions $BA(\tregio{t_1^*}{t_2^*}, 1, N, \varepsilon)$, with $0 < \varepsilon < \varepsilon_{\text{red}}$, there holds
\begin{align}
\label{eq:currentposz}
&\nabla^\mu({\dot Q}_{\mu\nu}[\dot \fara_{,I,0}]\redsh^\nu) \geq C^{(1)}_{I,0} |\dot \fara_{,I,0}|^2 -  C^{(2)}_{I,0} \varepsilon^{3/2} |\partial^{|I|} \fara|^2 - |\otm_2^{(I,0)}|,\\
\label{eq:currentpos}
&\nabla^\mu({\dot Q}_{\mu\nu}[\dot \fara_{,I,a}]\redsh^\nu) \\ \nonumber
& \quad \geq C^{(1)}_{I,a} |\dot \fara_{,I,a}|^2 -  C^{(2)}_{I,a} \varepsilon^{3/2} |\partial^{a+|I|} \fara|^2 
-C^{(3)}_{I,a} \left(\sum_{i = 1}^3 |\faram_{, \widetilde I (\Omega_i),a-1}|^2\right) - |\otm_2^{(I,a)}|,  \text{ if } a > 0.
\end{align}
in the region $\tregio{t_1^*}{t_2^*} \cap \{r_{\text{in}} \leq r \leq 4M - r_{\text{in}}\}$.

Here, if $K \in \mathcal K$, and if $I = (K_1, \ldots, K_b)$, the multi-index $\widetilde{I}(K)$ is given by $$\widetilde{I}(K) = (K, K_1, \ldots, K_b).$$
Moreover, we have the following schematic equation (in the sense of Definition~\ref{def:short}) satisfied by $\otm_2^{(I,0)}$:
\begin{equation}\label{eq:otm21}
\otm_2^{(I,0)} = (\nabla \hdelta )\, \faram_{,I,0}\faram_{,I,0} + \sum_{\substack{ |H| + |K| \leq |I| \\ |K|< |I| }} (\lie^H_{\mathcal{K}} \hdelta) \nabla (\lie^K_{\mathcal{K} }\fara) \, \faram_{,I,0},
\end{equation}
as well as the estimates:
\begin{equation}\label{eq:otm2est1}
|\otm_2^{(I,0)}| \leq C |\nabla \hdelta| \, |\faram_{,I,0}|^2 +  \sum_{\substack{ |H| + |K| \leq |I| \\ |K|< |I| }} |\lie^H_{\mathcal{K}} \hdelta| \, |\nabla \lie^K_{\mathcal{K} }\fara| \, |\faram_{,I,0}|.
\end{equation}
Also, $\otm_2^{(I,a)}$ satisfies the schematic equations (in the sense of Definition~\ref{def:short}), when $a \geq 1$:
\begin{equation}\label{eq:otm22}
\otm_2^{(I,a)} = \otm^{(a+|I|)}_1\faram_{,I,a},
\end{equation}
as well as the estimates:
\begin{equation}\label{eq:otm2est2}
|\otm_2^{(I,a)}| \leq C \sum_{\substack{(m_1, m_2) \in \N_{\geq 0}^2 \\ m_1+m_2 \leq a+|I|+1 \\ m_1, m_2 \leq a + |I|}} (|\nabla^{m_1} H_\Delta^{\mu\nu\kappa\lambda}| |\nabla^{m_2} \fara_{\kappa \lambda}|+|\nabla^{\min\{m_1, |I|+a-1\}}\fara_{\kappa\lambda}|) \cdot |\faram_{,I,a}|.
\end{equation}

\end{lemma}
\begin{proof} 
Let us focus on the case $a > 0$, as the case $a = 0$ is analogous (the only difference is in the definition of the error terms $\otm_2$ when $a = 0$). Moreover, let us denote
\begin{equation*}
	\dot Q\indices{^\mu_\nu} := {\dot Q}_{\mu\nu}[\dot \fara_{,I,a}] =  H^{\mu\zeta\kappa\lambda} {(\dot \fara_{,I,a})}_{\kappa\lambda} ({\dot \fara_{,I,a})}_{\nu\zeta} - \frac 1 4 \delta_\nu^\mu  H^{\mu\zeta\kappa\lambda} {{( \dot\fara_{,I,a})}_{\kappa\lambda}}{( \dot\fara_{,I,a})}_{\mu\zeta}.
\end{equation*}
We calculate
\begin{equation*}
\begin{aligned}
\nabla^\mu({\dot Q}_{\mu\nu}\redsh^\nu) = (\nabla_\mu \dot Q\indices{^\mu_\nu}) \redsh^\nu+ \dot Q_{\mu\nu} \nabla^\mu \redsh^\nu.
\end{aligned}
\end{equation*}
Now, in view of the linear theory (Lemma~\ref{lem:coercivity}), possibly restricting $r_{\text{in}}$ to be sufficiently close to $2M$, and choosing $\varepsilon_{\text{red}}$ small, there holds
\begin{equation}\label{eq:redshcoe}
\dot Q_{\mu\nu} \nabla^\mu \redsh^\nu \geq 2 C^{(1)}_{I,a} |\dot \fara_{,I,a}|^2.
\end{equation}
By Proposition~\ref{prop:commut} and Remark~\ref{rmk:formij}, we have that $\dot \fara_{,I,a}$ satisfies the following equations of variation:
\begin{equation}
\begin{aligned}
&\mathcal{J}_\nu := \nabla^\gamma (\faramd_{,I,a})_{\gamma \nu} = a (\nabla^\mu Y^\alpha) \nabla_\alpha(\faramd_{,I, a-1})_{\mu\nu} +  ( \text{OT}^{(a+|I|)}_1)_\nu,\\
&\mathcal{I}^\nu := H^{\mu\nu\kappa\lambda}\nabla_\mu ( \faram_{,I,a})_{\kappa\lambda} = a( \nabla^\kappa Y^\alpha) \nabla_\alpha (\faram_{,I,a-1})\indices{_\kappa^\nu}  + ( \text{OT}^{(a+|I|)}_1)^\nu.
\end{aligned}
\end{equation}
We now consider $(\nabla_\mu \dot Q\indices{^\mu_\nu}) \redsh^\nu$. By Lemma~\ref{lem:qdive}, 
\begin{equation*}
\begin{aligned}
\nabla_\mu   \dot{Q} \indices{^\mu _\nu}&= 
\underbrace{(\nabla_\mu H_\Delta^{\mu\zeta\kappa\lambda}) {(\dot \fara_{,I,a})}_{\kappa\lambda} {(\dot \fara_{,I,a})}_{\nu\zeta}- \frac 1 4 (\nabla_\nu H_\Delta^{\zeta \eta \kappa\lambda} (\dot \fara_{,I,a})_{\zeta \eta} (\dot \fara_{,I,a})_{\kappa\lambda})
}_{(i)} \underbrace{- \frac 1 2 H_\Delta^{\zeta \eta \kappa\lambda} (\dot \fara_{,I,a})_{\kappa\lambda}( \mathcal{J}^\alpha \varepsilon_{\alpha \nu \zeta \eta})}_{(ii)} \\ &+
\underbrace{\mathcal{I}^\zeta (\dot \fara_{,I,a})_{\nu \zeta}+\mathcal{J}^\alpha ( \faramd_{,I,a})_{\nu \alpha}}_{(iii)}.
\end{aligned}
\end{equation*}
By possibly choosing $\varepsilon_{\text{red}}$ to be smaller, we deduce, in the region $\tregio{t_1^*}{t_2^*} \cap \{r_{\text{in}} \leq r \leq 4M - r_{\text{in}}\}$,
\begin{equation*}
|(i)_\nu \redsh^\nu| \leq \frac{C^{(1)}_{I,a}}{4}  |\dot \fara_{,I,a}|^2.
\end{equation*}
Concerning $(ii)$,
\begin{equation*}
\begin{aligned}
(ii)_\nu  &= - \frac 1 2 H_\Delta^{\zeta \eta \kappa\lambda} (\dot \fara_{,I,a})_{\kappa\lambda}( \mathcal{J}^\alpha \varepsilon_{\alpha \nu \zeta \eta}) \\ &= - \frac 1 2 H_\Delta^{\zeta \eta \kappa\lambda} (\dot \fara_{,I,a})_{\kappa\lambda} \  \varepsilon_{\alpha \nu \zeta \eta}\left(
a (\nabla^\mu Y^\beta) \nabla_\beta(\faramd_{,I, a-1})\indices{_\mu^\alpha} +  ( \text{OT}^{(a+|I|)}_1)^\alpha \right).
\end{aligned}
\end{equation*}
Therefore, under the bootstrap assumptions, we have that
\begin{equation*}
|(ii)_\nu \redsh^\nu| \leq C^{(2)}_{I,a} \varepsilon^{3/2} |\partial^{a+|I|} \fara|^2 + |\otm_2^{(I,a)}|
\end{equation*}
on $\tregio{t_1^*}{t_2^*} \cap \{r_{\text{in}} \leq r \leq 4M - r_{\text{in}}\}$.

We now turn to the estimation of $(iii)$. We first notice that
\begin{align*}
\nabla_Y Y  = 0, \qquad 
\nabla_L Y = - \frac {2M} {r^2} Y, \qquad 
\nabla_{\partial_{\theta^A}} Y = - \frac 1 r \partial_{\theta^A},
\end{align*}
so that $\nabla_A Y_B = -\frac 1 r \slashed g_{AB}$.
With this in mind, we calculate
\begin{equation*}
\begin{aligned}
&( \nabla_\kappa Y^\alpha) \nabla_\alpha (\faram_{,I,a-1})\indices{^\kappa^\zeta}(\dot \fara_{,I,a})_{\nu \zeta}L^\nu\\
 &= (\nabla_L Y)^Y (\nabla_Y \faram_{,I,a-1})^{L\zeta}(\dot \fara_{,I,a})_{L \zeta} + (\nabla_B Y)^A (\nabla_A \faram_{,I,a-1})^{B\zeta}(\dot \fara_{,I,a})_{L \zeta} \\
&= - \frac{2M}{r^2} (\nabla_Y \faram_{,I,a-1})^{L\zeta}(\dot \fara_{,I,a})_{L \zeta} - \frac 1 r \slashed{g}^{AB} \nabla_A (\faram_{,I,a})\indices{_B ^\zeta}(\dot \fara_{,I,a})_{L \zeta}\\
&= -\frac{2M}{r^2} (\nabla_Y \faram_{,I,a-1})^{LY}(\dot \fara_{,I,a})_{LY}-\frac{2M}{r^2} (\nabla_Y \faram_{,I,a-1})^{LA}(\dot \fara_{,I,a})_{L A} - \frac 1 r \slashed{g}^{AB} \nabla_A (\faram_{,I,a-1})\indices{_B ^\zeta}(\dot \fara_{,I,a})_{L \zeta} \\
&= \frac {M}{2r^2} ((\faram_{,I,a})_{YL})^2 + \frac{M}{r^2} (\faram_{,I,a})\indices{_Y^A}(\faram_{,I,a})_{LA} - \frac 1 r \slashed{g}^{AB} \nabla_A (\faram_{,I,a-1})\indices{_B ^\zeta}(\dot \fara_{,I,a})_{L \zeta} + \otm_2^{(I,a)}.
\end{aligned}
\end{equation*}
Here, if $a \in \N_{\geq 0}$, $J \in \iindicess^b$ in the schematic notation of Definition~\ref{def:short},
\begin{equation*}
\otm_2^{(I,a)} = \otm^{(a+|I|)}_1\faram_{,I,a}.
\end{equation*}
The expression $\otm_2^{(I,a)}$, moreover, satisfies the following bound:
\begin{equation*}
|\otm_2^{(I,a)}| \leq C \sum_{\substack{(m_1, m_2) \in \N_{\geq 0}^2 \\ m_1+m_2 \leq a+|I|+1 \\ m_1, m_2 \leq a + |I|}} (|\nabla^{m_1} H_\Delta^{\mu\nu\kappa\lambda}| |\nabla^{m_2} \fara_{\kappa \lambda}|+|\nabla^{\min\{m_1, a-1\}}\fara_{\kappa\lambda}|) \cdot |\faram_{,I,a}|.
\end{equation*}
Similarly,
\begin{equation*}
\begin{aligned}
&( \nabla_\kappa Y^\alpha) \nabla_\alpha (\faram_{,I,a-1})\indices{^\kappa^\zeta}(\dot \fara_{,I,a})_{\nu \zeta}Y^\nu \\
&= (\nabla_L Y)^Y (\nabla_Y \faram_{,I,a-1})^{L\zeta}(\dot \fara_{,I,a})_{Y \zeta} + (\nabla_B Y)^A (\nabla_A \faram_{,I,a-1})^{B\zeta}(\dot \fara_{,I,a})_{Y \zeta} \\
&= - \frac{2M}{r^2} (\nabla_Y \faram_{,I,a-1})^{L\zeta}(\dot \fara_{,I,a})_{Y \zeta} - \frac 1 r \slashed{g}^{AB} \nabla_A (\faram_{,I,a-1})\indices{_B ^\zeta}(\dot \fara_{,I,a})_{Y \zeta}\\
&= -\frac{2M}{r^2} (\nabla_Y \faram_{,I,a-1})^{LA}(\dot \fara_{,I,a})_{Y A} - \frac 1 r \slashed{g}^{AB} \nabla_A (\faram_{,I,a-1})\indices{_B ^\zeta}(\dot \fara_{,I,a})_{Y \zeta} \\
&= \frac{M}{r^2} (\faram_{,I,a})\indices{_Y^A}(\faram_{,I,a})_{YA} - \frac 1 r \slashed{g}^{AB} \nabla_A (\faram_{,I,a})\indices{_B ^\zeta}(\dot \fara_{,I,a-1})_{Y \zeta} + \otm_2^{(I,a)}.
\end{aligned}
\end{equation*}
Using the calculations involving null components of the Hodge dual of $\fara$ in Lemma~\ref{lem:dualnull}, we have that
\begin{equation*}
\begin{aligned}
&( \nabla_\kappa Y^\alpha) \nabla_\alpha (\faramd_{,I,a-1})\indices{^\kappa^\zeta}( \faramd_{,I,a})_{\nu \zeta}L^\nu \\
&=  \frac {M}{2r^2} ((\faramd_{,I,a})_{YL})^2 + \frac{M}{r^2} (\faramd_{,I,a})\indices{_Y^A}(\faramd_{,I,a})_{LA} - \frac 1 r \slashed{g}^{AB} \nabla_A (\faramd_{,I,a-1})\indices{_B ^\zeta}(\faramd_{,I,a})_{L \zeta} + \otm_2^{(I,a)} \\
&=  \frac {M}{2r^2} ((\faramd_{,I,a})_{YL})^2 - \frac{M}{r^2} (\faram_{,I,a})\indices{_Y^A}(\faram_{,I,a})_{LA} - \frac 1 r \slashed{g}^{AB} \nabla_A (\faramd_{,I,a-1})\indices{_B ^\zeta}(\faramd_{,I,a})_{L \zeta} +  \otm_2^{(I,a)} \\
\end{aligned}
\end{equation*}
Also,
\begin{equation*}
\begin{aligned}
&( \nabla_\kappa Y^\alpha) \nabla_\alpha (\faramd_{,I,a-1})\indices{^\kappa^\zeta}(\faramd_{,I,a})_{\nu \zeta}Y^\nu 
\\ &= \frac{M}{r^2} (\faramd_{,I,a})\indices{_Y^A}(\faramd_{,I,a})_{YA} - \frac 1 r \slashed{g}^{AB} \nabla_A (\faramd_{,I,a-1})\indices{_B ^\zeta}(\faramd_{,I,a})_{Y \zeta} + \otm_2^{(I,a)}.
\end{aligned}
\end{equation*}
Concerning $(iii)$, let's write
\begin{equation}
\redsh = f_1(r) L + f_2(r) Y,
\end{equation} 
with $f_1(r) =  1 + 5 \chi_1(r) (1-\mu)$, $f_2(r) =(1-\mu)( 1+ 5\chi_1(r)) + \chi_1(r)$. Here, if $K \in \mathcal K$, and if $I = (K_1, \ldots, K_b)$, the multi-index $\widetilde{I}(K)$ is given by $$\widetilde{I}(K) = (K, K_1, \ldots, K_b).$$ Using the above calculations, with the fact that $f_1$ and $f_2$ are strictly positive on the considered region, we get the following estimates:
\begin{align*}
&(iii)_\nu \redsh^\nu \\ 
&= f_1(r)a \left(\frac {M}{2r^2} ((\faram_{,I,a})_{YL})^2 - \frac 1 r \slashed{g}^{AB} \nabla_A (\faram_{,I,a-1})\indices{_B ^\zeta}(\dot \fara_{,I,a})_{L \zeta} \right.\\
&\left. +  \frac {M}{2r^2} ((\faramd_{,I,a})_{YL})^2 - \frac 1 r \slashed{g}^{AB} \nabla_A (\faramd_{,I,a-1})\indices{_B ^\zeta}(\faramd_{,I,a})_{L \zeta}\right)  \\
&+ f_2(r) a\left(\frac{M}{r^2} (\faram_{,I,a})\indices{_Y^A}(\faram_{,I,a})_{YA} - \frac 1 r \slashed{g}^{AB} \nabla_A (\faram_{,I,a-1})\indices{_B ^\zeta}(\dot \fara_{,I,a})_{Y \zeta}\right.\\ &\left.  +  \frac{M}{r^2} (\faramd_{,I,a})\indices{_Y^A}(\faramd_{,I,a})_{YA}- \frac 1 r \slashed{g}^{AB} \nabla_A (\faramd_{,I,a-1})\indices{_B ^\zeta}(\faramd_{,I,a})_{Y \zeta} \right)
+ \otm_2^{(I,a)} \\
&\geq - C^{(3)}_{I,a} \left(|\slashed{g}^{AB} \nabla_A (\faram_{,I,a-1})\indices{_B ^\zeta}| +| \slashed{g}^{AB} \nabla_A (\faramd_{,I,a-1})\indices{_B ^\zeta}| \right)|( \faram_{,I,a})_{\nu \zeta}|- |\otm_2^{(I,a)}| \\
& \geq -C^{(3)}_{I,a} \left(\sum_{i = 1}^3 |\faram_{, \widetilde I (\Omega_i),a-1}|\right)|( \faram_{,I,a})_{\nu \zeta}| - \frac{C^{(1)}_{I,a}}{4} |( \faram_{,I,a})_{\nu \zeta}|^2- |\otm_2^{(I,a)}| \\
& \geq -C^{(3)}_{I,a} \left(\sum_{i = 1}^3 |\faram_{, \widetilde I (\Omega_i),a-1}|^2\right) - \frac{C^{(1)}_{I,a}}{2} |( \faram_{,I,a})_{\nu \zeta}|^2- |\otm_2^{(I,a)}|.
\end{align*}
Combining the estimates in Equation~\eqref{eq:redshcoe}, and the estimates for terms $(i), (ii), (iii)$ we obtain the claim.
\end{proof}

We now sum and integrate the estimate in Lemma~\ref{lem:pointpos}, and we obtain the following statement.

\begin{lemma}\label{lem:qtot}
There exist a number $r_{\text{in}} \in (0, 2M)$, sufficiently close to $2M$, a positive constant $C >0$, and finally a small $\varepsilon_{\text{int}}> 0$ such that the following holds. Let $\fara$ be a smooth 2-form which satisfies the MBI system~\eqref{MBI} on $\mathcal{R} := \tregio{t_0^*}{t_1^*}$, with $t_1^* > t_0^*$.  Let $0 < \varepsilon < \varepsilon_{\text{int}}$ and assume the bootstrap assumptions $BA(\tregio{t_0^*}{t_1^*}, 1, N, \varepsilon)$ (see Section~\ref{sec:bootstrap}). Let $t_1^* \geq s_2 \geq s_1 \geq t_0^*$, $j \in \N_{\geq 0}$, $j \leq 2N$, $a \in \N_{\geq 0}$, and $I \in \iindicess^b$. Let $a + b \leq j$, and let the variation $\dot \fara$, for notational convenience, be defined as
\begin{equation*}
\dot \fara := \dot \fara_{,I,a}.
\end{equation*}
Then, there exist positive constants $(B_{a})_{a \in \N_{\geq 0}}$ (the dependence on $j$ is suppressed in the notation) such that, defining
\begin{equation*}
\begin{aligned}
\qtot := \sum_{a+b = j} \sum_{I \in \iindicess^b} B_{a} \redsh^\nu \,  n^\mu_{\widetilde \Sigma_{t^*}} \mqtensor \indices{_\mu _\nu}[\dot \fara_{,I,a}],
\end{aligned}
\end{equation*}
the following inequality holds:
\begin{equation}\label{eq:mixfinal}
\boxed{
\begin{aligned}
&\int_{\widetilde{\Sigma}_{{s}_2}} \qtot \de {\widetilde{\Sigma}_{t^*}} -
\int_{\widetilde{\Sigma}_{{s}_1}} \qtot \de {\widetilde{\Sigma}_{t^*}}  \\ &
 + C \int_{\tregio{{s}_1}{{s}_2} \cap \{r_{\text{in}} \leq r \leq 4M - r_{\text{in}}\}}\sum_{a+b = j} \sum_{I \in \iindicess^b} |(\faram_{,I,a})_{\mu\nu}|^2  \de \text{Vol} \\ & 
\leq   C  \int_{\regio{{s}_1}{{s}_2}\cap \{4M - r_{\text{in}}\leq r \leq 3M\}} \sum_{a+b = j} \sum_{I \in \iindicess^b} |\nabla_\mu (\mqtensor \indices{^\mu _\nu}\redsh^\nu)|\, \de \text{Vol} \\ &
+C \int_{\tregio{{s}_1}{{s}_2}} \sum_{a+b = j} \sum_{I \in \iindicess^b}\left|\otm_2^{(I,a)}\right| \de \text{Vol}.
\end{aligned}
}
\end{equation}
Here, $\mqtensor\indices{_\mu_\nu} = \mqtensor\indices{_\mu_\nu}[\dot \fara_{,I,a}]$ is the canonical stress associated to $\dot \fara = \dot \fara_{,I,a}$ (and in particular it depends on the two indices $I$ and $a$). Also, the terms $\otm^{(I,a)}_2$ satisfy the schematic equations (in the sense of Definition~\ref{def:short}) of display \eqref{eq:otm21} (when $a=0$) and of display \eqref{eq:otm22} (when $a > 0$), as well as the estimates~\eqref{eq:otm2est1} and~\eqref{eq:otm2est2}. Furthermore, we recall that $\de \text{Vol}$ is the standard volume form on Schwarzschild, and $\de {\widetilde{\Sigma}_{t^*}}$ is the induced volume form on the foliation $\widetilde{\Sigma}_{t^*}$.
\end{lemma}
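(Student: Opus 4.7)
The plan is to combine the divergence theorem with the pointwise coercivity of Lemma~\ref{lem:pointpos}, summed over all multi-indices $(I,a)$ with $|I|+a = j$, weighted with a carefully chosen family of constants $(B_a)$. First I would apply the divergence theorem to the one-form $\mqtensor\indices{^\mu_\nu}[\dot \fara_{,I,a}] \redsh^\nu$ on the region $\tregio{s_1}{s_2}$. The boundary contribution on $\widetilde{\Sigma}_{s_1}$ and $\widetilde{\Sigma}_{s_2}$ produces precisely the two terms appearing on the LHS of~\eqref{eq:mixfinal} (after summation in $(I,a)$ with weights $B_a$). The inner boundary at $\{r = r_{\text{in}}\}$ is spacelike (since $r_{\text{in}} < 2M$) and the flux across it has a favourable sign by the Lorentzian positivity of $\mqtensor$ on pairs of future-directed timelike vectors (Lemma~\ref{lem:coercivity}), so it can be discarded.

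Next I would analyse the bulk term $\int \nabla^\mu(\mqtensor\indices{^\mu_\nu}[\dot \fara_{,I,a}]\redsh^\nu)\, \de\text{Vol}$. On the subregion $\{r_{\text{in}} \leq r \leq 4M - r_{\text{in}}\}$ the cutoff $\chi_1$ is identically $1$, so Lemma~\ref{lem:pointpos} applies and yields
\[
\nabla^\mu(\mqtensor\indices{^\mu_\nu}[\dot \fara_{,I,a}]\redsh^\nu) \geq C^{(1)}_{a,I}|\dot\fara_{,I,a}|^2 - C^{(2)}_{a,I}\varepsilon^{3/2} |\partial^{a+|I|}\fara|^2 - C^{(3)}_{a,I}\sum_i|\faram_{,\widetilde{I}(\Omega_i),a-1}|^2 - |\otm_2^{(a,I)}|.
\]
On the complementary region $\{4M - r_{\text{in}} \leq r \leq 3M\}$, where $\chi_1$ varies and the bookkeeping of the redshift structure breaks, there is no sign and I would simply bound the integrand by its absolute value, producing the first RHS term of~\eqref{eq:mixfinal}. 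Outside $\{r \leq 3M\}$, $Y$ and $\chi_1$ vanish, so only the Killing-sourced commutations remain; the corresponding identity is controlled by the degenerate estimate of Lemma~\ref{lem:degen} and contributes only to the $\otm_2^{(a,I)}$ error term after applying Proposition~\ref{norm.equiv}.

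The key obstacle, and the reason for the weights $B_a$, is that the lower-order term $C^{(3)}_{a,I} \sum_i|\faram_{,\widetilde{I}(\Omega_i),a-1}|^2$ in the pointwise inequality at level $a$ involves derivatives of one order lower in $Y$ but one order higher in Killing directions, and must be absorbed by the positive bulk term $C^{(1)}_{a-1,\widetilde{I}(\Omega_i)}|\dot\fara_{,\widetilde{I}(\Omega_i),a-1}|^2$ obtained at level $a-1$. I would perform a finite downward induction on $a$: set $B_j := 1$ and, for $a = j-1, j-2, \dots, 0$, choose $B_a$ so large that
\[
B_a \cdot C^{(1)}_{a,I} \geq 2 \cdot B_{a+1} \cdot \big(\text{max of the }C^{(3)}\text{ coefficients of level }a+1\big)
\]
uniformly over $I \in \iindicess^{j-a}$. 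This ensures that, after summation, all lower-order-in-$Y$ terms are absorbed, leaving precisely $\tfrac{1}{2}\sum B_a C^{(1)}_{a,I}|\dot\fara_{,I,a}|^2$ on the bulk side; by Proposition~\ref{prop:liecontrol} this dominates $C |\faram_{,I,a}|^2$ summed over all $(I,a)$ with $|I|+a = j$.

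Finally, the $\varepsilon^{3/2}|\partial^{a+|I|}\fara|^2$ error terms in~\eqref{eq:currentpos} can also be absorbed into the positive bulk term, again via Proposition~\ref{prop:liecontrol} and smallness of $\varepsilon$, provided $\varepsilon_{\text{int}}$ is chosen sufficiently small relative to the $B_a, C^{(1)}_{a,I}, C^{(2)}_{a,I}$. The leftover terms of the form $|\otm_2^{(a,I)}|$ match the schematic structure claimed in the statement, since every factor of $H_\Delta$ is quadratic in $\fara$ and each term carries $|\faram_{,I,a}|$ as a factor. Combining the boundary terms, the positive bulk term, the non-favourable-sign region contribution, and the absorbed errors gives exactly~\eqref{eq:mixfinal}.
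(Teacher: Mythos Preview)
Your proposal is essentially correct and follows the same strategy as the paper: apply the divergence theorem on $\tregio{s_1}{s_2}$, discard the spacelike boundary term at $r = r_{\text{in}}$ by positivity, invoke Lemma~\ref{lem:pointpos} on the narrow annulus $\{r_{\text{in}} \leq r \leq 4M - r_{\text{in}}\}$, and choose the weights $B_a$ by a downward induction so that the $C^{(3)}$-terms at level $a$ are absorbed by the $C^{(1)}$-terms at level $a-1$. The absorption of the $\varepsilon^{3/2}|\partial^{j}\fara|^2$ error is also handled the same way.

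Two minor inaccuracies worth correcting. First, the region $\{4M - r_{\text{in}} \leq r \leq 3M\}$ is \emph{not} where $\chi_1$ varies (in fact $\chi_1 \equiv 1$ there); it is simply the region outside the validity of Lemma~\ref{lem:pointpos} but still inside the support of the redshift modification, and one bounds the bulk by its absolute value there, which is exactly the first RHS term of~\eqref{eq:mixfinal}. Second, for the far region $\{r \geq 3M\}$ you should not invoke Lemma~\ref{lem:degen}: that lemma is a separate degenerate $T$-energy estimate and plays no role here. The correct observation is that $\redsh$ equals (a multiple of) $T$ once $\chi_1$ vanishes, so $(\nabla_\mu T^\nu)\mqtensor\indices{^\mu_\nu}$ reduces to the antisymmetric part of $\mqtensor$, which is quadratic in $\fara$ and hence of type $\otm_2^{(I,a)}$. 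With these two clarifications your argument matches the paper's.
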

\begin{remark}\label{rmk:afterqtot}
	This proposition is the crucial point in our argument where we need to integrate inside the black hole region. This is in order to obtain positivity for the corresponding boundary term, as the surface $r = r_{\text{in}}$ is \textbf{strictly spacelike} inside the black hole region.
\end{remark}
\begin{proof}
	\begin{figure}\label{figure2}
		\centering
		\begin{tikzpicture}	

\node (I)    at ( 0,0) {};

\path
  (I) +(90:4)  coordinate[label=90:$i^+$]  (top)
       +(-90:4) coordinate[label=-90:$i^-$] (bot)
       +(0:4)   coordinate                  (right)
       +(180:4) coordinate[label=180:$i^0$] (left)
       +(180:4.9) coordinate[label=180:$ $] (left1)
       ;
 \path (top) +(180:0.01) coordinate (tar);

\path 
	(top) + (180:4) coordinate[label=90:$i^+$]  (acca)
		+ (-45: 2) coordinate (nulluno)
		+ (-45: 3) coordinate (nulldue)
	;

\path 
	(left) + (-45: 2) coordinate (correspuno)
		+ (-45: 3) coordinate (correspdue)
	;

\path (top) + (-150: 1.7) coordinate (horiuno)
			+ (-150: 5) coordinate (horidue);

\draw[name path = stre, opacity = 0] (left1) to (tar);
\draw [name path = ciao, thick, opacity = 0] (horiuno) to [bend left = 15] node[midway, above]{} (right);
\draw [name path = bye, thick, opacity = 0] (horidue) to [bend left = 10] node[midway, above]{} (right);
\path [name intersections={of=ciao and stre,by=E}];
\path [name intersections={of=bye and stre,by=D}];
\draw [thick] (D) to [bend left = 15] node[midway, above]{$\widetilde{\Sigma}_{t^*_1}$} (right);
\draw [thick] (E) to [bend left = 10] node[midway, above]{$\widetilde{\Sigma}_{t^*_2}$} (right);
\draw [thick] (E) to [bend left = 6] node[midway, left]{ $r = r_{\text{in}}$} (D);
\draw (left) -- 
          node[midway, above left, sloped]    {}
      (top) --
          node[midway, above, sloped] {$\mathcal{I}^+$}
      (right) -- 
          node[midway, above, sloped] {$\mathcal{I}^-$}
      (bot) --
          node[midway, above, sloped]    {$\mathcal{H}^-$}    
      (left) -- cycle;

\draw[decorate,decoration=zigzag] (top) -- (acca)
      node[midway, above, inner sep=2mm] {$r=0$};
      
\end{tikzpicture}
		\caption{Penrose diagram of the modified integration region. Notice that the surface inside the black hole region, $r = r_{\text{in}}$, is strictly spacelike.}\label{fig2}
	\end{figure}
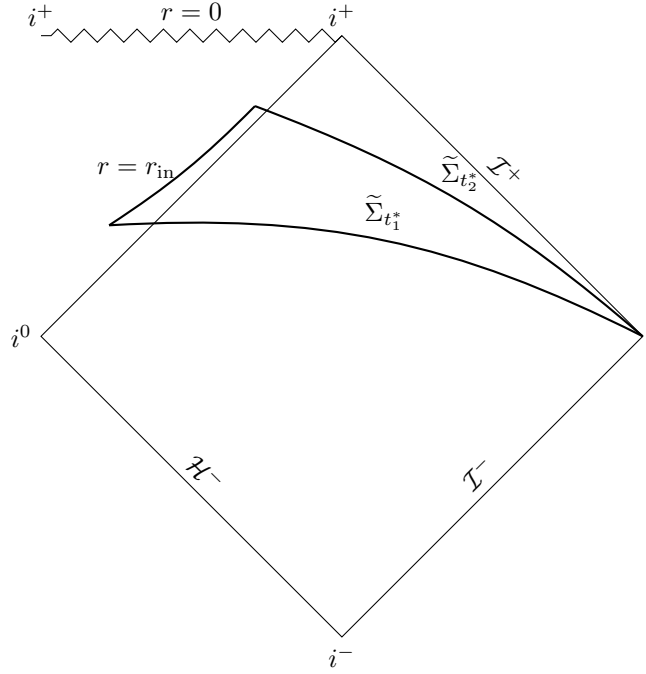
The proof of this proposition will be divided in two parts. In the first part, we will deal with the case $j=0$. In the second part, we will deal with the case $j >0$.
\begin{enumerate}[wide, labelwidth=!, labelindent=0pt]
\item \textbf{Case $j = 0$}. We integrate Equation~(\ref{eq:stressdiv}) on the region $\{r \geq r_{\text{in}}\} \cap \regio{t_1^*}{t_2^*}$ (see Figure~\ref{fig2}).
By possibly restricting to $r_{\text{in}}$ close to $2M$ and to $\varepsilon_{\text{int}}$ smaller, we obtain that the boundary term at $r_{\text{in}}$ is positive, and we get:
\begin{equation*}
\begin{aligned}
&\int_{\widetilde{\Sigma}_{s_2}} \dot Q_{\mu\nu} \redsh^\mu n^\nu_{\widetilde \Sigma_{t^*}}\de \widetilde{\Sigma}_{t^*}-\int_{\widetilde{\Sigma}_{s_1}} \dot Q_{\mu\nu} \redsh^\mu n^\nu_{\widetilde\Sigma_{t^*}}\de \widetilde{\Sigma}_{t^*} + \int_{\tregio{t_1^*}{t_2^*}} \dot Q\indices{^\mu_\nu} \nabla_\mu \redsh^\nu \de \text{Vol} \\
&\leq \int_{\tregio{t_1^*}{t_2^*}}\Big( (\nabla_\mu H^{\mu\zeta\kappa\lambda}) {\fara}_{\kappa\lambda} {\fara}_{\nu\zeta}- \frac 1 4 (\nabla_\nu H^{\zeta \eta \kappa\lambda} \fara_{\zeta \eta} \fara_{\kappa\lambda}) \Big) (\redsh)^\nu \de \text{Vol}.
\end{aligned}
\end{equation*}
The claim in the case $j=0$ then follows from the bootstrap assumptions, along with the positivity of the deformation tensor near $\mathcal{H}^+$ encoded in estimate~\eqref{eq:currentposz}, possibly restricting to a smaller value of $\varepsilon_{\text{int}}$.

\item \textbf{Case $j > 0$.} We notice that, in the notation of Lemma~\ref{lem:pointpos}, if $b \in \N_{\geq 0}$, $b \geq 1$, we have the control
\begin{equation*}
\sum_{I \in \iindicess^b} |\faram_{,I,a}|^2 \geq  \sum_{J \in \iindicess^{b-1}} \sum_{i = 1}^3 |\faram_{,\widetilde{J}(\Omega_i),a}|^2,
\end{equation*}

where, if $J = (K_1, \ldots, K_{b})$, $\widetilde{J}(\Omega_i) = (\Omega_i, K_1, \ldots, K_b)$. This follows from the definition.

Let $a+ b = j$. Let 
\begin{equation*}
\begin{aligned}
\widetilde{C}^{(1)}_{a} :&= \min_{I \in \iindicess^b} C^{(1)}_{I,a},\\
\widetilde{C}^{(3)}_{a} :&= \max_{I \in \iindicess^b} C^{(3)}_{I,a}.
\end{aligned}
\end{equation*}
Let us now sum inequality (\ref{eq:currentpos}) for all $I \in \iindicess^b$. We obtain, on $\{r_{\text{in}} \leq r \leq 4M - r_{\text{in}}\}$:
\begin{equation}\label{eq:tosum}
\begin{aligned}
&\sum_{I \in \iindicess^b} \nabla^\mu({\dot Q}_{\mu\nu}\redsh^\nu) \\
&\geq \widetilde{C}^{(1)}_a \sum_{I \in \iindicess^b} |\dot \fara_{,I,a}|^2 -  \widetilde{C}^{(2)}_a \varepsilon^{3/2} |\partial^{a+|I|} \fara|^2 
- \widetilde{C}^{(3)}_a \sum_{I \in \iindicess^b} \left(\sum_{i = 1}^3 |\faram_{, \widetilde I (\Omega_i),a-1}|^2\right) - \sum_{I \in \iindicess^b}|\otm_2^{(I,a)}|\\
& \geq \widetilde{C}^{(1)}_a \sum_{I \in \iindicess^b} |\dot \fara_{,I,a}|^2 -  \widetilde{C}^{(2)}_a \varepsilon^{3/2} |\partial^{a+|I|} \fara|^2 
- \widetilde{C}^{(3)}_a \sum_{J \in \iindicess^{b+1}}  |\faram_{,J,a-1}|^2 - \sum_{I \in \iindicess^b}|\otm_2^{(I,a)}|,
\end{aligned}
\end{equation}
for $a \geq 1$.

Let us now choose a sequence of positive real numbers $\{B_i\}_{i =1, \ldots, j}$ such that $B_j = 1$, and such that for $i \in \{1, \ldots, j\}$, we have
\begin{equation}\label{eq:posb}
B_{i-1} \widetilde{C}_{i-1}^{(1)} \geq B_{i} \widetilde{C}_{i}^{(3)} + 2.
\end{equation}
Multiply now inequality~\eqref{eq:tosum} by $B_a$ and sum for $a$ from $1$ to $j$, in order to obtain
\begin{equation}\label{eq:summed}
\begin{aligned}
&\sum_{a = 1}^j \sum_{I \in \iindicess^b} B_a \nabla^\mu({\dot Q}_{\mu\nu}\redsh^\nu) \\
&\geq 2\sum_{a =1}^j \sum_{\substack{I \in \iindicess^b\\ a+b = j}} |\dot \fara_{,I,a}|^2 -  \widetilde{C}^{(2)} \varepsilon^{3/2} |\partial^{j} \fara|^2 
- \widetilde{C}^{(3)}_1 B_1\sum_{J \in \iindicess^{j}} \sum_{i = 1}^3 |\faram_{,J,0}|^2
 - \widetilde{C}^{(4)}\sum_{a=1}^j\sum_{\substack{I \in \iindicess^b\\ a+b = j}}|\otm_2^{(I,a)}|
 \end{aligned}
 \end{equation}
 Here, $\widetilde{C}^{(2)}$ and $\widetilde{C}^{(4)}$ are positive constants which depend on $B_i$, $\widetilde{C}^{(1)}_i$, and $\widetilde{C}^{(3)}_i$.
Using inequality~\eqref{eq:currentposz}, we then have, possibly redefining the value of the constant $\widetilde C^{(2)}$,
 \begin{align*}
 &\sum_{a = 0}^j \sum_{I \in \iindicess^b} B_a \nabla^\mu({\dot Q}_{\mu\nu}\redsh^\nu) \\
&\geq 2\sum_{a =0}^j \sum_{\substack{I \in \iindicess^b\\ a+b = j}} |\dot \fara_{,I,a}|^2 -  \widetilde{C}^{(2)} \varepsilon^{3/2} |\partial^{j} \fara|^2 
   - \widetilde{C}^{(4)}\sum_{a=0}^j\sum_{\substack{I \in \iindicess^b\\ a+b = j}}|\otm_2^{(I,a)}|.
\end{align*}

 We notice finally that the term in $\partial^j \fara$ in the last line of the previous inequality (the second term) can be absorbed by the first term in the same line.

All in all, we obtain, on $\{r_{\text{in}} \leq r \leq 4M - r_{\text{in}}\}$:
\begin{equation}\label{eq:summed1}
\begin{aligned} \sum_{a = 0}^j \sum_{\substack{I \in \iindicess^b\\ a+b = j}} B_a \nabla^\mu({\dot Q}_{\mu\nu}\redsh^\nu) \geq \sum_{a =0}^j \sum_{\substack{I \in \iindicess^b\\ a+b = j}} |\dot \fara_{,I,a}|^2
  - \widetilde{C}^{(4)}\sum_{a=0}^j\sum_{\substack{I \in \iindicess^b\\ a+b = j}}|\otm_2^{(I,a)}|.
\end{aligned}
\end{equation}
Let $$\qtot := \sum_{a+b = j} \sum_{I \in \iindicess^b} B_{a} \redsh^\nu \,  n^\mu_{\widetilde \Sigma_{t^*}} \mqtensor \indices{_\mu _\nu}[\dot \fara_{,I,a}].$$

We now integrate $ \nabla^\mu(\dot Q\indices{_\mu_\nu} \redsh^\nu)$ on $\tregio{s_1}{s_2} \cap \{r_{\text{in}} \leq r\}$ (see Figure~\ref{fig2}). By restricting to $\varepsilon_{\text{int}}$ small enough and to $r_{\text{in}}$ sufficiently close to $2M$, we have that the boundary term at $\{r = r_{\text{in}}\}$ is nonnegative, and therefore
\begin{equation*}
\begin{aligned}
\int_{\widetilde{\Sigma}_{s_2}} \qtot \de \widetilde{\Sigma}_{t^*} - \int_{\widetilde{\Sigma}_{s_1}} \qtot \de \widetilde{\Sigma}_{t^*} + \sum_{a+b = j} \sum_{I \in \iindicess^b} B_a\int_{\tregio{s_1}{s_2}\cap\{r \geq r_{\text{in}} \}} \nabla^\mu(\dot Q\indices{_\mu_\nu} \redsh^\nu) \de \text{Vol} \leq 0.
\end{aligned}
\end{equation*}
We obtain estimate (\ref{eq:mixfinal}) splitting the integral between the region $\{r_{\text{in}} \leq r \leq 4M - r_{\text{in}}\}$ and $\{4M - r_{\text{in}} \leq r\}$, and considering the fact that $\redsh = T$ on the region $\{r \geq 3M\}$. Hence
\begin{equation*}
\nabla_\mu (T^\nu \dot Q\indices{^\mu_\nu}) = \otm^{(I,a)}_2
\end{equation*}
on $\{r \geq 3M\}$. This implies the claim (\ref{eq:mixfinal}).
\end{enumerate}
\end{proof}

\subsection{Concluding the proof of $L^2$ estimates}\label{sub:concludingl2}
In this section, we finish the proof of Proposition~\ref{prop:energy}. We integrate the differential inequality in Lemma~\ref{lem:qtot}, we bound the main ``bulk'' error terms using the degenerate estimate in Lemma~\ref{lem:degen}, and we finally use the Gr\"onwall inequality to obtain $L^2$ boundedness.

\begin{proof}[Proof of Proposition~\ref{prop:energy}]
We prove this Proposition by induction on $j \leq 2N$.

\begin{itemize}[wide, labelwidth=!, labelindent=0pt]
\item We start from the \textbf{induction base case}: $j = 0$. This step must be carried out considering the standard stress-energy tensor $Q$ relative to the MBI theory. Its expression is
\begin{equation}\label{eq:semtdef}
Q^{\mu\nu} = \ellmbi^{-1}(g_{\kappa\lambda}\fara^{\mu \kappa}\fara^{\nu \lambda}- \ldu^2 g^{\mu\nu}) + g^{\mu\nu}(1-\ellmbi).
\end{equation}
The remarkable property of this tensor is that, if $\fara$ satisfies the MBI system (\ref{MBI}), the divergence of $Q$ vanishes:
\begin{equation*}
\nabla_{\mu} Q^{\mu\nu} = 0.
\end{equation*}
Also, $Q_{\mu\nu} = Q_{\nu\mu}$, and it satisfies the positive energy condition. Furthermore, we have that there exist positive constants $C_1, C_2$, $\varepsilon_{\text{nul}}$ such that, for every $0 < \varepsilon < \varepsilon_{\text{nul}}$, and $|\fara| \leq \eps$,
\begin{equation*}
C_1 |\fara|^2 \leq Q_{\mu\nu} \redsh^\mu {n}_{ \widetilde \Sigma_{t^*}}^\nu \leq C_2 |\fara|^2.
\end{equation*}
Furthermore, the same inequality holds with $Q_{\mu\nu} T^\mu {n}_{\widetilde \Sigma_{t^*}}^\nu$, on the region $r \geq 4M- r_{\text{in}}$.

Applying now the divergence theorem with the current $J_1^\mu = T^\nu Q\indices{^\mu_\nu}$ gives, when $s_2 \geq s_1 \geq s_0$, upon integration on the region $\regio{s_1}{s_2}$:
\begin{equation}\label{eq:tzero}
\begin{aligned}
\int_{{\Sigma}_{s_2}} Q_{\mu\nu} T^\mu {n}_{\Sigma_{t^*}}^\nu \de {\Sigma}_{t^*} - \int_{{\Sigma}_{s_1}} Q_{\mu\nu} T^\mu {n}_{\Sigma_{t^*}}^\nu \de {\Sigma}_{t^*}\leq 0.
\end{aligned}
\end{equation}
An application of the divergence theorem with the current $J_2^\mu = {\redsh}^\nu Q\indices{^\mu_\nu}$ gives, if $s_2 \geq s_1 \geq s_0$, upon integration on the region $\tregio{s_1}{s_2}$:
\begin{equation}\label{eq:redzero}
\begin{aligned}
\int_{\widetilde{\Sigma}_{s_2}} Q_{\mu\nu} \redsh^\mu {n}_{\widetilde \Sigma_{t^*}}^\nu \de \widetilde{\Sigma}_{t^*} - \int_{\widetilde{\Sigma}_{s_1}} Q_{\mu\nu} \redsh^\mu {n}_{\widetilde \Sigma_{t^*}}^\nu \de \widetilde{\Sigma}_{t^*} +C_3 \int_{\tregio{s_1}{s_2}\cap \{r_{\text{in}}\leq r 
\leq 4M - r_{\text{in}} \}} |\fara|^2 \de \text{Vol}\\ \leq C_4\int_{\tregio{s_1}{s_2}\cap \{4M - r_{\text{in}}\leq r \leq 3M \}} |\fara|^2 \de \text{Vol}.
\end{aligned}
\end{equation}
\begin{remark}
In the previous inequality, we implicitly used the fact that, in the region ``close to the event horizon'' (where $r_{\text{in}}\leq r 
\leq 4M - r_{\text{in}}$), there exists a constant $C> 0$ such that the divergence $ \nabla_\mu (\redsh^\nu Q\indices{^\mu_\nu}) \geq C |\fara|^2$. This amounts to showing that $^{(\redsh)}\pi^{\mu\nu} Q_{\mu\nu} \geq C |\fara|^2$. By the linear theory, we know that, in this region, 
\begin{equation}\label{eq:fromlin}
Q^{(\text{MW})}_{\mu\nu}[\fara] \ ^{(\redsh)}\pi^{\mu\nu} \geq C |\fara|^2,
\end{equation}
where $Q^{(\text{MW})}_{\mu\nu} = \fara_{\mu \alpha}\fara\indices{_\nu^\alpha} - \frac 14 g_{\mu\nu} \fara_{\alpha\beta}\fara^{\alpha\beta}$ is the standard stress-energy-momentum stress for the linear Maxwell equations. We also note that there holds:
\begin{equation}
\begin{aligned}
&Q_{\mu\nu} - Q^{(\text{MW})}_{\mu\nu} = - \frac{\lun-\ldu^2}{\ellmbi (1+\ellmbi)}  \fara_{\mu \alpha} \fara\indices{_\nu^\alpha}\\
& \qquad  + g_{\mu\nu}\Big( \lun \frac{\lun - \ldu^2}{(1+\ellmbi)^2}- \frac{\ldu^2}{\ellmbi(1+\ellmbi)} \Big).
\end{aligned}
\end{equation}
It is easy to see that the error between $Q$ and $Q^{(MW)}$ is at least quartic in $\fara$. Therefore, using the bootstrap assumptions (upon possibly  restricting the value of $\eps_0$ to be smaller), plus the bound~\eqref{eq:fromlin}, we obtain the pointwise bound, in the region $\{r_{\text{in}}\leq r 
\leq 4M - r_{\text{in}}\}$: 
$$
^{(\redsh)}\pi^{\mu\nu} Q_{\mu\nu} \geq C |\fara|^2.
$$
\end{remark}
We now bound the RHS of (\ref{eq:redzero}) by (\ref{eq:tzero}), we divide the resulting inequality by $s_2-s_1$ and we take the limit $s_1 \to s_2$. We obtain:
\begin{equation}\label{eq:togronz}
F'(s) + C_3 F(s) \leq C_4 F(s_0),
\end{equation}
where we denoted
\begin{equation}
F(s) := \int_{\widetilde{\Sigma}_{s}} Q_{\mu\nu} \redsh^\mu {n}_{\widetilde \Sigma_{t^*}}^\nu \de \widetilde{\Sigma}_{t^*}.
\end{equation}
Integrating (\ref{eq:togronz}), and using the fact that we assumed
\begin{equation*}
\int_{\widetilde{\Sigma}_{t^*_0}} Q_{\mu\nu} \redsh^\mu {n}_{\widetilde \Sigma_{t^*}}^\nu \de \widetilde{\Sigma}_{t^*} \leq \varepsilon^2,
\end{equation*}
we obtain the claim when $j =0$, i.e.
\begin{equation}
	\norm{\fara}^2_{H^0(\widetilde{\Sigma}_{t^*})} \leq C \varepsilon^2.
\end{equation}

\item Let us now turn to the \textbf{induction step}. Let $j \leq 2N-1$. Let us assume that 
\begin{equation*}
\norm{\fara}^2_{H^{j}(\widetilde{\Sigma}_{t^*})} \leq \varepsilon^2.
\end{equation*}
We will deduce that
\begin{equation*}
\norm{\fara}^2_{H^{j+1}(\widetilde{\Sigma}_{t^*})} \leq C \varepsilon^2.
\end{equation*}
We now notice that we can control the right hand side of the previous estimate~\eqref{eq:mixfinal} by estimate~\eqref{eq:intdeg}. Recall that $s_0 \leq s_1 \leq s_2$. We now add a multiple of 
$$
\int_{\tregio{{s}_1}{{s}_2}\cap \{r \geq 4M - r_{\text{in}}\}}\sum_{a+b = j} \sum_{I \in \iindicess^b} |(\faram_{,I,a})_{\mu\nu}|^2  \de \text{Vol}
$$
to both sides of inequality~\eqref{eq:mixfinal}. We obtain the following estimates:
\begin{align}
&\int_{\widetilde{\Sigma}_{{s}_2}} \qtot \de {\widetilde{\Sigma}_{t^*}} -
\int_{\widetilde{\Sigma}_{{s}_1}} \qtot \de {\widetilde{\Sigma}_{t^*}} + C \int_{\tregio{{s}_1}{{s}_2} }\sum_{a+b = j} \sum_{I \in \iindicess^b} |(\faram_{,I,a})_{\mu\nu}|^2  \de \text{Vol} \nonumber \\ & 
\leq   C \int_{\regio{{s}_1}{{s}_2}\cap \{4M - r_{\text{in}}\leq r \leq 3M\}} \sum_{a+b = j} \sum_{I \in \iindicess^b} |\nabla_\mu (\mqtensor \indices{^\mu _\nu}\redsh^\nu)|\, \de \text{Vol} 
\nonumber \\
&\quad + C \int_{\tregio{{s}_1}{{s}_2} \cap \{r \geq 4M - r_{\text{in}}\} }\sum_{a+b = j} \sum_{I \in \iindicess^b} |(\faram_{,I,a})_{\mu\nu}|^2  \de \text{Vol} \nonumber \\ &
\quad +C \int_{\tregio{{s}_1}{{s}_2}} \sum_{a+b = j} \sum_{I \in \iindicess^b}\left|\otm_2^{(I,a)}\right| \de \text{Vol}\nonumber \\ &
\leq   C  \int_{\regio{{s}_1}{{s}_2}\cap \{r \geq 4M - r_{\text{in}}\}} \sum_{a+b = j} \sum_{I \in \iindicess^b}  Q^{(MW)}_{\mu\nu}[\dot \fara_{,I,a}] T^\mu n^\nu_{\Sigma_{t^*}} \de \text{Vol}\nonumber \\
&\quad +C \int_{\tregio{{s}_1}{{s}_2}} \sum_{a+b = j} \sum_{I \in \iindicess^b}\left|\otm_2^{(I,a)}\right| \de \text{Vol}\label{eq:mix1}\\ &
\leq C \int_{s_1}^{s_2} 
\int_{\Sigma_{s}} \left(\sum_{I \in \iindicess^{j}} Q^{(MW)}_{\mu\nu}[\dot \fara_{,I,0}] T^\nu n_{\Sigma_{t^*}}^\mu \right) \de{\Sigma_{t^*}} \de s \nonumber \\
& \quad +C\int_{\tregio{{s}_1}{{s}_2}} \sum_{a+b = j} \sum_{I \in \iindicess^b}\left|\otm_2^{(I,a)}\right| \de \text{Vol}+C \eps^2 (s_2-s_1)\nonumber\\
&\leq C \int_{s_1}^{s_2}  \left(\int_{\regio{s_0}{s}} (t^*)^{-2}\varepsilon^{3/2} \sum_{I \in \iindicess^{j}}| {\dot \fara_{,I,0}} | |\partial^{\leq j} \fara| d \text{Vol}\right)
\de s \nonumber\\ 
&\quad+ C(s_2-s_1) \int_{\Sigma_{s_0}} \sum_{I\in \iindicess^j}|\dot \fara_{,I,0}|^2 \de \Sigma_{t^*} + C \int_{\tregio{{s}_1}{{s}_2}}\sum_{a+b = j} \sum_{I \in \iindicess^b}  \left|\otm^{(I,a)}_2\right| \, \de \text{Vol}+C \eps^2(s_2-s_1).\nonumber
\end{align}
In order to obtain this chain of inequalities, we used the In the fact that, away from the event horizon, under the bootstrap assumptions, derivatives in the directions of Killing vector fields control all derivatives (see Proposition~\ref{prop:liecontrol}):
\begin{equation*}
	|\partial^{\leq j} \fara| \leq C \sum_{I\in \iindicess^{\leq j}} |\dot \fara_{,I,0}|^2 \qquad \text{ if } r \geq 4M - r_{\text{in}}.
\end{equation*}
Hence, in our case,
\begin{equation*}
\begin{aligned}
	 \int_{\Sigma_s \cap \{r \geq 4M - r_{\text{in}}\}} \sum_{a+b = j} \sum_{I \in \iindicess^b} Q^{(MW)}_{\mu\nu}[\dot \fara_{,I,a}] T^\mu n^\nu_{\Sigma_{t^*}} \de \Sigma_{t^*} \\
	 \leq C \int_{\Sigma_s \cap \{r \geq 4M - r_{\text{in}}\}} \sum_{I \in \iindicess^j} Q^{(MW)}_{\mu\nu}[\dot \fara_{,I,0}] T^\mu n^\nu_{\Sigma_{t^*}} \de \Sigma_{t^*}+ C \eps^2.
\end{aligned}
\end{equation*}
In the last step of inequality~\eqref{eq:mix1}, we applied Fubini's theorem and the bound~\eqref{eq:intdeg} on fluxes containing commutation only with Killing vector fields in the exterior region (notice that all the corresponding integrals are on the exterior region). We now denote
\begin{align}
F(s) := \int_{\widetilde{\Sigma}_{s}} \qtot \de {\widetilde{\Sigma}_{t^*}}.
\end{align}
In these conditions, we obtain:
\begin{equation*}
\begin{aligned}
&F({s}_2) - F({s}_1) + C_1 \int_{{s}_1}^{{s}_2} F(\bar {s}) d \bar {s} \\
&\quad \leq C({s}_2 - {s}_1) F({s}_0) + C \int_{{s}_1}^{{s}_2}\left(\int_{{s}_0}^{\bar {s}}\varepsilon^{3/2} (\bar{\bar {s}})^{-2} F(\bar {\bar {s}}) d \bar{\bar {s}}\right) d \bar {s} + C \varepsilon^2(s_2-s_1).
\end{aligned}
\end{equation*}
Dividing the previous inequality by ${s}_2 - {s}_1$ and taking the limit ${s}_2 \to {s}_1$, we get the following differential inequality for $F$
\begin{equation}\label{togron}
\begin{aligned}
F'({s}) + C_1 F({s}) \leq  C F({s}_0) + C \varepsilon^{3/2} \int_{{s}_0}^{{s}} (\bar{{s}})^{-2}F(\bar {s})d \bar {s} + C \eps^2,
\end{aligned}
\end{equation}
which is
\begin{equation}
\begin{aligned}
e^{-C_1 {s}}(e^{C_1 {s}} F({s}))' \leq C F({s}_0) + C  \varepsilon^{3/2} \int_{{s}_0}^{{s}} (\bar{{s}})^{-2}F(\bar {s})d \bar {s} + C \eps^2.
\end{aligned}
\end{equation}
We integrate the last display between ${s}_0$ and some $b \geq s_0$ to obtain
\begin{equation}\label{eq:tosup}
\begin{aligned}
&e^{C_1 b} F(b) - e^{C_1 {s}_0}F({s}_0) \leq \int_{{s}_0}^{b}e^{C_1 {s}} \left(C F({s}_0) + C \varepsilon^{3/2} \int_{{s}_0}^{{s}} (\bar{{s}})^{-2}F(\bar {s})d \bar {s} \right)d {s} + C \eps^2 e^{C_1 b}\\&\qquad = (i) +(ii) + (iii).
\end{aligned}
\end{equation}
We now define, for $b \geq s_0$, 
$$S(b):= \sup_{{s} \in [{s}_0, b]}F({s}).$$
We have then
\begin{equation*}
(i) \leq C e^{C_1 b} F(s_0).
\end{equation*}
Furthermore,
\begin{equation*}
(ii) \leq C S(b) \varepsilon^{3/2} \int_{{s}_0}^{b} e^{C_1 s}  \de {s}  \leq C S(b) \varepsilon^{3/2} e^{C_1 b}.
\end{equation*}
Dividing both sides of (\ref{eq:tosup})  by $e^{C_1 b}$, and taking the supremum of the resulting inequality for $s \in [s_0, b]$, we obtain $S(b) - C\eps^{\frac 32} S(b) \leq C \eps^2$
By restricting to small $\varepsilon$, we finally get
\begin{align}\label{enbdone}
F(s) := \int_{\Sigma_{s}} \qtot \de {\Sigma_{t^*}}\leq C \varepsilon^{2},
\end{align}
for all ${s} \geq {s}_0$.
This concludes the induction argument and proves the proposition.
\end{itemize}
\end{proof}

\section{The asymptotic behaviour of spherical averages} \label{sec:charge}
In this Section we prove that the spherical averages of $\sigma$ and $\rho$ decay, given that the charge on $\Sigma_{t_0^*}$ vanishes at spacelike infinity. Recall the definition of $\leo$, $\leo = \{\Omega_1, \Omega_2, \Omega_3\}$, where all $\Omega_i$'s are rotation Killing fields. Furthermore, recall that we defined
\begin{equation*}
Z := r^2 \rho, \qquad W = r^2 \sigma.
\end{equation*}
Also, recall that
\begin{equation*}
\mfarad_{\mu\nu} := -\ellmbi^{-1}(\fara_{\mu\nu}- \ldu \farad_{\mu\nu}).
\end{equation*}
We are going to prove the following Proposition.
\begin{proposition}\label{prop:charge}
Let $b,l,j,k \in \N_{\geq 0}$, $l \geq 1$, $b + j +k \leq l$, let $I \in \mathscr{I}_{\leo}^b$. There exist a small number $\tilde \varepsilon > 0$ and a constant $C >0$ such that the following holds. Let $\fara$ be a smooth solution of the MBI system (\ref{MBI}) on $\mathcal R := \mathcal{R}_{t_0^*}^{t_2^*} \subset S_e$. Assume the bootstrap assumptions $BA\left(\mathcal{R},1,\left\lfloor \frac{l+3}{2}\right\rfloor , \varepsilon \right)$. Assume the bound on the initial energy:
\begin{equation}
\begin{aligned}
\norm{\fara}^2_{H^{l+3}(\Sigma_{t_0^*})} &\leq \varepsilon^2.
\end{aligned}
\end{equation}
Here, $0 < \varepsilon < \tilde \varepsilon$. Assume that the electric and magnetic charge vanish asymptotically on $\Sigma_{t^*_0}$. In other words, assume that there exists a real number $\eta > 0$ such that:
\begin{equation}\label{eq:decchar}
\int_{\mathbb{S}^2} (1-\mu)^{-1}r^2\farad(\lbar, L) \desphere \leq C r^{-\eta}, \text{ on } \Sigma_{t_0^*}, \qquad
\int_{\mathbb{S}^2} (1-\mu)^{-1}r^2{}^\star \!\mfara(\lbar, L) \desphere  \leq C r^{-\eta} \text{ on } \Sigma_{t_0^*}.
\end{equation}
Then we have
\begin{align*}
&\int_{\mathbb{S}^2} \hat W_{,I,j,k}(u,v,\omega) \desphere(\omega)= 0 &\text{ for } (u,v,\omega)\in\regio{t^*_0}{t_2^*},\\
&\left|\int_{\mathbb{S}^2} \hat Z_{,I,j,k}(u,v,\omega) \desphere(\omega)\right| \leq C \varepsilon r^{-1} \tau^{- 2} &\text{ for } (u,v,\omega)\in\regio{t^*_0}{t_2^*}.
\end{align*}
\end{proposition}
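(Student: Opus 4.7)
The proof will split on the structure of the multi-index $I$. If $|I|\geq 1$, then $I$ contains at least one rotation Killing field $\Omega_i$, and since each $\Omega_i$ is divergence-free on the conformal sphere, $\int_{\mathbb{S}^2}\Omega_i(f)\,\desphere=0$ for every smooth $f$. Because $T$ and $\hat\lbar:=(1-\mu)^{-1}\lbar$ commute with the rotations and preserve the spheres of constant $(t^*, r_1)$, this kills the spherical averages of $\hat W_{,I,j,k}$ and $\hat Z_{,I,j,k}$ outright, with no use of the MBI equations.

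The $|I|=0$ case for $W$ relies on the Bianchi identity $d\fara=0$. Projected onto the null frame, two of its components produce schematic transport equations of the form $L(r^2\sigma) = -r^2\,\curl\alpha$ and $\lbar(r^2\sigma) = -r^2\,\curl\alphabar$. Integrating over spheres of constant $r$ annihilates the $\curl$ terms, so $\int_{\mathbb{S}^2}W\,\desphere$ is constant along both null directions; combined with the asymptotic vanishing of the magnetic charge in \eqref{eq:decchar}, this forces $\int_{\mathbb{S}^2}W\,\desphere\equiv 0$.

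The $|I|=0$ case for $Z$ uses instead the second MBI equation $\nabla^\mu\mfarad_{\mu\nu}=0$, with $\mfarad_{\mu\nu}=-\ellmbi^{-1}(\fara_{\mu\nu}-\ldu\,\farad_{\mu\nu})$. Defining the MBI electric density $\widetilde\rho:=\tfrac{1}{2(1-\mu)}\mfarad(\lbar,L)$, the same spherical-average argument yields $L\int_{\mathbb{S}^2}r^2\widetilde\rho\,\desphere = \lbar\int_{\mathbb{S}^2}r^2\widetilde\rho\,\desphere = 0$, and the second assumption in \eqref{eq:decchar} pins this constant to zero. Taylor expanding $\ellmbi^{-1}=(1+\lun-\ldu^2)^{-1/2}$ and using that both $\lun$ and $\ldu$ are at least quadratic in $\fara$ shows schematically $\widetilde\rho-\rho = \mathcal{O}(|\fara|^3)$. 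The worst-decaying bootstrap bound $|\alphabar|\lesssim A\varepsilon^{3/4}\tau^{-1}r^{-1}$ therefore yields
\begin{equation*}
\left|\int_{\mathbb{S}^2}Z\,\desphere\right| = \left|\int_{\mathbb{S}^2}r^2(\rho-\widetilde\rho)\,\desphere\right| \lesssim \varepsilon^{9/4}\,r^{-1}\tau^{-3},
\end{equation*}
which is stronger than the claimed $\varepsilon\,r^{-1}\tau^{-2}$.

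For the higher $T$- and $\hat\lbar$-derivatives, $T$ is Killing and commutes with $\nabla$ and with the nonlinear structure of the system, so one applies the above arguments to $T^k\fara$ directly. For $j\geq 1$, the Bianchi identity and the equation $\nabla^\mu\mfarad_{\mu\nu}=0$ are commuted inductively with $\hat\lbar$; the commutators $[\hat\lbar,L]$ and $[\hat\lbar,\lbar]$ are smooth multiples of $\hat\lbar$ and $L$ by functions of $r$, producing either tangential-derivative pieces whose spherical averages vanish by angular integration, or strictly nonlinear remainders controlled by the bootstrap assumptions together with Proposition~\ref{prop:energy}. The main obstacle is precisely this last bookkeeping step: one must verify that every error term generated by commuting $\hat\lbar$ through $\ellmbi^{-1}(\fara-\ldu\,\farad)$ remains cubic in $\fara$ (so that the extra $r^2$ weight in $r^2\widetilde\rho$ is absorbed by the bootstrap decay) and that all residual non-cubic pieces are angular divergences that vanish under $\int_{\mathbb{S}^2}$.
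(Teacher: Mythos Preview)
Your argument is correct and matches the paper's proof for $|I|\geq 1$ (angular derivatives kill spherical averages), for $W$ (Bianchi plus vanishing of spherical $\curl$-integrals), and for the base case $j=k=0$ of $Z$ (Taylor expansion of $\ellmbi^{-1}$, cubic remainder).

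The only place you take a harder route is the case $|I|=0$, $j\geq 1$ for $Z$. You propose to commute $\hat\lbar$ through $\nabla^\mu\mfarad_{\mu\nu}=0$ and then verify that every commutator error is either cubic in $\fara$ or an angular divergence --- bookkeeping you rightly flag as the main obstacle. The paper avoids this entirely by using instead the null-decomposed equation
\[
\hat\lbar(r^2\rho) + r^2(1-\mu)^{-1}\dive\alphabar = -r^2 {H_{_\Delta}}\indices{^\mu_{\hat\lbar}^\kappa^\lambda}\nabla_\mu\fara_{\kappa\lambda},
\]
obtained from $\nabla_\mu\fara^{\mu\nu}+{H_{_\Delta}}^{\mu\nu\kappa\lambda}\nabla_\mu\fara_{\kappa\lambda}=0$ contracted with $\hat\lbar^\nu$ (cf.~\eqref{eq:transpun}). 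Here the linear part is \emph{already} an exact angular divergence, so applying $\hat\lbar^{j-1}T^k$ and integrating over $\mathbb{S}^2$ leaves
\[
\int_{\mathbb{S}^2}\hat Z_{,0,j,k}\,\desphere = -\int_{\mathbb{S}^2}\hat\lbar^{j-1}T^k\Bigl(r^2 {H_{_\Delta}}\indices{^\mu_{\hat\lbar}^\kappa^\lambda}\nabla_\mu\fara_{\kappa\lambda}\Bigr)\desphere,
\]
a manifestly cubic quantity estimated via the reasoning of Lemma~\ref{lem:bsrho} together with the unweighted Sobolev bound of Corollary~\ref{cor:unweightedsob} on the top-order factor. (For $j=0$, $k\geq 1$, one rewrites $L=2T-(1-\mu)\hat\lbar$ and uses the $L$-transport equation in the same way.) There are no non-cubic residuals to track, so the commutator analysis you anticipate is unnecessary.
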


\begin{proof} 
We subdivide the proof in two steps. In \textbf{Step 1}, we will analyse the case $b + j + k= 0$, and in \textbf{Step 2}, we will consider the case $b + j + k> 0$.

\subsection*{Step 1}

We recall that the MBI equations can also be written in the form
\begin{equation*}
\left\{
\begin{array}{ll}
\nabla^\mu {}^\star \! \mathcal{M}_{\mu\nu} = 0, \\
\nabla^\mu \farad_{\mu\nu} = 0.
\end{array}
\right.
\end{equation*}
Recall the form of $\mfarad$:
\begin{equation*}
\mfarad_{\mu\nu} := -\ellmbi^{-1}(\fara_{\mu\nu}- \ldu \farad_{\mu\nu}),
\end{equation*}
and the form of $\ellmbi$:
\begin{equation*}
\ellmbi^2:= 1+\lun -\ldu^2.
\end{equation*}
By plugging $\nu = L, \lbar$ in these equations, and integrating on the spheres of constant $r$, under the assumption~(\ref{eq:decchar}), there holds
\begin{equation}\label{sphavg}
\begin{aligned}
\int_{\mathbb{S}^2} (1-\mu)^{-1}r^2\farad(\lbar, L) \desphere = 0, \\
\int_{\mathbb{S}^2} (1-\mu)^{-1}r^2{}^\star \!\mfara(\lbar, L) \desphere = 0.
\end{aligned}
\end{equation}
From the first it follows immediately that the spherical average of $\sigma$ is zero everywhere.

The mean value theorem applied to the function $$(1+t)^{-\frac 1 2}$$
around $t= 0$ assures that for each $x \in [- 1/2, 1/2]$ there exists a $\xi_x \in (-|x|, |x|)$ such that
$$
(1+x)^{- \frac 1 2} = 1 - \frac 1 2 x (1+ \xi_x)^{- \frac 3 2}.
$$
Choosing $x:= \lun -\ldu^2$, we deduce the existence of a $\xi_x \in (-|x|, |x|)$ satisfying the above conditions.
Hence, the equations~(\ref{sphavg}) now give:
\begin{equation}\label{eq:sph}
\sphint{(1- \frac 1 2 (\lun -\ldu^2)(1+\xi_x)^{- \frac 3 2})\rho}=
\sphint{(1- \frac 1 2 (\lun -\ldu^2)(1+\xi_x)^{- \frac 3 2})\ldu \sigma}.
\end{equation}
From the bootstrap assumptions it now follows that
\begin{equation}\label{eq:decspher}
\left|\sphint{\rho} \right| \leq C \varepsilon^{9/4} r^{-5} \tau^{-\frac 5 2}.
\end{equation}
The decay rate follows by looking at the worst term, which is the second term in the previous formula~(\ref{eq:sph}). Indeed, term $\lun$ decays at least like $\tau^{-\frac 3 2} r^{-3}$, and also $\rho$ decays like $\tau^{-1/2} r^{-2}$. This proves the claim.

\subsection*{Step 2}
We immediately notice that, when $b >0$,
\begin{align*}
&\int_{\mathbb{S}^2} \hat W_{,I,j,k}(u,v,\omega) \desphere(\omega)= 0 &\text{ for } (u,v,\omega)\in\regio{t^*_0}{t_1^*},\\
&\int_{\mathbb{S}^2} \hat Z_{,I,j,k}(u,v,\omega) \desphere(\omega) = 0 &\text{ for } (u,v,\omega)\in\regio{t^*_0}{t_1^*}.
\end{align*}
Hence, we restrict to the case $b = 0$. We also notice that the MBI equations (\ref{MBI}) imply the following transport equations:
\begin{align}\label{eq:transpun}
&- \hat \lbar(r^2 \sigma) + (1-\mu)^{-1} r^2 \curl \alphabar= 0, &\hat \lbar(r^2 \rho) + r^2 (1-\mu)^{-1}\dive \alphabar = - r^2{H_{_\Delta}} \indices{^\mu _{\hat \lbar} ^\kappa ^\lambda} \nabla_\mu \fara_{\kappa \lambda},\\ \label{eq:transpd}
&L(r^2 \sigma)+r^2 \curl \alpha = 0, 
&-L(r^2 \rho) + r^2 \dive \alpha = - r^2{H_{_\Delta}} \indices{^\mu _L ^\kappa ^\lambda} \nabla_\mu \fara_{\kappa \lambda}.
\end{align}
Upon differentiation of the equations for $\sigma$, and subsequent integration on $\mathbb{S}^2$ with respect to the form $\desphere$, we obtain the claim for $\hat W_{,I,j,k}$ (notice that the claim is valid for any order of derivatives, as long as the solution is smooth.)

We now wish to obtain the claim for $\hat Z$. We first focus on the case $j \geq 1$. Recall that $k+j = l$. We have, since $[r \snabla, \snabla_{\hat \lbar}] = 0$,
\begin{equation*}
\hat \lbar^j T^k(Z) + \sum_{a+b=j}\hat \lbar^a T^k(r) \cdot r \dive \snabla^b_{\hat \lbar} r \alphabar= -\hat \lbar^j T^k\left(r^2 {H_{_\Delta}} \indices{^\mu _{\hat \lbar} ^\kappa ^\lambda} \nabla_\mu \fara_{\kappa \lambda}\right).
\end{equation*}
Upon integration on $\mathbb{S}^2$, the divergence term disappears, and we need to estimate
\begin{equation}
\left|\int_{\mathbb{S}^2} \hat \lbar^j T^k\left(r^2 {H_{_\Delta}} \indices{^\mu _{\hat \lbar} ^\kappa ^\lambda} \nabla_\mu \fara_{\kappa \lambda}\right)\desphere\right|.
\end{equation}
By the $L^2$ uniform estimates of Equation~(\ref{eq:unifl2}), and the ``unweighted'' Sobolev embedding, Corollary~\ref{cor:unweightedsob}, we know that
\begin{equation*}
\sum_{m=0}^{l}|\partial^m \fara| \leq C \varepsilon,
\end{equation*}
for some constant $C >0$. This remark, together with the bootstrap assumptions $BA\left(\mathcal{R},1,\left\lfloor \frac{l+3}{2}\right\rfloor , \varepsilon \right)$, and the reasoning in Lemma~\ref{lem:bsrho}, imply
\begin{equation*}
\left|\int_{\mathbb{S}^2} \hat \lbar^j T^k\left(r^2 {H_{_\Delta}} \indices{^\mu _{\hat \lbar} ^\kappa ^\lambda} \nabla_\mu \fara_{\kappa \lambda}\right)\desphere\right| \leq C \varepsilon \tau^{-2} r^{-1}.
\end{equation*}
This proves the case $j \geq 1$. If $j =0$, we have, since $L = 2T - (1-\mu)\hat \lbar$,
\begin{equation*}
- 2T (Z) + (1-\mu) \hat \lbar (Z) + r^2 \dive \alpha = - r^2{H_{_\Delta}} \indices{^\mu _L ^\kappa ^\lambda} \nabla_\mu \fara_{\kappa \lambda}.
\end{equation*}
Upon taking $k-1$ time derivatives of the last display, and integrating on $\mathbb{S}^2$, we conclude.
\end{proof}

\section{The \fackip Equations for \texorpdfstring{$\rho$}{rho} and \texorpdfstring{$\sigma$}{sigma}}\label{sec:spinredu}
In this section, we derive the \fackip Equations satisfied by the middle components $\sigma$ and $\rho$. For simplicity, we carry out the corresponding calculation for the linear Maxwell system in Appendix~\ref{app:wavemax}. The proofs in the present section are just a slight extension of those in such appendix, adding the nonlinear terms which arise from the MBI system.

\begin{lemma}[The \fackip Equation satisfied by $\rho$ in the MBI system]\label{lem:wavembir}
Let us suppose that the smooth two-form $\fara$ satisfies the MBI system~\eqref{MBI}. We have the equation:
\begin{equation}\label{eq1.3}\boxed{
- r^{-2} L \lbar (r^2 \rho) +(1-\mu) \slashed{\Delta}  \rho = \text{\bf NL}_1 + \text{\bf NL}_2.}
\end{equation}
Here,
\begin{equation*}
\text{\bf NL}_1 := 2 \frac{1-\mu}{r} \left(-\tensor{{H_{_\Delta}}}{^\mu _\lbar ^\kappa ^\lambda} \nabla_{\mu}\fara_{\kappa\lambda} \right)-2 \frac{1-\mu}{r} \left( \tensor{{H_{_\Delta}}}{^\mu _L ^\kappa ^\lambda} \nabla_{\mu}\fara_{\kappa\lambda}\right),
\end{equation*}
and
\begin{equation*}
\begin{aligned}
&\textbf{\bf NL}_2:= L^\nu \lbar^\alpha \left(-\nabla_\alpha (\nabla^\mu \fara \indices{_\mu_\nu}) +
 \nabla_\nu \nabla^\mu \fara\indices{_\mu_\alpha} \right) \\ & =
 L^\nu \lbar^\alpha \left(\nabla_\alpha  \left(\tensor{{H_{_\Delta}}}{^\mu _\nu ^\kappa ^\lambda} \nabla_{\mu}\fara_{\kappa\lambda} \right) -
  \nabla_\nu \left(\tensor{{H_{_\Delta}}}{^\mu _\alpha ^\kappa ^\lambda} \nabla_{\mu}\fara_{\kappa\lambda} \right)  \right).
 \end{aligned}
\end{equation*}
\end{lemma}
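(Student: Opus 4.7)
The plan is to mimic the derivation of the Fackerell--Ipser equation in the linear Maxwell case (carried out in Appendix~\ref{app:wavemax}), but carry along the nonlinear source arising from the MBI equation $\nabla^\mu \fara_{\mu\nu} = -\tensor{{H_{_\Delta}}}{^\mu_\nu^\kappa^\lambda}\nabla_\mu \fara_{\kappa\lambda}$. Concretely, I would first introduce the auxiliary quantity
\begin{equation*}
\Phi := L^\nu \lbar^\alpha\bigl(-\nabla_\alpha \nabla^\mu \fara_{\mu\nu} + \nabla_\nu \nabla^\mu \fara_{\mu\alpha}\bigr),
\end{equation*}
which, upon substituting the second MBI equation into each of the inner divergences, is by definition equal to $\text{\bf NL}_2$. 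The entire content of the lemma then reduces to showing the \emph{geometric} identity
\begin{equation*}
\Phi \;=\; -\,r^{-2} L\lbar(r^2 \rho) + (1-\mu)\slashed{\Delta}\rho \;-\; \text{\bf NL}_1,
\end{equation*}
which does not use the MBI equation at all, only the structure of the null frame on Schwarzschild and the definition $\rho = \tfrac{1}{2(1-\mu)}\fara(\lbar, L)$.

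To obtain this geometric identity, I would expand $\Phi$ by commuting the covariant derivatives. The commutators $[\nabla_\alpha, \nabla^\mu]\fara_{\mu\nu}$ produce Riemann terms contracted with $\fara$; on Schwarzschild the Ricci part vanishes, so only the Weyl contribution survives, and these recombine into lower-order $\rho$-terms after using the null decomposition (see Section~\ref{sec:riemann} of the Appendix and the analogous computation in Appendix~\ref{app:wavemax}). Next I would decompose $\nabla^\mu \fara_{\mu L}$ and $\nabla^\mu \fara_{\mu \lbar}$ with respect to the null frame $\{L,\lbar,\partial_{\theta^A}\}$ using the identities
\begin{equation*}
\nabla_\mu \fara\indices{^\mu_L} = -\,r^{-2}L(r^2 \rho) + \dive \alpha, \qquad
\nabla_\mu \fara\indices{^\mu_\lbar} = \,r^{-2}\lbar(r^2 \rho) + \dive \alphabar,
\end{equation*}
proved in Section~\ref{sec:local}. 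Inserting these into $\Phi$ produces the $L\lbar(r^2\rho)$ term (with the correct $r^{-2}$ normalization from how the null frame is not parallel), and a $\dive$-of-angular-divergence piece that, after one more commutation and use of the other MBI equation $d\fara = 0$ (equivalently $\nabla^\mu \farad_{\mu\nu} = 0$), recombines into $(1-\mu)\slashed{\Delta}\rho$.

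Finally, the term $\text{\bf NL}_1$ arises because when one converts $\dive\alpha$ and $\dive\alphabar$ back to $L,\lbar$ derivatives of $\rho$ in the presence of the MBI nonlinearity, the divergences $\nabla^\mu\fara_{\mu L}$ and $\nabla^\mu \fara_{\mu \lbar}$ are not zero but equal $-\tensor{{H_{_\Delta}}}{^\mu_L^\kappa^\lambda}\nabla_\mu\fara_{\kappa\lambda}$ and $-\tensor{{H_{_\Delta}}}{^\mu_\lbar^\kappa^\lambda}\nabla_\mu\fara_{\kappa\lambda}$ respectively; tracking the $\tfrac{1-\mu}{r}$ weights that emerge from differentiating $r^2\rho$ in the null frame produces exactly the coefficient $2\tfrac{1-\mu}{r}$ in front of each. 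The main obstacle I anticipate is purely computational bookkeeping: keeping track of the interplay between the $r$-weights (which arise from $\nabla_L \partial_{\theta^A} = \tfrac{1}{r}\partial_{\theta^A}$-type identities) and the factor $(1-\mu)$ coming from $g(L,\lbar)=-2(1-\mu)$, and ensuring the Riemann-tensor commutator terms cancel cleanly against contributions from $\alpha,\alphabar,\sigma$ so that no spurious extreme or magnetic components remain on the right-hand side.
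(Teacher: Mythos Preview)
Your approach is essentially the same as the paper's, just organized from the opposite end: the paper computes $\square_g(\fara(L,\lbar))$ via the nine-term Leibniz expansion $(a)$--$(i)$ exactly as in the linear Maxwell case (Lemma~\ref{lem:wavemaxwell}), and then observes that the only places where $\nabla^\mu\fara_{\mu\nu}=0$ was used are in terms $(c)+(f)+(g)+(h)$ (producing $\text{\bf NL}_1$) and in term $(i)$ via the commutation lemma (producing $\text{\bf NL}_2$). You instead start from $\Phi=\text{\bf NL}_2$ and work backwards.

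One organizational wrinkle in your sketch: you say you will first commute $[\nabla_\alpha,\nabla^\mu]$ in $\Phi$, and \emph{then} decompose $\nabla^\mu\fara_{\mu L}$, $\nabla^\mu\fara_{\mu\lbar}$. But after commuting (and using $d\fara=0$, which you do need here), $\Phi$ becomes $L^\nu\lbar^\alpha\,\square_g\fara_{\nu\alpha}$ minus a Riemann term, and the scalar divergences $\nabla^\mu\fara_{\mu L}$, $\nabla^\mu\fara_{\mu\lbar}$ are no longer directly present. To extract the Fackerell--Ipser operator from $L^\nu\lbar^\alpha\,\square_g\fara_{\nu\alpha}$ you must relate it to $\square_g(\fara(L,\lbar))$ via the Leibniz rule, which brings in exactly the frame terms $(a)$--$(h)$; the MBI source then re-enters through the step marked $(*)$ in the appendix computation of $(c)+(f)+(g)+(h)$, and this is where $\text{\bf NL}_1$ appears with its $2(1-\mu)/r$ coefficient. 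So your plan collapses to the paper's computation. (Your claim that the key identity ``does not use the MBI equation at all'' is also not quite right as stated, since $\text{\bf NL}_1$ is defined through $H_\Delta$; the honest statement is that one MBI equation enters via $d\fara=0$ in the commutation step and the other via the identification of the divergence terms with $\text{\bf NL}_1$.)
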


\begin{proof}
As in the proof for the linear Maxwell Equations (Lemma~\ref{lem:wavemaxwell}), we write
\begin{equation*}
\begin{aligned}
&\nabla_\mu \nabla^\mu (L^\nu \lbar^\alpha \fara_{\nu \alpha}) =  \\ &
(\nabla_\mu \nabla^\mu L^\nu) \lbar^\alpha \fara_{\nu \alpha} + (\nabla^\mu L^\nu)(\nabla_\mu \lbar^\alpha) \fara_{\nu \alpha} + (\nabla^\mu L^\nu) \lbar^\alpha \nabla_\mu \fara_{\nu \alpha}\\ & +
 (\nabla_\mu L^\nu)(\nabla^\mu \lbar^\alpha) \fara_{\nu \alpha} + L^\nu (\nabla_\mu \nabla^\mu \lbar^\alpha) \fara_{\nu \alpha} + L^\nu(\nabla^\mu \lbar^\alpha) \nabla_\mu \fara_{\nu \alpha}+\\ &
 (\nabla_\mu L^\nu) \lbar^\alpha \nabla^\mu\fara_{\nu \alpha} + L^\nu (\nabla_\mu  \lbar^\alpha) \nabla^\mu \fara_{\nu \alpha} + L^\nu \lbar^\alpha \nabla^\mu\nabla_\mu \fara_{\nu \alpha} =\\ &
 (a) +(b)+(c)+\\ &
 (d) +(e) +(f) +\\ &
 (g)+ (h) + (i).
 \end{aligned}
\end{equation*}
Comparing with the proof of Lemma~\ref{lem:wavemaxwell}, the additional contribution we get in this case comes from the terms $(c)+(f)+(g)+(h)$. In addition to the terms we had in the linear Maxwell case, we obtain here the nonlinear terms
\begin{equation*}
\text{\bf NL}_1 := 2 \frac{1-\mu}{r} \left(-\tensor{{H_{_\Delta}}}{^\mu _\lbar ^\kappa ^\lambda} \nabla_{\mu}\fara_{\kappa\lambda} \right)-2 \frac{1-\mu}{r} \left( \tensor{{H_{_\Delta}}}{^\mu _L ^\kappa ^\lambda} \nabla_{\mu}\fara_{\kappa\lambda}\right).
\end{equation*}
Furthermore, $(i)$ gives additional second order terms:
\begin{equation*}
\begin{aligned}
&\textbf{\bf NL}_2:= L^\nu \lbar^\alpha \left(-\nabla_\alpha (\nabla^\mu \fara \indices{_\mu_\nu}) +
 \nabla_\nu \nabla^\mu \fara\indices{_\mu_\alpha} \right) \\ & =
 L^\nu \lbar^\alpha \left(\nabla_\alpha  \left(\tensor{{H_{_\Delta}}}{^\mu _\nu ^\kappa ^\lambda} \nabla_{\mu}\fara_{\kappa\lambda} \right) -
  \nabla_\nu \left(\tensor{{H_{_\Delta}}}{^\mu _\alpha ^\kappa ^\lambda} \nabla_{\mu}\fara_{\kappa\lambda} \right)  \right).
 \end{aligned}
\end{equation*}
The equation satisfied by $\rho$ is then, analogously to~(\ref{eq1.2}),
\begin{equation}
- r^{-2} L \lbar (r^2 \rho) +(1-\mu) \slashed{\Delta}  \rho = \text{\bf NL}_1 + \text{\bf NL}_2.
\end{equation}
\end{proof}
We now turn to the proof of a similar Equation for $\sigma$.
\begin{lemma}[The \fackip Equation satisfied by $\sigma$ in the MBI system]\label{lem:wavembis}
Let us suppose that the smooth two-form $\fara$ satisfies the MBI Equation~\eqref{MBI}. We have the equation:
\begin{equation}\label{eq:wavesigma}\boxed{
- r^{-2} L \lbar (r^2 \sigma) + (1-\mu)\slashed{\Delta}  \sigma = \text{\bf NL}_3.}
\end{equation}
Here,
\begin{equation}\label{eq:nl3form}
\text{\bf NL}_3 := L^\nu \lbar^\alpha \nabla^\mu U_{\alpha \mu \nu},
\end{equation}
where the tensor field $U$ is defined by
\begin{equation}\label{eq:wform}
W_{\eta \zeta \delta} := -\varepsilon\indices{_\eta ^\nu _\zeta _\delta} \tensor{{H_{_\Delta}}}{^\mu _\nu ^\kappa ^\lambda} \nabla_{\mu}\fara_{\kappa\lambda}.
\end{equation}
\end{lemma}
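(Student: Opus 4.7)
The plan is to follow, step by step, the strategy used in the proof of Lemma~\ref{lem:wavembir} for $\rho$, but applied to $\farad$ rather than $\fara$. The observation making this viable is that $\sigma$ is to $\farad$ exactly what $\rho$ is to $\fara$: indeed from (\ref{eq:rhodefprel}) one has $\sigma = \frac{1}{2(1-\mu)}\farad(\lbar, L)$. Accordingly, I would begin by expanding $\nabla^\mu \nabla_\mu(L^\nu \lbar^\alpha \farad_{\nu\alpha})$ into the same nine terms (a)--(i) as in Lemma~\ref{lem:wavembir}, using the identities for $\nabla L, \nabla \lbar$ and $\nabla^\mu \nabla_\mu L$, $\nabla^\mu \nabla_\mu \lbar$ in the null frame. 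Exactly as in the linear Maxwell case (Appendix~\ref{app:wavemax}), these terms will recombine with the wave operator so that the purely geometric contributions reproduce the left-hand side $-r^{-2}L\lbar(r^2\sigma) + (1-\mu)\sdelta \sigma$.

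The crucial structural difference from the $\rho$ computation is a role-reversal between the two MBI equations. In the formulation of Section~\ref{sec:altmbi}, the system reads $\nabla^\mu \farad_{\mu\nu}=0$ together with $\nabla^\mu \mfarad_{\mu\nu}=0$, so here it is $\farad$ itself that is \emph{exactly} co-closed. Consequently, the analogues of the schematic terms $\frac{1-\mu}{r}\,\nabla^\mu \fara_{\mu L}$ and $\frac{1-\mu}{r}\,\nabla^\mu \fara_{\mu \lbar}$ that produced $\text{\bf NL}_1$ in Lemma~\ref{lem:wavembir} now involve $\nabla^\mu \farad_{\mu L}$ and $\nabla^\mu \farad_{\mu \lbar}$ and therefore vanish identically. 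This explains the absence of any $\text{\bf NL}_1$-type term in~(\ref{eq:wavesigma}).

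The only surviving nonlinearity $\text{\bf NL}_3$ comes from the analogue of term (i), i.e.~from $L^\nu \lbar^\alpha \nabla^\mu \nabla_\mu \farad_{\nu\alpha}$. Here I would use the algebraic identity
\begin{equation*}
\nabla_\mu \farad_{\nu\alpha} = \nabla_\nu \farad_{\mu\alpha} - \nabla_\alpha \farad_{\mu\nu} + 3 \nabla_{[\mu}\farad_{\nu\alpha]},
\end{equation*}
apply $\nabla^\mu$, commute covariant derivatives (this is where the Ricci-flatness of Schwarzschild kills the curvature terms, as in the linear case), and use $\nabla^\mu \farad_{\mu\alpha}=0$ to discard the first two terms. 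What remains is $L^\nu \lbar^\alpha \nabla^\mu (3\nabla_{[\mu}\farad_{\nu\alpha]})$, and the Hodge-dual computation
\begin{equation*}
3\nabla_{[\mu}\farad_{\nu\alpha]} \;\propto\; \varepsilon_{\mu\nu\alpha\beta}\,\nabla_\sigma \fara^{\sigma\beta}
\end{equation*}
converts this into a quantity proportional to $\varepsilon_{\mu\nu\alpha\beta}$ contracted with $\nabla_\sigma \fara^{\sigma\beta}$. Finally, applying the second MBI equation~(\ref{eq:MBI2}) in the form $\nabla_\sigma \fara^{\sigma\beta} = -H_\Delta^{\sigma\beta\kappa\lambda}\nabla_\sigma \fara_{\kappa\lambda}$ rewrites the source precisely as $\nabla^\mu W_{\alpha\mu\nu}$ with $W$ as defined in (\ref{eq:wform}), and the $L^\nu\lbar^\alpha$ contraction delivers $\text{\bf NL}_3$. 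The main obstacle is purely one of bookkeeping: checking that the curvature terms from the commutators $[\nabla^\mu,\nabla_\nu]\farad_{\mu\alpha}$ and the derivatives of $L,\lbar$ combine exactly into the flat part $-r^{-2}L\lbar(r^2\sigma)+(1-\mu)\sdelta\sigma$, which reduces to the same Ricci-flat computation already carried out in the linear Maxwell case.
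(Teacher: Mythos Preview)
Your proposal is correct and follows essentially the same route as the paper: work with $\farad$ via Hodge duality, note that $\nabla^\mu\farad_{\mu\nu}=0$ exactly (so no $\nonl{1}$/$\nonl{2}$ analogues arise), and identify the sole nonlinearity $\nonl{3}$ as coming from the failure of $\nabla_{[\eta}\farad_{\zeta\delta]}$ to vanish, which the paper packages as $W_{\eta\zeta\delta}$.

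One remark on wording: your parenthetical ``this is where the Ricci-flatness of Schwarzschild kills the curvature terms'' is not accurate and is in tension with your own final sentence. When you commute $\nabla^\mu\nabla_\nu\farad_{\mu\alpha}\to\nabla_\nu\nabla^\mu\farad_{\mu\alpha}$, Ricci-flatness kills only the trace piece; a genuine Riemann term $\rie\indices{_\alpha_\beta^\mu_\nu}\farad\indices{_\mu^\beta}$ survives (exactly as in the paper's commutation lemma and in the linear Maxwell computation in Appendix~\ref{app:wavemax}). These surviving Riemann terms are not discarded: they are part of what combines with the derivatives of $L,\lbar$ to produce the left-hand side $-r^{-2}L\lbar(r^2\sigma)+(1-\mu)\sdelta\sigma$. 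You say this correctly at the end, so just drop the earlier claim that Ricci-flatness kills them.
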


\begin{proof}
We again recall:
\begin{align}
& \nabla_{[\mu} \fara_{\kappa \lambda]}= 0, \label{eq2.1}\\ &
\nabla^\lambda \fara_{\lambda \nu} + \tensor{{H_{_\Delta}}}{^\mu _\nu ^\kappa ^\lambda} \nabla_{\mu}\fara_{\kappa\lambda}= 0. \label{eq2.2}
\end{align}
To derive an equation for $\sigma$, we seek to perform our calculations using Hodge duality. Remember that Equation~\eqref{eq2.1} is equivalent to 
\begin{equation*}
\nabla^\lambda \farad_{\lambda \nu} = 0.
\end{equation*}
Also, the properties of the Hodge dual give:
$$
{}^\star\! \farad = - \fara.
$$
Equation~\eqref{eq2.2} is then equivalent to
\begin{equation}
-\frac 1 2 \varepsilon\indices{^\lambda _\nu ^{\alpha \beta}} \nabla_\lambda \farad_{\alpha \beta}+ \tensor{{H_{_\Delta}}}{^\mu _\nu ^\kappa ^\lambda} \nabla_{\mu}\fara_{\kappa\lambda}= 0,
\end{equation}
and, contracting with $\varepsilon\indices{_\eta ^\nu _\zeta _\delta}$, we obtain
\begin{equation}
\nabla_{[\eta} \farad_{\zeta \delta]}+\varepsilon\indices{_\eta ^\nu _\zeta _\delta} \tensor{{H_{_\Delta}}}{^\mu _\nu ^\kappa ^\lambda} \nabla_{\mu}\fara_{\kappa\lambda}= 0.
\end{equation}
Let us define
\begin{equation}
U_{\eta \zeta \delta}:= -\varepsilon\indices{_\eta ^\nu _\zeta _\delta} \tensor{{H_{_\Delta}}}{^\mu _\nu ^\kappa ^\lambda} \nabla_{\mu}\fara_{\kappa\lambda}.
\end{equation}
We therefore obtain
\begin{equation*}
\begin{aligned}
&0 = \nabla_\alpha (\nabla^\mu \farad \indices{_\mu_\nu}) = 
- \rie\indices{_\nu_\beta^\mu_\alpha} \farad \indices{_\mu^\beta} + \nabla^\mu \nabla_\alpha \farad \indices{_\mu_\nu} \stackrel{(*)}{=} \\ &
- \rie\indices{_\nu_\beta^\mu_\alpha} \farad \indices{_\mu^\beta} - \nabla^\mu \nabla_\nu \farad\indices{_\alpha_\mu}- \nabla^\mu \nabla_\mu \farad\indices{_\nu_\alpha}+ \nabla^\mu U_{\alpha \mu \nu} \stackrel{(*)}{=} \\ &
- \rie\indices{_\nu_\beta^\mu_\alpha} \farad \indices{_\mu^\beta} +
\rie\indices{_\alpha_\beta^\mu_\nu}\farad\indices{_\mu^\beta}
+  \nabla_\nu \nabla^\mu \farad\indices{_\mu_\alpha}- \nabla^\mu \nabla_\mu \farad\indices{_\nu_\alpha}+ \nabla^\mu U_{\alpha \mu \nu} \\ & =
2 \rie\indices{_\alpha_\beta^\mu_\nu}\farad\indices{_\mu^\beta}+  \nabla_\nu \nabla^\mu \farad\indices{_\mu_\alpha}- \nabla^\mu \nabla_\mu \farad\indices{_\nu_\alpha}+ \nabla^\mu U_{\alpha \mu \nu} \\ & =
2 \rie\indices{_\alpha_\beta^\mu_\nu}\farad\indices{_\mu^\beta}- \nabla^\mu \nabla_\mu \farad\indices{_\nu_\alpha}+ \nabla^\mu U_{\alpha \mu \nu} .
\end{aligned}
\end{equation*}
We again calculate
\begin{equation*}
\nabla_\mu \nabla^\mu (L^\nu \lbar^\alpha \fara_{\nu \alpha}).
\end{equation*}
The calculation goes through exactly the same as in Lemma~\ref{lem:wavemaxwell}, except for term $(i)$. In that case, we obtain an additional term on the right hand side, which is 
\begin{equation*}
\text{\bf NL}_3 := L^\nu \lbar^\alpha \nabla^\mu U_{\alpha \mu \nu}.
\end{equation*}
All in all, we obtain the claim:
\begin{equation}
- r^{-2} L \lbar (r^2 \sigma) +(1-\mu) \slashed{\Delta}  \sigma = \text{\bf NL}_3.
\end{equation}
\end{proof}

\section{Commutation of the \fackip Equations}\label{sec:commutrw}
In this section, we derive commuted versions of Equations (\ref{eq1.3}) and (\ref{eq:wavesigma}). Let us first recall a shorthand notation for the derivatives of the null components of the field, in Definition~\ref{def:dots}.
Let $b, k, j \in \N_{\geq 0}$, let $I \in \mathscr{I}^b_{\leo}$, and let $g:\mathcal{S} \to \R$ be a smooth function. Let us recall the definition of $\hat L:= (1-\mu)^{-1}\lbar$. Then, we recall
\begin{equation}\label{eq:hatdef}
\hat g_{,I,j,k} := \partial^I_{\leo}((\hat\lbar)^j (T)^k g).
\end{equation}
Furthermore, we recall
\begin{equation}\label{eq:dotdef}
\dot g_{,I,j,k} := \partial^I_{\leo}((\lbar)^j (T)^k g).
\end{equation}

\begin{lemma}\label{lem:commut} Let
\begin{equation*}
Z := r^2 \rho, \qquad W:= r^2 \sigma, \qquad R_\rho :=- r^2 (\nonl{1}+ \nonl{2}), \qquad R_\sigma := -r^2 \nonl{3}.
\end{equation*}
If $\rho$ (resp. $\sigma$) satisfy Equation (\ref{eq1.3}) (resp. Equation (\ref{eq:wavesigma})), then we have the following equations for 
$Z$ and $W$:
\begin{equation}\label{eq:ZW}
L \lbar Z - (1-\mu) \sdelta Z = R_\rho, \qquad L \lbar W - (1-\mu) \sdelta W = R_\sigma.
\end{equation}
We have the Equations for $\dot Z$ and $\dot W$, when $j = 0$:
\begin{align}\label{eq:zdot}
L \lbar \dot Z_{,I,0,k} - (1-\mu) \sdelta \dot Z_{,I,0,k} &= \partial^I_{\leo} T^k (R_\rho), \\ \label{eq:wdot}
L \lbar \dot W_{,I,0,k} - (1-\mu) \sdelta \dot W_{,I,0,k} &= \partial^I_{\leo} T^k (R_\sigma).
\end{align}
Furthermore, we have the following Equations for $\hat Z$ and $\hat W$, for $j \geq 1$:
\begin{align}\label{eq:rcommuted}
 L( \hat Z_{,I,j,k}) +j\frac{2M}{r^2} \hat Z_{,I,j,k} - {\hat \lbar}^{j-1}(\sdelta \hat Z_{,I,0,k})-\partial_{\leo}^I T^k {\hat \lbar}^{j-1}((1-\mu)^{-1}R_\rho) + \sum_{i = 1}^{j-1} f_{i}^{(j-1)} \hat Z_{,I,i,k} = 0,\\ \label{eq:scommuted}
 L( \hat W_{,I,j,k}) +j \frac{2M}{r^2} \dot W_{,I,j,k} - {\hat \lbar}^{j-1}(\sdelta \dot W_{,I,0,k})-\partial_{\leo}^I T^k  {\hat \lbar}^{j-1}((1-\mu)^{-1}R_\sigma) + \sum_{i = 1}^{j-1} f_{i}^{(j-1)} \hat W_{,I,i,k}= 0.
\end{align}
Here, for each $j \in \N_{\geq 0}$,$i \in \{1, \ldots, j\}$, $f^{(j)}_i$ is a smooth and bounded function depending only on $r$. Furthermore, we have that there exist constants $c^{(j)} > 0$ such that the following bound holds on $\mathcal{S}_e$:
\begin{equation}\label{eq:fbound}
|f^{(j-1)}_i| \leq c^{(j-1)} r^{-3}, \quad \text{all } i \in \{0, \ldots, j-1 \}.
\end{equation}
Furthermore, the equations for $\dot Z_{,I,j,k}:= \lbar^j \dot Z_{,I,0,k}$, $\dot W_{,I,j,k}:= \lbar^j \dot W_{,I,0,k}$ are as follows:
\begin{equation}\label{eq:lbarcommut}
\begin{aligned}
L \lbar \dot Z_{,I,j,k}- \sum_{a+b = j} \lbar^a \left(\frac{1-\mu}{r^2} \right) \sdelta_{\mathbb{S}^2} \lbar^b \dot Z_{,I,0,k} = \partial^I_{\leo} T^k \lbar^j R_\rho,\\
L \lbar \dot W_{,I,j,k}- \sum_{a+b = j} \lbar^a \left(\frac{1-\mu}{r^2} \right) \sdelta_{\mathbb{S}^2} \lbar^b \dot W_{,I,0,k} = \partial^I_{\leo} T^k \lbar^j R_\sigma.
\end{aligned}
\end{equation}
\end{lemma}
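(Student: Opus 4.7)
\textbf{Proof proposal for Lemma~\ref{lem:commut}.}

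The strategy is purely computational: multiply the Fackerell--Ipser equations by $-r^2$ to obtain the equations for $Z,W$, then commute with Killing symmetries and appropriate powers of $\lbar$ or $\hat\lbar$, keeping careful track of commutators.

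First I would establish~\eqref{eq:ZW}. Since $r$ is constant on each sphere of constant $(u,v)$, we have the identity $r^2\sdelta\rho = \sdelta_{\mathbb{S}^2}\rho = \sdelta_{\mathbb{S}^2}(r^2\rho)/r^2\cdot r^2/r^2 = \sdelta Z$ (and analogously for $\sigma$, $W$). Multiplying Lemma~\ref{lem:wavembir} and Lemma~\ref{lem:wavembis} by $-r^2$ yields~\eqref{eq:ZW} with $R_\rho,R_\sigma$ as defined. Equations \eqref{eq:zdot}--\eqref{eq:wdot} are immediate: since $T$ is Killing and the $\Omega_i$ are rotation Killing fields, the operators in $\partial_{\leo}^I T^k$ commute with $L$, $\lbar$, the scalar function $(1-\mu)$, and the spherical Laplacian $\sdelta$, so~\eqref{eq:ZW} can simply be hit with $\partial_{\leo}^I T^k$.

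For the $\hat\lbar$-commuted system \eqref{eq:rcommuted}--\eqref{eq:scommuted}, the key computation is the commutator $[L,\hat\lbar]$. Using $L=\partial_t+\partial_{r_*}$, $\lbar=\partial_t-\partial_{r_*}$, $[L,\lbar]=0$, and $Lr=1-\mu$, one finds $L(1-\mu)=\frac{2M(1-\mu)}{r^2}$, hence
\begin{equation*}
[L,\hat\lbar]f \;=\; L\bigl((1-\mu)^{-1}\bigr)\,\lbar f \;=\; -\frac{2M}{r^2}\,\hat\lbar f.
\end{equation*}
Writing $L\lbar Z=(1-\mu)L\hat\lbar Z+L(1-\mu)\hat\lbar Z$ and dividing~\eqref{eq:ZW} by $(1-\mu)$ produces
\begin{equation*}
L\hat\lbar Z + \frac{2M}{r^2}\hat\lbar Z - \sdelta Z = (1-\mu)^{-1}R_\rho,
\end{equation*}
which is~\eqref{eq:rcommuted} for $j=1$. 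The general case follows by applying $\hat\lbar^{j-1}$ and inducting: using $\hat\lbar L=L\hat\lbar+\tfrac{2M}{r^2}\hat\lbar$ one shows $\hat\lbar^{j-1}L\hat\lbar Z = L\hat\lbar^{j}Z + (j-1)\tfrac{2M}{r^2}\hat\lbar^j Z + (\text{lower order in }\hat\lbar\text{-derivatives})$, combining with the already--present $\frac{2M}{r^2}\hat\lbar^j Z$ to produce the coefficient $j\cdot\tfrac{2M}{r^2}$. The Leibniz rule then generates terms $f_i^{(j-1)}\hat\lbar^i Z$ for $i=1,\dots,j-1$, where each $f_i^{(j-1)}$ is a polynomial in $\hat\lbar^{\leq j-1}$-derivatives of the radial function $\frac{2M}{r^2}$. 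Since $\hat\lbar$ acts on radial functions as $-\partial_r$ (because $\lbar r=-(1-\mu)$), each derivative improves $r$-decay by one power, so starting from $\frac{2M}{r^2}$ all such expressions are bounded by $c^{(j-1)}r^{-3}$, giving~\eqref{eq:fbound}. Commutation with $\partial_{\leo}^I T^k$, as in Step~2, is trivial and yields \eqref{eq:rcommuted}; the derivation of \eqref{eq:scommuted} is identical with $(W,\sigma,R_\sigma)$ replacing $(Z,\rho,R_\rho)$.

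Finally, for the $\lbar^j$-commuted equations~\eqref{eq:lbarcommut}, I would apply $\lbar^j$ directly to~\eqref{eq:ZW} written as $L\lbar Z=\frac{1-\mu}{r^2}\sdelta_{\mathbb{S}^2}Z+R_\rho$; since $[L,\lbar]=0$ and $[\lbar,\sdelta_{\mathbb{S}^2}]=0$, the only nontrivial contribution comes from the Leibniz rule applied to the radial prefactor $\frac{1-\mu}{r^2}$, producing precisely $\sum_{a+b=j}\lbar^a\!\left(\frac{1-\mu}{r^2}\right)\sdelta_{\mathbb{S}^2}\lbar^b Z$ (with appropriate binomial multiplicities that are absorbed into the schematic sum). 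Commuting with $\partial_{\leo}^I T^k$ yields~\eqref{eq:lbarcommut}. The only obstacle worth highlighting is the bookkeeping in the $\hat\lbar$-induction --- in particular, verifying that the coefficient of $\hat\lbar^j Z$ is exactly $j\cdot\frac{2M}{r^2}$ and that no other top-order terms are produced --- but this is a direct application of the commutator identity $[L,\hat\lbar]=-\tfrac{2M}{r^2}\hat\lbar$ combined with the Leibniz rule, so the argument is mechanical once that identity is in hand.
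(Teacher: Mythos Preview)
Your proposal is correct and follows essentially the same approach as the paper's proof: both establish \eqref{eq:zdot}--\eqref{eq:wdot} by Killing commutation, prove \eqref{eq:rcommuted}--\eqref{eq:scommuted} by induction on $j$ via the commutator identity $[\hat\lbar,L]=\tfrac{2M}{r^2}\hat\lbar$ (the paper writes this as $(1-\mu)^{-1}\lbar L f = L(\hat\lbar f)+\tfrac{2M}{r^2}\hat\lbar f$), and obtain \eqref{eq:lbarcommut} by straightforward Leibniz expansion. Your observation that $\hat\lbar$ acts as $-\partial_r$ on radial functions, so that the lower-order coefficients $f_i^{(j-1)}$ inherit $r^{-3}$ decay, is exactly the mechanism behind \eqref{eq:fbound}; the paper leaves this implicit.
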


\begin{proof}
Equations (\ref{eq:zdot}) and (\ref{eq:wdot}) follow immediately because the $\Omega_i$'s and $T$ are Killing fields.

The second part of the Lemma can be easily proved by induction. We focus on the equation satisfied by $Z$, the other equation being analogous, and we furthermore restrict to the case $I = k = 0$ ($I$ and $k$ correspond to Killing vector fields which leave the structure of the equation unvaried).

The case $j = 1$ is evident (we adopt the convention that a sum whose upper limit is lower than the lower limit is the empty sum). Let us assume now that the conclusion holds for $j \geq 1$, and derive it for $j+1$. We take the $\lbar$ derivative of (\ref{eq:rcommuted}), and subsequently multiply both sides by $(1-\mu)^{-1}$. We have
\begin{align*}
&0 = (1-\mu)^{-1}\lbar L( \hat Z_{,0,j,0}) + (1-\mu)^{-1}j \lbar\left( \frac{2M}{r^2}\right) \hat Z_{,0,j,0} \\ &
\quad + j (1-\mu)^{-1} \frac{2M}{r^2} \lbar\left(\hat Z_{,0,j,0}\right)- {\hat \lbar}^{j}(\sdelta \hat Z_{,0,0,0})-{\hat \lbar}^{j}((1-\mu)^{-1}R_\rho) + \sum_{i = 1}^{j} g_{i}^{(j)} \hat Z_{,0,i,0}\\ &
= L \left((1-\mu)^{-1}\lbar( \hat Z_{,0,j,0})\right) + j {\hat \lbar}\left( \frac{2M}{r^2}\right) \hat Z_{,0,j,0} 
+ j \frac{2M}{r^2} {\hat \lbar}\left(\hat Z_{,0,j,0}\right)+ \frac{2M}{r^2} {\hat \lbar}\left(\hat Z_{,0,j,0}\right)\\ &
\quad - {\hat \lbar}^{j}(\sdelta \hat Z_{,0,0,0})-{\hat \lbar}^{j}((1-\mu)^{-1}R_\rho) + \sum_{i = 1}^{j} g_{i}^{(j)} \hat Z_{,0,i,0}\\ &
=  L (\hat Z_{,0,j+1,0} )
+ (j+1) \frac{2M}{r^2} \hat Z_{,0,j+1,0}- {\hat \lbar}^{j}(\sdelta \hat Z_{,0,0,0})-{\hat \lbar}^{j}((1-\mu)^{-1}R_\rho) + \sum_{i = 1}^{j} f_{i}^{(j)} \hat Z_{,0,i,0},
\end{align*}
with the appropriate definition of $g_i^{(j)}$ and $f_i^{(j)}$. It is evident that the $f^{(j)}_i$'s satisfy the bound (\ref{eq:fbound}). It is then straightforward to extend the claim to the case $I, k \neq 0$. The same reasoning holds for $\hat W$.

Finally, the proof of relations (\ref{eq:lbarcommut}) is straightforward.
\end{proof}

\section{Morawetz estimates on \texorpdfstring{$Z$}{Z} and \texorpdfstring{$W$}{W}: first step}\label{sec:premora}

Following~\cite{linearized}, we wish to obtain Morawetz estimates for solutions to the \fackip Equation, as well as a hierarchy of $r^p$-weighted estimates. We emphasize that, in this section, we will \emph{not} bound the nonlinear part in the right hand side of the \fackip Equations. We will prove estimates on the nonlinear right hand sides in Section~\ref{sec:nonlstruct}.

\subsection{Energy conservation for $Z$ and $W$}

\begin{lemma}\label{lem:encons}
Let $k \in \N_{\geq 0}$, and let $I \in \mathscr{I}^b_{\leo}$. There exists a constant $C>0$ such that the following holds. Suppose that $\dot Z_{I,0,k}$ and $\dot W_{I,0,k}$ are solutions to the following:

\begin{align}\label{eq:zdot1}
L \lbar \dot Z_{,I,0,k} - (1-\mu) \sdelta \dot Z_{,I,0,k} &= \partial^I_{\leo} T^k (R_\rho), \\ \label{eq:wdot1}
L \lbar \dot W_{,I,0,k} - (1-\mu) \sdelta \dot W_{,I,0,k} &= \partial^I_{\leo} T^k (R_\sigma)
\end{align}
on the region $\regio{}{} := \regio{t_1^*}{t_2^*}$. Suppose also that $b \geq 1$.
Consider
 $\tilde u \geq t_2^* - 2M \log M$, and recall the notation for the outgoing cones in $(u,v,\theta, \varphi)$ coordinates (see~\ref{sec:not:regfol}):
 $$
 \overline{C}_{\tilde u} = \{u = \tilde u\}.
 $$ 
 In these conditions, there holds\footnote{Note that the condition $\tilde u \geq t_2^* - 2M \log M$ implies that the surface $\{u = \tilde u \} \cap \mathcal{R}$ lies in the region $\{r \leq 3M\}$.}
\begin{align}
&\int_{\Sigma_{t_2^*}\cap\{u \leq  \tilde u\}}(|L \dot Z_{,I,0,k}|^2+(1-\mu)^{-1}|\lbar \dot Z_{,I,0,k}|^2+|\snabla \dot Z_{,I,0,k}|^2 )r^{-2}\de \Sigma_{t^*} \nonumber \\ &
+ \int_{\conplus_{\tilde u} \cap \regio{t_1^*}{t_2^*}} |L \dot Z_{,I, 0,k}|^2 + (1-\mu)|\snabla \dot Z_{,I, 0,k}|^2 \desphere \de v \nonumber \\ &
- C \int_{\Sigma_{t_1^*}} (|L \dot Z_{,I,0,k}|^2+(1-\mu)^{-1}|\lbar \dot Z_{,I,0,k}|^2+|\snabla \dot Z_{,I,0,k}|^2 )r^{-2}\de \Sigma_{t^*} \nonumber \\ &
\leq C \int_{\regio{t_1^*}{t_2^*} \cap \{ 2M \leq r \leq 4M\}} |\dot Z_{,I,0,k}| |\partial^I_{\leo} T^{k+1} (R_\rho)| \desphere \de u \de v \label{eq:enbdur} \\
&+C \int_{\regio{t_1^*}{t_2^*} \cap \{ r \geq 4M\}} |\dot Z_{,I,0,k+1}| |\partial^I_{\leo} T^{k} (R_\rho)| \desphere \de u \de v \nonumber \\ &
+ C \int_{\regio{t_1^*}{t_2^*} \cap \{2M \leq  r  \leq 3 M\}}(|\partial^I_{\leo}  {\hat \lbar} T^{k} (R_\rho)|^2 + |\partial^I_{\leo}  T^{k} ((1-\mu)^{-1}R_\rho)|^2)\de \text{Vol} \nonumber  \\ &
+ C \int_{\Sigma_{t_2^*} \cap \{2M \leq r \leq 4M\}} |\partial^I_{\leo}  T^{k} ((1-\mu)^{-1}R_\rho)|^2 \de \Sigma_{t^*} \nonumber \\
&+ C \int_{\Sigma_{t_1^*}\cap \{2M \leq r \leq 4M\}} |\partial^I_{\leo}  T^{k} ((1-\mu)^{-1}R_\rho)|^2 \de \Sigma_{t^*}.\nonumber 
\end{align}
Furthermore, the same inequality holds with every occurrence of $Z$ replaced by $W$ and every occurrence of $R_\rho$ replaced by $R_\sigma$.
\end{lemma}

\begin{proof}[Proof of Lemma~\ref{lem:encons}]
For simplicity, let $\dot Z := \dot Z_{I,0,k}$, $\dot W := \dot W_{I,0,k}$. We have the following relations:
\begin{equation}\label{eq:toencons}
\begin{aligned}
&\frac 1 2 \lbar |L \dot Z|^2 + \frac 1 2 L|\lbar \dot Z|^2 + \frac 1 2 (L + \lbar) \left((1-\mu)|\snabla \dot Z|^2 \right)\\
& \hspace{100pt} \stackrel{\mathbb{S}^2}{=} (T\dot Z)\partial^I_{\leo} T^k (R_\rho) \stackrel{\mathbb{S}^2}{=} T \left( \dot Z\partial^I_{\leo} T^k (R_\rho)\right) - \dot Z\partial^I_{\leo} T^{k +1}(R_\rho),\\
&\frac 1 2 \lbar |L \dot W|^2 + \frac 1 2 L|\lbar \dot W|^2 + \frac 1 2 (L + \lbar) \left((1-\mu)|\snabla \dot W|^2 \right) \\
& \hspace{100pt} \stackrel{\mathbb{S}^2}{=} (T\dot W)\partial^I_{\leo} T^k (R_\sigma) \stackrel{\mathbb{S}^2}{=} T \left( \dot W\partial^I_{\leo} T^k (R_\sigma)\right) - \dot W\partial^I_{\leo} T^{k +1}(R_\sigma).
\end{aligned}
\end{equation}
Here, as usual $\stackrel{\mathbb{S}^2}{=}$ meant that the equality is valid upon integration on the sphere $\mathbb{S}^2$ with respect to the form $\desphere$.

We now proceed to integrate (\ref{eq:toencons}) on $\regio{t_1^*}{t_2^*} \cap \{u \leq \tilde u\}$ with respect to the form $\desphere \de u \de v$. Let us focus on the case of $Z$, the reasoning for $W$ being analogous. Recall that we set, for simplicity, just for this proof, $\regio{}{} := \regio{t_1^*}{t_2^*}$.

We have that 
\begin{align*}
&\int_{\regio{}{} \cap\{u\leq \tilde u\}} \lbar |L \dot Z|^2 + L|\lbar \dot Z|^2 + (L + \lbar) \left((1-\mu)|\snabla \dot Z| \right)\desphere \de u \de v \\ &\geq \int_{\Sigma_{t_2^*}\cap\{u\leq \tilde u\}}(|L \dot Z|^2+(1-\mu)^{-1}|\lbar \dot Z|^2+|\snabla \dot Z|^2 )r^{-2}\de \Sigma_{t^*} \\ &
+ \int_{\conplus_{\tilde u} \cap \regio{}{}} |L \dot Z_{,I, 0,k}|^2 + (1-\mu)|\snabla \dot Z_{,I, 0,k}|^2 \desphere \de v\\ &
- C \int_{\Sigma_{t_1^*}} (|L \dot Z|^2+(1-\mu)^{-1}|\lbar \dot Z|^2+|\snabla \dot Z|^2 )r^{-2}\de \Sigma_{t^*}
\end{align*}

\begin{remark}
Note that, in the previous display, we are not discarding the boundary terms. We therefore obtain two boundary terms arising from two space-like surfaces (at $\Sigma_{t_1^*} $ and at $\Sigma_{t_2^*}$), and one arising from a null piece at $u = \tilde u$.
\end{remark}
On the other hand,
\begin{align*}
&\int_{\regio{}{}\cap \{u \leq \tilde u\}}(T\dot Z)\partial^I_{\leo} T^k (R_\rho) \desphere \de u \de v\\ 
&\qquad \leq \underbrace{\int_{\regio{}{}\cap \{u \leq \tilde u\}\cap \{2M \leq r \leq 4M \} }\left(T \left( \dot Z\partial^I_{\leo} T^k (R_\rho)\right) - \dot Z\partial^I_{\leo} T^{k +1}(R_\rho)\right)\desphere \de u \de v}_{(i)} \\
&\qquad + \underbrace{\int_{\regio{}{}\cap \{u \leq \tilde u\}\cap \{r \geq 4M \}} (T\dot Z)\partial^I_{\leo} T^k (R_\rho) \desphere\de u \de v}_{(ii)}.
\end{align*}
Now, we have, from the Cauchy--Schwarz inequality and the Poincar\'e inequality (Lemma~\ref{lem:poinca} in the Appendix, note that $b \geq 1$),
\begin{align*}
&(i) \leq C\int_{\Sigma_{t_2^*} \cap \{2M \leq r \leq 4M\}}(1-\mu)^{-1} |\dot Z| |\partial_{\leo}^I T^k R_\rho| \de \Sigma_{t^*}\\ &
+C \int_{\Sigma_{t_1^*} \cap \{2M \leq r \leq 4M\}}(1-\mu)^{-1} |\dot Z| |\partial_{\leo}^I T^k R_\rho| \de \Sigma_{t^*}
\\
&+\int_{\conplus_{\tilde u} \cap \regio{}{}}|\dot Z| |\partial^I_{\leo} T^l R_\rho| \desphere \de v\\ &
\leq \frac 1 4 \int_{\Sigma_{t_2^*} \cap \{2M \leq r \leq 4M\}}|\snabla \dot Z| r^{-2}\de \Sigma_{t^*} +\frac 1 4 \int_{\Sigma_{t_1^*} \cap \{2M \leq r \leq 4M\}}|\snabla \dot Z| r^{-2}\de \Sigma_{t^*} \\
&+ \frac 1 4 \int_{\conplus_{\tilde u} \cap \regio{}{}} (1-\mu) |\dot Z|^2 \desphere \de v\\ &
+C\int_{\Sigma_{t_2^*} \cap \{2M \leq r \leq 4M\}} |\partial^I_{\leo}  T^{k} ((1-\mu)^{-1}R_\rho)|^2 \de \Sigma_{t^*} \\
&+ C \int_{\Sigma_{t_1^*}\cap \{2M \leq r \leq 4M\}} |\partial^I_{\leo}  T^{k} ((1-\mu)^{-1}R_\rho)|^2 \de \Sigma_{t^*}\\ &
+ C  \int_{\conplus_{\tilde u} \cap \regio{}{}} (1-\mu)^{-1} |\partial^I_{\leo} T^k R_\rho|^2\desphere \de v.
\end{align*}
Finally, the fundamental theorem of calculus implies
\begin{align*}
& \int_{\conplus_{\tilde u} \cap \regio{}{}} (1-\mu)^{-1} |\partial^I_{\leo} T^k R_\rho|^2 \desphere \de v \\ &
 \leq  C\int_{\Sigma_{t_2^*} \cap \{2M \leq r \leq 4M\}} |\partial^I_{\leo}  T^{k} ((1-\mu)^{-1}R_\rho)|^2 \de \Sigma_{t^*} \\ &
+ C \int_{\regio{t_1^*}{t_2^*} \cap \{2M \leq  r  \leq 3 M\}}(|\partial^I_{\leo}  {\hat \lbar} T^{k} (R_\rho)|^2 + |\partial^I_{\leo}  T^{k} ((1-\mu)^{-1}R_\rho)|^2)\de \text{Vol}.
\end{align*}
To conclude, $(ii)$ is estimated trivially. We obtain the claim.
\end{proof}

\subsection{Morawetz estimate for $\dot Z$, $\dot W$}
In this section we prove a Morawetz estimate for the quantities $\dot Z_{,I,0,k}$ and $\dot W_{,I,0,k}$ which is degenerate at the event horizon $\mathcal{H}^+$ and at the photon sphere $r = 3M$.

\begin{proposition}\label{prop:mora}
Let $k \in \N_{\geq 0}$, and let $I \in \mathscr{I}^b_{\leo}$, with $b \geq 1$.
There exists a constant $C > 0$ such that the following holds. Suppose that $\dot Z_{,I,0,k}$ and $\dot W_{,I,0,k}$ are solutions to the following:
\begin{align}\label{eq:zdot1mor}
L \lbar \dot Z_{,I,0,k} - (1-\mu) \sdelta \dot Z_{,I,0,k} &= \partial^I_{\leo} T^k (R_\rho), \\ \label{eq:wdot1mor}
L \lbar \dot W_{,I,0,k} - (1-\mu) \sdelta \dot W_{,I,0,k} &= \partial^I_{\leo} T^k (R_\sigma)
\end{align}
in the region $\mathcal{R} := \regio{t_1^*}{t_2^*}$.
For simplicity, let $\dot Z := \dot Z_{I,0,k}$, $\dot W := \dot W_{I,0,k}$. Then, there holds
\begin{align}
 &\int_{\regio{t_1^*}{t_2^*} } 
\left\{\frac 1 {r^2}|L \dot Z - \lbar \dot Z|^2+ \frac{(r-3M)^2}{r^3}\left(|\snabla \dot Z|^2+ \frac 1 r |L \dot Z + \lbar \dot Z|^2\right)+\frac 1 {r^3} |\dot Z|^2\right\} r^{-2}\de \text{Vol}\nonumber \\ &
\leq C	\int_{\regio{t_1^*}{t_2^*}  } (r^2(1-\mu)^{-1} |\partial^I_{\leo} T^k (R_\rho)|^2 + r^{-1}(1-\mu) |\partial^I_{\leo} T^k (R_\rho)|^2 )\desphere\de u \de v \nonumber\\ &
+C \int_{\regio{t_1^*}{t_2^*} \cap \{ 2M \leq r \leq 4M\}} |\dot Z_{,I,0,k}| |\partial^I_{\leo} T^{k+1} (R_\rho)| \desphere\de u \de v \nonumber \\
&+C \int_{\regio{t_1^*}{t_2^*} \cap \{ r \geq 4M\}} |\dot Z_{,I,0,k+1}| |\partial^I_{\leo} T^{k} (R_\rho)| \desphere\de u \de v\nonumber\\ &
+ C \int_{\regio{t_1^*}{t_2^*} \cap \{2M \leq  r  \leq 3 M\}}(|\partial^I_{\leo}  {\hat \lbar} T^{k} (R_\rho)|^2 + |\partial^I_{\leo}  T^{k} ((1-\mu)^{-1}R_\rho)|^2)\de \text{Vol} \label{fullen} \\ &
+ C \int_{\Sigma_{t_2^*} \cap \{2M \leq r \leq 4M\}} |\partial^I_{\leo}  T^{k} ((1-\mu)^{-1}R_\rho)|^2 \de \Sigma_{t^*} \nonumber\\
&+ C \int_{\Sigma_{t_1^*}\cap \{2M \leq r \leq 4M\}} |\partial^I_{\leo}  T^{k} ((1-\mu)^{-1}R_\rho)|^2 \de \Sigma_{t^*}\nonumber\\ &
+  C \int_{\Sigma_{t_1^*}} (|L \dot Z|^2+(1-\mu)^{-1}|\lbar \dot Z|^2+|\snabla \dot Z|^2 )r^{-2}\de \Sigma_{t^*}.\nonumber
\end{align}
We furthermore have the same inequality for $\dot W$, with $R_\sigma$ replacing $R_\rho$.
\end{proposition}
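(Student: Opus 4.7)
The plan is to apply the standard Morawetz multiplier technique for the Regge--Wheeler--type Equations \eqref{eq:zdot1mor} and \eqref{eq:wdot1mor}, which are scalar wave equations on Schwarzschild with the potential-free wave operator $L\lbar - (1-\mu)\sdelta$ acting on $\dot Z$ and $\dot W$. Since the two equations have identical structure, I would carry out the argument only for $\dot Z$; the reasoning for $\dot W$ is identical upon replacing $R_\rho$ with $R_\sigma$. Note that the factor $1/r^2$ inside the bulk integrand is the volume factor from rewriting $\de u \de v \desphere$ in terms of $\de \text{Vol}$, so in $(u,v)$-integrals the weights one expects are one power of $r$ higher.

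First I would build a multiplier of the form
\begin{equation*}
X(\dot Z) := F(r)(L - \lbar)\dot Z + G(r)\dot Z,
\end{equation*}
test the equation by $X(\dot Z)$, and integrate over $\regio{t_1^*}{t_2^*}$ against $\de u\,\de v\,\desphere$. A standard choice is $F(r) = F_0(r^*)$ with $F_0'(r^*) \sim (r-3M)^2/r^5$ near $r = 3M$, $F_0$ bounded, and $F_0$ vanishing at $\mathcal{H}^+$; and $G(r) \sim (r-3M)/r^3$ with an adjustment so that the combined bulk produces a non-negative quadratic form in $(L\dot Z - \lbar \dot Z, L\dot Z + \lbar \dot Z, \snabla \dot Z)$ with the weights prescribed by \eqref{fullen}. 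After integration by parts the scalar multiplier $G(r)\dot Z$ contributes a bulk term of the form $\tfrac{1}{2}L\lbar G \cdot |\dot Z|^2$ (up to lower-order pieces), which, after an application of the Poincar\'e inequality on $\mathbb{S}^2$ (available since $b \geq 1$, cf.~Lemma~\ref{lem:poinca}) and a Hardy-type estimate near spatial infinity, produces the $r^{-3}|\dot Z|^2$ term.

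Second, I would collect the boundary terms. The contribution on $\Sigma_{t_2^*}$ has a sign consistent with the choice of $X$ being future-directed away from $\mathcal{H}^+$ and can be discarded; the contribution on $\Sigma_{t_1^*}$ produces exactly the last line of the right-hand side of \eqref{fullen}. The boundary at $\mathcal{H}^+$ vanishes thanks to the vanishing of $F$ there and the $(1-\mu)$-degeneracy in the lapse measure; the contribution at $\mathcal I^+$ vanishes by the asymptotic decay assumptions implicit in the statement. If needed, the flux on $\Sigma_{t_2^*}$ near $\mathcal{H}^+$ can be absorbed using the energy identity \eqref{eq:enbdur} from Lemma~\ref{lem:encons}, which is why the last lines of \eqref{fullen} exactly match the right-hand side of the energy estimate there.

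Finally, the source term $\partial^I_{\leo} T^k(R_\rho)$ tested against $X(\dot Z)$ is handled by Cauchy--Schwarz, splitting into $\{r \leq 4M\}$ and $\{r \geq 4M\}$. Near infinity, the weight $F(r)$ is $O(1)$, so $|X(\dot Z)|^2$ can be absorbed by the bulk Morawetz integral at the price of $\int r^2 (1-\mu)^{-1}|\partial^I_{\leo} T^k R_\rho|^2\, \de u\,\de v$, yielding the first RHS term in \eqref{fullen}; near the horizon, one pairs with the $r \leq 4M$ terms already present in Lemma~\ref{lem:encons}. The main obstacle is purely algebraic: tuning $F$ and $G$ simultaneously so that the $3\times 3$ coefficient matrix in the bulk quadratic form is pointwise non-negative with the prescribed weights, in particular reproducing the $(r-3M)^2$ degeneracy in the angular and $T$-directions (trapping at the photon sphere) while keeping the $|L\dot Z - \lbar \dot Z|^2$ term non-degenerate; this computation follows the scheme already established for the scalar wave equation on Schwarzschild (cf.~\cite{Blue2006, redshift, newmihalis}), which we adopt essentially verbatim since the principal part of \eqref{eq:zdot1mor} agrees with $\square_g$ up to the usual conformal factor $r^2$.
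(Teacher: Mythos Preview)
Your overall strategy matches the paper's: multiply the equation by a Morawetz vector field $f(r)(L-\lbar)\dot Z$ plus a scalar term, integrate, use Poincar\'e (via $b\geq 1$) to handle the zeroth-order term, and control the future boundary by the $T$-energy estimate of Lemma~\ref{lem:encons}. Two points need correction.

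First, the boundary contribution on $\Sigma_{t_2^*}$ does \emph{not} have a sign. The vector field $F(r)(L-\lbar)=2F\partial_{r^*}$ is spacelike, so its energy flux through a spacelike slice is indefinite; it cannot be ``discarded'' anywhere, not just near $\mathcal H^+$. The paper handles this by integrating on $\regio{t_1^*}{t_2^*}\cap\{u\leq\tilde u\}$ and bounding \emph{all} the future boundary terms (on $\Sigma_{t_2^*}$ and on $\conplus_{\tilde u}$) by the right-hand side of \eqref{eq:enbdur}, then sending $\tilde u\to\infty$. You mention Lemma~\ref{lem:encons} only as a fallback ``near $\mathcal H^+$''; in fact it is the primary mechanism, and it is needed globally on the future slice. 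This also explains why every term from the right-hand side of \eqref{eq:enbdur} reappears verbatim in \eqref{fullen}.

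Second, your multiplier description is inverted relative to the paper's. The paper takes $f(r)=(1+M/r)(1-3M/r)$ and scalar multiplier $f'$: here $f$ (not $f'$) vanishes linearly at $r=3M$, while $f'>0$ everywhere on $r\geq 2M$, so the coefficient of $|L\dot Z-\lbar\dot Z|^2$ is non-degenerate (giving the $r^{-2}$ weight directly), and the angular coefficient $-4f\,\partial_{r^*}((1-\mu)/r^2)$ vanishes quadratically at $r=3M$. Your choice $F_0'\sim(r-3M)^2$ would instead degenerate the $|L\dot Z-\lbar\dot Z|^2$ term at trapping. The degenerate $|L\dot Z+\lbar\dot Z|^2$ term in \eqref{fullen} is not produced by this first multiplier at all: the paper obtains it in a separate second step, re-integrating identity \eqref{en1} alone with a different monotone $f$ vanishing to third order at $r=3M$, and adding the result to the estimate already obtained.
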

\begin{proof}
Let us recall that, according to the definition in Section~\ref{sec:not:geo}, $\slashed{g}$ is the restriction of the ambient metric $g$ to the spheres of constant $r$-coordinate. Furthermore, recall the fact that $\snabla_L \gbar = 0, \snabla_\lbar \gbar = 0$ (see Lemma~\ref{lem:projectcomp} in the Appendix).
We consider the following identities, which can be checked via direct calculation using the previous fact about the covariant derivative of $\slashed g$,
\begin{equation}\label{en1}
\begin{aligned}
	&4 f(r) (L \dot Z - \lbar \dot Z) \partial^I_{\leo} T^k (R_\rho)
	\\ &
	\stackrel{\mathbb{S}^2}{=}(\lbar + L) \left\{f(|L \dot Z|^2-|\lbar \dot Z|^2)\right\} 
	+ (\lbar - L) \left\{f (|L \dot Z|^2+|\lbar \dot Z|^2	- 2 \frac{1-\mu}{r^2} |r \snabla \dot Z|^2 )\right\} \\ &
	+ 2f'(|L \dot Z|^2 + |\lbar \dot Z|^2)-4 \partial_{r^\star}\left(f\frac{1-\mu}{r^2} \right)|r \snabla \dot Z|^2.
\end{aligned}
\end{equation}
We also have
\begin{equation}\label{en2}
\begin{aligned}
	&4f' \dot Z \ \partial^I_{\leo} T^k (R_\rho)  \stackrel{\mathbb{S}^2}{=}(\lbar + L) \left\{f' \dot Z \cdot (\lbar + L) \dot Z \right\}\\ &
	-(\lbar - L) \left(f' \dot Z \cdot (\lbar - L)\dot Z+f''|\dot Z|^2 \right)\\ &
	-2f'''|\dot Z|^2-4f'\lbar \dot Z \cdot L \dot Z + 4 f' \left( \frac{1-\mu}{r^2} |r\snabla \dot Z|^2\right).
\end{aligned}
\end{equation}
Let us now add the previous Equations (\ref{en1}) and (\ref{en2}), to get
\begin{equation}\label{mora}
\begin{aligned}
	&4 f (L \dot Z - \lbar \dot Z) \partial^I_{\leo} T^k (R_\rho) + 4f' \dot Z \ \partial^I_{\leo} T^k (R_\rho) \\ &
	\stackrel{\mathbb{S}^2}{=} 
	(\lbar + L) \left\{f(|L \dot Z|^2-|\lbar \dot Z|^2) + f' \dot Z \cdot (\lbar + L) \dot Z \right\} \\ &
	+ (\lbar - L) \left\{f (|L \dot Z|^2+|\lbar \dot Z|^2	- 2 \frac{1-\mu}{r^2} |r \snabla \dot Z|^2 ) - f' \dot Z \cdot (\lbar - L)\dot Z - f''|\dot Z|^2 \right\}\\ &
	+ 2f'(|L \dot Z - \lbar \dot Z|^2)-4 f \partial_{r^\star}\left(\frac{1-\mu}{r^2} \right)|r \snabla \dot Z|^2 -2f'''|\dot Z|^2.
\end{aligned}
\end{equation}
Let us now choose
\begin{equation}\label{eq:fchoice}
f(r) := \left(1+\frac{M}{r} \right) \left(1-\frac{3M}{r} \right).
\end{equation} 
We now proceed to integrate Equation (\ref{mora}) on the spacetime region $\regio{t_1^*}{t_2^*} \cap \{u \leq \tilde u\}$ with respect to the form $\de u \de v$. The boundary terms relative to $\Sigma_{t_2^*}$ and to the outgoing surface $\conplus_{\tilde u} \cap \regio{t_1^*}{t_2^*}$ can be estimated by means of inequality (\ref{eq:enbdur}).

We now would like to check that the bulk term is positive. 
Due to the fact that $f'(r)>0$ for $r \geq 2M$, we see that the term in $|L \dot Z|^2$ is positive. If $b \geq 1$, by Lemma~\ref{lem:poinca}, we note that for the bulk term (in $\dot Z$ and $\snabla \dot Z$) to be positive, it suffices that there exists a $c > 0$ such that
\begin{equation}\label{posicon}
	-2 \frac{(\frac 2 {r^2} (1-\mu))'}{1-\mu}f - \frac{f'''}{1-\mu} \geq \frac c {r^3},
\end{equation}
for some positive number $c$. Let us calculate, as in~\cite{linearized},
\begin{align*}
 f'&= (1-\mu)\left(\frac{2M}{r^2}+ \frac{6M^2}{r^3}\right), \\
 f'' &= (1-\mu)\partial_r(f') = \frac{2 M \left(-48 M^3+30 M^2 r+M r^2-2 r^3\right)}{r^6},\\
 f''' &= (1-\mu)\partial_r(f'') = \frac{4 M (r-2 M) \left(144 M^3-75 M^2 r-2 M r^2+3 r^3\right)}{r^8}.
\end{align*}
 Multiplying (\ref{posicon}) by $-\frac 1 4 r^3$, we obtain that inequality (\ref{posicon}) is achieved if and only if
\begin{equation*}
	\frac{144 M^4-93 M^3 r-8 M^2 r^2+13 M r^3-2 r^4 }{r^4}< c,
\end{equation*}
which is the case for $r \geq 2M$. 

We therefore obtain the following estimate, making use of the positivity of the angular terms:

\begin{equation}\label{moraerr}
\begin{aligned}
& \int_{\regio{t_1^*}{t_2^*} \cap \{u \leq \tilde u\} } 
\left\{\frac 1 {r^2}|L \dot Z - \lbar \dot Z|^2+ \frac{(r-3M)^2}{r^3}|\snabla \dot Z|^2+\frac 1 {r^3} |\dot Z|^2\right\} r^{-2}\de \text{Vol} \\ &
\leq C	\int_{\regio{t_1^*}{t_2^*} \cap \{u \leq \tilde u\} } (|L \dot Z - \lbar \dot Z| |\partial^I_{\leo} T^k (R_\rho)| + (1-\mu)r^{-2} |\dot Z| \ |\partial^I_{\leo} T^k (R_\rho)| )\desphere\de u \de v\\ &
+C \int_{\regio{t_1^*}{t_2^*} \cap \{ 2M \leq r \leq 4M\}} |\dot Z_{,I,0,k}| |\partial^I_{\leo} T^{k+1} (R_\rho)|\desphere \de u \de v\\
&+C \int_{\regio{t_1^*}{t_2^*} \cap \{ r \geq 4M\}} |\dot Z_{,I,0,k+1}| |\partial^I_{\leo} T^{k} (R_\rho)| \desphere\de u \de v\\ &
+ C \int_{\regio{t_1^*}{t_2^*} \cap \{2M \leq  r  \leq 3 M\}}(|\partial^I_{\leo}  {\hat \lbar} T^{k} (R_\rho)|^2 + |\partial^I_{\leo}  T^{k} ((1-\mu)^{-1}R_\rho)|^2)\de \text{Vol}\\ &
+ C \int_{\Sigma_{t_2^*} \cap \{2M \leq r \leq 4M\}} |\partial^I_{\leo}  T^{k} ((1-\mu)^{-1}R_\rho)|^2 \de \Sigma_{t^*} \\
&+ C \int_{\Sigma_{t_1^*}\cap \{2M \leq r \leq 4M\}} |\partial^I_{\leo}  T^{k} ((1-\mu)^{-1}R_\rho)|^2 \de \Sigma_{t^*}\\ &
+  C \int_{\Sigma_{t_1^*}} (|L \dot Z|^2+(1-\mu)^{-1}|\lbar \dot Z|^2+|\snabla \dot Z|^2 )r^{-2}\de \Sigma_{t^*}.
\end{aligned}
\end{equation}
Notice that the last six terms in the previous display arise from estimating the future boundary terms on $\Sigma_{t_2^*}$ and on $\conplus_{\tilde u}$ using the $L^2$ estimates~\eqref{eq:enbdur}.

We can recover the missing derivative by integrating Equation~\eqref{en1} with a monotonically increasing $f$, which vanishes of third order at $r = 3M$. We obtain:

\begin{align}
& \int_{\regio{t_1^*}{t_2^*} \cap \{u \leq \tilde u\} } 
\left\{\frac 1 {r^2}|L \dot Z - \lbar \dot Z|^2+ \frac{(r-3M)^2}{r^3}\left(|\snabla \dot Z|^2+ \frac 1 r |L \dot Z + \lbar \dot Z|^2\right)+\frac 1 {r^3} |\dot Z|^2\right\} r^{-2}\de \text{Vol} \nonumber \\ &
\leq C	\int_{\regio{t_1^*}{t_2^*} \cap \{u \leq \tilde u\} } (|L \dot Z - \lbar \dot Z| |\partial^I_{\leo} T^k (R_\rho)| + (1-\mu)r^{-2} |\dot Z| \ |\partial^I_{\leo} T^k (R_\rho)| )\desphere\de u \de v \nonumber \\ &
+C \int_{\regio{t_1^*}{t_2^*} \cap \{ 2M \leq r \leq 4M\}} |\dot Z_{,I,0,k}| |\partial^I_{\leo} T^{k+1} (R_\rho)| \desphere\de u \de v \nonumber \\
&+C \int_{\regio{t_1^*}{t_2^*} \cap \{ r \geq 4M\}} |\dot Z_{,I,0,k+1}| |\partial^I_{\leo} T^{k} (R_\rho)| \desphere\de u \de v \nonumber\\ &
+ C \int_{\regio{t_1^*}{t_2^*} \cap \{2M \leq  r  \leq 3 M\}}(|\partial^I_{\leo}  {\hat \lbar} T^{k} (R_\rho)|^2 + |\partial^I_{\leo}  T^{k} ((1-\mu)^{-1}R_\rho)|^2)\de \text{Vol} \label{fullencopy} \\ &
+ C \int_{\Sigma_{t_2^*} \cap \{2M \leq r \leq 4M\}} |\partial^I_{\leo}  T^{k} ((1-\mu)^{-1}R_\rho)|^2 \de \Sigma_{t^*} \nonumber\\
&+ C \int_{\Sigma_{t_1^*}\cap \{2M \leq r \leq 4M\}} |\partial^I_{\leo}  T^{k} ((1-\mu)^{-1}R_\rho)|^2 \de \Sigma_{t^*}\nonumber\\ &
+  C \int_{\Sigma_{t_1^*}} (|L \dot Z|^2+(1-\mu)^{-1}|\lbar \dot Z|^2+|\snabla \dot Z|^2 )r^{-2}\de \Sigma_{t^*}.\nonumber
\end{align}
We first take $\tilde u \to \infty$. We then use the Cauchy--Schwarz inequality on the first term of the right hand side of the last display, absorbing the relevant terms in the left hand side, to obtain
\begin{align*}
 &\int_{\regio{t_1^*}{t_2^*}} 
\left\{\frac 1 {r^2}|L \dot Z - \lbar \dot Z|^2+ \frac{(r-3M)^2}{r^3}\left(|\snabla \dot Z|^2+ \frac 1 r |L \dot Z + \lbar \dot Z|^2\right)+\frac 1 {r^3} |\dot Z|^2\right\} r^{-2}\de \text{Vol} \\ &
\leq C	\int_{\regio{t_1^*}{t_2^*} } (r^2(1-\mu)^{-1} |\partial^I_{\leo} T^k (R_\rho)|^2 + r^{-1}(1-\mu) |\partial^I_{\leo} T^k (R_\rho)|^2 )\desphere\de u \de v\\ &
+C \int_{\regio{t_1^*}{t_2^*} \cap \{ 2M \leq r \leq 4M\}} |\dot Z_{,I,0,k}| |\partial^I_{\leo} T^{k+1} (R_\rho)|\desphere \de u \de v\\
&+C \int_{\regio{t_1^*}{t_2^*} \cap \{ r \geq 4M\}} |\dot Z_{,I,0,k+1}| |\partial^I_{\leo} T^{k} (R_\rho)| \desphere\de u \de v\\ &
+ C \int_{\regio{t_1^*}{t_2^*} \cap \{2M \leq  r  \leq 3 M\}}(|\partial^I_{\leo}  {\hat \lbar} T^{k} (R_\rho)|^2 + |\partial^I_{\leo}  T^{k} ((1-\mu)^{-1}R_\rho)|^2)\de \text{Vol}\\ &
+ C \int_{\Sigma_{t_2^*} \cap \{2M \leq r \leq 4M\}} |\partial^I_{\leo}  T^{k} ((1-\mu)^{-1}R_\rho)|^2 \de \Sigma_{t^*}\\
& + C \int_{\Sigma_{t_1^*}\cap \{2M \leq r \leq 4M\}} |\partial^I_{\leo}  T^{k} ((1-\mu)^{-1}R_\rho)|^2 \de \Sigma_{t^*}\\ &
+  C \int_{\Sigma_{t_1^*}} (|L \dot Z|^2+(1-\mu)^{-1}|\lbar \dot Z|^2+|\snabla \dot Z|^2 )r^{-2}\de \Sigma_{t^*}.
\end{align*}
This is the claim.
\end{proof}

\subsection{The redshift estimate}

Note that the Morawetz estimate~\eqref{fullen} we just proved in the previous section displayed no control over iterated derivatives in the $\hat \lbar$ direction (in fact, we were always considering the case in which $j =0$). In this section, we prove, via a redshift estimate, that we can recover iterated derivatives transversal to the event horizon $\mathcal{H}^+$, and we obtain the estimate~\eqref{eq:redsh} in terms of the bulk term appearing on the RHS of estimate~\eqref{fullen}, plus nonlinear error terms and boundary terms.

\begin{proposition}\label{prop:redsh}
Let $k, h, l \in \N_{\geq 0}$, and let $k + h \leq l$.
There exists a constant $C > 0$ such that the following holds. Suppose that $Z$ and $W$ are solutions to the following:
\begin{align}\label{eq:z2}
L \lbar Z - (1-\mu) \sdelta Z &= R_\rho, \\ \label{eq:w2}
L \lbar W - (1-\mu) \sdelta W &= R_\sigma
\end{align}
in the region $\mathcal{R} := \regio{t_1^*}{t_2^*}$.
Recall now the definition of $\hat Z$ and $\hat W$:
\begin{equation*}
\hat Z_{,I,j,k} := \partial^I_{\leo}((\hat\lbar)^j (T)^k Z), \qquad \hat W_{,I,j,k} := \partial^I_{\leo}((\hat\lbar)^j (T)^k W).
\end{equation*}
In these conditions, we have
\begin{equation}\label{eq:redsh}
\begin{aligned}
&\sum_{\substack{1 \leq |I| + j \leq h\\ I \in \iindiceo^b, b \geq 1, j \geq 1}} \left\{
\int_{\Sigma_{t_2^*}} r^{-2} |\hat Z_{,I, j, k}|^2  r^{-2} \de \Sigma_{t^*} + \int_{\regio{t_1^*}{t_2^*}}  \frac{1-\mu}{r^3}|\hat Z_{,I, j, k}|^2 \desphere \de u \de v \right\}
\\ 
& \leq C \sum_{\substack{1 \leq |J|\leq h\\ J \in \iindiceo^b, b \geq 1}} \int_{\regio{t_1^*}{t_2^*}} (1-\mu) r^{-5}  |\hat Z_{,J,0,k}|^2 \desphere \de u \de v
 +C\sum_{\substack{1 \leq |I| + j \leq h \\ I \in \iindiceo^b, b \geq 1}} \int_{\Sigma_{t_1^*}} r^{-2} |\hat Z_{,I, j, k}|^2 r^{-2} \de \Sigma_{t^*}\\ &
 +C\sum_{\substack{1 \leq |I| + j \leq h \\ I \in \iindiceo^b, b \geq 1}}\int_{\regio{t_1^*}{t_2^*} } (1-\mu)r^{-1} |\partial_{\leo}^I T^k {\hat \lbar}^{j-1}((1-\mu)^{-1}R_\rho)|^2 \desphere\de u \de v.
\end{aligned}
\end{equation}
Furthermore, we have the same inequality for $\hat W$, with $R_\rho$ replaced by $R_\sigma$.
\end{proposition}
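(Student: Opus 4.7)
The plan is an induction on $j$, the number of $\hat\lbar$-derivatives, exploiting the positive zeroth-order coefficient $j(2M/r^2)$ in the commuted transport equation~\eqref{eq:rcommuted}; this positivity for $j\geq 1$ is precisely the redshift effect at the level of commuted quantities, and is what drives the estimate. For $j=0$ (which forces $|I|\geq 1$), $\hat Z_{,I,0,k}$ coincides with $\dot Z_{,I,0,k}$, so the desired bulk and boundary control already follows from Proposition~\ref{prop:mora} combined with Lemma~\ref{lem:encons}, furnishing the base case. The argument for $\hat W$ is identical with $R_\sigma$ in place of $R_\rho$.

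For the inductive step $j\geq 1$, I would multiply the transport equation~\eqref{eq:rcommuted} by $r^{-2}\hat Z_{,I,j,k}$ and integrate over $\regio{t_1^*}{t_2^*}$ with respect to $du\,dv\,d\omega$. Since $L r = 1-\mu$, we rewrite
\begin{equation*}
r^{-2}(L\hat Z_{,I,j,k})\,\hat Z_{,I,j,k} \;=\; \tfrac12\,L\bigl(r^{-2}\hat Z_{,I,j,k}^{\,2}\bigr) + (1-\mu)r^{-3}\hat Z_{,I,j,k}^{\,2},
\end{equation*}
so integration by parts in $v$ produces the boundary fluxes $\int r^{-4}|\hat Z_{,I,j,k}|^2\,d\Sigma_{t^*}$ on $\Sigma_{t_1^*},\Sigma_{t_2^*}$, while the redshift term contributes $2jM\,r^{-4}\hat Z_{,I,j,k}^{\,2}$, giving a bulk integrand $[(1-\mu)r^{-3}+2jM\,r^{-4}]\,\hat Z_{,I,j,k}^{\,2}$; the latter is non-degenerate at $\mathcal{H}^+$ for $j\geq 1$, and dropping the $r^{-4}$ piece recovers the left-hand side of~\eqref{eq:redsh}.

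The source terms on the right-hand side of~\eqref{eq:rcommuted} are handled as follows. Commuting $\hat\lbar$ past $\sdelta$ using $[\hat\lbar,r^{-2}]=2r^{-3}$ and $[\hat\lbar,\sdelta_{\mathbb{S}^2}]=0$, the Laplacian term expands as $\sum_{a=0}^{j-1}c_a\,r^{-a}\,\sdelta\hat Z_{,I,j-1-a,k}$; integration by parts on $\mathbb{S}^2$ converts this into a sum of products $\snabla\hat Z_{,I,j-1-a,k}\cdot\snabla\hat Z_{,I,j,k}$, and Cauchy--Schwarz with a small constant absorbs $|\snabla\hat Z_{,I,j,k}|^2$ into the non-degenerate redshift bulk, while the factors with $j'=j-1-a<j$ are controlled by the inductive hypothesis. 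The lower-order contributions $\sum_{i=1}^{j-1}f_i^{(j-1)}\hat Z_{,I,i,k}$ are absorbed in the same way using the bound $|f_i^{(j-1)}|\leq C r^{-3}$ from~\eqref{eq:fbound}, and the nonlinear source gives, after Cauchy--Schwarz, the last term on the right-hand side of~\eqref{eq:redsh}.

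The main obstacle is that Cauchy--Schwarz on the Laplacian term produces $|\snabla\hat Z_{,I,j,k}|^2$, whose angular commutation raises the total order to $|I|+j+1$, one above the target $h$ at the boundary case $|I|+j=h$. To close the estimate, I would prove the statement simultaneously at all total orders up to $h$ with an auxiliary angular derivative, commuting~\eqref{eq:rcommuted} by a further rotation Killing field so that the non-degenerate redshift term provides the requisite $r^{-4}|\hat Z_{,I\cup\{\Omega\},j,k}|^2$ control to absorb the top-order angular contribution, while Proposition~\ref{prop:mora} at angular order $h+1$ supplies the compatible base-case bound that closes the loop.
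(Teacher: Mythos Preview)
Your inductive scheme and identification of the redshift coefficient $j(2M/r^2)$ are on target, and your handling of terms $(ii)$ and $(iii)$ via Cauchy--Schwarz matches the paper. There are two genuine gaps, though.

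\textbf{The base case is a misreading.} The $j=0$ bulk integral is not something Proposition~\ref{prop:redsh} controls; it appears on the \emph{right-hand side} of~\eqref{eq:redsh} (with the weaker $r^{-5}$ weight) as an error to be absorbed later by the Morawetz estimate in Proposition~\ref{prop:moradef}. So no appeal to Proposition~\ref{prop:mora} or Lemma~\ref{lem:encons} is needed here, and indeed Proposition~\ref{prop:redsh} has no smallness or bootstrap hypotheses that would license such an appeal. The induction starts at $j=1$, with the $j=0$ quantities left as data on the right.

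\textbf{The Laplacian term cannot be handled by Cauchy--Schwarz.} You correctly observe that integrating $\hat Z_{,I,j,k}\cdot \hat\lbar^{j-1}(\sdelta\hat Z_{,I,0,k})$ by parts on $\mathbb{S}^2$ produces $\snabla\hat Z_{,I,j,k}\cdot\snabla\hat Z_{,I,j-1,k}$ at top order, and that Cauchy--Schwarz then generates $|\snabla\hat Z_{,I,j,k}|^2$, which the redshift bulk does not control. Your proposed fix (commute by one more rotation and invoke Morawetz at level $h+1$) steps outside the hypotheses of the proposition and does not close. The paper's key observation is that this product is a \emph{perfect $\lbar$-derivative}: since $\hat Z_{,I,j,k}=\hat\lbar\,\hat Z_{,I,j-1,k}$ and $[\snabla_{\hat\lbar},\snabla_{\mathbb{S}^2}]=0$, one has
\[
\snabla\hat Z_{,I,j,k}\cdot\snabla\hat Z_{,I,j-1,k}
\;=\;(\snabla_{\hat\lbar}\snabla_{\mathbb{S}^2}\hat Z_{,I,j-1,k})\cdot\snabla_{\mathbb{S}^2}\hat Z_{,I,j-1,k}
\;=\;\tfrac12\,\hat\lbar\,|\snabla_{\mathbb{S}^2}\hat Z_{,I,j-1,k}|^2,
\]
so after inserting the weight $r^{-4}$ this becomes $\lbar(r^{-4}|\snabla_{\mathbb{S}^2}\hat Z_{,I,j-1,k}|^2)$ plus lower-order errors in $j$. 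Upon integration this yields \emph{boundary} fluxes on $\Sigma_{t_2^*}$ and $\conplus_{\tilde u}$ with the good sign (hence kept on the left), rather than a bulk term to absorb. The residual errors involve only $\snabla\hat Z_{,I,j',k}$ with $j'\leq j-1$, closing the induction without any loss of angular derivatives. This is the step your argument is missing.
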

\begin{remark}
	The estimate~\eqref{eq:redsh} could as well have been localized around $r = 2M$, without many changes to the argument. Here we formulate it on the whole exterior region in order not to deal with cut-off functions.
\end{remark}
\begin{proof}
To simplify notation, let us set $\hat Z_j := \hat Z_{,I, j, k}$. Let us start by calculating, for $j \geq 1$, using equation~\eqref{eq:rcommuted},
\begin{equation}\label{eq:mastermorared}
\begin{aligned}
& L (r^{-2}(1-\mu)|\hat Z_j|^2) = - 2\frac{(1-\mu)^2}{r^3} |\hat Z_j|^2 \\ &
+r^{-2}\left\{(1-\mu) \frac{2M}{r^2} |\hat Z_j|^2 + 2(1-\mu) \hat Z_j\left[
- j \frac{2M}{r^2} \hat Z_j + {\hat \lbar}^{j-1}(\sdelta \hat Z_0)+ \partial_{\leo}^I T^k {\hat \lbar}^{j-1}((1-\mu)^{-1}R_\rho)-\sum_{i=1}^{j-1} f^{(j-1)}_i \hat Z_i
\right]\right\}\\ &
= -2 \frac{(1-\mu)^2}{r^3} |\hat Z_j|^2-(1-\mu) (2 j-1) \frac{2M}{r^4} |\hat Z_j|^2 +\underbrace{ 2(1-\mu)r^{-2} \hat Z_j 
 {\hat \lbar}^{j-1}(\sdelta \hat Z_0)}_{(i)}\\ &
 + \underbrace{2r^{-2}(1-\mu) \hat Z_j \partial_{\leo}^I T^k {\hat \lbar}^{j-1}((1-\mu)^{-1}R_\rho)}_{(ii)}- \underbrace{2r^{-2}(1-\mu) \hat Z_j\left(\sum_{i=1}^{j-1} f^{(j-1)}_i \hat Z_i\right)}_{(iii)}.
\end{aligned}
\end{equation}
Now, we examine $(i)$, upon integration on the sphere $\mathbb{S}^2$ (note that, if $j=1$, the second term in the second line of the following display is identically $0$):
\begin{align*}
&(i)=2(1-\mu)r^{-2} \hat Z_j  {\hat \lbar}^{j-1}(\sdelta \hat Z_0) = 2 r^{-2} \lbar \hat Z_{j-1} {\hat \lbar}^{j-1}(r^{-2} \sdelta_{\mathbb{S}^2} \hat Z_0)\\ & 
= -2 r^{-4} \snabla_{\lbar} \snabla_{\mathbb{S}^2} \hat Z_{j-1}(\snabla_{\mathbb{S}^2} \hat Z_{j-1}) +2 r^{-2} \sum_{\substack{a+b =j-1 \\ a,b \in \N_{\geq 0}, b \leq j-2}} \lbar \hat Z_{j-1} ({\hat \lbar}^{a}r^{-2}) (\snabla_{\hat \lbar}^b \sdelta_{\mathbb{S}^2} \hat Z_0)\\
&= \frac 1 {r^4} \lbar |\snabla_{\hat \lbar}^{j-1} \snabla_{\mathbb{S}^2} \hat Z_0 |^2 +2 r^{-2} \sum_{\substack{a+b =j-1 \\ a,b \in \N_{\geq 0}, b \leq j-2}} \lbar \hat Z_{j-1} ({\hat \lbar}^{a}r^{-2}) (\snabla_{\hat \lbar}^b \sdelta_{\mathbb{S}^2} \hat Z_0)\\ &
\leq -\lbar \left\{\frac 1 {r^4}  |\snabla_{\hat \lbar}^{j-1} \snabla_{\mathbb{S}^2} \hat Z_0 |^2 \right\} + \lbar( r^{-4}) |\snabla_{\hat \lbar}^{j-1} \snabla_{\mathbb{S}^2} \hat Z_0 |^2 \\
&\qquad + \eta (1-\mu)r^{-3}\sum_{a = 1}^j |\hat Z_a|^2 + C (1-\mu)r^{-6} \sum_{a = 0}^{j-2} |\hat \lbar^a \sdelta_{\mathbb{S}^2}\hat Z_0|^2.
\end{align*}
Here, $\eta$ is a small positive number. Note that we applied the Cauchy--Schwarz inequality in the last term above.

Subsequently, we examine $(ii)$. Let us again choose a small parameter $\eta >0$, and let us use the Cauchy--Schwarz inequality in order to obtain:
\begin{equation*}
\begin{aligned}
|(ii)| \leq \eta (1-\mu)r^{-3} |\hat Z_j|^2 + \eta^{-1} r^{-1}(1-\mu) |\partial_{\leo}^I T^k {\hat \lbar}^{j-1}((1-\mu)^{-1}R_\rho)|^2 
\end{aligned}
\end{equation*}
Regarding $(iii)$, we estimate

\begin{equation*}
\begin{aligned}
|(iii)| = 2r^{-2}(1-\mu)\left| \hat Z_j\left(\sum_{i=1}^{j-1} f^{(j-1)}_i \hat Z_i\right)\right|\\
\leq\eta (1-\mu)r^{-3}  |\hat Z_j| + C \eta^{-1} \frac{1-\mu}{r^3} \sum_{i=1}^{j-1} |\hat Z_i|^2 
\end{aligned}
\end{equation*}

Upon integration of the previous relation on the spacetime region $\regio{t_1^*}{t_2^*} \cap \{u \leq \tilde u\}$ with respect to the form $\de u \de v$, we obtain,
\begin{align}
\nonumber & \int_{\Sigma_{t_2^*} \cap \{u \leq \tilde u\}} (r^{-2} |\hat Z_j|^2) r^{-2} \de \Sigma_{t^*}
+ \int_{\regio{t_1^*}{t_2^*} \cap \{u \leq \tilde u\}}  \frac{1-\mu}{r^3}|\hat Z_j|^2\desphere \de u \de v \\ 
\nonumber &+\int_{\conplus_{\tilde u} \cap \regio{t_1^*}{t_2^*}} r^{-4} |\snabla_{\hat \lbar}^{j-1} \snabla_{\mathbb{S}^2} \hat Z_0|^2 \desphere \de v + \int_{\Sigma_{t_2^*} \cap \{u \leq \tilde u\}} r^{-4} |\snabla_{\hat \lbar}^{j-1} \snabla_{\mathbb{S}^2} \hat Z_0|^2 r^{-2}\de \Sigma_{t^*} \\ 
\label{eq:prered}  & \leq C \int_{\Sigma_{t_1^*}} \left\{ \frac 1 {r^{2}} |\hat Z_j|^2 + \frac 1 {r^4}  |\snabla_{\hat \lbar}^{j-1} \snabla_{\mathbb{S}^2} \hat Z_0 |^2\right\} r^{-2} \de \Sigma_{t^*}\\ 
\nonumber &
+ C\int_{\regio{t_1^*}{t_2^*} \cap \{u \leq \tilde u\}}\frac{1-\mu}{r^3} \sum_{i=1}^{j-1} |\hat Z_i|^2 \desphere \de u \de v\\ 
\nonumber &
+ C\int_{\regio{t_1^*}{t_2^*} \cap \{u \leq \tilde u\}} \Big\{ (1-\mu) r^{-5} |\snabla_{\hat \lbar}^{j-1} \snabla_{\mathbb{S}^2} \hat Z_0 |^2+ (1-\mu)r^{-6} \sum_{a = 0}^{j-2} |\hat \lbar^a \sdelta_{\mathbb{S}^2}\hat Z_0|^2 \Big\} \desphere\de u \de v
\\ 
\nonumber &
+C\int_{\regio{t_1^*}{t_2^*} \cap \{u \leq \tilde u\}} (1-\mu)r^{-1} |\partial_{\leo}^I T^k {\hat \lbar}^{j-1}((1-\mu)^{-1}R_\rho)|^2 \desphere\de u \de v.
\end{align}

Now, $|\snabla_{\hat \lbar}^a \snabla_{\mathbb{S}^2} \hat Z_0| = |\snabla_{\mathbb{S}^2} \snabla_{\hat \lbar}^a \hat Z_0| \leq C \sum_{H \in \mathscr{I}_{\leo}^1} |\hat Z_{,H+I, a, k}|$, and $|\snabla_{\hat \lbar}^a \sdelta_{\mathbb{S}^2} \hat Z_0| = |\sdelta_{\mathbb{S}^2} \snabla_{\hat \lbar}^a \hat Z_0| \leq C \sum_{H \in \mathscr{I}_{\leo}^{\leq 2}} |\hat Z_{,H+I, a, k}|$. Hence, we have the following, letting $b \in \N_{\geq 0}$ such that $b +j+ k\leq l$,
\begin{align*}
&\sum_{\substack{1 \leq |I| \leq b \\ 1 \leq |J| \leq b+1}} \left\{\int_{\Sigma_{t_2^*} \cap \{u \leq \tilde u\}} (r^{-2} |\hat Z_{,I, j, k}|^2 +r^{-4} |\hat Z_{,J,j-1,k}|^2) r^{-2} \de \Sigma_{t^*}\right.
\\ &
\left.+ \int_{\regio{t_1^*}{t_2^*} \cap \{u \leq \tilde u\}}  \frac{1-\mu}{r^3}|\hat Z_{,	I, j, k}|^2 \de u \de v 
+\int_{\conplus_{\tilde u} \cap \regio{t_1^*}{t_2^*}} r^{-4} |\hat Z_{,J,j-1,k}|^2 \desphere \de v\right\}
\\
&\leq \sum_{\substack{1 \leq |I| \leq b \\ 1 \leq |J| \leq b+1 \\ 1 \leq |K| \leq b+2}} \left\{C \int_{\regio{t_1^*}{t_2^*} \cap \{u \leq \tilde u\}} (1-\mu) r^{-5}  \sum_{a=0}^{j-1}\left(  |\hat Z_{,J,a,k}|^2  + r^2 |\hat Z_{,I,a,k}|^2 \right) \desphere \de u \de v \right.\\ &
+C \int_{\regio{t_1^*}{t_2^*} \cap \{u \leq \tilde u\}} (1-\mu) r^{-6}  \sum_{a=0}^{j-2}|\hat Z_{,K,a,k}|^2 \desphere \de u \de v +C \int_{\Sigma_{t_1^*}} \left( r^{-2} |\hat Z_{,I, j, k}|^2 + r^{-4} |\hat Z_{,J,j-1,k}|^2 
 \right)r^{-2} \de \Sigma_{t^*}\\ &
 \left. +C\int_{\regio{t_1^*}{t_2^*} \cap \{u \leq \tilde u\}} (1-\mu)r^{-1} |\partial_{\leo}^I T^k {\hat \lbar}^{j-1}((1-\mu)^{-1}R_\rho)|^2\desphere \desphere \de u \de v\right\}.
\end{align*}
An induction argument then shows, if $k + h \leq l$,
\begin{align*}
&\sum_{\substack{1 \leq |I| + j \leq h\\ j \geq 1}}\left\{
\int_{\Sigma_{t_2^*} \cap \{u \leq \tilde u\}} (r^{-2} |\hat Z_{,I, j, k}|^2 +r^{-4} |\hat Z_{,J,j-1,k}|^2) r^{-2} \de \Sigma_{t^*}\right.\\
&\left.
+ \int_{\regio{t_1^*}{t_2^*} \cap \{u \leq \tilde u\}}  \frac{1-\mu}{r^3}|\hat Z_{,I, j, k}|^2 \desphere \de u \de v 
+\int_{\conplus_{\tilde u} \cap \regio{t_1^*}{t_2^*}} r^{-4} |\hat Z_{,J,j-1,k}|^2 \desphere \de v
\right\}
\\
&\leq
 C \sum_{1 \leq |J|\leq h} \int_{\regio{t_1^*}{t_2^*} \cap \{u \leq \tilde u\}} (1-\mu) r^{-5}  |\hat Z_{,J,0,k}|^2 \desphere \desphere \de u \de v
 +C\sum_{1 \leq |I| + j \leq h } \int_{\Sigma_{t_1^*}} \left\{ r^{-2} |\hat Z_{,I, j, k}|^2 \right\}r^{-2} \de \Sigma_{t^*}\\
&  +C\sum_{1\leq |I| + j \leq h }\int_{\regio{t_1^*}{t_2^*} \cap \{u \leq \tilde u\}} (1-\mu)r^{-1} |\partial_{\leo}^I T^k {\hat \lbar}^{j-1}((1-\mu)^{-1}R_\rho)|^2 \desphere \de u \de v.
\end{align*}
This implies, finally, taking $\tilde u \to \infty$,
\begin{align}
&\nonumber \sum_{\substack{1 \leq |I| + j \leq h \\ j \geq 1} }\left\{
\int_{\Sigma_{t_2^*}} r^{-2} |\hat Z_{,I, j, k}|^2  r^{-2} \de \Sigma_{t^*}
+ \int_{\regio{t_1^*}{t_2^*}}  \frac{1-\mu}{r^3}|\hat Z_{, I, j, k}|^2 \de u \de v
\right\}
\\ \label{eq:redfinal}
&\leq
 C \sum_{1 \leq |J|\leq h} \int_{\regio{t_1^*}{t_2^*}} (1-\mu) r^{-5}  |\hat Z_{,J,0,k}|^2 \desphere \de u \de v
 +C\sum_{1 \leq |I| + j \leq h } \int_{\Sigma_{t_1^*}} r^{-4} |\hat Z_{,I, j, k}|^2  \de \Sigma_{t^*}\\
 & +C\sum_{1 \leq |I| + j \leq h }\int_{\regio{t_1^*}{t_2^*}} (1-\mu) r^{-1}|\partial_{\leo}^I T^k {\hat \lbar}^{j-1}((1-\mu)^{-1}R_\rho)|^2 \desphere \de u \de v. \nonumber
\end{align}
The same reasoning holds for $\hat W$.
\end{proof}
\section{The structure of the nonlinear terms}\label{sec:nonlstruct}
We begin with a fundamental lemma.

\begin{lemma}[Form of $H^{\mu\nu\kappa\lambda}_{\Delta} \nabla_{\mu}\fara_{\kappa\lambda}$]\label{lem:fund}
If the MBI system (\ref{MBI}) holds, then we have
\begin{equation}\label{eq:forminv}
H^{\mu\nu\kappa\lambda}_{\Delta} \nabla_{\mu}\fara_{\kappa\lambda} = \frac{1}{2\ellmbi^2} \nabla_\mu(\lun-\ldu^2) \left(- \fara^{\mu\nu} + \ldu\farad^{\mu\nu} \right) - \nabla_\mu \ldu \farad^{\mu\nu}.
\end{equation}
\end{lemma}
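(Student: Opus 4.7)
The plan is to treat this as a purely algebraic/differential identity: expand $H_\Delta^{\mu\nu\kappa\lambda}$ according to its defining formula~\eqref{hform2}, contract each of the four pieces with $\nabla_\mu \fara_{\kappa\lambda}$, and re-express the resulting contractions in terms of $\nabla\lun$ and $\nabla\ldu$. The MBI equations themselves play no role in this particular identity; only metric and volume-form compatibility $\nabla g = \nabla \varepsilon = 0$ will be used.

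The core step is the pair of elementary pointwise identities
\begin{equation*}
\fara^{\kappa\lambda} \nabla_\mu \fara_{\kappa\lambda} = \nabla_\mu \lun, \qquad \farad^{\kappa\lambda} \nabla_\mu \fara_{\kappa\lambda} = 2\, \nabla_\mu \ldu,
\end{equation*}
valid for any smooth two-form $\fara$. The first is immediate from $\lun = \tfrac12 \fara_{\alpha\beta}\fara^{\alpha\beta}$ and $\nabla g = 0$. For the second, one differentiates $4\ldu = \fara_{\alpha\beta}\farad^{\alpha\beta}$ and uses $\nabla \varepsilon = 0$ together with the symmetry $\varepsilon^{\mu\nu\kappa\lambda} = \varepsilon^{\kappa\lambda\mu\nu}$ to check that $\fara_{\alpha\beta}\nabla_\mu \farad^{\alpha\beta} = \farad^{\alpha\beta}\nabla_\mu \fara_{\alpha\beta}$, so the two halves of the Leibniz expansion coincide and contribute equally.

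Substituting these into the contraction of \eqref{hform2} with $\nabla_\mu \fara_{\kappa\lambda}$ yields four scalar terms: $-\tfrac12 \ellmbi^{-2}\fara^{\mu\nu}\nabla_\mu \lun$, $\ldu\,\ellmbi^{-2}\fara^{\mu\nu}\nabla_\mu \ldu$, $\tfrac12 \ldu\,\ellmbi^{-2}\farad^{\mu\nu}\nabla_\mu \lun$, and $-(1+\ldu^2\ellmbi^{-2})\farad^{\mu\nu}\nabla_\mu \ldu$. I would then group by the tensorial prefactor. Applying $\nabla_\mu(\lun - \ldu^2) = \nabla_\mu \lun - 2\ldu\,\nabla_\mu \ldu$ collapses the two $\fara^{\mu\nu}$-terms to $-\tfrac12 \ellmbi^{-2}\fara^{\mu\nu}\nabla_\mu(\lun-\ldu^2)$, while the two $\farad^{\mu\nu}$-terms proportional to $\ellmbi^{-2}$ combine by the same Leibniz trick into $\tfrac12 \ldu\,\ellmbi^{-2}\farad^{\mu\nu}\nabla_\mu(\lun-\ldu^2)$, leaving a single stray $-\farad^{\mu\nu}\nabla_\mu \ldu$ from the isolated ``$1$'' in the factor $(1+\ldu^2\ellmbi^{-2})$. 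Summing the two groups produces precisely the claimed formula. I do not anticipate any genuine obstacle here: the only hazard is coefficient bookkeeping in the two Leibniz regroupings, which is the one place where a sign or a factor of $2$ could slip.
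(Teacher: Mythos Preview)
Your proof is correct, and in fact your direct algebraic expansion of $H_\Delta^{\mu\nu\kappa\lambda}\nabla_\mu\fara_{\kappa\lambda}$ via the two contraction identities $\fara^{\kappa\lambda}\nabla_\mu\fara_{\kappa\lambda}=\nabla_\mu\lun$ and $\farad^{\kappa\lambda}\nabla_\mu\fara_{\kappa\lambda}=2\nabla_\mu\ldu$ establishes the identity \emph{unconditionally}, for any smooth two-form $\fara$. The paper proceeds differently: it invokes the MBI equation $\nabla_\mu\mfarad^{\mu\nu}=0$, expands $\mfarad=-\ellmbi^{-1}(\fara-\ldu\farad)$ by the product rule, uses $\nabla_\mu\farad^{\mu\nu}=0$ to drop one term, and then substitutes the other form of the equation $\nabla_\mu\fara^{\mu\nu}=-H_\Delta^{\mu\nu\kappa\lambda}\nabla_\mu\fara_{\kappa\lambda}$ to arrive at the same expression. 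So the paper's argument genuinely uses the field equations, while yours reveals that the lemma's hypothesis ``if the MBI system holds'' is not actually needed: \eqref{eq:forminv} is a pointwise algebraic identity following from the definition \eqref{hform2} alone. Your route is slightly longer in raw computation but conceptually cleaner and yields a marginally stronger statement; the paper's route is shorter once one has both formulations of the system in hand.
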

\begin{proof}
This relation follows directly from the MBI equations (\ref{MBI}):
\begin{equation}
\nabla_\mu \mfarad^{\mu\nu} = 0, \qquad \nabla_\mu \farad^{\mu\nu} = 0,
\end{equation}
along with the form of $\mfarad = - \ellmbi^{-1} (\fara^{\mu\nu}- \ldu \farad^{\mu\nu})$, and the equation for $H_\Delta^{\mu\nu\kappa\lambda}$:
\begin{equation}
\nabla_\mu\fara^{\mu\nu} = - H_\Delta^{\mu\nu\kappa\lambda} \nabla_\mu \fara_{\kappa\lambda}.
\end{equation}
\end{proof}

We are going to use the lemma above to estimate the nonlinearities $R_\rho$ and $R_\sigma$.

\subsection{Bounds on $R_\rho$}

\begin{lemma}[Bound on $R_\rho$ and its derivatives.]\label{lem:bsrho}
Let $l, j, k, b \in \N_{\geq 0}$, let $I \in \iindicess^b$, and assume $b + j + k \leq l$. Let $\fara$ be a smooth solution of the MBI system (\ref{MBI}) on $\mathcal{R} \subset S_e$. Assume the bootstrap assumptions $BA\left(\mathcal{R},1,\left\lfloor \frac{l+2}{2}\right\rfloor , \varepsilon \right)$. Under these conditions, we have the following pointwise bound on $\mathcal{R}$:
\begin{equation}\label{eq:boundrrho}
\begin{aligned}
|\partial_{\leo}^I T^k {\hat \lbar}^{j}(R_\rho)|+|\partial_{\leo}^I T^k {\hat \lbar}^{j}((1-\mu)^{-1}R_\rho)| \leq  C \eps^{\frac 32}\tau^{-2} r^{-1}\sum_{i = 0}^{l+2} |\partial^i \fara|.
\end{aligned}
\end{equation}
\end{lemma}

\begin{proof}
We first notice that we have, by definition,
\begin{equation}\label{eq:masternlu}
(1-\mu)^{-1} R_\rho = \underbrace{(1-\mu)^{-1}r^2\nonl{1}}_{(i)} + \underbrace{(1-\mu)^{-1}r^2\nonl{2}}_{(ii)}.
\end{equation}

We divide the Proof in two \textbf{Steps.} In \textbf{Step 1}, we will analyse term $(i)$ in Equation (\ref{eq:masternlu}), whereas in \textbf{Step 2}, we will analyse term $(ii)$ in Equation (\ref{eq:masternlu}).

\subsection*{Step 1: analysis of $(i)$}
We have
\begin{equation*}
\begin{aligned}
\text{\bf NL}_1 = 2 \frac{1-\mu}{r} \left(-\tensor{{H_{_\Delta}}}{^\mu _\lbar ^\kappa ^\lambda} \nabla_{\mu}\fara_{\kappa\lambda} \right)-2 \frac{1-\mu}{r} \left( \tensor{{H_{_\Delta}}}{^\mu _L ^\kappa ^\lambda} \nabla_{\mu}\fara_{\kappa\lambda}\right).
  \end{aligned}
 \end{equation*}
Hence,
\begin{equation*}
(i) = -2 r \left(\tensor{{H_{_\Delta}}}{^\mu _\lbar ^\kappa ^\lambda} \nabla_{\mu}\fara_{\kappa\lambda} \right)-2 r\left( \tensor{{H_{_\Delta}}}{^\mu _L ^\kappa ^\lambda} \nabla_{\mu}\fara_{\kappa\lambda}\right).
\end{equation*}
We now have, from Equation (\ref{eq:forminv}),
\begin{equation*}
\begin{aligned}
&r \left(\tensor{{H_{_\Delta}}}{^\mu _\lbar ^\kappa ^\lambda} \nabla_{\mu}\fara_{\kappa\lambda} \right) =   \frac{r}{2\ellmbi^2} \nabla_\mu(\lun-\ldu^2) \left(- \fara\indices{^\mu_\lbar} + \ldu\farad\indices{^\mu_\lbar} \right) - \nabla_\mu \ldu \farad\indices{^\mu_\lbar}\\
&=  -\frac{r}{2\ellmbi^2} \snabla^A(\lun-\ldu^2) (\alphabar_A + \svol_{BA} \ldu \alphabar^B)+r \svol_{BA} \snabla^A \ldu \alphabar^B\\
&+\frac{r}{2\ellmbi^2} \lbar(\lun-\ldu^2) (-\rho + \ldu \sigma) -
r \sigma \lbar \ldu .
\end{aligned}
\end{equation*}
We then have, from the bootstrap assumptions,
\begin{equation}
\begin{aligned}
\left|-\frac{r}{2\ellmbi^2} \snabla^A(\lun-\ldu^2) (\alphabar_A + \svol_{BA} \ldu \alphabar^B)+r \svol_{BA} \snabla^A \ldu \alphabar^B\right|
\leq C (|\partial \lun| + |\partial\ldu|)|\alphabar|.
\end{aligned}
\end{equation}
We use the relations
\begin{equation}
\lun = - \rho^2 + \sigma^2 - (1-\mu)^{-1}\alpha^A \alphabar_A , \qquad \ldu = - \rho \sigma - \frac 12 (1-\mu)^{-1} \svol^{AB}\alpha_A \alphabar_B.
\end{equation}
Hence, again from the bootstrap assumptions, there holds
\begin{equation}
\begin{aligned}
&|\partial(\lun + \ldu)||\alphabar| \leq C \left(\sum_{a+b \leq 1} |\partial^a \rho| |\partial^b \rho|+|\partial^a \sigma| |\partial^b \rho|+|\partial^a \sigma| |\partial^b \sigma| + |\partial^a \alpha| |\partial^b \alphabar|\right)|\alphabar|\\
& \qquad \leq C \varepsilon^{3/2}\tau^{-2} r^{-2}(|\partial \fara|+|\fara|).
\end{aligned}
\end{equation}
Similarly,
\begin{equation}
\begin{aligned}
\left|\frac{r}{2\ellmbi^2} \lbar(\lun-\ldu^2) (-\rho + \ldu \sigma) -
r \sigma \lbar \ldu\right| \leq C r (|\partial \lun| + |\partial \ldu)|)(|\rho|+ |\sigma|)
\end{aligned}
\end{equation}
Again, using the bootstrap assumptions,
\begin{equation*}
r (|\partial(\lun)| + |\partial(\ldu)|)(|\rho|+ |\sigma|) \leq C \varepsilon^{3/2} \tau^{-2} r^{-2}(|\fara|+ |\partial \fara|).
\end{equation*}
In conclusion, we have
\begin{equation}
\left|r \left(\tensor{{H_{_\Delta}}}{^\mu _\lbar ^\kappa ^\lambda} \nabla_{\mu}\fara_{\kappa\lambda} \right)\right| \leq C \varepsilon^{3/2} \tau^{-2} r^{-2}(|\fara|+ |\partial \fara|).
\end{equation}
In a similar way, we can estimate
\begin{equation}
\begin{aligned}
&\left| \partial_{\leo}^I T^k {\hat \lbar}^{j} \left( r \tensor{{H_{_\Delta}}}{^\mu _\lbar ^\kappa ^\lambda} \nabla_{\mu}\fara_{\kappa\lambda} \right)\right| \\ &
\leq C \sum_{a + b \leq l} (|\partial^{a+1}\lun| + |\partial^{a+1}\ldu|)(|\partial^b\alphabar| + r|\partial^b\rho| + r|\partial^b\sigma|) \\
&\leq C \sum_{\substack{a + b +c\leq l
\\
a+c \geq 1
}} 
\left( |\partial^a \rho| |\partial^c \rho|+|\partial^a \sigma| |\partial^c \rho|+|\partial^a \sigma| |\partial^c \sigma| + |\partial^a \alpha| |\partial^c \alphabar|\right)
(|\partial^b\alphabar| + r|\partial^b\rho| + r|\partial^b\sigma|).
\end{aligned}
\end{equation}
Now, the key observation is that, if $a+b+c \leq l+2$, we have
$$
\max\left\{\{a,b,c\} \setminus \{\max\{a,b,c\} \}\right\} \leq \left \lfloor \frac{l+2}{2}\right \rfloor.
$$
Since we are assuming the bootstrap assumptions $BA\left(\mathcal{R},1,\left\lfloor \frac{l+2}{2}\right\rfloor , \varepsilon \right)$, we obtain
\begin{equation}
\left| \partial_{\leo}^I T^k {\hat \lbar}^{j} \left( r \tensor{{H_{_\Delta}}}{^\mu _\lbar ^\kappa ^\lambda} \nabla_{\mu}\fara_{\kappa\lambda} \right)\right| \leq C \varepsilon^{3/2} r^{-2} \tau^{-2} \sum_{i = 0}^{l+1} |\partial^i \fara|.
\end{equation}
A very similar reasoning yields the corresponding bound:
\begin{equation}
\begin{aligned}
\left| \partial_{\leo}^I T^k {\hat \lbar}^{j} \left( r \tensor{{H_{_\Delta}}}{^\mu _L ^\kappa ^\lambda} \nabla_{\mu}\fara_{\kappa\lambda} \right)\right| 
\leq C \varepsilon^{3/2} r^{-2}\tau^{-2}  \sum_{i = 0}^{l+1} |\partial^i \fara|.
\end{aligned}
\end{equation}
Combining the previous two inequalities, we obtain the bound for $(i)$:
\begin{equation}
|(i)| \leq C \varepsilon^{3/2} r^{-2}\tau^{-2}  \sum_{i = 0}^{l+1} |\partial^i \fara|.
\end{equation}
This concludes \textbf{Step 1}.

\subsection*{Step 2: analysis of $(ii)$}
Similarly, we examine the terms given by $\nonl{2}$. We first state a fundamental remark, following from the fact that $\nabla_L \lbar = \nabla_\lbar L = 0$:
 \begin{equation*}
 \begin{aligned}
 \textbf{\bf NL}_2:&=
  L^\nu \lbar^\alpha \left(\nabla_\alpha  \left(\tensor{{H_{_\Delta}}}{^\mu _\nu ^\kappa ^\lambda} \nabla_{\mu}\fara_{\kappa\lambda} \right) -
   \nabla_\nu \left(\tensor{{H_{_\Delta}}}{^\mu _\alpha ^\kappa ^\lambda} \nabla_{\mu}\fara_{\kappa\lambda} \right)  \right) \\
  & = \underbrace{\lbar \left( {H_{_\Delta}}\indices{^\mu _L ^\kappa ^\lambda} \nabla_{\mu}\fara_{\kappa\lambda}\right)}_{(a)}-\underbrace{L \left( {H_{_\Delta}}\indices{^\mu _\lbar ^\kappa ^\lambda} \nabla_{\mu}\fara_{\kappa\lambda} \right)}_{(b)}.
 \end{aligned}
 \end{equation*}
We now estimate
\begin{equation*}
\begin{aligned}
&|\partial_{\leo}^I T^k {\hat \lbar}^{j}((1-\mu)^{-1}r^2 (a))|  \\
&\leq 
C \left| \partial^{\leq l} \left((1-\mu)^{-1} r^2 \lbar \left({H_{_\Delta}}\indices{^\mu _L ^\kappa ^\lambda} \nabla_{\mu}\fara_{\kappa\lambda} \right)\right)\right| \\ &
\leq C \left|\partial^{\leq l} \left\{(1-\mu)^{-1} r\lbar  \left(r{H_{_\Delta}}\indices{^\mu _L ^\kappa ^\lambda} \nabla_{\mu}\fara_{\kappa\lambda}  \right)\right\} \right| + C \left|\partial^{\leq l} \left( r{H_{_\Delta}}\indices{^\mu _L ^\kappa ^\lambda} \nabla_{\mu}\fara_{\kappa\lambda} \right)\right|.
\end{aligned}
\end{equation*}
In view of \textbf{Step 1}, we have
\begin{equation}
\begin{aligned}
\left|\partial^{\leq l} \left( r{H_{_\Delta}}\indices{^\mu _L ^\kappa ^\lambda} \nabla_{\mu}\fara_{\kappa\lambda} \right)\right| \leq C \varepsilon^{3/2} \tau^{-2} r^{-2} \sum_{i = 0}^{l+1} |\partial^i \fara|.
\end{aligned}
\end{equation}
Furthermore, the Leibniz rule implies the bound
\begin{equation}
\begin{aligned}
\left|\partial^{\leq l} \left\{(1-\mu)^{-1} r\lbar  \left(r{H_{_\Delta}}\indices{^\mu _L ^\kappa ^\lambda} \nabla_{\mu}\fara_{\kappa\lambda}  \right)\right\} \right| \leq
C r \sum_{i = 1}^{l+1} \left|\partial^i \left(  r{H_{_\Delta}}\indices{^\mu _L ^\kappa ^\lambda} \nabla_{\mu}\fara_{\kappa\lambda}  \right) \right|.
\end{aligned}
\end{equation}
 Since we are assuming the bootstrap assumptions $BA\left(\mathcal{R},1,\left\lfloor \frac{l+2}{2}\right\rfloor , \varepsilon \right)$, we finally obtain, again from \textbf{Step 1}:
\begin{equation}
r \sum_{i = 1}^{l+1} \left|\partial^i \left(  r{H_{_\Delta}}\indices{^\mu _L ^\kappa ^\lambda} \nabla_{\mu}\fara_{\kappa\lambda}  \right) \right| \leq C \eps^{\frac 32} \tau^{-2} r^{-1}\sum_{i = 0}^{l+2} |\partial^i \fara|.
\end{equation}
A similar reasoning holds for the term $(b)$. We eventually obtain
\begin{equation}
|(ii)| \leq C \eps^{\frac 32}\tau^{-2} r^{-1}\sum_{i = 0}^{l+2} |\partial^i \fara|.
\end{equation}
This concludes \textbf{Step 2}.

The combination of \textbf{Step 1} and \textbf{Step 2} now yields the claim.
\end{proof}

\subsection{Bounds on $R_\sigma$}

\begin{lemma}[Bound on $R_\sigma$ and its derivatives.]\label{lem:bssigma}
Let $l, j, k, b \in \N_{\geq 0}$, let $I \in \iindicess^b$, and assume $b + j + k \leq l$. Let $\fara$ be a smooth solution of the MBI system (\ref{MBI}) on $\mathcal{R} \subset S_e$. Assume the bootstrap assumptions $BA\left(\mathcal{R},1,\left\lfloor \frac{l+2}{2}\right\rfloor , \varepsilon \right)$. Under these conditions, we have the following pointwise bound on $\mathcal{R}$:
\begin{equation}\label{eq:boundrsigma}
|\partial_{\leo}^I T^k {\hat \lbar}^{j}(R_\sigma)|+|\partial_{\leo}^I T^k {\hat \lbar}^{j}((1-\mu)^{-1}R_\sigma)| \leq C \eps^{\frac 32} \tau^{-2} r^{-1}\sum_{i = 0}^{l+2} |\partial^i \fara|.
\end{equation}
\end{lemma}

\begin{proof}
We start by recalling the definition of $\nonl{3}$ and $R_\sigma$:
\begin{equation}
\begin{aligned}
R_\sigma = - r^2 \nonl{3}, \qquad \nonl{3} = L^\nu \lbar^\alpha \nabla^\mu U_{\alpha \mu \nu}, \qquad U_{\alpha \mu\nu} =- \varepsilon\indices{_\alpha^\lambda_\mu_\nu} {H_{_\Delta}}\indices{^\beta_\lambda_\gamma^\delta} \nabla_\beta \fara_{\gamma \delta}.
\end{aligned}
\end{equation}
Let us now fix a smooth local frame field $\{e_1, e_2\}$ such that $g(e_i, e_j) = \delta_{ij}$, $e_1$, $e_2$ are both tangent to the spheres of constant $r$-coordinate, and furthermore
$$
(e_2)^B= \svol^{CB} \gbar_{CA} (e_1)^A
$$
Here, we used the definition of $\svol_{AB}$ and $\gbar_{AB}$, they are resp. the induced volume form and the metric on the spheres of constant $r$. In other words, $e_2$ is the oriented (with respect to $\svol$) normal vector to $e_1$ on the spheres of constant $r$.

We will also assume that, locally around a point $\bar \omega \in \mathbb{S}^2$, $e_1$ and $e_2$ have the following expression, after an appropriate change of coordinates:
\begin{equation*}
e_1 = r^{-1}\partial_\theta, \qquad e_2 = r^{-1} \frac{\partial_\varphi}{\sin \theta}.
\end{equation*}
With these choices, we have that, at $\bar \omega$,
\begin{equation}
\begin{aligned}
\nonl{3} = - 2 (1-\mu) \nabla_{\alpha} \left({H_{_\Delta}}\indices{^\beta_\eta^\gamma^\delta} \nabla_\beta \fara_{\gamma \delta}   \right) (e_1)^\alpha (e_2)^\eta + 2 (1-\mu) \nabla_{\alpha} \left({H_{_\Delta}}\indices{^\beta_\eta^\gamma^\delta} \nabla_\beta \fara_{\gamma \delta}   \right) (e_2)^\alpha (e_1)^\eta.
\end{aligned}
\end{equation}
Since $[e_1, e_2] = - r^{-1} (\cos \theta)(\sin \theta)^{-1}e_2$, we have
\begin{equation}
\begin{aligned}
\nonl{3} = &- \underbrace{2 (1-\mu) \nabla_{\alpha} \left({H_{_\Delta}}\indices{^\beta_\eta^\gamma^\delta} \nabla_\beta \fara_{\gamma \delta}  (e_2)^\eta \right) (e_1)^\alpha }_{(a)} +\underbrace{ 2 (1-\mu) \nabla_{\alpha} \left({H_{_\Delta}}\indices{^\beta_\eta^\gamma^\delta} \nabla_\beta \fara_{\gamma \delta}  (e_1)^\eta \right) (e_2)^\alpha }_{(b)}\\
&+ \underbrace{2 (1-\mu)r^{-1} {H_{_\Delta}}\indices{^\beta_{e_2}^\gamma^\delta} \nabla_\beta \fara_{\gamma \delta} }_{(c)}.
\end{aligned}
\end{equation}
Let us focus our attention on $(a)$, the reasoning for $(b)$ and $(c)$ being analogous. We have
\begin{equation}\label{eq:13.23}
\begin{aligned}
r^2(1-\mu)^{-1}(a) = -2 r \partial_\theta \left({H_{_\Delta}}\indices{^\beta_{e_2}^\gamma^\delta} \nabla_\beta \fara_{\gamma \delta} \right) = -2 \partial_\theta \left( r{H_{_\Delta}}\indices{^\beta_{e_2}^\gamma^\delta} \nabla_\beta \fara_{\gamma \delta} \right)
\end{aligned}
\end{equation}
We now use Equation~\eqref{eq:forminv} to get
\begin{align*}
{H_{_\Delta}}\indices{^\mu_{e_2}^\kappa^\lambda} \nabla_{\mu}\fara_{\kappa\lambda} &= \frac{1}{2\ellmbi^2} \nabla_\mu(\lun-\ldu^2) \left(- \fara\indices{^\mu_{e_2}} + \ldu\farad\indices{^\mu_{e_2}} \right) - \nabla_\mu \ldu \farad\indices{^\mu_{e_2}}\\
&= \frac{1}{2\ellmbi^2} \nabla_{e_1}(\lun-\ldu^2) \left(- \fara\indices{^{e_1}_{e_2}} + \ldu\farad\indices{^{e_1}_{e_2}} \right) - \nabla_{e_1} \ldu \farad\indices{^{e_1}_{e_2}}\\
&+ \frac{1}{2\ellmbi^2} \nabla_{L}(\lun-\ldu^2) \left(- \fara\indices{^{L}_{e_2}} + \ldu\farad\indices{^{L}_{e_2}} \right) - \nabla_L \ldu \farad\indices{^{L}_{e_2}}\\
&+ \frac{1}{2\ellmbi^2} \nabla_{\lbar}(\lun-\ldu^2) \left(- \fara\indices{^{\lbar}_{e_2}} + \ldu\farad\indices{^{\lbar}_{e_2}} \right) - \nabla_{\lbar} \ldu \farad\indices{^{\lbar}_{e_2}}.
\end{align*}
Hence, as before, we have, using the bootstrap assumptions as well as the expression of the invariants $\lun$ and $\ldu$, in view of~\eqref{eq:13.23},
\begin{align*}
&\left|(1-\mu)^{-1}r^2 (a)\right| \\ &
\leq C r \sum_{a+b \leq 1} (|\partial^a e_1(\lun)|+|\partial^a e_1(\ldu)|)(|\partial^b \fara\indices{^{e_1}_{e_2}}|+ |\partial^b \farad\indices{^{e_1}_{e_2}}|| )\\ &
+ C r  \sum_{a+b \leq 1} (|\partial^a L(\lun)|+|\partial^a L(\ldu)|)(|\partial^b \fara\indices{^{L}_{e_2}}|+ |\partial^b \farad\indices{^{L}_{e_2}}|| )\\ &
+C r  \sum_{a+b \leq 1} (|\partial^a \lbar(\lun)|+|\partial^a \lbar(\ldu)|)(|\partial^b \fara\indices{^{\lbar}_{e_2}}|+ |\partial^b \farad\indices{^{\lbar}_{e_2}}|| ) \\ &
\leq C \sum_{a+b+c \leq 2} (|\partial^a \rho| |\partial^b \rho|+ |\partial^a \sigma| |\partial^b \rho| + |\partial^a \sigma| |\partial^b \sigma|+ |\partial^a \alpha| |\partial^b \alphabar|)(|\partial^c\sigma|+ |\partial^c\rho|)\\ &
+C r\sum_{a+b+c \leq 2} (|\partial^a \rho| |\partial^b \rho|+ |\partial^a \sigma| |\partial^b \rho| + |\partial^a \sigma| |\partial^b \sigma|+ |\partial^a \alpha| |\partial^b \alphabar|)|\partial^c\alphabar|\\ &
+ C r \sum_{a+b+c \leq 2} (|\partial^a \rho| |\partial^b \rho|+ |\partial^a \sigma| |\partial^b \rho| + |\partial^a \sigma| |\partial^b \sigma|+ |\partial^a \alpha| |\partial^b \alphabar|)|\partial^c\alpha|\\ &
\leq C \eps^{\frac 32}\tau^{-2} r^{-1} (|\fara|+ |\partial\fara| + |\partial^2 \fara|).
\end{align*}
Similarly, we have, under the bootstrap assumptions,
\begin{equation}
\begin{aligned}
&\left| \partial_{\leo}^I T^k {\hat \lbar}^{j}((1-\mu)^{-1}r^2 (a))  \right|\\ &
\leq C \sum_{a+b+c \leq l+2} (|\partial^a \rho| |\partial^b \rho|+ |\partial^a \sigma| |\partial^b \rho| + |\partial^a \sigma| |\partial^b \sigma|+ |\partial^a \alpha| |\partial^b \alphabar|)(|\partial^c \sigma|+ |\partial^c \rho|)\\ &
+C r\sum_{a+b+c \leq l+2} (|\partial^a \rho| |\partial^b \rho|+ |\partial^a \sigma| |\partial^b \rho| + |\partial^a \sigma| |\partial^b \sigma|+ |\partial^a \alpha| |\partial^b \alphabar|)|\partial^c \alphabar|\\ &
+ C r \sum_{a+b+c \leq l+2} (|\partial^a \rho| |\partial^b \rho|+ |\partial^a \sigma| |\partial^b \rho| + |\partial^a \sigma| |\partial^b \sigma|+ |\partial^a \alpha| |\partial^b \alphabar|)|\partial^c \alpha|\\ &
\leq C \eps^{\frac 32}\tau^{-2} r^{-1} \sum_{i = 0}^{l+2}|\partial^i \fara|.
\end{aligned}
\end{equation}
This is the claim.
\end{proof}
\section{Morawetz estimates for \texorpdfstring{$Z$}{Z} and \texorpdfstring{$W$}{W}: second step}\label{sec:morawetz}

We now combine the results in Sections~\ref{sec:premora} and~\ref{sec:nonlstruct} in order to obtain, under the bootstrap assumptions, a non-degenerate Morawetz estimate for $\hat Z_{,I,j,k}$ and $\hat W_{,I,j,k}$ when the number of derivatives is high. Here we combine the structure of the nonlinear terms to obtain estimates on the RHS of the inequalities of Section~\ref{sec:premora}.

\subsection{Degenerate Morawetz estimate}

 \begin{proposition}\label{prop:moradegeps} Let $l, k, b \in \N_{\geq 0}$, let furthermore $I \in \iindiceo^b$, and assume $b + k \leq l$ and $b \geq 1$. 
 There exist a small number $\tilde \varepsilon >0$ and a constant $C >0$ such that the following holds. Let $t_2^* \geq t_0^*$, and let $\fara$ be a smooth solution of the MBI system (\ref{MBI}) on $\mathcal{R}:=\regio{t^*_0}{t^*_2} \subset S_e$. Assume the bootstrap assumptions $BA\left(\mathcal{R}, 1,\left\lfloor \frac{l+3}{2}\right\rfloor , \varepsilon \right)$. Assume furthermore the smallness of initial energy (\ref{eq:ensmall}) 
\begin{equation*}\label{eq:ensmallref}
\norm{\fara}^2_{H^{l+3}}(\Sigma_{t_0^*}) \leq \varepsilon^2.
\end{equation*}
Here, $0 < \varepsilon < \tilde \varepsilon$. For ease of notation, denote, throughout this Proposition, $\dot Z := \dot Z_{,I, 0, k}$. Under these conditions, we have the following degenerate Morawetz estimate on $\mathcal R$:
\begin{equation}\label{eq:moradegeps}
\begin{aligned}
 \int_{\regio{t_1^*}{t_2^*} } 
 \left\{\frac 1 {r^2}|L \dot Z - \lbar \dot Z|^2+ \frac{(r-3M)^2}{r^3}\left(|\snabla \dot Z|^2+ \frac 1 r |L \dot Z + \lbar \dot Z|^2\right)+\frac 1 {r^3} |\dot Z|^2\right\} r^{-2}\de \text{Vol}  \leq C \varepsilon^2.
\end{aligned}
\end{equation}
Furthermore, the same estimate holds with $\dot Z$ replaced by $\dot W$. 
\end{proposition}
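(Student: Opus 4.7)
The plan is to apply the Morawetz estimate of Proposition~\ref{prop:mora} to $\dot Z := \dot Z_{,I,0,k}$ on the region $\mathcal{R} = \regio{t_0^*}{t_2^*}$, and then to bound the right-hand side of~\eqref{fullen} term by term using the pointwise structural bound of Lemma~\ref{lem:bsrho} on $R_\rho$ together with the uniform $L^2$ estimate of Proposition~\ref{prop:energy}. The reasoning for $\dot W$ is parallel, substituting Lemma~\ref{lem:bssigma} and the analogous bound on $R_\sigma$. I note that the appearance of $t_1^*$ in the LHS of~\eqref{eq:moradegeps} is to be read as $t_0^*$, consistent with the fact that the estimate is stated on $\mathcal R$.

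Concretely, the flux term at $\Sigma_{t_0^*}$ on the RHS of~\eqref{fullen} is controlled directly by the initial-energy hypothesis $\norm{\fara}^2_{H^{l+3}(\Sigma_{t_0^*})}\leq\varepsilon^2$, after expressing the $L$, $\lbar$, $\snabla$ derivatives of $\dot Z$ in terms of derivatives of $\fara$ via the norm equivalences of Section~\ref{sec:defs}. For each of the source and cross terms on the RHS of~\eqref{fullen}, Lemma~\ref{lem:bsrho} furnishes, schematically,
\begin{equation*}
|\partial^{\leq l}R_\rho|+|\partial^{\leq l}((1-\mu)^{-1}R_\rho)| \leq C\,\tau^{-2}\,r^{-1}\sum_{i=0}^{l+2}|\partial^i\fara|,
\end{equation*}
where an additional factor $\varepsilon^{3/2}$ is extracted because $R_\rho$ is cubic in $\fara$ and its derivatives, with two of the three factors placed in $L^\infty$ through the bootstrap $BS(\mathcal R,1,\lfloor(l+3)/2\rfloor,\varepsilon)$. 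Inserting this bound into the quadratic spacetime source terms of~\eqref{fullen} yields integrals of the schematic form $\varepsilon^{3}\int_{\mathcal R}\tau^{-4}|\partial^{\leq l+2}\fara|^2\de\text{Vol}$, dressed with appropriate $r$- and $(1-\mu)^{\pm 1}$-weights; each such integral is bounded by $C\varepsilon^2$ upon slicing by $t^*$, using the integrability of $\tau^{-4}\de t^*$ in both the near and far regions, together with the uniform bound $\norm{\fara}_{H^{l+2}(\widetilde\Sigma_{t^*})}^2\leq C\varepsilon^2$ from Proposition~\ref{prop:energy}. The linear-in-$\dot Z$ cross terms are handled by Cauchy--Schwarz: the resulting weighted $\int|\dot Z|^2$-integral is absorbed into the angular term on the LHS of~\eqref{fullen} via the Poincar\'e inequality (which is available because $b\geq 1$), while the piece containing $|\dot Z_{,I,0,k+1}|$ in the far region stays within the regularity budget of Proposition~\ref{prop:energy}, since $b+k+1\leq l+1$.

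The main obstacle I anticipate is the careful bookkeeping of $r$- and $\tau$-weights across the six distinct source contributions on the RHS of~\eqref{fullen}, and in particular verifying that the localized boundary integrals on $\{2M\leq r\leq 4M\}\cap(\Sigma_{t_0^*}\cup\Sigma_{t_2^*})$, as well as the far-region cross term featuring $\dot Z_{,I,0,k+1}$, yield absolutely integrable spacetime contributions of size $C\varepsilon^2$. The former is handled by a standard trace inequality played against the uniform $L^2$ bound of Proposition~\ref{prop:energy}; for the latter the $\tau^{-2}r^{-1}$ decay of $R_\rho$ is exactly what is needed to make the Cauchy--Schwarz pairing integrable. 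Collecting all contributions then yields~\eqref{eq:moradegeps}, and the argument for $\dot W$ via Lemma~\ref{lem:bssigma} is identical.
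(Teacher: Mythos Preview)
Your proposal is correct and follows essentially the same route as the paper's own (very terse) proof: start from \eqref{fullen}, invoke Lemma~\ref{lem:bsrho} (resp.\ Lemma~\ref{lem:bssigma}) to bound the $R_\rho$ (resp.\ $R_\sigma$) contributions, apply Cauchy--Schwarz on the cross terms, and slice in time via Fubini against the uniform $L^2$ control. You are in fact more explicit than the paper in one respect: you cite Proposition~\ref{prop:energy} for the propagated $H^{l+2}$ bounds, whereas the paper only mentions ``initial smallness of energy'' and leaves the propagation implicit. One small imprecision: the absorption of the $\int|\dot Z|^2$ piece from the near-horizon cross term is most cleanly done into the nondegenerate zeroth-order term $r^{-3}|\dot Z|^2$ on the LHS of \eqref{fullen} (after noting that the $(1-\mu)$ factor already sits in $|\partial^I_{\leo}T^{k+1}((1-\mu)^{-1}R_\rho)|$), rather than into the angular term via Poincar\'e, since the latter degenerates at $r=3M$.
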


\begin{remark}
We notice that this estimate ``loses two derivatives'' in the error terms.
\end{remark}

\begin{proof}
We recall the estimate~\eqref{fullen}:
\begin{align*}
& \int_{\regio{t_1^*}{t_2^*} } 
\left\{\frac 1 {r^2}|L \dot Z - \lbar \dot Z|^2+ \frac{(r-3M)^2}{r^3}\left(|\snabla \dot Z|^2+ \frac 1 r |L \dot Z + \lbar \dot Z|^2\right)+\frac 1 {r^3} |\dot Z|^2\right\} r^{-2}\de \text{Vol} \\ &
\leq C	\int_{\regio{t_1^*}{t_2^*}  } (r^2(1-\mu)^{-1} |\partial^I_{\leo} T^k (R_\rho)|^2 + r^{-1}(1-\mu) |\partial^I_{\leo} T^k (R_\rho)|^2 )\de u \de v\\ &
+C \int_{\regio{t_1^*}{t_2^*} \cap \{ 2M \leq r \leq 4M\}} |\dot Z_{,I,0,k}| |\partial^I_{\leo} T^{k+1} (R_\rho)| \de u \de v\\
& +C \int_{\regio{t_1^*}{t_2^*} \cap \{ r \geq 4M\}} |\dot Z_{,I,0,k+1}| |\partial^I_{\leo} T^{k} (R_\rho)| \de u \de v\\ &
+ C \int_{\regio{t_1^*}{t_2^*} \cap \{2M \leq  r  \leq 3 M\}}(|\partial^I_{\leo}  {\hat \lbar} T^{k} (R_\rho)|^2 + |\partial^I_{\leo}  T^{k} ((1-\mu)^{-1}R_\rho)|^2)\de \text{Vol}\\ &
+ C \int_{\Sigma_{t_2^*} \cap \{2M \leq r \leq 4M\}} |\partial^I_{\leo}  T^{k} ((1-\mu)^{-1}R_\rho)|^2 \de \Sigma_{t^*} \\
& + C \int_{\Sigma_{t_1^*}\cap \{2M \leq r \leq 4M\}} |\partial^I_{\leo}  T^{k} ((1-\mu)^{-1}R_\rho)|^2 \de \Sigma_{t^*}\\ &
+  C \int_{\Sigma_{t_1^*}} (|L \dot Z|^2+(1-\mu)^{-1}|\lbar \dot Z|^2+|\snabla \dot Z|^2 )r^{-2}\de \Sigma_{t^*}.
\end{align*}
This estimate, combined with the Cauchy--Schwarz inequality, Propositions (\ref{lem:bsrho}) and (\ref{lem:bssigma}), an application of Fubini's theorem to the spacetime terms, and the initial smallness of energy (\ref{eq:ensmallref}) yield the claim.
\end{proof}

\subsection{Removing the degeneracies}

Communting with angular momentum operators once gives the following improved version of Proposition~\ref{prop:moradegeps}

\begin{proposition}\label{prop:moraeps}
Let $l, k, b \in \N_{\geq 0}$, let $t_2^* \geq t_0^*$, let furthermore $I \in \iindiceo^b$, and assume $b + k \leq l$ and $b \geq 1$. There exist a small number $\tilde \varepsilon >0$ and a constant $C >0$ such that the following holds. Let $\fara$ be a smooth solution of the MBI system (\ref{MBI}) on $\mathcal{R}:=\regio{t^*_0}{t^*_2} \subset S_e$. Assume the bootstrap assumptions $BA\left(\mathcal{R}, 1,\left\lfloor \frac{l+4}{2}\right\rfloor , \varepsilon \right)$. Assume furthermore the smallness of initial energy (\ref{eq:ensmall}):
\begin{equation*}\label{eq:ensmall4}
\norm{\fara}^2_{H^{l+4}(\Sigma_{t_0^*})} \leq \varepsilon^2.
\end{equation*}
Here, $0 < \varepsilon < \tilde \varepsilon$. For ease of notation, denote, throughout this Proposition, $\dot Z := \dot Z_{,I, 0, k}$. Under these conditions, we have the following non-degenerate estimate on $\mathcal R$:
\begin{equation}\label{eq:morandegeps}
\begin{aligned}
 \int_{\regio{t_1^*}{t_2^*} } 
 \left\{\frac 1 {r^2}|L \dot Z|^2+ \frac{1}{r}|\snabla \dot Z|^2 +\frac 1 {r^3} |\dot Z|^2+ \frac 1 {r^2} |\hat \lbar \dot Z|^2\right\} r^{-2}\de \text{Vol}  \leq C \varepsilon^2.
\end{aligned}
\end{equation}
Furthermore, the same estimate holds with $\dot Z$ replaced by $\dot W$.
\end{proposition}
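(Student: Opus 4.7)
The plan is to commute the Fackerell--Ipser equations~\eqref{eq:zdot1mor} and~\eqref{eq:wdot1mor} with each of the three rotation Killing fields $\Omega_i \in \leo$ and apply Proposition~\ref{prop:moradegeps} to the commuted variable $\Omega_i\dot Z = \dot Z_{,\widetilde I(\Omega_i),0,k}$. Since $\Omega_i$ is Killing and commutes with the scalar wave operator on the left-hand side of~\eqref{eq:zdot1mor}, $\Omega_i\dot Z$ satisfies a Fackerell--Ipser equation of the same form, with source $\partial_\leo^{\widetilde I(\Omega_i)} T^k R_\rho$ of one higher angular order. The hypothesis $BS(\mathcal R,1,\lfloor(l+4)/2\rfloor,\varepsilon)$, one derivative above that of Proposition~\ref{prop:moradegeps}, is exactly what is needed so that Lemmas~\ref{lem:bsrho} and~\ref{lem:bssigma}, together with the uniform $L^2$ estimate of Proposition~\ref{prop:energy} at order $l+4$, bound this new source in all the norms appearing on the right-hand side of~\eqref{eq:moradegeps}.

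Applying Proposition~\ref{prop:moradegeps} to $\Omega_i\dot Z$ supplies, in particular, the non-degenerate bulk bound $\int_{\mathcal R} r^{-5}|\Omega_i\dot Z|^2\,\de\text{Vol} \leq C\varepsilon^2$. Since the three Killing fields $\{\Omega_1,\Omega_2,\Omega_3\}$ collectively span the tangent space of every $2$-sphere of constant $r$ at each point, one has the pointwise inequality $r^2|\snabla\dot Z|^2 \leq C\sum_i|\Omega_i\dot Z|^2$ with a uniform constant; summation over $i$ then upgrades the photon-sphere-degenerate term $(r-3M)^2 r^{-5}|\snabla\dot Z|^2$ in~\eqref{eq:moradegeps} to the desired non-degenerate angular control $\int_{\mathcal R} r^{-3}|\snabla\dot Z|^2\,\de\text{Vol} \leq C\varepsilon^2$.

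For the $|L\dot Z|^2/r^4$ contribution, decompose $L\dot Z = T\dot Z + \partial_{r^*}\dot Z$; the radial piece $|\partial_{r^*}\dot Z|^2/r^4 = |L\dot Z - \lbar\dot Z|^2/(4r^4)$ is already non-degenerate in~\eqref{eq:moradegeps}, while the trapping-degenerate piece $(r-3M)^2 r^{-6}|T\dot Z|^2$ is handled by the same $\Omega_i$-commutation together with a spherical Poincar\'e inequality: the hypothesis $I\in\iindiceo^b$ with $b\geq 1$ forces $T\dot Z$ to have vanishing spherical mean on every constant-$r$ sphere, so $\int_{\mathbb S^2}|T\dot Z|^2\,\desphere \leq Cr^{-2}\sum_i\int_{\mathbb S^2}|\Omega_i T\dot Z|^2\,\desphere$, whose spacetime integral is controlled by the Morawetz output for $\Omega_i\dot Z$ (combined, in the compact neighbourhood of $r=3M$, with the Fackerell--Ipser equation itself, used to trade normal for angular derivatives). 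The $|\hat\lbar\dot Z|^2/r^4$ term is then recovered from the redshift estimate, Proposition~\ref{prop:redsh} applied with $j=1$, whose bulk supplies the missing $(1-\mu)^{-2}$ weight near $\mathcal H^+$; away from the horizon $\hat\lbar$ and $\lbar$ are pointwise equivalent and the bound reduces to the previous two steps.

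The main obstacle is the simultaneous bookkeeping of derivative orders and of the nonlinear right-hand sides: commuting once with $\Omega_i$ must not cost more than one derivative in the sources of the Fackerell--Ipser equations, which is guaranteed by the algebraic structure quantified in Lemmas~\ref{lem:bsrho} and~\ref{lem:bssigma}, while the elevated initial smallness $\norm{\fara}_{H^{l+4}(\Sigma_{t_0^*})}\leq\varepsilon$ allows Proposition~\ref{prop:energy} to absorb the extra derivative. The argument for $\dot W$ is identical with $R_\sigma$ replacing $R_\rho$.
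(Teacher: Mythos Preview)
Your two main ingredients---commuting with the angular Killing fields $\Omega_i$ to remove the trapping degeneracy on $|\snabla\dot Z|^2$, and invoking the redshift estimate of Proposition~\ref{prop:redsh} for $|\hat\lbar\dot Z|^2$---are exactly the ones the paper uses. Where your argument diverges is in the treatment of $|L\dot Z|^2$ at the photon sphere $r=3M$, and this part of your proposal is both more complicated than necessary and not fully justified.

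You split $L\dot Z = T\dot Z + \partial_{r_*}\dot Z$ and try to control $|T\dot Z|^2$ near $r=3M$ by Poincar\'e, reducing to $|\Omega_i T\dot Z|^2 = |T\Omega_i\dot Z|^2$. But the Morawetz output for $\Omega_i\dot Z$ still carries the factor $(r-3M)^2$ in front of $|T\Omega_i\dot Z|^2$, so the degeneracy persists after one angular commutation; your parenthetical ``use the Fackerell--Ipser equation to trade normal for angular derivatives'' does not resolve this, since the equation gives $L\lbar\dot Z$ in terms of $\sdelta\dot Z$, not $T\dot Z$ in terms of first angular derivatives. The paper avoids this detour entirely. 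The key point you underuse is that the redshift estimate~\eqref{eq:redsh} is stated \emph{globally} on $\regio{t_1^*}{t_2^*}$, not only near $\mathcal H^+$ (cf.\ the Remark following Proposition~\ref{prop:redsh}); its bulk term $\int (1-\mu)r^{-3}|\hat\lbar\dot Z|^2\,\de u\,\de v$ therefore provides non-degenerate control of $|\hat\lbar\dot Z|^2$ in any bounded-$r$ region, including a neighbourhood of $r=3M$. Once that is in hand, $|L\dot Z|^2 \leq 2|L\dot Z-\lbar\dot Z|^2 + 2(1-\mu)^2|\hat\lbar\dot Z|^2$ closes immediately: the first summand is already non-degenerate in~\eqref{eq:moradegeps}, and the second comes from redshift. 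For large $r$ one instead combines the two Morawetz terms $|L\dot Z-\lbar\dot Z|^2$ and $(1-3M/r)^2|L\dot Z+\lbar\dot Z|^2$, which there are both of order $r^{-2}$. No Poincar\'e step on $T\dot Z$ is needed.
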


\begin{proof}
First, we notice that, if we assume the bootstrap assumption $BA\left(\mathcal{R}, 1,\left\lfloor \frac{l+4}{2}\right\rfloor, \varepsilon \right)$, as well as the smallness of energy up to $4+l$ derivatives, we have 
\begin{equation}
\begin{aligned}
\int_{\regio{t_1^*}{t_2^*} } 
\left\{\frac 1 {r^2}|L \dot Z - \lbar \dot Z|^2+ \frac{1}{r}|\snabla \dot Z|^2 +\frac{\left(1-\frac{3M}{r}\right)^2}{r^2}|L \dot Z + \lbar \dot Z|^2+\frac 1 {r^3} |\dot Z|^2\right\} r^{-2}\de \text{Vol}  \leq C \varepsilon^2.
\end{aligned}
\end{equation}
We then combine this estimate with inequality~\eqref{eq:redsh}, in which we set $l =1$. We notice that the error terms on the right hand side of~\eqref{eq:redsh} are all bounded by the assumptions of this Proposition. We have:
\begin{equation}
\begin{aligned}
\int_{\regio{t_1^*}{t_2^*} } 
\left\{\frac 1 {r^2}|L \dot Z - \lbar \dot Z|^2+ \frac{1}{r}|\snabla \dot Z|^2 +\frac{\left(1-\frac{3M}{r}\right)^2}{r^2}|L \dot Z + \lbar \dot Z|^2+\frac 1 {r^3} |\dot Z|^2+ \frac 1 {r^3} |\hat \lbar \dot Z|^2\right\} r^{-2}\de \text{Vol}  \leq C \varepsilon^2.
\end{aligned}
\end{equation}
This easily implies:
\begin{equation}
\begin{aligned}
\int_{\regio{t_1^*}{t_2^*} } 
\left\{\frac 1 {r^2}|L \dot Z|^2+ \frac{1}{r}|\snabla \dot Z|^2 +\frac 1 {r^3} |\dot Z|^2+ \frac 1 {r^2} |\hat \lbar \dot Z|^2\right\} r^{-2}\de \text{Vol}  \leq C \varepsilon^2.
\end{aligned}
\end{equation}
\end{proof}

Combining the previous Propositions, we finally obtain

\begin{proposition}\label{prop:moradef}
Let $l \in \N_{\geq 0}$. There exist a constant $C > 0$ and a small number $\tilde \varepsilon > 0$ such that the following holds. Let $t_2^* \geq t_0^*$. Let $\fara$ be a smooth solution of the MBI system (\ref{MBI}) on $\mathcal{R}:=\regio{t^*_0}{t^*_2} \subset S_e$. Assume the bootstrap assumptions $BA\left(\mathcal{R},1,\left\lfloor \frac{l+4}{2}\right\rfloor , \varepsilon \right)$. Assume furthermore the smallness of initial energy (\ref{eq:ensmall}):
\begin{equation*}
\norm{\fara}^2_{H^{l+4}(\Sigma_{t_0^*})} \leq \varepsilon^2.
\end{equation*}
Here, $0 < \varepsilon < \tilde \varepsilon$. Under these conditions, we have the following estimate on $\mathcal R$:
\begin{equation}\label{eq:moraall}
\begin{aligned}
 \sum_{\substack{b + j + k \leq l\\
		I \in \iindiceo^b, b \geq 1	}} \int_{\regio{t_1^*}{t_2^*} } 
 \left\{\frac 1 {r^2}\left(|L \hat Z_{,I, j, k}|^2+|{\hat \lbar} \hat Z_{,I, j, k}|^2\right)+ \frac 1 r |\snabla \hat Z_{,I, j, k}|^2  +\frac 1 {r^3} |\hat Z_{,I, j, k}|^2 \right\} r^{-2}\de \text{Vol}  \leq C \varepsilon^2.
\end{aligned}
\end{equation}
Furthermore, the same estimate holds with $\hat Z$ replaced by $\hat W$.
\end{proposition}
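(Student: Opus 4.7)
The plan is to combine the Morawetz estimate of Proposition~\ref{prop:moraeps}, which already furnishes the desired conclusion in the case $j=0$, with the redshift inequality of Proposition~\ref{prop:redsh}, exploiting the algebraic structure of the commuted Fackerell--Ipser equations~\eqref{eq:rcommuted} to trade top-order $\hat\lbar$--derivatives for angular and $T$-derivatives already under control.

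First I would dispose of the case $j = 0$ directly: for each $(b,k)$ with $b+k \leq l$ and $b \geq 1$, Proposition~\ref{prop:moraeps} produces the four weighted bounds in the target, and in particular (via $\hat\lbar \hat Z_{,I,0,k} = \hat Z_{,I,1,k}$) supplies the first $\hat\lbar$--commuted quantity with the correct weight $(1-\mu)/r^2$. For $j \geq 1$ I would then apply Proposition~\ref{prop:redsh} (choosing $h = l-k$ for each $k$) to produce $\int (1-\mu) r^{-3}|\hat Z_{,I,j,k}|^2\, \de u\, \de v$. The three terms on the right-hand side of~\eqref{eq:redsh} are disposed of in turn: the bulk piece $\int (1-\mu) r^{-5}|\hat Z_{,J,0,k}|^2$ is dominated by the $r^{-3}|\hat Z|^2$ bound from the $j = 0$ step; the initial-data term is absorbed by the smallness hypothesis $\norm{\fara}_{H^{l+4}(\widetilde{\Sigma}_{t_0^*})} \leq \varepsilon$; and the nonlinear source $\int (1-\mu) r^{-1} |\partial_\leo^I T^k \hat\lbar^{j-1}((1-\mu)^{-1}R_\rho)|^2$ is estimated via Lemma~\ref{lem:bsrho} (and Lemma~\ref{lem:bssigma} for the $\hat W$ estimate), whose pointwise gain of $\tau^{-2} r^{-1}$ combined with the uniform $L^2$ estimate of Proposition~\ref{prop:energy} yields an $O(\varepsilon^2)$ contribution.

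To recover the remaining three derivative terms for $j \geq 1$ at the target weights, angular derivatives are handled by absorbing an extra rotation Killing field into $I$, reducing $r^{-1}|\snabla \hat Z_{,I,j,k}|^2$ to $r^{-3}|\hat Z_{,\widetilde I, j, k}|^2$ with $\widetilde I := I \cup \{\Omega_i\}$ and invoking the previous step; the term $r^{-2}|L \hat Z_{,I,j,k}|^2$ is obtained by solving~\eqref{eq:rcommuted} algebraically for $L \hat Z_{,I,j,k}$ as a linear combination of $\hat Z_{,I,j,k}$, the angular piece $\hat\lbar^{j-1}\sdelta \hat Z_{,I,0,k}$, the lower-order $\hat Z_{,I,i,k}$ with $i < j$, and the nonlinear source, each of which is already bounded by the material assembled above.

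The main obstacle is the $\hat\lbar$--derivative term $r^{-2}|\hat Z_{,I,j+1,k}|^2$, whose weight is strictly stronger than the redshift's $(1-\mu)/r^3$ near spatial infinity, so it cannot be read off from the redshift inequality alone. I plan to handle this through a near/far--horizon split: where $r \leq R_0$ for a fixed $R_0 > 2M$, the two weights are equivalent and the redshift estimate applied at the higher index $(I, j+1, k)$ suffices; where $r \geq R_0$, using that $(1-\mu)^{-1}$ is bounded, I exploit the identity $L = 2T - (1-\mu)\hat\lbar$ to write $(1-\mu)\hat Z_{,I,j+1,k} = 2\hat Z_{,I,j,k+1} - L \hat Z_{,I,j,k}$ and bound the right-hand side by the $T$-commuted analogue of the estimate together with the $L$-bound just obtained. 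The delicate point is closing the argument at the top total level $b + j + k = l$, since the reindexing appears to cost one derivative; this is reconciled by applying~\eqref{eq:rcommuted} and the redshift inequality simultaneously so as to keep the total derivative order fixed, and is precisely where the four-derivative buffer built into the hypotheses ($H^{l+4}$ and $BS$ at level $\lfloor (l+4)/2 \rfloor$) is used.
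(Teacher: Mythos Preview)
Your proposal is correct and follows the same skeleton as the paper: the $j=0$ case is read off from Proposition~\ref{prop:moraeps}, and for $j\geq 1$ one feeds the redshift inequality of Proposition~\ref{prop:redsh}, bounding its right-hand side by the $j=0$ Morawetz bulk, the initial-data smallness, and the nonlinear bound of Lemma~\ref{lem:bsrho} (resp.~\ref{lem:bssigma}) combined with Proposition~\ref{prop:energy}. The paper's proof stops essentially there; your additional steps---extracting $L\hat Z_{,I,j,k}$ algebraically from~\eqref{eq:rcommuted}, recovering $\snabla\hat Z$ by absorbing a rotation into $I$, and handling the $\hat\lbar$-term via the near/far split using $L=2T-(1-\mu)\hat\lbar$ away from the horizon---are a legitimate and more explicit way to recover the remaining three weighted terms in~\eqref{eq:moraall}, which the paper's brief argument does not spell out. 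Your observation that the four-derivative buffer in the hypotheses absorbs the one-derivative reindexing losses at top order is also correct and is implicit in the paper's choice of assumptions.
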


\begin{proof}[Proof of Proposition~\ref{prop:moradef}]
We have, from Proposition~\ref{prop:redsh}, in the conditions given by the hypotheses,
\begin{equation*}
\begin{aligned}
& \sum_{\substack{1 \leq |I| + j \leq h\\
	I \in \iindiceo^b, b \geq 1}} \left\{
\int_{\Sigma_{t_2^*}} r^{-2} |\hat Z_{,I, j, k}|^2  r^{-2} \de \Sigma_{t^*}
+ \int_{\regio{t_1^*}{t_2^*}}  \frac{1-\mu}{r^3}|\hat Z_{,	I, j, k}|^2 \de u \de v
\right\}
\\ &
\leq
 C \sum_{\substack{1 \leq |H| \leq h-1\\ H \in \iindiceo^b, b\geq 1}} \int_{\regio{t_1^*}{t_2^*}} (1-\mu) r^{-3}  |\snabla \hat Z_{,H,0,k}|^2  \de u \de v
 +C\sum_{\substack{1 \leq |I| + j \leq h \\ I \in \iindiceo^b, b \geq 1}} \int_{\Sigma_{t_1^*}} r^{-2} |\hat Z_{,I, j, k}|^2 r^{-2} \de \Sigma_{t^*}\\ &
  +C\sum_{\substack{1 \leq |I| + j \leq h \\ I \in \iindiceo^b, b \geq 1}}\int_{\regio{t_1^*}{t_2^*} } (1-\mu)r^{-1} |\partial_{\leo}^I T^k {\hat \lbar}^{j-1}((1-\mu)^{-1}R_\rho)|^2 \de u \de v.
\end{aligned}
\end{equation*}
The terms on the right hand side are now controlled by data, the bootstrap assumptions, and by the estimate in Proposition~\ref{prop:moraeps}.
\end{proof}

We will use the following version of Proposition~\ref{prop:moradef}, valid in a region of bounded $r$:
\begin{proposition}
\label{prop:morabdd}
For all $R^* > 0$, there exist a constant $C > 0$, a small number $\tilde \varepsilon >0$, such that the following holds. Let $t_2^* \geq t_0^*$. Let $\fara$ be a smooth solution of the MBI system (\ref{MBI}) on $\mathcal{R}:=\regio{t^*_0}{t^*_2} \subset S_e$. Assume the bootstrap assumptions $BA\left(\mathcal{R},1,\left\lfloor \frac{l+4}{2}\right\rfloor , \varepsilon \right)$. Assume furthermore the smallness of initial energy~\eqref{eq:ensmall}:
\begin{equation*}
\norm{\fara}^2_{H^{l+4}(\Sigma_{t_0^*})} \leq \varepsilon^2.
\end{equation*}
Here, $0 < \varepsilon < \tilde \varepsilon$. Under these conditions, for all $\Omega \in \leo$, we have the following estimate on $\mathcal R$:
\begin{equation}\label{eq:morabdd}
\begin{aligned}
 \int_{\regio{t_1^*}{t_2^*} \cap \{r^* \leq R^*\}}   \sum_{i \leq l}
|\partial^i \Omega Z|^2\de \text{Vol}  \leq C \varepsilon^2.
\end{aligned}
\end{equation}
Furthermore, the same estimate holds with $Z$ replaced by $W$.
\end{proposition}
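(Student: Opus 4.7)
The strategy is to reduce the statement to Proposition~\ref{prop:moradef}, exploiting two features of the bounded region $\{r_\star\le R^\star\}\cap \mathcal{R}$. First, since $r$ is bounded above on this region by some $r(R^\star)<\infty$, the negative powers of $r$ appearing in the weighted Morawetz bulk~\eqref{eq:moraall} are uniformly bounded below by positive constants depending only on $R^\star$ and $M$. Second, the operators $L$, $\hat{\lbar}=(1-\mu)^{-1}\lbar$, $T=\partial_{t^*}$, and any rotation Killing field $\Omega\in \oscr$ satisfy, as scalar operators on $\mathcal{S}_e$,
\[
T = L + (1-\mu)\hat{\lbar},\qquad [T,\Omega]=[T,\hat{\lbar}]=[\hat{\lbar},\Omega]=[L,\Omega]=0,\qquad [L,\hat{\lbar}] = -\tfrac{M}{r^2}\,\hat{\lbar},
\]
as one verifies directly using that $T,\Omega$ are Killing and that $(1-\mu)^{-1}$ depends only on $r$. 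All coefficients displayed above are smooth and uniformly bounded on $\{r_\star\le R^\star\}$.

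Using these identities and iterating, the plan is to show that every product of $i$ operators drawn from $\mathcal{V}\cup \oscr=\{L,\hat{\lbar},\Omega_1,\Omega_2,\Omega_3\}$ acting on a scalar $f$ can be rewritten, modulo a finite sum of products of strictly fewer such operators with uniformly bounded smooth coefficients, as a sum of monomials $\Omega^a\hat{\lbar}^b T^c f$ with $a+b+c\le i$. Applying this to $f = \Omega Z$ and inducting on $i$ will yield the pointwise estimate
\[
|\partial^i \Omega Z|^2\;\le\; C\sum_{\substack{|J|+j+k\le i+1\\ J\in\iindiceo^b,\ b\ge 1}} |\hat Z_{,J,j,k}|^2
\]
on $\{r_\star\le R^\star\}\cap \mathcal{R}$, where $C=C(R^\star,M,l)$ and the condition $b\ge 1$ records that at least one angular derivative is always present, originating either from the distinguished $\Omega$ factor in $\Omega Z$ or from a commutator term produced in the rearrangement.

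For each $(J,j,k)$ with $|J|+j+k\le l$, the corresponding summand is directly controlled by the $r^{-3}|\hat Z_{,J,j,k}|^2 r^{-2}$ piece of~\eqref{eq:moraall}. For the boundary case $|J|+j+k=l+1$, I peel off the outermost operator: if it is $\hat{\lbar}$ or $\Omega$, the resulting expression is matched to $|\hat{\lbar}\hat Z_{,J',j',k'}|^2$ or $|\snabla\hat Z_{,J',j',k'}|^2$ respectively, with $|J'|+j'+k'\le l$ and $|J'|\ge 1$; if it is $T$, I again use $T=L+(1-\mu)\hat{\lbar}$ to rewrite it as a linear combination of $L\hat Z_{,J',j',k'}$ and $\hat{\lbar}\hat Z_{,J',j',k'}$ — all of which appear explicitly in~\eqref{eq:moraall}. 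Integrating the resulting pointwise bound over $\{r_\star\le R^\star\}\cap\regio{t_1^*}{t_2^*}$ and using the uniform lower bound on the $r$-weights then converts~\eqref{eq:moraall} into the unweighted estimate~\eqref{eq:morabdd}; the argument for $W$ is identical with $R_\rho$ replaced by $R_\sigma$. The only substantive point — essentially the entire content of the proof — is verifying that the commutation procedure terminates with strictly decreasing order and with bounded coefficients, which follows by a straightforward induction on $i$ from the explicit commutator identities above and the smoothness of $(1-\mu)$ and $1/r$ on the bounded region.
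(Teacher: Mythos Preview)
Your approach is correct and matches the paper's (implicit) argument: the paper states this proposition as ``a version of Proposition~\ref{prop:moradef}, valid in a region of bounded $r$,'' and your reduction---using that the $r$-weights in~\eqref{eq:moraall} are uniformly comparable to $1$ on $\{r_*\le R^*\}$ and that any string of operators from $\mathcal V\cup\oscr$ acting on $\Omega Z$ can be rewritten in terms of the $\hat Z_{,J,j,k}$ (with $|J|\ge 1$) via the commutation identities---is precisely what is intended. (Minor: the commutator should read $[L,\hat\lbar]=-\tfrac{2M}{r^2}\hat\lbar$, but this constant is irrelevant to the argument.)
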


\begin{remark}
	Note that, in the above Proposition, we are estimating $\Omega Z$, since we have to exclude the zeroth mode. Recall that, throughout all the propositions in Section~\ref{sec:morawetz}, we always have to assume $I \in \iindiceo^b$, with $b \geq 1$.
\end{remark}

\section{\texorpdfstring{$r^p$}{rp} estimates: first improvement of the \texorpdfstring{$r$}{r}-weight } \label{sec:pweighted1}

The goal of this section is to prove the following proposition.

\begin{proposition}\label{prop:pprel}
Let $l \in \N_{\geq 0}$. There exists a constant $C >0$ such that the following holds true. Let $t_3^* \geq t_2^* \geq t_1^* \geq  t_0^*$. Let $\fara$ satisfy the MBI system (\ref{MBI}) on $\regio{}{}:= \regio{t_0^*}{t_3^*}$, assume furthermore the bootstrap assumptions $BA\left(\mathcal{R},1,\left\lfloor \frac{l+4}{2}\right\rfloor , \varepsilon \right)$. Assume that the initial Sobolev norm satisfies the following bound:
\begin{equation*}
\norm{\fara}^2_{H^{l+4}(\Sigma_{t_0^*})} \leq \varepsilon^2.
\end{equation*}
Let $\dot Z := \dot Z_{,I,0,k}$, with $|I| + k \leq l$, $I \in \iindiceo^b$, with $b \geq 0$.
Then, the following inequality holds true:
\begin{equation}\label{eq:punoprel}
\begin{aligned}
& \int_{\Sigma_{t^*_2}\cap \{r \geq 5M\}} \left[r |L \dot Z|^2 + r|\snabla \dot Z |^2\right]r^{-2}\de \Sigma_{t^*} + \int_{\regio{t_0^*}{t_2^*}\cap \{r \geq 5M\}} \left[|L \dot Z|^2 + |\snabla \dot Z|^2 \right] (1-\mu) \de u \de v \desphere \\ &
\leq C \int_{\Sigma_{t_0^*}} \left[r |L \dot Z|^2 + r|\snabla \dot Z |^2\right] r^{-2}\de \Sigma_{t^*}+ C \varepsilon^2.
\end{aligned}
\end{equation}
The same bound holds true correspondingly for $\dot W$.
\end{proposition}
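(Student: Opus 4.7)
The plan is to apply the $r^p$-weighted multiplier method of Dafermos--Rodnianski with $p = 1$ to the commuted \fackip equation~\eqref{eq:zdot}
\[
L\lbar\dot Z - (1-\mu)\sdelta\dot Z = F, \qquad F := \partial^I_{\leo} T^k R_\rho,
\]
and analogously for $\dot W$ via~\eqref{eq:wdot}. First I would multiply by $r\,L\dot Z$ and integrate by parts on $\mathbb{S}^2$. Using $L\lbar\dot Z\cdot L\dot Z = \tfrac{1}{2}\lbar(L\dot Z)^2$, $\lbar r = -(1-\mu)$, $L\gbar^{AB} = -\tfrac{2}{r}\gbar^{AB}$, together with the crucial arithmetic identity $L(r(1-\mu)) = 1$, one arrives at the fundamental multiplier identity
\[
\tfrac{1}{2}\lbar\!\left(r(L\dot Z)^2\right) + L\!\left(\tfrac{r(1-\mu)}{2}|\snabla\dot Z|^2\right) + \tfrac{1-\mu}{2}(L\dot Z)^2 + \left(\tfrac{1}{2} - \tfrac{2M}{r}\right)|\snabla\dot Z|^2 \stackrel{\mathbb{S}^2}{=} r L\dot Z\cdot F.
\]
The coefficient $\tfrac{1}{2} - \tfrac{2M}{r}$ of $|\snabla\dot Z|^2$ is bounded below by $\tfrac{1}{10}$ precisely on $\{r\geq 5M\}$, which accounts for the threshold appearing in~\eqref{eq:punoprel}.

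Next I would integrate this identity over $\regio{t_0^*}{t_2^*}\cap\{r\geq 5M\}$ against the null volume form $du\,dv\,\desphere$. The two divergence terms produce: positive fluxes on $\Sigma_{t_2^*}\cap\{r\geq 5M\}$ matching the LHS of~\eqref{eq:punoprel}; positive bulk contributions $\int[(1-\mu)(L\dot Z)^2 + |\snabla\dot Z|^2]\,du\,dv\,d\mathbb{S}^2$ matching the remaining LHS term (since $(1-\mu)\sim 1$ on this region); data terms on $\Sigma_{t_0^*}\cap\{r\geq 5M\}$ absorbed into the first term on the RHS of~\eqref{eq:punoprel}; and boundary contributions on the timelike cylinder $\{r=5M\}\cap[t_0^*,t_2^*]$. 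The last family, supported in a fixed compact range of $r$, is bounded by $O(\varepsilon^2)$ using the non-degenerate Morawetz estimates of Propositions~\ref{prop:moradef}--\ref{prop:morabdd} (after commuting with one additional angular momentum operator in the case $|I|=0$ to reduce to the covered regime), with the spherical-average component handled separately via Proposition~\ref{prop:charge}.

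Finally I would absorb the nonlinear error by Cauchy--Schwarz with a small parameter $\eta>0$,
\[
\left|\int r\,L\dot Z\cdot F\,du\,dv\,d\mathbb{S}^2\right| \leq \eta\int(L\dot Z)^2\,du\,dv\,d\mathbb{S}^2 + C\eta^{-1}\int r^2|F|^2\,du\,dv\,d\mathbb{S}^2,
\]
absorbing the first term into the positive bulk for $\eta$ sufficiently small. For the second, Lemma~\ref{lem:bsrho} provides the pointwise estimate $r|F|\leq C\tau^{-2}|\partial^{\leq l+2}\fara|$, and converting to the spacetime volume form yields
\[
\int r^2|F|^2\,du\,dv\,d\mathbb{S}^2 \leq C\int\tau^{-4}|\partial^{\leq l+2}\fara|^2\,r^{-2}(1-\mu)^{-1}\,d\mathrm{Vol} \leq C\int_{t_0^*}^{t_2^*}\tau^{-4}\,\norm{\fara}^2_{H^{l+2}(\widetilde\Sigma_\tau)}\,d\tau \leq C\varepsilon^2,
\]
using the uniform $L^2$ bound of Proposition~\ref{prop:energy} (valid since $2\lfloor (l+4)/2\rfloor \geq l+2$) and integrability of $\tau^{-4}$. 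The argument for $\dot W$ is identical upon invoking Lemma~\ref{lem:bssigma} in place of Lemma~\ref{lem:bsrho}. The hard part lies precisely in this control of the nonlinear error: it rests on the essentially cubic-in-$\fara$ structure of the \fackip right hand sides established in Section~\ref{sec:nonlstruct}, captured by Lemma~\ref{lem:bsrho}, which delivers the $\tau^{-2}$ pointwise decay that is just integrable when paired with the uniform energy estimates.
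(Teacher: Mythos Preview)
Your approach is the paper's: the $r^p$ multiplier with $p=1$, the $L\dot Z$ term absorbed via Cauchy--Schwarz, and the nonlinear error bounded through Lemma~\ref{lem:bsrho} combined with the uniform $L^2$ control of Proposition~\ref{prop:energy}. Two small points where you diverge. First, $L\bigl(r(1-\mu)\bigr)=1-\mu$, not $1$; the correct bulk coefficient of $|\snabla\dot Z|^2$ is $(1-\mu)\bigl(\tfrac12-\tfrac{2M}{r}\bigr)$, which is still bounded below on $\{r\geq 5M\}$, so the slip is harmless. Second, rather than integrating sharply on $\{r\geq 5M\}$ and picking up a flux on the timelike cylinder $\{r=5M\}$, the paper inserts a smooth cutoff $f(r)$ vanishing on $[2M,4M]$ and equal to $1$ on $[5M,\infty)$; this trades your three-dimensional boundary term for a four-dimensional bulk error supported in $[4M,5M]$, which is controlled directly by the Morawetz bulk estimate~\eqref{eq:moradegeps} without any trace argument. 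With either device the only bad term involves $|\snabla\dot Z|^2$, which already carries an angular derivative and hence falls under the $b\geq 1$ Morawetz even when your original $I$ is empty---so the additional angular commutation and the appeal to Proposition~\ref{prop:charge} for spherical averages are unnecessary here.
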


\begin{remark}
Note that in this proposition $b$ is allowed to be $0$, differently from before. The reason is that these estimates do not ``see'' the zeroth mode.
\end{remark}

\begin{proof}[Proof of Proposition~\ref{prop:pprel}]
Let us consider the identity, which follows from Equation (\ref{eq:zdot}), valid upon integration on a sphere of constant $r$:
\begin{equation}\label{eq2.15}
 \begin{aligned}
& \lbar\left\{f(r) r^p|L \dot Z|^2 \right\} + L\left\{f(r)(1-\mu)r^p|\snabla \dot Z|^2 \right\} \\ &
- \lbar\left(f(r) r^p\right)|L \dot Z|^2 - r^2 L \left(f(r) (1-\mu)r^{p-2} \right)|\snabla \dot Z|^2 \\ & \stackrel{\mathbb{S}^2}{=}
2r^p f(r) (L \dot Z) \partial^I_{\leo} T^k (R_\rho).
 \end{aligned}
\end{equation}
We let $p =1$, and we choose $f(r)$ smooth such that $0 \leq f(r) \leq 1$, $f(r) = 0$ for $r \in [2M, 4M]$, $\frac{d}{dr} f(r) \geq 0$ for $r \in [2M, \infty)$, and $f(r)=1$ for $r \geq 5M$. We integrate identity (\ref{eq2.15}) on $\regio{t_0^*}{t_2^*} \cap \{u \leq \tilde u\} \cap \{v \leq \tilde v\}$.
Upon discarding the positive boundary terms, after an application of inequality~\eqref{eq:moradegeps} to bound the error term in the region $r \in [4M, 5M]$, we obtain
\begin{equation}\label{eq:rpuno}
\begin{aligned}
& \int_{\Sigma_{t^*_2}\cap \{r \geq 5M\}} \left[r |L \dot Z|^2 + r|\snabla \dot Z |^2\right]r^{-2}\de \Sigma_{t^*}\\ &
+\int_{\regio{t_0^*}{t_2^*} \cap \{r \geq 5M\}} \left[|L \dot Z|^2 + |\snabla \dot Z|^2 \right] (1-\mu) \de u \de v \desphere \leq 
C \int_{\Sigma_{t_0^*}} (r |L \dot Z|^2) r^{-2}\de \Sigma_{t^*}\\ &
+ C \int_{\regio{t_0^*}{t_2^*}\cap \{r \geq 4M\}} \left[ r |L\dot Z| \cdot |\partial^I_{\leo} T^k (R_\rho)| \right] (1-\mu) \de u \de v \desphere\\
&  + C \int_{\regio{t_0^*}{t_2^*}\cap \{ 4M \leq r \leq 5M\}} |\snabla \dot Z|^2 \de u \de v \desphere + C \varepsilon^2.
\end{aligned}
\end{equation}
We now use the Cauchy--Schwarz inequality with a small parameter $\eta > 0$ to bound the second term in the RHS of~\eqref{eq:rpuno}, ``hiding'' the term in $L \dot Z$ into the first term of the second integral in the LHS:
\begin{equation*}
\begin{aligned}
& \int_{\regio{t_0^*}{t_2^*}\cap \{r \geq 4M\}} \left[ r |L\dot Z| \cdot |\partial^I_{\leo} T^k (R_\rho)| \right] (1-\mu) \de u \de v \desphere \\
 &\leq C \int_{\regio{t_0^*}{t_2^*}\cap \{4M \leq r \leq 5M\}} |L\dot Z|^2  (1-\mu) \de u \de v \desphere +\\
 &\qquad + \eta  \int_{\regio{t_0^*}{t_2^*}\cap \{r \geq 5M\}}  |L\dot Z|^2(1-\mu) \de u \de v \desphere + C  \int_{\regio{t_0^*}{t_2^*}\cap \{r \geq 4M\}}r^2 |\partial^I_{\leo} T^k (R_\rho)|^2  (1-\mu) \de u \de v \desphere.
\end{aligned}
\end{equation*}
We then bound the resulting error term by use of the inequality 
\begin{equation}
 \int_{\regio{t_0^*}{t_2^*}\cap \{r \geq 4M\}}r^2 |\partial^I_{\leo} T^k (R_\rho)|^2 \de u \de v \desphere \leq C \varepsilon^2,
\end{equation}
The last bound indeed follows from Lemma~\ref{lem:bsrho}, plus Lemma~\ref{lem:bssigma}, plus the $L^2$ bounds of Proposition~\ref{prop:energy}. Finally, the terms localized to $\{4M \leq r \leq 5M\}$ are bounded by means of the degenerate Morawetz estimate of Proposition~\ref{prop:moradegeps}.
\end{proof}

\section{Improving the \texorpdfstring{$r$}{r}-weights on the \texorpdfstring{$L^2$}{L2} norms of \texorpdfstring{$\rho$}{rho}, \texorpdfstring{$\sigma$}{sigma} and \texorpdfstring{$\alpha$}{alpha} on \texorpdfstring{$\Sigma_{t^*}$}{Sigma}}\label{sec:l2improved}

In this section we are going to prove stronger (in terms of $r$-weights) integrated estimates for the ``good'' components, i.e. $\alpha$, $\rho$, $\sigma$. This will enable us to prove the correct decay rates in order to close the bootstrap argument.

\subsection{Improved \texorpdfstring{$L^2$}{L2} estimates for \texorpdfstring{$\alpha$}{alpha}}
In this section, we wish to use the equation satisfied by $\alpha$ (\ref{eq:alpha}) plus the $r^p$ estimates (with $p=1$) obtained in the previous section in order to improve the weight on the $L^2$ estimates for $\alpha$.
Recall the definition of $\hat \lbar := (1-\mu)^{-1}\lbar$.

\begin{proposition}[Improved weights on fluxes of $\alpha$]\label{prop:impalpha} Let $l \in \N_{\geq 0}$. There exist a constant $C$ and a small number $\tilde \varepsilon > 0$ such that the following holds. Let $t_3^* \geq t_2^* \geq t_1^* \geq  t_0^*$. Let $\fara$ satisfy the MBI system (\ref{MBI}) on $\regio{}{}:= \regio{t_0^*}{t_3^*}$. Assume furthermore the bootstrap assumptions $BA\left(\mathcal{R},1,\left\lfloor \frac{l+4}{2}\right\rfloor , \varepsilon \right)$. Assume that the initial Sobolev norm satisfies the following bound:
\begin{equation*}
\norm{\fara}^2_{H^{l+4}(\Sigma_{t_0^*})} \leq \varepsilon^2.
\end{equation*}
Here, $0 < \varepsilon < \tilde \varepsilon$. Then, the following inequality holds true:
\begin{equation}\label{eq:alphal2def}
\begin{aligned}
&\int_{\Sigma_{t^*_2}} r |\partial^m \alpha|^2 \de \Sigma_{t^*} \leq C \varepsilon^2 + C\sum_{h = 0}^m \int_{\Sigma_{t^*_0}} r |\partial^h \alpha|^2 \de \Sigma_{t^*}\\
&+ C \sum_{\substack{|I|+k \leq m \\ I \in \iindiceo^b, b \geq 0}} \int_{\Sigma_{t_0^*}} \left[r |L \dot Z_{,I,0,k}|^2 + r|\snabla \dot Z_{,I,0,k} |^2\right] r^{-2}\de \Sigma_{t^*} \\
&+ C \sum_{\substack{|I|+k \leq m\\ I \in \iindiceo^b, b \ge 0}} \int_{\Sigma_{t_0^*}} \left[r |L \dot W_{,I,0,k}|^2 + r|\snabla \dot W_{,I,0,k} |^2\right] r^{-2}\de \Sigma_{t^*} 
\end{aligned}
\end{equation}
for all $m \leq l$, $t_3^* \geq t_2^* \geq t_0^*$.
\end{proposition}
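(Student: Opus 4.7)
The plan is to exploit the MBI transport equations~\eqref{eq:transpd} to express the divergence and curl of $\alpha$ in terms of quantities already controlled by the first $r^p$-estimate~\eqref{eq:punoprel}, and then to recover $\alpha$ itself via Hodge theory on the spheres of constant $r$.

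More precisely, rearranging~\eqref{eq:transpd} yields
$$r^2 \dive \alpha = L Z - r^2 H_\Delta\indices{^\mu_L^\kappa^\lambda} \nabla_\mu \fara_{\kappa\lambda}, \qquad r^2 \curl \alpha = -L W,$$
so that $\dive \alpha$ and $\curl \alpha$ are controlled pointwise by $LZ$, $LW$ and a nonlinear error term. Since $\mathbb{S}^2$ carries no harmonic $1$-forms, Hodge theory on each sphere of radius $r$ delivers an elliptic (Poincar\'e-type) inequality
$$\int_{\mathbb{S}^2} |\alpha|^2 \desphere \leq C \int_{\mathbb{S}^2} r^2 \bigl(|\dive \alpha|^2 + |\curl \alpha|^2\bigr) \desphere,$$
valid pointwise in $(t^*, r)$, which translates the integrated control we already possess on $LZ$, $LW$ into integrated control on $\alpha$ itself. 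In particular, no separate treatment of the spherical zero-mode is needed for $\alpha$, in contrast with $\rho$ and $\sigma$ (cf.~Proposition~\ref{prop:charge}).

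In the region $r \geq 5M$, I would then combine these two ingredients to bound $\int r |\alpha|^2 \de \Sigma_{t^*}$ by $\int r^{-1}(|LZ|^2+|LW|^2) \de \Sigma_{t^*}$ plus a nonlinear remainder; the linear piece is controlled by Proposition~\ref{prop:pprel} applied to $\dot Z_{,I,0,k}$ and $\dot W_{,I,0,k}$, while the nonlinear term is estimated using the bootstrap together with the schematic bounds in Lemmas~\ref{lem:bsrho}--\ref{lem:bssigma}. In the bounded region $r \leq 5M$, the truncated Morawetz estimate~\eqref{eq:morabdd} of Proposition~\ref{prop:morabdd} combined with the uniform $L^2$-bound of Proposition~\ref{prop:energy} absorbs all the remaining contributions. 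Taken together, this establishes the estimate for $m = 0$.

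For $m \geq 1$, one commutes the transport equations with $T$, with the elements of $\leo$, and---near $\mathcal{H}^+$---with the redshift field $Y$; commutators with $T$ and $\leo$ are clean since these are Killing, while the commutators with $Y$ produce lower-order terms absorbed by Proposition~\ref{prop:liecontrol}. The resulting nonlinear right-hand sides have the same schematic structure as $R_\rho$, $R_\sigma$ and can be bounded by the same argument. The main obstacle I anticipate is the derivative accounting: $R_\rho$, $R_\sigma$ formally involve two derivatives of $\fara$, so differentiating them $m$ times would naively require $m+2$ derivatives of $\fara$, threatening to exceed the $l+4$-derivative budget in the hypothesis. This is overcome exactly as in the proofs of Lemmas~\ref{lem:bsrho}--\ref{lem:bssigma}: the quadratic null structure of $H_\Delta$ together with the bootstrap always allows at least one factor to be placed in $L^\infty$, keeping the effective derivative count on $\fara$ within the allowed range.
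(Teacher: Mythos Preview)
Your Hodge-on-spheres approach for $m=0$ is correct and genuinely different from the paper's argument. The paper does \emph{not} use the div/curl relations~\eqref{eq:transpd}; instead it integrates the $\lbar$-transport equation~\eqref{eq:alpha} for $r\alpha$ over the spacetime slab: multiplying $\snabla_\lbar(r\alpha_A)$ by $f(r)\,r\alpha^A$ produces a flux $\int_{\Sigma_{t_2^*}} r|\dot\alpha|^2$ bounded by the spacetime bulk $\int_{\regio{}{}}(|\snabla \dot Z|^2+|\snabla \dot W|^2)$, which is precisely the bulk term in Proposition~\ref{prop:pprel}. Your route stays on a single time slice and feeds instead on the $|L\dot Z|^2$, $|L\dot W|^2$ \emph{flux} terms of that same proposition. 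Both arguments close; yours avoids a spacetime integration but invokes elliptic Hodge theory on $\mathbb{S}^2$, while the paper's is purely transport-based.

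Your higher-order plan has a gap: commuting only with $T$, $\leo$, and $Y$ does not capture $\hat\lbar$-derivatives of $\alpha$ in the far region $\{r\geq 5M\}$, where $Y\equiv 0$. If you push your Hodge argument through for $\snabla_\lbar^j\alpha$, the right-hand side becomes $L\lbar^j Z$, which Proposition~\ref{prop:pprel} does not cover (there $j=0$ only); one must then invoke the Fackerell--Ipser equation to trade $L\lbar$ for $(1-\mu)\sdelta$, costing two angular derivatives and extra bookkeeping. The paper's Step~2 sidesteps this by using the $\lbar$-transport equation~\eqref{eq:alpha} \emph{pointwise} rather than integrated: one expresses $\snabla_\lbar^j\alpha$ algebraically in terms of $r^{-1}\snabla_\lbar^{h}\alpha$ (for $h<j$), $r^{-1}\cdot r\snabla\snabla_\lbar^h\rho$, $r^{-1}\cdot r\snabla\snabla_\lbar^h\sigma$, and a nonlinear remainder. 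The gained factor of $r^{-1}$ means these contributions are bounded directly by the uniform $L^2$ control of Proposition~\ref{prop:energy}, with no further call on the $r^p$-hierarchy.
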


\begin{remark}
Estimate~\eqref{eq:alphal2def} ``loses one derivative'', as we have $m$ derivatives of $\alpha$ on the LHS, but $m+1$ derivatives of $Z$ and $W$ on the RHS. This is acceptable because the boundary terms in $Z$ and $W$ on the RHS has been estimated using inequality~\eqref{eq:punoprel}, which is in turn essentially obtained by an $r^p$ estimate performed between two spacelike slices. In particular, we obtained control on \emph{high-order} derivatives of spacelike fluxes of $Z$ and $W$, with weights in $r$, from estimate~\eqref{eq:punoprel}. Using this inequality, combined with inequality~\eqref{eq:alphal2def}, gives control on fluxes of $\alpha$ with strong $r$-weights and \emph{high-order} derivatives. Since the control is high-order in the number of derivatives, loss of only one derivative will not be an issue when closing the estimates.
\end{remark}

\begin{proof}
First of all, in \textbf{Step 1}, we derive the zeroth-order estimates, which are also valid for derivatives in direction of the Killing fields. Then, in \textbf{Step 2}, we will commute the equation by $\snabla_\lbar$ to obtain higher-order control on the flux of $\snabla_\lbar \alpha$.

Let us recall that $\alpha$ satisfies the following equation:
\begin{equation}\label{eq:alpharef}
\snabla_\lbar(r \alpha_A) = r(1-\mu)(\snabla_A \rho + \svol_{AB}\snabla^B \sigma ) + r(1-\mu)(\hdelta)_{\mu A \kappa \lambda} \nabla^{\mu} \fara^{\kappa\lambda}.
\end{equation}

\subsection*{Step 1}
We apply the operator $\slie_{\leo}^I \snabla_T^k$, with $|I|+k \leq l$, to both sides of the previous display. We have, denoting $\dot \alpha := \slie^I_{\leo} \snabla_T^k \alpha$,
\begin{equation}\label{eq:alpharefcom}
\snabla_\lbar(r \dot \alpha_A) = r(1-\mu)(r^{-2 }\snabla_A Z_{,I,0,k} + r^{-2}\svol_{AB}\snabla^B W_{I,0,k} ) + (1-\mu)\slie^I_{\leo} \snabla_T^k \left(r(\hdelta)_{\mu A \kappa \lambda} \nabla^{\mu} \fara^{\kappa\lambda}\right).
\end{equation}
Let $f(r)$ be smooth, nondecreasing, such that $0 \leq f(r) \leq 1$, $f(r) = 0$ for $r \in [2M, 5M]$, and $f(r)=1$ for $r \geq 6M$. We then have, from (\ref{eq:alpharef}), letting $\eta >0$ be a small parameter,
\begin{equation*}
\begin{aligned}
&\lbar(r f(r) |r \alpha|^2) = \lbar( r f(r) ) r^2 |\alpha|^2 + 2 f(r) r (r \alpha^A) \snabla_\lbar (r \alpha_A) \\ & =
\lbar( r f(r)) r^2 |\alpha|^2 + 2 f(r) r^2 \alpha^A \left(r(1-\mu)(\snabla_A \rho + \svol_{AB}\snabla^B \sigma ) + r(1-\mu)(\hdelta)_{\mu A \kappa \lambda} \nabla^{\mu} \fara^{\kappa\lambda} \right) \\ &
\leq - (1-\mu) f(r)r^2 |\alpha|^2 + \eta (1-\mu) r^2 f(r) |\alpha|^2 + \eta^{-1} f(r) (1-\mu)  (|\snabla Z|^2 + |\snabla W|^2)\\
& + r^{4} f(r) (1-\mu)   |(\hdelta)_{\mu A \kappa \lambda} \nabla^{\mu} \fara^{\kappa\lambda}|^2.
\end{aligned}
\end{equation*}
Similarly, from (\ref{eq:alpharefcom}), we have, again with $\eta >0 $ a small parameter,
\begin{equation*}
\begin{aligned}
& \lbar(r f(r) |r \dot \alpha|^2) = \lbar( r f(r) ) r^2 |\dot \alpha|^2 + 2 f(r) r (r \dot \alpha^A) \snabla_\lbar (r \dot \alpha_A) \\ & 
\leq - (1-\mu) f(r)r^2 |\dot \alpha|^2 +2 \eta (1-\mu) r^2 f(r) |\dot \alpha|^2 \\ &
+ \eta^{-1} f(r) (1-\mu)  (|\snabla \dot Z_{,I,0,k}|^2 + |\snabla \dot W_{,I,0,k}|^2) + \eta^{-1}r^{4} f(r) (1-\mu)   |\slie_{\leo}^I \snabla_T^k \left((\hdelta)_{\mu A \kappa \lambda} \nabla^{\mu} \fara^{\kappa\lambda}\right)|^2.
\end{aligned}
\end{equation*}
Integrating such equation on $\regio{t^*_1}{t^*_2}$, using the volume form $\de u \de v \desphere$, we have then

\begin{equation*}
\begin{aligned}
& \int_{\Sigma_{t^*_2} \cap \{r \geq 6M\}} r |\dot \alpha|^2 \de \Sigma_{t^*} + \int_{\regio{t^*_1}{t^*_2}\cap\{r \geq 6M\}}   |\dot \alpha|^2\de \text{Vol}\\ &
\leq C \int_{\regio{t^*_1}{t^*_2}\cap \{r \geq 5M\}} \left( \underbrace{(1-\mu)  (|\snabla \dot Z_{,I,0,k}|^2 + |\snabla \dot W_{,I,0,k}|^2)}_{(*)} + r^2 (1-\mu) |\slie_{\leo}^I \snabla_T^k \left(r(\hdelta)_{\mu A \kappa \lambda} \nabla^{\mu} \fara^{\kappa\lambda}\right)|^2\right)\de u \de v \desphere\\ &
+ C\int_{\Sigma_{t^*_1}\cap \{r \geq 5M\}} r |\alpha|^2 \de \Sigma_{t^*}.
\end{aligned}
\end{equation*}
By (\ref{eq:punoprel}), we can bound the terms $(*)$. The other terms can be estimated by the uniform $L^2$ control of Equation~(\ref{eq:unifl2}):
\begin{equation*}
\norm{\fara}^2_{H^{l+4}(\Sigma_{t^*})} \leq \varepsilon^2, \qquad \text{for } t^* \in [t_0^*, t_3^*],
\end{equation*}
plus the bootstrap assumptions.

We finally obtain:
\begin{equation}\label{eq:alphal2pre}
\int_{\Sigma_{t^*_2}} r |\dot \alpha|^2 \de \Sigma_{t^*} \leq C \varepsilon^2 + \int_{\Sigma_{t^*_0}} r |\dot \alpha|^2 \de \Sigma_{t^*}.
\end{equation}
\subsection*{Step 2}
We would now like to estimate the $\lbar$-derivatives. We let $j \leq l$. We use again Equation (\ref{eq:alpharef}) in order to deduce
\begin{equation*}
\snabla^j_\lbar \alpha_A = \snabla_\lbar^{j-1}\left(\frac{1-\mu}{r} \alpha_A \right)
+ \snabla^{j-1}_\lbar \left( \frac{1-\mu}{r}  \left[(r \snabla)_A \rho + r\svol_{AB} \snabla^B \sigma \right]\right) + \snabla^{j-1}_\lbar\left( (1-\mu)(\hdelta)_{\mu A \kappa \lambda} \nabla^{\mu} \fara^{\kappa\lambda}\right).
\end{equation*}
This implies, taking absolute values, on the region $\{r \geq 5M\}$,
\begin{equation*}
|\snabla^j_\lbar \alpha|^2 \leq C \frac 1 {r^2} \sum_{h=0}^{j-1} |\snabla_\lbar^h \alpha|^2 + \frac C {r^2} \sum_{h=0}^{j-1} (|r\snabla \snabla_\lbar^h \rho|^2+|r\snabla \snabla_\lbar^h \sigma|^2) + |\snabla_\lbar^{j-1} \left((1-\mu)(\hdelta)_{\mu A \kappa \lambda} \nabla^{\mu} \fara^{\kappa\lambda}\right)|^2.
\end{equation*}
Integration on the surface $\Sigma_{t_2^*}$ yields
\begin{equation*}
\begin{aligned}
&\int_{\Sigma_{t^*_2}} r|\snabla^j_\lbar \alpha|^2 \de \Sigma_{t^*} 
\leq C\int_{\Sigma_{t^*_2}} \frac 1 {r} \sum_{h=0}^{j-1} |\snabla_\lbar^h \alpha|^2 \de \Sigma_{t^*}\\
&+C \int_{\Sigma_{t^*_2}}  \frac 1 {r} \sum_{h=0}^{j-1} (|r\snabla \snabla_\lbar^h \rho|^2+|r\snabla \snabla_\lbar^h \sigma|^2) \de \Sigma_{t^*}
+ C \int_{\Sigma_{t^*_2}} r|\snabla_\lbar^{j-1} \left((1-\mu)(\hdelta)_{\mu A \kappa \lambda} \nabla^{\mu} \fara^{\kappa\lambda}\right)|^2 \de \Sigma_{t^*}.
\end{aligned}
\end{equation*}
We now use the bootstrap assumptions and the uniform $L^2$ bounds on $\sigma$, $\rho$ arising from inequality~(\ref{eq:unifl2}):
\begin{equation}\label{eq:l2bdref}
\norm{\fara}^2_{H^{l+4}(\Sigma_{t^*})} \leq \varepsilon^2, \qquad \text{for } t^* \in [t_0^*, t_3^*],
\end{equation}
to deduce
\begin{align}
\label{eq:withflux1} \sum_{h = 0}^{l} \int_{\Sigma_{t^*_2}} r(|\snabla \snabla_\lbar^h \rho|^2 + |\snabla \snabla_\lbar^h \sigma|^2 )\de\Sigma_{t^*}  &\leq C \varepsilon^2,\\
\label{eq:withflux2} \sum_{h = 1}^{l} \int_{\Sigma_{t^*_2}} r|\snabla_\lbar^{h-1} \left((1-\mu)(\hdelta)_{\mu A \kappa \lambda} \nabla^{\mu} \fara^{\kappa\lambda}\right)|^2 \de \Sigma_{t^*} &\leq C \varepsilon^2.
\end{align}
Estimate~\eqref{eq:withflux1} follows from the fact that
$$
\sum_{h = 0}^{l} \int_{\Sigma_{t^*_2}} r(|\snabla \snabla_\lbar^h \rho|^2 + |\snabla \snabla_\lbar^h \sigma|^2 )\de\Sigma_{t^*} \leq C  \int_{\Sigma_{t^*_2}} (|\partial^{\leq l+1} \rho|^2 +|\partial^{\leq l+1} \sigma|^2 )\de\Sigma_{t^*}  \leq C \norm{\fara}^2_{H^{l+4}(\Sigma_{t^*})}.
$$
The last inequality in the display above follows from the fact that $ \norm{\fara}^2_{H^{l+4}(\Sigma_{t^*})}$ clearly controls the norms expressed in terms of null components (see inequality~\eqref{eq:fnormnull} in Remark~\ref{rmk:fnormnull}). On the other hand, estimate~\eqref{eq:withflux2} follows from the bootstrap assumptions and the bound~\eqref{eq:l2bdref}.

An induction argument then implies, together with the uniform $L^2$ estimates (\ref{eq:l2bdref}), for $0 \leq m \leq l$:
\begin{equation}\label{eq:alphal2der}
\sum_{h = 0}^{m} \int_{\Sigma_{t^*_2}} r |\snabla^h_\lbar \alpha|^2 \de \Sigma_{t^*} \leq C \varepsilon^2 + C\sum_{h = 0}^{m}  \int_{\Sigma_{t^*_0}} r |\partial^h \alpha|^2 \de \Sigma_{t^*}.
\end{equation}
Recall the definition of $\hat{\lbar} := (1-\mu)^{-1} \lbar$. By the $L^2$ bounds (\ref{eq:l2bdref}) and estimate (\ref{eq:alphal2der}), we remove the degeneracy at $r = 2M$ and obtain
\begin{equation}\label{eq:alphal2higher}
\sum_{h = 0}^{m}  \int_{\Sigma_{t^*_2}} r |\snabla^h_{\hat \lbar} \alpha|^2 \de \Sigma_{t^*} \leq C \varepsilon^2.
\end{equation}
It is now standard to extend the bound to all the mixed derivatives of $\alpha$ of order $m \leq l$. Finally, the equivalence of norms stated in Proposition~\ref{norm.equiv}, implies the claim.
\end{proof}

\subsection{Improved \texorpdfstring{$L^2$}{L2} estimates for \texorpdfstring{$\sigma$}{sigma} and \texorpdfstring{$\rho$}{rho}}
In this subsection, we wish to use inequality \eqref{eq:punoprel} in order to improve the $r$-weight on the $L^2$ estimates for $\rho$ and $\sigma$ on the spacelike foliation $\Sigma_{t^*}$. First of all, notice that estimate~\eqref{eq:punoprel} already constituted an improvement on the $L^2$ estimates for $\sigma$ and $\rho$ in Proposition~\ref{prop:energy}. In fact, estimate~\eqref{eq:unifl2} in Proposition~\ref{prop:energy} gives, assuming the bootstrap assumptions $BA\left(\mathcal{R},1,\left\lfloor \frac{l+4}{2}\right\rfloor , \varepsilon \right)$,
$$
 \int_{\widetilde{\Sigma}_{t^*}} \big( |\partial^{\leq l+4} \rho|^2 + |\partial^{\leq l+4} \sigma|^2 \big) \de \widetilde \Sigma_{t^*} \leq C \varepsilon^2.
 $$
On the other hand, estimate~\eqref{eq:punoprel} gives us, schematically,
$$
 \sum_{\substack{b+k \leq  l \\ I \in \mathscr{I}^b_{\leo}}} \int_{\widetilde{\Sigma}_{t^*}} r |L \partial_{\leo}^I  T^k \rho|^2 \de \widetilde \Sigma_{t^*}  + \sum_{\substack{b+k \leq l+1 \\ b \geq 1 \\ I \in \mathscr{I}^b_{\leo}}} \int_{\widetilde{\Sigma}_{t^*}} r|\partial_{\leo}^I  T^k \rho|^2  \de \widetilde \Sigma_{t^*}  \leq C \varepsilon^2.
 $$
 Hence, we see the improvement in the $r$-weight, and we also see that the estimate~\eqref{eq:punoprel} only controls higher-order derivatives in the directions of $T$ and the angular directions. Therefore, the purpose of Proposition~\ref{prop:imprhosigma} will be to extend this (improved in $r$) control to all high-order derivatives of $\rho$ (and $\sigma$), including the directions parallel to $\lbar$.

\begin{proposition}[Improved weights on fluxes of $\rho$ and $\sigma$]\label{prop:imprhosigma}
There exist a small number $\tilde \varepsilon > 0$ and a constant $C >0$ such that the following holds. Let $t_3^* > t_2^*\geq t_1^* \geq t_0^*$. Let $\regio{}{}:= \regio{t_0^*}{t_3^*}$. Let $\fara$ satisfy the MBI system (\ref{MBI}) on $\regio{}{}$, assume furthermore the bootstrap assumptions $BA\left(\mathcal{R},1,\left\lfloor \frac{l+4}{2}\right\rfloor , \varepsilon \right)$. Assume also that the initial Sobolev norm satisfies the following bound:
\begin{equation*}
\norm{\fara}^2_{H^{l+4}(\Sigma_{t_0^*})} \leq \varepsilon^2.
\end{equation*}
Here, $0 < \varepsilon < \tilde \varepsilon$. Then, the following inequality holds true:
\begin{equation}\label{eq:rhosimproved}
\begin{aligned}
&\int_{\Sigma_{t^*_2}} r |\partial^m \rho|^2 \de \Sigma_{t^*} + \int_{\Sigma_{t^*_2}} r |\partial^m \sigma|^2 \de \Sigma_{t^*} \leq C \varepsilon^2 \\
&\qquad + \sum_{\substack{|I| + k \leq m-1\\ I \in \iindiceo^b, b \geq 0}} \int_{\Sigma_{t_0^*}} \left[r |L \dot Z_{,I,0,k}|^2 + r|\snabla \dot Z_{,I,0,k} |^2\right] r^{-2} \de \Sigma_{t^*} \\
&\qquad + \sum_{\substack{|I| + k \leq m-1\\ I \in \iindiceo^b, b \geq 0}} \int_{\Sigma_{t_0^*}} \left[r |L \dot W_{,I,0,k}|^2 + r|\snabla \dot W_{,I,0,k} |^2\right]r^{-2}\de \Sigma_{t^*},
\end{aligned}
\end{equation}
for all $1 \leq m \leq l$.
\end{proposition}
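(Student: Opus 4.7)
The plan is to derive \eqref{eq:rhosimproved} by directly applying Proposition~\ref{prop:pprel} to the variations $\dot Z_{,I,0,k}$ and $\dot W_{,I,0,k}$ for $|I|+k \leq m-1$ with $I \in \iindiceo^b$, $b \geq 0$, combined with spherical Poincar\'e inequalities and the control of spherical averages from Proposition~\ref{prop:charge}. Unlike in the proof of Proposition~\ref{prop:impalpha}, where the $\snabla_\lbar$-transport equation for $\alpha$ played a central role, for $\rho$ and $\sigma$ the analogous $\hat\lbar$-direction transport equation involves $\snabla \alphabar$, for which no improved decay is available; we therefore proceed by a different route.

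Observing that $\dot Z_{,I,0,k} = r^2\, \partial^I_{\leo}T^k\rho$, the Leibniz identities $L\dot Z_{,I,0,k} = r^2\, L(\partial^I_{\leo}T^k\rho) + r(1-\mu)\,\partial^I_{\leo}T^k\rho$ and $|\snabla \dot Z_{,I,0,k}|^2 = r^2|\Omega\,\partial^I_{\leo}T^k\rho|^2$ convert the bound $\int r^{-1}(|L\dot Z|^2+|\snabla \dot Z|^2)\de\Sigma_{t^*}\leq C\varepsilon^2 + (\text{data})$ coming from \eqref{eq:punoprel} into weighted bounds on $r^2 L(\partial^I_{\leo}T^k\rho)$ and on $\Omega\,\partial^I_{\leo}T^k\rho$ over $\Sigma_{t^*_2}\cap\{r\geq 5M\}$. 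Applying the spherical Poincar\'e inequality at each fixed $r$, together with Proposition~\ref{prop:charge} to bound the spherical average $|\overline{\partial^I_{\leo}T^k\rho}|\leq C\varepsilon r^{-3}\tau^{-2}$, yields
\begin{equation*}
\int_{\Sigma_{t^*_2}\cap\{r\geq 5M\}} r\,|\partial^I_{\leo}T^k\rho|^2\de\Sigma_{t^*}\leq C\varepsilon^2+(\text{data}),
\end{equation*}
and substituting this back into the first identity gives the stronger $r^3$-weighted control $\int r^3|L\,\partial^I_{\leo}T^k\rho|^2\de\Sigma_{t^*}\leq C\varepsilon^2+(\text{data})$. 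The analogous argument applied to $\dot W_{,I,0,k}$ yields the same bounds for $\sigma$. This settles \eqref{eq:rhosimproved} for all derivatives in the $\{T,L,\snabla\}$ directions of total order at most $m$.

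To recover mixed derivatives involving $\hat\lbar$, I would exploit the identity $\hat\lbar = (1-\mu)^{-1}(2T-L)$. In $\{r\geq 5M\}$ both $(1-\mu)^{-1}$ and the commutator coefficient $L(1-\mu)^{-1}=-M/(r^2(1-\mu))$ are uniformly bounded, so iterating this identity reexpresses any $j$-fold $\hat\lbar$-derivative as a linear combination, with $r$-smooth bounded coefficients, of derivatives involving only $T$ and $L$ of total order at most $j$. Combined with the previous step, this yields $r$-weighted flux bounds on all $m$-th order mixed derivatives of $\rho$ and $\sigma$ in the asymptotic region. In the bounded region $\{r_{\text{in}}\leq r\leq 5M\}$ the weight $r$ is bounded above, so the uniform $L^2$ bound from Proposition~\ref{prop:energy} gives
\begin{equation*}
\int_{\Sigma_{t^*_2}\cap\{r\leq 5M\}} r\,|\partial^{\leq m}\rho|^2\de\Sigma_{t^*}\leq 5M\,\norm{\fara}^2_{H^m(\Sigma_{t^*_2})}\leq C\varepsilon^2,
\end{equation*}
and analogously for $\sigma$.

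Combining the far- and bounded-$r$ regions and invoking the pointwise norm equivalence of Proposition~\ref{norm.equiv} to pass between Lie, covariant, and $\partial^{\leq m}$-type derivatives completes the argument. The main technical obstacle is the careful application of the spherical Poincar\'e inequality together with Proposition~\ref{prop:charge} to extract undifferentiated control of $\rho$ (and $\sigma$) from the angular derivative of $\dot Z$ (and $\dot W$), and the bookkeeping of the various factors of $r$ and $(1-\mu)^{-1}$ appearing when converting between $Z,W$ and $\rho,\sigma$ and when substituting $\hat\lbar$ by $(1-\mu)^{-1}(2T-L)$; these are however routine given the framework already developed.
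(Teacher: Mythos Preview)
Your proposal contains a genuine gap. The claim ``This settles \eqref{eq:rhosimproved} for all derivatives in the $\{T,L,\snabla\}$ directions of total order at most $m$'' is not justified by the preceding argument. From Proposition~\ref{prop:pprel} applied with $|I|+k\leq m-1$ you obtain $r$-weighted control on $\Omega_j\,\partial^I_\leo T^k\rho$ (hence on $\partial^J_\leo T^k\rho$ with $|J|+k\leq m$, $|J|\geq1$), on $\partial^I_\leo T^k\rho$ via Poincar\'e and charge, and on $L\,\partial^I_\leo T^k\rho$. But you never obtain $L^2\rho$, $L^3\rho$, or any derivative with two or more $L$'s. The flux estimate~\eqref{eq:punoprel} gives exactly one $L$-derivative; iterating it would require applying Proposition~\ref{prop:pprel} to $L\dot Z$, which does not satisfy the commuted Fackerell--Ipser equation~\eqref{eq:zdot} with $j=0$.

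This gap is fatal for your reduction of $\hat\lbar$-derivatives. Substituting $\hat\lbar=(1-\mu)^{-1}(2T-L)$ and expanding $\hat\lbar^j$ produces, up to lower-order commutator terms, all monomials $L^a T^b$ with $a+b\leq j$; in particular you would need $L^j\rho$ with the weight $r$, which you do not have. (Note also that $T^m\rho$ with no angular derivative is not covered either, since the Poincar\'e step requires at least one $\Omega$ and applying Proposition~\ref{prop:pprel} at level $|I|+k=m$ would put data terms at level $m$ rather than $m-1$ on the right-hand side.)

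Your dismissal of the $\hat\lbar$-transport equation is in fact misguided, and the paper's proof proceeds precisely through the transport equations~\eqref{eq:transpun}--\eqref{eq:transpd}. The point you miss is that those equations involve $\dive\alpha$ and $\dive\alphabar$, not $\alpha,\alphabar$ directly: the spherical divergence carries a factor $r^{-1}$ in the natural norms, so that $\int_{\Sigma_{t^*}} r\,|\partial^{\leq h}(\dive\alphabar)|^2\,\de\Sigma_{t^*}\lesssim\int_{\Sigma_{t^*}}r^{-1}|\partial^{\leq h+1}\alphabar|^2\,\de\Sigma_{t^*}$, which is controlled by the \emph{unweighted} uniform $L^2$ bound of Proposition~\ref{prop:energy}. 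Thus no improved decay for $\alphabar$ is needed. The paper uses $-L(r^2\rho)+r^2\dive\alpha=(\text{NL})$ iteratively to recover $L^h\rho$ with weight $r$ for all $h$, and the analogous $\hat\lbar$-equation for the $\hat\lbar^h\rho$; this is what closes the argument for all mixed derivatives.
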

 
\begin{remark}
Notice that we do not have the weighted bound on the $0$\textsuperscript{th} order terms (note that $m \geq 1$). On the other hand, we do not require the $0$\textsuperscript{th} order weighted integral term to be bounded (this is because, when we ish to estimate the nonlinear error terms, and we encounter a term with zero derivatives falling on it, we are allowed to just use the bootstrap assumptions to estimate that particular term).
\end{remark}

\begin{proof}[Proof of Proposition~\ref{prop:imprhosigma}]
We divide the proof in two steps. In \textbf{Step 1}, we will prove the claim for $\rho$, whereas in \textbf{Step 2}, we will prove the claim for $\sigma$.
\subsection*{Step 1}
We immediately notice that, if $b + k \leq l$, and $I \in \iindiceo^b$, we have, from Proposition~\ref{prop:pprel},
\begin{equation*}
\int_{\Sigma_{t^*_2}\cap \{r \geq 5M\}} \left[r |L \dot Z_{,I,0,k}|^2 + r|\snabla \dot Z_{,I,0,k} |^2\right]r^{-2}\de \Sigma_{t^*} \\
\leq C \int_{\Sigma_{t_0^*}} \left[r |L \dot Z_{,I,0,k}|^2 + r|\snabla \dot Z_{,I,0,k} |^2\right] r^{-2}\de \Sigma_{t^*}+ C \varepsilon^2.
\end{equation*}
This already proves the claim for angular derivatives and mixed angular and $T$-derivatives, when there is at least one angular derivative (looking at the second term in the LHS of the above display).
We now would like to use the transport equation satisfied by $\rho$ to extend this statement to all other derivatives. Let us recall that $\rho$ satisfies the following equation:
\begin{equation*}
-L(r^2 \rho) + r^2 \dive \alpha = - r^2{H_{_\Delta}} \indices{^\mu _L ^\kappa ^\lambda} \nabla_\mu \fara_{\kappa \lambda}.
\end{equation*}
This implies
\begin{equation*}
L(r \rho)+ (1-\mu) \rho - r \dive \alpha = r{H_{_\Delta}} \indices{^\mu _L ^\kappa ^\lambda} \nabla_\mu \fara_{\kappa \lambda}.
\end{equation*}
Let us apply $h$ times the differential operator $L$ to both sides of the previous display. We obtain
\begin{equation*}
L^{h+1}(r \rho) + L^h((1-\mu)\rho) - L^h(r\dive \alpha) = L^h \left(r{H_{_\Delta}} \indices{^\mu _L ^\kappa ^\lambda} \nabla_\mu \fara_{\kappa \lambda} \right)
\end{equation*}
Hence:
\begin{equation*}
\underbrace{\left|L^{h+1}(r \rho)\right|^2 }_{(1)} \leq \underbrace{\left|L^h((1-\mu)\rho)\right|^2}_{(2)}+ \underbrace{\left|  L^h(r\dive \alpha)\right|^2}_{(3)} + \underbrace{\left|L^h \left(r{H_{_\Delta}} \indices{^\mu _L ^\kappa ^\lambda} \nabla_\mu \fara_{\kappa \lambda} \right) \right|^2}_{(4)}
\end{equation*}
Let us multiply the previous equation by $r$ and integrate on $\Sigma_{t_2^*}$ with respect to the form $r^{-2} \de \Sigma_{t^*}$. Regarding term $(1)$, we have, by an application of the Leibniz rule,
\begin{equation*}
r\left|L^{h+1}(r \rho)\right|^2 \geq r^3|L^{h+1}\rho|^2 - C\Big(\sum_{i = 1}^{h+1}r |L^{i}(r^2) L^{h+1-i} \rho|^2 \Big) \geq r^3|L^{h+1}\rho|^2 - C\underbrace{r^2 \Big(\sum_{i = 1}^{h+1}|L^{h+1-i} \rho|^2\Big)}_{(1')}.
\end{equation*}

The terms $(1')$ and the terms arising from $(2)$ can be estimated by the uniform $L^2$ estimates of Equation~(\ref{eq:unifl2}) in Proposition~\ref{prop:energy} in the following way:
\begin{equation*}
\int_{\Sigma_{t^*_2}} \Big\{\left|L^h((1-\mu)\rho)\right|^2 + C r^2 \Big(\sum_{i = 1}^{h+1}|L^{h+1-i} \rho|^2\Big)\Big\} r^{-2}  \de \Sigma_{t^*} \leq C \norm{\fara}^2_{H^l(\Sigma_{t_2^*})} \leq C \varepsilon^2.
\end{equation*}

Similarly, we can estimate the terms in $(3)$ again by the $L^2$ estimates \eqref{eq:unifl2} (recall that those estimates give control over $r$-\emph{weighted} angular derivatives). We also estimate $(4)$ by the bootstrap assumptions and the uniform $L^2$ bounds of Equation~(\ref{eq:unifl2}), as in Lemma~\ref{lem:bsrho}.

We therefore obtain, for $m \leq l$,
\begin{equation*}
\sum_{h=1}^m \int_{\Sigma_{t_2^*}} r |L^h \rho|^2\de \Sigma_{t^*}\leq C \varepsilon^2.
\end{equation*}
The remaining derivatives in the $\lbar$ direction are obtained by the same method, given the fact that $\rho$ satisfies the transport equation:
\begin{equation*}
	\hat \lbar(r^2 \rho) + r^2 (1-\mu)^{-1}\dive \alphabar = - r^2{H_{_\Delta}} \indices{^\mu _{\hat \lbar} ^\kappa ^\lambda} \nabla_\mu \fara_{\kappa \lambda}.
\end{equation*}
Upon summation of the resulting inequalities with the uniform $L^2$ estimates of Equation~(\ref{eq:unifl2}) (in order to deal with the region of bounded $r$-coordinate), we obtain the claim for $\rho$.

\subsection*{Step 2} Let us now focus on the estimates for $\sigma$. As in the previous step, we have, from Proposition~\ref{prop:pprel}, if $b + k \leq l$, and $I \in \iindiceo^b$,
\begin{equation*}
\int_{\Sigma_{t^*_2}\cap \{r \geq 5M\}} \left[r |L \dot W_{,I,0,k}|^2 + r|\snabla \dot W_{,I,0,k} |^2\right]r^{-2}\de \Sigma_{t^*} \\
\leq C \int_{\Sigma_{t_0^*}} \left[r |L \dot W_{,I,0,k}|^2 + r|\snabla \dot W_{,I,0,k} |^2\right] r^{-2}\de \Sigma_{t^*}+ C \varepsilon^2.
\end{equation*}
Upon summation with the uniform $L^2$ estimates of Equation~(\ref{eq:unifl2}), we obtain the claim for angular derivatives. 

As in the reasoning for $\rho$, we now use the transport equation satisfied by $\sigma$:
\begin{equation*}
L(r^2 \sigma)+r^2 \curl \alpha = 0.
\end{equation*}
It follows, as before, that
\begin{equation*}
\sum_{h=1}^m \int_{\Sigma_{t_2^*}} r |L^h \sigma|^2\de \Sigma_{t^*}\leq C \varepsilon^2.
\end{equation*}
To extend the bound on derivatives in direction $\lbar$, we proceed as in the previous step, using the transport equation satisfied by $\sigma$:
\begin{equation*}
- \hat \lbar(r^2 \sigma) + (1-\mu)^{-1} r^2 \curl \alphabar= 0.
\end{equation*}
Finally, it is standard to extend the bound to all mixed derivatives. This concludes the proof.
\end{proof}

\section{Improved \texorpdfstring{$L^2$}{L2} bounds on \texorpdfstring{$R_\rho$}{Rr} and \texorpdfstring{$R_\sigma$}{Rs}}\label{sec:improvedrs}

Combining the reasoning appearing in the proof of Lemma~\ref{lem:bsrho} and Lemma~\ref{lem:bssigma} with the improved flux estimates of Section~\ref{sec:l2improved} we obtain the following.

\begin{lemma}[Improved $L^2$ bounds on $R_\rho$ and $R_\sigma$.]\label{lem:bsrhoimp}
Let $l \in \N_{\geq 0}$, $l \geq 2$. There exist a small number $\tilde \varepsilon >0$ and a constant $C >0$ such that the following holds. Let $\fara$ be a smooth solution of the MBI system (\ref{MBI}) on $\mathcal R := \mathcal{R}_{t_0^*}^{t_2^*} \subset S_e$. Assume the bootstrap assumptions $BA\left(\mathcal{R},1,\left\lfloor \frac{l+4}{2}\right\rfloor , \varepsilon \right)$. Assume the following weighted bounds on the initial energy:
\begin{equation}\label{eq:bdinit}
\begin{aligned}
&\sum_{|I| + k \leq l-1} \int_{\Sigma_{t_0^*}} \left[r |L \dot Z_{,I,0,k}|^2 + r|\snabla \dot Z_{,I,0,k} |^2\right] r^{-2} \de \Sigma_{t^*} \leq \varepsilon^2, \\
&\sum_{|I| + k \leq l-1} \int_{\Sigma_{t_0^*}} \left[r |L \dot W_{,I,0,k}|^2 + r|\snabla \dot W_{,I,0,k} |^2\right]r^{-2}\de \Sigma_{t^*} \leq \varepsilon^2,\\
&\int_{\Sigma_{t_0^*}} r |\partial^{\leq l} \alpha |^2\de \Sigma_{t^*} \leq \varepsilon^2, \\
&\norm{\fara}^2_{H^{l+4}(\Sigma_{t_0^*})} \leq \varepsilon^2.
\end{aligned}
\end{equation}
Here, $0 < \varepsilon < \tilde \varepsilon$. Under these conditions, we have the following integrated bounds, for $t_1^* \in [t_0^*, t_2^*]$, with $|I|+j+k \leq l-2$:
\begin{align}\label{eq:boundrrhoimp2}
&\int_{\Sigma_{t_1^*}} r ( |\partial_{\leo}^I T^k  {\hat \lbar}^{j}(R_\rho)|^2+|\partial_{\leo}^I T^k  {\hat \lbar}^{j}((1-\mu)^{-1}R_\rho)|^2) \de \Sigma_{t^*} \leq C \varepsilon^3 (t_1^*)^{-2},\\ \label{eq:boundrrhoimp1}
&\int_{\Sigma_{t_1^*}}  ( |\partial_{\leo}^I T^k  {\hat \lbar}^{j}(R_\rho)|^2+|\partial_{\leo}^I T^k  {\hat \lbar}^{j}((1-\mu)^{-1}R_\rho)|^2) \de \Sigma_{t^*} \leq C \varepsilon^3 (t_1^*)^{-3},\\ \label{eq:boundrrhoimp0}
&\int_{\Sigma_{t_1^*}} r^{-1} ( |\partial_{\leo}^I T^k  {\hat \lbar}^{j}(R_\rho)|^2+|\partial_{\leo}^I T^k  {\hat \lbar}^{j}((1-\mu)^{-1}R_\rho)|^2) \de \Sigma_{t^*} \leq C \varepsilon^3 (t_1^*)^{-3},
\end{align}
\begin{align}
\label{eq:boundrsigmaimp2}
&\int_{\Sigma_{t_1^*}} r ( |\partial_{\leo}^I T^k  {\hat \lbar}^{j}(R_\sigma)|^2+|\partial_{\leo}^I T^k  {\hat \lbar}^{j}((1-\mu)^{-1}R_\sigma)|^2) \de \Sigma_{t^*} \leq C \varepsilon^3 (t_1^*)^{- 2},\\ \label{eq:boundrsigmaimp1}
&\int_{\Sigma_{t_1^*}}  ( |\partial_{\leo}^I T^k  {\hat \lbar}^{j}(R_\sigma)|^2+|\partial_{\leo}^I T^k  {\hat \lbar}^{j}((1-\mu)^{-1}R_\sigma)|^2) \de \Sigma_{t^*} \leq C \varepsilon^3 (t_1^*)^{-3},\\ \label{eq:boundrsigmaimp0}
&\int_{\Sigma_{t_1^*}} r^{-1} ( |\partial_{\leo}^I T^k  {\hat \lbar}^{j}(R_\sigma)|^2+|\partial_{\leo}^I T^k  {\hat \lbar}^{j}((1-\mu)^{-1}R_\sigma)|^2) \de \Sigma_{t^*} \leq C \varepsilon^3 (t_1^*)^{-3}.
\end{align}
\end{lemma}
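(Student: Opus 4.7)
The strategy is to refine the pointwise bounds of Lemmas~\ref{lem:bsrho}, \ref{lem:bssigma} into weighted $L^2$ estimates by combining bootstrap pointwise decay on low-order factors with the improved weighted $L^2$ flux estimates from Section~\ref{sec:l2improved}. Concretely, using Lemma~\ref{lem:fund} together with the Leibniz rule, each derivative $\partial_{\leo}^I T^k \hat\lbar^j R_\rho$ with $|I|+j+k \leq l-2$ expands into a schematic sum of trilinear (or higher) products of derivatives of $\fara$ of the form $r^{a}\cdot (\nabla^{n_1}\fara)\cdot(\nabla^{n_2}\fara)\cdot (\nabla^{n_3}\fara)$ with $a \in \{1,2\}$ and $n_1+n_2+n_3 \leq l$. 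The key structural point, visible from Lemma~\ref{lem:fund}, is that $H_\Delta^{\mu\nu\kappa\lambda}\nabla_\mu \fara_{\kappa\lambda}$ factorizes through the invariants $\lun-\ldu^2$ and $\ldu$, each of which is quadratic in $\fara$, so that every top-order $\nabla\fara$ always comes multiplied by an extra $\fara$ factor.

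For each such monomial I would assign the top-order factor (the one carrying $\max\{n_1,n_2,n_3\}$ derivatives) to an $L^2$ estimate and the remaining factors to pointwise bootstrap. Since $|I|+j+k \leq l-2$, the low-order factors carry at most $\lfloor l/2\rfloor+1 \leq \lfloor (l+4)/2 \rfloor$ derivatives and fall within the scope of the bootstrap assumption $BS(\mathcal R,1,\lfloor(l+4)/2\rfloor,\varepsilon)$. Using the worst bootstrap rate $|\partial^{\leq \lfloor(l+4)/2\rfloor}\fara| \lesssim \varepsilon^{3/4}\tau^{-1} r^{-1}$, the two low-order factors contribute a pointwise factor of order $\varepsilon^{3/2}\tau^{-2} r^{-2}$, yielding
\[
\bigl|\partial_{\leo}^I T^k \hat\lbar^j R_\rho\bigr|^2 \lesssim \varepsilon^{3}\, r^{2a-4}\tau^{-4} \sum_{n\leq l}|\partial^n\fara|^2.
\]

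To convert this pointwise bound into the claimed integrated bounds, I would split $\Sigma_{t_1^*}$ into the near region $\{r \leq t_1^*/2\}$, where $\tau \gtrsim t_1^*$, and the far region $\{r > t_1^*/2\}$, where $r \gtrsim t_1^*$. In the near region the pointwise factor $\tau^{-4} \lesssim (t_1^*)^{-4}$ combines with the uniform $L^2$ control of Proposition~\ref{prop:energy} on the top-order factor and yields decay $(t_1^*)^{-4}\varepsilon^2$, strictly better than what is claimed. In the far region I would convert $r$-weights into $t_1^*$-decay via $r^{-\beta}\lesssim (t_1^*)^{-\beta}$, and exploit the improved weighted $L^2$ flux estimates of Propositions~\ref{prop:impalpha} and~\ref{prop:imprhosigma} on the \emph{good} components $\rho,\sigma,\alpha$ to absorb an extra factor of $r$ whenever the top-order factor is such a good component. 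The structural property noted above ensures that either the top-order factor is itself a good component or sits next to enough good components that the improved flux applies after reshuffling the pointwise/$L^2$ split. A direct power count then delivers $(t_1^*)^{-2}$ for the $r$-weighted integral and $(t_1^*)^{-3}$ for the other two.

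The main obstacle will be the careful bookkeeping: one must verify, monomial by monomial in the schematic expansion, that the available combination of bootstrap decay and improved weighted fluxes produces exactly the right power of $t_1^*$ in the far region, without needing bootstrap on $\alphabar$ at more than $\lfloor(l+4)/2\rfloor$ derivatives (which would be out of reach) and without needing an improved $r$-weighted flux for $\alphabar$ at top order (which is not available). The identities of Lemma~\ref{lem:fund} are precisely what makes this possible, by ensuring that every occurrence of top-order $\partial\alphabar$ is accompanied by a good-component factor that can carry the $r$-weight in $L^2$. The bounds for $R_\sigma$ follow by the same argument applied to the dual structure of Lemma~\ref{lem:wavembis}, using the analogous form of the cubic tensor $W$ given by~\eqref{eq:wform}.
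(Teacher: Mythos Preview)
Your overall strategy---expand each derivative of $R_\rho$ into trilinear monomials via Lemma~\ref{lem:fund}, put the top-order factor in $L^2$, the low-order factors in pointwise bootstrap, and close using the improved $r$-weighted fluxes of Propositions~\ref{prop:impalpha} and~\ref{prop:imprhosigma}---is exactly the paper's. The gap is in the specific pointwise bound you write down: using the \emph{worst} rate $\varepsilon^{3/4}\tau^{-1}r^{-1}$ for both low-order factors is too crude, and the proposed ``reshuffling'' fix does not work.

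Concretely, for the $\nonl{2}$ contributions (your $a=2$) the crude bound yields, after squaring and multiplying by $r$, the pointwise estimate $\varepsilon^{3}\tau^{-4}\,r\,|\partial^{\leq l}\fara|^2$ with no negative power of $r$. On the far part of $\Sigma_{t_1^*}$ one has $\tau\approx 1$, so you are left with $\int_{\text{far}} r\,|\partial^{\leq l}\fara|^2$: the improved flux makes this at best bounded when the top factor is good, but contributes no $(t_1^*)^{-2}$ decay; and when the top factor is $\alphabar$ there is no improved flux available at all. Your reshuffling idea (move a good low-order factor to $L^2$) fails because then the top-order $\partial^{\leq l}\alphabar$ must go to pointwise, which is outside the bootstrap range $\lfloor(l+4)/2\rfloor$.

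What the paper actually does is exploit the \emph{sharper} bootstrap rate $\min\{\tau^{-1}r^{-3/2},\,\tau^{-1/2}r^{-2}\}$ on the good components $\rho,\sigma,\alpha$ at low order. The null structure of $\lun,\ldu$ (no $\alphabar^2$ term) ensures every monomial has at most two $\alphabar$ factors. Hence either (i) the top-order factor is good and absorbs an extra $r^{1/2}$ via the improved flux, or (ii) the top-order factor is $\alphabar$ and at least one low-order factor is good, so one may use $\tau^{-1/2}r^{-2}$ on it instead of $\tau^{-1}r^{-1}$. Either way one obtains the refined pointwise inequality
\[
r\,\bigl|\partial^I_{\leo}T^k\hat\lbar^{\,j} R_\rho\bigr|^2 \;\lesssim\; \tau^{-3}r^{-2}\Bigl(|\partial^{\leq l}\fara|^2 + r\,|\partial^{\leq l}\alpha|^2 + \textstyle\sum_{h\geq 1} r\,|\partial^{h}\rho|^2\Bigr),
\]
whose prefactor $\tau^{-3}r^{-2}$ is $\lesssim (t_1^*)^{-2}$ uniformly on $\Sigma_{t_1^*}$ (since $\max\{\tau,r\}\gtrsim t_1^*$ there), and the bracket integrates to $C\varepsilon^2$ by Proposition~\ref{prop:energy} together with the improved fluxes. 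No careful near/far split is needed once this refined bound is in hand; the estimates~\eqref{eq:boundrrhoimp1} and~\eqref{eq:boundrrhoimp0} follow from the same inequality after shifting one more power of $r^{-1}$ and using $\tau\cdot r\gtrsim v$.
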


\begin{proof}
The proof is a straightforward adaptation of the proof of Lemma~\ref{lem:bsrho}, using the fact that now we can ``incorporate more $r$-weight'' in the $L^2$-norm of $\alpha, \rho, \sigma$ and their derivatives. We focus on the estimates for the integrals in $R_\rho$, as the remaining three estimates ((\ref{eq:boundrsigmaimp0}), (\ref{eq:boundrsigmaimp1}), (\ref{eq:boundrsigmaimp2})) are proved analogously.

\subsection*{Step 1: proof of (\ref{eq:boundrrhoimp2}) and (\ref{eq:boundrrhoimp1})}
As just stated, we focus on the case of $R_\rho$, and in particular on the estimates relative to the term $(b)$ in $\nonl{2}$, the reasoning for the other terms being analogous. We perform the same calculations as in \textbf{Step 2} of the proof of Lemma~\ref{lem:bsrho}, until we arrive at the expression
\begin{equation*}
\begin{aligned}
& |\partial_{\leo}^I T^k {\hat \lbar}^{j}\left((1-\mu)^{-1}r^2 L \left( {H_{_\Delta}}\indices{^\mu _\lbar ^\kappa ^\lambda} \nabla_{\mu}\fara_{\kappa\lambda}\right)\right)|  \\ &
\leq 
C \left| \partial^{l-2} \left((1-\mu)^{-1} r^2 L \left({H_{_\Delta}}\indices{^\mu _\lbar ^\kappa ^\lambda} \nabla_{\mu}\fara_{\kappa\lambda} \right)\right)\right| \\ &
\leq C\underbrace{\left|\partial^{l-2} \left\{(1-\mu)^{-1} rL  \left(r{H_{_\Delta}}\indices{^\mu _\lbar ^\kappa ^\lambda} \nabla_{\mu}\fara_{\kappa\lambda}  \right)\right\} \right|}_{(x)} + C\underbrace{ \left|\partial^{\leq l-2} \left( r{H_{_\Delta}}\indices{^\mu _\lbar ^\kappa ^\lambda} \nabla_{\mu}\fara_{\kappa\lambda} \right)\right|}_{(y)},
\end{aligned}
\end{equation*}
Now,
\begin{equation*}
(x) + (y) \leq C r \left|\partial^{\leq l-1} \left(r{H_{_\Delta}}\indices{^\mu _\lbar ^\kappa ^\lambda} \nabla_{\mu}\fara_{\kappa\lambda}  \right)\right|+Cr \left|\partial^{\leq l-2} \left( r{H_{_\Delta}}\indices{^\mu _\lbar ^\kappa ^\lambda} \nabla_{\mu}\fara_{\kappa\lambda} \right)\right|.
\end{equation*}
Since the structure of the two terms in the last display is the same, we can just focus on $r \left|\partial^{\leq l-1} \left(r{H_{_\Delta}}\indices{^\mu _\lbar ^\kappa ^\lambda} \nabla_{\mu}\fara_{\kappa\lambda}  \right)\right|$. We have
\begin{equation}
\begin{aligned}
& r\left| \partial^{\leq l-1} \left( r \tensor{{H_{_\Delta}}}{^\mu _\lbar ^\kappa ^\lambda} \nabla_{\mu}\fara_{\kappa\lambda} \right)\right| \\ &
\leq C r\sum_{a + b \leq l} (|\partial^{a+1}\lun| + |\partial^{a+1}\ldu|)(|\partial^b\alphabar| + r|\partial^b\rho| + r|\partial^b\sigma|) \\ &
\leq Cr \sum_{\substack{a + b +c\leq l
\\ 
a+c \geq 1
}} 
\left( |\partial^a \rho| |\partial^c \rho|+|\partial^a \sigma| |\partial^c \rho|+|\partial^a \sigma| |\partial^c \sigma| + |\partial^a \alpha| |\partial^c \alphabar|\right)
(|\partial^b\alphabar| + r|\partial^b\rho| + r|\partial^b\sigma|).
\end{aligned}
\end{equation}
Let us analyse the different terms in the last display. The bootstrap assumptions now imply that $\rho$, $\sigma$ and $\alpha$ have the same decay rates. It is therefore enough to bound the following expression, the remaining terms being treated analogously:
\begin{equation*}
r \sum_{\substack{a + b +c\leq l
\\
a+c \geq 1
}} 
\left( |\partial^a \rho| |\partial^c \rho|+ |\partial^a \alpha| |\partial^c \alphabar|\right)
(|\partial^b\alphabar| + r|\partial^b\rho|).
\end{equation*}
Let us consider each term in the product:
\begin{equation*}
\begin{aligned}
r \sum_{\substack{a + b +c\leq l
\\
a+c \geq 1
}} |\partial^a \rho| |\partial^c \rho| |\partial^b \alphabar| & \leq C\eps \tau^{-2}r^{-2}|\partial^{\leq l} \fara| + C\eps\tau^{-3/2}r^{-2}|\partial^{\leq l} \fara|,\\
r^2 \sum_{\substack{a + b +c\leq l
\\
a+c \geq 1
}} |\partial^a \rho| |\partial^c \rho| |\partial^b \rho| & \leq C \eps\tau^{-3/2}r^{-3/2} |\partial^{\leq l} \fara|,\\
r \sum_{\substack{a + b +c\leq l
\\
a+c \geq 1
}} |\partial^a \alpha| |\partial^c \alphabar| |\partial^b \alphabar| &\leq C \eps\tau^{-2}r^{-3/2} r^{1/2}|\partial^{\leq l} \alpha|+ C \eps\tau^{-3/2}r^{-2}|\partial^{\leq l} \fara|,\\
r^2 \sum_{\substack{a + b +c\leq l
\\
a+c \geq 1
}} |\partial^a \alpha| |\partial^c \alphabar| |\partial^b \rho|& \leq C \eps \tau^{-3/2}r^{-
3/2} r^{1/2}|\partial^{\leq l} \alpha|\\
&+ C \eps\tau^{-3/2}r^{-3/2} |\partial^{\leq l} \fara| + C\eps \tau^{-3/2}r^{-3/2} \sum_{h=1}^{l}r^{1/2}|\partial^{h} \rho|.
\end{aligned}
\end{equation*}
We therefore obtain
\begin{equation*}
r\left|\partial_{\leo}^I T^k {\hat \lbar}^{j}\left((1-\mu)^{-1}r^2 L \left( {H_{_\Delta}}\indices{^\mu _\lbar ^\kappa ^\lambda} \nabla_{\mu}\fara_{\kappa\lambda}\right)\right)\right|^2 \leq C \eps^2\tau^{-3}r^{-2} \left( |\partial^{\leq l} \fara|^2+r|\partial^{\leq l} \alpha|^2 + \sum_{h=1}^l r|\partial^h \rho|^2 \right).
\end{equation*}
An analogous reasoning gives us:
\begin{equation}\label{eq:toboundrrhoimp}
r ( |\partial_{\leo}^I T^k  {\hat \lbar}^{j}(R_\rho)|^2+|\partial_{\leo}^I T^k  {\hat \lbar}^{j}((1-\mu)^{-1}R_\rho)|^2)\leq C \eps^2 \tau^{-3}r^{-2} \left( |\partial^{\leq l} \fara|^2+r|\partial^{\leq l} \alpha|^2 + \sum_{h=1}^{l}r|\partial^h \rho|^2 \right).
\end{equation}
Integrating this bound on $\Sigma_{t_1^*}$, with $t_1^* \in [t_0^*, t_2^*]$, combining it with the uniform $L^2$ estimates of Equation~(\ref{eq:unifl2}), plus the weighted bounds in Proposition~\ref{prop:impalpha} and Proposition~\ref{prop:imprhosigma}, yields the claims (\ref{eq:boundrrhoimp2}) and (\ref{eq:boundrrhoimp1}).

\subsection*{Step 2: proof of (\ref{eq:boundrrhoimp0})} We repeat the same reasoning as in \textbf{Step 1}, until we get to inequality (\ref{eq:toboundrrhoimp}). Due to the different weight in $r$, it becomes
\begin{equation*}
r^{-1}( |\partial_{\leo}^I T^k  {\hat \lbar}^{j}(R_\rho)|^2+|\partial_{\leo}^I T^k  {\hat \lbar}^{j}((1-\mu)^{-1}R_\rho)|^2)\leq C \eps^2 \tau^{-3}r^{-3} \left( |\partial^{\leq l} \fara|^2+r|\partial^{\leq l} \alpha|^2 + \sum_{h=1}^{l}r|\partial^h \rho|^2 \right).
\end{equation*}
It is now straightforward to conclude the bound (\ref{eq:boundrrhoimp0}), simply by the fact that there exists a positive constant $C_1$ such that $\tau \cdot r \geq C_1 v$, in the region $\regio{t_0^*}{\infty}$.
\end{proof}

\section{The full hierarchy of \texorpdfstring{$r^p$}{rp}-estimates}\label{sec:previsited}
The improved bounds in the previous section let us extend the hierarchy of $p$-weighted estimates in Section~\ref{sec:pweighted1} to the weight $p = 2$. This in turn enables us to close the estimates on the nonlinear terms.

Fix a number $R^* > 0$ and a spacetime region $\regio{}{}\subset \mathcal{S}_e$. Given $Z$ a smooth function, $Z: \mathcal{S}_e \mapsto \R$, let us define the following quantities:
\begin{equation}
\begin{aligned}
	F^T_{\conplus_u, \regio{}{}}[Z](v_1, v_2) &:= \int_{\conplus_{u}\cap \{v_1 \leq v \leq v_2\} \cap \regio{}{}}[|L Z|^2+(1-\mu)|\snabla Z|^2]\de v  \desphere ,\\
	F^T_{\conminus_v, \regio{}{}}[Z](u_1, u_2) &:= \int_{\conminus_{v}\cap \{u_1 \leq u \leq u_2\} \cap \regio{}{}} [|\lbar Z|^2+(1-\mu)|\snabla Z|^2]\de u  \desphere, \\
	F^T_{\Sigma_{t^*_1}, \regio{}{}}[Z](r_1, r_2) &:= \int_{\Sigma_{t^*_1} \cap\{r_1 \leq r \leq r_2\} \cap \regio{}{}} [|L Z|^2+(1-\mu)|\hat \lbar Z|^2+|\snabla Z|^2] r^{-2 }\de \Sigma_{t^*}, \\ 
	F^N_{\Sigma_{t^*_1}, \regio{}{}}[Z](r_1, r_2) &:= \int_{\Sigma_{t^*_1} \cap\{r_1 \leq r \leq r_2\} \cap \regio{}{}} [|L Z|^2+|\hat \lbar Z|^2+|\snabla Z|^2] r^{-2 }\de \Sigma_{t^*}.
\end{aligned}
\end{equation}
Here, we used the definition of outgoing and ingoing cones from Section~\ref{sec:not:regfol}: 
\begin{equation}
\overline{C}_{\tilde u} := \{u = \tilde u\}, \qquad \underline{C}_{\tilde v} := \{v = \tilde v\},
\end{equation}
where $\tilde u$ and $\tilde v$ are real numbers. We furthermore define the following fluxes, which are useful for application of the $r^p$ method. We consider $P=(u_P,v_P)$ a point in $(u,v)$--coordinates, we let $t_P := \frac 1 2 (v_P + u_P)$, $r_P^* := \frac 1 2 (v_P - u_P)$, and $t^*_P := t_P + 2M \log(r_P -2M)$, where $r_P$ is such that $r^*_P = r_P + 2M \log(r_P-2M)$. Then, we define the quantities (degenerate at $\mathcal{H}^+$):
\begin{equation}\label{eq:fgiustodefdeg}
\begin{aligned}
 F_{\regio{}{}, \text{deg}}^\infty[Z] (P) :&= F^T_{\Sigma_{t_P^*}, \regio{}{}} [Z] (2M, r_P) + F^T_{\conplus_{u_P}, \regio{}{}} [Z] (v_P, \infty), \\
 \fgiustordeg[Z] (u) :&= F^\infty_{\regio{}{}, \text{deg}}[Z](u, u+2R^*).
\end{aligned}
\end{equation}
Moreover, we define the non-degenerate fluxes:
\begin{equation}\label{eq:fgiustodef}
\begin{aligned}
 F_{\regio{}{}}^\infty[Z] (P) :&= F^T_{\Sigma_{t_P^*}, \regio{}{}} [Z] (2M, r_P) + F^T_{\conplus_{u_P}, \regio{}{}} [Z] (v_P, \infty), \\
 \fgiustor[Z] (u) :&= F^\infty_{\regio{}{}}[Z](u, u+2R^*).
\end{aligned}
\end{equation}
\begin{remark}
The flux $F_{\regio{}{}, \text{deg}}^\infty[Z] (P)$ is the natural energy flux associated to the \fackip equation with multiplier $T = \p_t$, through a spacelike piece (the $F^T_{\Sigma_{t_P^*}, \regio{}{}} $ term) which originates from $\mathcal{H}^+$ and terminates to the right at $P$, plus a null piece (the $ F^T_{\conplus_{u_P}, \regio{}{}}$ term) which originates from $P$ and terminates at future null infinity. The flux $F_{\regio{}{}}^\infty[Z] (P)$ is the analogous object, where we take $V_1$ as a multiplier instead. Here, $V_1$ is defined as
\begin{equation*}
\begin{aligned}
&V_1:=
&\begin{cases}
&n_{\Sigma_{t^*}}  \ \ \text{ when } 2M \leq r \leq R,\\
&\p_t \hspace{18pt}  \text{ when } r \geq R.
\end{cases}
\end{aligned}
\end{equation*}
\end{remark}
We finally recall the spacetime regions corresponding to the \textbf{semi-null foliation}, which were already introduced in Section~\ref{sec:not:regfol}. Let $u_2 \geq u_1$ real numbers, and let $R > 2M$. We define the following spacetime regions:
\begin{align}
 \mathfrak{D}_{u_1}^{u_2} :&= \left\{r \geq R, u \in [u_1, u_2] \right\},\\
\mathcal{Z}_{u_1} :&= \left(\{t^* = u_1 + R + 4M \log(R-2M)\} \cap \{2M \leq r \leq R \}\right) \cup (\{u = u_1\} \cap \{r \geq R\}), \\
\mathfrak{Z}_{u_1}^{u_2} :&= \cup_{u \in [u_1, u_2]} \mathcal{Z}_u.
\end{align}
It is helpful to represent these regions on a Penrose diagram: see Figure~\ref{fig4}, in which these regions are depicted. As already noted in the last bullet point of Section~\ref{sec:not:regfol}, the region $ \mathfrak{D}_{u_1}^{u_2}$ is the region bounded by the timelike curve $r = R$ and the two outgoing null hypersurfaces $\{u = u_1\}$ and $\{u = u_2\}$. The region $\mathcal{Z}_{u_1}$ coincides with a constant-$t^*$ hypersurface for $r \leq R$, and coincides with the outgoing null cone $\{u = u_1\}$ for $r \geq R$. The value 
$$
t^* = u_1 + R + 4M \log(R-2M)
$$
is chosen so that the intersection of these two pieces is located at $R$ (in $(t,r)$ coordinates). Finally, $\mathfrak{Z}_{u_1}^{u_2}$ is the region bounded above by $\mathcal{Z}_{u_2}$ and below by $\mathcal{Z}_{u_1}$. 

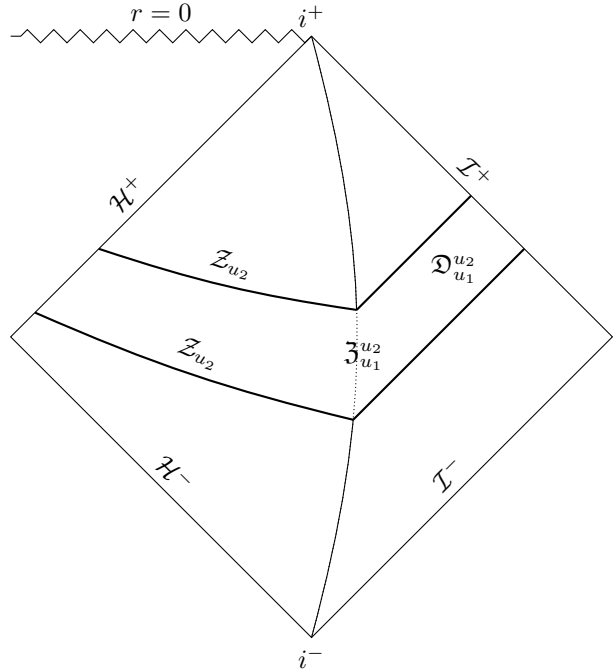
\begin{figure}[H]
	\centering
	\begin{tikzpicture}	

\node (I)    at ( 0,0) {};

\path 
  (I) +(90:4)  coordinate[label=90:$i^+$]  (top)
       +(-90:4) coordinate[label=-90:$i^-$] (bot)
       +(0:4)   coordinate                  (right)
       +(180:4) coordinate (left)
       ;

\path 
	(top) + (180:4) coordinate  (acca)
		+ (-45: 3) coordinate (nulluno)
		+ (-45: 4) coordinate (nulldue)
	;

\path 
	(left) + (-45: 3) coordinate (correspuno)
		+ (-45: 4) coordinate (correspdue)
	;

\draw [name path = rconst] (top) to [bend left = 15](bot);
\draw [name path = nc, opacity= 0] (nulluno) to (correspuno);
\draw [name path = nc1, opacity= 0] (nulldue) to (correspdue);
\draw [name intersections={of=rconst and nc}, opacity= 0](nulluno) to node[midway, sloped, above]{ \tiny $ $} (intersection-1);
\draw [name intersections={of=rconst and nc1}, opacity= 0] (nulldue) to node[midway, sloped, below]{\tiny $ $} (intersection-1);
\path (top) + (-135: 4) coordinate (horiuno)
			+ (-135: 5.2) coordinate (horidue);

\draw [name path = taudue, name intersections={of=rconst and nc}, opacity= 0] (horiuno) to node[midway,sloped, above ]{\tiny $ $}(intersection-1);
\draw [name path = tauno, name intersections={of=rconst and nc1}, opacity= 0] (horidue) to node[midway, sloped, below]{\tiny $ $} (intersection-1);

\coordinate [name intersections={of=tauno and nc}] (bam) at (intersection-1);
\coordinate [name intersections={of=rconst and tauno}] (bamm) at (intersection-1);
\coordinate [name intersections={of=rconst and taudue}] (rbam) at (intersection-1);
\coordinate [name intersections={of=nc and rconst}] (topz) at (intersection-1);
\fill[white] (bam) -- (bamm) -- (nulluno) -- (nulldue) -- cycle;
\path
	(nulluno) + (-100:1.3) coordinate[label=90:$\mathfrak{D}_{u_1}^{u_2}$] (d12);
\draw [name path = taudue, name intersections={of=rconst and nc}, bend right = 5, thick]  (horiuno) to node[midway,sloped, above ]{ $\mathcal{Z}_{u_2}$}(intersection-1);
\draw [name path = tauno, name intersections={of=rconst and nc1}, bend right = 5, thick] (horidue) to node[midway, sloped, above]{ $\mathcal{Z}_{u_2}$} (intersection-1);
\draw [name intersections={of=rconst and nc}, thick] (nulluno) -- (intersection-1);
\draw [name intersections={of=rconst and nc1}, thick] (nulldue) -- (intersection-1);
\draw (left) -- 
          node[midway, above left, sloped]    {$\mathcal{H}^+$}
      (top) --
          node[midway, above, sloped] {$\mathcal{I}^+$}
      (right) -- 
          node[midway, above, sloped] {$\mathcal{I}^-$}
      (bot) --
          node[midway, above, sloped]    {$\mathcal{H}^-$}    
      (left) -- cycle;

\draw [name path = rconst, densely dotted] (top) to [bend left = 15](bot);
\draw[decorate,decoration=zigzag] (top) -- (acca)
      node[midway, above, inner sep=2mm] {$r=0$};

\path (horiuno) +(-22:3.8)  coordinate[label=center: $\mathfrak{Z}_{u_1}^{u_2}$]  (top);
\end{tikzpicture}
	\caption{Penrose diagram of the regions in the semi-null foliation.}\label{fig4}
\end{figure}

We proceed to state and prove the $r^p$-weighted estimates.

\begin{proposition}\label{prop:previsited}
Let $l \in \enn$, $l \geq 2$. There exists a small number $\tilde \varepsilon > 0$, a constant $C >0$ and a number $R^* >0$ such that the following holds. Let $\fara$ be a smooth solution of the MBI system (\ref{MBI}) on $\mathcal R := \mathcal{R}_{t_0^*}^{t_2^*} \subset \mathcal{S}_e$. Assume the bootstrap assumptions $BA\left(\mathcal{R},1,\left\lfloor \frac{l+4}{2}\right\rfloor , \varepsilon \right)$. Assume the following weighted bounds on the initial energy:
\begin{align*}
&\sum_{|I| +j+ k \leq l-1} \int_{\Sigma_{t_0^*}} \left[r^2 |L \dot Z_{,I,j,k}|^2 + r^2|\snabla \dot Z_{,I,j,k} |^2\right] r^{-2} \de \Sigma_{t^*} \leq \varepsilon^2, \\
&\sum_{|I| +j+ k \leq l-1} \int_{\Sigma_{t_0^*}} \left[r^2 |L \dot W_{,I,j,k}|^2 + r^2|\snabla \dot W_{,I,j,k} |^2\right]r^{-2}\de \Sigma_{t^*} \leq \varepsilon^2,\\
&\int_{\Sigma_{t_0^*}} r |\partial^{\leq l} \alpha |^2\de \Sigma_{t^*} \leq \varepsilon^2, \\
&\norm{\fara}^2_{H^{l+4}(\Sigma_{t_0^*})} \leq \varepsilon^2.
\end{align*}
Here, $0 < \varepsilon < \tilde \varepsilon$. Let us define, in the context of this Proposition, for ease of notation, $\dot Z := \dot Z_{,I,j,k} = \lbar^j \dot Z_{,I,0,k}$, and similarly $\dot W := \dot W_{,I,j,k} = \lbar^j \dot W_{,I,0,k}$.

Let $u_0 = t_0 - R^*$, where $t_0 = t_0^* - 2M \log(R-2M)$, and $R^* = R + 2M \log(R-2M)$.

Under these conditions, we have the following integrated bounds on $\mathcal{R}$:
\begin{equation}\label{eq:pdueimpinit}
\boxed{
\begin{aligned}
 \int_{\mathbb{S}^2}\int_{\{u = u_2\}\cap\{r \geq R\}\cap \regio{}{}} \sum_{\substack{|I| +j+ k \leq l-2\\|I|\geq 1}}(r^{2} |L \dot Z|^2) \de v \desphere 
\leq  C \int_{\Sigma_{t^*_0}} \sum_{\substack{|I|+j+k \leq l-2\\|I|\geq 1}} (r^{2} |L \dot Z|^2) r^{-2}\de \Sigma_{t^*} + C \varepsilon^2,
\end{aligned}}
\end{equation}
valid for any $u_2 \in \R$.
\begin{equation}\label{eq:pdueimp}
\boxed{
\begin{aligned}
& \int_{\mathbb{S}^2}\int_{\{u = u_2\}\cap\{r \geq R\}\cap \regio{}{}}\sum_{\substack{|I|+j+k \leq l-2\\|I|\geq 1}} (r^{2} |L \dot Z|^2) \de v \desphere\\ & 
+\int_{\mathfrak{D}_{u_1}^{u_2} \cap \regio{}{}} \sum_{\substack{|I|+j+k \leq l-2\\|I|\geq 1}}\left[r|L \dot Z|^2 + |\snabla \dot Z|^2 + r^{-2}|\dot Z|^2\right] (1-\mu) \de u \de v \desphere \\ &
\leq   \int_{\mathbb{S}^2}\int_{\{u = u_1\}\cap\{r \geq R\}\cap \regio{}{}} \sum_{\substack{|I|+j+k \leq l-2\\|I|\geq 1}}(r^{2} |L \dot Z|^2) \de v \desphere + C \varepsilon^2,
\end{aligned}}
\end{equation}
valid for $u_2 \geq u_1 \geq u_0$.
\begin{equation}\label{eq:punoimp}
\boxed{
\begin{aligned}
& \int_{\mathbb{S}^2}\int_{\{u = u_2\}\cap\{r \geq R\}\cap \regio{}{}} \sum_{|I|+j+k \leq l-2}(r|L \dot Z|^2) \de v \desphere\\ & 
+\int_{\duu \cap \regio{}{}} \sum_{\substack{|I|+j+k \leq l-2\\|I|\geq 1}}\left[|L \dot Z|^2 + |\snabla \dot Z|^2 + r^{-2}|\dot Z|^2\right] (1-\mu) \de u \de v \desphere \\ &
\leq 	C \sum_{\substack{|I|+j+k \leq l-2\\|I|\geq 1}} \fgiustor[\dot Z](u_1) + \int_{\mathbb{S}^2}\int_{\{u = u_1\}\cap\{r \geq R\}\cap \regio{}{}} r \sum_{\substack{|I|+j+k \leq l-2\\|I|\geq 1}}|L \dot Z|^2 \de v \desphere + C \varepsilon^2 (1+|u_1|)^{-1}\\
&+ C \int_{\regio{}{} \cap \mathfrak{Z}_{u_1}^{u_2} \cap \{4M \leq r \leq R\}} \sum_{\substack{|I|+j+k \leq l-2\\|I|\geq 1}}(|L \dot Z|^2 + |\snabla \dot Z|^2) \de u \de v \de \mathbb{S}^2,
\end{aligned}}
\end{equation}
valid again for $u_2 \geq u_1 \geq u_0$.

Furthermore, the same inequalities hold when $\dot Z$ is replaced by $\dot W$.
\end{proposition}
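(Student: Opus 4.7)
The plan is to apply the Dafermos--Rodnianski $r^{p}$-weighted multiplier method to the commuted \fackip equation~\eqref{eq:lbarcommut}, with the nonlinear right-hand side controlled by the improved flux bounds of Lemma~\ref{lem:bsrhoimp}. Fix $\dot Z := \dot Z_{,I,j,k}$ with $|I|+j+k\leq l-2$, which satisfies $L\lbar\dot Z - (1-\mu)\sdelta\dot Z = \partial^I_{\leo}T^k\lbar^j R_\rho + \mathcal{E}$, where $\mathcal{E}$ collects the commutators $\sum_{a+b=j,\,a\geq 1}\lbar^a((1-\mu)/r^2)\sdelta_{\mathbb{S}^2}\lbar^b\dot Z_{,I,0,k}$ that carry additional $r$-decay and can be absorbed using the uniform $L^2$ bounds of Proposition~\ref{prop:energy} and the improved fluxes of Propositions~\ref{prop:impalpha}--\ref{prop:imprhosigma}. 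The main tool is the differential identity~\eqref{eq2.15}, obtained by multiplying the equation by $2 f(r)r^p L\dot Z$ and integrating on the sphere, where $f$ is a smooth cutoff equal to $1$ for $r\geq R$ and supported in $\{r\geq R-1\}$ for some large $R>3M$.

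To prove~\eqref{eq:pdueimpinit}, apply the identity with $p=2$ and integrate over $\regio{t_0^*}{t_2^*}\cap\{r\geq R\}$: the past boundary term on $\Sigma_{t_0^*}$ gives the initial flux, the future boundary on $\{u=u_2\}\cap\{r\geq R\}$ gives the desired $r^{2}|L\dot Z|^2$-flux, while the bulk term $\approx 2(1-\mu)fr|L\dot Z|^2$ (arising from $-\lbar(fr^2)$) has a good sign and is discarded. To prove~\eqref{eq:pdueimp}, integrate the same identity on $\duu\cap\regio{}{}$ and retain the $r|L\dot Z|^2$ bulk. Since for $p=2$ the coefficient $-r^2L(f(1-\mu))$ degenerates and does not produce a positive angular bulk, we recover $|\snabla\dot Z|^2+r^{-2}|\dot Z|^2$ in the bulk by additionally running the identity with $p=1$ — whose bulk gives $f(1-\mu)(1-3M/r)|\snabla\dot Z|^2$ — combined with a Poincar\'e inequality (valid since $|I|\geq 1$ eliminates the zero angular mode) to bootstrap $r^{-2}|\dot Z|^2$ from $|\snabla\dot Z|^2$. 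The estimate~\eqref{eq:punoimp} uses the $p=1$ identity on $\duu\cap\regio{}{}$ directly: the boundary terms yield the $r|L\dot Z|^2$ flux, the bulk yields $|L\dot Z|^2+|\snabla\dot Z|^2$, and the $r^{-2}|\dot Z|^2$ bulk follows either via Hardy applied to the null flux or via an additional $f(r)/r$ multiplier. The degeneracy at $r=3M$ and the passage through the region $\{r\leq R\}$ produce the explicit error term in the last line of~\eqref{eq:punoimp}, controlled by the bounded-$r$ Morawetz estimate of Proposition~\ref{prop:morabdd}.

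The main obstacle is the nonlinear right-hand side. By Cauchy--Schwarz with a small $\eta>0$,
\begin{equation*}
\int_{\duu\cap\regio{}{}} 2f(r)r^p|L\dot Z|\,|\partial^I_{\leo} T^k\lbar^j R_\rho|(1-\mu)\de u\de v\desphere \leq \eta \int r^{p-1}|L\dot Z|^2(1-\mu) + C\eta^{-1}\int r^{p+1}|\partial^I_{\leo} T^k\lbar^j R_\rho|^2(1-\mu),
\end{equation*}
and the first term is absorbed into the bulk. For the second, Fubini in $t^*$ converts the spacetime integral into an integral of slice norms: for $p=2$, estimate~\eqref{eq:boundrrhoimp2} gives $\int_{\Sigma_{t^*}}r|\cdot|^2\de\Sigma_{t^*}\lesssim\varepsilon^2(t^*)^{-2}$, which is integrable in $t^*$ and yields the $C\varepsilon^2$ bound in~\eqref{eq:pdueimp}; for $p=1$ with the sharper $(1+|u_1|)^{-1}$ improvement in~\eqref{eq:punoimp}, estimate~\eqref{eq:boundrrhoimp1} yields $\lesssim\varepsilon^2(t^*)^{-3}$, which is integrable against $\de t^*$ on $\{t^*\geq |u_1|\}$ and produces the stated $\varepsilon^2(1+|u_1|)^{-1}$ decay. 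The hard part is precisely this matching between the $r$-weights in the $r^p$-identity and the sharp $\tau^{-2}$ and $\tau^{-3}$ decay of the nonlinear fluxes established in Lemma~\ref{lem:bsrhoimp}, which itself relies on the precise pointwise structure of $R_\rho,R_\sigma$ derived in Section~\ref{sec:nonlstruct}. The identical argument, using~\eqref{eq:boundrsigmaimp2}--\eqref{eq:boundrsigmaimp0} in place of~\eqref{eq:boundrrhoimp2}--\eqref{eq:boundrrhoimp0}, yields the corresponding statements for $\dot W$.
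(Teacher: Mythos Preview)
Your overall strategy is correct and matches the paper's: apply the $r^p$ multiplier identity to the commuted \fackip equation, absorb the nonlinear right-hand side via Cauchy--Schwarz and the improved flux bounds of Lemma~\ref{lem:bsrhoimp}, and handle the bounded-$r$ region by the Morawetz estimate. Two points deserve correction or comment.

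First, your treatment of the commutator $\mathcal{E}$ for $j\geq 1$ has a genuine gap. After Cauchy--Schwarz in the $p=2$ identity, these terms produce a spacetime integral of the form $\int_{\{r\geq R\}} r^{-1}\sum_{b<j}\sum_{\Omega\in\leo}|\snabla\,\Omega\,\lbar^b\dot Z_{,I,0,k}|^2\,\de u\,\de v\,\desphere$. The slice estimates you cite (Proposition~\ref{prop:energy} and Propositions~\ref{prop:impalpha}--\ref{prop:imprhosigma}) do not give spacetime integrability of this quantity. The paper closes this by \emph{induction on $j$}: at $j=0$ there is no commutator; once the $p=2$ estimate is established for all levels $b<j$ (summed over multi-indices with one more angular derivative), its bulk term $|\snabla\dot Z_{,J,b,k}|^2$ with $|J|=|I|+1$ is exactly what controls the commutator at level $j$, since $\sum_{|J|=|I|+1}|\snabla\dot Z_{,J,b,k}|^2\geq \sum_{\Omega\in\leo}|\snabla\,\Omega\,\dot Z_{,I,b,k}|^2$. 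This inductive mechanism is what the paper's Step~2 does.

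Second, your claim that the angular bulk coefficient degenerates at $p=2$ and must be recovered by a separate $p=1$ pass is not how the paper proceeds. The paper chooses $f(r)=(1-\mu)^{-2}$ for $r\geq 5M$; with this choice $-r^2 L\big(f(1-\mu)r^{p-2}\big)=-r^2 L\big((1-\mu)^{-1}\big)=2M(1-\mu)^{-1}>0$, so the $p=2$ identity directly yields both $r|L\dot Z|^2$ and $|\snabla\dot Z|^2$ in the bulk. Your two-pass approach ($p=2$ for the $L$-flux, then $p=1$ for the angular bulk, then Poincar\'e) also works since the $p=1$ past boundary $\int r|L\dot Z|^2$ is dominated by the $p=2$ flux on $\{r\geq R\}$, but it is a detour.
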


\begin{proof}
We will divide the proof in two \textbf{Steps}. In \textbf{Step 1}, we will prove estimates (\ref{eq:pdueimpinit}), (\ref{eq:pdueimp}), (\ref{eq:punoimp}) with the restriction that, in the sums appearing in such estimates, $j$ always be equal to $0$. In \textbf{Step 2}, we will remove the restriction $j = 0$ in the sums.
\begin{itemize}[wide, labelwidth=!, labelindent=0pt]
\item[\textbf{Step 1.}] Let us consider the identity, which follows from Equation~(\ref{eq:zdot}), valid upon integration on a sphere of constant $r$:
\begin{equation}\label{eq2.15again}
 \begin{aligned}
 & \lbar\left\{f(r) r^p|L \dot Z|^2 \right\} + L\left\{f(r)(1-\mu)r^p|\snabla \dot Z|^2 \right\} \\ &
- \lbar\left(f(r) r^p\right)|L \dot Z|^2 - r^2 L \left(f(r) (1-\mu)r^{p-2} \right)|\snabla \dot Z|^2 \\ & \stackrel{\mathbb{S}^2}{=}
2r^p f(r) (L \dot Z) \partial^I_{\leo} T^k (R_\rho).
 \end{aligned}
\end{equation}
We choose $f(r)$ smooth such that $3 \geq f(r) \geq 0$, $f(r) = 0$ for $r \in [2M, 4M]$, $\frac{d}{dr}f(r) \geq 0$ for all $r \in [2M, \infty)$, and $f(r)=1$ for $r \geq 5M$. We furthermore choose $p = 2$. We integrate identity (\ref{eq2.15again}) on $\regio{}{} \cap \{u \leq u_2\}$. We discard the positive boundary terms, and we apply inequality (\ref{eq:moradegeps}) to bound the angular error term in the region $r \in [4M, 5M]$ plus the error arising from the term $- \lbar\left(f(r) r^p\right)|L \dot Z|^2 $ in the region $\regio{}{} \cap \{4M \leq r \leq R\}$. We obtain, upon choosing $R$ big enough,
\begin{equation}\label{eq:pduefirst}
\begin{aligned}
&\int_{\mathbb{S}^2}\int_{\{u = u_2\}\cap\{r \geq R\}\cap \regio{}{}} (r^2|L \dot Z|^2) \de v \desphere \\ &
+ \int_{\regio{}{} \cap \{r \geq R\}} \left[r|L \dot Z|^2 + |\snabla \dot Z|^2 \right] (1-\mu) \de u \de v \desphere \leq 
C \int_{\Sigma_{t_0^*}} (r^2 |L \dot Z|^2) r^{-2}\de \Sigma_{t^*}\\ &
+ C \int_{\regio{}{} \cap \{r \geq 4M\}} \left[ r^2  |L\dot Z| \cdot |\partial^I_{\leo} T^k (R_\rho)| \right] (1-\mu) \de u \de v \desphere
+ C \varepsilon^2.
\end{aligned}
\end{equation}
We divide the integral in the last line of the previous display as follows:
\begin{equation*}
\begin{aligned}
&\int_{\regio{}{} \cap \{r \geq 4M\}} \left[ r^2  |L\dot Z| \cdot |\partial^I T^k (R_\rho)| \right] (1-\mu) \de u \de v \desphere \\
&= \underbrace{\int_{\regio{}{} \cap \{4M \leq r \leq R\}} \left[ r^2  |L\dot Z| \cdot |\partial^I_{\leo} T^k (R_\rho)| \right] (1-\mu) \de u \de v \desphere}_{(x)}\\
&\quad  + \underbrace{\int_{\regio{}{} \cap \{r \geq R\}} \left[ r^2  |L\dot Z| \cdot |\partial^I_{\leo} T^k (R_\rho)| \right] (1-\mu) \de u \de v \desphere}_{(y)}.
\end{aligned}
\end{equation*}
The term $(y)$ in the display above can be incorporated in the LHS of~\eqref{eq:pduefirst}, by means of the Cauchy--Schwarz inequality with a small parameter $\eta > 0$. The term $(x) $, on the other hand, can be dealt with using the bootstrap assumptions and the uniform $L^2$ control given by Proposition~\ref{prop:energy}. All in all, we obtain
\begin{equation*}
\begin{aligned}
&\int_{\mathbb{S}^2}\int_{\{u = u_2\}\cap\{r \geq R\}\cap \regio{}{}} (r^2|L \dot Z|^2) \de v \desphere \\ &
+ \int_{\regio{}{}} \left[r|L \dot Z|^2 + |\snabla \dot Z|^2 \right] (1-\mu) \de u \de v \desphere \leq 
C \int_{\Sigma_{t_0^*}} (r^2 |L \dot Z|^2) r^{-2}\de \Sigma_{t^*}\\ &
+ C \int_{\regio{}{} \cap \{r \geq 4M\}} r^3 |\partial^I_{\leo} T^k (R_\rho)|^2 (1-\mu) \de u \de v \desphere
+ C \varepsilon^2.
\end{aligned}
\end{equation*}
Combining this with bound (\ref{eq:boundrrhoimp2}), we obtain the claim~(\ref{eq:pdueimpinit}), when $j$ is restricted to be $0$ in the sum. A similar reasoning gives the bound~\eqref{eq:pdueimp}.

Finally, in order to prove (\ref{eq:punoimp}) in the special case $j =0$, we carry out the same estimate~\eqref{eq:pduefirst} with the choice $p =1$, and $f: [2M, \infty) \to \R$ smooth, $f(r) = 0$ for $r \in [2M, 4M]$, $f(r) = 1$ for $r \geq 5M$. We integrate the resulting identity on the region $\regio{}{} \cap \mathfrak{Z}_{u_1}^{u_2}$. We then proceed with the same estimates as in the previous case, being careful this time to use estimate (\ref{eq:boundrrhoimp1}) in order to get the $|u|^{-1}$ decay appearing in inequality~\eqref{eq:punoimp}. 

\item[\textbf{Step 2.}] Let us again consider the identity, which follows from Equation~(\ref{eq:lbarcommut}), valid upon integration on a sphere of constant $r$-coordinate:
\begin{align}
&\int_{\mathbb{S}^2}  \lbar\left\{f(r) r^p|L \dot Z|^2 \right\} \desphere+ \int_{\mathbb{S}^2}L\left\{f(r)(1-\mu)r^p|\snabla \dot Z|^2 \right\} \desphere \nonumber \\ &
- \int_{\mathbb{S}^2}\lbar\left(f(r) r^p\right)|L \dot Z|^2\desphere 
- \int_{\mathbb{S}^2}r^2 L \left(f(r) (1-\mu)r^{p-2} \right)|\snabla \dot Z|^2 \desphere \nonumber \\ & 
-\int_{\mathbb{S}^2} f(r)r^p(L \dot Z) \sum_{\substack{a+b = j\\ a \geq 1}} \lbar^a \left(\frac{1-\mu}{r^2} \right) \sdelta_{\mathbb{S}^2} \lbar^b \dot Z_{,I,0,k}\desphere \label{eq2.15againlb}
\\ &
= \int_{\mathbb{S}^2}
2r^p f(r) (L \dot Z) \partial^I_{\leo} T^k \lbar^j (R_\rho)\desphere.\nonumber
 \end{align}
Now,
\begin{equation*}
\begin{aligned}
&\left|\int_{\mathbb{S}^2} f(r)r^p(L  \dot Z) \sum_{\substack{a+b = j\\a \geq 1}} \lbar^a \left(\frac{1-\mu}{r^2} \right) \sdelta_{\mathbb{S}^2} \lbar^b \dot Z_{,I,0,k}\desphere \right|\\&
\leq \frac 1 2 \int_{\mathbb{S}^2} pf(r)r^{p-1}|L \dot Z|^2 \desphere 
+ C  \int_{\mathbb{S}^2} r^{p-5} \sum_{b=0}^{j-1}|\sdelta_{\mathbb{S}^2}\lbar^b \dot Z_{,I,0,k}|^2 \desphere \\&
\leq \frac 1 2 \int_{\mathbb{S}^2} pf(r)r^{p-1}|L \dot Z|^2 \desphere 
+ C  \int_{\mathbb{S}^2} r^{p-3}\sum_{\Omega \in \leo} \sum_{b=0}^{j-1}|\snabla \lbar^b \Omega \dot Z_{,I,0,k}|^2 \desphere
\end{aligned}
\end{equation*}

We choose $f(r)$ smooth such that $3 \geq f(r) \geq 0$, $f(r) = 0$ for $r \in [2M, 4M]$, $\frac{d}{dr} f(r) \geq 0$ for all $r \in [2M, \infty)$, and $f(r)=1$ for $r \geq 5M$. We furthermore choose $p = 2$. We integrate identity (\ref{eq2.15againlb}) on $\regio{}{} \cap \{u \leq u_2\}$.
Upon discarding the positive boundary terms, we apply inequality (\ref{eq:morabdd}) to bound the error term in the region $r \in [4M, 5M]$. We obtain
\begin{equation*}
\begin{aligned}
& \int_{\mathbb{S}^2}\int_{\{u = u_2\}\cap\{r \geq R\}\cap \regio{}{}} (r^2|L \dot Z|^2) \de v \desphere \\ &
+ \int_{\regio{}{} \cap \{r \geq 4M\}\cap \{u \leq u_2 \}}   \left[r|L \dot Z|^2 + |\snabla \dot Z|^2 \right] (1-\mu) \de u \de v \desphere \leq 
C \int_{\Sigma_{t_0^*}} (r^2 |L \dot Z|^2) r^{-2}\de \Sigma_{t^*}\\ &
+ C \int_{\regio{}{} \cap \{r \geq 4M\}} \left[ r^2  |L\dot Z| \cdot |\partial^I_{\leo} T^k (R_\rho)| \right] (1-\mu) \de u \de v \desphere\\ &
+ C \int_{\regio{}{} \cap \{r \geq 4M\}\cap \{u \leq u_2 \}} r^{-1}\sum_{\Omega \in \leo} \sum_{b=0}^{j-1}|\snabla \lbar^b \Omega \dot Z_{,I,0,k}|^2 \desphere \de u \de v
+ C \varepsilon^2.
\end{aligned}
\end{equation*}
An application of the Cauchy--Schwarz inequality yields
\begin{equation*}
\begin{aligned}
&\int_{\mathbb{S}^2}\int_{\{u = u_2\}\cap\{r \geq R\}\cap \regio{}{}} (r^2|L \dot Z|^2) \de v \desphere \\ &
+ \int_{\regio{}{} \cap \{r \geq 4M\}\cap \{u \leq u_2\}} \left[r|L \dot Z|^2 + \underbrace{|\snabla \dot Z|^2}_{(i)} \right] (1-\mu) \de u \de v \desphere \leq 
C \int_{\Sigma_{t_0^*}} (r^2 |L \dot Z|^2) r^{-2}\de \Sigma_{t^*}\\ &
+ C \int_{\regio{}{} \cap \{r \geq 4M\}\cap \{u \leq u_2 \}} r^3 |\partial^I_{\leo} T^k (R_\rho)|^2 (1-\mu) \de u \de v \desphere\\ &
+  \int_{\regio{}{} \cap \{r \geq 4M\}\cap \{u \leq u_2 \}} r^{-1} \sum_{b=0}^{j-1}\underbrace{\sum_{\Omega \in \leo}|\snabla \lbar^b \Omega \dot Z_{,I,0,k}|^2}_{(ii)} \desphere \de u \de v
+ C \varepsilon^2.
\end{aligned}
\end{equation*}
We conclude now by an induction argument, using the fact that, upon summation, we have the following relation between terms $(i)$ and $(ii)$, if $a \in \enn$:
\begin{equation*}
\sum_{|J| = a+1} |\snabla \dot Z_{,J,j,k}|^2 \geq \sum_{\substack{\Omega \in \leo \\ |I|= a}} |\snabla \Omega \dot Z_{,I,j,k}|^2.
\end{equation*}
\end{itemize}
This concludes the proof.
\end{proof}

\section{Energy conservation and Morawetz estimate in terms of \texorpdfstring{$\fgiustor$}{FR}}\label{sub:moragiustoests}

In order to close our estimates, we need a Morawetz estimate and a bound on the energy involving only terms containing $\fgiustor$. Namely, the Morawetz estimate will be used to absorb error terms localized near the hypersurface $\{r = R\}$, whereas the energy bound will be used to eliminate the restriction to a dyadic sequence which follows from an application of the $r^p$ method. The proof of those two bounds is the content of this section.

This section is structured as follows. First, we will state and prove Proposition~\ref{prop:morafgiusto}, in which we derive two Morawetz estimates (one degenerate at $\mathcal{H}^+$ and at $r = 3M$, the other without degeneracies). These estimates, however, display boundary terms in their respective right hand sides. We are then going to deal with such boundary terms in Propostion~\ref{prop:encfgiusto}, in which we are going to do a $\p_t$-energy estimate. This will allow us to prove, always in Proposition~\ref{prop:encfgiusto}, an inequality which is both a Morawetz estimate and an energy conservation statement in terms of the fluxes $\fgiustor$, and displays only past boundary terms in its right hand side.

\begin{proposition}[Morawetz estimate in terms of $\fgiustor$ with boundary terms] \label{prop:morafgiusto}
Let $l \in \enn$, $l \geq 5$. There exist a small number $\tilde \varepsilon > 0$, a constant $C >0$ and a number $R^* > 0$ such that the following holds. Let $t_0^* \leq t_1^* \leq t_2^* \leq t_3^*$. Moreover, let $\fara$ be a smooth solution of the MBI system (\ref{MBI}) on $\mathcal{R}_{t_0^*}^{t_3^*} \subset \mathcal{S}_e$. Let $\mathcal R := \regio{t_0^*}{t_3^*}$. Assume the bootstrap assumptions $BA\left(\regio{t_0^*}{t_3^*},1,\left\lfloor \frac{l+4}{2}\right\rfloor, \varepsilon \right)$. Assume the following weighted bounds on the initial energy, similar to (\ref{eq:bdinit}):
\begin{equation}
\begin{aligned}
&\sum_{|I| + k \leq l-1} \int_{\Sigma_{t_0^*}} \left[r |L \dot Z_{,I,0,k}|^2 + r|\snabla \dot Z_{,I,0,k} |^2\right] r^{-2} \de \Sigma_{t^*} \leq \varepsilon^2, \\
&\sum_{|I| + k \leq l-1} \int_{\Sigma_{t_0^*}} \left[r |L \dot W_{,I,0,k}|^2 + r|\snabla \dot W_{,I,0,k} |^2\right]r^{-2}\de \Sigma_{t^*} \leq \varepsilon^2,\\
&\int_{\Sigma_{t_0^*}} r |\partial^{\leq l} \alpha |^2\de \Sigma_{t^*} \leq \varepsilon^2, \\
&\norm{\fara}^2_{H^{l+4}(\Sigma_{t_0^*})} \leq \varepsilon^2.
\end{aligned}
\end{equation}
Here, $0 < \varepsilon < \tilde \varepsilon$. Let $u_i = t_i - R^*$, where $t_i = t_i^* - 2M \log(R-2M)$, and $R^* = R + 2M \log(R-2M)$, with $i \in \{0,1,2, 3\}$.

Let $\tilde u \in \R$, with $\tilde u \geq u_3$. Then, we have the following inequality:
\begin{equation}\label{eq:morafgiustodegstat}
\begin{aligned}
& \sum_{\substack{|I|+k \leq l-3\\|I|\geq 1}} \int_{\mathfrak{Z}_{u_1}^{u_2} \cap \regio{}{} \cap \{u \leq \tilde u\}} 
\Big\{\frac 1 {r^2}|L \dot Z_{I,0,k} - \lbar \dot Z_{I,0,k}|^2\\
&\qquad + \frac{(r-3M)^2}{r^3}\left(|\snabla \dot Z_{I,0,k}|^2+ \frac 1 r |L \dot Z_{I,0,k} + \lbar \dot Z_{I,0,k}|^2\right)+\frac 1 {r^3} |\dot Z_{I,0,k}|^2\Big\} r^{-2}\de \text{Vol} \\ 
& \leq C   \sum_{\substack{|I|+k \leq l-3\\|I|\geq 1}} \Big\{\fgiustordeg[\dot Z_{I,0,k}](u_1) + C  \fgiustordeg[\dot Z_{I,0,k}](u_2)  \\
&\quad + C  \int_{\conplus_{\tilde u} \cap \regio{t_1^*}{t_2^*}}|L \dot Z_{I,0,k}|^2 + (1-\mu)|\snabla\dot Z_{I,0,k}|^2 \desphere \de v\\
& \quad + C \int_{\Sigma_{t^*_3} \cap \mathfrak{Z}_{u_1}^{u_2}} (|L \dot Z_{I,0,k}|^2 + |\snabla \dot Z_{I,0,k}|^2 + |\hat \lbar \dot Z_{I,0,k}|^2)   r^{-2}\de \Sigma_{t^*}\Big\}
+ C\varepsilon^3 (|u_1|+1)^{-2},
\end{aligned}
\end{equation}
Moreover, the following holds:
\begin{equation}\label{eq:morahats}
\begin{aligned}
& \sum_{\substack{|I|+j+k \leq l-3\\|I| \geq 1}} \Big\{ \int_{\Sigma_{t_2^*} \cap \{r \leq R\}} |\hat Z_{,I,j,k}|^2\de \Sigma_{t^*}+ \int_{\mathfrak{Z}_{u_1}^{u_2} \cap \regio{}{}} r^{-2} |\hat Z_{,I,j,k}|^2 r^{-2} \de \text{Vol} \Big\} \\
&\quad \leq C \varepsilon^3(|u_1|+1)^{-2} + C \sum_{\substack{|I|+j+k \leq l-4\\ |I|\geq 1}} \fgiustor[\hat Z_{,I,j,k}](u_1)+ C\sum_{\substack{|I|+k \leq l-3\\|I| \geq 1}} \int_{\mathfrak{Z}_{u_1}^{u_2} \cap \regio{}{}} r^{-2} |\hat Z_{,I,0,k}|^2 r^{-2} \de \text{Vol} .
\end{aligned}
\end{equation}
Furthermore, the same inequalities hold true if $Z$ is replaced by $W$.
\end{proposition}

\begin{proof}[Proof of Proposition~\ref{prop:morafgiusto}]
Let us first set $\dot Z:= \dot Z_{,I,0,k}$, with $|I|+k \leq l-3$.

By a similar reasoning to the one leading to (\ref{fullen}), we obtain the following inequality (notice that, in particular, the reasoning here depends on deriving an estimate analogous to (\ref{eq:enbdur}) in the region $\mathfrak{Z}_{u_1}^{u_2} \cap \regio{}{} \cap \{u \leq \tilde u\}$):
\begin{align*}
& \int_{\mathfrak{Z}_{u_1}^{u_2} \cap \regio{}{} \cap \{ u \leq \tilde u\}} 
\left\{\frac 1 {r^2}|L \dot Z - \lbar \dot Z|^2+ \frac{(r-3M)^2}{r^3}\left(|\snabla \dot Z|^2+ \frac 1 r |L \dot Z + \lbar \dot Z|^2\right)+\frac 1 {r^3} |\dot Z|^2\right\} r^{-2}\de \text{Vol} \\ &
\leq C	\int_{\mathfrak{Z}_{u_1}^{u_2}   \cap \regio{}{} \cap \{ u \leq \tilde u\} } (r^2(1-\mu)^{-1} |\partial^I_{\leo} T^k (R_\rho)|^2 + r^{-1}(1-\mu) |\partial^I_{\leo} T^k (R_\rho)|^2 )\de u \de v \desphere\\ &
+C \int_{\mathfrak{Z}_{u_1}^{u_2} \cap \{ 2M \leq r \leq 4M\}  \cap \regio{}{} \cap \{ u \leq \tilde u\}} |\dot Z_{,I,0,k}| |\partial^I_{\leo} T^{k+1} (R_\rho)| \de u \de v \desphere\\ &
+C \int_{\mathfrak{Z}_{u_1}^{u_2} \cap \{ r \geq 4M\}  \cap \regio{}{} \cap \{ u \leq \tilde u\}} |\dot Z_{,I,0,k+1}| |\partial^I_{\leo} T^{k} (R_\rho)| \de u \de v \desphere\\ &
+ C \int_{\mathfrak{Z}_{u_1}^{u_2} \cap \{2M \leq  r  \leq 3 M\}  \cap \regio{}{} \cap \{ u \leq \tilde u\}}(|\partial^I_{\leo}  {\hat \lbar} T^{k} (R_\rho)|^2 + |\partial^I_{\leo}  T^{k} ((1-\mu)^{-1}R_\rho)|^2)\de \text{Vol}\\ &
+ C \int_{\Sigma_{t_2^*} \cap \{2M \leq r \leq 4M\}  \cap \regio{}{} \cap \{ u \leq \tilde u\}} |\partial^I_{\leo}  T^{k} ((1-\mu)^{-1}R_\rho)|^2 \de \Sigma_{t^*}\\&
 + C \int_{\Sigma_{t_1^*}\cap \{2M \leq r \leq 4M\} \cap \regio{}{} \cap \{ u \leq \tilde u\}} |\partial^I_{\leo}  T^{k} ((1-\mu)^{-1}R_\rho)|^2 \de \Sigma_{t^*}\\ 
& + C   \fgiustordeg[\dot Z](u_1) + C  \fgiustordeg[\dot Z](u_2)  + C  \int_{\conplus_{\tilde u} \cap \regio{t_1^*}{t_2^*}}|L \dot Z|^2 + (1-\mu)|\snabla\dot Z|^2 \desphere \de v\\
&+ C \int_{\Sigma_{t^*_3} \cap \mathfrak{Z}_{u_1}^{u_2}} (|L \dot Z|^2 + |\snabla \dot Z|^2 + |\hat \lbar \dot Z|^2)   r^{-2}\de \Sigma_{t^*}.
\end{align*}
We use the Cauchy--Schwarz inequality with a small parameter $\eta > 0$ on the terms in the second and third line on the RHS of the previous display to incorporate the terms in $Z$ into the LHS. We then have a total maximum number of $|I| + k + 1$ derivatives on $R_{\rho}$ in the RHS, and the hypotheses of Lemma~\ref{lem:bsrhoimp} let us only estimate $l-2$ of them, so we need to have $|I|+k\leq l-3$. Hence, using the decay estimate (\ref{eq:boundrsigmaimp0}), this implies
\begin{equation}\label{eq:morafgiustodegpf}
\begin{aligned}
& \int_{\mathfrak{Z}_{u_1}^{u_2} \cap \regio{}{}} 
\left\{\frac 1 {r^2}|L \dot Z - \lbar \dot Z|^2+ \frac{(r-3M)^2}{r^3}\left(|\snabla \dot Z|^2+ \frac 1 r |L \dot Z + \lbar \dot Z|^2\right)+\frac 1 {r^3} |\dot Z|^2\right\} r^{-2}\de \text{Vol} \\ 
& \leq C   \fgiustordeg[\dot Z](u_1) + C  \fgiustordeg[\dot Z](u_2)  + C  \int_{\conplus_{\tilde u} \cap \regio{t_1^*}{t_2^*}}|L \dot Z|^2 + (1-\mu)|\snabla\dot Z|^2 \desphere \de v\\
& \quad + C \int_{\Sigma_{t^*_3} \cap \mathfrak{Z}_{u_1}^{u_2}} (|L \dot Z|^2 + |\snabla \dot Z|^2 + |\hat \lbar \dot Z|^2)   r^{-2}\de \Sigma_{t^*}
+ C\varepsilon^3 (|u_1|+1)^{-2},
\end{aligned}
\end{equation}
with $|I|+k \leq l-3$. The last display, in particular, implies inequality (\ref{eq:morafgiustodegstat}).

We now consider again relation (\ref{eq:mastermorared}). It follows that, letting $\hat Z_j := \hat Z_{,I,j,k}$, with $|I| + j + k \leq l-3$,
\begin{equation}\label{eq:masterfgiustomora}
\begin{aligned}
&L (r^{-1}(1-\mu)|\hat Z_j|^2) + \frac{(1-\mu)^2}{r^2} |\hat Z_j|^2+(1-\mu) (2 j-1) \frac{2M}{r^3} |\hat Z_j|^2 \\
&= \underbrace{ 2(1-\mu)r^{-1} \hat Z_j 
 {\hat \lbar}^{j-1}(\sdelta \hat Z_0)}_{(i)} + \underbrace{2r^{-1}(1-\mu) \hat Z_j \partial_{\leo}^I T^k {\hat \lbar}^{j-1}((1-\mu)^{-1}R_\rho)}_{(ii)}\\
 &\quad - \underbrace{2r^{-1}(1-\mu) \hat Z_j\left(\sum_{i=1}^{j-1} f^{(j-1)}_i \hat Z_i\right)}_{(iii)}.
\end{aligned}
\end{equation}
We have 
\begin{align*}
&(i) \stackrel{\mathbb{S}^2}{=} 2 r^{-1} (\lbar \hat Z_{j-1}) r^{-2} \sdelta_{\mathbb{S}^2}\hat Z_{j-1}+2 r^{-1} \lbar \hat Z_{j-1} \sum_{\substack{a+b = j-1\\ a \geq 1}} \hat \lbar^a (r^{-2}) \sdelta_{\mathbb{S}^2}\hat Z_b \\ &
\stackrel{\mathbb{S}^2}{=} -2 r^{-1} (\snabla_\lbar (\snabla_{\mathbb{S}^2})_A \hat Z_{j-1}) ((\snabla_{\mathbb{S}^2})^A \hat Z_{j-1}) + 2 r^{-1} \lbar \hat Z_{j-1} \sum_{\substack{a+b = j-1\\ a \geq 1}} \hat \lbar^a (r^{-2}) \sdelta_{\mathbb{S}^2}\hat Z_b \\ &
\leq -\lbar \left(r^{-1}|\snabla \hat Z_{j-1}|^2 \right) +\lbar(r^{-3}) |\snabla_{\mathbb{S}^2} \hat Z_{j-1}|^2 + \frac 1 {10} r^{-2} (1-\mu) |\hat Z_j|^2 + C(1-\mu)\sum_{b = 0}^{j-2} r^{-6} |\sdelta_{\mathbb{S}^2}\hat Z_b|^2\\ &
\leq- \lbar \left(r^{-1}|\snabla \hat Z_{j-1}|^2 \right) +3\frac{(1-\mu)}{r^4} |\snabla_{\mathbb{S}^2} \hat Z_{j-1}|^2 + \frac 1{10} r^{-2}(1-\mu) |\hat Z_j|^2 + C(1-\mu)\sum_{b = 0}^{j-2} \sum_{\Omega \in \leo}r^{-4} |\snabla \Omega \hat Z_b|^2.
\end{align*}
Moreover,
\begin{align*}
|(ii)|\leq \frac 1 {10}r^{-2}(1-\mu)|\hat Z_j|^2 + C (1-\mu) \big(\partial_{\leo}^I T^k {\hat \lbar}^{j-1}((1-\mu)^{-1}R_\rho) \big)^2.
\end{align*}
Finally,
\begin{align*}
|(iii)| \leq  \frac 1 {10}r^{-2}(1-\mu)|\hat Z_j|^2 + Cr^{-6} (1-\mu)\sum_{i =1}^{j-1}|\hat Z_i|^2.
\end{align*}
Let us proceed to integrate the relation (\ref{eq:masterfgiustomora}) on the spacetime region $\mathfrak{Z}_{u_1}^{u_2} \cap \regio{}{}\cap \{u \leq \tilde u\}$, and take the limit as $\tilde u \to \infty$. An induction argument, together with the bound (\ref{eq:morafgiustodegpf}) then yields (note that we discarded some of the future boundary terms due to their positivity):
\begin{equation*}
\begin{aligned}
& \sum_{\substack{|I|+j+k \leq l-3\\|I| \geq 1}} \Big\{ \int_{\Sigma_{t_2^*} \cap \{r \leq R\}} |\hat Z_{,I,j,k}|^2\de \Sigma_{t^*}+ \int_{\mathfrak{Z}_{u_1}^{u_2} \cap \regio{}{}} r^{-2} |\hat Z_{,I,j,k}|^2 r^{-2} \de \text{Vol} \Big\} \\
&\leq C \varepsilon^2(|u_1|+1)^{-2} + C \sum_{\substack{|I|+j+k \leq l-4\\ |I|\geq 1}} \fgiustor[\hat Z_{,I,j,k}](u_1)+ \sum_{\substack{|I|+k \leq l-3\\|I| \geq 1}} \int_{\mathfrak{Z}_{u_1}^{u_2} \cap \regio{}{}} r^{-2} |\hat Z_{,I,0,k}|^2 r^{-2} \de \text{Vol} .
\end{aligned}
\end{equation*}
This is the claim (\ref{eq:morahats}).
\end{proof}

\begin{proposition}[Energy conservation and Morawetz estimate in terms of $\fgiustor$-fluxes] \label{prop:encfgiusto} Let $l \in \enn$, $l \geq 4$. There exist a small number $\tilde \varepsilon >0$, a number $R^* >0$ and a constant $C >0$ such that the following holds. Let $t_3^* \geq t_0^*$. Let $\fara$ be a smooth solution of the MBI system (\ref{MBI}) on $\mathcal{R}_{t_0^*}^{t_3^*} \subset \mathcal{S}_e$. Let $\mathcal R := \regio{t_0^*}{t_3^*}$.  Assume the bootstrap assumptions $BA\left(\regio{t_0^*}{t_3^*},1,\left\lfloor \frac{l+4}{2}\right\rfloor, \varepsilon \right)$. Assume the following bounds on the initial energy:
\begin{equation}
\begin{aligned}
&\sum_{|I| + k \leq l-1} \int_{\Sigma_{t_0^*}} \left[r |L \dot Z_{,I,0,k}|^2 + r|\snabla \dot Z_{,I,0,k} |^2\right] r^{-2} \de \Sigma_{t^*} \leq \varepsilon^2, \\
&\sum_{|I| + k \leq l-1} \int_{\Sigma_{t_0^*}} \left[r |L \dot W_{,I,0,k}|^2 + r|\snabla \dot W_{,I,0,k} |^2\right]r^{-2}\de \Sigma_{t^*} \leq \varepsilon^2,\\
& \int_{\Sigma_{t_0^*}} r |\partial^{\leq l} \alpha |^2\de \Sigma_{t^*} \leq \varepsilon^2, \\
&\norm{\fara}^2_{H^{l+4}(\Sigma_{t_0^*})} \leq \varepsilon^2.
\end{aligned}
\end{equation}
Here, $0 < \varepsilon < \tilde \varepsilon$. Let $u_i = t_i - R^*$, where $t_i = t_i^* - 2M \log(R-2M)$, and $R^* = R + 2M \log(r-2M) $, $i \in \{0,1,2,3\}$.
Under these conditions, we have the energy conservation statement:
\begin{equation}\label{eq:enconsZu}
\boxed{
\begin{aligned}
&\sum_{\substack{|I|+j+k \leq l-4\\ |I|\geq 1}}\fgiustor[\hat Z_{,I,j,k}](u_2) +  \sum_{\substack{|I|+j+k \leq l-3\\ |I|\geq 1}}\int_{\mathfrak{Z}_{u_1}^{u_2} \cap \regio{}{}} r^{-2} |\hat Z_{,I,j,k}|^2 r^{-2} \de \text{Vol} \\
& \qquad  \leq \sum_{\substack{|I|+j+k \leq l-3\\ |I|\geq 1}}\fgiustor[\hat Z_{,I,j,k}](u_2) + C \varepsilon^2 (|u_1|+1)^{-2}.
\end{aligned}}
\end{equation}
valid for $u_0 \leq u_1 \leq u_2 \leq u_3$.

Furthermore, the same inequality holds when $\hat Z$ is replaced by $\hat W$.
\end{proposition}

\begin{proof}[Proof of Proposition~\ref{prop:encfgiusto}]
We divide the proof in several {\bf Steps}. In {\bf Step 1}, we are going to prove the claim~\eqref{eq:enconsZu} under the restriction $j =0$, and replacing all occurrences of $\fgiustor$ with $\fgiustordeg$ in the claim. In {\bf Step 2}, with the aid of inequality~\eqref{eq:morahats}, we are going to remove these restrictions.

\subsection*{Step 1}
In this step, let us restrict to $j=0$ and $|I|+k \leq l -3$. Let us also recall the energy conservation identity, Equation (\ref{eq:toencons}), valid upon integration on the sphere $\mathbb{S}^2$:
\begin{equation}\label{eq:tointnear}
\frac 1 2 \lbar |L \dot Z_{,I,0,k}|^2 + \frac 1 2 L|\lbar \dot Z_{,I,0,k}|^2 + \frac 1 2 (L + \lbar) \left((1-\mu)|\snabla \dot Z_{,I,0,k}|^2 \right) \stackrel{\mathbb{S}^2}{=} (T\dot Z_{,I,0,k})\partial^I_{\leo} T^k (R_\rho).
\end{equation}
We proceed to integrate the last display on $\{ u \leq \tilde u\} \cap \regio{t_0^*}{t_3^*} \cap \mathfrak{Z}_{u_1}^{u_2}$, where $\tilde u$ is thought of as being very large:
$$
\tilde u \gg u_3,
$$ 
so that the hypersurface $\{u = \tilde u\}$ is ``very close'' to the event horizon $\mathcal{H}^+$. Note that we have the inequality $t^*_0 \leq t_1^* \leq t_2^* \leq t_3^*$. This means that the boundary of the region $\{ u \leq \tilde u\} \cap \regio{t_0^*}{t_3^*} \cap \mathfrak{Z}_{u_1}^{u_2}$ is composed of the following hypersurfaces:
\begin{itemize}
\item The future boundary is composed, from left to right, of a null hypersurface close to the horizon ($\mathfrak{Z}_{u_1}^{u_2} \cap \{ u = \tilde u\}$), followed by the spacelike surface $\Sigma_{t_2^*}   \cap \{ u \leq \tilde u\} \cap \{r \leq R\}$, followed by the null hypersurface $\{u = u_2\} \cap \{r \geq R\} \cap \{t^* \leq t_3^*\}$, followed finally by the spacelike hypersurface $\Sigma_{t_3^*} \cap \mathfrak{Z}_{u_1}^{u_2}$.
\item The past boundary is composed, from left to right, by the spacelike hypersurface $\Sigma_{t^*_1}\cap \{u \leq \tilde u\}$ followed by the null hypersurface $\{u = u_1\} \cap \{r \geq R\} \cap \{t^* \leq t_3^*\}$.
\end{itemize}

We also note that the restriction $\{ u \leq \tilde u\}$ is just technical, and we will take the limit as $\tilde u \to \infty$, discarding the positive boundary terms introduced at the hypersurface  $\{ u = \tilde u\}$. We also recall the notation for outgoing and ingoing null cones introduced in Section~\ref{sec:not:regfol}:
$$
\conplus_{\tilde u} := \{ u = \tilde u\}, \qquad  \conminus_{\tilde v} := \{v = \tilde v\}.
$$

We obtain, denoting $\dot Z := \dot Z_{I, 0, k}$, 
\begin{align*}
&\fgiustordeg[\dot Z](u_2)\\
&\quad +\int_{\conplus_{\tilde u} \cap \regio{t_1^*}{t_2^*}}|L \dot Z|^2 + (1-\mu)|\snabla\dot Z|^2 \desphere \de v +\int_{\Sigma_{t^*_3} \cap \mathfrak{Z}_{u_1}^{u_2}} \left(|L \dot Z|^2 + |\snabla \dot Z|^2 + |\hat \lbar \dot Z|^2  \right) r^{-2}\de \Sigma_{t^*} \\
&
\leq C\fgiustordeg[\dot Z](u_1) + C \Big|\int_{\regio{}{} \cap \mathfrak{Z}_{u_1}^{u_2} \cap \{u \leq \tilde u \} }(T\dot Z_{,I,0,k})\partial^I_{\leo} T^k (R_\rho) \de u \de v \desphere\Big|.
\end{align*}
We now integrate the last term in the previous inequality by parts \emph{only in the region} $\{5/2 M \leq r \leq 4M\}$ (this is to take into account the fact that the Morawetz estmate ``loses derivatives'' in this region). We obtain:
\begin{align}
& \Big|\int_{\regio{}{} \cap \mathfrak{Z}_{u_1}^{u_2} \cap \{u \leq \tilde u \} }(T\dot Z_{,I,0,k})\partial^I_{\leo} T^k (R_\rho) \de u \de v \desphere\Big| \nonumber \\
& \leq \underbrace{ \Big|\int_{\regio{}{} \cap \mathfrak{Z}_{u_1}^{u_2} \cap \{u \leq \tilde u \} \cap \{5/2M \leq  r \leq 4M\}}(\dot Z_{,I,0,k})\partial^I_{\leo} T^{k+1} (R_\rho) \de u \de v \desphere\Big|}_{(I)} \nonumber\\
&\quad + \underbrace{\Big|\int_{\regio{}{} \cap \mathfrak{Z}_{u_1}^{u_2} \cap \{u \leq \tilde u \} \cap \big(\{2M \leq  r \leq 5/2 M \} \cup \{r \geq  4M \} \big)}(T \dot Z_{,I,0,k})\partial^I_{\leo} T^{k} (R_\rho) \de u \de v \desphere\Big|}_{(II)} \nonumber  \\
&\quad + \underbrace{\int_{\Sigma_{t_1^*} \cap \{5/2M \leq r \leq 4M\}} |\dot Z_{,I,0,k}| |\partial^I_{\leo} T^k (R_\rho)|\de \Sigma_{t^*} + \int_{\Sigma_{t_2^*} \cap \{5/2M \leq r \leq 4M\}} |\dot Z_{,I,0,k}| |\partial^I_{\leo} T^k (R_\rho)|\de \Sigma_{t^*}}_{(III)}.\nonumber 
\end{align}
By our assumptions, $(III) \leq C \eps^2(1+|u_1|)^{-2}$. Moreover, we have, using inequality~\eqref{eq:morafgiustodegstat}:
\begin{align*}
& (II) 
\leq C \eps \int_{\regio{}{} \cap \mathfrak{Z}_{u_1}^{u_2} \cap \{u \leq \tilde u \} \cap \{2M \leq  r \leq 5/2M\}}|\dot Z_{,I,0,k+1}|^2\de \text{Vol} \\
&\quad + C \eps^{-1}\int_{\regio{}{} \cap \mathfrak{Z}_{u_1}^{u_2} \cap \{u \leq \tilde u \} \cap \{2M \leq  r \leq 5/2M\}}|\partial^I_{\leo} T^k (R_\rho)|^2 \de \text{Vol}\\
&\quad  +\eps \int_{\regio{}{} \cap \mathfrak{Z}_{u_1}^{u_2} \cap \{u \leq \tilde u \} \cap \{r \geq 4M\}}r^{-2}|\dot Z_{,I,0,k+1}|^2\de \text{Vol}\\& \quad  + \eps^{-1}\int_{\regio{}{} \cap \mathfrak{Z}_{u_1}^{u_2} \cap \{u \leq \tilde u \} \cap \{r \geq 4M\}}|\partial^I_{\leo} T^k (R_\rho)|^2 \de \text{Vol}
\\ 
&\leq C \eps \sum_{\substack{|I'|+ k' \leq l-3\\ |I'|\geq 1}}  \fgiustordeg[\dot Z_{,I',0,k'}](u_1) +  \fgiustordeg[\dot Z_{,I',0,k'}](u_2)\\ 
&\quad + C \eps \sum_{\substack{|I'|+ k' \leq l-3\\ |I'|\geq 1}}\Big\{ \int_{\conplus_{\tilde u} \cap \regio{t_1^*}{t_2^*}}|L \dot Z_{I', 0, k'}|^2 + (1-\mu)|\snabla\dot Z_{,I',0,k'}|^2 \desphere \de v\\
&\qquad + \int_{\Sigma_{t^*_3} \cap \mathfrak{Z}_{u_1}^{u_2}} |L \dot Z_{,I',0,k'}|^2 + |\snabla \dot Z_{,I',0,k'}|^2 + | \lbar \dot Z_{,I',0,k'}|^2   r^{-2}\de \Sigma_{t^*}\Big\} \\
&\quad + C\varepsilon^2 (|u_1|+1)^{-2}.
\end{align*}
Here, we used the Cauchy--Schwarz inequality with parameter $\eps$ to deduce the second line from the first. This is because we seek to absorb the boundary terms which arise from an application of the Morawetz estimate~\eqref{eq:morafgiustodegstat} into the left hand side of the resulting inequality. Moreover, we used the bounds~\eqref{eq:boundrrhoimp2} to bound the nonlinear error terms.
Similarly,
\begin{align*}
&(I) \leq C \eps \sum_{\substack{|I'|+ k' \leq l-3\\ |I'|\geq 1}}  \fgiustordeg[\dot Z_{,I',0,k'}](u_1) +  \fgiustordeg[\dot Z_{,I',0,k'}](u_2)\\ 
&\quad + C \eps \sum_{\substack{|I'|+ k' \leq l-3\\ |I'|\geq 1}}\Big\{ \int_{\conplus_{\tilde u} \cap \regio{t_1^*}{t_2^*}}|L \dot Z_{I', 0, k'}|^2 + (1-\mu)|\snabla\dot Z_{,I',0,k'}|^2 \desphere \de v\\
&\qquad + \int_{\Sigma_{t^*_3} \cap \mathfrak{Z}_{u_1}^{u_2}} |L \dot Z_{,I',0,k'}|^2 + |\snabla \dot Z_{,I',0,k'}|^2 + | \lbar \dot Z_{,I',0,k'}|^2   r^{-2}\de \Sigma_{t^*}\Big\} \\
&\quad + C\varepsilon^2 (|u_1|+1)^{-2}.
\end{align*}
All in all, we obtain, recalling that we defined $\dot Z := \dot Z_{I, 0, k}$,
\begin{align}
&\fgiustordeg[\dot Z](u_2)\nonumber\\
&\quad +\int_{\conplus_{\tilde u} \cap \regio{t_1^*}{t_2^*}}|L \dot Z|^2 + (1-\mu)|\snabla\dot Z|^2 \desphere \de v +\int_{\Sigma_{t^*_3} \cap \mathfrak{Z}_{u_1}^{u_2}} \left(|L \dot Z|^2 + |\snabla \dot Z|^2 + |\hat \lbar \dot Z|^2  \right) r^{-2}\de \Sigma_{t^*} \nonumber\\
&
\leq C\fgiustordeg(u_1) + C \eps \sum_{\substack{|I'|+ k' \leq l-3\\ |I'|\geq 1}}  \fgiustordeg[\dot Z_{,I',0,k'}](u_1) +  \fgiustordeg[\dot Z_{,I',0,k'}](u_2) \label{eq:masterzendeg} \\ 
&\quad + C \eps \sum_{\substack{|I'|+ k' \leq l-3\\ |I'|\geq 1}}\Big\{ \int_{\conplus_{\tilde u} \cap \regio{t_1^*}{t_2^*}}|L \dot Z_{I', 0, k'}|^2 + (1-\mu)|\snabla\dot Z_{,I',0,k'}|^2 \desphere \de v\nonumber\\
&\qquad  \qquad + \int_{\Sigma_{t^*_3} \cap \mathfrak{Z}_{u_1}^{u_2}} |L \dot Z_{,I',0,k'}|^2 + |\snabla \dot Z_{,I',0,k'}|^2 + | \lbar \dot Z_{,I',0,k'}|^2   r^{-2}\de \Sigma_{t^*}\Big\} \nonumber\\
&\quad + C\varepsilon^2 (|u_1|+1)^{-2}.\nonumber
\end{align}
We then proceed to sum inequality~\eqref{eq:masterzendeg} for all $I$, $k$ such that $|I|+k \leq l-3$, and $|I| \geq 1$. Upon absorbing the error terms arising from $(I)$ and $(II)$ in the left hand side of the resulting inequality, this gives the following restricted version of the claim:
\begin{equation}
\boxed{
\begin{aligned}
\sum_{\substack{|I|+ k \leq l-3\\|I|\geq 1}} \fgiustordeg[\hat Z_{,I,0,k}](u_2) \leq C\sum_{\substack{|I|+ k \leq l-3\\|I|\geq 1}} \fgiustordeg[\hat Z_{,I,0,k}](u_1) + C \varepsilon^2 (|u_1|+1)^{-2}.
\end{aligned}}
\end{equation}
Furthermore, revisiting estimate~\eqref{eq:morafgiustodegstat}, we obtain the following Morawetz estimate (note that a slight modification of the above reasoning gives control on all the boundary terms on the right hand side of~\eqref{eq:morafgiustodegstat}):
\begin{equation}
\boxed{
\begin{aligned}
& \sum_{\substack{|I|+k \leq l-3\\|I|\geq 1}} \int_{\mathfrak{Z}_{u_1}^{u_2} \cap \regio{}{}} 
\Big\{\frac 1 {r^2}|L \dot Z_{I,0,k} - \lbar \dot Z_{I,0,k}|^2\\
&\qquad + \frac{(r-3M)^2}{r^3}\left(|\snabla \dot Z_{I,0,k}|^2+ \frac 1 r |L \dot Z_{I,0,k} + \lbar \dot Z_{I,0,k}|^2\right)+\frac 1 {r^3} |\dot Z_{I,0,k}|^2\Big\} r^{-2}\de \text{Vol} \\ 
& \leq C\sum_{\substack{|I|+ k \leq l-3\\|I|\geq 1}} \fgiustordeg[\hat Z_{,I,0,k}](u_1) + C \varepsilon^2 (|u_1|+1)^{-2}.
\end{aligned}}
\end{equation}
This concludes {\bf Step 1} of the proof.

\subsection*{Step 2} We are now going to remove the degeneracies. The last display, in combination with inequality~\eqref{eq:morahats}, gives:
\begin{equation}\label{eq:morall}\boxed{
\begin{aligned}
& \sum_{\substack{|I|+j+k \leq l-3\\|I| \geq 1}} \Big\{ \int_{\Sigma_{t_2^*} \cap \{r \leq R\}} |\hat Z_{,I,j,k}|^2\de \Sigma_{t^*}+ \int_{\mathfrak{Z}_{u_1}^{u_2} \cap \regio{}{}} r^{-2} |\hat Z_{,I,j,k}|^2 r^{-2} \de \text{Vol} \Big\} \\
&\qquad \leq C\sum_{\substack{|I|+j+ k \leq l-3\\|I|\geq 1}} \fgiustor[\hat Z_{,I,j,k}](u_1) + C \varepsilon^2 (|u_1|+1)^{-2}.
\end{aligned}}
\end{equation}

Let us impose the following requirement throughout the rest of the proof: $|I|+j+k \leq l-4$. Recall the commuted equation:
\begin{equation}\label{eq:zhatcomref}
L \lbar \dot Z_{,I,j,k}- \sum_{a+b = j} \lbar^a \left(\frac{1-\mu}{r^2} \right) \sdelta_{\mathbb{S}^2} \lbar^b \dot Z_{,I,0,k} = \partial^I_{\leo} T^k \lbar^j R_\rho.
\end{equation}
Let us now choose a smooth nondecreasing radial function $g(r)$ which satisfies the following: $g(r) = 0$ for $r \in [2M, R-M]$, $g(r) =1$ for $r \in [R, \infty)$.
We now multiply relation (\ref{eq:zhatcomref}) by $2 g(r) \p_t\dot Z_{,I,j,k}$, and integrate on $\mathbb{S}^2$. We obtain:
\begin{equation*}
\begin{aligned}
&\frac 12 \lbar\left( g(r)|L \dot Z_{,I,j,k}|^2\right)+\frac 12 L\left( g(r)|\lbar \dot Z_{,I,j,k}|^2\right)-\frac 12 \lbar g(r) |L \dot Z_{,I,j,k}|^2 -\frac 12 L g(r) |\lbar \dot Z_{,I,j,k}|^2
\\ & +\frac 12 (L + \lbar) (g(r) |\snabla \dot Z_{,I,j,k}|^2)
-2 g(r) \sum_{\substack{a + b = j\\ a \geq 1}}\p_t \dot Z_{,I,j,k} \, \lbar^a \left(\frac{1-\mu}{r^2}\right) \sdelta_{\mathbb{S}^2} \dot Z_{,I,b,k} \\
& \qquad \stackrel{\mathbb{S}^2}{=} 2 g(r) \p_t \dot Z_{,I,j,k} \partial_{\leo}^I T^k \lbar^j R_\rho.
\end{aligned}
\end{equation*}
We now proceed to integrate the previous display on $\mathfrak{Z}_{u_1}^{u_2}\cap \regio{}{} \cap \{r \geq R-M\}$ with respect to the form $\de u \de v$. We obtain the following inequality, discarding some of the positive boundary terms:
\begin{equation}\label{eq:secondadd}
\begin{aligned}
& \int_{\conplus_{u_2} \cap \regio{}{} \cap \{r \geq R\}} |L \dot Z_{,I,j,k}|^2 +|\snabla \dot Z_{,I,j,k}|^2  \desphere \de v \\ &
\leq C \fgiustor[\dot Z_{,I,j,k}](u_1) + C \int_{\regio{}{} \cap \mathfrak{Z}_{u_1}^{u_2}  \cap \{R -M \leq r \leq R\} } |L \dot Z_{,I,j,k}|^2 +|\lbar \dot Z_{,I,j,k}|^2  \desphere \de v\\ &
\quad + C\int_{\regio{}{} \cap \mathfrak{Z}_{u_1}^{u_2} \cap \{r \geq R -M\}} \left(\sum_{b=0}^{j-1}r^{-4}|\sdelta_{\mathbb{S}^2} \dot Z_{,I,b,k}|^2 + r^2|\partial_{\leo}^I T^k \lbar^j R_\rho|^2\right)\desphere \de u \de v \\
&\quad +C\int_{\regio{}{} \cap \mathfrak{Z}_{u_1}^{u_2} \cap \{r \geq R -M\}} r^{-2} |\p_t \dot Z_{,I,j,k}|^2\desphere \de u \de v 
\end{aligned}
\end{equation}
Summing the previous inequality over all $I, j, k$ such that $|I|+j+k \leq l-4$, and such that $|I|\geq 1$, we obtain, bounding the spacetime terms with the aid of~\eqref{eq:morall},
\begin{equation}\label{eq:claimmoraZ}
\boxed{
\begin{aligned}
&\sum_{\substack{|I|+j+k \leq l-4\\ |I|\geq 1}}\fgiustor[\hat Z_{,I,j,k}](u_2) +  \sum_{\substack{|I|+j+k \leq l-3\\ |I|\geq 1}}\int_{\mathfrak{Z}_{u_1}^{u_2} \cap \regio{}{}} r^{-2} |\hat Z_{,I,j,k}|^2 r^{-2} \de \text{Vol} \\
& \qquad  \leq \sum_{\substack{|I|+j+k \leq l-3\\ |I|\geq 1}}\fgiustor[\hat Z_{,I,j,k}](u_2) + C \varepsilon^2 (|u_1|+1)^{-2}.
\end{aligned}}
\end{equation}
This is the claim, and concludes {\bf Step 2} of the proof.\footnote{Note that, in the region where $r \geq R$, it is straightforward to bound fluxes of $\hat Z_{,I,j,k}$ in terms of fluxes involving $\hat Z_{,I,j,k}$. This is tacitly used deriving~\eqref{eq:claimmoraZ} from~\eqref{eq:secondadd}.}
\end{proof}

\begin{remark}
In view of Proposition~\ref{prop:encfgiusto}, we can estimate the bulk term on the right hand side of estimate~\eqref{eq:punoimp}, to obtain (under the same assumptions of Proposition~\ref{prop:previsited})
\begin{align}
	& \int_{\mathbb{S}^2}\int_{\{u = u_2\}\cap\{r \geq R\}\cap \regio{}{}} \sum_{\substack{|I|+j+k \leq l-4\\|I|\geq 1}}(r|L \dot Z_{,I,j,k}|^2) \de v \desphere \nonumber \\ & 
	+\int_{\duu \cap \regio{}{}} \sum_{\substack{|I|+j+k \leq l-4\\|I|\geq 1}}\left[|L \dot Z_{,I,j,k}|^2 + |\snabla \dot Z_{,I,j,k}|^2 + r^{-2}|\dot Z_{,I,j,k}|^2\right] (1-\mu) \de u \de v \desphere \label{eq:punowithdecay} \\ &
	\leq 	C \sum_{\substack{|I|+j+k \leq l-4\\|I|\geq 1}} \fgiustor[\dot Z_{,I,j,k}](u_1) +\sum_{\substack{|I|+j+k \leq l-4\\|I|\geq 1}} \int_{\mathbb{S}^2}\int_{\{u = u_1\}\cap\{r \geq R\}\cap \regio{}{}} r |L \dot Z_{,I,j,k}|^2 \de v \desphere + C \varepsilon^2 (1+|u_1|)^{-1}. \nonumber
	\end{align}
\end{remark}

\section{Application of the \texorpdfstring{$r^p$}{rp} method: decay of null fluxes}\label{sec:pdecay}

Now that we have all the necessary estimates, we can apply the $r^p$ method of Dafermos and Rodnianski to prove integrated decay for $Z$ and $W$. The argument will rely on exploiting the $r^p$-hierarchy, proving integrated decay on dyadic sequences and then using the results of section~\ref{sub:moragiustoests} to remove the restriction to sequences.

\begin{proposition}\label{prop:fluxdecay}
Let $l \in \enn$, $l \geq 1$. There exist a small number $\tilde \varepsilon > 0$, a number $R^* >0$ and a constant $C >0$ such that the following holds. Let $\fara$ be a smooth solution of the MBI system~\eqref{MBI} on $\mathcal{R}_{t_0^*}^{t_3^*} \subset \mathcal{S}_e$. Let $\mathcal R := \mathcal{R}_{t_0^*}^{t_3^*}$. Assume the bootstrap assumptions $BA\left(\regio{t_0^*}{t_3^*},1,\left\lfloor \frac{l+9}{2}\right\rfloor, \varepsilon \right)$. Assume the following bounds on the initial energy:
\begin{equation}
\begin{aligned}
&\sum_{|I| + j + k \leq l+4} \int_{\Sigma_{t_0^*}} \left[r^2 |L \dot Z_{,I,j,k}|^2 + r|\snabla \dot Z_{,I,j,k} |^2\right] r^{-2} \de \Sigma_{t^*} \leq \varepsilon^2, \\
&\sum_{|I|+j + k \leq l+4} \int_{\Sigma_{t_0^*}} \left[r^2 |L \dot W_{,I,j,k}|^2 + r|\snabla \dot W_{,I,j,k} |^2\right]r^{-2}\de \Sigma_{t^*} \leq \varepsilon^2,\\
&\int_{\Sigma_{t_0^*}} r |\partial^{\leq l + 5} \alpha |^2\de \Sigma_{t^*} \leq \varepsilon^2, \\
&\norm{\fara}^2_{H^{l+9}(\Sigma_{t_0^*})} \leq \varepsilon^2.
\end{aligned}
\end{equation}
Here, $0 < \varepsilon < \tilde \varepsilon$. Let $u_0 = t_0 - R^*$, where $t_0 = t_0^* - 2M \log(R-2M)$, and $R^* = R + 2M \log(R-2M)$.

Recall that $Z := r^2 \rho$, and it satisfies Equation (\ref{eq:zdot}).
Under these assumptions, we have the inequality:
\begin{equation}\label{eq:unibdsecond}
\boxed{
\begin{aligned}
\sum_{\substack{|I|+j+k \leq l+3\\ |I|\geq 1}}  \int_{\mathbb{S}^2}\int_{\conplus_{u_1} \cap \regio{}{} \cap \{r \geq R\}} r^{2} |L \hat Z_{,I,j,k}|^2 \de v \desphere \leq C \varepsilon^2,
\end{aligned}
}
\end{equation}
valid for all $u_1 \in \R$. Furthermore, we have the decay of fluxes, valid for $u \geq u_0$,
\begin{equation}\label{eq:decayfuno}
\boxed{
\begin{aligned}
\sum_{\substack{|I|+j+k \leq l+1\\|I|\geq 1}} \fgiustor[\hat Z_{,I,j,k}](u) + \int_{\conplus_{u} \cap \regio{}{} \cap \{r \geq R\}} r|L \hat Z_{,I,j,k}|^2 \desphere \de v \leq
C (|u|+1)^{-1} \varepsilon^2,
\end{aligned}
}
\end{equation}

\begin{equation}\label{eq:decayfdue}
\boxed{
\begin{aligned}
\sum_{\substack{|I|+j+k \leq l\\|I|\geq 1}} \fgiustor[\hat Z_{,I,j,k}](u) \leq C (|u|+1)^{-2} \varepsilon^2.
\end{aligned}
}
\end{equation}
\end{proposition}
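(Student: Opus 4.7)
The plan is to execute the standard Dafermos--Rodnianski $r^p$-hierarchy argument, leveraging the three inputs already assembled in Sections~\ref{sec:previsited} and~\ref{sub:moragiustoests}: the $p=2$ estimates \eqref{eq:pdueimpinit}--\eqref{eq:pdueimp}, the $p=1$ estimate in its post-Morawetz form \eqref{eq:punowithdecay}, and the energy conservation statement Proposition~\ref{prop:encfgiusto}. The claim for $\hat W$ will follow by exactly the same procedure from the corresponding analogues of these estimates.

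For the uniform boundedness \eqref{eq:unibdsecond}, I would apply \eqref{eq:pdueimpinit} directly with derivative budget $|I|+j+k\leq l+3$, $|I|\geq 1$: the right-hand side is dominated by the weighted initial-data hypotheses of the Proposition, which immediately yields the claim. For the first decay rate \eqref{eq:decayfuno}, I plan to run a pigeonhole argument on a dyadic sequence $\{u_n\}$ with $|u_n|\sim 2^n$. From \eqref{eq:pdueimp}, the spacetime bulk term $\int\int_{\duu} r|L\hat Z|^2 (1-\mu)\, \de u \de v\desphere$ is bounded by $C\varepsilon^2$, hence the $r$-weighted flux $u\mapsto\int_{\conplus_u\cap\{r\geq R\}} r|L\hat Z|^2 \desphere\,\de v$ must, on each dyadic interval $[u_n, 2u_n]$, attain a value $\lesssim \varepsilon^2 /(|u_n|+1)$ at some intermediate $u'_n$. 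Substituting $u_1\leftarrow u'_n$ into \eqref{eq:punowithdecay}, all terms on the right are bounded by $C\varepsilon^2/(|u_n|+1)$, giving the desired decay of $\fgiustor[\hat Z]$ at the dyadic points. Energy conservation (Proposition~\ref{prop:encfgiusto}) then propagates this bound to every $u\geq u_0$ at the cost of one derivative. The $r$-weighted outgoing flux decay follows from the boundary term on the LHS of \eqref{eq:punowithdecay}.

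For the sharp decay \eqref{eq:decayfdue}, I would iterate the argument with \eqref{eq:decayfuno} as the new input. Inequality \eqref{eq:punowithdecay} now supplies a spacetime bulk controlling $\int_{u_n}^{2u_n}\fgiustor[\hat Z](u)\,\de u\lesssim \varepsilon^2/(|u_n|+1)$ (after using \eqref{eq:morafgiusto} to cover the bounded-$r$ portion of $\fgiustor$ that the bulk weight $|L\hat Z|^2 + |\snabla\hat Z|^2 + r^{-2}|\hat Z|^2$ fails to reach directly). Pigeonhole then produces $u''_n$ at which $\fgiustor[\hat Z](u''_n)\lesssim \varepsilon^2/(|u_n|+1)^2$, and a second application of Proposition~\ref{prop:encfgiusto} fills in the intermediate values with a further single derivative loss.

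The main technical obstacle will be bookkeeping the derivative counts consistently across the three iterations (uniform boundedness, first decay, sharp decay): each use of Proposition~\ref{prop:encfgiusto} and each invocation of the Morawetz estimate \eqref{eq:morafgiusto} costs one derivative, and this is precisely the reason the hypothesis demands control of $l+4$ derivatives at the $r^2$-weighted level and $l+9$ Sobolev derivatives overall, while the conclusions \eqref{eq:decayfuno} and \eqref{eq:decayfdue} are stated at derivative levels $l+1$ and $l$ respectively. A subsidiary care will be required with the zero spherical mode: the restriction $|I|\geq 1$ throughout the argument is forced by the Poincar\'e-type estimates used in the energy and redshift identities of Sections~\ref{sec:premora}--\ref{sec:morawetz}, and is compatible with Proposition~\ref{prop:charge}, which already disposes of the spherical averages.
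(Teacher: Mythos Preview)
Your proposal is correct and matches the paper's argument essentially step for step: uniform boundedness from \eqref{eq:pdueimpinit}, a dyadic pigeonhole on the $p=2$ bulk of \eqref{eq:pdueimp} combined with the Morawetz estimate \eqref{eq:morafgiusto}, then \eqref{eq:punowithdecay} to obtain \eqref{eq:decayfuno}, and a second pigeonhole on the $p=1$ bulk followed by Proposition~\ref{prop:encfgiusto} to reach \eqref{eq:decayfdue}. The only cosmetic difference is that the paper removes the dyadic restriction for \eqref{eq:decayfuno} directly via the boundary term in \eqref{eq:punowithdecay} rather than invoking Proposition~\ref{prop:encfgiusto}, but both routes work and your derivative bookkeeping is accurate.
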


\begin{remark}
Notice that in this Proposition we require five plus the number of derivatives needed in all propositions of Section~\ref{sec:previsited}. Hence, all results of previous propositions require five additional derivatives.
\end{remark}

\begin{proof}
Let us set $\dot Z := \dot Z_{,I,j,k}$, and $\hat Z := \hat Z_{,I,j,k}$.
It clearly follows, from (\ref{eq:pdueimpinit}) and the assumptions on the initial energy, that we have the uniform bound
\begin{equation*}
 \int_{\mathbb{S}^2}\int_{\{u = u_2\}\cap\{r \geq R\}\cap \regio{}{}} \sum_{|I|+j+k \leq l+3}(r^{2} |L \dot Z|^2) \de v \desphere 
\leq  C \int_{\Sigma_{t^*_0}} \sum_{|I|+j+k \leq l+3} (r^{2} |L \dot Z|^2) r^{-2}\de \Sigma_{t^*} + C \varepsilon^2 \leq C \varepsilon^2,
\end{equation*}
for all $u_1 \in \R$. The hypotheses let us use Proposition~\ref{prop:previsited}. Furthermore, adding a multiple of the Morawetz estimate (\ref{eq:enconsZu}), we have, from (\ref{eq:pdueimp}), letting $u_1 \geq u_0$, recalling that $\hat \lbar = (1-\mu)^{-1} \lbar$,
\begin{equation}
\begin{aligned}
&\int_{\regio{\underline{t}^*_1}{\underline{t}^*_2} \cap \{r \leq R\}} \sum_{\substack{|I|+j+k \leq l+3\\|I|\geq 1}} (|L \hat Z|^2 + |\hat \lbar \hat Z|^2 + |\snabla \hat Z|^2 + |\hat Z|^2) \de \text{Vol}\\
&+\int_{\mathfrak{D}_{u_1}^{u_2} \cap \regio{}{}} \sum_{\substack{|I|+j+k \leq l+3\\|I|\geq 1}}\left[r|L \dot Z|^2 + |\snabla \dot Z|^2 + r^{-2}|\dot Z|^2\right] (1-\mu) \de u \de v \desphere 
\leq    C \varepsilon^2.
\end{aligned}
\end{equation}
Here, $\underline{t}_i^* := u_i + R^* + 2M \log(R-2M)$, where $R^* = R + 2M\log(R-2M)$, for $i =1,2$.

Since we can easily estimate the weight $(1-\mu)$ in the region $\{r \geq R\}$, the last display in particular implies
\begin{equation}
\begin{aligned}
&\int_{\regio{\underline{t}^*_1}{\underline{t}^*_2} \cap \{r \leq R\}} \sum_{\substack{|I|+j+k \leq l+3\\|I|\geq 1}} (|L \hat Z|^2 + |\hat \lbar \hat Z|^2  +|\snabla \hat Z|^2+ |\hat Z|^2) \de \text{Vol}\\
&+\int_{\mathfrak{D}_{u_1}^{u_2} \cap \regio{}{}} \sum_{\substack{|I|+j+k \leq l+3\\|I|\geq 1}}\left[r|L \hat Z|^2 + |\snabla \hat Z|^2 + r^{-2}|\hat Z|^2\right] (1-\mu) \de u \de v \desphere 
\leq    C \varepsilon^2.
\end{aligned}
\end{equation}
We consider the sequence $\underline{u}_n := 2^n u_0$. We deduce the existence of a sequence $\tilde u_n$ such that $ \underline{u}_n \leq \tilde u_n \leq  \underline{u}_{n+1}$ and 
\begin{equation}\label{eq:decsequno}
\sum_{\substack{|I|+j+k \leq l+3\\|I|\geq 1}} \left(\fgiustor[\hat Z] (\tilde u_n) + \int_{\conplus_{\tilde u_n} \cap \regio{}{} \cap \{r \geq R\}} r |L \hat Z|^2 \de v \desphere \right)\leq C \varepsilon^2 u_n^{-1}.
\end{equation}
We eliminate the restriction to the dyadic sequence via estimate (\ref{eq:punowithdecay}). We have, for $u \geq u_0$, (note that we add $5$ derivatives, so $l-4+5 = l+1$)
\begin{equation}
\sum_{\substack{|I|+j+k \leq l+1\\|I|\geq 1}} \left(\fgiustor[\hat Z] (u) + \int_{\conplus_{\tilde u} \cap \regio{}{} \cap \{r \geq R\}} r |L \hat Z|^2 \de v \desphere \right)\leq C \varepsilon^2 u^{-1}.
\end{equation}
We now use inequality (\ref{eq:punowithdecay}), added to the Morawetz estimate (\ref{eq:enconsZu}). We estimate the boundary terms in the resulting inequality using estimate~\eqref{eq:decsequno}. We finally obtain, along the sequence $\underline{u}_n = u_0 2^n$, the bound
\begin{equation}
\begin{aligned}
&\int_{\regio{\underline{t}^*_{n}}{\underline{t}^*_{n+1}} \cap \{r \leq R\}} \sum_{\substack{|I|+j+k \leq l+1\\|I|\geq 1}} (|L \hat Z|^2 + |\hat \lbar \hat Z|^2  + |\snabla \hat Z|^2 + |\hat Z|^2) \de \text{Vol}\\
&+\int_{\mathfrak{D}_{u_n}^{u_{n+1}} \cap \regio{}{}} \sum_{\substack{|I|+j+k \leq l+1\\|I|\geq 1}}\left[|L \hat Z|^2 + |\snabla \hat Z|^2 + r^{-2}|\hat Z|^2\right] (1-\mu) \de u \de v \desphere 
\leq    C \varepsilon^2 (1+u_n)^{-1}.
\end{aligned}
\end{equation}
Here, again, $\underline{t}_i^* := \underline{u}_i + R^* + 2M \log(R-2M)$, where $R^* = R + 2M\log(R-2M)$, for $i =1,2, \ldots$
Hence, along a sequence $\bar u_n$, such that $\underline{u}_n \leq \bar u_n \leq \underline{u}_{n+1}$, we have
\begin{equation}
\sum_{\substack{|I|+j+k \leq l\\|I|\geq 1}} \fgiustor[\hat Z] (\bar u_n) \leq C \varepsilon^2 \bar u_n^{-2}.
\end{equation}
We remove the restriction to the dyadic sequence using the conservation of energy inequality in Proposition~\ref{prop:encfgiusto}.
\end{proof}

\section{Pointwise decay for \texorpdfstring{$\rho$}{rho} and \texorpdfstring{$\sigma$}{sigma}}\label{sec:sobrs}
In this section we use the bounds on the fluxes to prove $L^\infty$ decay for $\rho$ and $\sigma$.
\begin{proposition} \label{prop:linfdecrs}
There exists a small number $\tilde \varepsilon > 0$, a number $R^* > 0$ and a constant $C>0$ such that the following holds. Let us require the same assumptions of Proposition~\ref{prop:fluxdecay}. We have then
\begin{equation}\label{eq:decayrs}
|\partial^{\leq l-2} \rho|, |\partial^{\leq l-2} \sigma| \leq C \varepsilon \tau^{-1} r^{-\frac 3 2}\text{ on } \regio{}{}, \qquad |\partial^{\leq l-2} \rho|, |\partial^{\leq l-2} \sigma| \leq C \varepsilon \tau^{-\frac 1 2 } r^{-2} \text{ on } \regio{}{}.
\end{equation}
\end{proposition}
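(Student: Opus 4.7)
The idea is to upgrade the $L^2$--type flux decay supplied by Proposition~\ref{prop:fluxdecay} into pointwise $L^\infty$ decay, via Sobolev embedding on each round sphere (realised by commutation with rotations $\Omega \in \leo$) combined with a fundamental-theorem-of-calculus argument along outgoing null directions, and to handle the spherical zero mode separately using Proposition~\ref{prop:charge}. Since $Z = r^2\rho$ and $W = r^2 \sigma$, by the norm equivalence in Proposition~\ref{norm.equiv} it suffices to estimate pointwise the commuted quantities $\hat Z_{,I,j,k}, \hat W_{,I,j,k}$ for $|I|+j+k \leq l-2$, and then translate the bounds on $Z,W$ back into bounds on $\rho,\sigma$, which introduces the requested factor of $r^{-2}$ (respectively $r^{-3/2}$ after pulling out one $r^{1/2}$).

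The first step is to split each commuted quantity as a spherical mean plus its mean-free part. The spherical means $\bar Z_{,I,j,k}$ (nonzero only when the multi-index $I$ does not consist purely of rotations) are controlled directly by Proposition~\ref{prop:charge}, which yields the bound $C\varepsilon r^{-1}\tau^{-2}$, stronger than either required rate, while $\bar W_{,I,j,k}$ simply vanishes. Hence we may pass to the mean-free part, which is equivalent (modulo constants on the sphere) to quantities of the form $\hat Z_{,J,j,k}$ with $|J|\geq 1$, exactly the objects controlled by~\eqref{eq:decayfdue} and~\eqref{eq:decayfuno}. Sobolev embedding $H^2(\mathbb{S}^2_r) \hookrightarrow L^\infty(\mathbb{S}^2_r)$, together with the observation that $\snabla \simeq r^{-1}\Omega$, then reduces matters to bounding
\begin{equation*}
\int_{\mathbb{S}^2_{u,v}} |\hat Z_{,I',j,k}|^2 \, \desphere
\end{equation*}
pointwise in $(u,v)$ for $|I'|+j+k \leq l$ with $|I'|\geq 1$, at the cost of two derivatives.

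To produce this pointwise sphere bound in the region $r \geq R$, we use the fundamental theorem of calculus along the outgoing cone $\{u = \text{const}\}$, integrating from the sphere at $r = R$ outward:
\begin{equation*}
\int_{\mathbb{S}^2_{u,v}} |\hat Z_{,I',j,k}|^2 \desphere = \int_{\mathbb{S}^2_{u,v_R}} |\hat Z_{,I',j,k}|^2 \desphere + 2\int_{v_R}^{v}\int_{\mathbb{S}^2}\hat Z_{,I',j,k}\, L\hat Z_{,I',j,k}\, \desphere\, dv'.
\end{equation*}
The boundary term at $r = R$ is estimated by Sobolev on the bounded-$r$ slab together with the flux $\fgiustor[\hat Z_{,I',j,k}](u) \leq C\varepsilon^2(|u|+1)^{-2}$ from~\eqref{eq:decayfdue}. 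The bulk term is split via Cauchy--Schwarz with matched weights as $(r^{-1/2}\hat Z)(r^{1/2}L\hat Z)$ and estimated using, on the one hand, the $p=2$ conical flux $\int_{\conplus_u}r^2|L\hat Z|^2$ from~\eqref{eq:unibdsecond} (uniformly bounded in $u$), and on the other hand a Hardy inequality to recover $\int_{\conplus_u} r^{-1}|\hat Z|^2$ from $\int_{\conplus_u}r|L\hat Z|^2$, controlled by $\varepsilon^2(|u|+1)^{-1}$ via~\eqref{eq:decayfuno}. The analogous bound in the region $r_{\text{in}}\leq r \leq R$ follows directly from Sobolev on $\widetilde\Sigma_{t^*}\cap\{r \leq R\}$, which requires no $r$-weights and is immediate from $\fgiustor$. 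Interpolating between the two $r$-weighted Cauchy--Schwarz choices yields both a $\tau^{-1/2}$ bound (uniform in $r$) and a $\tau^{-1}r^{-1/2}$ bound on $r^{-1/2}\hat Z$, which translate exactly into the two stated rates $\tau^{-1/2}r^{-2}$ and $\tau^{-1}r^{-3/2}$ for $\rho$.

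\textbf{Main obstacle.} The delicate point is balancing the $r$-weights in Cauchy--Schwarz so as to extract the sharp $\tau^{-1}r^{-3/2}$ rate rather than only $\tau^{-1/2}r^{-2}$: one must simultaneously use the uniform $p=2$ bound~\eqref{eq:unibdsecond} and the $\tau^{-2}$-decayed $\fgiustor$ from~\eqref{eq:decayfdue} (not merely the $\tau^{-1}$ one from~\eqref{eq:decayfuno}) together with Hardy, and verify that the resulting weights exactly produce $\hat Z \lesssim \varepsilon \tau^{-1} r^{1/2}$ rather than a worse power. Once this is done, the derivative count closes: we have three slots of loss (two for Sobolev on $\mathbb{S}^2$, one for converting $Z$ back to $\rho$ via the MBI transport equations as in the proof of Proposition~\ref{prop:imprhosigma}), so starting with fluxes controlled at order $l+1$ (from Proposition~\ref{prop:fluxdecay}) we reach $L^\infty$ at order $l-2$, as claimed.
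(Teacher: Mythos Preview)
Your overall plan is correct and matches the paper's: split off the spherical mean (handled by Proposition~\ref{prop:charge}), apply Sobolev on the sphere to the mean-free part, and then for $r\geq R$ use the fundamental theorem of calculus along $\{u=\text{const}\}$ together with Cauchy--Schwarz and the flux bounds of Proposition~\ref{prop:fluxdecay}. A few of the details, however, are miswired.

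First, the roles of the estimates are swapped in your ``Main obstacle'' paragraph. In the paper, the rate $\tau^{-1}r^{-3/2}$ is obtained using \emph{only} the $\tau^{-2}$-decayed flux \eqref{eq:decayfdue}, applied to both Cauchy--Schwarz factors (the $|\snabla\hat Z|^2$ term and the $|L\hat Z|^2$ term are both in $\fgiustor$); the uniform $p=2$ bound \eqref{eq:unibdsecond} is not needed here. Conversely, the rate $\tau^{-1/2}r^{-2}$ is the one that pairs \eqref{eq:unibdsecond} with \eqref{eq:decayfdue}: one factor carries $r^2$ and no decay, the other carries $r^{-2}$ and $\tau^{-2}$ decay, giving $\tau^{-1}$ after Cauchy--Schwarz.

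Second, your FTC on the \emph{unweighted} quantity $\int_{\mathbb{S}^2}|\hat Z|^2$ cannot by itself produce a bound that grows like $r$, which is what $|\hat Z|\lesssim\varepsilon\tau^{-1}r^{1/2}$ requires. The paper instead does FTC on $r\int_{\mathbb{S}^2}|\snabla\hat Z|^2$ (equivalently, in your notation, on $r^{-1}\int_{\mathbb{S}^2}|\hat Z_{,I'}|^2$); the $\partial_v(r^{-1})$ term then has a favourable sign and the remaining bulk splits cleanly as $(\int|\snabla\hat Z|^2)^{1/2}(\int|L\hat Z|^2)^{1/2}\leq\varepsilon^2\tau^{-2}$. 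This is what you should mean by ``interpolating between two weighted choices''.

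Third, the Hardy step is unnecessary: since after Sobolev you are already at $|I'|\geq 1$, one angular derivative can be peeled off so that $\int_{\conplus_u} r^{-2}|\hat Z_{,I'}|^2 \lesssim \int_{\conplus_u}|\snabla\hat Z_{,I''}|^2$, which sits directly inside $\fgiustor$. Finally, the derivative count is a loss of two, not three: converting $Z$ back to $\rho$ is multiplication by $r^{-2}$ and costs no derivative; the MBI transport equations are not used here.
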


\begin{proof}
We divide the proof in two steps. In \textbf{Step 1}, we will prove the claim in the region $\regio{}{} \cap \{u \geq u_0 \}$. In \textbf{Step 2}, we will accomplish the easier task to prove the claim in the region $\regio{}{} \cap \{u \leq u_0 \}$. We restrict our reasoning to the estimates for $\rho$, the estimates for $\sigma$ being analogous.
\subsection*{Step 1}
Recall the definition of the spherical average of a function:
\begin{equation*}
\bar f := \frac 1 {4 \pi}\int_{\mathbb{S}^2} f(\omega) \desphere(\omega).
\end{equation*}
Let $|I| + j + k \leq l-2$. We apply the Sobolev embedding in the Appendix, Lemma~\ref{lem:sobforrho}, with the choice of $f := r^3 \rho$, and obtain
\begin{equation}
\begin{aligned}
&\sup_{r_1 \in [2M, R]} |\hat \rho_{,I,j,k} (t_1^*, r_1, \omega)-\overline{ \hat \rho_{,I,j,k}}(t_1^*, r_1)| \leq C
\int_{2M}^R \int_{\mathbb{S}^2}(|\snabla\snabla \hat \rho_{,I,j,k}|^2 + |\snabla_{\partial_{r_1}} \snabla \snabla \hat \rho_{,I,j,k}|^2) \desphere \de r_1 \\
& \leq \sum_{\substack{|I|+j+k \leq l\\|I|\geq 1}} \fgiusto[\hat Z] (u) \leq C \varepsilon^2 u^{-2}.
\end{aligned}
\end{equation}
Together with the decay of the spherical average in Proposition~\ref{prop:charge}, we have the claim in the region $\regio{}{} \cap \{r \leq R\}$.

We now wish to derive bounds on the region $\regio{}{} \cap \{u \geq u_0\} \cap \{r \geq R\}$. 
We begin by noticing:
\begin{equation}
\begin{aligned}
&|\hat \rho_{,I,j,k} -\overline{ \hat \rho_{,I,j,k}}|^2  \leq C \int_{\mathbb{S}^2} \sum_{i,j =1}^3 |\snabla_{\Omega_i} \snabla_{\Omega_j} \hat \rho_{,I,j,k}|^2 \de \mathbb{S}^2 \\ & \leq C \int_{\mathbb{S}^2} r^4 |\snabla \snabla \hat \rho_{,I,j,k}|^2 \de \mathbb{S}^2 \leq C \int_{\mathbb{S}^2} r^{-2} \sum_{|J| = |I|+1} |\snabla \hat Z_{,J,j,k}|^2 \de \mathbb{S}^2.
\end{aligned}
\end{equation}
Also, we have that, letting $\hat Z := \hat Z_{,J,j,k}$, and $v_R$ such that $-u + v_R = 2R^*$,
\begin{equation}
\begin{aligned}
&r\int_{\mathbb{S}^2} |\snabla \hat Z|^2(u,v,\omega) \de \mathbb{S}^2 \\ 
& \leq C 
\int_{v_R}^v \int_{\mathbb{S}^2}|\snabla \hat Z|^2(u,v,\omega) \de \mathbb{S}^2 \de v + 
C\int_{v_R}^v \int_{\mathbb{S}^2}r \left(|\snabla \hat Z| |\snabla_L \snabla \hat Z| \right)\de \mathbb{S}^2 \de v
+C \int_{\mathbb{S}^2}  |\snabla \hat Z|^2(u_R,v_R, \omega) \de \mathbb{S}^2 \\ 
& \leq C
\int_{v_R}^v \int_{\mathbb{S}^2} \sum_{i=1}^3 \left(|\snabla \hat Z| |\snabla_L \snabla_{\Omega_i} \hat Z| \right)\de \mathbb{S}^2 \de v + C\int_{v_R}^v \int_{\mathbb{S}^2} \sum_{i=1}^3 |\snabla \hat Z|^2\de \mathbb{S}^2 \de v 
+C \int_{\mathbb{S}^2}  |\snabla \hat Z|^2(u_R,v_R, \omega) \de \mathbb{S}^2 \\ 
& \leq C \varepsilon^2 \tau^{-2}
+ C \left( \int_{v_R}^v\int_{\mathbb{S}^2} |\snabla \hat Z|^2 \de \mathbb{S}^2 \de v \right)^{\frac 1 2}  \left( \int_{v_R}^v \int_{\mathbb{S}^2} \sum_{i=1}^3 |\snabla_L  \snabla_{\Omega_i }\hat Z|^2 \de \mathbb{S}^2 \de v \right)^{\frac 1 2} \leq C \varepsilon^2 \tau^{-2}.
\end{aligned}
\end{equation}
Let us now restrict to $|I|+j+k \leq l-2$, and furthermore $|J| = |I| +1$, so that 
\begin{equation}
\int_{v_R}^v \int_{\mathbb{S}^2} \sum_{i=1}^3 |\snabla_L  \snabla_{\Omega_i }\hat Z|^2 \de \mathbb{S}^2 \de v\leq C\sum_{\substack{H+j'+k' \leq l\\|H|\geq 1}} \fgiustor[\hat Z_{,H,j',k'}](u) \leq C \varepsilon^2 (|u|+1)^{-2},
\end{equation}
by Proposition~\ref{prop:fluxdecay}.
This implies, together with the decay of the spherical average in Proposition~\ref{prop:charge}, that the claim holds in the region $\regio{}{} \cap \{u \geq u_0\}$:
\begin{equation}
|\partial^m \rho|, |\partial^m \sigma| \leq C \varepsilon \tau^{-1 } r^{-\frac 3 2},
\end{equation}
for $m \leq l-2$.

We now turn to the second part of the claim (decay rate: $\tau^{- \frac 1 2}r^{-2}$). We have, again letting $\hat Z := \hat Z_{,J,j,k}$, and $v_R$ such that $-u + v_R = 2R^*$,
\begin{equation}\label{eq:phifar}
\begin{aligned}
&r^2\int_{\mathbb{S}^2} |\snabla \hat Z|^2(u,v,\omega) \de \mathbb{S}^2(\omega) \leq C \int_{\mathbb{S}^2} \sum_{i = 1}^3 |\snabla_{\Omega_i}\hat Z|^2(u,v, \omega)\desphere(\omega) \\ & \leq 
C\int_{\mathbb{S}^2} \sum_{i = 1}^3 |\snabla_{\Omega_i}\hat Z|^2(u_R,v_R, \omega)\desphere(\omega)+C \int_{v_R}^v \int_{\mathbb{S}^2} \snabla_L( \sum_{i=1}^3|\snabla_{\Omega_i}\hat Z|^2) \desphere(\omega)\de v.
\end{aligned}
\end{equation}
The first term in the right hand side of the last display is estimated by averaging in $r$ and using the decay estimate~(\ref{eq:decayfdue}). For the second term, we notice that
\begin{equation}\label{eq:phifar2}
\begin{aligned}
&\left|\int_{v_R}^v \int_{\mathbb{S}^2} \snabla_L( \sum_{i=1}^3|\snabla_{\Omega_i}\hat Z|^2) \desphere(\omega)\de v\right| \\ & \leq C 
\sum_{i=1}^3 \left( \int_{v_R}^v \int_{\mathbb{S}^2} r^{-2}|\snabla_{\Omega_i}\hat Z|^2 \desphere(\omega)\de v\right)^{\frac 1 2} \times \left(\int_{v_R}^v \int_{\mathbb{S}^2} r^2|\snabla_L \snabla_{\Omega_i}\hat Z|^2 \desphere(\omega)\de v\right)^{\frac 1 2}  \leq C 
\varepsilon^2 \tau^{-1}.
\end{aligned}
\end{equation}
Here, we used both inequalities (\ref{eq:unibdsecond}) and (\ref{eq:decayfdue}). This reasoning, together with the spherical Sobolev estimate in Lemma~\ref{lem:sobsphere} and Proposition~\ref{prop:charge} on the spherical average implies the bound for $\rho$:
\begin{equation}\label{eq:decayrhor}
|\partial^m \rho| \leq C \varepsilon \tau^{-\frac 1 2 } r^{-2},
\end{equation}
for all $m \leq l-2$. This concludes the proof of \textbf{Step 1}.
\subsection*{Step 2} In the region $u \leq u_0$, it suffices to show, for some constant $C>0$, that
\begin{equation*}
|\partial^m \rho| \leq C \varepsilon r^{-2}.
\end{equation*}
This follows by the uniform estimate (\ref{eq:unibdsecond}), and the bounds on initial data on $\Sigma_{t_0^*}$, plus $r$-weighted Sobolev embedding on $\Sigma_{t_0^*}$.
\end{proof}
\section{Pointwise decay for \texorpdfstring{$\alpha$}{alpha}}\label{sec:soba}
In this section, we establish $L^\infty$ bounds for $\alpha$. These arguments are very similar in nature to those carried our for the linear case, both in~\cite{masthes} and in~\cite{linearized}. We prove the following Proposition.
\begin{proposition}\label{prop:linfalpha}
There exist a small number $\tilde \varepsilon > 0$, a number $R^* > 0$ and a constant $C > 0$ such that the following holds. Let us require the same assumptions of Proposition~\ref{prop:fluxdecay}.
Furthermore, let us assume that $\alpha$ satisfies the initial bounds
\begin{equation}\label{eq:initalpha}
\int_{\Sigma_{t_0^*}} r^2 |\partial^{\leq l} \alpha|^2 \de \Sigma_{t^*} \leq \varepsilon^2, \qquad \int_{\Sigma_{t_0^*}} r^4 |\partial^{\leq l-1} \snabla_{\p_{r_1}} \alpha|^2 \de \Sigma_{t^*} \leq \varepsilon^2.
\end{equation}
We have then
\begin{equation}\label{eq:decayalphai}
|\partial^{\leq l-5} \alpha|\leq C \varepsilon \tau^{-\frac 1 2 } r^{-2},\qquad
|\partial^{\leq l-5} \alpha| \leq C \varepsilon \tau^{-1} r^{-\frac 3 2},
\end{equation}
on the region $\regio{}{}$.
\end{proposition}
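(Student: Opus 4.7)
The proof proposal follows the same strategy as in the linear Maxwell analysis of \cite{masthes}, upgraded to account for the nonlinear terms. The key ingredient is the transport equation \eqref{eq:alpharef} for $r\alpha_A$ in the $\lbar$-direction, combined with the pointwise decay of $\rho,\sigma$ just established in Proposition~\ref{prop:linfdecrs} and the bootstrap control of the nonlinear terms. The first step is to commute \eqref{eq:alpharef} with up to $l-5$ operators drawn from $\mathcal V\cup\leo\cup\{T\}$ (plus two additional rotations from $\leo$ to prepare for spherical Sobolev embedding), producing a transport equation of the schematic form
\begin{equation*}
\snabla_\lbar\bigl(r\,\partial^{\leq l-5}\alpha\bigr) = r(1-\mu)\bigl(\snabla\,\partial^{\leq l-4}\rho + \snabla\,\partial^{\leq l-4}\sigma\bigr) + r(1-\mu)\,\mathcal N,
\end{equation*}
where $\mathcal N$ is a cubic expression in $\fara$ and its derivatives up to order $l-4$ arising from the $\hdelta$ term.

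For a point $P=(u_1,v_1,\omega_1)$ with $r(P)\geq R$ sufficiently large, I would integrate the commuted transport equation along the null curve $\{v=v_1,\omega=\omega_1\}$ from its intersection $P_0$ with $\Sigma_{t_0^*}$ up to $P$. The boundary term $r\alpha(P_0)$ is controlled by the initial assumption \eqref{eq:initalpha}, combined with weighted spherical Sobolev on $\Sigma_{t_0^*}$, giving a contribution $\lesssim \varepsilon\, r(P_0)^{-1}$. The linear source terms are bounded by Proposition~\ref{prop:linfdecrs}, which gives (using $|\snabla\rho|\lesssim r^{-1}|\Omega\rho|$)
\begin{equation*}
|r(1-\mu)\snabla\rho|,\ |r(1-\mu)\snabla\sigma|\lesssim \varepsilon\,\min\bigl\{\tau^{-1/2}r^{-2},\ \tau^{-1}r^{-3/2}\bigr\};
\end{equation*}
the nonlinear contribution $r(1-\mu)\mathcal N$ decays as $\varepsilon^{9/4}\tau^{-2}r^{-2}$ or faster under the bootstrap assumptions, and is therefore harmless. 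Changing variables to $\de u=-2(1-\mu)^{-1}\de r$ and integrating in $r$ yields the desired pointwise decay for $r\,\partial^{\leq l-5}\alpha$ in the $L^2(\mathbb S^2)$ sense. A final application of the spherical Sobolev embedding (Lemma~\ref{lem:sobsphere}) at the endpoint $P$ upgrades this to a full pointwise estimate and gives \eqref{eq:decayalphai} for $r(P)\geq R$.

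In the complementary region $r(P)\leq R$, the integration-from-initial-data strategy is unavailable; instead, I would rely on the weighted $L^2$ flux estimate of Proposition~\ref{prop:impalpha}, propagated in $\tau$ using the decay of fluxes from Proposition~\ref{prop:fluxdecay} (via the transport equations relating $\alpha$ to $L$-derivatives of $Z$ and $W$), together with the redshift mechanism to remove degeneracies near the horizon. A local Sobolev embedding on a fixed annulus of $r$-values then converts the resulting $L^2$ estimate into a pointwise bound with the appropriate $\tau$-decay.

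The main technical obstacle will be the careful bookkeeping of the $\tau$-decay along the null integration in the exterior region: along $\{v=v_1\}$, the function $\tau$ equals $1$ on the ingoing portion $\{u\leq 0\}$ and grows linearly with $u$ on $\{u\geq 0\}$. To recover the two different decay rates in \eqref{eq:decayalphai}, I will split the integration according to $u\lessgtr 0$, choose for each regime the more favourable of the two bounds provided by Proposition~\ref{prop:linfdecrs}, and evaluate the resulting integrals $\int r^{-2}\tau^{-1/2}\de u$ and $\int r^{-3/2}\tau^{-1}\de u$ directly. This matching of weights is the delicate part and closely parallels the corresponding computation in the linear analysis of \cite{masthes}.
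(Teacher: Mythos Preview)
Your overall strategy---integrate the $\snabla_\lbar$-transport equation for $r\alpha$ along incoming null rays from $\Sigma_{t_0^*}$, using the pointwise decay of $\snabla\rho,\snabla\sigma$ from Proposition~\ref{prop:linfdecrs}---is exactly the paper's Step~2, and your weight-splitting along $\{v=v_1\}$ and treatment of the initial-data boundary term are both correct.

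The genuine gap is in your handling of the nonlinear source $\mathcal N$. You assert that $r(1-\mu)\mathcal N\lesssim\varepsilon^{9/4}\tau^{-2}r^{-2}$ ``under the bootstrap assumptions'', but the bootstrap $BS(\cdot,1,\lfloor(l+9)/2\rfloor,\varepsilon)$ only controls $\partial^{\leq 13}\fara$ (for $l=18$), whereas after commutation with $l-5=13$ operators the cubic expression $\mathcal N$ contains a factor $\partial^{l-4}\alpha=\partial^{14}\alpha$ (arising from the term $r\,\alphabar(e_i)\,L(\lun)$ in \eqref{eq:structhd}, when all derivatives land on the $\alpha$ inside $\lun$). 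This factor is strictly beyond the bootstrap range, and the unweighted Sobolev bound from Corollary~\ref{cor:unweightedsob} carries no $r$-weight, so the integral along $\{v=v_1\}$ does not close to the sharp rate. The paper resolves this with a preliminary \textbf{Step~1}: from the algebraic $\dive/\curl$ relations \eqref{eq:transpd}, the already-established pointwise decay of $\partial^{\leq l-2}\rho,\partial^{\leq l-2}\sigma$ (Proposition~\ref{prop:linfdecrs}), and spherical Sobolev, one first obtains $|\partial^{\leq l-4}\alpha|\leq C\varepsilon r^{-1}$. This preliminary $r^{-1}$ decay is exactly what is fed back into the top-order nonlinear term in Step~2 to produce $r|\alphabar|^2|\partial^{l-4}\alpha|\lesssim\varepsilon\min\{\tau^{-1}r^{-3/2},\tau^{-1/2}r^{-2}\}$, and it is invoked a second time when closing the $\hat\lbar$-commuted estimates at the end. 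Your proposal omits this step entirely.

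Two smaller remarks. First, your separate treatment of $r\leq R$ is unnecessary: the paper carries out the same null integration from $\Sigma_{t_0^*}$ for \emph{all} $r\geq 2M$ (the $u\lessgtr v/3$ splitting already handles the near region). Second, commuting directly with $L,\hat\lbar\in\mathcal V$ against the $\snabla_\lbar$-transport equation is awkward; the paper commutes first with Killing fields $\partial_{\leo}^I T^k$ only (for which the commutation is exact) and then recovers the $\hat\lbar$-derivatives \emph{algebraically} by re-applying \eqref{eq:alpharef3}---which is again a place where Step~1 is needed.
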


\begin{proof}
We divide the proof in two \textbf{Steps}. In \textbf{Step 1}, we will derive the uniform bound $|\partial^m \alpha| \leq C \varepsilon r^{-1}$ for $m \leq l-4$. In \textbf{Step 2}, we will improve the rate up to the one claimed in the statement of the Proposition.

\subsection*{Step 1} We have, from Equation~(\ref{eq:transpd}), that $\rho$ and $\sigma$ satisfy the following transport equations:
\begin{align}
&L(r^2 \sigma)+r^2 \curl \alpha = 0, 
&-L(r^2 \rho) + r^2 \dive \alpha = - r^2{H_{_\Delta}} \indices{^\mu _L ^\kappa ^\lambda} \nabla_\mu \fara_{\kappa \lambda}.
\end{align}
We commute these equations with the operator $\partial^I_{\leo} \hat \lbar^j T^k$, with $|I| + j + k \leq l-3$, and use the decay estimates proved in Proposition~\ref{prop:linfdecrs}. We use the ``unweighted'' Sobolev embedding of Corollary~\ref{cor:unweightedsob} to bound the containing the expression $ \nabla_\mu \fara_{\kappa \lambda}$. Finally, we use the Sobolev embedding on the sphere $\mathbb{S}^2$, plus the fact that the map $\alpha \mapsto (\dive \alpha, \curl \alpha)$ has no kernel. We have the claim:
\begin{equation}\label{eq:decalphaprel}
|\partial^{\leq l-4} \alpha| \leq C \varepsilon r^{-1}.
\end{equation}
\subsection*{Step 2}
By Equation~\eqref{eq:alpha} for $\alpha$ derived in Lemma~\ref{lem:transportalpha}, we have
\begin{equation}\label{eq:alpharef3}
\snabla_\lbar(r \alpha_A) = r(1-\mu)(\snabla_A \rho + \svol_{AB}\snabla^B \sigma ) + r(1-\mu)(\hdelta)_{\mu A \kappa \lambda} \nabla^{\mu} \fara^{\kappa\lambda}.
\end{equation}
This implies, letting $e_i := \Omega_i /r$, where $i \in \{1,2,3\}$, since $\snabla_\lbar e_i = 0$,
\begin{equation*}
\lbar (r \alpha(e_i)) = r(1-\mu)(\snabla_{e_i} \rho + \svol_{e_i B}\snabla^B \sigma ) + r(1-\mu)(\hdelta)_{\mu e_i \kappa \lambda} \nabla^{\mu} \fara^{\kappa\lambda}.
\end{equation*}
Commuting the previous display with derivatives $\partial_{\leo}^I T^k$, such that $|I|+k \leq l-5$, we then obtain, letting $A := \alpha(e_i)$,
\begin{equation*}
\lbar(r \dot A_{,I,0,k}) = r(1-\mu)\partial_{\leo}^I T^k (\snabla_{e_i} \rho + \svol_{e_i B}\snabla^B \sigma ) + r(1-\mu) \partial_{\leo}^I T^k ((\hdelta)_{\mu e_i \kappa \lambda} \nabla^{\mu} \fara^{\kappa\lambda}).
\end{equation*}
We let $p:= (u,v, \theta, \varphi)$, and we integrate the previous display on a line of constant $v$ from the initial hypersurface $\Sigma_{t_0^*}$ to $u$. We let $u_{t_0^*}$ be the coordinate of the point of intersection of the line of constant $v$-coordinate passing through $p$. The boundary term at $\Sigma_{t_0^*}$ can be estimated from initial data, by the Sobolev embedding, and the assumptions on the norm in (\ref{eq:initalpha}):
\begin{equation}\label{eq:alphaint}
\begin{aligned}
r|\dot A_{I,0,k}| \lesssim & \int_{u_{t_0^*}}^u \sum_{i = 1}^3( \underbrace{|\snabla_{\Omega_i} \dot \rho_{,I,0,k}| + |\snabla_{\Omega_i} \dot \sigma_{,I,0,k}|}_{(I)} + \underbrace{|\partial_{\leo}^I T^k( (\hdelta)_{\mu \Omega_i \kappa \lambda} \nabla^\mu \fara^{\kappa\lambda})|}_{(II)})\de u'\\
&+ C \varepsilon^2 \min\{\tau^{-1}r^{-\frac 1 2} , \tau^{- \frac 1 2}r^{-1}\}.
\end{aligned}
\end{equation}
Here, we implicitly used the fact that our definition of $\tau$ implies $v^{-1} \leq C \min\{\tau^{-1}, r^{-1}\}$, for some constant $C > 0$. We now would like to show the following \textbf{Claim}:
\begin{equation*}
(I) + (II) \leq C \varepsilon \min\{\tau^{-1}r^{-\frac 3 2}, \tau^{-\frac 1 2}r^{-2}\}.
\end{equation*}
The term $(II)$ is estimated using the ``unweighted'' Sobolev embedding of Corollary~\ref{cor:unweightedsob}, estimate~(\ref{eq:decalphaprel}) from \textbf{Step 1}, and the bootstrap assumptions. Indeed, by the form of $H_{_\Delta}$, it suffices to estimate the term
$$
|\partial^I_{\leo} T^k( \fara_{\mu \Omega_i}\nabla^\mu \lun)|.
$$
After expressing the sum in the index $\mu$, we obtain
\begin{equation}\label{eq:structhd}
|\partial^I_{\leo} T^k( \fara_{\mu \Omega_i}\nabla^\mu \lun)| \leq C |\partial^I_{\leo}T^k(\sigma \Omega_i(\lun))|+r|\partial^I_{\leo}T^k(\alpha(e_i) \lbar (\lun))|+r|\partial^I_{\leo}T^k(\alphabar(e_i)L(\lun))|.
\end{equation}
All the terms except the last in (\ref{eq:structhd}) can be estimated by the unweighted Sobolev embedding of Corollary~\ref{cor:unweightedsob}, plus the bootstrap assumptions. Regarding the last term, we again use the unweighted Sobolev embedding of Corollary~\ref{cor:unweightedsob}, plus the bootstrap assumptions, except when $\alpha$ carries the highest number of derivatives. In that case, we use \textbf{Step 1} and the bootstrap assumptions, to obtain
$$
r|\partial^a\alphabar| |\partial^b \alphabar||\partial^c \alpha| \leq \varepsilon \min\{\tau^{-1}r^{-\frac 3 2}, \tau^{-\frac 1 2}r^{-2}\},
$$
if $a+ b + c \leq l-4$. We have therefore obtained the \textbf{Claim} for term $(II)$.

Regarding term $(I)$, we recall what we proved in Proposition~\ref{prop:linfdecrs}:
\begin{equation}
|\partial^m \rho|, |\partial^m \sigma| \leq C \varepsilon \tau^{-1} r^{-\frac 3 2}\text{ on } \regio{}{}, \qquad |\partial^m \rho|, |\partial^m \sigma| \leq C \varepsilon \tau^{-\frac 1 2 } r^{-2} \text{ on } \regio{}{},
\end{equation}
for all $m \leq l-2$. Hence, we have the \textbf{Claim} also for $(I)$.

We would then like to estimate the integral appearing in (\ref{eq:alphaint}). If necessary, we split such integral in two parts:
\begin{equation}\label{eq:split}
\int_{u_{_{\Sigma_{t^*_0}}}}^u = \int_{u_{_{\Sigma_{t^*_0}}}}^{u_0} + \int_{u_0}^u.
\end{equation}
In order to estimate the first integral in the decomposition~\eqref{eq:split}, we just use the fact that, in the region $\{u \leq u_0 \} \cap \regio{}{}$, $v \leq 2 r^* + u_0$ and $v \geq v_0$, for some constant $v_0 \in \R$. The claim follows in a straightforward manner.

To estimate the second integral in the decomposition~\eqref{eq:split}, we observe that trivially, on the region $\{u \geq u_0\}\cap \regio{}{}$, either
$$
u \geq v/3 \qquad \text{or} \qquad (u \leq v/3 \iff 3 r^* \geq v).
$$
Hence, in $\regio{}{} \cap \{u \geq v/3\} \cap \{u \geq u_0\}$,
$$
|\partial^m \rho|, |\partial^m \sigma| \leq C \varepsilon v^{-1}r^{-\frac 3 2},
$$
whereas in $\regio{}{} \cap \{u \leq v/3\} \cap \{u \geq u_0\}$,
$$
|\partial^m \rho|, |\partial^m \sigma|\leq C \varepsilon u^{-1/2}v^{-2},
$$
for $m \leq l-2$.
Then,
\begin{equation*}
\begin{aligned}
\int_{u_0}^v (u')^{-\frac 1 2}v^{-2} \de u' = \int_{u_0}^{v/3} (u')^{-\frac 1 2} v^{-2}\de u' + \int_{v/3}^u (v-u')^{-3/2}v^{-1}\de u' \\
= 2((v/3)^{\frac 1 2} - u_0^{\frac 1 2})v^{-2} -2 ((v-u)^{- \frac 1 2}- (v-v/3)^{-\frac 1 2}) v^{-1} \lesssim v^{-1} r^{-\frac 1 2}.
\end{aligned}
\end{equation*}
This implies the decay rates
\begin{equation*}
|\dot A_{,I,0,k}| \leq C \varepsilon \tau^{- \frac 1 2} r^{-2}, \qquad |\dot A_{,I,0,k}| \leq C \varepsilon \tau^{-1} r^{-\frac 3 2},
\end{equation*}
with $|I| + k \leq l-5$. Finally, commuting the transport equation for $\alpha$ (Equation~\eqref{eq:alpharef3}) multiple times with $\hat \lbar$, using the bootstrap assumptions, the unweighted Sobolev embedding of Corollary~\ref{cor:unweightedsob}, and \textbf{Step 1}, implies 
\begin{equation*}
|\hat A_{,I,j,k}| \leq C \varepsilon \tau^{- \frac 1 2} r^{-2}, \qquad |\dot A_{,I,j,k}| \leq C \varepsilon \tau^{-1} r^{-\frac 3 2},
\end{equation*}
with again $|I| + j + k \leq l-5$. This concludes the proof of the Proposition.
\end{proof}

\section{Pointwise decay for \texorpdfstring{$\alphabar$}{alphab}}\label{sec:sobb}

\begin{proposition}\label{prop:decayalphabar}
There exists a small number $\tilde \varepsilon > 0$, a number $R^* > 0$ and a constant $C>0$ such that the following holds. Let us require the same assumptions of Proposition~\ref{prop:fluxdecay}.
We have then
\begin{equation}\label{eq:decayalphabar}
|\partial^{\leq l-4} \widetilde \alphabar|\leq C \varepsilon \tau^{-1} r^{-1} \text{ on } \regio{}{}.
\end{equation}
Here, recall that $\widetilde \alphabar := (1-\mu)^{-1} \alphabar$.
\end{proposition}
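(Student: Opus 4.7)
The proof should run parallel to Proposition~\ref{prop:linfalpha}, with the roles of $L$ and $\lbar$ (and correspondingly $\alpha$ and $\widetilde{\alphabar}$) exchanged. The starting point is a transport equation for $r\widetilde{\alphabar}$ along the outgoing null direction $L$,
\begin{equation*}
\snabla_L(r\widetilde{\alphabar}_A) \;=\; -r\,\snabla_A \rho \;+\; r\,\svol_{AB}\snabla^B \sigma \;+\; r\,(\hdelta)_{\mu A \kappa\lambda}\nabla^\mu\fara^{\kappa\lambda},
\end{equation*}
which one obtains by taking the MBI system in the form of Section~\ref{sec:altmbi}, contracting with angular vectorfields, and exploiting the identity $L(r(1-\mu))=1-\mu$ to convert the natural factor of $(1-\mu)$ in the equation for $r\alphabar$ into a clean equation for the horizon-regular quantity $r\widetilde{\alphabar}$.

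In Step 1 I would establish the rough uniform bound $|\partial^{\leq l-3}\widetilde{\alphabar}|\le C\varepsilon r^{-1}$, in analogy with Step 1 of Proposition~\ref{prop:linfalpha}. Here one combines the $L^2$ estimates of Proposition~\ref{prop:energy}, the unweighted Sobolev embedding of Corollary~\ref{cor:unweightedsob}, and the constraint-type relations $r^2\dive\widetilde{\alphabar}=-\hat\lbar(r^2\rho)-r^2(\hdelta)^\mu{}_{\hat\lbar}{}^{\kappa\lambda}\nabla_\mu\fara_{\kappa\lambda}$ and $r^2\curl\widetilde{\alphabar}=\hat\lbar(r^2\sigma)$ arising from \eqref{eq:transpun} (using that $\mu=2M/r$ is constant on spheres, so that division by $(1-\mu)$ commutes with the sphere operators $\dive,\curl$); the $(\dive,\curl)$ system is invertible on the sphere modulo the (automatically vanishing) spherical zero-mode of a $1$-form.

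For the refinement to $|\partial^{\leq l-4}\widetilde{\alphabar}|\le C\varepsilon\tau^{-1}r^{-1}$, I would commute the transport equation with $\partial_{\leo}^I\hat\lbar^jT^k$ (for $|I|+j+k\le l-4$) and integrate along a line of constant $(u,\omega)$ from the initial hypersurface $\Sigma_{t_0^*}$ to the point $(u,v,\omega)$, producing $r\widetilde{\alphabar}(u,v)=r\widetilde{\alphabar}(u,v_{t_0^*}(u))+\int_{v_{t_0^*}(u)}^{v}L(r\widetilde{\alphabar})\,dv'$. The factors $r|\snabla\rho|,r|\snabla\sigma|$ are controlled via Proposition~\ref{prop:linfdecrs} by $C\varepsilon\min\{\tau^{-1}r^{-3/2},\tau^{-1/2}r^{-2}\}$, while the nonlinear $H_\Delta$-contribution is handled exactly as in \eqref{eq:structhd} by Lemma~\ref{lem:fund}, the bootstrap assumptions, Proposition~\ref{prop:linfalpha}, and Corollary~\ref{cor:unweightedsob}. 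Splitting the $v'$-integration at the threshold $r(u,v')\simeq\tau$ and using the stronger of the two decay rates in each region---in direct parallel to the $u'=v/3$ splitting at the end of Proposition~\ref{prop:linfalpha}---produces the desired control of the spacetime integral by $C\varepsilon\tau^{-1}$.

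The main obstacle is controlling the boundary contribution $r\widetilde{\alphabar}(u,v_{t_0^*}(u))$ on $\Sigma_{t_0^*}$ at the correct $\tau$-rate, since $\tau$ is essentially frozen on the initial surface and the unweighted initial bound \eqref{eq:ensmall} only yields $|r\widetilde{\alphabar}|_{\Sigma_{t_0^*}}\lesssim\varepsilon$ pointwise. Following the scheme of Step 2 of Proposition~\ref{prop:linfalpha}, I would treat the regions $u\le u_0$ and $u\ge u_0$ separately: in the first region $\tau$ is bounded and the rough bound from Step 1 supplies the claim directly; in the second region the intersection point $(u,v_{t_0^*}(u))$ lies in a bounded-$r$ portion of $\Sigma_{t_0^*}$, and the required $\varepsilon\tau^{-1}$ estimate on the boundary term is obtained by combining the $r$-weighted Sobolev embedding of \eqref{eq:ensmall} with the inequality $v^{-1}\lesssim\min\{\tau^{-1},r^{-1}\}$ already used in the $\alpha$-proof, exactly as in the appearance of the factor $C\varepsilon^{2}\min\{\tau^{-1}r^{-1/2},\tau^{-1/2}r^{-1}\}$ in the analysis of the boundary term for $\alpha$.
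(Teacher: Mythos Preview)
Your overall strategy of using the $L$-transport equation for $\alphabar$ is correct and matches the paper, but there is a genuine gap in how you control the boundary term, and your Step~1 is weaker than what is needed.

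The analogy with Proposition~\ref{prop:linfalpha} breaks down at the boundary term. For $\alpha$, the integration is along constant $v$ from $\Sigma_{t_0^*}$, and the boundary point has the \emph{same} $v$-coordinate as the target; combined with the $r^4$-weighted initial bound~\eqref{eq:initalpha} this gives $|r\alpha|\lesssim\varepsilon r^{-1}\lesssim\varepsilon v^{-1}$ on $\Sigma_{t_0^*}$, and $v^{-1}\lesssim\min\{\tau^{-1},r^{-1}\}$ then closes. For $\widetilde\alphabar$ you integrate along constant $u$, so the boundary point $(u,v_{t_0^*}(u))$ shares the target's $u$-coordinate but has \emph{bounded} $v$ (since it lies in a bounded-$r$ portion of $\Sigma_{t_0^*}$). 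The inequality $v^{-1}\lesssim\tau^{-1}$ is therefore useless here: $v_{t_0^*}(u)$ stays bounded as $u\to\infty$. And there is no $r^4$-weighted assumption on $\alphabar$ analogous to~\eqref{eq:initalpha}; the unweighted norm~\eqref{eq:ensmall} gives only $|r\widetilde\alphabar|\lesssim\varepsilon$ on $\Sigma_{t_0^*}$, with no $u^{-1}$ factor. Your proposed boundary estimate thus fails.

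The paper avoids this entirely by strengthening Step~1 and changing the starting point of the integration. In Step~1, rather than extracting merely $|\widetilde\alphabar|\lesssim\varepsilon r^{-1}$ from Proposition~\ref{prop:energy} and Corollary~\ref{cor:unweightedsob}, the paper uses the same constraint-type relations you identified from~\eqref{eq:transpun} together with the already-established $L^\infty$ decay of $\hat\lbar\rho,\hat\lbar\sigma$ from Proposition~\ref{prop:linfdecrs} to obtain directly the full claim $|\partial^{\leq l-4}\widetilde\alphabar|\lesssim\varepsilon\tau^{-1}$ in the region $\{r\le R\}$. In Step~2 the paper then integrates the $L$-transport equation not from $\Sigma_{t_0^*}$ but from $\{r=R\}$ (i.e.\ from $v_R$ with $v_R-u=2R_*$); the boundary term at $r=R$ carries the $\tau^{-1}$ decay from Step~1. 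The integral over $[v_R,v]$ is then simply $\int_{v_R}^{v}(|u|+1)^{-1}r(u,v')^{-3/2}\,dv'\lesssim(|u|+1)^{-1}$, with no need for the $u'=v/3$ splitting you borrowed from the $\alpha$ argument.
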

\begin{proof}
This proof will be carried out looking at the transport equations satisfied by $\alphabar$, and using the $L^\infty$ decay estimates previously obtained in Proposition~\ref{prop:linfdecrs} and Proposition~\ref{prop:linfalpha}. We divide the proof in two \textbf{Steps}. In \textbf{Step 1}, we will prove the required decay estimate in the region $\regio{}{} \cap \{r \leq R\}$. In \textbf{Step 2}, we will prove the required estimate in $\regio{}{} \cap \{r \geq R\}$.

\subsection*{Step 1}
We recall the transport equations (\ref{eq:transpun}):
\begin{equation}
- \hat \lbar(r^2 \sigma) + (1-\mu)^{-1} r^2 \curl \alphabar= 0, \qquad \hat \lbar(r^2 \rho) + r^2 (1-\mu)^{-1}\dive \alphabar = - r^2{H_{_\Delta}} \indices{^\mu _{\hat \lbar} ^\kappa ^\lambda} \nabla_\mu \fara_{\kappa \lambda}.
\end{equation}
Commuting these relations with $\partial_{\leo}^I \hat \lbar^j T^k$, with $|I|+j+k \leq l-4$, using the decay rates obtained in Proposition~\ref{prop:linfalpha} and Proposition~\ref{prop:linfdecrs}, and the bootstrap assumptions on the terms containing $H_{_\Delta}$, we obtain the claim in the region $\regio{}{} \cap \{2M \leq r \leq R\}$.

\subsection*{Step 2} Look at the equation for $\alphabar$:
\begin{equation}\label{eq:alphabarreprep}
\boxed{
\snabla_L(r \alphabar_A) = r(1-\mu)(- \snabla_A \rho + \svol_{AB}\snabla^B \sigma ) +r(1-\mu)(\hdelta)_{\mu A \kappa \lambda} \nabla^{\mu} \fara^{\kappa\lambda}.
}
\end{equation}
Let $e_i := r^{-1}\Omega_i$ as before, and use the fact that $\snabla_L e_i = 0$ to obtain
\begin{equation*}
L(r \alphabar(e_i)) = (1-\mu)(- \snabla_{\Omega_i} \rho + \svol_{\Omega_i B}\snabla^B \sigma ) +(1-\mu)(\hdelta)_{\mu {\Omega_i} \kappa \lambda} \nabla^{\mu} \fara^{\kappa\lambda}.
\end{equation*}
Letting now $\underline A := \alphabar(e_i)$, we commute the previous display with the operator $\partial^I_{\leo} T^k$, with $|I|+k \leq l-4$. We obtain:
\newcommand{\Abar}{{\underline A}}
\begin{equation*}
L(r \dot {\underline A}_{,I,0,k}) = (1-\mu)(- \snabla_{\Omega_i} \dot \rho_{,I,0,k} + \svol_{\Omega_i B}( \snabla^B \dot \sigma_{,I,0,k}) ) +(1-\mu)\partial_{\leo}^I T^k((\hdelta)_{\mu {\Omega_i} \kappa \lambda} \nabla^{\mu} \fara^{\kappa\lambda}).
\end{equation*}
Hence, integrating on a line of constant $u$-coordinate, letting $v_R \in \R$ such that $v_R -u = 2R^*$, we obtain
\begin{equation}\label{eq:masterabar}
 r |\dot {\underline{A}}_{,I,0,k} (u,v,\omega)| - R |\dot \Abar(u, v_R, \omega)| \leq C \int_{v_R}^v 
 \underbrace{(|\Omega_i \dot \rho_{,I,0,k}| + \sum_{a=1}^3 |\Omega_a \dot \sigma_{,I,0,k}| )}_{(i)} +\underbrace{|\partial^I_{\leo} T^k ((\hdelta)_{\mu \Omega_i \kappa \lambda} \nabla_{\mu} \fara_{\kappa\lambda})|}_{(ii)} \de v.
\end{equation}
We now notice that term $(ii)$ can be dealt with exactly in the same way as term $(II)$ in the proof of Proposition~\ref{prop:linfalpha}. We have then, under our assumptions, that, in the region $\regio{}{} \cap \{r \geq R \}$,
\begin{equation*}
(ii) \leq C \varepsilon (|u|+1)^{-1}r^{-\frac 3 2}.
\end{equation*}
Similarly, from Proposition~\ref{prop:linfdecrs}, we have
\begin{equation*}
(i) \leq C \varepsilon (|u|+1)^{-1}r^{-\frac 3 2}.
\end{equation*}
We now notice that
\begin{equation}
\int_{v_R}^{v} (|u|+1) \  r(u,v')^{-\frac 3 2} \de v' = (|u|+1)^{-1}\left( \frac 1 {R^{\frac 1 2}}- \frac 1 {r^{\frac 1 2}}\right) \leq C (|u|+1)^{-1}.
\end{equation}
Plugging this into Equation~\eqref{eq:masterabar}, we obtain the claim for all derivatives of $\alphabar$, except those in the $\hat \lbar$ direction.

Since we are far from $\{r = 2M\}$, it suffices to show the desired claim for derivatives only in the frame $\{L, T, \Omega_1, \Omega_2, \Omega_3\}$.

From Equation~(\ref{eq:alphabarreprep}), it follows that
\begin{equation*}
L(r \dot {\underline A}_{,I,0,k}) = (1-\mu)(- \snabla_{\Omega_i} \dot \rho_{,I,0,k} + \svol_{\Omega_i B}( \snabla^B \dot \sigma_{,I,0,k}) ) +(1-\mu)\partial_{\leo}^I T^k((\hdelta)_{\mu {\Omega_i} \kappa \lambda} \nabla^{\mu} \fara^{\kappa\lambda}).
\end{equation*}
Commuting the last display with the operator $L^j$, using the decay rates obtained in Proposition~\ref{prop:linfdecrs}, the bootstrap assumptions and the unweighted Sobolev embedding of Corollary~\ref{cor:unweightedsob}, we obtain
\begin{equation}
|\partial^m \alphabar|\leq C \varepsilon \tau^{-1} r^{-1},
\end{equation}
on the region $\{r \geq R\} \cap \regio{}{}$. We combine this estimate with the bound for small $r$ obtained previously in \textbf{Step 1}, and conclude the proof of the Proposition.
\end{proof}

\section{Closing the bootstrap argument}\label{sec:close}

In this Section, we conclude the argument and prove the global existence Theorem~\ref{thm:gwp}.

\begin{proof}[Proof of Theorem~\ref{thm:gwp}]
Let $\varepsilon_0 > 0$, and consider $\fara_0$ from the statement of the Theorem. Let $0 < \varepsilon < \varepsilon_0$. Let $I \subset \R$ (depending on $\fara_0$ and $\varepsilon_0$) be the set such that a smooth solution $\fara$ exists to the MBI system~(\ref{MBI}) on ${\regio{}{}}_I:=\{r \geq r_{\text{in}}\} \cap \{t^* \in I\}$ which has initial data $\fara_0$, and satisfies the bootstrap assumptions $BA\left({\regio{}{}}_I,1,\left\lfloor \frac{l+9}{2}\right\rfloor, \varepsilon \right)$. The set $I \cap (t_0^*, \infty)$ is nonempty, because of the local existence statement Theorem~\ref{thm:local}.

Furthermore, by continuity, the set $I$ is closed.

Finally, we would like to prove that the set $I \cap (t_0^*, \infty)$ is open. Suppose then that $x \in I$. We apply Propositions~\ref{prop:energy},~\ref{prop:linfdecrs},~\ref{prop:linfalpha},~\ref{prop:decayalphabar}. We arrive to the conclusion that, possibly restricting $\varepsilon_0$ to be smaller, there exists an open interval $J \subset \R$, $x \in J$, such that $\fara$ satisfies the bootstrap assumptions $BA\left({\regio{}{}}_J,\frac 1 2,l-5, \varepsilon \right)$, where ${\regio{}{}}_J := \{r \geq r_{\text{in}}\} \cap \{t^* \in J\}$. We now just need to choose $l$ such that $l-5 \geq \left \lfloor \frac{l+9}{2}\right \rfloor$. The choice $l = 18$ serves the purpose.
\end{proof}

\appendix

\section{Computation with the projected connection}
\begin{lemma}\label{lem:projectcomp}
We have the following identities:
\begin{equation*}
\snabla_L \gbar = 0, \qquad \snabla_\lbar \gbar = 0,
\end{equation*}
as well as the commutation relations
\begin{equation*}
[\snabla_L, \snabla_\lbar] = 0, \qquad [\snabla_L, r \snabla] = 0, \qquad [\snabla_\lbar, r\snabla] = 0.
\end{equation*}
The last two equalities are meant in the following sense: if $T\indices{_{A_1, \ldots, A_N}^{B_1, \ldots,B_M}}$ is a tensor of type $(N, M)$ tangential to the spheres of constant $r$, then we have
\begin{equation*}
(\snabla_L (r \snabla T))\indices{_{C, A_1, \ldots, A_N}^{B_1, \ldots,B_M}} =( r \snabla (\snabla_L T))\indices{_{C, A_1, \ldots, A_N}^{B_1, \ldots,B_M}}
\end{equation*}
and similarly
\begin{equation*}
(\snabla_\lbar (r \snabla T))\indices{_{C, A_1, \ldots, A_N}^{B_1, \ldots,B_M}} =( r \snabla (\snabla_\lbar  T))\indices{_{C, A_1, \ldots, A_N}^{B_1, \ldots,B_M}}
\end{equation*}
\end{lemma}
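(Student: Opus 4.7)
The plan is to reduce everything to a single algebraic observation: in the rescaled frame $e_A := r^{-1}\partial_{\theta^A}$ for tangent vectors (with dual coframe $e^A := r\, d\theta^A$), the connection $\snabla$ is "trivial" in the $L, \lbar$ directions. Concretely, working in the double-null coordinates $(u,v,\theta^A)$ in which
\begin{equation*}
g = -2(1-\mu)\, du \, dv + r^2 g_{\mathbb{S}^2}, \qquad L = \partial_v, \qquad \lbar = \partial_u,
\end{equation*}
I would first compute the relevant Christoffel symbols and check that $\nabla_L \partial_{\theta^A} = \frac{1-\mu}{2r}\partial_{\theta^A}$ and $\nabla_{\lbar}\partial_{\theta^A} = -\frac{1-\mu}{2r}\partial_{\theta^A}$ (with vanishing components in the $\partial_u, \partial_v$ directions, so that projection is the identity on these vectors). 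In particular, using $Lr = (1-\mu)/2$, one then finds by the Leibniz rule that $\snabla_L e_A = -r^{-2}(Lr)\partial_{\theta^A} + r^{-1}\cdot\frac{1-\mu}{2r}\partial_{\theta^A} = 0$, and similarly $\snabla_L e^A = 0$, and analogously for $\lbar$. Thus $(e_A, e^A)$ is parallel along both $L$ and $\lbar$.

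Given this, the four identities follow mechanically. For $\snabla_L \gbar = 0$: expand $\gbar = (g_{\mathbb{S}^2})_{AB}\, e^A\otimes e^B$; since $(g_{\mathbb{S}^2})_{AB}$ depends only on $\theta$ we have $L((g_{\mathbb{S}^2})_{AB}) = 0$, and the coframe is parallel, so the Leibniz rule gives $\snabla_L\gbar = 0$ (and likewise for $\lbar$). For $[\snabla_L, \snabla_\lbar] = 0$: for any sphere-tangent tensor $T = T_{\underline{A}}^{\underline{B}}\, e^{\underline{A}}\otimes e_{\underline{B}}$, parallelism of the frame yields $\snabla_L T = L(T_{\underline{A}}^{\underline{B}})\, e^{\underline{A}}\otimes e_{\underline{B}}$ and similarly for $\snabla_\lbar$, so
\begin{equation*}
[\snabla_L,\snabla_\lbar]T = [L,\lbar](T_{\underline{A}}^{\underline{B}})\, e^{\underline{A}}\otimes e_{\underline{B}} = 0,
\end{equation*}
since $L = \partial_v$ and $\lbar = \partial_u$ commute as coordinate fields.

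For the remaining two commutators $[\snabla_L, r\snabla]$ and $[\snabla_\lbar, r\snabla]$, I would again expand $T$ in the frame $(e_A, e^A)$ and note that on a sphere of fixed $r$, the connection $\snabla$ restricted to sphere-tangent directions coincides (as an operator on coordinate components) with the Levi-Civita connection $\mathring\nabla$ of $g_{\mathbb{S}^2}$, since the two metrics $\gbar$ and $g_{\mathbb{S}^2}$ differ by a factor depending only on $r$. The components of $r\snabla T$ in the frame $(e_A, e^A)$ are then built polynomially from $\partial_{\theta^B}$ applied to the components of $T$ and from the (fixed, $(u,v)$-independent) Christoffel symbols $\mathring{\Gamma}^C_{BA}$ of $g_{\mathbb{S}^2}$. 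Since $L$ annihilates both these Christoffel symbols (which depend only on $\theta$) and commutes with all $\partial_{\theta^B}$, one checks by direct computation on components that $\snabla_L \circ (r\snabla) = (r\snabla)\circ \snabla_L$; the same for $\lbar$.

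The only point requiring genuine care is keeping track of the $r$-weights when moving from $\partial_{\theta^A}$ to $e_A$ and between the metrics $\gbar$ and $g_{\mathbb{S}^2}$; but the computation is a finite and elementary one once the parallelism of $(e_A, e^A)$ along $L, \lbar$ is established. I do not expect any conceptual obstacle: the content of the lemma is precisely that the mixed $r$-derivatives of the frame are exactly canceled by the ``longitudinal'' Christoffels $\frac{1-\mu}{2r}$, which is forced by the double-null product structure of Schwarzschild.
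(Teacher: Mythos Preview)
The paper actually states this lemma without proof: in the appendix section on the projected connection, Lemma~\ref{lem:projectcomp} is immediately followed by Lemma~\ref{lem:liecomm}, and the \emph{proof} environment that comes next concerns only the latter (it discusses $\Omega_i$ being Killing for $\gbar$). So there is nothing in the paper to compare your argument against; the lemma is evidently regarded as a routine computation.

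Your approach is correct and is, in fact, the natural one. The key observation---that the rescaled frame $e_A=r^{-1}\partial_{\theta^A}$ is parallel along both $L$ and $\lbar$---is exactly the content of the remark in Section~\ref{schwasec} of the paper, where it is recorded that $\nabla_{\partial_u}(\partial_\theta/r)=\nabla_{\partial_u}(\partial_\phi/r)=0$. Once this is in hand, your reductions of all four identities to statements about coordinate vector fields commuting and about the $v$-independence of the round Christoffel symbols $\mathring\Gamma^C_{AB}$ go through cleanly. One small caveat: your numerical coefficient $\frac{1-\mu}{2r}$ and $Lr=(1-\mu)/2$ correspond to the normalization $L=\partial_v$ with $v=t+r_*$, whereas elsewhere in the paper (e.g.\ in the derivation of the transport equations for $\alpha,\alphabar$) one finds $\nabla_L\partial_{\theta^A}=\frac{1-\mu}{r}\partial_{\theta^A}$, consistent with $L=\partial_t+\partial_{r_*}$. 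The paper is not fully consistent on this point, but your computation is internally consistent and the cancellation $\snabla_L e_A=0$ is of course independent of the overall scaling of $L$.
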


\begin{lemma}\label{lem:liecomm}
We have the following identities:
\begin{equation*}
[\snabla_L, \slie_{\Omega_i}] = 0, \qquad [\snabla_\lbar, \slie_{\Omega_i}] = 0, \qquad [\snabla, \slie_{\Omega_i}] = 0.
\end{equation*}
Here, $\slie$ is the Lie derivative induced by the connection $\snabla$.
\end{lemma}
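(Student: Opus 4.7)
The plan is to reduce all three commutator identities to two structural facts about each $\Omega_i$: it is a Killing vectorfield for the full Schwarzschild metric $g$, and it is everywhere tangent to the spheres of constant $r$ (so in particular $\Omega_i(r)=0$). I would handle the third identity first, then the first; the second is obtained from the first by swapping $L \leftrightarrow \lbar$.

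For $[\snabla, \slie_{\Omega_i}] = 0$: since $\Omega_i$ preserves $r$ and is Killing for $g$, its flow restricts to an isometry of $(\mathbb{S}^2_r, \gbar)$ for each $r$; hence $\slie_{\Omega_i}\gbar = 0$. By uniqueness of the Levi-Civita connection on each sphere, $\snabla$ is invariant under the flow of $\Omega_i$, so applying $\slie_{\Omega_i}$ commutes with $\snabla$ on any tangential tensor.

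For $[\snabla_L, \slie_{\Omega_i}] = 0$: I would assemble three ingredients. (i) $[\Omega_i, L] = 0$: in $(u,v,\theta,\varphi)$-coordinates $L = \partial_v$ while $\Omega_i$ is angular, so the bracket vanishes by inspection; likewise $[\Omega_i, \lbar] = 0$. (ii) Since $\Omega_i$ is Killing, $[\lie_{\Omega_i}, \nabla_X] = \nabla_{[\Omega_i, X]}$ for every vectorfield $X$, which combined with (i) gives $[\lie_{\Omega_i}, \nabla_L] = 0$. (iii) The flow of $\Omega_i$ preserves the orthogonal splitting $TM = T\mathbb{S}^2_r \oplus \langle L, \lbar\rangle$ (the angular subbundle is preserved because $\Omega_i$ is tangent to spheres and Killing for $\gbar$, while the normal subbundle is preserved because $[\Omega_i, L] = [\Omega_i, \lbar] = 0$), so $\lie_{\Omega_i}$ commutes with the orthogonal projection $(\cdot)^p$ onto $T\mathbb{S}^2_r$. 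Given these three points, the computation collapses: for a tangential tensor $T$,
\[
\snabla_L \slie_{\Omega_i} T = (\nabla_L \lie_{\Omega_i} T)^p = (\lie_{\Omega_i}\nabla_L T)^p = \lie_{\Omega_i}\bigl((\nabla_L T)^p\bigr) = \slie_{\Omega_i} \snabla_L T.
\]
The identical argument with $L$ replaced by $\lbar$ yields $[\snabla_\lbar, \slie_{\Omega_i}] = 0$.

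The only non-routine step is justifying point (iii), namely that $\lie_{\Omega_i}$ commutes with the orthogonal projection. I would spell this out by noting that the flow $\phi_s^{\Omega_i}$ of $\Omega_i$ is a $g$-isometry that fixes $r$; hence it maps $T_p\mathbb{S}^2_r$ onto $T_{\phi_s(p)}\mathbb{S}^2_r$, and by $[\Omega_i, L] = 0$ it fixes $L$ pointwise, so it preserves the complementary null subbundle as well. From this invariance of the splitting, commutation of $\lie_{\Omega_i}$ with the projection follows by differentiating in $s$ at $s=0$. This is the only place where a careful verification is required; once it is in hand, the three commutator identities reduce to bookkeeping with the identities $[\lie_{\Omega_i}, \nabla] = 0$ and $\slie_{\Omega_i}\gbar = 0$ that the Killing property delivers.
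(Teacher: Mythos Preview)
Your proof is correct and takes a somewhat different, more structural route than the paper. For the third identity the paper simply invokes standard Killing-field theory, as you do. For $[\snabla_L,\slie_{\Omega_i}]=0$, however, the paper carries out a direct computation: it restricts to spherical one-forms $\omega$, evaluates $([\snabla_L,\slie_{\Omega_i}]\omega)(e_j)$ on the frame $e_j=\Omega_j/r$, expands via the formula $(\slie_{\Omega_i}\eta)(X)=(\snabla_{\Omega_i}\eta)(X)+\eta(\snabla_X\Omega_i)$, and checks that all terms cancel using $\snabla_L e_j=0$ and $\snabla_L[\Omega_i,e_j]=0$. Your argument instead factors the problem into three clean pieces---$[\Omega_i,L]=0$, the Killing identity $[\lie_{\Omega_i},\nabla_L]=0$, and commutation of $\lie_{\Omega_i}$ with the tangential projection---and then chains them in one line. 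The advantage of your approach is that it handles all tensor types at once and makes transparent exactly which geometric facts are doing the work; the paper's computation is more hands-on but verifies only the one-form case explicitly (relying on Leibniz for the extension). Both rest on the same underlying ingredients.
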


\begin{proof}
We notice that, since $\Omega_i$ are Killing for $\gbar$, the last equality holds true by standard theory, cf.~\cite{globalnon}.

We will only prove the first equality when the derivatives are acting on a spherical one-form $\omega$. Let $e_i := \Omega_i/r$. Let $\eta$ be a spherical one form, and $X:= e_j$. We compute
\begin{equation*}
(\slie_{\Omega_i} \eta)(X) = (\snabla_{\Omega_i} \eta)(X) + \eta(\snabla_{X} \Omega_i).
\end{equation*}
(see Lemma 3.2.1 in~\cite{speck1}).
Then,

\begin{equation*}
\begin{aligned}
&([\snabla_L, \slie_{\Omega_i}] \omega)(X) \\
&=
L(\slie_{\Omega_i}\omega(X))-(\slie_{\Omega_i})(\snabla_L X)-\Omega_i(\snabla_L \omega(X)) + (\snabla_L \omega)([\Omega_i, X]) \\ 
&=
L(\Omega_i (\omega(X))-\omega([\Omega_i, X]))-\Omega_i (\omega(\snabla_L X))+\omega([\Omega_i, \snabla_L X])\\
&-\Omega_i (L(\omega(X))-\omega(\snabla_L X))+L(\omega([\Omega_i, X]))-\omega(\snabla_L [\Omega_i, X]) = 0.
\end{aligned}
\end{equation*}
Since $\snabla_L X = 0$, as well as $\snabla_L [\Omega_i, X] = 0$.
This proves the lemma, as the quantity in the beginning is a tensor, so it suffices to verify is vanishes on a basis at each point.
\end{proof}

\section{Shortand notation for tensors}

Let $N \in \mathbb{N}_{\geq 0}$, let $A_1, A_2, \ldots, A_n$ be tensors of type $(k_i, l_i)$, $i \in \{1, \ldots, n\}$ on $T \mathcal{S}$. Using the Schwarzschild metric, we can suppose that such tensors are all covariant of order $k_i$, so that
\begin{equation*}
A_h = (A_h)_{\alpha^h_{1} \ldots \alpha^h_{k_h}}.
\end{equation*}

\begin{definition}[Shorthand notation for tensors on $\mathcal{S}$]\label{def:short}
Let $m, N \in \N_{\geq 0}$, let $\mathcal{R} \subset \mathcal S$. Then there exists a smooth covariant tensor field $T$ on $T \mathcal R$, such that the following holds:
\begin{equation*}
(A_1 \cdots A_n)_{\beta_1 \ldots \beta_m} := T\indices{^{(\alpha^1_{1} \cdots \alpha^1_{k_1}) \cdots (\alpha^n_{1} \cdots \alpha^n_{k_n})}_{\beta_1 \cdots \beta_m}} {(A_1)}_{\alpha^1_{1} \ldots \alpha^1_{k_1}} \cdots{(A_h)}_{\alpha^n_{1} \cdots \alpha^n_{k_n}}.
\end{equation*}
\end{definition}
\begin{remark}
We will often require $T$ to satisfy the following type of bound on $\mathcal{R}$: we require the existence of a constant $C$ depending only on $\mathcal{R}$, $M$ and $N$ such that
\begin{equation*}
|\partial^N T| \leq C_{M, N, \mathcal{R}},
\end{equation*}
in the sense of Definition~\ref{def:gennormder} (without loss of generality, we can suppose $T$ to be covariant).
\end{remark}
\begin{remark}
Fix $r_{\text{in}} \in (0, 2M)$. Notice that, if we consider the region $\mathcal{R} := \mathcal{S}_e \setminus \{r_1 \leq r_{\text{in}}\}$, the Riemann tensor $\rie$ satisfies (w.l.o.g. it is covariant)
\begin{equation*}
|\partial^N \rie| \leq C_{M, N, r_{\text{in}}}.
\end{equation*}
\end{remark}

\section{Positivity of a flux of \texorpdfstring{$\dot Q$}{Qdot} and the inverse MBI metric}\label{sec:bmetric}
This section closely follows similar calculations done by J. Speck in the paper~\cite{speck1} (Proposition 7.4.4).

Given $\fara$ solution to the MBI system~\eqref{MBI}, we define the inverse MBI metric as
\begin{equation}\label{eq:invmbiform}
(b^{-1})^{\mu\nu} := g^{\mu\nu} - (1+\lun)^{-1} \fara^{\mu\kappa}\fara\indices{^\nu_\kappa}.
\end{equation}
We prove the following lemma.
\begin{lemma}[Positivity of $\dot Q(T, X)$]\label{lem:qpos}
Let $\fara$ be a solution to the MBI system~\eqref{MBI} on a region $\mathcal{R} \subset \mathcal{S}_e$. 
Define the vectorfield $X^\nu$ to be
\begin{equation}
(b^{-1})^{\nu\mu} g_{\mu\alpha} T^\alpha.
\end{equation}
Then, as long as
\begin{equation}\label{eq:hypcond}
\ellmbi  > 0
\end{equation}
on $\mathcal R$, we have that the contraction
\begin{equation}\label{eq:positivity}
\dot Q_{\mu\nu} T^\mu X^\nu \geq 0.
\end{equation}
\end{lemma}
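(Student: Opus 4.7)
The plan is to recognize the contraction $\dot Q_{\mu\nu}T^\mu X^\nu$ as the natural energy flux of the variation (the implicit argument of $\dot Q$, call it $\gara$) across a hypersurface whose normal is timelike with respect to the Boillat metric $b^{-1}$, and then check positivity by a frame computation. Concretely, I would first note that by construction $X = b^{-1}(T, \cdot)^\sharp$, so proving the statement is equivalent to proving that the bilinear form
\begin{equation*}
(\gara, \gara) \mapsto \dot Q_{\mu\nu}[\gara]\, T^\mu (b^{-1})^{\nu\alpha} g_{\alpha\beta} T^\beta
\end{equation*}
is non-negative on two-forms $\gara$. Working pointwise, I would choose a local orthonormal frame $\{e_0, e_1, e_2, e_3\}$ with $e_0 \propto T$, which is available wherever $T$ is strictly timelike; the argument extends to the horizon by continuity. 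In this frame I would decompose both the background and the variation into electric/magnetic parts, $E_i = \fara(e_0, e_i)$, $B_i = \farad(e_0, e_i)$, $\tilde E_i = \gara(e_0, e_i)$, $\tilde B_i = \garad(e_0, e_i)$, and use $\lun = |B|^2 - |E|^2$, $\ldu = E \cdot B$, so that $\ellmbi^2 = 1 + |B|^2 - |E|^2 - (E\cdot B)^2$.

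Next, I would substitute the six-term expression~\eqref{eq:qform} for $\dot Q_{\mu\nu}[\gara]$ and the explicit form of $X^\nu = T^\nu - (1+\lun)^{-1}\fara^{\nu\kappa}\fara_{\alpha\kappa}T^\alpha$ into the contraction. The Maxwell piece $\gara_{\mu\alpha}\gara\indices{_\nu^\alpha} - \frac 1 4 g_{\mu\nu}\gara^2$ of $\dot Q$ contracted with $T$ and $T$ itself yields $\tfrac 1 2(|\tilde E|^2 + |\tilde B|^2)$, the standard Maxwell energy density. The correction coming from the second term in $X$, together with the four nonlinear blocks of $\dot Q$ (each carrying a factor of $\ellmbi^{-2}$), produce additional quadratic contributions in $(\tilde E, \tilde B)$ with coefficients polynomial in $(E,B)$ and rational in $(1+\lun)$ and $\ellmbi^2$. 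The key algebraic miracle, identical in spirit to the computation in Proposition 7.4.4 of Speck's~\cite{speck1}, is that the coefficient $(1+\lun)^{-1}$ in the definition of $X$ and the coefficient $\ellmbi^{-2}$ appearing in $H_\Delta$ conspire so that all mixed terms reorganize, leaving an expression of the schematic form
\begin{equation*}
\dot Q(T, X) = \ellmbi^{-1}\bigl(|\tilde E + \tilde B \wedge \cdots|^2 + |\tilde B - \tilde E \wedge \cdots|^2\bigr),
\end{equation*}
i.e.\ a sum of squares of projections of $(\tilde E, \tilde B)$ onto the self-dual and anti-self-dual parts with respect to $b^{-1}$. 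The hypothesis $\ellmbi > 0$, which also implies $1 + \lun > 0$ since $\ldu \in \mathbb{R}$, guarantees both that the coefficients in $H_\Delta$ and $X$ are well-defined and that these squares have strictly positive prefactors.

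The main obstacle will be the algebraic bookkeeping in the second paragraph: the six blocks of $\dot Q$ in~\eqref{eq:qform} each split into pieces under the two-term decomposition of $X$, producing on the order of twelve distinct quadratic structures in $(\tilde E, \tilde B)$, and verifying that their sum organizes into a manifest sum of squares is the whole content of the lemma. To keep this tractable I would process the contraction in two stages: first contract only with $T^\mu$, using that many terms of $\dot Q$ involve $\farad_{\mu\zeta}$ and $T^\mu \farad\indices{_\mu^\zeta}$ reduces to $B^\zeta$ alone, then contract the resulting vector with $X^\nu$, exploiting orthogonality of the two pieces of $X$ to isolate the Maxwell part from the correction. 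A more conceptual shortcut, which I would pursue in parallel, is to verify directly that $T$ is $b^{-1}$-timelike (future-directed) on the region where $\ellmbi > 0$, and then invoke the general principle that the canonical stress of any Lagrangian field theory without birefringence has positive energy flux across hypersurfaces causal with respect to the Boillat metric; this reduces the problem to checking $b^{-1}(T,T) < 0$, which from~\eqref{eq:invmbiform} amounts to $g(T,T) - (1+\lun)^{-1}|\iota_T \fara|^2 < 0$, a condition immediate from $g(T,T) < 0$ and $1+\lun > 0$.
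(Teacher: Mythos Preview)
Your setup is the same as the paper's: both decompose into electric and magnetic parts with respect to $T$ and follow Speck's Proposition~7.4.4. The difference lies in how positivity is actually verified.

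The paper does not produce a sum-of-squares form. Instead, after the $E/B$ decomposition it picks an orthonormal spatial frame $\{e_1,e_2,e_3\}$ adapted to the background so that $E\in\mathrm{span}\{e_1\}$ and $B\in\mathrm{span}\{e_1,e_2\}$. The $e_3$-components of $(\tilde E,\tilde B)$ then split off as the manifestly positive term $\ellmbi^2(1+(1-\mu)^{-1}|B|^2)(\tilde E_3^2+\tilde B_3^2)$, and the remaining $(\tilde B_1,\tilde B_2,\tilde E_1,\tilde E_2)$ part is written as a symmetric $4\times 4$ matrix whose entries are explicit polynomials in $E_1,B_1,B_2$. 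Positivity is then checked by Sylvester's criterion: all four principal minors are computed and shown to be strictly positive whenever $\ellmbi>0$. This is the entire content of the lemma, and there is no shortcut around the algebra.

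Your first route asserts that the twelve quadratic structures reorganize into a sum of squares ``of the schematic form $\ellmbi^{-1}(|\tilde E+\tilde B\wedge\cdots|^2+|\tilde B-\tilde E\wedge\cdots|^2)$''. This is not established in the proposal, and the paper's explicit matrix (which has off-diagonal entries such as $A_{14}=(1-\mu)^{-1}B_2E_1(1+\lun)(1+(1-\mu)^{-1}(B_2^2-E_1^2))$) does not obviously factor this way; the principal minors involve $\ellmbi$ to powers $2,4,8$, which is not what a simple two-square decomposition would give. So this step is a genuine gap rather than routine bookkeeping.

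Your second route, checking $(b^{-1})^{\mu\nu}T_\mu T_\nu<0$ and then invoking a ``general principle'' that the canonical stress has positive flux across $b^{-1}$-causal hypersurfaces, is circular: that principle, applied to the \emph{canonical} stress $\dot Q$ of the variation (not the full nonlinear stress $Q$), is precisely the statement of the lemma. It is true, but its proof is the matrix computation above.
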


\begin{proof}
Let us begin by defining the shorthand notation $\dot \fara := \dot \fara_{,I,0}$.
We decompose the field in the electric and magnetic parts.
\begin{equation*}
B^\nu := - T^\mu \farad\indices{_\mu^\nu}, \qquad E^\nu := T^\mu \fara\indices{_\mu^\nu},
\end{equation*}
ans similarly we define $\dot E$, $\dot B$ using the tensor $\dot \fara$.
We note that $E, B, \dot E, \dot B$ are parallel to the foliation of $\mathcal{S}_e$ by surfaces of constant $t$.
We therefore denote by $\langle \cdot , \cdot \rangle$ the (positive definite) inner product induced by the Schwarzschild metric on that foliation.
With this notation, we calculate the contractions:
\begin{equation}
\begin{aligned}
\fara\indices{_T^\kappa}\dot \fara_{T\kappa} &= \langle E, \dot E \rangle, \\
\farad\indices{_T^\kappa}\dot \fara_{T\kappa} &= -\langle B, \dot E \rangle, \\
\fara\indices{_\kappa_\lambda}\dot \fara^{\kappa\lambda} &= 2(1-\mu)^{-1}(-\langle E, \dot E \rangle + \langle B, \dot B \rangle)\\
\farad^{\kappa\lambda}\dot\fara_{\kappa\lambda} &= 2(1-\mu)^{-1}(\langle B, \dot E \rangle  + \langle E, \dot B \rangle)\\
\lun = \frac 1 2 \fara\indices{_\kappa_\lambda} \fara^{\kappa\lambda} &= (1-\mu)^{-1}(-|E|^2 + |B|^2)\\
\ldu = \frac 1 4 \farad^{\kappa\lambda}\fara_{\kappa\lambda} &=  (1-\mu)^{-1}\langle B, E \rangle.
\end{aligned}
\end{equation}
Let us now denote by latin letters $a, b, c, \ldots$ indices relative to tensors tangent to the surfaces $\{t = \text{const}\}$.
Recall:
\begin{equation}
2\ellmbi^2(1+\lun) X^\mu = 2 \ellmbi^2 (1+\lun)(b^{-1})^{\mu\nu} T_\nu = 2 \ellmbi^2(1+\lun)T^\mu - 2 \ellmbi^2 \fara^{\mu\kappa}\fara_{T\kappa}.
\end{equation}
We then have
\begin{equation}\label{eq:qcontract}
2\ellmbi^2(1+\lun) \dot Q_{\mu\nu}T^\mu X^\nu = 2 \ellmbi^2(1+(1-\mu)^{-1}|B|^2) \dot Q_{TT} - 2 \ellmbi^2 \dot Q_{Ta} \fara^{a \kappa} \fara_{T \kappa}.
\end{equation}
Recalling the form of $\dot Q$, Equation~\eqref{eq:qform}, we have
\begin{equation}
\begin{aligned}
\dot Q_{\mu\nu} &={\dot \fara}\indices{_\mu^\zeta}{\dot \fara}_{\nu\zeta}- \frac 1 4 g_{\mu\nu} {\dot \fara}_{\zeta \eta} {\dot \fara}^{\zeta\eta} + \frac 1 2 \ellmbi^{-2} \left\{- \fara\indices{_\mu^\zeta} {\dot \fara}_{\nu\zeta} \fara^{\kappa\lambda} {\dot \fara}_{\kappa\lambda} + \frac 1 4 g_{\mu\nu}(\fara^{\kappa\lambda}{\dot \fara}_{\kappa\lambda})^2 \right\} \\
&+ \frac 1 2 \ellmbi^{-2} (1+\lun) \left\{- \farad\indices{_\mu^\zeta} {\dot \fara}_{\nu\zeta} \farad^{\kappa\lambda} {\dot \fara}_{\kappa\lambda} + \frac 1 4 g_{\mu\nu}(\farad^{\kappa\lambda}{\dot \fara}_{\kappa\lambda})^2 \right\}  \\
&+ \frac 1 2 \ldu \ellmbi^{-2} \left\{\fara\indices{_\mu^\zeta} {\dot \fara}_{\nu\zeta} \farad^{\kappa\lambda} {\dot \fara}_{\kappa\lambda} - \frac 1 4 g_{\mu\nu}\fara^{\kappa\lambda}{\dot \fara}_{\kappa\lambda}\farad^{\kappa\lambda}{\dot \fara}_{\kappa\lambda} \right\}  \\
&+ \frac 1 2 \ldu \ellmbi^{-2} \left\{\farad\indices{_\mu^\zeta} {\dot \fara}_{\nu\zeta} \fara^{\kappa\lambda} {\dot \fara}_{\kappa\lambda} - \frac 1 4 g_{\mu\nu}\fara^{\kappa\lambda}{\dot \fara}_{\kappa\lambda}\farad^{\kappa\lambda}{\dot \fara}_{\kappa\lambda}\right\}.
\end{aligned}
\end{equation}
We then compute the first term in (\ref{eq:qcontract}):
\begin{equation*}
\begin{aligned}
&2 \ellmbi^2 \dot Q_{TT} = \ellmbi^2 (|\dot E|^2 + |\dot B|^2)\\
&+ (1-\mu)^{-1}(\langle E, \dot E\rangle^2 - \langle B, \dot B\rangle^2)\\
&+(1+(1-\mu)^{-1}(|B|^2-|E|^2))(1-\mu)^{-1}(\langle B, \dot E \rangle^2 - \langle E, \dot B \rangle^2)\\
&+2(1-\mu)^{-2}\langle E, B\rangle (\langle B, \dot B\rangle \langle E, \dot B\rangle + \langle E ,\dot E \rangle \langle B, \dot E \rangle).
\end{aligned}
\end{equation*}
We compute the second term in (\ref{eq:qcontract}):
\begin{equation}
\begin{aligned}
&\ellmbi^2 \dot Q_{T a} \fara^{a\kappa} \fara_{T \kappa} \\
=& - (1-\mu)^{-2}\langle B, \dot B \rangle^2 \left(|E|^2 + (1-\mu)^{-1}\langle E, B\rangle \right)\\
&+ (1-\mu)^{-2}\langle B , \dot B\rangle \langle B, \dot E\rangle (1+(1-\mu)^{-1}|B|^2)\langle E,B \rangle\\
&+ 2(1-\mu)^{-2} \langle B, \dot B\rangle \langle E,\dot B \rangle (1+ (1-\mu)^{-1}|B|^2)\langle E,B \rangle\\
&+(1-\mu)^{-1} \langle B,\dot B \rangle \langle E, \dot E \rangle (1+(1-\mu)^{-1}|B|^2)\\
&-(1-\mu)^{-2}\langle E,\dot B \rangle \left\{(1+(1-\mu)^{-1}(|B|^2-|E|^2))|B|^2+(1-\mu)^{-1}\langle E, B\rangle^2 \right\}\\
&-(1-\mu)^{-1}\langle E, \dot B \rangle \langle B, \dot E \rangle (1+(1-\mu)^{-1}(|B|^2-|E|^2))(1+(1-\mu)^{-1}|B|^2)\\
&-(1-\mu)^{-2}\langle E,\dot B \rangle \langle E, \dot E \rangle \langle E, B\rangle (1+(1-\mu)^{-1}|B|^2)
\end{aligned}
\end{equation}
These expressions are exactly the same as the ones appearing in the paper~\cite{speck1}, with the formal substitutions
$$
E \to (1-\mu)^{-\frac 1 2} E, \qquad B \to (1-\mu)^{-\frac 1 2} B, \qquad \dot E \to \dot E, \qquad \dot B \to \dot B.
$$
We proceed to show nonnegativity (in the components of $\dot E$, $\dot B$) for the resulting quadratic form:
$$
2\ellmbi^2(1+\lun) \dot Q_{\mu\nu}T^\mu X^\nu = 2 \ellmbi^2(1+(1-\mu)^{-1}|B|^2) \dot Q_{TT} - 2 \ellmbi^2 \dot Q_{Ta} \fara^{a \kappa} \fara_{T \kappa}.
$$
We choose vectorfields $e_1$, $e_2$, $e_3$ such that $E \in \text{span} \{e_1\}$, $B \in \text{span}\{e_1, e_2\}$, $\langle e_i, e_3 \rangle = \delta_{i3}$, $|e_i| =1$. We decompose:
\begin{align*}
E &= E_1 e_1,\\
B &= B_1 e_1 + B_2 e_2,\\
\dot E &= \dot E_1 e_1 + \dot E_2 e_2 + \dot E_3 e_3,\\
\dot B &= \dot B_1 e_1 + \dot B_2 e_2 + \dot B_3 e_3.
\end{align*}
Due to the orthogonality of $e_1, e_2, e_3$, the only terms containing $E_3$ and $B_3$ are the terms arising from the first term in $2\ellmbi^2(1+\lun) \dot Q_{\mu\nu}T^\mu X^\nu$, i.e. the ``linear'' term
\begin{equation}
\ellmbi^2 (1 + (1-\mu)^{-1}|B|^2)(E_3^2+B_3^2),
\end{equation}
which is manifestly positive in $E_3, B_3$ if $\ellmbi > 0$ on $\mathcal R$.

We then proceed to calculate the components of the matrix $A$ such that
\begin{equation}
(B_1, B_2, E_1, E_2) A (B_1, B_2, E_1, E_2)^t = 2\ellmbi^2(1+\lun) \dot Q_{\mu\nu}T^\mu X^\nu  - \ellmbi^2 (1 + (1-\mu)^{-1}|B|^2)(E_3^2+B_3^2).
\end{equation}
Here, the superscript $^t$ denotes transposition. We have that $A$ is obviously a symmetric matrix, with entries
\begin{align*}
A_{11} &= \left(\frac{B_2^2-E_1^2}{1-\mu}+1\right) \left(\frac{B_2^2 E_1^2}{(1-\mu)^2}+\ellmbi^2\right)\\
A_{12} &= -\frac{B_1 B_2 \left(\frac{B_2^2 E_1^2}{(1-\mu)^2}+\ellmbi^2\right)}{1-\mu}\\
A_{13} &= \frac{B_1 B_2^2 E_1 \left(\frac{B_1^2+B_2^2}{1-\mu}+1\right)}{(\mu-1)^2}\\
A_{14} &= (1 - \mu)^{-1} B_2 E_1 \left(1 + \frac{B_1^2 + B_2^2}{1-\mu}\right) \left(1 + \frac{B_2^2 - 
      E_1^2}{1-\mu}\right)\\
A_{22} &= \left(\frac{B_1^2}{1-\mu}+1\right) \left(\frac{B_2^2 E_1^2}{(1-\mu)^2}+\ellmbi^2\right)\\
A_{23} &=-\frac{B_2 E_1 \left(\frac{B_1^2}{1-\mu}+1\right) \left(\frac{B_1^2+B_2^2}{1-\mu}+1\right)}{1-\mu} \\
A_{24} &=-\frac{B_1 B_2^2 E_1 \left(\frac{B_1^2+B_2^2}{1-\mu}+1\right)}{(1-\mu)^2} \\
A_{33} &= \left(\frac{B_1^2}{1-\mu}+1\right) \left(\frac{B_1^2+B_2^2}{1-\mu}+1\right)^2\\
A_{34} &= \frac{B_1 B_2 \left(\frac{B_1^2+B_2^2}{1-\mu}+1\right)^2}{1-\mu}\\
A_{44} &= \left(\frac{B_1^2+B_2^2}{1-\mu}+1\right)^2 \left(\frac{B_2^2-E_1^2}{1-\mu}+1\right).
\end{align*}

We now denote by $M_k$ the $k$-th principal minor of the matrix $A$: $M_k := (A_{ij})_{(i,j)\in \{1, \ldots, k\}\times\{1, \ldots, k\}}$.
Let us calculate the determinants of such minors:
\begin{align}\label{pmin}
\det(M_1) &= \left(\frac{B_2^2-E_1^2}{1-\mu}+1\right) \left(\frac{B_2^2 E_1^2}{(1-\mu)^2}+\ellmbi^2\right),\\ \label{smin}
\det(M_2) &=\frac{\ellmbi^2 \left(B_2^2 E_1^2+\ellmbi^2 (1-\mu)^2\right)^2}{(1-\mu)^4} ,\\ \label{tmin}
\det(M_3) &= \frac{\ellmbi^4 \left(B_1^2+1-\mu\right) \left(B_1^2+B_2^2+1-\mu\right)^2 \left(B_2^2 E_1^2+\ellmbi^2 (1-\mu)^2\right)}{(1-\mu)^5},\\ \label{qmin}
\det(M_4) &= \frac{\ellmbi^8 \left(B_1^2+B_2^2+1-\mu\right)^4}{(1-\mu)^4} .
\end{align}
Since $\mathcal{R} \subset \mathcal{S}_e$, and since we are assuming $\ellmbi > 0$, the expressions in (\ref{smin}), (\ref{tmin}), (\ref{qmin}) are manifestly positive.

Concerning $\det (M_1)$, we distinguish two cases.
\begin{itemize}
\item If $1-(1-\mu)^{-1}E_1 > 0$, clearly $\det(M_1) > 0$,
\item If $1-(1-\mu)^{-1}E_1 \leq 0$, we notice
\begin{equation*}
\frac{B_2^2-E_1^2}{1-\mu}+1 = \ellmbi^2 - (1-\mu)^{-1}B_1^2(1-(1-\mu)^{-1}E_1) \geq \ellmbi^2  > 0.
\end{equation*}
\end{itemize}
This concludes the proof of the Lemma.
\end{proof}
\section{Calculations to deduce the \fackip Equations}\label{app:wavemax}

In this appendix, we collect useful calculations which are used to derive the form of the \fackip Equations in Proposition~\ref{lem:wavembir} and~\ref{lem:wavembis}. We restrict here to the case of the linear Maxwell system. In the proofs of Proposition~\ref{lem:wavembir} and Proposition~\ref{lem:wavembis}, we extend the reasoning to the MBI case.

Let's first prove a simple lemma about commutation of derivatives.
\begin{lemma}
	We have the equation
	\begin{equation*}
	\nabla_\alpha (\nabla^\mu \fara \indices{_\mu_\nu}) - 
	\nabla_\nu \nabla^\mu \fara\indices{_\mu_\alpha} = 
	2 \rie\indices{_\alpha_\beta^\mu_\nu}\fara\indices{_\mu^\beta}- \nabla^\mu \nabla_\mu \fara\indices{_\nu_\alpha}.
	\end{equation*}
\end{lemma}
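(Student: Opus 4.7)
The identity to prove is purely differential-geometric: it only uses that $\fara$ is a closed $2$-form (the first Maxwell equation $d\fara = 0$, equivalently the algebraic Bianchi identity $\nabla_\alpha \fara_{\mu\nu} + \nabla_\mu \fara_{\nu\alpha} + \nabla_\nu \fara_{\alpha\mu} = 0$) together with Ricci-flatness of the Schwarzschild background. No use is made of the second Maxwell equation.

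The plan is to repeatedly commute covariant derivatives and use $d\fara=0$ to rearrange the differential operators. Concretely, I would proceed as follows. First, commute the first two derivatives:
\begin{equation*}
\nabla_\alpha \nabla^\mu \fara_{\mu\nu} \;=\; \nabla^\mu \nabla_\alpha \fara_{\mu\nu} + g^{\mu\sigma}[\nabla_\alpha,\nabla_\sigma]\fara_{\mu\nu}.
\end{equation*}
Writing out the commutator on a $(0,2)$-tensor produces two Riemann terms; the first contracts into a Ricci tensor and vanishes on Schwarzschild, while the second yields a genuine curvature contribution. Next, apply $d\fara = 0$ to rewrite
\begin{equation*}
\nabla_\alpha \fara_{\mu\nu} \;=\; \nabla_\nu \fara_{\mu\alpha} - \nabla_\mu \fara_{\nu\alpha},
\end{equation*}
so that $\nabla^\mu \nabla_\alpha \fara_{\mu\nu} = -\nabla^\mu\nabla_\mu \fara_{\nu\alpha} + \nabla^\mu \nabla_\nu \fara_{\mu\alpha}$. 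Finally, commute the last pair of derivatives, $\nabla^\mu \nabla_\nu \fara_{\mu\alpha} = \nabla_\nu \nabla^\mu \fara_{\mu\alpha} + g^{\mu\sigma}[\nabla_\sigma,\nabla_\nu]\fara_{\mu\alpha}$, once again discarding the Ricci piece to isolate a second pure-Riemann contribution.

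The resulting expression matches the left-hand side of the claimed identity up to a sum of two Riemann terms contracted against $\fara$; the remaining task is to recognize this sum as $2\,\rie_{\alpha\beta}{}^\mu{}_\nu\,\fara_\mu{}^\beta$. This is the only non-routine step and constitutes the main obstacle of the computation. I would handle it by lowering all indices and writing the two leftover terms in the form $-\rie_{\tau\alpha\sigma\nu}\fara^{\sigma\tau} - \rie_{\tau\nu\alpha\sigma}\fara^{\sigma\tau}$, then applying in turn the antisymmetry in the first index pair of Riemann, the pair-exchange symmetry $\rie_{abcd} = \rie_{cdab}$, and the antisymmetry of $\fara^{\sigma\tau}$ (which allows a relabeling $\sigma \leftrightarrow \tau$). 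After these manipulations each of the two terms becomes $\rie_{\alpha\tau\sigma\nu}\fara^{\sigma\tau}$, which is precisely $\rie_{\alpha\beta}{}^\mu{}_\nu \fara_\mu{}^\beta$, and they add to give the factor $2$ appearing on the right-hand side. No use of the first Bianchi identity is needed for this combination; only the pair symmetries of Riemann and the skew-symmetry of $\fara$.
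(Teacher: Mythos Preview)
Your proof is correct and follows essentially the same route as the paper: commute $\nabla_\alpha$ past $\nabla^\mu$, use $d\fara=0$ to rewrite $\nabla_\alpha\fara_{\mu\nu}$, commute $\nabla^\mu$ past $\nabla_\nu$, drop the Ricci contractions by Ricci-flatness, and combine the two surviving Riemann terms via the pair symmetries of $\rie$ and the antisymmetry of $\fara$. The paper's write-up is terser (it does not explicitly isolate the Ricci pieces or spell out the symmetry manipulation that merges the two curvature terms), but the logical steps are identical.
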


\begin{proof}
	Commuting derivatives,
	\begin{equation*}
	\begin{aligned}
	&\nabla_\alpha (\nabla^\mu \fara \indices{_\mu_\nu}) = 
	- \rie\indices{_\nu_\beta^\mu_\alpha} \fara \indices{_\mu^\beta} + \nabla^\mu \nabla_\alpha \fara\indices{_\mu_\nu} \stackrel{(*)}{=} \\ &
	- \rie\indices{_\nu_\beta^\mu_\alpha} \fara \indices{_\mu^\beta} - \nabla^\mu \nabla_\nu \fara\indices{_\alpha_\mu}- \nabla^\mu \nabla_\mu \fara\indices{_\nu_\alpha} \stackrel{(*)}{=} \\ &
	- \rie\indices{_\nu_\beta^\mu_\alpha} \fara \indices{_\mu^\beta} +
	\rie\indices{_\alpha_\beta^\mu_\nu}\fara\indices{_\mu^\beta}
	+  \nabla_\nu \nabla^\mu \fara\indices{_\mu_\alpha}- \nabla^\mu \nabla_\mu \fara\indices{_\nu_\alpha}\\ & =
	2 \rie\indices{_\alpha_\beta^\mu_\nu}\fara\indices{_\mu^\beta}+  \nabla_\nu \nabla^\mu \fara\indices{_\mu_\alpha}- \nabla^\mu \nabla_\mu \fara\indices{_\nu_\alpha}.
	\end{aligned}
	\end{equation*}
	This implies the claim.
\end{proof}

\subsection{The wave equation, Maxwell case}
Let's derive the wave equation for $\rho$ and $\sigma$ suppressing the nonlinear term.

\begin{lemma}\label{lem:wavemaxwell}
	Let $\fara$ satisfy the Maxwell system
	\begin{equation*}
	\nabla_\mu \fara\indices{^\mu_\nu} = 0, \qquad \nabla_{[\mu} \fara_{\nu \kappa]} = 0.
	\end{equation*}
	We then have
	\begin{align}\label{eq:rhomw}
	& - r^{-2} L \lbar (r^2 \rho) +(1-\mu) \slashed{\Delta}  \rho = 0,
	\\ &
	- r^{-2} L \lbar (r^2 \sigma) +(1-\mu) \slashed{\Delta}  \sigma = 0. \label{eq:sigmamw}
	\end{align}
	Here, $\sigma$ and $\rho$ are the middle components defined in Equation~\eqref{eq:middledef}.
\end{lemma}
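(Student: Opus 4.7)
The plan is to compute the covariant wave operator applied to the scalar $\fara(\lbar, L) = L^\nu \lbar^\alpha \fara_{\nu\alpha}$, which up to the factor $2(1-\mu)$ is exactly $\rho$. By expanding $\nabla^\mu \nabla_\mu(L^\nu \lbar^\alpha \fara_{\nu\alpha})$ via the Leibniz rule, one obtains nine terms grouped as follows: two "background" terms with $\Box L$ or $\Box \lbar$ contracted into $\fara$; four "mixed" terms of the form $(\nabla L)(\nabla \lbar)\fara$ or $(\nabla L) \lbar (\nabla \fara)$ (and its $L \leftrightarrow \lbar$ counterpart); and one top-order term $L^\nu \lbar^\alpha \Box \fara_{\nu\alpha}$. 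The Maxwell equation $\nabla^\mu \fara_{\mu\nu} = 0$ combined with the commutation lemma stated just above the statement (giving $\nabla^\mu \nabla_\mu \fara_{\nu\alpha} = 2\rie\indices{_\alpha_\beta^\mu_\nu}\fara\indices{_\mu^\beta} - \nabla_\alpha(\nabla^\mu \fara_{\mu\nu}) + \nabla_\nu(\nabla^\mu \fara_{\mu\alpha})$) forces the top-order term to reduce to a pure curvature expression, and the first-order terms $\nabla \fara$ can be rewritten using the Bianchi identity $\nabla_{[\mu}\fara_{\nu\alpha]}=0$ so that every contraction involving $(\nabla L)^\mu$ and $(\nabla_\mu \fara_{\nu\alpha})$ collapses to a Maxwell-controllable piece.

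The concrete input required is the connection data on Schwarzschild relative to the null frame $\{L, \lbar, \partial_{\theta^A}\}$: in particular $\nabla_L \lbar = \nabla_\lbar L = 0$, $\nabla_L L = -\nabla_\lbar \lbar = -\tfrac{2M}{r^2} L$ (resp. $+\tfrac{2M}{r^2}\lbar$), $\nabla_{\partial_{\theta^A}} L = -\nabla_{\partial_{\theta^A}} \lbar = \tfrac{1-\mu}{r}\partial_{\theta^A}$ (up to sign conventions), together with the Laplacians $\Box L$, $\Box \lbar$ and the relevant components of the Schwarzschild Riemann tensor. Plugging these in, the background contributions and the commuted angular pieces assemble into exactly the Regge--Wheeler-type combination $-r^{-2}L\lbar(r^2 \rho) + (1-\mu)\sdelta \rho$, while all first-order derivatives of $\fara$ cancel against each other after applying $\nabla^\mu \fara\indices{_\mu_\nu}=0$ in the direction of $L$ and $\lbar$.

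For $\sigma$, the strategy is identical via Hodge duality: since $\nabla_{[\mu}\fara_{\nu\alpha]}=0$ is equivalent to $\nabla^\mu \farad_{\mu\nu}=0$, and $\nabla^\mu \fara_{\mu\nu}=0$ is equivalent to $\nabla^\mu \farad\indices{^{\star}_\mu_\nu}$-type relations, the field $\farad$ satisfies the same Maxwell system with the roles of the two equations swapped. Applying the computation of the previous paragraph to $\farad$ in place of $\fara$, and using $\farad(\lbar, L) = 2(1-\mu) \sigma$ (up to sign, following the identification between $\rho[\farad]$ and $\sigma[\fara]$), yields the analogous wave equation for $\sigma$ with no further work.

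The main obstacle will be the bookkeeping of the nine Leibniz terms and the angular/null frame conventions: one must verify that the nine contributions combine into precisely the desired operator (in particular that the coefficient of $\sdelta \rho$ is $(1-\mu)$ rather than something else, and that the $r$-weights on $L\lbar(r^2 \rho)$ emerge correctly from the terms involving $\Box L$, $\Box \lbar$, and $(\nabla L) \cdot (\nabla \lbar)$). No analytical difficulty arises—the entire argument is algebraic once the connection coefficients are in hand—but careful tracking of sign conventions and of the normalization factor $2(1-\mu)$ relating $\fara(\lbar, L)$ to $\rho$ is essential, since this is the Schwarzschild-geometric calculation on which the nonlinear analogue in Lemma~\ref{lem:wavembir} will directly build.
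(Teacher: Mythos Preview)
Your proposal is correct and follows essentially the same approach as the paper's proof: expand $\square_g(L^\nu\lbar^\alpha\fara_{\nu\alpha})$ via the Leibniz rule into nine terms, use the commutation lemma to reduce the top-order piece $L^\nu\lbar^\alpha\square\fara_{\nu\alpha}$ to a Riemann curvature contraction, evaluate the remaining terms using the explicit Schwarzschild connection coefficients (including explicit computation of $\square L$ and $\square\lbar$), and finally obtain the $\sigma$ equation by Hodge duality. One small inaccuracy: in the paper the first-order angular pieces in the terms $(c)+(f)+(g)+(h)$ are eliminated using the divergence equation $\nabla^\mu\fara_{\mu\nu}=0$, not the Bianchi identity, though you also invoke the former later in your plan.
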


\begin{proof}
	We calculate:
	\begin{equation*}
	\begin{aligned}
	&\nabla_\mu \nabla^\mu (L^\nu \lbar^\alpha \fara_{\nu \alpha}) =  \\ &
	(\nabla_\mu \nabla^\mu L^\nu) \lbar^\alpha \fara_{\nu \alpha} + (\nabla^\mu L^\nu)(\nabla_\mu \lbar^\alpha) \fara_{\nu \alpha} + (\nabla^\mu L^\nu) \lbar^\alpha \nabla_\mu \fara_{\nu \alpha}\\ & +
	(\nabla_\mu L^\nu)(\nabla^\mu \lbar^\alpha) \fara_{\nu \alpha} + L^\nu (\nabla_\mu \nabla^\mu \lbar^\alpha) \fara_{\nu \alpha} + L^\nu(\nabla^\mu \lbar^\alpha) \nabla_\mu \fara_{\nu \alpha}+\\ &
	(\nabla_\mu L^\nu) \lbar^\alpha \nabla^\mu\fara_{\nu \alpha} + L^\nu (\nabla_\mu  \lbar^\alpha) \nabla^\mu \fara_{\nu \alpha} + L^\nu \lbar^\alpha \nabla^\mu\nabla_\mu \fara_{\nu \alpha} =\\ &
	(a) +(b)+(c)+\\ &
	(d) +(e) +(f) +\\ &
	(g)+ (h) + (i).
	\end{aligned}
	\end{equation*}
	We first consider $(c) + (f) + (g) + (h)$. We obtain
	
	\begin{equation*}
	\begin{aligned}
	&(c) + (f) + (g) + (h) \\ & =
	(\nabla^\mu L^\nu) \lbar^\alpha \nabla_\mu \fara_{\nu \alpha}
	+ L^\nu(\nabla^\mu \lbar^\alpha) \nabla_\mu \fara_{\nu \alpha}+
	(\nabla_\mu L^\nu) \lbar^\alpha \nabla^\mu\fara_{\nu \alpha} + L^\nu (\nabla_\mu  \lbar^\alpha) \nabla^\mu \fara_{\nu \alpha} \\ & =
	2 (\nabla^\mu L^\nu) \lbar^\alpha \nabla_\mu \fara_{\nu \alpha}
	+ 2L^\nu(\nabla^\mu \lbar^\alpha) \nabla_\mu \fara_{\nu \alpha} =\\ &
	2 (\nabla^\mu L)^L \nabla_\mu \fara_{L \lbar}
	+ 
	2 (\nabla^\mu L)^\theta \nabla_\mu \fara_{\theta \lbar} +
	2 (\nabla^\mu L)^\varphi \nabla_\mu \fara_{\varphi \lbar} +\\ &
	2(\nabla^\mu \lbar)^\lbar \nabla_\mu \fara_{L \lbar}+
	2(\nabla^\mu \lbar)^\theta \nabla_\mu \fara_{L \theta}+
	2(\nabla^\mu \lbar)^\varphi \nabla_\mu \fara_{L \varphi} \\ & =
	2 (\nabla_L L)^L \nabla^L \fara_{L \lbar}   + 
	2(\nabla_\lbar \lbar)^\lbar \nabla^\lbar \fara_{L \lbar} + (\text{angular terms}).
	\end{aligned}
	\end{equation*}
	The first two terms in the last line of the previous equation then read:
	\begin{equation*}
	\begin{aligned}
	2 (\nabla_L L)^L \nabla^L \fara_{L \lbar}   + 
	2(\nabla_\lbar \lbar)^\lbar \nabla^\lbar \fara_{L \lbar} =
	\frac{2M}{r^2} (1-\mu)^{-1}(\nabla_L \fara_{L \lbar}- \nabla_\lbar \fara_{L \lbar}).
	\end{aligned}
	\end{equation*}
	The angular terms, instead, become
	\begin{equation*}
	\begin{aligned}
	&= 2 \frac{1-\mu} r (\nabla^\theta \fara_{\theta \lbar}+\nabla^\varphi \fara_{\varphi \lbar} )-2 \frac{1-\mu} r (\nabla^\theta \fara_{L\theta}+\nabla^\varphi \fara_{L\varphi} ) \\ & 
	\stackrel{(*)}{=} 2 \frac{1-\mu}{r} (-\nabla^L \fara_{L \lbar}+ \nabla^\lbar \fara_{L \lbar}) = \frac 1 r (\nabla_\lbar \fara_{L \lbar}- \nabla_L \fara_{L \lbar}) \\ & =
	- \frac{1-\mu} r (1-\mu)^{-1}( \nabla_L \fara_{L \lbar}-\nabla_\lbar \fara_{L \lbar}).
	\end{aligned}
	\end{equation*}
	Therefore,
	\begin{equation*}
	\begin{aligned}
	&(c) + (f) + (g) + (h) = 
	\frac{2M}{r^2} (1-\mu)^{-1}(\nabla_L \fara_{L \lbar}- \nabla_\lbar \fara_{L \lbar}) - \frac{1-\mu} r (1-\mu)^{-1}( \nabla_L \fara_{L \lbar}-\nabla_\lbar \fara_{L \lbar})\\ & =
	(1-\mu)^{-1}\left(\frac{2M}{r^2} - \frac 1 r + \frac{2M}{r^2} \right)( \nabla_L \fara_{L \lbar}-\nabla_\lbar \fara_{L \lbar}).
	\end{aligned}
	\end{equation*}
	Let us notice that
	\begin{equation}
	\begin{aligned}
	&\nabla_\theta \partial_\theta = \frac r 2 \lbar - \frac r 2 L,\\ &
	\nabla_\varphi \partial_\varphi = \sin^2 \theta \frac r 2 \lbar - \sin^2 \theta \frac r 2 L- \sin \theta \cos \theta \partial_\varphi.
	\end{aligned}
	\end{equation}
	Now, consider 
	\begin{equation*}
	\begin{aligned}
	&\nabla^\mu \nabla_\mu \lbar = g^{L \lbar} (\nabla_L \nabla_\lbar \lbar - \nabla_{\nabla_{L}\lbar}\lbar) + g^{\lbar L} (\nabla_\lbar \nabla_L \lbar - \nabla_{\nabla_{\lbar} L}\lbar) \\ & +
	g^{\theta \theta} (\nabla_\theta \nabla_\theta \lbar - \nabla_{\nabla_\theta \partial_\theta} \lbar) + g^{\varphi \varphi} \left(\nabla_\varphi \nabla_\varphi \lbar - \nabla_{\nabla_\varphi \partial_\varphi}\lbar\right) \\ & =
	- \frac 1 2 (1-\mu)^{-1} \left(-\nabla_L\left(\frac{2M}{r^2} \lbar\right)\right) \\ & +
	r^{-2}\left(-\frac{1-\mu}{r}\nabla_\theta \partial_\theta-  \nabla_{ \frac r 2 \lbar - \frac r 2 L}\lbar\right)+ r^{-2} \sin^{-2} \theta \left(-\frac{1-\mu}{r}\nabla_\varphi \partial_\varphi-  \nabla_{\sin^2 \theta \frac r 2 \lbar - \sin^2 \theta \frac r 2 L- \sin \theta \cos \theta \partial_\varphi}\lbar\right) \\ & =
	\frac 1 2 (1-\mu)^{-1} \left(-2 \frac{2M}{r^3} \right)	(1-\mu)\lbar +\\ &
	r^{-2}\left(- \frac{1-\mu}{r} \left( \frac r 2 \lbar - \frac r 2 L \right) + \frac r 2 \frac{2M}{r^2} \lbar\right)\\ &
	+
	r^{-2} \sin^{-2} \theta \left(- \frac{1-\mu} r \left( \sin^2 \theta \frac r 2 \lbar - \sin^2 \theta \frac r 2 L- \sin \theta \cos \theta \partial_\varphi\right) - \frac r 2 \sin^2 \theta \nabla_\lbar \lbar + \sin \theta \cos \theta \nabla_\varphi \lbar\right) =\\ &
	- \frac{2M}{r^3} \lbar + r^{-2}\left(- \frac{1-\mu}{2}\lbar + \frac{1-\mu} 2 L + \frac M r \lbar \right) + r^{-2} \left(- \frac{1-\mu}{2}\lbar + \frac{1-\mu}2 L + \frac r 2 \frac{2M} {r^2} \lbar\right)\\ &
	=- \frac{2M}{r^3} \lbar + 2r^{-2}\left(- \frac{1-\mu}{2}\lbar + \frac{1-\mu} 2 L + \frac M r \lbar \right) \\ & =
	- \frac{1-\mu}{r^2} \lbar + \frac{1-\mu}{r^2} L.
	\end{aligned}
	\end{equation*}
	Similarly,
	\begin{align*}
	&\nabla^\mu \nabla_\mu L = g^{L \lbar} (\nabla_L \nabla_\lbar L - \nabla_{\nabla_{L}\lbar}L) + g^{\lbar L} (\nabla_\lbar \nabla_L L - \nabla_{\nabla_{\lbar} L}L) \\ & +
	g^{\theta \theta} (\nabla_\theta \nabla_\theta L - \nabla_{\nabla_\theta \partial_\theta} L) + g^{\varphi \varphi} \left(\nabla_\varphi \nabla_\varphi L - \nabla_{\nabla_\varphi \partial_\varphi}L\right) \\ & =
	-\frac 1 2 (1-\mu)^{-1} \nabla_\lbar\left(\frac{2M}{r^2}L \right) \\ & +
	r^{-2}\left(\frac{1-\mu}{r}\nabla_\theta \partial_\theta-  \nabla_{ \frac r 2 \lbar - \frac r 2 L}L\right)+ r^{-2} \sin^{-2} \theta \left(\frac{1-\mu}{r}\nabla_\varphi \partial_\varphi-  \nabla_{\sin^2 \theta \frac r 2 \lbar - \sin^2 \theta \frac r 2 L- \sin \theta \cos \theta \partial_\varphi}L\right) \\ & =
	-\frac 1 2 (1-\mu)^{-1} \nabla_\lbar\left(\frac{2M}{r^2}L \right) +\\ &
	r^{-2}\left( \frac{1-\mu}{r} \left( \frac r 2 \lbar - \frac r 2 L \right) + \frac r 2 \frac{2M}{r^2} L\right)\\ &
	+
	r^{-2} \sin^{-2} \theta \left( \frac{1-\mu} r \left( \sin^2 \theta \frac r 2 \lbar - \sin^2 \theta \frac r 2 L- \sin \theta \cos \theta \partial_\varphi\right) + \frac r 2 \sin^2 \theta \nabla_L L + \sin \theta \cos \theta \nabla_\varphi L\right) =\\ &
	- \frac{2M}{r^3} L + r^{-2}\left( \frac{1-\mu}{2}\lbar - \frac{1-\mu} 2 L + \frac M r L \right) + r^{-2} \left( \frac{1-\mu}{2}\lbar - \frac{1-\mu}2 L + \frac r 2 \frac{2M} {r^2} L\right)\\ &
	=- \frac{2M}{r^3} \lbar + 2r^{-2}\left( \frac{1-\mu}{2}\lbar - \frac{1-\mu} 2 L + \frac M r L \right) \\ & =
	\frac{1-\mu}{r^2} \lbar - \frac{1-\mu}{r^2} L.
	\end{align*}
	Now,
	\begin{equation*}
	(a) + (e) = (\nabla_\mu \nabla^\mu L^\nu) \lbar^\alpha \fara_{\nu \alpha} +  L^\nu (\nabla_\mu \nabla^\mu \lbar^\alpha) \fara_{\nu \alpha} = -2 \frac{1-\mu}{r^2} \fara(L, \lbar)
	\end{equation*}
	Finally, 
	\begin{equation*}
	\begin{aligned}
	&(b) + (d) =2 (\nabla_\mu L^\nu)(\nabla^\mu \lbar^\alpha) \fara_{\nu \alpha} \\ & =
	2 (\nabla_L L)^L (\nabla^L \lbar)^\lbar \fara(L,\lbar) = (1-\mu)^{-1}  \frac{4M^2}{r^4} \fara(L, \lbar).
	\end{aligned}
	\end{equation*}
	Putting everything together, we obtain
	\begin{equation*}
	\begin{aligned}
	&\square_g (\fara(L, \lbar)) =(1-\mu)^{-1} \left(\frac{2M}{r^2} - \frac 1 r + \frac{2M}{r^2} \right)( \nabla_L \fara_{L \lbar}-\nabla_\lbar \fara_{L \lbar}) \\ & +
	-2 \frac{1-\mu}{r^2} \fara(L, \lbar) +(1-\mu)^{-1}  \frac{4M^2}{r^4} \fara(L, \lbar) + 2 \rie\indices{_\alpha_\beta^\mu_\nu}\fara\indices{_\mu^\beta}L^\nu \lbar^\alpha.
	\end{aligned}
	\end{equation*}
	Let us now define
	$$f := \fara(L, \lbar).$$
	Now, we calculate
	\begin{equation*}
	\begin{aligned}
	&\square_g (f) = 2 g^{L \lbar} L \lbar f + g^{\theta \theta} \nabla_\theta \nabla_\theta f - \nabla_{\nabla_\theta \partial_\theta}f +  g^{\varphi \varphi} \nabla_\varphi \nabla_\varphi f - \nabla_{\nabla_\varphi \partial_\varphi}f \\ & =
	- (1-\mu)^{-1} L \lbar f +\slashed{\Delta} f + r^{-1} L f - r^{-1} \lbar f.
	\end{aligned}
	\end{equation*}
	Also,
	\begin{equation*}
	\begin{aligned}
	\nabla_L \fara_{L \lbar}-\nabla_\lbar \fara_{L \lbar} = L f - \lbar f - \fara(\nabla_L L, \lbar) + \fara(L, \nabla_\lbar \lbar) = L f - \lbar f - \frac{4M}{r^2} f.
	\end{aligned}
	\end{equation*}
	We also calculate:
	\begin{equation*}
	\begin{aligned}
	&2 \rie\indices{_\alpha_\beta^\mu_\nu}\fara\indices{_\mu^\beta}L^\nu \lbar^\alpha = 2 \rie_{\lbar \beta \mu L} \fara^{\mu \beta} = 2 \rie_{\lbar L \lbar L} \fara^{\lbar L} = 2 (-8M (1-\mu)^2 r^{-3}) \frac 1 4 (1-\mu)^{-2} f \\ & =
	-4 M r^{-3} f.
	\end{aligned}
	\end{equation*}
	We finally have:
	\begin{equation*}
	\begin{aligned}
	&- (1-\mu)^{-1} L \lbar f +\slashed{\Delta} f + r^{-1} L f - r^{-1} \lbar f = (1-\mu)^{-1} \left(\frac{2M}{r^2} - \frac 1 r + \frac{2M}{r^2} \right) \left( L f - \lbar f - \frac{4M}{r^2} f\right) +\\ &
	-2 \frac{1-\mu}{r^2} f +(1-\mu)^{-1}  \frac{4M^2}{r^4}f -4 M r^{-3} f. 
	\end{aligned}
	\end{equation*}
	This implies
	
	\begin{equation*}
	\begin{aligned}
	&- (1-\mu)^{-1} L \lbar f +\slashed{\Delta} f  
	=  (1-\mu)^{-1}\left(\frac{2M}{r^2} - \frac 2 r + \frac{4M}{r^2} \right) \left( L f - \lbar f\right) - (1-\mu)^{-1} \left(\frac{2M}{r^2} - \frac 1 r + \frac{2M}{r^2} \right) \frac{4M}{r^2} f +\\ &
	-2 \frac{1-\mu}{r^2} f +(1-\mu)^{-1}  \frac{4M^2}{r^4}f -4 M r^{-3} f,\\
	\\
	&- (1-\mu)^{-1} L \lbar f +\slashed{\Delta} f 
	=  -(1-\mu)^{-1}\frac 2 r \left(1- \frac{3M} r\right) \left( L f - \lbar f\right) - (1-\mu)^{-1} \left(\frac{2M}{r^2} - \frac 1 r + \frac{2M}{r^2} \right) \frac{4M}{r^2} f +\\ &
	-2 \frac{1-\mu}{r^2} f +(1-\mu)^{-1}  \frac{4M^2}{r^4}f -4 M r^{-3} f,\\
	\\
	&- (1-\mu)^{-1} L \lbar f +\slashed{\Delta} f 
	=  -(1-\mu)^{-1}\frac 2 r \left(1- \frac{3M} r\right) \left( L f - \lbar f\right) - (1-\mu)^{-1} \frac{2} r \left(1- \frac{4M} r +6 \frac{M^2}{r^2}\right)f.
	\end{aligned}
	\end{equation*}
	The last display is equivalent to the equation
	\begin{equation*}
	r^{-2}(- L \lbar( r^2(1-\mu)^{-1} f)+(1-\mu) \slashed{\Delta}((1-\mu)^{-1}f)) = 0,
	\end{equation*}
	which is the equation
	\begin{equation}\label{eq1.2}
	- r^{-2} L \lbar (r^2 \rho) +(1-\mu) \slashed{\Delta}  \rho = 0,
	\end{equation}
	having defined $\rho:= \frac 1 2 (1-\mu)^{-1} \fara(\lbar, L)$.

	By Hodge duality (see Appendix~\ref{sec:dual}), the reasoning for $\sigma$ is the same.
\end{proof}

\section{Derivation of the transport equations satisfied by \texorpdfstring{$\alpha$}{alpha} and \texorpdfstring{$\alphabar$}{alphab}}\label{sec:transpder}
Recall that $(\theta^A, \theta^B)$ is a system of local coordinates on $\mathbb{S}^2$. Let $\partial_{\theta^A}, \partial_{\theta^B}$ the associated local vector fields. To shorten notation, contraction with $\partial_{\theta^A}$ is denoted by a capital subscript $A$.

\begin{lemma}\label{lem:transportalpha}
	Assume $\fara$ satisfies the MBI system~\eqref{MBI}. Then:
	\begin{equation}\label{eq:alpha}
	\boxed{
		\snabla_\lbar(r \alpha_A) = r(1-\mu)(\snabla_A \rho + \svol_{AB}\snabla^B \sigma ) + r(1-\mu)(\hdelta)_{\mu A \kappa \lambda} \nabla^{\mu} \fara^{\kappa\lambda},
	}
	\end{equation}
	and
	\begin{equation}\label{eq:alphabar}
	\boxed{
		\snabla_L(r \alphabar_A) = r(1-\mu)(- \snabla_A \rho + \svol_{AB}\snabla^B \sigma ) +r(1-\mu)(\hdelta)_{\mu A \kappa \lambda} \nabla^{\mu} \fara^{\kappa\lambda}.
	}
	\end{equation}
\end{lemma}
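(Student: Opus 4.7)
The plan is to project the two halves of the MBI system~\eqref{eq:MBI2} onto the null frame $\{L,\lbar,\partial_{\theta^A}\}$ adapted to the spheres of constant $r$ and combine the resulting scalar relations so as to isolate a transport derivative of $\alpha$ in direction $\lbar$ and of $\alphabar$ in direction $L$. Concretely, I would first use the Bianchi identity $\nabla_{[\mu}\fara_{\nu\lambda]}=0$ on the triple $(\lbar,\partial_{\theta^A},L)$: with the identifications $\fara_{AL}=\alpha_A$, $\fara_{A\lbar}=\alphabar_A$ and $\fara(\lbar,L)=2(1-\mu)\rho$, the cyclic sum reduces, after using that $(1-\mu)$ is radial, to the algebraic relation
\[
\nabla_\lbar \alpha_A \;-\; \nabla_L \alphabar_A \;=\; 2(1-\mu)\,\snabla_A \rho.
\]

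Next, I would expand the second MBI equation $\nabla_\kappa \fara^{\kappa\nu}+\hdelta^{\mu\nu\kappa\lambda}\nabla_\mu\fara_{\kappa\lambda}=0$ with the free index lowered to $A$. Using $g^{uv}=-[2(1-\mu)]^{-1}$ and the decomposition $\fara_{AB}=\tfrac12\sigma\,\svol_{AB}$, the divergence $\nabla^\mu\fara_{\mu A}$ takes, schematically,
\[
\nabla^\mu\fara_{\mu A} \;=\; \tfrac{1}{2(1-\mu)}\!\left(\nabla_L \alphabar_A + \nabla_\lbar \alpha_A\right) + \tfrac12\svol_{BA}\snabla^B\sigma + (\text{Christoffel corrections}),
\]
where the corrections come from the mixed coefficients $\Gamma^{u,v}_{AB}$ generated by $\gbar_{AB}=r^2\gbar^{(0)}_{AB}$ and contribute precisely terms of $\alpha_A/r$ and $\alphabar_A/r$ type. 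Equating this to $-\hdelta_{\mu A\kappa\lambda}\nabla^\mu\fara^{\kappa\lambda}$ yields a second scalar identity on the same pair $(\nabla_\lbar\alpha_A,\nabla_L\alphabar_A)$.

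Adding and subtracting the two identities solves for $\nabla_\lbar \alpha_A$ and $\nabla_L \alphabar_A$ separately. The $r$-weight on the left-hand sides of~\eqref{eq:alpha} and~\eqref{eq:alphabar} then arises from the elementary identities $\lbar(r)=-(1-\mu)/2$ and $L(r)=(1-\mu)/2$, which give
\[
\snabla_\lbar(r\alpha_A) = r\,\snabla_\lbar\alpha_A - \tfrac{1-\mu}{2}\alpha_A, \qquad \snabla_L(r\alphabar_A) = r\,\snabla_L\alphabar_A + \tfrac{1-\mu}{2}\alphabar_A;
\]
the lower-order Christoffel corrections from the previous step combine exactly with these boundary contributions so as to cancel all extraneous $\alpha_A, \alphabar_A$ terms.

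The main obstacle is the bookkeeping at the level of connection coefficients: when expanding $\nabla^\mu\fara_{\mu A}$ one has to carefully track the Christoffels $\Gamma^{u,v}_{AB}$, the transverse terms $\nabla_L\lbar$ and $\nabla_\lbar L$, and the fact that $\snabla_\lbar$ differs from $\nabla_\lbar$ on sphere-tangent tensors; it is the conspiracy between these terms and the geometric weights $L(r), \lbar(r)$ that produces the clean form of the right-hand sides of~\eqref{eq:alpha}--\eqref{eq:alphabar}. Granting the identity for $\alpha$, the identity for $\alphabar$ follows by the mirror-image computation with $L$ and $\lbar$ interchanged, which accounts for the single sign flip on the $\snabla_A\rho$ term in~\eqref{eq:alphabar}.
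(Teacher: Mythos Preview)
Your approach is essentially the same as the paper's: project the Bianchi identity onto the triple $(\lbar,\partial_{\theta^A},L)$ to get one linear relation between $\snabla_\lbar\alpha_A$ and $\snabla_L\alphabar_A$, project the divergence-type MBI equation onto the angular index $A$ to get a second relation, then add and subtract. The paper carries out the Christoffel bookkeeping you flag as the ``main obstacle'' by using directly $\nabla_L\partial_{\theta^A}=\tfrac{1-\mu}{r}\partial_{\theta^A}$ and $\nabla_\lbar\partial_{\theta^A}=-\tfrac{1-\mu}{r}\partial_{\theta^A}$, which packages the lower-order terms into $\tfrac{1}{r}\snabla_L(r\alphabar_A)$ and $\tfrac{1}{r}\snabla_\lbar(r\alpha_A)$ from the outset rather than inserting the $r$-weight at the end; but this is a cosmetic difference. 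A couple of your schematic coefficients (the sign in front of the $\tfrac{1}{2(1-\mu)}$ factor coming from $g^{uv}$, and the normalisation of $\sigma$ via $\fara_{AB}$) would need to be tracked carefully in a full write-up, but the structure of the argument is correct and matches the paper.
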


\begin{proof}
	By the MBI system~\eqref{MBI}, we have
	\begin{equation*}
	\nabla_{A} \fara_{\lbar  L} + \nabla_{L} \fara_{A \lbar} + \nabla_{\lbar} \fara_{L  A} = 0.
	\end{equation*}
	Now, $[\partial_{\theta^A}, L] = [\partial_{\theta^A}, \lbar] = 0$.
	Since $\nabla_L \partial_{\theta^A} = \frac {1-\mu} r \partial_{\theta^A}$, and $\nabla_\lbar \partial_{\theta^A} = - \frac{ 1-\mu} r \partial_{\theta^A}$, the previous display implies
	\begin{equation*}
	2(1-\mu)\snabla_{A} \rho + \frac{1-\mu} r \fara_{A \lbar} - \frac{1-\mu}{r} \fara_{L A} + \snabla_L \alphabar_A - \snabla_\lbar \alpha_A= 0
	\end{equation*}
	which in turn implies
	\begin{equation}\label{eq:alpha1}
	2(1-\mu)\snabla_{A} \rho + \frac 1 r \snabla_L(r \alphabar_A) - \frac 1 r \snabla_\lbar(r \alpha_A)= 0.
	\end{equation}
	Similarly, from the second equation of the MBI system (\ref{MBI}), we deduce
	\begin{equation*}
	\nabla^L \fara_{AL} + \nabla^\lbar \fara_{A\lbar} + \nabla^B \fara_{AB} = (\hdelta)_{\mu A \kappa \lambda} \nabla^{\mu} \fara^{\kappa\lambda}.
	\end{equation*}
	This implies
	\begin{equation*}
	- \frac 1 2 (1-\mu)^{-1} \left(\frac 1 r \snabla_\lbar(r \alpha_A) + \frac 1 r \snabla_L(r \alphabar_A)\right) + \svol_{AB} \snabla^B \sigma = (\hdelta)_{\mu A \kappa \lambda} \nabla^{\mu} \fara^{\kappa\lambda},
	\end{equation*}
	which implies
	\begin{equation}\label{eq:alpha2}
	-\left(\frac 1 r \snabla_\lbar(r \alpha_A) +  \frac 1 r \snabla_L(r \alphabar_A)\right) + 2 (1-\mu)\svol_{AB} \snabla^B \sigma = 2 (1-\mu)(\hdelta)_{\mu A \kappa \lambda} \nabla^{\mu} \fara^{\kappa\lambda}.
	\end{equation}
	Summing now~\eqref{eq:alpha1} and~\eqref{eq:alpha2}, we obtain
	\begin{equation*}
	\frac 1 r \snabla_\lbar(r \alpha_A) = (1-\mu)(\snabla_A \rho + \svol_{AB}\snabla^B \sigma ) + (1-\mu)(\hdelta)_{\mu A \kappa )\lambda} \nabla^{\mu} \fara^{\kappa\lambda}.
	\end{equation*}
	Which is,
	\begin{equation}
	\boxed{
		\snabla_\lbar(r \alpha_A) = r(1-\mu)(\snabla_A \rho + \svol_{AB}\snabla^B \sigma ) + r(1-\mu)(\hdelta)_{\mu A \kappa \lambda} \nabla^{\mu} \fara^{\kappa\lambda}.
	}
	\end{equation}
	On the other hand, subtracting~\eqref{eq:alpha2} from~\eqref{eq:alpha1}, we obtain:
	\begin{equation}
	\boxed{
		\snabla_L(r \alphabar_A) = r(1-\mu)(- \snabla_A \rho + \svol_{AB}\snabla^B \sigma ) +r(1-\mu)(\hdelta)_{\mu A \kappa \lambda} \nabla^{\mu} \fara^{\kappa\lambda}.
	}
	\end{equation}
\end{proof}

\section{Calculations with the dual tensor field}\label{sec:dual}
Recall the definition of the null components:
\begin{equation}
	\begin{aligned}
		\alpha_\mu :&= \gbar\indices{_\mu^\nu} F_{\nu\lambda}L^\lambda, \\
		\alphabar_\mu :&= \gbar\indices{_\mu^\nu} F_{\nu\lambda}\lbar^\lambda \\
		\rho :&= \frac 1 2 \left(1-\frac{2M}{r} \right)^{-1}F(\lbar,L), \\
		\sigma :&= \frac 1 2 \svol^{CD} F_{CD}.
	\end{aligned}
\end{equation}

\begin{lemma}\label{lem:dualnull}
The components of $\farad$ are
\begin{align}
	^\odot\alphabar = - \alphabar^B \svol_{BA}, \qquad
	^\odot\alpha = \alpha^B \svol_{BA}, \qquad ^\odot\rho = \sigma, \qquad ^\odot \sigma = - \rho.
\end{align}
\end{lemma}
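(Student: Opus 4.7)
The plan is to compute each of the four claims directly from the pointwise definition $\farad_{\mu\nu} = \frac12 \varepsilon_{\mu\nu\alpha\beta}\fara^{\alpha\beta}$, expanding in the null frame $\{L,\lbar,\partial_{\theta^A}\}$ and then reassembling using the inverse metric on that frame. The only ingredient that is not purely algebraic is the explicit value of the spacetime volume form on the basis vectors, so I would first record that. Since the metric reads $g_{uv} = g_{vu} = -2(1-\mu)$ and $g_{AB} = \gbar_{AB}$, one computes $\sqrt{-\det g} = 2(1-\mu)\sqrt{\det \gbar}$, which, combined with $\svol_{AB}$ being the induced volume form on the spheres of constant $r$, yields the key identity
\[
\varepsilon_{\lbar L AB} \;=\; 2(1-\mu)\,\svol_{AB}.
\]
All other nontrivial components of $\varepsilon$ are obtained from this one by transpositions.

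With this identity in hand, each of the four computations is a short bookkeeping exercise. For instance, for $^\odot \rho$, only the angular contractions survive in $\farad_{\lbar L} = \tfrac12 \varepsilon_{\lbar L\alpha\beta}\fara^{\alpha\beta}$, giving $\farad_{\lbar L} = (1-\mu)\svol_{AB}\fara^{AB} = 2(1-\mu)\sigma$ after using $\fara^{AB} = \sigma\,\svol^{AB}$ (antisymmetry in two dimensions); multiplying by $\tfrac12(1-\mu)^{-1}$ yields $\sigma$. For $^\odot \sigma$, similarly, $\farad_{AB} = \varepsilon_{ABL\lbar}\fara^{L\lbar}$, and since $\varepsilon_{ABL\lbar} = -\varepsilon_{\lbar L AB}$ and $\fara^{L\lbar} = (g^{L\lbar})^2\fara_{\lbar L} = \rho/(2(1-\mu))$, one finds $\farad_{AB} = -\rho\,\svol_{AB}$, so $^\odot\sigma = -\rho$.

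For the extreme components the computation is the same in spirit: $\farad_{AL} = \sum_B \varepsilon_{AL\lbar B}\fara^{\lbar B}$, and using $\varepsilon_{AL\lbar B} = -2(1-\mu)\svol_{AB}$ (two transpositions from the reference $\varepsilon_{\lbar L AB}$) together with $\fara^{\lbar B} = -\tfrac{1}{2(1-\mu)}\gbar^{BC}\fara_{LC} = \tfrac{1}{2(1-\mu)}\alpha^B$ (since $\alpha_A = \fara_{AL}$), one obtains $^\odot\alpha_A = \farad_{AL} = -\svol_{AB}\alpha^B = \alpha^B\svol_{BA}$. The computation of $^\odot\alphabar$ is entirely analogous, starting from $\farad_{A\lbar} = \sum_B \varepsilon_{A\lbar L B}\fara^{LB}$ and using $\fara^{LB} = \alphabar^B/(2(1-\mu))$.

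The only place where one could plausibly slip is in the sign bookkeeping when permuting indices in $\varepsilon$, since there are several intermediate swaps; so the main (minor) obstacle is simply to be careful that each of the four permutations relating $\varepsilon_{\lbar L AB}$ to $\varepsilon_{AB L\lbar}$, $\varepsilon_{AL\lbar B}$, and $\varepsilon_{A\lbar LB}$ yields the correct sign. Once those four sign computations are carried out once and for all, all four identities of the lemma fall out mechanically.
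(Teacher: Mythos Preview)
Your proposal is correct and follows essentially the same approach as the paper: both compute each null component of $\farad$ directly by expanding $\farad_{\mu\nu} = \tfrac12\varepsilon_{\mu\nu\alpha\beta}\fara^{\alpha\beta}$ in the null frame, using the explicit identity $\varepsilon_{\lbar L AB} = 2(1-\mu)\svol_{AB}$ (the paper writes it in coordinates as $2(1-\mu)r^2\sin\theta$). The only slip is the parenthetical ``two transpositions'' for $\varepsilon_{AL\lbar B}$ --- it is in fact one transposition from $\varepsilon_{\lbar L AB}$ --- but the sign you actually use is correct, so the computation goes through.
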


\begin{proof}
	For the first two, it suffices to calculate (recall that indices $A$ and $B$ indicate contraction with resp. $\partial_{\theta^A}$, $\partial_{\theta^B}$)
	\begin{equation}
		^\odot\alpha(\partial_{\theta^A}) = \frac 1 2 \gbar\indices{_A^\nu} \varepsilon_{\alpha\beta\nu L} \fara^{\alpha\beta} = \frac 1 2 \varepsilon_{\alpha \beta A L} \fara^{\alpha \beta} = \varepsilon_{B \lbar A L} \fara^{B\lbar} = - r^2 \sin \theta 2 (1-\mu) \fara^{B\lbar}= \alpha^B \svol_{BA}.
	\end{equation}
	Similarly for the other:
	\begin{equation}
		^\odot\alphabar(\partial_{\theta^A}) = \frac 1 2 \gbar\indices{_A^\nu} \varepsilon_{\alpha\beta\nu \lbar} \fara^{\alpha\beta} = \frac 1 2 \varepsilon_{\alpha \beta A \lbar} \fara^{\alpha \beta} = \varepsilon_{B L A \lbar} \fara^{BL} = r^2 \sin \theta 2 (1-\mu) \fara^{BL}= -\alphabar^B \svol_{BA}.
	\end{equation}
	Also,
	\begin{equation}
		^\odot\rho = \frac 1 2 (1-\mu)^{-1} \farad(\lbar, L)= \frac 1 2 (1-\mu)^{-1} \frac 1 2 \varepsilon_{\alpha \beta \lbar L} \fara^{\alpha\beta}\\
		= \frac 1 4 (1-\mu)^{-1} 2 (1-\mu) \svol_{AB} \fara^{AB} = \sigma.
	\end{equation}
Finally,
	\begin{equation}
		^\odot\sigma = \frac 1 2 \varepsilon^{AB} \farad_{AB} = \frac 1 4 \svol^{AB} \varepsilon_{\alpha \beta AB} \fara^{\alpha\beta}\\
		= 2(1-\mu) (-1/2)^2 (1-\mu)^{-2} F_{L \lbar} = - \rho.
	\end{equation}
\end{proof}

\section{The nonzero components of the Riemann tensor}\label{sec:riemann}

We define the Riemann tensor $\rie$ as as $4$-covariant tensor field such that, for any four vectorfields $A,B,C,D$ we have the following:
\begin{equation}
\rie (A,B,C,D) = g(A, \nabla_C \nabla_D B - \nabla_D \nabla_C B - \nabla_{[C,D]}B).
\end{equation}
Then, we have the commutation relations (valid for two-forms $F$):
\begin{equation*}
\nabla_\mu \nabla_\nu F_{\alpha \beta} - \nabla_\nu  \nabla_\mu F_{\alpha \beta} = - F_{\alpha \lambda} \rie\indices{^\lambda_\beta_\mu_\nu} - F_{\lambda \beta} \rie\indices{^\lambda _{\alpha \mu \nu}}
\end{equation*}

We notice that, in the $(u,v,\theta, \varphi)$ coordinates on Schwarzschild, the only nonzero components of the Riemann tensor are:
\begin{align*}
\rie_{uvuv} &= - 8M (1-\mu)^2r^{-3}, \qquad
\rie_{\theta u \theta v} &= \frac{2M}{r}(1-\mu), \\
\rie_{\varphi u \varphi v} &= \frac{2M}{r} (1-\mu) \sin^2 \theta, \qquad
\rie_{\theta \varphi \theta \varphi} &= 2M r \sin^2 \theta.
\end{align*}

\section{Poincar\'e and Sobolev lemmas}\label{sec:poincasob}
We collect here a few useful results used in the paper. We begin by a standard Poincar\'e estimate on the sphere $\mathbb{S}^2$:
\begin{lemma}[Poincar\'e inequality for functions on $\mathbb{S}^2$]\label{lem:poinca}
	Let $F$ be a smooth function on $\mathbb{S}^2$ with vanishing integral average on $\mathbb{S}^2$. Then, we have the inequality
	\begin{equation*}
	\int_{\mathbb{S}^2} |d F|^2 \desphere \geq 2 \int_{\mathbb{S}^2} |F|^2 \desphere.
	\end{equation*}
\end{lemma}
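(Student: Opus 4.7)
The plan is to use the spherical harmonics decomposition of $F$. Since $F$ is smooth on $\mathbb{S}^2$, it admits an expansion
\begin{equation*}
F = \sum_{\ell \geq 0} \sum_{m = -\ell}^{\ell} a_{\ell m} Y_\ell^m,
\end{equation*}
where $\{Y_\ell^m\}$ is the standard orthonormal basis of spherical harmonics, and each $Y_\ell^m$ is an eigenfunction of the (nonnegative) Laplacian $-\sdelta_{\mathbb{S}^2}$ with eigenvalue $\ell(\ell+1)$. The hypothesis that $F$ has vanishing integral average on $\mathbb{S}^2$ translates to $a_{00} = 0$, since $Y_0^0$ is a nonzero constant and the higher spherical harmonics all integrate to zero by orthogonality.

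By the orthonormality of $\{Y_\ell^m\}$ with respect to $\desphere$, I have
\begin{equation*}
\int_{\mathbb{S}^2} |F|^2 \desphere = \sum_{\ell \geq 1} \sum_{m = -\ell}^{\ell} |a_{\ell m}|^2,
\end{equation*}
the sum starting from $\ell=1$ because $a_{00} = 0$. For the derivative term, integration by parts on the closed manifold $\mathbb{S}^2$ gives
\begin{equation*}
\int_{\mathbb{S}^2} |dF|^2 \desphere = - \int_{\mathbb{S}^2} F \cdot \sdelta_{\mathbb{S}^2} F \desphere = \sum_{\ell \geq 1} \sum_{m = -\ell}^{\ell} \ell(\ell+1) |a_{\ell m}|^2.
\end{equation*}

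Since $\ell(\ell+1) \geq 2$ for all $\ell \geq 1$, with equality exactly at $\ell = 1$, the two expansions above immediately yield the claimed inequality
\begin{equation*}
\int_{\mathbb{S}^2} |dF|^2 \desphere \geq 2 \int_{\mathbb{S}^2} |F|^2 \desphere.
\end{equation*}
There is no real obstacle here: this is the textbook statement that the sharp Poincaré constant on $\mathbb{S}^2$ equals the first nonzero eigenvalue of the Laplacian, and the extremizers are the $\ell = 1$ spherical harmonics (i.e., the restrictions of linear functions on $\mathbb{R}^3$). An alternative route, avoiding explicit use of the $Y_\ell^m$ basis, would be to invoke the min-max characterization of the first nonzero Dirichlet eigenvalue of $-\sdelta_{\mathbb{S}^2}$ on the $L^2$-orthogonal complement of the constants, but the spherical-harmonic argument above is the most transparent.
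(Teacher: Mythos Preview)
Your proof is correct. The paper does not actually supply a proof of this lemma; it is stated in the appendix as a standard Poincar\'e estimate and left unproved. Your spherical-harmonics argument is exactly the canonical justification, identifying the constant $2$ as the first nonzero eigenvalue $\ell(\ell+1)|_{\ell=1}$ of $-\sdelta_{\mathbb{S}^2}$.
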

We also recall the following standard result.
\begin{lemma}[Sobolev estimate for scalar functions on the sphere]\label{lem:sobsphere}
Let $(\mathbb{S^2}, g_{\mathbb{S}^2})$ be the two-sphere with the standard metric, let $\nabla$ be the associated Levi-Civita connection, and let $f$ be a smooth function $f: \mathbb{S}^2 \to \R$. Let $\bar f := \frac 1 {4 \pi} \int_{\mathbb{S}^2} f\desphere$ be the spherical average of $f$. There exists a universal constant $C$ such that
\begin{equation}
\sup_{\mathbb{S}^2}|f - \bar f|^2 \leq C \int_{\mathbb{S}^2} |\nabla \nabla f|^2\desphere.
\end{equation}
\end{lemma}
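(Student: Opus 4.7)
The plan is to prove the $L^\infty$ estimate by combining a Sobolev embedding with an elliptic-type Poincar\'e inequality. Since $\dim \mathbb{S}^2 = 2$, the standard Sobolev embedding gives $H^2(\mathbb{S}^2) \hookrightarrow L^\infty(\mathbb{S}^2)$ (because $2 \cdot 2 > 2$). Applied to $f - \bar f$, this yields
$$
\sup_{\mathbb{S}^2} |f - \bar f|^2 \leq C \bigl( \|f - \bar f\|_{L^2}^2 + \|\nabla f\|_{L^2}^2 + \|\nabla \nabla f\|_{L^2}^2\bigr).
$$
So everything reduces to absorbing the lower-order terms on the right-hand side into $\|\nabla \nabla f\|_{L^2}^2$.

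First I would establish the Sobolev embedding itself by a standard argument: cover $\mathbb{S}^2$ by finitely many coordinate charts, take a subordinate partition of unity $\{\chi_i\}$, and pull each $\chi_i (f - \bar f)$ back to a compactly supported $W^{2,2}$ function on $\mathbb{R}^2$; then the Euclidean embedding $W^{2,2}(\mathbb{R}^2) \hookrightarrow C^0_b(\mathbb{R}^2)$ gives the pointwise bound. Equivalently — and perhaps more cleanly — one can decompose $f - \bar f = \sum_{\ell \geq 1} \sum_{|m| \leq \ell} c_{\ell,m} Y_{\ell,m}$ in spherical harmonics, use $\|Y_{\ell,m}\|_{L^\infty} \lesssim \ell^{1/2}$, and sum after extracting weights $\ell(\ell+1)$ coming from $-\slashed{\Delta} Y_{\ell,m} = \ell(\ell+1)Y_{\ell,m}$.

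Next, I would handle the Poincar\'e reduction in two steps. The zeroth-order Poincar\'e inequality on $\mathbb{S}^2$ (Lemma~\ref{lem:poinca} above, applied to $f - \bar f$) gives $\|f - \bar f\|_{L^2}^2 \leq \tfrac{1}{2} \|\nabla f\|_{L^2}^2$. For the first-order term, I would integrate by parts:
$$
\int_{\mathbb{S}^2} |\nabla f|^2 \desphere = -\int_{\mathbb{S}^2} f \,\slashed{\Delta} f \desphere = -\int_{\mathbb{S}^2} (f - \bar f) \, \slashed{\Delta} f \desphere,
$$
where the mean is subtracted freely because $\int_{\mathbb{S}^2} \slashed{\Delta} f \desphere = 0$. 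Cauchy--Schwarz combined with the zeroth-order Poincar\'e bound gives $\|\nabla f\|_{L^2}^2 \leq C \|\nabla f\|_{L^2} \|\slashed{\Delta} f\|_{L^2}$, and hence $\|\nabla f\|_{L^2} \leq C \|\slashed{\Delta} f\|_{L^2} \leq C \|\nabla \nabla f\|_{L^2}$. Chaining these two inequalities then absorbs the lower-order norms into $\|\nabla \nabla f\|_{L^2}^2$, and combined with the Sobolev embedding yields the claim.

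The main obstacle is really just bookkeeping: one must verify that the Sobolev embedding constant is genuinely universal (independent of $f$), which is where the compactness of $\mathbb{S}^2$ and the choice of a fixed finite atlas are essential. No delicate analysis is needed — everything rests on standard Euclidean Sobolev theory plus the fact that the first nonzero eigenvalue of $-\slashed{\Delta}$ on $\mathbb{S}^2$ is $2$, which is precisely what makes the Poincar\'e step quantitative.
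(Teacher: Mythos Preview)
Your argument is correct and complete. The paper does not actually prove this lemma: it merely states it as a ``standard result'' and recalls it without proof, so your Sobolev-embedding-plus-Poincar\'e argument is a perfectly good justification of what the paper takes for granted.
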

The following Lemma is also standard.
\begin{lemma}[1-d trivial inequality]
Let $[a,b] \subset \R$, with $a < b$. Let $f:[a,b] \to \R$ a smooth function with zero integral on $[a,b]$. Then there holds:
\begin{equation}
\sup_{[a,b]}|f| \leq C \int_{[a,b]} |f'(x)| \de x
\end{equation}
\end{lemma}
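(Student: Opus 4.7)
The plan is to exploit the vanishing of the integral of $f$ to locate a point where $f$ itself vanishes, and then to recover $f$ at any other point by integrating its derivative from that point. More precisely, since $f$ is continuous on $[a,b]$ and $\int_a^b f(x)\,\de x = 0$, the mean value theorem for integrals produces a point $c \in [a,b]$ such that $f(c) = 0$ (otherwise $f$ would have a definite sign on $[a,b]$ and the integral would be nonzero).

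With such a $c$ in hand, the fundamental theorem of calculus gives, for every $x \in [a,b]$,
\begin{equation*}
f(x) = f(c) + \int_c^x f'(s)\,\de s = \int_c^x f'(s)\,\de s,
\end{equation*}
so taking absolute values yields
\begin{equation*}
|f(x)| \leq \left| \int_c^x f'(s)\,\de s \right| \leq \int_a^b |f'(s)|\,\de s.
\end{equation*}
Since the right hand side is independent of $x$, taking the supremum over $x \in [a,b]$ furnishes the claim with constant $C = 1$.

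There is no real obstacle here; the only subtle point is the invocation of the intermediate value theorem to conclude $f(c) = 0$, which is immediate once one notes that a continuous function with vanishing integral on a nondegenerate interval cannot be strictly positive or strictly negative throughout.
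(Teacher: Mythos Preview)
Your proof is correct; the paper itself does not supply a proof for this lemma, simply labelling it as standard. Your argument via the mean value theorem for integrals (to locate a zero of $f$) followed by the fundamental theorem of calculus is exactly the standard one, and it even shows one may take $C = 1$.
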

We furthermore recall the following Lemma:
\begin{lemma}[Sobolev inequality involving only certain derivatives]\label{lem:sobforrho}
Let $f: (\Sigma_{t_1^*} = \{t^* = t_1^*\}) \to \R$. Let $R > r_c > 2M$. Let $\bar f$ be again the mean of $f$ over the spheres:
\begin{equation}
\bar f := \frac 1 {4 \pi} \int_{\mathbb{S}^2} f(\omega) \desphere(\omega).
\end{equation}
There exists a constant $C_{r_c, R}$ such that
\begin{equation}\label{eq:sobosigma}
\sup_{r_1 \in [r_c, R], \omega \in \mathbb{S}^2}|f(t_1^*, r_1, \omega)-\bar f (t_1^*, r_1)|^2  \leq C \int_{r_c}^R \int_{\mathbb{S}^2}(|\snabla \snabla f|^2 + |\snabla_{\partial_{r_1}} \snabla \snabla f|^2) \desphere \de r_1
\end{equation}
\end{lemma}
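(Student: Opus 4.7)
Set $h(r_1,\omega) := f(t_1^*, r_1, \omega) - \bar f(t_1^*, r_1)$. By construction $h$ has vanishing spherical average at every fixed $r_1$, and moreover $\partial_{r_1} h$ also has vanishing spherical average because differentiation in $r_1$ commutes with integration over $\mathbb{S}^2$. The plan is to obtain the pointwise estimate by combining a two-dimensional Sobolev embedding on each sphere of constant $r_1$ with a one-dimensional Sobolev embedding in the radial direction, using the product structure $[r_c,R] \times \mathbb{S}^2$.

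First, fix $\omega \in \mathbb{S}^2$ and apply the one-dimensional Sobolev inequality to the function $r_1 \mapsto h(r_1,\omega)^2$ on the interval $[r_c,R]$. This yields
\begin{equation*}
\sup_{r_1 \in [r_c,R]} h(r_1,\omega)^2 \;\leq\; C \int_{r_c}^R \Bigl( h(r_1,\omega)^2 + |\partial_{r_1} h(r_1,\omega)|^2 \Bigr) \de r_1.
\end{equation*}
Taking the supremum over $\omega \in \mathbb{S}^2$ and exchanging supremum and integral (using $\sup_\omega \int \leq \int \sup_\omega$) gives
\begin{equation*}
\sup_{r_1,\omega} h^2 \;\leq\; C \int_{r_c}^R \sup_{\omega \in \mathbb{S}^2}\Bigl( h^2 + |\partial_{r_1} h|^2\Bigr)(r_1,\omega) \de r_1.
\end{equation*}

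Next I apply the sphere Sobolev inequality (Lemma~\ref{lem:sobsphere}) at each fixed $r_1$ to both $h(r_1,\cdot)$ and $\partial_{r_1} h(r_1,\cdot)$, which have vanishing mean. This produces
\begin{equation*}
\sup_\omega h^2(r_1,\cdot) \leq C \int_{\mathbb{S}^2} |\snabla\snabla h|^2 \desphere, \qquad \sup_\omega |\partial_{r_1} h|^2(r_1,\cdot) \leq C \int_{\mathbb{S}^2} |\snabla\snabla \partial_{r_1} h|^2 \desphere.
\end{equation*}
Because $\bar f$ is a function of $r_1$ alone, the identities $\snabla\snabla h = \snabla\snabla f$ and $\snabla\snabla \partial_{r_1} h = \snabla\snabla \partial_{r_1} f$ hold. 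It remains to convert $|\snabla\snabla \partial_{r_1} f|$ into $|\snabla_{\partial_{r_1}} \snabla\snabla f|$ up to lower order terms absorbable in the $|\snabla\snabla f|^2$ term on the right-hand side.

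This last conversion is the technical heart of the proof and the step where one must be careful. The commutator $[\partial_{r_1}, \snabla]$ acting on a spherical tensor produces Christoffel factors of the form $r_1^{-1}$ (these arise from the $r_1$-dependence of the induced sphere metric $r_1^2 \gbar_{\mathbb{S}^2}$), which are uniformly bounded on $[r_c,R]$ since $r_c > 2M > 0$. Consequently, iterating the commutation identity $\partial_{r_1}(\snabla T) = \snabla(\partial_{r_1} T) + [\partial_{r_1}, \snabla] T$ on $T = \snabla f$ and on $T = f$, one obtains
\begin{equation*}
|\snabla\snabla \partial_{r_1} f|^2 \;\leq\; C_{r_c,R}\bigl( |\snabla_{\partial_{r_1}} \snabla\snabla f|^2 + |\snabla\snabla f|^2 + |\snabla f|^2 \bigr).
\end{equation*}
Finally, an integration by parts on $\mathbb{S}^2$ gives $\int_{\mathbb{S}^2} |\snabla f|^2 \desphere = \int_{\mathbb{S}^2} |\snabla h|^2 \desphere \leq C \int_{\mathbb{S}^2} |\snabla\snabla h|^2 \desphere$ (elliptic estimate using that $h$ has zero spherical mean), which absorbs the lowest-order remainder into the $|\snabla\snabla f|^2$ term. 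Putting all the pieces together yields the claimed bound \eqref{eq:sobosigma}. The main obstacle is just bookkeeping the commutator terms between $\partial_{r_1}$ and the projected connection $\snabla$ and verifying that the resulting lower order contributions can indeed be absorbed.
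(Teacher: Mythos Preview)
Your proposal is correct but proceeds in a different order from the paper's proof. You first apply the one-dimensional Sobolev inequality pointwise in $\omega$, then the spherical Sobolev estimate (Lemma~\ref{lem:sobsphere}) to both $h$ and $\partial_{r_1} h$; this forces you to confront the commutator $[\partial_{r_1}, \snabla\snabla]$ and then to absorb the resulting $|\snabla f|^2$ term via a Poincar\'e/elliptic argument on the sphere. The paper reverses the order: it first applies the spherical Sobolev estimate at each $r_1$, reducing matters to controlling $G(r_1) := \int_{\mathbb{S}^2} |\snabla\snabla f|^2 \desphere$ uniformly in $r_1$, and then applies the one-dimensional inequality to the zero-mean function $F(r_1) := G(r_1) - \frac{1}{R-r_c}\int_{r_c}^R G$. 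The advantage of the paper's route is that the $r_1$-derivative now falls directly on $\snabla\snabla f$ inside the sphere integral (using $\snabla_L \gbar = \snabla_\lbar \gbar = 0$, so the metric factors in $|\snabla\snabla f|^2$ are transparent to $\partial_{r_1}$), and a single Cauchy--Schwarz gives $|\partial_{r_1} G| \lesssim \int_{\mathbb{S}^2}(|\snabla\snabla f|^2 + |\snabla_{\partial_{r_1}}\snabla\snabla f|^2)\desphere$ with no commutator bookkeeping and no lower-order absorption step. Your approach is the more familiar product-Sobolev argument and is perfectly valid; the paper's ordering simply reaches the conclusion in fewer moves.
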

\begin{proof}
Let 
\begin{equation*}
F(r_1) := \int_{\mathbb{S}^2}|(\snabla \snabla f)(t^*_1, r_1, \omega)|^2 \desphere(\omega)- \frac 1 {R-r_c} \int_{r_c}^R \int_{\mathbb{S}^2}|(\snabla \snabla f)(t^*_1, r_1, \omega)|^2 \desphere(\omega)\de r_1.
\end{equation*}
The preceding lemma then shows
\begin{equation*}
\sup_{r_1 \in [r_c, R]} |F(r_1)| \leq \int_{r_c}^R |\partial_{r_1}F(s)| \de s.
\end{equation*}
Since $\snabla_L \gbar = \snabla_\lbar \gbar = 0$, $f$ is smooth, and the Cauchy-Schwarz inequality, we have
$$
\left|\partial_{r_1} \int_{\mathbb{S}^2} \gbar^{AB} \gbar^{CD} (\snabla_A \snabla_C f) (\snabla_B \snabla_D f)\desphere\right| \lesssim \int_{\mathbb{S}^2} (|\snabla \snabla f|^2+|\snabla_{\partial_{r_1}} \snabla \snabla f|^2)\desphere.
$$
Combining this with the previous Sobolev lemma on spheres~\ref{lem:sobsphere}, we have the claim.
\end{proof}

\section{The theorem of Cauchy--Kowalevskaya} \label{sec:analytics}

Let $B$ be an open connected set in $\R^n$, and $f: B \to \R$ be analytic. We say that $f$ belongs to the class $\mathscr{A}_{M, c_0}(B)$ for some $M, c_0 >0$ if, on $B$, we have, for all multi-indices $\alpha$,
\begin{equation}
|D^\alpha f| \leq M {c_0}^{-k},
\end{equation}
where $|\alpha| = k$ and $D$ denotes partial differentiation.

We state here the form of the Cauchy--Kowalevskaya Theorem useful for our purposes.

\begin{theorem}[Cauchy--Kowalevskaya Theorem]\label{thm:ck}
	Let $B \subset \R^n$ be an Euclidean ball. Suppose that $W$ is the graph of an analytic function $\phi$ over $B$:
	\begin{equation*}
	W = \{x \in \R^{n+1}, x = (\phi(x_1, \ldots, x_n), x_1, \ldots, x_n), \text{ for } (x_1, \ldots, x_n) \in B\}.
	\end{equation*}
	Suppose furthermore that we have a quasilinear system of PDEs in Cauchy--Kowalevskaya form:
	\begin{equation}\label{eq:ckstat}
	\partial_{x^0} U = \sum_{i=1}^n \mathcal{M}_i(x, U) \partial_{x^i} U + F(x, U),
	\end{equation}
	where the unknown $U$ is a vector in $\R^k$, $x = (x^1, \ldots, x^n)$, and furthermore $\mathcal{M}_i : \R^{n+k} \to \R^{k^2}$ are matrices with analytic coefficients in the variables $x, U$. Also, $F: \R^{n+k} \to \R^k$ is an analytic function of all its arguments. We impose analytic initial data $U_0$ for $U$ on $W$.
	
	Then, for every point $q \in B$, let $p := (\phi(q), q) \in \R^{n+1}$. In these conditions, there is a radius $r_p > 0$ such that the system of equations
	\begin{equation}
	\left\{
	\begin{array}{ll}
	\partial_t U = \sum_{i=1}^n \mathcal{M}_i(x, U) \partial_{x^i} U + F(x, U) & \text{ on } B(p,r_p) \\
	U = U_0 & \text{ on } W \cap B(p,r_p).
	\end{array}
	\right.
	\end{equation}
	admits an analytic solution $U: B(p,r_p) \to \R^k$. Here, $B(p, r_p)$ denotes the $(n+1)$-dimensional Euclidean ball of center $p$ and radius $r_p$.
	
	Furthermore, let $c_0, M$ be positive numbers. The radius $r_p$ depends only on $c_0$, $M$ if all the following requirements are satisfied:
	\begin{equation}\label{eq:tounift}
	\begin{aligned}
	&U_0 \in \mathscr{A}_{M,c_0}(B), \\
	&\mathcal{M}_i, F \in \mathscr{A}_{M,c_0}(B \times \R^{2n}) \text{ for } i = 1, \ldots, n,\\
	&\phi \in \mathscr{A}_{M,c_0}(B), \\
	&r_p \leq d(q, \partial B).
	\end{aligned}
	\end{equation}
	Here, $d$ denotes the Euclidean distance in $\R^n$, and $\text{ diam}$ denotes the Euclidean diameter.
\end{theorem}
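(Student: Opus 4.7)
My plan is to reduce the statement to the classical flat-data Cauchy--Kowalevskaya theorem and then inspect the standard majorant argument to extract the uniform dependence of $r_p$ on $(M,c_0)$.

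\emph{Step 1: flattening the initial surface.} I would introduce the change of coordinates $y^0 = x^0 - \phi(x^1,\ldots,x^n)$, $y^i = x^i$ for $i=1,\ldots,n$, which maps $W\cap B(p,r_p)$ into $\{y^0 = 0\}$. Under this change the system \eqref{eq:ckstat} transforms into
\begin{equation*}
\partial_{y^0} U \;=\; \sum_{i=1}^n \widetilde{\mathcal M}_i(y,U)\,\partial_{y^i} U + \widetilde F(y,U),
\end{equation*}
with $\widetilde{\mathcal M}_i = (I + \sum_j(\partial_{x^i}\phi)\mathcal M_j)^{-1}\mathcal M_i$ and an analogous expression for $\widetilde F$. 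Since $\phi\in\mathscr A_{M,c_0}(B)$ and $\mathcal M_i,F\in\mathscr A_{M,c_0}(B\times\R^{2n})$, and since inversion and composition of elements of $\mathscr A_{M,c_0}$ remain in a class $\mathscr A_{M',c_0'}$ with $(M',c_0')$ depending only on $(M,c_0)$, the transformed system is again in CK form, with coefficients whose analytic bounds depend only on $(M,c_0)$. The transformed initial data $U_0$ (now a function of $(y^1,\ldots,y^n)$) lies in $\mathscr A_{M',c_0'}$ as well.

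\emph{Step 2: reduction to a semilinear first-order system.} The quasilinearity is handled by introducing the new unknowns $V^a_i := \partial_{y^i}U^a$ for $i=1,\ldots,n$ and $a=1,\ldots,k$, together with the equations obtained by differentiating the system in each $y^i$. This produces a strictly larger but semilinear (in fact polynomial in the derivatives) system of the form $\partial_{y^0}\mathbf W = \mathcal A(y,\mathbf W)\cdot\partial_{y^*}\mathbf W + \mathcal B(y,\mathbf W)$ where the coefficients $\mathcal A,\mathcal B$ still have analytic bounds determined by $(M,c_0)$. Initial data for $\mathbf W$ on $\{y^0=0\}$ are obtained from $U_0$ and its tangential derivatives, and are again in some class $\mathscr A_{M'',c_0''}$.

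\emph{Step 3: majorant method.} For the semilinear system above I would construct, in the classical way (see e.g.~John), a scalar majorant equation whose right-hand side dominates term-by-term the Taylor series coefficients of $\mathcal A,\mathcal B$ and of the initial data. A standard choice is of the form $\Phi(z) = \frac{Mc_0}{c_0 - (z^0+\cdots+z^n)}$, and one solves explicitly a scalar ODE of Cauchy--Kowalevskaya type whose solution is analytic in a ball of radius depending only on $(M,c_0)$. An induction on the order of derivatives at $y^0=0$ shows that the formal power series for $\mathbf W$ is dominated coefficient-wise by the Taylor series of the majorant solution, giving convergence on the same ball. From $\mathbf W$ one reads off $U$, which solves the transformed system in $B(0,r)$ with $r = r(M,c_0) > 0$, and pulling back by the change of coordinates gives a solution on a ball around $p$ of radius $r_p \geq \min\{r, d(q,\partial B)\}$.

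\emph{Main obstacle and remarks.} The hard part is not analytic convergence \emph{per se}, which is classical, but keeping careful track that every constant produced by (i) inverting $I + \sum_j (\partial_i\phi)\mathcal M_j$, (ii) composing analytic maps, and (iii) passing to the derivative unknowns $V^a_i$, depends only on the initial pair $(M,c_0)$. This is essentially an exercise in the calculus of $\mathscr A_{M,c_0}$ classes: each operation produces a new pair $(M',c_0')$ that is an explicit function of $(M,c_0)$, so the final radius $r_p$ is an explicit (though ugly) function of $(M,c_0)$, as claimed in~\eqref{eq:tounift}. Uniqueness in the analytic category comes for free from the recursion that determines the Taylor coefficients of $U$ from the initial data and the equation.
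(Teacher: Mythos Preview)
The paper does not prove this theorem: it is stated in the appendix as a classical result (``We state here the form of the Cauchy--Kowalevskaya Theorem useful for our purposes''), and the paper explicitly invokes John's treatment (see the reference to \cite{fritzj} in the proof of Proposition~\ref{prop:linex} and the acknowledgement of ``John's approach to local existence''). Your sketch---flatten the initial surface, track that the transformed coefficients remain in a controlled $\mathscr A_{M',c_0'}$ class, then run the majorant method---is exactly the classical argument and is correct in outline; there is nothing to compare against in the paper itself. One minor remark: Step~2 (introducing $V^a_i=\partial_{y^i}U^a$) is not strictly necessary, since the majorant method applies directly to quasilinear first-order systems in Kowalevskaya form, but it does no harm.
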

\begin{remark}
	The last requirement ensures that the ball $B(p, r_p)$ does not ``overshoot'' the ball $B$.
\end{remark}

\section{Addendum: stationary solutions and a heuristic calculation of the asymptotics of the spherical averages of \texorpdfstring{$\sigma$}{sigma} and \texorpdfstring{$\rho$}{rho}}\label{sec:heuristic}

The aim of this Section is to elaborate on how the spherical averages of $\sigma$ and $\rho$ evolve dynamically, both in the linear Maxwell theory and in the MBI theory. We first describe charged (stationary) solutions to the Maxwell system on Schwarzschild, then we discuss charged solutions to the MBI system on Schwarzschild. Finally, we sketch a derivation of the asymptotics of the spherical averages of $\sigma$ and $\rho$, as a function of the initial data, under some reasonable (but not justified) assumptions on the decay of the null components.

\subsection{Charged solutions to the linear Maxwell equations on Schwarzschild}\label{boundsta}
In the case of the linear Maxwell theory, there exist solutions to the Maxwell Equations on the Schwarzschild spacetime that are regular, stationary and nonzero (in particular, they do not decay in time). It can be proved that all such solutions are given by the following expression for the field tensor $\fara$:
\begin{equation*}
\fara = \frac{\rho_0}{r^2} dt \wedge dr^* + \frac{\sigma_0}{r^2}\sin \theta (d \theta \wedge  d \phi),
\end{equation*}
with $\rho_0$ and $\sigma_0$ real numbers. 

In order to prove time-decay, then, one has to exclude the presence of these charged solutions. In the case of linear Maxwell, it is easy to eliminate such issue as the Equations are linear, hence it suffices to subtract the corresponding charged components in order to obtain decay of the field tensor. See~\cite{masthes}.

\subsection{Charged solutions to the MBI system on Schwarzschild}

A similar phenomenon appears in the MBI case, and indeed there exist stationary solutions to the MBI system on Schwarzschild. Here, we limit our calculation to stationary solutions in spherical symmetry, such that $\alpha = \alphabar =0$ (cf.~the ``hairy ball theorem'').

\begin{proposition}\label{prop:statsols}
	Let $\fara$ be a spherically symmetric smooth stationary solution to the MBI system~\eqref{MBI} on Schwarzschild, such that the null components $\alpha = \alphabar = 0$ identically. Then, there exist numbers $\sigma_0, \rho_0 \in \R$ such that
	\begin{equation}\label{eq:stationarysol}
		\sigma(t^*, r_1, \theta, \varphi) = \sigma(r) = \frac{\sigma_0}{r^2}, \qquad \rho(t^*, r_1, \theta, \varphi) = \rho(r) = \frac{\rho_0}{\sqrt{\rho_0^2+\sigma_0^2+ r^4}}.
	\end{equation}
\end{proposition}

\begin{proof}
Let us start from the form of Equations (\ref{eq:transpun}) to obtain the form of the stationary solutions of the MBI system.
\begin{align}\label{eq:stat1}
	&- \hat \lbar(r^2 \sigma) + (1-\mu)^{-1} r^2 \curl \alphabar= 0, &\hat \lbar(r^2 \rho) + r^2 (1-\mu)^{-1}\dive \alphabar = - r^2{H_{_\Delta}} \indices{^\mu _{\hat \lbar} ^\kappa ^\lambda} \nabla_\mu \fara_{\kappa \lambda},\\ \label{eq:stat2}
	&L(r^2 \sigma)+r^2 \curl \alpha = 0, 
	&-L(r^2 \rho) + r^2 \dive \alpha = - r^2{H_{_\Delta}} \indices{^\mu _L ^\kappa ^\lambda} \nabla_\mu \fara_{\kappa \lambda}.
\end{align}
In these equations, set $\alpha = \alphabar = 0$, and assume that $\rho, \sigma$ are functions of $r$ only. Then, the equations in the first column readily give the existence of $\sigma_0 \in \R$ such that $\sigma = \sigma_0 r^{-2}$ identically. We then proceed to sum the equations in the right column of displays~\eqref{eq:stat1} and~\eqref{eq:stat2}, to obtain
\begin{equation*}
	L(r^2 \rho) - \lbar (r^2 \rho) = r^2 \hdelta\indices{^\mu_{L + \lbar}^\kappa^\lambda}\nabla_\mu \fara_{\kappa\lambda}.
\end{equation*}
By our choice of $\alpha = \alphabar = 0$, we have
\begin{equation}
	\begin{aligned}
	\lun &= \sigma^2 - \rho^2, \qquad
	\ldu &= -\rho\sigma.
	\end{aligned}
\end{equation}
By Equation (\ref{eq:forminv}), then,
\begin{equation}
H^{\mu\nu\kappa\lambda}_{\Delta} \nabla_{\mu}\fara_{\kappa\lambda} = \frac{1}{2\ellmbi^2} \nabla_\mu(\lun-\ldu^2) \left(- \fara^{\mu\nu} + \ldu\farad^{\mu\nu} \right) - \nabla_\mu \ldu \farad^{\mu\nu}.
\end{equation}
Hence
\begin{equation*}
	2(1-\mu)\partial_r (r^2 \rho) = \frac{r^2 (1-\mu)}{(1+\sigma^2)(1-\rho^2)} \partial_r\left( (1+\sigma^2)(1-\rho^2) \right)(1+\sigma^2)\rho- 2 r^2 (1-\mu)\partial_r(\rho\sigma) \sigma.
\end{equation*}
By the fact that $\sigma r^2 = \sigma_0$, we then have
\begin{equation*}
	2\partial_r( \rho r^2  (1+\sigma^2)) = \frac{r^2 }{(1+\sigma^2)(1-\rho^2)} \partial_r\left( (1+\sigma^2)(1-\rho^2) \right)(1+\sigma^2)\rho.
\end{equation*}
Integrating the previous display yields the existence of $\rho_0 \in \R$ such that
\begin{equation}\label{eq:statfin}
	\rho^2 r^4 (1+\sigma^2) = \rho_0^2 (1-\rho^2).
\end{equation}
Given the form of $\sigma$, this implies the claim.
\end{proof}

\subsection{Asymptotic behaviour of the spherical averages of $\rho$ and $\sigma$ when the initial charge is nonzero}

Recall that, in the above proof of global stability for the MBI system on Schwarzschild, we needed to deduce asymptotic bounds for spherical averages of $\rho$ and $\sigma$ along the evolution. To do that, we imposed the charge to vanish at spacelike infinity, on $\Sigma_{t_0^*}$. This was then propagated along the evolution in Section~\ref{sec:charge}. We remark that such is an essential element of our proof, since we make large use of Poincar\'e estimates, which in turn require information on the spherical means.

In the remaining part of this Section, we would like to address the problem of determining the asymptotic behaviour of the spherical averages of $\sigma$ and $\rho$, if we assume \emph{nonzero initial charge} and certain \emph{decay in time} for all the components of the field. This may prove useful in a proof of global stability of the MBI system on Schwarzschild with non-vanishing initial charge.

We point out that a similar problem arises in the context of the Kerr stability conjecture, the so-called \emph{final state problem}. In that case, one seeks to calculate, as a function of initial data, the parameters $a$ and $M$ such that the solution will be asymptotic to a Kerr black hole of parameters $(a,M)$.

However, there is a caveat. The calculation below points to the fact that the charge for the MBI system is conserved in the nonlinear evolution, along future null infinity. We do not expect a similar statement to hold for angular momentum and mass in the context of the Kerr stability conjecture.

We prove the following Proposition.

\begin{proposition}\label{prop:finsta}
	Let $\fara$ be a smooth solution to the MBI system~\eqref{MBI} on Schwarzschild, satisfying the following decay assumptions: there exist $a, b, C >0$, and a smooth radial function $\rho_f(r)$ such that
	\begin{equation}\label{eq:decassumpt}
	\begin{aligned}
	|\rho - \rho_f (r)| &\leq C \tau^{-a},\\
	|\sigma - \sigma_0 r^{-2}| &\leq C \tau^{-a}r^{-b},\\
	|\alpha| &\leq C \tau^{-1}r^{-1} ,\\
	|\alphabar| &\leq C \tau^{-\frac 1 2} r^{-3}.
	\end{aligned}
	\end{equation}
	Then, we let
	$$
	\rho_0 := \frac 1 {4\pi}\lim_{r \to \infty} r^2 \int_{\mathbb{S}^2} \rho(t_0^*, r) \desphere.
	$$
	(In particular, the limit appearing in the right hand side of the previous display exists). In these conditions, we have
	\begin{equation}\label{eq:fincha}
		\rho_f(r) = \frac{\rho_0}{\sqrt{r^4 + \sigma_0^2+ \rho_0^2}}.
	\end{equation}
\end{proposition}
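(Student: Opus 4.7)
The plan is to exploit the nonlinear analog of the conservation law for $\rho[\mfarad]$ already used (in the zero-charge case) in Proposition~\ref{prop:charge}. More precisely, the MBI system in the formulation of Section~\ref{sec:altmbi} asserts $\nabla^\mu \mfarad_{\mu\nu} = 0$; plugging $\nu = L$ and $\nu = \lbar$ into this divergence equation, using the Hodge-dual expressions for $\nabla_\mu (\cdot)^{\mu}{}_L$ and $\nabla_\mu (\cdot)^{\mu}{}_\lbar$ exactly as in Equations~\eqref{eq:calculations}, and integrating over the conformal sphere of constant $(t^*, r)$ yields
\begin{equation*}
L \int_{\mathbb{S}^2} r^2\, \rho[\mfarad](t^*, r, \omega)\, \desphere(\omega) = 0, \qquad \lbar \int_{\mathbb{S}^2} r^2\, \rho[\mfarad](t^*, r, \omega)\, \desphere(\omega) = 0.
\end{equation*}
Hence the quantity $Q(t^*, r) := \int_{\mathbb{S}^2} r^2\, \rho[\mfarad]\, \desphere$ is a spacetime constant. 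This is the nonlinear ``total electric charge'' and is the sole ingredient I would use to derive the formula.

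Next, compute $Q$ in the limit $\tau \to \infty$ at fixed $r$. The decay assumptions~\eqref{eq:decassumpt} imply $\alpha, \alphabar \to 0$, $\sigma \to \sigma_f(r) := \sigma_0 / r^2$, and $\rho \to \rho_f(r)$, and therefore
\begin{equation*}
\lun \to \sigma_f^2 - \rho_f^2, \qquad \ldu \to -\rho_f \sigma_f, \qquad \ellmbi^2 \to (1+\sigma_f^2)(1-\rho_f^2).
\end{equation*}
Substituting into the formula $\rho[\mfarad] = -\ellmbi^{-1}(\rho - \ldu\,\sigma)$ gives, after a short algebraic simplification, $\rho[\mfarad] \to - \rho_f \sqrt{(1+\sigma_f^2)/(1-\rho_f^2)}$, so that
\begin{equation*}
Q = -4\pi\, \rho_f(r)\, \sqrt{\tfrac{r^4 + \sigma_0^2}{1 - \rho_f(r)^2}}.
\end{equation*}
Because $Q$ is a constant, the right-hand side must be independent of $r$; call its value $-4\pi\rho_0'$. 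Solving the algebraic relation $\rho_f^2(r^4 + \sigma_0^2) = (\rho_0')^2(1-\rho_f^2)$ then yields $\rho_f(r) = \rho_0' / \sqrt{r^4 + \sigma_0^2 + (\rho_0')^2}$, which already has the correct form.

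It only remains to identify $\rho_0'$ with the $\rho_0$ of the statement. Evaluate $Q$ on the initial slice $\Sigma_{t_0^*}$ as $r \to \infty$: the explicit form just derived shows $\rho_f(r) = O(r^{-2})$ at large $r$, so along with the $r$-decay built into the assumptions on $\sigma, \alpha, \alphabar$, all of $\lun, \ldu$ vanish at spacelike infinity, whence $\ellmbi^{-1} \to 1$ and $\rho[\mfarad] \to -\rho$. By the definition of $\rho_0$ in the hypothesis this gives $Q = -4\pi \rho_0$, so $\rho_0' = \rho_0$, and \eqref{eq:fincha} follows.

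The main technical obstacle is precisely this last identification. The formula for $\rho_f$ is obtained by a limit $\tau \to \infty$ at fixed $r$, whereas $\rho_0$ is defined by a limit $r \to \infty$ at fixed $t_0^*$, and \emph{a priori} these two limits need not commute. The conservation law bypasses the issue in principle (the spacetime-constant $Q$ can be evaluated at any point), but making the evaluation $Q\big|_{t^* = t_0^*, r \to \infty} = -4\pi\rho_0$ rigorous would require additional $r$-decay of the null components on $\Sigma_{t_0^*}$ beyond what is encoded in~\eqref{eq:decassumpt} — enough to guarantee that the $\ellmbi^{-1}$ and $\ldu\,\sigma$ corrections in $\rho[\mfarad] - (-\rho)$ are integrable against the sphere and vanish in the limit. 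In the heuristic regime of interest — e.g.~initial data with compactly supported or sufficiently rapidly decaying $(\lun, \ldu)$ at spacelike infinity on $\Sigma_{t_0^*}$ — this identification is immediate, and the argument above gives the stated final state.
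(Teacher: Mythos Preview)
Your proof is correct and follows essentially the same route as the paper's: both exploit the spacetime-constancy of the nonlinear charge $Q = \int_{\mathbb{S}^2} r^2\,\rho[\mfarad]\,\desphere$, evaluate it once in the limit $\tau\to\infty$ at fixed $r$ to obtain the algebraic relation between $\rho_f$ and the constant, and once at $r\to\infty$ on $\Sigma_{t_0^*}$ to identify that constant with $-4\pi\rho_0$. The only cosmetic difference is the order in which you take the two limits; since $Q$ is constant this is immaterial, and your explicit flagging of the limit-commutation issue is in fact more careful than the paper's own treatment (which, as the appendix acknowledges, is heuristic).
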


\begin{remark}
	We remark that the assumption~\eqref{eq:decassumpt} is a reasonable one but, in a proof of stability of MBI on Schwarzschild, such assumption will need to be proved in the context of a bootstrap argument. Hence the above Proposition is very far from addressing the stability problem for MBI in the charged case.
\end{remark}

\begin{remark}
	We remark that both $\rho_0$ and $\sigma_0$ can be calculated starting from initial data on $\widetilde{\Sigma}_{t_0^*}$. Hence the Proposition gives a way of calculating the asymptotic behaviour of spherical means as a function of initial data.
\end{remark}

\begin{remark}
	Furthermore, we remark that the expression for the final charge~\eqref{eq:fincha} coincides with the form of stationary solutions found in~\eqref{eq:stationarysol}.
\end{remark}

\begin{remark}
	This Proposition is not concerned with the behaviour of the charge along null infinity, rather with the behaviour of spherical averages of $\rho$ and $\sigma$ on a region of finite $r$-coordinate. Nevertheless, very similar calculations indicate that the charge for $\rho$ is conserved along future null infinity.
\end{remark}

\begin{proof}[Proof of Proposition~\ref{prop:finsta}]
The set of equations
\begin{equation*}
	\nabla^\mu \mfarad_{\mu\nu} = 0, \qquad \nabla^\mu \farad_{\mu\nu} = 0
\end{equation*}
implies, through the null decomposition, that the quantities
\begin{equation}\label{eq:conslaw}
	\int_{\mathbb{S}^2} \frac 1 {2(1-\mu)}\mfarad(\lbar, L) r^2 \desphere = C_\rho, \qquad 
	\int_{\mathbb{S}^2} \frac{1}{2(1-\mu)}\farad(\lbar, L) r^2 \desphere = C_\sigma,
\end{equation}
where $C_\rho$, $C_\sigma$ are constant along the evolution. From the second equation, we obtain that
$$
\int_{\mathbb{S}^2} \sigma r^2 \desphere = \text{const.}
$$
The goal now is to determine $\rho_f(r)$. Recall that $$
\mfarad_{\mu\nu} = -\ellmbi^{-1}(\fara_{\mu\nu}- \ldu \farad_{\mu\nu}),$$
and that
$$
\lun = \sigma^2 - \rho^2 + 2 \alpha^A \alphabar_A, \qquad \ldu = -\rho \sigma - 2 \svol_{AB} \alpha^A \alphabar^B.
$$
From (\ref{eq:conslaw}) and the assumptions, we obtain, taking the limit as $r \to \infty$ along points of the form $(t^*_0, r)$,
\begin{equation*}
	-C_\rho = \lim_{r \to \infty} r^2 \int_{\mathbb{S}^2} \rho(t_0^*, r) \desphere.
\end{equation*}
This follows from the expression for $\mfara$, in which the only term that survives is the linear term (the one corresponding to $\fara$) and furthermore $\ellmbi \to 1$ as $r \to \infty$. We let $\rho_0 := - \frac 1 {4\pi} C_\rho$.
Now,
\begin{equation*}
	\begin{aligned}
		\frac 1 {2(1-\mu)} \ellmbi^{-1} (\fara(\lbar, L)-\ldu \farad(\lbar, L)) = \frac{\rho + \rho \sigma^2 +2 \rho \svol_{AB} \alpha^A \alphabar^B}{\sqrt{1+\lun-\ldu^2}}
	\end{aligned}
\end{equation*}
Let us now consider the limit as $r$ is fixed, and $t^* \to \infty$. Then,
\begin{equation*}
	\begin{aligned}
		\lim_{t^* \to \infty} \lun &= \sigma_0^2 r^{-4} - \rho^2_f(r), \qquad 
		\lim_{t^* \to \infty} \ldu^2 &= \sigma_0^2 r^{-4} \rho^2_f(r).
	\end{aligned}
\end{equation*}
We obtain eventually,
\begin{equation*}
	\int_{\mathbb{S}^2} ((1 + \sigma^2_0 r^{-4})(1-\rho^2_f))^{-\frac 1 2}\rho_f r^2 \desphere = 4 \pi \rho_0.
\end{equation*}
This implies
\begin{equation*}
(1 + \sigma^2_0 r^{-4})^{-\frac 1 2}(1-\rho^2_f)^{-\frac 1 2}\rho_f r^2 = \rho_0.
\end{equation*}
By inverting the last display, we get
\begin{equation*}
	\rho_f(r) = \frac{\rho_0}{\sqrt{r^4 + \sigma_0^2+ \rho_0^2}}
\end{equation*}
\end{proof}

\bibliography{max.bib}
\bibliographystyle{plain}
\end{document}